\documentclass[a4paper,11pt]{article}
\usepackage[margin=1in]{geometry}
\usepackage[T1]{fontenc}
\usepackage[ttscale=.85]{libertine}
\usepackage{amsmath,amssymb,amsthm}
\usepackage[dvipsnames]{xcolor}
\usepackage{graphicx}
\usepackage{xspace}
\usepackage{mathtools}
\usepackage[colorlinks=true, allcolors=blue]{hyperref}
\usepackage{thmtools}
\usepackage{cleveref}
\usepackage{enumitem}
\usepackage{thm-restate}
\usepackage[sort]{cite}

\theoremstyle{plain}

\newtheorem{theorem}{Theorem}[section]
\newtheorem{claim}[theorem]{Claim}
\newtheorem{lemma}[theorem]{Lemma}
\newtheorem{corollary}[theorem]{Corollary}
\newtheorem{observation}[theorem]{Observation}
\theoremstyle{definition} 
\newtheorem{definition}[theorem]{Definition}
\newtheorem{remark}[theorem]{Remark}

\newenvironment{claimproof}[1][Proof of claim.]{\begin{proof}[#1]}{\end{proof}}

\usepackage{tikz} 

\newcommand{\parcon}{%
  \mathord{
    \tikz[baseline=0.4ex]{%
      \draw[line width=0.05em] (0,0) -- (0.5em,0.9em);

      \fill (0.5em/3, 0.9em/3) circle[radius=0.1em];

      \fill (2*0.5em/3, 2*0.9em/3) circle[radius=0.1em];
    }
  }
}

\newcommand{\vcon}{\parcon}

\newcounter{ctr}
\loop
 \stepcounter{ctr}
 \expandafter\edef\csname c\Alph{ctr}\endcsname{\noexpand\mathcal{\Alph{ctr}}}
 \expandafter\edef\csname b\alph{ctr}\endcsname{\noexpand\mathbf{\alph{ctr}}}
 \expandafter\edef\csname b\Alph{ctr}\endcsname{\noexpand\mathbf{\Alph{ctr}}}
\ifnum\thectr<26
\repeat

\newcommand{\diam}{\mathrm{diam}}
\newcommand{\dist}{\mathrm{dist}}
\newcommand{\compl}{\mathrm{compl}}
\newcommand{\Area}{\mathrm{Area}}

\newcommand{\tw}{\mathrm{tw}}

\renewcommand{\bd}{\partial}

\DeclareMathOperator{\inter}{Int}
\DeclareMathOperator{\exter}{Ext}
\DeclareMathOperator{\Clo}{Cl}

\DeclareMathOperator{\Vor}{Vor}
\DeclareMathOperator{\rank}{rank}
\newcommand{\skel}{\widetilde{\mathrm{Skel}}}

\newcommand{\Cols}{\mathrm{Cols}}
\newcommand{\Top}{\mathrm{Top}}
\newcommand{\MIS}{\mathrm{MIS}}
\newcommand{\orig}{\mathrm{orig}}
\newcommand{\ord}{\mathrm{ord}}

\newcommand{\poly}{\mathrm{poly}}

\renewcommand{\leq}{\leqslant}
\renewcommand{\geq}{\geqslant}
\renewcommand{\rho}{\varrho}

\newcommand{\eps}{\varepsilon}
\renewcommand{\to}{\rightarrow}

\newcommand{\myin}{\mathrm{in}}
\newcommand{\myout}{\mathrm{out}}

\newcommand{\Reals}{\mathbb{R}}
\newcommand{\Nats}{\mathbb{N}}
\newcommand{\Ints}{\mathbb{Z}}
\newcommand\eqdef{%
\mathrel{\overset{\makebox[0pt]{\mbox{\normalfont\tiny\sffamily def}}}{=}}}
\newcommand{\opt}{\mbox{{\sc opt}}\xspace}  
\newcommand{\myopt}{\mathrm{opt}}           
\newcommand{\cc}{\mathrm{cc}}
\newcommand{\ccomp}{\mathrm{CComp}}
\newcommand{\join}{\mathrm{join}}
\newcommand{\Lip}{\mathrm{Lip}}
\newcommand{\Inn}{\mathrm{Inn}}
\newcommand{\Radi}{\mathrm{Radial}}
\newcommand{\Jum}{\mathrm{Jump}}
\newcommand{\AW}{\mathrm{AW}}
\newcommand{\NAW}{\mathrm{NAW}}

\newcommand{\Aout}{A_{\myout}}
\newcommand{\Ain}{A_{\myin}}
\newcommand{\Sopt}{S_{\myopt}}
\newcommand{\So}{\mathrm{South}}
\newcommand{\No}{\mathrm{North}}
\newcommand{\We}{\mathrm{West}}
\newcommand{\Ea}{\mathrm{East}}

\newcommand{\Perim}{\mathrm{Perim}}
\newcommand{\pl}[1]{\widetilde{#1}}
\newcommand{\re}[1]{\mathsf{#1}}
\newcommand{\wf}[1]{\widetilde{\mathsf{#1}}}

\let\oldemph\emph
\renewcommand{\emph}[1]{\oldemph{\textcolor{red!40!black}{#1}}}
\newcommand{\etal}{\oldemph{et~al.}}

\newcommand{\stsp}{\textsc{Subset} TSP\xspace}
\newcommand{\stein}{\textsc{Steiner Tree}\xspace}
\newcommand{\wstein}{\textsc{Vertex-Weighted Steiner Tree}\xspace}
\newcommand{\smt}{\mathrm{StTree}}
\newcommand{\tsp}{\mathrm{SubsetTSP}}

\newcommand{\dm}[1]{}
\newcommand{\skb}[1]{}

\makeatletter
\def\th@plain{%
  \thm@notefont{}
  \itshape 
}
\def\th@definition{%
  \thm@notefont{}
  \normalfont 
}
\makeatother

\title{Approximation Schemes for Subset TSP and Steiner Tree\\ on Geometric Intersection Graphs}
\author{S\'andor Kisfaludi-Bak\thanks{Aalto University, Espoo, Finland; e-mail: \texttt{sandor.kisfaludi-bak@aalto.fi}.\\ Supported by the Research Council of Finland, Grant 363444.} \and D\'aniel Marx\thanks{CISPA Helmholtz Center for Information Security, Germany, e-mail: \texttt{marx@cispa.de}}}
\date{}
\begin{document}

\maketitle
\thispagestyle{empty}

\begin{abstract}
We give approximation schemes for Subset TSP and Steiner Tree on unit disk graphs, and more
generally, on intersection graphs of similarly sized connected fat (not necessarily convex) polygons in
the plane. As a first step towards this goal, we prove spanner-type results: finding an induced subgraph of bounded size that is $(1+\eps)$-equivalent to the original instance in the sense that the optimum value increases only by a factor of at most $(1+\eps)$ when the solution can use only the edges in this subgraph.
\begin{enumerate}
\item For Subset TSP, our algorithms find a $(1+\eps)$-equivalent induced subgraph of size $\poly(1/\eps)\cdot\textup{OPT}$ in polynomial time, and use it to find a $(1+\eps)$-approximate solution in time $2^{\poly(1/\eps)}\cdot n^{O(1)}$. 
\item For Steiner Tree, our algorithms find a $(1+\eps)$-equivalent induced subgraph of size $2^{\poly(1/\eps)}\cdot\textup{OPT}$ in time $2^{\poly(1/\eps)}\cdot n^{O(1)}$, and use it to find a $(1+\eps)$-approximate solution in time $2^{2^{\poly(1/\eps)}}\cdot n^{O(1)}$.
  \item An improved algorithm finds a $(1+\eps)$-approximate solution for Steiner Tree in time $2^{\poly(1/\eps)}\cdot n^{O(1)}$.
  \end{enumerate}
  An easy reduction shows that approximation schemes for unit disks imply approximation schemes for planar graphs. Thus our results are far-reaching generalizations of analogous results of Klein [STOC'06] and Borradaile, Klein, and Mathieu [ACM TALG'09] for Subset TSP and Steiner Tree in planar graphs. We show that our results are best possible in the sense that dropping any of (i) similarly sized, (ii) connected, or (iii) fat makes both problems APX-hard.
\end{abstract} 

\clearpage

\tableofcontents
\thispagestyle{empty}
\setcounter{page}{0}
\clearpage
\section{Introduction}

Most of the basic combinatorial optimization problems defined on graphs are APX-hard, which means that there is a constant $c>1$ such that, assuming $\textup{P}\neq\textup{NP}$, there is no polynomial-time approximation algorithm for the problem with approximation ratio better than $c$.  In particular, this rules out the existence of a polynomial-time approximation scheme (PTAS), that is, an algorithm that takes and instance of the problem and an $\eps>0$ in the input, and outputs a $(1+\eps)$-approximate solution in time $n^{f(1/\eps)}$ for some function $f$. However, these problems can still admit a PTAS when restricted to some structured class of graphs. For example, there is a long line of research showing that several APX-hard problems admit approximation schemes when restricted to planar graphs (or perhaps even to bounded-genus or minor-free graphs) \cite{DBLP:journals/algorithmica/KleinMZ23,DBLP:conf/focs/Cohen-AddadFKL20,DBLP:journals/siamcomp/Cohen-AddadKM19,DBLP:conf/soda/Fox-EpsteinKS19,DBLP:conf/stoc/FoxKM15,DBLP:conf/soda/EisenstatKM12,DBLP:conf/soda/BateniHKM12,BorradaileKM09,DBLP:journals/siamcomp/Klein08,Klein06,DBLP:conf/soda/BateniFH19,DBLP:conf/stoc/BateniDHM16,DBLP:journals/jacm/BateniHM11,DBLP:conf/focs/Cohen-AddadLPP23,DBLP:conf/focs/Cohen-AddadFKL20,DBLP:journals/siamcomp/Cohen-AddadKM19,DBLP:conf/esa/Cohen-AddadPP19,DBLP:conf/focs/Cohen-AddadPP19,DBLP:conf/soda/AroraGKKW98,DBLP:conf/icalp/BergerG07,DBLP:conf/esa/BergerCGZ05,DBLP:conf/soda/CzumajGSZ04,DBLP:conf/focs/GrigniKP95,DBLP:conf/sea2/BorradaileLZ19,DBLP:conf/approx/BorradaileZ17,DBLP:journals/talg/BorradaileK16,DBLP:journals/algorithmica/BorradaileDT14,DBLP:conf/alenex/TazariM09,DBLP:conf/focs/AbbasiBBCGKMSS23,DBLP:conf/soda/Cohen-AddadVM18,DBLP:conf/stoc/Cohen-AddadVKMM16}. One of the most basic tools in the design of such approximation schemes is a shifting technique attributed to Baker~\cite{Baker94}, which is widely applicable for problems that are local in a certain sense, for example Independent Set or Dominating Set.
The shifting technique reduces a planar instance into several instances where the treewidth of the graph is bounded by a function of $1/\eps$. Such instances can usually be handled by standard dynamic-programming techniques on tree decompositions. Often, the shifting strategy leads to an \textit{efficient polynomial-time approximation scheme (EPTAS)}, which is a PTAS with running time of the form $f(1/\eps)\cdot n^{O(1)}$ for some function $f$. 

Klein \cite{DBLP:journals/siamcomp/Klein08} introduced a contraction-decomposition version of the shifting strategy. This technique can be used for problems where the solution needs to provide some form of connectivity, for example, in TSP (find a shortest closed walk visiting every vertex). Typically, the contraction-decomposition technique allows us to reduce the problem to instances with treewidth $O(1/\eps)$, at the cost of an additive error that is $\eps$ times the total size (weight) of the input graph. For problems where the value of the optimum is at least the size of the graph, such as in TSP on unweighted graphs, this additive error results in a multiplicative $(1+\eps)$-approximation. However, we cannot make this assumption for TSP on edge-weighted graphs, or for problems where the solution can be much smaller than the graph, for example, in Subset TSP (find a shortest closed walk visiting every vertex of the given terminal set $T$) or Steiner Tree (find a shortest tree containing every vertex of the given terminal set $T$). These problems were approached by first computing a ``spanner'': removing edges in a way that changes the optimum value $\textup{OPT}$ at most by a factor of $1+\eps$ and ensures that the remaining subgraph has total weight at most $f(1/\eps)\cdot \textup{OPT}$. This strategy was successful for weighted TSP, Subset TSP, Steiner Tree, Steiner Forest, and other problems \cite{DBLP:journals/siamcomp/Klein08,Klein06,DBLP:conf/stoc/BateniDHM16,DBLP:conf/soda/BateniHKM12,DBLP:journals/algorithmica/BateniH12,DBLP:journals/jacm/BateniHM11,DBLP:journals/algorithmica/KleinMZ23,DBLP:conf/approx/BorradaileZ17,DBLP:journals/talg/BorradaileK16,DBLP:journals/algorithmica/BorradaileDT14}.

In the realm of geometric problems (in the plane or in $\mathbb{R}^d$ for fixed $d$), PTASs have been achieved by leveraging the underlying spatial structure to simplify complex instances. For instance,  Euclidean TSP has a celebrated PTAS by Arora \cite{DBLP:journals/jacm/Arora98} that employs a hierarchical partitioning of the Euclidean plane into a grid-like structure, allowing the optimal tour to be approximated by a tour that interacts with grid boundaries only a bounded number of times. More broadly, a range of geometric optimization problems---including clustering, facility location, and network design---benefit either from similar spatial decomposition techniques or from other techniques exploiting the metric properties of the space~\cite{DBLP:journals/dcg/AlkemaBHK24,DBLP:journals/siamcomp/BergBKK23,DBLP:journals/siamcomp/Cohen-AddadKM19,DBLP:journals/talg/BorradaileKM15,DBLP:journals/algorithmica/BateniH12,DBLP:journals/jacm/Arora98,DBLP:conf/stoc/AroraRR98,DBLP:conf/focs/AbbasiBBCGKMSS23,DBLP:conf/stacs/HochbaumM84,DBLP:conf/focs/Cohen-AddadSS23,DBLP:conf/icalp/Cohen-AddadL19,DBLP:journals/jacm/Cohen-AddadFS21,DBLP:conf/soda/Cohen-Addad20,DBLP:conf/soda/Cohen-Addad18,DBLP:conf/focs/Kisfaludi-BakNW21}.

Besides planar graphs and point sets in $\mathbb{R}^2$, another equally natural and well-studied 2-dimensional setting is given by intersection graphs of various planar objects.
In these graphs, each vertex corresponds to a geometric object---such as disks, rectangles, or segments---and an edge represents the intersection between two objects. This spatial representation often imposes additional structure that can be exploited algorithmically. For instance, many NP-hard problems like Independent Set, Dominating Set, and Vertex Cover, when restricted to intersection graphs of simple geometric shapes (such as unit disks or similarly sized fat objects), admit PTASs \cite{DBLP:journals/jal/HuntMRRRS98}. The key insight is that the geometry enables a partitioning of the problem into regions with limited complexity, often through techniques analogous to the shifting strategy or local search methods. Moreover, for more complex settings where the objects vary in size, shape, or weight, additional layers of techniques are required \cite{DBLP:journals/siamcomp/ErlebachJS05,DBLP:journals/mp/KhanSW24,DBLP:journals/jacm/AdamaszekHW19,DBLP:journals/talg/HeydrichW19,DBLP:conf/esa/PilipczukLW18,DBLP:conf/isaac/AdamaszekCL09,DBLP:journals/jacm/BonamyBBCGKRST21}.

While many geometric problems on intersection graphs benefit from these techniques, there is a notable lack of PTAS results for connectivity problems. The most natural open question in this direction is settling if Steiner Tree and Subset TSP admit a (E)PTAS on geoemtric intersection graphs. Intuitively, the inherent non-locality of connectivity does not seem to allow any simple use of spatial decomposition techniques. Indeed, as we show later, an (E)PTAS for Steiner Tree and Subset TSP on unit disk intersection graphs implies an (E)PTAS for the same problem on planar graphs (Theorem~\ref{planar_to_udg_reduction}). Thus these geometric problems are {\em at least as hard} as their planar graph counterparts and hence an EPTAS for them would likely need to be at least as complex as the highly sophisticated spanner constructions for the planar versions \cite{BorradaileKM09,Klein06}. It is far from clear if such an EPTAS is possible:
there are cases where a problem on intersection graphs is strictly harder than on planar graphs. There is such a gap even for very basic  problems: for example, Independent Set and (Connected) Dominating Set admit an EPTAS on planar graphs, but there is no EPTAS for these problems in unit disk graphs (under suitable complexity assumptions) \cite{DBLP:conf/esa/Marx05,DBLP:conf/focs/Marx07a,DBLP:journals/corr/MarxS16,DBLP:journals/tcs/BergKW19,DBLP:journals/algorithmica/BergBK19a}. The intuitive similarity of Connected Dominating Set and Steiner Tree can be interpreted as a warning sign, suggesting that the substantial nonplanarity present in intersection graphs could potentially make Steiner Tree and Subset TSP strictly harder in, say, unit disk graphs compared to planar graphs.



Our main contribuitions are developing a set of tools to handle intersection graphs in various algorithmic and combinatorial contexts, and then using these tools to resolve the open questions by developping EPTASs for Subset TSP and Steiner Tree on geometric intersection graphs. Specifically, we consider three results on planar graph EPTASs and show that (unlike in the case of Independent Set etc.), analogous results for geometric intersection graphs do exist. First, for Subset TSP, Klein~\cite{Klein06} presented a way of obtaining a spanner of size at most $\poly(1/\eps)$ times the optimum, which, together with the contraction decomposition technique, gives an EPTAS.
\begin{theorem}[Planar Subset TSP EPTAS via spanner \cite{Klein06}]\label{th:planarsubsetsp}

    Given a planar instance $(G,T)$ of Subset TSP and an $\eps>0$, 
    \begin{enumerate}
    \item (Subset spanner) We can find a subgraph $G'\subseteq G$ such that $\tsp(G',T)\le (1+\eps)\tsp(G,T)$ and $|V(G')|\le \poly(1/\eps)\cdot \tsp(G,T)$ in polynomial time.
      \item (EPTAS) We can find an $(1+\eps)$-approximate solution of the instance $(G,T)$ in time $2^{\poly(1/\eps)}\cdot n^{O(1)}$.
      \end{enumerate}

\end{theorem}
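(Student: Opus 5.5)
This is Klein's theorem, and the plan is to reproduce the two ingredients of \cite{Klein06}: a spanner construction based on a 2-approximation and "bricks", followed by contraction decomposition with dynamic programming. I treat $G$ as unweighted, since vertex count is the measure in the statement; edge lengths only change the bookkeeping.

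\textbf{Step 1: a 2-approximate skeleton.} Compute a 2-approximate Steiner tree $\mathcal{T}$ for $T$ via the metric-closure MST. Doubling it gives a closed walk $W$ of length at most $2\,\tsp(G,T)$. Cut the plane embedding of $G$ along $W$. The region between $W$ and itself becomes a collection of faces, and the parts of $G$ inside these faces form \emph{bricks}. Each brick is a planar graph whose boundary consists of two paths $N$ and $S$ (north and south) coming from $W$, plus two short sides.

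\textbf{Step 2: sparsify each brick.} Inside each brick, split $N$ into $\kappa=\poly(1/\eps)$ \emph{columns} spaced by $\eps|N|$. Between every pair of column endpoints on the boundary, keep a shortest path inside the brick. The structural lemma to prove is the following: any optimal tour, restricted to a brick, can be replaced by walks that use only $N$, $S$ and at most $\poly(1/\eps)$ of these chosen shortest paths. The replacement should cost an additive $O(\eps)\cdot(|N|+|S|)$ and must preserve connectivity to all terminals. Terminals lie on $W$ only, since $\mathcal{T}$ spans $T$, so only crossing behaviour matters. I would prove this by induction on strips between consecutive shortest paths. If a tour crosses a strip too many times, reroute the crossings along the boundary, paying the strip width once and charging it to the savings on $N$.

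Summing over bricks, the total charge is $O(\eps)\cdot|W|=O(\eps)\tsp$. The kept subgraph has size $\poly(1/\eps)\cdot|W|$, because each of the $\poly(1/\eps)$ shortest paths has length at most $|N|+|S|$. I expect this brick lemma to be the main obstacle: crossing arguments in planar bricks need the non-crossing (monotone) structure of shortest paths. Rescaling $\eps$ then gives part~1.

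\textbf{Step 3: EPTAS.} Apply Klein's contraction decomposition to $G'$. Do a BFS from a vertex, choose an offset $i\in\{0,\dots,k-1\}$ with $k=O(1/\eps)$, and contract all edges between layers $\equiv i \pmod k$. The contracted edge set has size at most $|E(G')|/k\le \eps\,\tsp$ for the best $i$. Since $G'$ is planar with $|E(G')|=O(|V(G')|)$, the bound $|V(G')|\le\poly(1/\eps)\tsp$ means the contraction loses only $\eps\tsp$ once $k=\poly(1/\eps)$. The contracted graph has treewidth $O(k)=\poly(1/\eps)$, so solve Subset TSP exactly by dynamic programming in time $2^{O(k\log k)}n^{O(1)}$. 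Then lift the solution back by re-inserting contracted edges, which adds at most their total length. Tuning $\eps$ gives time $2^{\poly(1/\eps)}n^{O(1)}$.
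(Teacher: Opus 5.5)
\textbf{The gap is in Steps~1--2, where all of the content of Klein's spanner lies.}

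Cutting the embedding open along $W$ does not produce bricks. It produces a single face bounded by the cycle $W$. Your $N$ and $S$ are just pieces of an arbitrary approximate Steiner tree, and they have no metric property inside that face. Your planned induction (``reroute the crossings along the boundary, charging to the savings on $N$'') needs $N$ to be a shortest path and $S$ to be nearly shortest. Nothing in your construction provides this.

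The missing idea is the \emph{skeleton}, built as follows.
\begin{itemize}
\item While the current boundary cycle $C$ has vertices $x,y$ and an $x$--$y$ path $P$ inside it with $(1+\eps)|P|<|C[x,y]|$, add such a $P$ of minimum reach $|C[x,y]|$.
\item Continue with the cycle in which $C[x,y]$ is replaced by $P$. Each step shortens the cycle by more than $\eps|P|$, so the added paths have total length $O(|W|/\eps)$.
\item Every face created has a North side ($P$) that is a shortest path in the face. Its South side ($C[x,y]$) is $(1+\eps)$-approximately shortest between any two of its vertices, not both endpoints, by minimality of the reach.
\item Once no shortcut remains, the last face is split by one more shortest path.
\end{itemize}

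\textbf{The portal lemma is also not available for free.} Even with such faces, a portal lemma with additive error $O(\eps)(|\No|+|\So|)$ per face does not follow. Nothing prevents a near-optimal tour from passing between North and South of one face many times along very short paths. Snapping each passage to portals spaced $\eps|\No|$ apart costs about $\eps|\No|$, independently of the passage's length. Without a $\poly(1/\eps)$ bound on the number of passages, the total error is therefore not $O(\eps)\tsp(G,T)$. Such a bound is what the Borradaile--Klein--Mathieu simplification (Theorem~\ref{thm:blackboxplanar}) supplies for Steiner forests, and only under the extra column property (P2).

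Klein avoids portals altogether. Take a shortest terminal-to-terminal path and split it at the skeleton.
\begin{itemize}
\item A North--North subpath is replaced by the North subpath at no cost.
\item A South--South subpath is replaced by the South subpath, at factor $1+\eps$.
\item A North--South subpath is handled by the bipartite spanner of Lemma~\ref{lem:bipartitespan}, built from single-source spanners. It gives $(1+\eps)$-approximate distances for \emph{every} North--South pair, with size $O(\eps^{-3})|\So|$ per face and hence $O(\eps^{-4})|W|$ overall.
\end{itemize}
This preserves all terminal distances up to $1+\eps$, which yields $\tsp(G',T)\le(1+\eps)\tsp(G,T)$.

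\textbf{Step~3 has the right structure, but your explicit decomposition is wrong as written.} Suppose the BFS layers are concentric cycles joined by spokes. Contracting the spokes between layers $j$ and $j+1$ for $j\equiv i \pmod k$ leaves a large cylindrical grid, so treewidth does not drop. Invoke Theorem~\ref{thm:contractdecomp_planar} as a black box instead. Two constants are also off, harmlessly: un-contracting can cost $2|E_i|$, and doubling a $2$-approximate Steiner tree gives a walk of length up to $4\tsp(G,T)$.
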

Building on the Subset TSP result, Borradaile, Klein, and Mathieu~\cite{BorradaileKM09} obtained a spanner for Steiner Tree, but with a size exponential in $1/\eps$. Consequently, the resulting approximation scheme has double-exponential dependence on $1/\eps$.

\begin{theorem}[Planar Steiner Tree EPTAS via spanner \cite{BorradaileKM09}]\label{th:planarsteinertsp}
    Given a planar instance $(G,T)$ of Steiner Tree and an $\eps>0$,
    \begin{enumerate}
    \item (Steiner-tree spanner) We can find a subgraph $G'\subseteq G$ such that $\smt(G',T)\le (1+\eps)\smt(G,T)$ and $|V(G')|\le 2^{\poly(1/\eps)}\cdot \smt(G,T)$ in time $2^{\poly(1/\eps)}\cdot n^{O(1)}$.
      \item (EPTAS) We can find a $(1+\eps)$-approximate solution of the instance $(G,T)$ in time $2^{2^{\poly(1/\eps)}}\cdot n^{O(1)}$.
      \end{enumerate}

\end{theorem}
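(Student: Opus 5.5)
This is the Borradaile--Klein--Mathieu result, and I would follow their route: build a \emph{mortar graph} and inside it the Steiner-tree spanner, then run contraction decomposition and dynamic programming on the result. Let $\mathrm{OPT}=\smt(G,T)$. By planarity we may assume the face structure is available.

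\textbf{Step 1 (mortar graph).} Compute a $2$-approximate Steiner tree and cut it open along its Euler tour to get a face boundary $B$ of length at most $4\,\mathrm{OPT}$. Apply Klein's strip decomposition, which is the machinery behind Theorem~\ref{th:planarsubsetsp}. Replace boundary subpaths by shortest paths whenever those are much shorter than $(1+\eps)$ times the boundary. Then add "columns": shortest paths placed every $\eps\,\mathrm{OPT}$-proportional length along each strip. This yields a mortar graph $M$ of total length $O(\eps^{-2})\,\mathrm{OPT}$. Each face of $M$ (a \emph{brick}) has a boundary of length $O(\eps^{-2})$ times the relevant portion, and its north and south sides are $\eps$-short paths.

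\textbf{Step 2 (structure theorem, the main obstacle).} Show that some near-optimal Steiner tree, of cost at most $(1+c\eps)\mathrm{OPT}$, crosses each brick boundary at only $\alpha(\eps)=\poly(1/\eps)$ designated portal-like "joining vertices". The argument goes as follows. Take an optimal tree $S$ and add $M$'s column edges, which costs $O(\eps)\mathrm{OPT}$ after a shifting choice. For each brick, repeatedly replace subpaths of $S$ that touch the $\eps$-short north or south side many times by the side itself. The key lemma is a tree version of Klein's "few crossings on short paths" lemma: a tree meeting a $(1+\eps)$-short path $P$ can be modified, at extra cost $O(\eps\,\ell(P))$, to meet $P$ in $O(\eps^{-1.5})$ joining vertices. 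I would prove it by charging to the savings obtained from shortcutting along $P$. Summing over bricks bounds the cost. This step is delicate because unlike tours, trees branch, so the exchange argument must preserve connectivity of all terminals.

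\textbf{Step 3 (spanner).} Inside each brick $\mathcal{B}$, pick $\theta=\poly(1/\eps)$ portals on its boundary. For every subset $X$ of portals, add an optimal Steiner tree of $X$ inside $\mathcal{B}$. Each brick is planar with all of $X$ on the outer face, so this can be computed in polynomial time by the Erickson--Monma--Veinott algorithm. The number of subsets is $2^{\theta}$, each tree has length at most the brick boundary, and summing over bricks gives $|V(G')|\le 2^{\poly(1/\eps)}\mathrm{OPT}$. Together with $M$, this gives $G'$. Step 2 guarantees $\smt(G',T)\le(1+\eps)\mathrm{OPT}$, which proves part 1.

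\textbf{Step 4 (EPTAS).} Apply Klein's contraction decomposition to $G'$. Contract every $k$-th BFS layer class of edges, with $k=O(1/\eps)$ and the best shift chosen. This changes the cost by at most $\eps\,|E(G')|/ \mathrm{poly} \le \eps\,\mathrm{OPT}$ after rescaling $\eps$ by the factor $2^{\poly(1/\eps)}$. The resulting graph has treewidth $O(k)$ in the rescaled parameter, namely $2^{\poly(1/\eps)}$. Steiner tree DP over a tree decomposition runs in time $2^{O(\tw\log\tw)}n$. Lifting the solution back by uncontracting edges gives running time $2^{2^{\poly(1/\eps)}}n^{O(1)}$, which proves part 2.
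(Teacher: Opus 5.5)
Your pipeline (mortar graph, bounded joining, $\theta$ portals per brick, optimal trees for all portal subsets via Erickson--Monma--Veinott, then contraction decomposition and treewidth DP) is the route the paper's overview sketches for Borradaile--Klein--Mathieu, and Steps~3 and~4 are essentially fine. In Step~4, note that $k$ must be $2^{\poly(1/\eps)}/\eps$, not $O(1/\eps)$. The gap is Step~2, in two places.

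First, the accounting does not close. The north and south sides of all bricks together are only bounded by $O(\eps^{-1})\,\mathrm{OPT}$ (the skeleton bound). So an extra cost of $O(\eps\,\ell(P))$ per side sums to $O(\mathrm{OPT})$, not $O(\eps)\,\mathrm{OPT}$, and it is worse with your stated mortar length. The replacement must instead be charged to the solution: the forest inside a brick is replaced by one of length at most $(1+O(\eps))$ times its own length, as in property (F3) of BKM's Thm.~10.7.

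Second, and more seriously, a one-path ``few joins'' lemma applied separately to North and to South only handles subtrees hanging off one side. The hard case is a forest that connects North to South many times. In a long, thin brick a near-optimal forest can genuinely need a number of north--south connections that grows with the brick's length, so no $\poly(1/\eps)$ bound holds without further structure on the bricks.

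That structure comes from a specific column rule and a bound on columns per brick, which your columns (placed every $\eps\,\mathrm{OPT}$-proportional length) do not have.
\begin{itemize}
\item $s_i$ is the first vertex of South with $|\So[s_{i-1},s_i]|>\eps\,\dist(s_i,\No)$, and the column at $s_i$ is a shortest $s_i$--North path.
\item All columns in a face together have length $O(\eps^{-1})$ times its South.
\item Only every $\kappa$-th column enters the mortar graph, with $\kappa=\poly(1/\eps)$ and the offset chosen by averaging, so that these columns cost $O(\eps)\,\mathrm{OPT}$ and can be bought.
\end{itemize}
Then each brick contains at most $\kappa$ start points, and for $x\in\So[s_i,s_{i+1})$ one has $|\So[s_i,x]|\le\eps\,\dist(x,\No)$. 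This is property (P2) in the paper. It lets north--south connections starting in one such interval be shifted along South at an $O(\eps)$ fraction of their own length. BKM's Thm.~10.7 exploits exactly this to get $o(\eps^{-5.5})$ joining vertices per brick; without it, the structure claim in your Step~2 is unsupported.

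Finally, before Step~3 gives $\smt(G',T)\le(1+\eps)\,\mathrm{OPT}$, you must also pay for moving each joining vertex to its nearest portal. That costs $\ell(B)/\theta$ per joining vertex, hence $\poly(1/\eps)\,|M|/\theta\le\eps\,\mathrm{OPT}$ in total for a suitable $\theta=\poly(1/\eps)$.
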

To obtain an EPTAS with better dependence on $1/\eps$, the structural insights obtained in the spanner construction can be used in a more efficient way. 
\begin{theorem}[Faster EPTAS for planar Steiner Tree \cite{BorradaileKM09}]\label{th:planarsteinertspfast}
Given a planar instance $(G,T)$ of Steiner Tree and an $\eps>0$, a $(1+\eps)$-approximate solution can be found in time $2^{\poly(1/\eps)}\cdot n^{O(1)}$.
\end{theorem}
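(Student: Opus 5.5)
This is the Borradaile--Klein--Mathieu result, so the plan is to reconstruct their argument. It has two ingredients: the structural insight behind the Steiner-tree spanner of Theorem~\ref{th:planarsteinertsp}, and a dynamic program whose state space is only singly exponential.

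\textbf{Step 1: a small mortar graph.} Compute a 2-approximate Steiner tree, cut the plane along it to get a face boundary $\partial$, and build a \emph{mortar graph} $M\supseteq \partial$ of total length $\poly(1/\eps)\cdot\smt(G,T)$. Following Klein's Subset TSP construction in Theorem~\ref{th:planarsubsetsp}, $M$ is obtained by greedily adding shortest paths ("strips" and "columns") so that each face of $M$, a \emph{brick}, is bounded by two $\eps$-short paths. For each brick, select $\theta=\poly(1/\eps)$ \emph{portals} on its boundary, evenly spaced by length.

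\textbf{Step 2: the structure theorem.} Show there is a $(1+\eps)$-approximate Steiner tree that crosses each brick boundary only at portals, and whose restriction to each brick has the following form: it is a union of at most $\alpha(\eps)=\poly(1/\eps)$ trees, each connecting some portal subset.

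The proof first bounds the number of joining vertices with boundary paths. This uses the $\eps$-short boundary property: replace subtrees inside a brick by boundary segments, and charge the added length to the brick boundary via a shortest-path exchange. Summed over bricks, the charge is $\eps$ times the length of $M$, hence $\poly(\eps)\cdot\smt$ after rescaling $\eps$. The number of portals is then handled by detouring to the nearest portal, at cost $\ell(\partial B)/\theta$ per crossing.

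This step is the main obstacle, because the exchange argument must keep the result connected and avoid double charging. I would start from the BKM "joining vertex" lemma and prove it by induction on the number of joining vertices along each short path.

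\textbf{Step 3: brick-level DP instead of a brick-level spanner.} The double exponent in Theorem~\ref{th:planarsteinertsp} comes from enumerating, for every portal subset of every brick, the optimal Steiner tree, which is $2^{\theta}$ subsets. Avoid this. Contract each brick into a single vertex adjacent to its portals, giving the "portal-connected" graph $B^+(M)$. Apply Klein's contraction decomposition (Baker-style shifting) to $B^+(M)$, choosing one of $1/\eps$ classes of mortar edges to contract. The contracted edges have total length at most $\eps\cdot\ell(M)=O(\eps)\smt$, so the error stays within budget. Each resulting piece has treewidth $\poly(1/\eps)$.

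Run a DP over a tree decomposition of width $w=\poly(1/\eps)$, where brick vertices are expanded back. The state at a bag records a connectivity partition over the bag vertices, giving $2^{O(w\log w)}=2^{\poly(1/\eps)}$ states.

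The remaining point is the inner subproblem at each brick: for a given partition of its portals, compute the cheapest forest connecting the parts, restricted to the structure from Step~2. All portals lie on the boundary of a planar disk, so this is Steiner forest with terminals on the outer face. It is solvable in polynomial time by the Erickson--Monma--Veinott dynamic program for a fixed non-crossing partition, and only non-crossing partitions matter by planarity. The number of partitions used is at most $2^{\poly(1/\eps)}$.

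The total running time is therefore $2^{\poly(1/\eps)}\cdot n^{O(1)}$. Finally, uncontract and lift the solution, losing $O(\eps)\smt$ in total, and rescale $\eps$.
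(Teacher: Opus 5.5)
Your Step~2, as described, would not go through, and it carries the whole approximation guarantee. The $\eps$-shortness of $\No$ and $\So$ alone cannot bound the number of joining vertices. Take a brick that is a $d\times L$ grid strip, with $\No$ and $\So$ its long sides. Let the forest inside be $m$ disjoint vertical north--south teeth of length $d$, with feet $D\ge 2d$ apart. A connected subgraph joining the endpoints of $r$ teeth costs at least $(r-1)D+d$. Hence any forest of cost $(1+O(\eps))md$ that preserves the connections keeps $(1-O(\eps))m$ separate components. That gives $\Omega(m)$ joining vertices, with $m$ unbounded.

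The missing ingredient is the column structure of BKM's mortar graph, which the paper imports as condition (P2) for Theorem~\ref{thm:blackboxplanar}. In each strip, $s_i$ is the first vertex of $\So$ with $|\So[s_{i-1},s_i]|>\eps^{-1}\dist(s_i,\No)$. Every $\kappa$-th column is put into $M$, taking the cheapest residue class, whose length is $O(\smt/(\eps^2\kappa))=O(\eps)\smt$ for $\kappa=\eps^{-3}$. These supercolumns are bought into the solution, so each brick contains at most $\kappa$ column starts.

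Your accounting is also misdirected. The rerouting cost must be charged to the forest being replaced, giving $(1+O(\eps))$ times its length, not to the brick boundary. The reason is that $\eps\,\ell(M)$ with $\ell(M)=\poly(1/\eps)\smt$ is not $O(\eps)\smt$, and rescaling $\eps$ does not help because $\ell(M)$ scales with the same parameter. Only the portal detours are charged to boundaries. They cost $O(\poly(1/\eps)\,\ell(M)/\theta)$ in total, which is $O(\eps)\smt$ once $\theta=\poly(1/\eps)$ is large enough. The realistic repair is to take BKM's mortar graph with supercolumns and their structure theorem as a black box, as the paper does.

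Step~3 has the same scaling slip. With $1/\eps$ classes the cheapest one has length about $\eps\,\ell(M)$, so you need $k\ge \ell(M)/(\eps\smt)$, i.e.\ $k=\poly(1/\eps)$ classes; treewidth $O(k)$ is still $\poly(1/\eps)$. Two further details need care:
\begin{itemize}
\item Only mortar edges should be contracted. For example, decompose $M$ itself and afterwards add one vertex per brick face, which costs only a constant factor in treewidth by Lemma~\ref{lem:radial_tw}.
\item A brick vertex in a bag should be expanded only into its $\theta$ portals.
\end{itemize}
The brick table for a portal partition is then a minimum over non-crossing coarsenings of sums of $\smt(B,Q)$, each computed by Erickson--Monma--Veinott. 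With these repairs your Step~3 is essentially BKM's original dynamic program.

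The paper's sketch of this theorem takes a different route. After buying the contraction set in $M$, it adds, for every brick $B$ and portal subset $P$, a vertex $v_P$ adjacent to $P$ with weight $\smt(B,P)$. It then solves vertex-weighted Steiner tree, replacing each face vertex of the tree decomposition by a block of $2^{\poly(1/\eps)}$ representatives. The DP enumerates only bag subsets that use $\poly(1/\eps)$ representatives per block. This avoids reasoning about partitions and forests inside bricks, and it is this formulation that transfers to intersection graphs.
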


\subsection{Our results}



Our main algorithmic results are counterparts of the EPTASs of Theorems~\ref{th:planarsubsetsp}--\ref{th:planarsteinertspfast} to the intersection graphs of unit disks and other objects. Part of our goal was to understand to what extent such EPTASs are possible, therefore we state the results in a more general form, for intersection graphs of ``similarly sized fat'' (not necessarily convex) polygons. (As we show in Lemma~\ref{lem:disktopolygon}, unit disk intersection graphs can be easily expressed by such objects.) We argue that this is a very reasonable and natural class of objects for these problems: we present a set of lower bound results that show that going beyond this class in various ways lead to APX-hard problems.


Formally, an object $p$ in $\Reals^2$ is called $\beta$-fat for some $\beta \in (0,1]$ if there are balls $B_1$, $B_2$ of radius $r_1,r_2$ such that $B_1\subseteq p \subset B_2$ and $r_1/r_2\geq \beta$. A finite collection $X$ of objects in $\Reals^2$ is called $\delta$-similarly-sized for some $\delta \in (0,1]$  if $\frac{\min_{x\in X} \diam(x)}{\max_{x\in X} \diam(x)}\geq \delta$. Throughout this article, we will assume that our objects are $\beta$-fat and the objects collections $\delta$-similarly-sized for some universal positive constants $\beta,\delta$.

Before presenting our generalizations, let us remark that in planar graphs the edge-weighted versions of Subset TSP and Steiner Tree can be easily reduced to the unweighted versions by subdividing each edge an appropriate number of times. In case of geometric intersection graphs, defining edge-weighted problems usually does not make much sense: large cliques can easily appear in intersection graphs and an edge-weighted clique can be used to represent arbitrary graphs. Vertex-weighted problems do make sense for intersection graphs, but such problems can be much more challenging compared to the unweighted versions. Already for planar graphs, much less is known about the vertex-weighted version \cite{DBLP:journals/talg/DemaineHK14}; in particular, Theorems~\ref{th:planarsubsetsp}--\ref{th:planarsteinertspfast} work only in case of uniform vertex weights. Therefore, all our results are for the unweighted versions of the problems. In all the algorithmic results, we assume that an intersection graph is given in the input with a representation. The input size $n$ is the total size of this representation, i.e., the total number of vertices of the polygons.

\begin{restatable}[Geometric Subset TSP EPTAS via spanner]{theorem}{TSPalg}
\label{thm:TSPalg}
Given $\eps>0$, an intersection graph $G$ of $\delta$-similarly sized $\beta$-fat polygons with its representation, and a set
$T\subseteq V(G)$ of terminals,
\begin{enumerate}
\item (Subset spanner) We can find an induced subgraph $G'\subseteq G$ such that $|V(G')|\le O_{\beta,\delta}(1/\eps^{10})\cdot\tsp(G,T)$ and $\tsp(G',T)\le (1+\eps)\tsp(G,T)$ in polynomial time.
  \item (EPTAS) We can find a $(1+\epsilon)$-approximate solution of the instance $(G,T)$ in time $2^{O_{\beta,\delta}(1/\eps^5\log(1/\eps))}\poly(n)$.
\end{enumerate}

\end{restatable}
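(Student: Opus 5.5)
We will prove the theorem by transferring Klein's planar machinery (Theorem~\ref{th:planarsubsetsp}) to the geometric setting, working with a planarized skeleton of the arrangement of the polygons instead of the intersection graph itself.

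\textbf{Step 1: reduction to a localized instance.} First compute a constant-factor approximate tour $W$, e.g.\ by doubling an MST-type $2$-approximation of Steiner tree on $T$, so that $|W|=\Theta(\tsp(G,T))$. By fatness and similar size, any polygon of an optimal tour lies within distance $O(\tsp)$ of $W$. Moreover, each unit-size cell meets only $O_{\beta,\delta}(1)$ objects from any shortest path, since a shortest path cannot contain two objects far apart along it that intersect. Since the covering region of $W$ has area $O(|W|)$, after discarding objects irrelevant to near-optimal solutions we may assume $T$ is spread over $O(\tsp)$ cells.

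\textbf{Step 2: the subset spanner.} The key device is an ``almost planar'' skeleton. Overlay a grid of cell size $\Theta(\delta)$ and consider the planar arrangement of polygon boundaries restricted to the relevant cells. A tour in $G$ induces a curve in the union of the objects, and conversely a curve through the union corresponds to a walk whose length is, up to constant factors, the number of objects it visits. The crux is that the conversion must cost only a factor $(1+\eps)$, not $O(1)$. We follow Klein's approach. Take the face boundary of the region of the approximate tour, which gives a ``mortar'' structure. Cut the region into strips or ``bricks'' by shortest paths. Inside each brick with boundary length $L$, keep only $\poly(1/\eps)$ portals and, between portal pairs, shortest paths in $G$. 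Klein's brick lemma says some optimal tour can be rerouted to cross each brick boundary $O(1/\eps)$ times using only these paths, at cost $(1+\eps)$. Replacing planar face-boundary shortcutting with geometric arguments requires a structure lemma: if two shortest paths in $G$ come within $O(1)$ distance, they can be joined by $O(1)$ extra objects, because a unit neighborhood is hit by $O(1)$ path objects. This lets us substitute the planar uncrossing and shortcutting steps, losing only an additive $O(1)$ per crossing, and we charge these to the $\eps$ budget by choosing column spacing $\Theta(1/\eps^{2})$. Counting $\poly(1/\eps)$ paths, each of length $\poly(1/\eps)$ per unit of mortar, yields an induced subgraph $G'$ on the union of kept objects with $|V(G')|=O(\eps^{-10})\tsp$. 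Everything is polynomial time: shortest paths, grids, and the arrangement.

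\textbf{Step 3: the EPTAS.} Having $G'$ of size $N=\poly(1/\eps)\cdot\tsp$, apply a geometric contraction-decomposition or shifting step. Partition the plane into grid-aligned slabs of width $\Theta(1/\eps)$ with a random shift. Then in expectation only $\eps N/\poly(1/\eps)\le \eps\,\tsp$ objects meet the slab boundaries, and we add them twice to patch the tour. Each remaining piece has a separator-based tree decomposition whose bags are cliques in $O(1/\eps^{5})$ cells, in the style of clique-based separators for fat similarly sized objects. We run the standard connectivity DP (rank-based or cut-and-count type) for Subset TSP in time $2^{O(k\log k)}$ with $k=O(1/\eps^5)$, giving $2^{O(\eps^{-5}\log(1/\eps))}\poly(n)$.

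\textbf{Main obstacle.} The main obstacle is Step 2's substitute for planarity: intersection graphs have large cliques and crossings, so Klein's brick/uncrossing arguments do not apply directly. I would try first to prove the ``two nearby shortest paths can be merged at $O(1)$ cost'' lemma, together with a bound that a shortest path meets each cell in $O(1)$ objects, and then redo Klein's charging with these additive errors.
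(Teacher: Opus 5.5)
There is a genuine gap in Step~2, and it is the one you flag as the main obstacle. The ``brick lemma'' you invoke is not available. Klein's Subset TSP spanner has no bricks or portals: it is the shortcut skeleton plus the bipartite spanner of \Cref{lem:bipartitespan}. Brick/portal structure theorems come from the Steiner-tree work of Borradaile, Klein and Mathieu. Like every planar crossing bound, they are proved by rerouting pieces of the solution onto boundary shortest paths, using the fact that crossing paths share a vertex.

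In intersection graphs this fails (\Cref{fig:unitdisk_vs_planar}). Each reroute costs an additive $O(1)$, and that cost is paid once per excursion of the solution across a skeleton path, not once per column. A shortest terminal-to-terminal path can weave across a North path many times without sharing objects with it, making excursions of length $O(1)$. So the additive errors can sum to $\Theta(\opt)$ whatever column spacing you choose. Your ``merge two nearby shortest paths'' lemma is true, but it says nothing about how often you must apply it.

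The missing idea is to add to the spanner every cell within $G$-distance $1/\eps$ of the skeleton's objects. Then any excursion that actually leaves the spanner contains an object at distance $\ge 1/\eps$ from the skeleton, so it has length $\ge 2/\eps-2$. The $+4$ paid for jumping onto and off North, South, or the bipartite-spanner paths is then only a $(1+O(\eps))$ factor. Shorter excursions lie entirely inside the spanner and need no rerouting. With this, Klein's skeleton plus one bipartite spanner per face already suffices, with $O(\eps^{-4}\opt)$ cells and no portals. The skeleton must be built from shortcuts in the region-restricted graphs $G|_{\re R}$ of wireframe faces; that is also what makes ``inside a face'' meaningful when objects straddle skeleton curves.

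Because this spanner is a union of cells, you also need something Step~1 lacks: clique sparsification (\Cref{lem:sparseTSP}). At a $(1+O(\eps))$ loss it leaves at most $1/\eps$ terminals and $O(\eps^{-6})$ objects per cell. The bound $O(\eps^{-4})\cdot O(\eps^{-6})=O(\eps^{-10})$ per unit of $\opt$ comes from this, not from ``discarding irrelevant objects''.

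Step~3 has a similar accounting problem: slabs of width $\Theta(1/\eps)$ meet $\Theta(\eps N)=\Theta(\eps^{-9}\opt)$ objects, not $\eps\,\opt$. It can only work under three conditions:
\begin{itemize}
\item you contract rather than cut, since the tour must stay connected across slabs;
\item you charge per cell: a near-optimal tour uses $O(1)$ non-terminal objects per cell, so uncontracting a set $\cX$ of cells costs $O(|\cX|)$ (\Cref{lem:TSPUncontractCost});
\item you choose $k=\eps^{-5}$ against the $O(\eps^{-4}\opt)$ cells of the spanner.
\end{itemize}
You then still owe a proof that the contracted graph has treewidth $O(k)$ measured in cells. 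Your strip/separator assertion is not justified once contracted boundary components, which can span long ranges, are present. The paper obtains this bound from a Lipschitz embedding of $G_\cP$ into a planar graph, planar contraction decomposition, and the lifting \Cref{lem:twcompare} (see \Cref{thm:contractdecomp}). Finally, the clique-based dynamic program reaches $2^{O(\eps^{-5}\log(1/\eps))}$ only because the $\cP$-flattened treewidth is $O(k\log \ell)$ with $\ell=\poly(1/\eps)$, which again relies on sparsification.
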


\begin{restatable}[Geometric Steiner Tree EPTAS via spanner]{theorem}{SteinerSlow}
\label{thm:SteinerSlow}
Given $\eps>0$, an intersection graph $G$ of $\delta$-similarly sized $\beta$-fat polygons with its representation, and a set
$T\subseteq V(G)$ of terminals, 
\begin{enumerate}
  \item (Steiner-tree spanner) We can find an induced subgraph $G'\subseteq G$ such that $|V(G')|\le 2^{O_{\beta,\delta}(1/\eps^{7.5})}\cdot\smt(G,T)$ and $\smt(G',T)\le (1+\eps)\smt(G,T)$ 
    in time ${2^{O_{\beta,\delta}(1/\eps^{7.5})}}\poly(m)$.
  \item (EPTAS) We can find a $(1+\epsilon)$-approximate solution  of the instance $(G,T)$ in time  $2^{2^{O_{\beta,\delta}(1/\eps^{7.5})}}\poly(n)$.
  \end{enumerate}
\end{restatable}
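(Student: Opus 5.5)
We follow the planar template of Borradaile, Klein, and Mathieu, but replace planar-graph notions with geometric ones, and we reuse the Subset TSP machinery of Theorem~\ref{thm:TSPalg} as a first step.

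\textbf{Step 1: a mortar graph.} First compute a constant-factor approximate Steiner tree $A$ (e.g.\ the 2-approximation via metric closure), so $|V(A)|=O(\smt(G,T))$. Take the Euler tour of $A$ as a closed walk through $T$. Apply the subset spanner of Theorem~\ref{thm:TSPalg}(1) with the vertices of $A$ as terminals, together with its structural byproduct. That byproduct is a ``brick decomposition'': a subgraph $M$ (the mortar graph) of size $\poly(1/\eps)\cdot\smt(G,T)$ containing $A$. The regions of the plane bounded by $M$ (the bricks) have boundaries consisting of an approximately-shortest path $N$ and a path $S$ with $|S|\le(1+\eps)|N|$ in $G$-distance. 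The geometric substitute for faces is the arrangement of the union of the polygons of $M$. Fatness and similar size guarantee that only $O_{\beta,\delta}(1)$ objects of $G$ cross any unit-length stretch of a boundary curve, so interaction across brick boundaries is local.

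\textbf{Step 2: portals and the structure theorem.} On each brick boundary we place $\theta=\poly(1/\eps)$ portals, evenly spaced along the boundary paths. The main structural claim is that some near-optimal Steiner tree crosses brick boundaries only at portals, and inside each brick is a $(1+\eps)$-optimal Steiner tree for the portals and terminals it uses. Inside a brick with boundary $N\cup S$, the planar argument shows that an optimal tree can be replaced by one touching the boundary at only $\poly(1/\eps)$ vertices. It does so by ``shortcutting'' along the boundary, which is cheap because $N$ is a shortest path and $S$ is near-shortest. We redo this argument with the geometric tools: a tree path meeting a boundary path repeatedly can be rerouted along the boundary at additive cost $O_{\beta,\delta}(1)$ per reroute. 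The reroutes are charged to $\eps\,|M|$ by choosing spacing parameters appropriately. This is where the exponent $1/\eps^{7.5}$ would emerge from combining the $\eps^{-10}$-type mortar bound with portal counts.

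\textbf{Step 3: the spanner.} For each brick $B$ and each subset $P$ of at most $\poly(1/\eps)$ of its portals, compute an optimal Steiner tree of $P$ within $B$. This takes time $2^{\poly(1/\eps)}\poly(n)$ via Dreyfus--Wagner, since $|P|$ is bounded, and there are $2^{\poly(1/\eps)}$ subsets. Let $G'$ be the subgraph induced by $M$ together with all these trees. Summing over bricks gives $|V(G')|\le 2^{\poly(1/\eps)}\,|M|=2^{O_{\beta,\delta}(1/\eps^{7.5})}\smt(G,T)$. Taking the induced subgraph only adds edges, so the structure theorem gives $\smt(G',T)\le(1+\eps)\smt(G,T)$.

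\textbf{Step 4: the EPTAS.} Run a Baker/Klein-type contraction decomposition on the intersection graph $G'$ with parameter $\eps$. This splits it into parts of treewidth $O_{\beta,\delta}(\poly(1/\eps))\cdot|V(G')|/\smt$-scaled pieces, which after charging have treewidth $2^{O(1/\eps^{7.5})}$. This uses the known treewidth bounds for intersection graphs of fat similarly sized objects, via the clique-based separators the paper presumably develops. Solve each part by dynamic programming on tree decompositions in time $2^{\tw\log\tw}$, i.e.\ doubly exponential. The additive error $\eps|V(G')|$ is charged against $\smt(G,T)$ after rescaling $\eps$ by $2^{-O(1/\eps^{7.5})}$. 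That rescaling is exactly what produces the double exponent; to avoid compounding it we would instead apply the decomposition with error $\eps/2^{\poly}$ directly.

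\textbf{Main obstacle.} The hardest step is Step 2: adapting the planar brick structure theorem. Planarity is used to argue that subtrees inside a brick cannot cross, and that boundary-shortcutting yields a tree. In intersection graphs, crossing objects break this. I would first try to work with a ``planarized'' skeleton, i.e.\ drawing curves through the polygons, and use fatness to show that any crossing can be uncrossed at $O(1)$ extra vertices, charged to the crossings counted by the mortar size.
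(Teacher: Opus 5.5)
Your proposal has genuine gaps in Steps 1, 2 and 4.

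\textbf{Step 2 (structure theorem).} This is where the argument breaks down. In an intersection graph, a piece of the tree rerouted onto $\No$ or $\So$ may share no object with that path, so each reroute costs an additive $+2$ for jumping on and off (cf.\ Figure~\ref{fig:unitdisk_vs_planar}(i)). A near-optimal tree can cross a brick boundary in $\Theta(\smt(G,T))$ short excursions. Rerouting each at additive cost $O(1)$ then gives total error $\Theta(\smt(G,T))$, and no choice of spacing charges this to $\eps\,\smt(G,T)$. Your uncrossing plan fails for the same reason: crossings among the tree's own objects inside a brick are not counted by $|M|$.

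The paper never proves that a near-optimal tree meets boundaries only at portals. It does two things instead:
\begin{enumerate}
\item The whole distance-$\lambda$ neighborhood of the mortar graph, with $\lambda=1/\eps^{7.5}$, goes into the spanner; this is the set $S_0$ of a portal-respecting generator. Components of $\Sopt$ inside a brick that stay near the boundary are kept verbatim. Only components reaching a $\lambda$-inner vertex are simplified, and these have $\Omega(\lambda)$ objects. That size pays for the $O(1/\eps)$ additive cost per boundary interval and for the $o(1/\eps^{5.5})$ joining vertices of Theorem~\ref{thm:blackboxplanar}. Here $\lambda$ is chosen exactly so that $\frac{1}{\eps}\cdot o(\eps^{-5.5})=o(\eps\lambda)$; this, not the mortar size, is the source of the exponent $7.5$.
\item Such a component is made planar not by local uncrossing but by the connectivity-preserving object frame of Theorem~\ref{thm:tracespanningtree}. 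Its overhead is $O(t)$ for $t$ point terminals placed $1/\eps$ apart on $\No$ and $\So$, so $t=O(\eps|A|)$. Theorem~\ref{thm:blackboxplanar} is then applied as a black box to the plane graph $\pl Q_A$.
\end{enumerate}

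\textbf{Step 4 (contraction decomposition).} Putting the $\lambda$-neighborhood into $G'$ is affordable only after clique sparsification (Lemma~\ref{lem:sparseSteiner}). The resulting cell-size bound $\ell=2^{O((1/\eps)\log(1/\eps))}$ is also what your Step 4 silently needs. The contraction decomposition for intersection graphs (Theorem~\ref{thm:contractdecomp}) only gives treewidth $O(k\ell)$, because a large clique cannot be brought to small treewidth by contracting few vertices.

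\textbf{Step 1 (mortar graph).} The mortar graph is not a byproduct of the Subset TSP construction. That construction yields only the skeleton, whose faces have $\No$ a shortest path and $\So$ an \emph{internally} $(1+\eps)$-approximate shortest path. The inequality $|\So|\le(1+\eps)|\No|$ is false in general; a face cut off by a shortcut has $|\So|>(1+\eps)|\No|$.

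Theorem~\ref{thm:blackboxplanar} also requires property (P2): at most $\kappa$ column starts on the South side of each brick. This is obtained by slicing every skeleton face with every $\kappa$-th shortest-path column to North. In intersection graphs these columns cross, and they cannot be uncrossed by rerouting without an uncontrolled length increase.

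Section~\ref{sec:mortar} handles this with three ingredients. It uses bridges and lenses. It replaces the column ends near the boundary by fat union objects $x^{\So}_i, y^{\No}_i$ to restore fatness and similar size. It routes noncrossing $O(1)$-approaching columns in the Lipschitz-embedded wireframe of Theorem~\ref{thm:Lipschitz_wireframe}. Without this step the black-box forest simplification cannot be invoked at all.
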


\begin{restatable}[Faster EPTAS for geometric Steiner Tree]{theorem}{SteinerFast}
\label{thm:SteinerFast}
Given $\eps>0$, an intersection graph $G$ of $\delta$-similarly sized $\beta$-fat polygons with its representation, and a set $T\subseteq V(G)$ of terminals, we can find a $(1+\epsilon)$-approximate solution in time $2^{O_{\beta,\delta}(1/\eps^{33})}\poly(n)$.
\end{restatable}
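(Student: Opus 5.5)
The plan is to follow the Borradaile--Klein--Mathieu blueprint for Theorem~\ref{th:planarsteinertspfast}, working with a planar skeleton of the intersection graph instead of the graph itself. Everything rests on the structure built for Theorem~\ref{thm:SteinerSlow}, which we assume is available. After an initial simplification the solution has $O_{\beta,\delta}(\smt(G,T))$ vertices, and the polygons it uses lie within bounded distance of a constant-factor approximate tree. From this we get a \emph{mortar graph} $M$, a planarized skeleton of total size $O(\poly(1/\eps))\cdot\smt(G,T)$. The faces of $M$ are \emph{bricks}. Each brick boundary is a union of near-shortest geometric paths, and an $\eps$-optimal solution may be assumed to cross each brick boundary through only $\poly(1/\eps)$ \emph{portals}. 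The double exponential in Theorem~\ref{thm:SteinerSlow} comes from building, inside each brick, a spanner that is good for every subset of portals, which costs $2^{\poly(1/\eps)}$ per brick. To avoid this we do not build the spanner. Instead we contract the mortar graph and run the dynamic program directly.

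\begin{enumerate}
\item Compute the mortar graph $M$ and the portals of each brick, in polynomial time, using the geometric analogue of the brick decomposition from the spanner theorem. The needed structural fact is a \emph{brick structure theorem}: for every brick $B$ and every portal subset that an optimal solution uses, some forest inside $B$ connecting those portals has cost at most $(1+\eps)$ times the best, and has only $\poly(1/\eps)$ ``joining'' vertices on $\partial B$.

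\item Apply Klein's contraction decomposition (Baker-style shifting) to the \emph{mortar graph only}. Take a BFS layering of the dual of $M$, where each brick is treated as a face, and contract every $k$-th layer of mortar edges, with $k=\poly(1/\eps)$ and a random shift. The contracted mortar has total length at most $\eps\cdot\smt(G,T)$, which pays for the additive error. The resulting pieces, where each brick is replaced by a ``portal gadget'' of $\poly(1/\eps)$ vertices, have treewidth $\poly(1/\eps)$.

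\item On each piece, run dynamic programming over a tree decomposition of width $w=\poly(1/\eps)$. Whenever the DP reaches a brick, it must know the cost of connecting any required partition of that brick's portals inside the brick. This is the main obstacle: with $\poly(1/\eps)$ portals there are too many portal partitions to enumerate doubly exponentially, so we cannot tabulate them naively. The intended remedy, as in BKM, has two parts:
\begin{enumerate}
\item Use planarity of the brick interior, which here is the skeleton of the fat polygons, to restrict to \emph{non-crossing} partitions of the portals in cyclic order. There are only $2^{O(\text{\#portals})}=2^{\poly(1/\eps)}$ of these.
\item Compute each entry by a Steiner tree computation whose terminals lie on the outer face of the brick. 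The Erickson--Monma--Veinott algorithm does this in time $\poly(n)\cdot 2^{O(\text{\#portals})}$ in the planar case. In our setting we apply it to the planarized skeleton, and we must check that the planarization changes costs only by a $(1+\eps)$ factor.
\end{enumerate}
The DP states then number $2^{O(w\log w)}$ per bag.
\end{enumerate}

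Finally, the solutions of the pieces are combined: we add back the contracted mortar edges, then uncontract back to $G$. Here the uncontraction turns skeleton paths into paths of polygons with at most a constant-factor blow-up on the $\eps$-fraction. The total running time is $2^{\poly(1/\eps)}\poly(n)$. Tracking the exponents through portal count, brick width and layer count should give an explicit polynomial such as $1/\eps^{33}$. The step I expect to require the most care is the planarization in step 3(b). Bricks of an intersection graph are not genuinely planar, so we must show that an optimal solution's restriction to a brick can be uncrossed into a non-crossing portal connectivity pattern at $(1+\eps)$ cost, using fatness and similar size to bound local crossings.
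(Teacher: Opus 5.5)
Your proposal has a genuine gap at its foundation. It assumes the planar decomposition ``solution $=$ mortar edges $+$ portal-to-portal connections inside bricks'', and this is exactly what fails for intersection graphs. BKM can push in-brick forests onto the boundary because a path that crosses a shortest boundary path must share a vertex with it. Here two paths can cross without sharing an object (\Cref{fig:unitdisk_vs_planar}(i)), so every reroute onto $\No(B)$ or $\So(B)$ costs an additive $+2$. That loss is affordable only for in-brick components that reach a $\lambda$-inner vertex, since these have size $\Omega(\lambda)$. This is why \Cref{thm:steinerstruct} makes only those components portal-respecting. Everything else is kept verbatim in $S_0\subseteq N_G(p(V(\pl M)),\lambda)$: the optimum's objects meeting the mortar drawing, and its components that stay within distance $\lambda=1/\eps^{7.5}$ of the brick boundary. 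The optimum may make $\Theta(\smt(G,T))$ such short excursions. So your portal gadgets discard a part of the solution that cannot be made portal-respecting at $(1+\eps)$ cost. Once you admit arbitrary subsets of the $\lambda$-neighborhood of the mortar graph, contracting layers of the mortar graph alone no longer bounds the treewidth. The neighborhood of a bounded-treewidth mortar graph can have unbounded treewidth (\Cref{fig:mortartreewidth}(i)).

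The missing idea is to run the contraction decomposition on that neighborhood itself. The paper does this via the planar Lipschitz proxy $H$ of $G_\cP$ with the $\tau$-inner cliques removed, so the contracted set passes through bricks and cuts their interiors into islands. The inner cliques are grouped by the faces of $H-Y_\tau$; each group lies in a single brick by \Cref{lem:Ybrick}. The resulting tree decomposition is lifted back to $G$ through $G^*_\cP$ (\Cref{lem:GcsillagP}). Each island, not each brick, is then replaced by a gadget. The gadget encodes slices of the $2^{O(\lambda)}$ candidate shortest paths from \Cref{thm:structure_pathgen}, since an island sees only fragments of the in-brick trees. The dynamic program of \Cref{lem:partitioned_steiner_algo} runs on classes of size $r=2^{O(1/\eps^{7.5})}$ with width $w=O(1/\eps^{18})$. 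It uses the fact that some near-optimal solution meets each class in $O(1/\eps^{7.5})$ vertices, and this is where $2^{O(1/\eps^{33})}$ comes from. Your count of $2^{O(w\log w)}$ states per bag is unsupported once bags contain brick gadgets. Finally, the step you expect to be hardest is not an obstacle. The optimal Steiner tree for any portal subset of a brick can be computed directly in the intersection graph $B$ by Dreyfus--Wagner in $3^{O(\lambda)}\poly(n)$ time. So neither non-crossing partitions nor a planarized Erickson--Monma--Veinott computation is needed.
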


We show that the EPTASs for Subset TSP and Steiner Tree on intersection graphs of connected similarly sized fat objects is optimal in the sense that dropping any of the three conditions makes both problems APX-hard problems. Additionally, moving to three dimensions also make the problem APX-hard, even for the simple case of unit balls.

\begin{restatable}[APX-hardness results]{theorem}{APXhard}
\label{thm:apxhard}
\textsc{Steiner Tree} and \textsc{Subset TSP} are APX-hard in intersection graphs of:
\begin{enumerate}[label=(\alph*)]
\item similarly sized fat (but potentially disconnected) objects, even for unit-diameter 1/3-fat objects with two connected components,
\item similarly sized connected (but potentially non-fat) objects, even for unit-diameter objects,
\item connected fat (but potentially differently sized) objects, even if each object is $0.18$-fat, and 
\item similarly sized connected fat objects in $\Reals^3$, even if each object is a unit-diameter ball.
\end{enumerate}
\end{restatable}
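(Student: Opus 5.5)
We prove the four parts by separate gap-preserving (L-)reductions from problems already known to be APX-hard. For (a)--(c) the source is \textsc{Steiner Tree} and \textsc{Subset TSP} on general graphs of bounded degree, or equivalently from \textsc{(Metric) TSP}/\textsc{Steiner Tree} with edge weights in $\{1,2\}$, which are APX-hard. For (d) we use the same problems restricted to a grid-like class. Each reduction realizes a graph $H$ as an intersection graph. Every edge of $H$ becomes a path (a ``wire'') of $L$ objects, where $L$ is a fixed constant or polynomial length common to all edges. Every vertex of $H$ becomes a small gadget. Terminals of $H$ stay terminals. The key invariant is that optimal solutions in the intersection graph correspond, up to an affine transformation $x\mapsto Lx+O(|T|)$, to solutions in $H$. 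Since $\mathrm{OPT}(H)=\Theta(|T|)$ for bounded-degree instances, this is an L-reduction.

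The constructions differ in how they allow crossings, which is exactly what the planar setting (and our EPTAS) forbids.

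For (b), non-fat connected unit-diameter objects, we draw $H$ with straight-line-ish wires. Each wire object is a thin segment-like shape of diameter $1$. Wires are allowed to cross, and we must ensure a crossing never creates a shortcut. So at each crossing point of two wires, we place the crossing in the middle of long thin objects. Concretely, we take two long segments crossing at one point, and chain them so that switching wires at the crossing saves nothing: the detour via the crossing costs at least as much as going through the vertex gadget. Choosing $L$ large compared with the number of objects per crossing makes the length accounting exact up to lower-order terms.

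For (a), fat but two-component objects, each object is a pair of fat blobs, which lets us teleport adjacency. A vertex of $H$ is a chain of blobs, and an edge is an object whose two components touch the two endpoint chains. This realizes an arbitrary bounded-degree graph, after subdivision, by unit-diameter $1/3$-fat objects.

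For (c), connected fat objects of different sizes, we use a large fat comb-like polygon. Its fatness ratio stays bounded near $0.18$ by a ball inscribed in its core, and it carries many long ``fingers''. Alternatively, we use a large disk with small objects crossing over it. The large object acts as a hub adjacent to many vertices, and this lets us encode weight-$1$/$2$ structure, as in the APX-hardness of \textsc{Connected Dominating Set}-style arguments.

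For (d), unit balls in $\Reals^3$, any bounded-degree graph embeds in a 3D grid without crossings. We route each edge as a tube of unit balls along grid lines in a sufficiently scaled grid, with vertex gadgets at grid points, and we keep distinct tubes at distance $>1$ apart so that no unintended adjacencies arise.

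The main obstacle is the correctness direction: showing that no solution in the intersection graph uses shortcuts, meaning unintended adjacencies at crossings or through large/disconnected objects, to do better than $L\cdot\mathrm{OPT}(H)$ by more than a lower-order amount. I would handle this by a normalization lemma. It states that any Steiner tree or tour can be transformed, without increasing cost by more than $O(1)$ per gadget used, into one that traverses each wire fully or not at all. From such a normalized solution we can then read off a solution of $H$ of cost $(\text{cost}-O(|T|))/L$. The remaining task is to check that the constants give an L-reduction for both problems simultaneously, using that a tour and a tree differ by at most a factor $2$.
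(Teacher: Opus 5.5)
Your part (d) matches the paper: route a subdivision of a max-degree-$3$ graph as an induced path system in a scaled 3D grid. For (a), the intuition that two-component objects let one wire jump over another is also the paper's. However, part (b) fails and part (c) is not an actual construction.

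\textbf{Part (b).} Two wires of \emph{connected} objects that cross in the plane must contain two intersecting objects, so every crossing is a genuine adjacency. It creates a vertex at distance about $L/2$ from all four endpoints of the two wires. For example, four terminals $u,v,x,y$ on two crossing wires are connected at cost about $2L$ regardless of $\dist_H$, and a tour can switch wires there. No chaining of the crossing objects can make this switch as expensive as going through vertex gadgets, because the crossing sits deep inside both wires. This is structural, not technical: subdivisions of nonplanar graphs are not string graphs, so no realization of an arbitrary bounded-degree $H$ by connected objects can keep the wires independent. There is also a quick sanity check. The property you need at a crossing is purely combinatorial and never uses thinness. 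The same scheme with wires of unit disks would then make both problems APX-hard on unit disk graphs, contradicting the EPTASs of this paper unless $\textup{P}=\textup{NP}$.

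\textbf{Part (c).} A single large hub adjacent to many small objects puts everything at distance $2$ and encodes nothing. You never say how to obtain many hubs with prescribed, distinct neighborhoods, or how they interact with your wire framework and normalization lemma.

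\textbf{The missing idea.} Non-fatness (resp.\ size differences) should be used to build large cliques with controlled neighborhoods, i.e., to realize arbitrary split graphs, not to cross wires. Both problems are APX-hard on split graphs.
\begin{itemize}
\item For Subset TSP, start from TSP on metrics with distances in $\{2,3\}$. The points become independent terminals. Each distance-$2$ pair gets a clique vertex adjacent to exactly its two points, and each point gets a private dummy clique vertex, so graph distances between terminals equal the metric.
\item For Steiner Tree, reduce from Vertex Cover on connected cubic graphs. Edges become independent terminals and vertices form the clique, so a Steiner tree has size $|I|$ plus a vertex cover.
\end{itemize}
Geometrically, put independent vertex $v_j$ at a point $p_j$ of the open upper unit semicircle. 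Make each clique vertex the convex hull of the triangle $1,-1,-i$ together with the $p_j$ of its neighbors. These hulls pairwise intersect, are fat, have diameter $2$, and contain $p_j$ exactly when $p_j$ is a generator. Attaching a radial segment of length $2$ at each $p_j$ gives (b); a tiny disk tangent to the unit circle at $p_j$ gives (c).

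\textbf{Two smaller points.} First, your description of (a) cannot work as stated, since the two components of a unit-diameter edge object are within distance about $1$ of each other. What is needed is a grid drawing of a long subdivision in which each two-blob object straddles a crossing point. Second, with $L$ constant, an additive $O(|T|)$ normalization loss does not yield an L-reduction. The paper takes $L$ polynomial in $n$, so the slack is below one unit of $H$ and can be rounded away, which gives a PTAS reduction.
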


To formally show that approximation schemes for intersection graphs imply approximation schemes in planar graphs, we present reductions of an appropriate form that transfers the existence of (E)PTASs. This reduction shows that Theorems~\ref{thm:TSPalg}--\ref{thm:SteinerFast} can be seen as generalizations of Theorems~\ref{th:planarsubsetsp}--\ref{th:planarsteinertspfast} (but note that the exponents in the $\poly(1/\eps)$ terms are somewhat weaker in our results).

\begin{restatable}[Reduction from planar graphs to unit disk graphs]{theorem}{udgreduction}\label{planar_to_udg_reduction}
There is a PTAS reduction from \textsc{Planar Subset TSP} to \textsc{Unit Disk Subset TSP} and from \textsc{Planar Steiner Tree} to \textsc{Unit Disk Steiner Tree}.
\end{restatable}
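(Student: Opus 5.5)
We plan to give a single construction that serves both problems. Given a planar graph $G$ with terminal set $T$, first reduce to maximum degree at most $3$. Replace every vertex $v$ of degree $d>3$ by a path (or a cycle) of $d$ new vertices. Each new vertex receives one of the original edges of $v$, in the rotation order of the embedding. This keeps the graph planar. To make it preserve optima up to a factor $1+\eps$, first subdivide every original edge into a path of length $L=\lceil c/\eps\rceil$. The vertex gadgets then contribute at most $O(\deg)$ extra edges per vertex that a solution actually uses. Meanwhile every solution edge of the original becomes $L$ edges, so costs scale by $L$ and the gadget overhead is an $O(1/L)$ fraction.

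Rather than fight degree bookkeeping, we would use the cleaner alternative of embedding a planar graph of bounded degree as a \emph{grid-like} unit disk graph. Take a planar orthogonal grid drawing of the max-degree-$4$ graph $H$ (after replacing high-degree vertices by small trees, or by cycles for TSP) on an $O(n)\times O(n)$ grid. This drawing exists and is computable in polynomial time by standard results (Tamassia / Biedl–Kant). Scale the grid by a factor $K$. Place unit-diameter disks at consecutive points spaced, say, $0.9$ apart along each drawn edge polyline, with a disk at each vertex point. Non-adjacent edges in the drawing are at distance at least $K\ge 2$, so the intersection graph is exactly a subdivision of $H$. Near a vertex the incident chains meet only at the vertex disk; corners of polylines need a brief check that no chord-skipping occurs, which holds because consecutive chain disks on perpendicular legs are at distance at least $0.9/\sqrt2\cdot\ldots$, and we can adjust spacing at bends.

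This is the main technical obstacle: the subdivision is non-uniform. Edge $e$ becomes a chain of length $\ell_e$ depending on its drawing length, while the optimum in $G$ counts edges uniformly. To fix this, we equalize. Every edge of $H$ gets exactly the same number $M$ of disks, taking $M$ larger than the longest polyline length divided by spacing, i.e.\ $M=\poly(n)$. Surplus disks along a short edge are absorbed by making the chain wiggle (a zig-zag within its own channel of width $<K/2$). The consecutive-only intersection property is maintained with a slightly finer spacing. The gadget edges replacing high-degree vertices are drawn the same way with chain length $M$, or with length $1$ if we put the small gadget inside a single vertex region; we pick the latter with a perturbation check.

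With uniform length $M$, any Steiner tree or closed walk in the original $G$ of cost $c$ maps to one of cost $Mc+O(\text{gadget})$, and conversely. Any solution in the disk graph can be normalized so that it traverses whole chains, because partial chain entries are useless except at terminals, which are placed at vertex disks. It therefore projects back to a solution in $G$ of cost at most its cost divided by $M$. Gadget overhead is handled by first subdividing $G$'s edges $1/\eps$ times as above, so the additive gadget error is at most $\eps\,\mathrm{OPT}$; note $\mathrm{OPT}\ge |T|-1$ and gadget costs are charged only at vertices of the solution. Finally, we verify the PTAS-reduction format. The instance map is polynomial. The solution back-map is polynomial and carries a $(1+\eps')$-approximation to a $(1+O(\eps'))$-approximation, so we set $\eps'=\Theta(\eps)$. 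This gives the reduction for both \textsc{Subset TSP} and \textsc{Steiner Tree}, and preserves EPTAS running-time form as well.
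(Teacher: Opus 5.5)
Your architecture (grid drawing, edges replaced by unit-disk chains of one common length, gadgets for high-degree vertices, normalizing solutions to use whole chains) matches the paper's. The gap is in how you handle high-degree vertices.

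\textbf{The accounting fails.} You claim that subdividing each edge $L=\Theta(1/\eps)$ times makes the gadget overhead an $O(1/L)$ fraction, because gadget costs are ``charged only at vertices of the solution''. This is false when degrees are unbounded: a solution may use only two edges at $v$ and still pay $\Theta(\deg v)$ inside $v$'s gadget. Take a star $K_{1,n-1}$ whose two terminal leaves attach at opposite ends of the center's gadget. Then $\mathrm{OPT}=2$ edges, but every solution of the new instance crosses $\Theta(n)$ gadget edges. If gadget edges are chains of length $M$ (your first option), the new optimum is $2LM+\Theta(nM)$, a factor $1+\Theta(n\eps)$ above the scaled optimum. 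Adding a slightly longer route $a$--$x$--$y$--$b$ that avoids the hub then makes an \emph{optimal} solution of the new instance project back to a constant-factor worse solution.

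\textbf{The compact gadget is not constructed.} Your alternative, gadget edges of length $1$ placed ``inside a single vertex region'', does not arise from the construction you describe. In an orthogonal point-drawing of the degree-reduced graph $H$, the gadget vertices are distinct scaled grid points at mutual distance at least $K$. No perturbation makes them adjacent unit disks. You would need a drawing in which each vertex of $G$ owns a region with $\deg v$ ports in rotation order, and you have not built one.

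\textbf{The missing idea.} Bound the \emph{total} size of all gadgets instead of charging them per solution vertex. Keep each gadget at its natural, polynomially bounded size and do not equalize its chains. Then choose the common edge-chain length $M$ so that the total number of disks in all gadgets is at most $\eps M$. For example, take $M\ge\poly(n)/\eps$, or assume $\eps\ge 1/(2n)$ w.l.o.g.\ and take $M$ at least $2n$ times the total gadget size. The overhead is then an additive $\eps M\le\eps\,\mathrm{OPT}'$ regardless of degrees, and the $1/\eps$ pre-subdivision becomes unnecessary.

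This is the paper's route. A bar-visibility layout makes each vertex a horizontal bar, which is a path gadget with ports already in the right order. The anisotropic scaling $(x,y)\mapsto(3nx,36n^3y)$ keeps bars at most $6n^2$ grid points long. Each vertical edge uses its horizontal clearance $3n$ to wiggle into an induced grid path of length exactly $36n^4$. Lemma~\ref{lem:reductiontosubdivdilation}, with $b\le\ell/(4n^2)$, then gives the $(1+O(\eps))$ guarantee in both directions.

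The anisotropy also settles something you left unchecked. In your version the shortest channel must absorb $M$ disks, which forces roughly $K^2\gtrsim M$. The wiggles must also stay away from vertices, where the channels of edges sharing an endpoint overlap.
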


\paragraph*{New concepts and techniques of independent interest}

As the setting of intersection graphs is siginficantly more general than planar graphs, we need first to develop a number of basic tools that were not available for intersection graphs. These techniques could be of independent interest for other problems involving interection graphs. Note that we state these results in a very general form, considering similarly sized connected fat polygons, making them potentially useful in a wider range of applications.
\begin{itemize}
\item The faces of a planar graph partition the plane into disjoint regions. Such a decomposition is crucially used in the proofs of Theorems~\ref{th:planarsubsetsp}--\ref{th:planarsteinertspfast}, but it is not clear what the intersection graph analog of such a clean partition would be. We introduce the notion of \emph{wireframes,} which are planar graph representations of intersection graphs. Wireframes will be the main structures guiding our algorithm by decomposing the problem into regions. We also introduce object frames, a planar intersection graph whose objects are parts of the original objects.
\item \emph{Contraction decomposition} is an important step in planar PTASs, allowing us to reduce the problem to bounded-treewidth graphs. We show a clean way of proving such results for intersection graphs: first, we show how a \emph{Lipschitz embedding} into planar graphs can be obtained, and then we show that contraction decomposition of the target planar graph can be lifted to the original intersection graph.
\item  One obvious difference compared to planar graphs is that intersection graphs of even very simple objects can contain large cliques. However, we can preprocess the instance with a \emph{sparsification} procedure, which, at the cost of $(1+\epsilon)$-factor increase of the optimum, reduces the maximum clique size to polynomial in $1/\eps$ (for Subset TSP) or exponential in $1/\eps$ (Steiner Tree).
  \item  Borradaile, Klein, and Mathieu~\cite{BorradaileKM09} improved Theorem~\ref{th:planarsteinertsp}(2) to Theorem~\ref{th:planarsteinertspfast} using an elaborate dynamic programming algorithm designed specifically for their setting. We argue that an alternative, more streamlined way of doing this is via a reduction to a \emph{vertex-weighted} Steiner Tree problem on bounded-treewidth graphs. This change of viewpoint is the starting point in our proof of Theorem~\ref{th:planarsteinertspfast}.
\end{itemize}

Having these fairly self-contained components at hand seems to be a prerequisite for working towards Theorems~\ref{thm:TSPalg}--\ref{thm:SteinerFast}. However, we want to emphasize that obtaining these results is only the start of the journey. The differences between planar graphs and  intersection graphs make it necessary to deviate from the proof of Theorems~\ref{th:planarsubsetsp}--\ref{th:planarsteinertspfast} at every step, introducing new proof ideas that handle the phenomena specific to intersection graphs. In Sections~\ref{sec:eptas-subset-tsp}--\ref{sec:faster-ptas-steiner}, the paragraphs marked with ``(NEW)'' overview these additional proof ideas.

It turns out that perhaps the most significant initial difference between planar graphs and intersection graphs is that in planar graphs whenever two paths cross, then they share a vertex. For contrast, consider the unit disk intersection graph in Figure~\ref{fig:unitdisk_vs_planar}(i). There is a unique shortest $a-b$ path and a unique shortest $c-d$ path. Even though these two paths cross twice, it is not possible to reroute any of the two paths to the other path without increasing its length (contrary to what we would expect in a planar graph). This difference invalidates the main principle behind the structural decompositions in the proof of Theorems~\ref{th:planarsubsetsp}--\ref{th:planarsteinertspfast}. This principle can be briefly formulated as follows (see Figure~\ref{fig:unitdisk_vs_planar}(ii)). Let $C$ be a cycle such that for any two vertices $x,y\in C$, the shorter of the two paths between $x$ and $y$ on $C$ is a shortest $x-y$ path. Then if $P$ is a path with endpoints outside $C$ and $P$ enters $C$, then $P$ can be rerouted, without increasing its length, to avoid the interior of $C$. Thus such a cycle can effectively make the interior irrelevant with respect to certain connections. In intersection graphs, the rerouting may incur additional cost, invalidating this argument. These additional costs have a cascading effect on the algorithm: they require extra layers of proof ideas to make certain steps more robust to such cost increases, which in turn require further extra layers of arguments in the later parts of the algorithm.

\begin{figure}
\centering
\includegraphics[width=\textwidth]{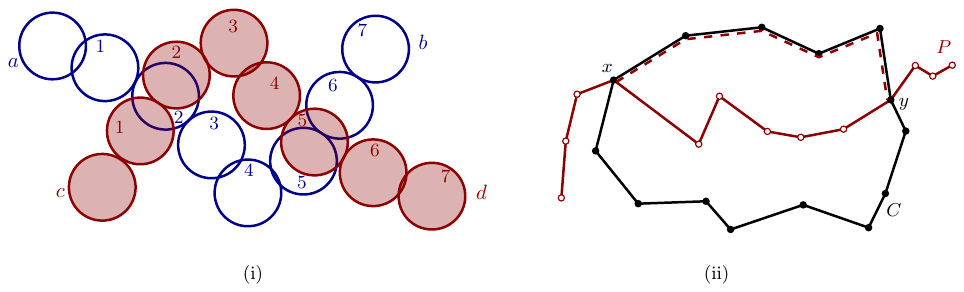}
\caption{(i) Two paths in a unit disk graphs that cannot be shortened: on the blue path from $a$ to $b$, the distance of the disk from $a$ is indicated with blue numbers, and on the red $c\rightarrow d$ path the red number is the distance from $c$. (ii) In a planar graph, if $C$ is a cycle where the distance of vertices on the cycle is less or equal to their graph distance, then we can shorten any path $P$ that intersects $C$ at least twice.}
\label{fig:unitdisk_vs_planar}
\end{figure}

\subsection{Organization}

The paper is organized as follows. Section~\ref{sec:overview} gives an overview of both previous techniques in planar graphs and our new techniques in intersection graphs to achieve the desired EPTASes. In Section~\ref{sec:prelim}, we recall standard notation for graphs and geometric objects, and define new concepts introduced in the paper, such as wireframes and object frames.

Sections~\ref{sec:connectivity}--\ref{sec:sparse} contain basic results that are crucial for the algorithms. As they are mostly used in black-box fashion, the reader may decide to only skim these sections at first, and jump directly to the algorithmic results starting in Section~\ref{sec:tspspanner}.
In more detail, Section~\ref{sec:connectivity} shows how a set of objects can be turned into an object frame while providing the same connectivity for terminal points. This result is crucial for the structural results for Steiner tree (Section~\ref{sec:SteinerStruct}), but a special case with only two terminals is used already in the construction of the so-called skeleton and the spanner in Section~\ref{sec:tspspanner}.
Section~\ref{sec:lipschitz} presents our results on Lipschitz embeddings into planar graphs. These results are used, among other places, in Section~\ref{sec:contract}, where we prove our contraction decomposition results for geometric objects. Section~\ref{sec:sparse} contains the sparsification results for both Subset TSP and Steiner Tree, showing that the clique size can be reduced to a function of $1/\eps$.

Our main result for Subset TSP (Theorem~\ref{thm:TSPalg}) is proved in Section~\ref{sec:tspspanner}, by first constructing the subset spanner and then using it with contraction decomposition and a bounded-treewidth algorithm.

Theorem~\ref{thm:SteinerSlow} for Steiner Tree is proved in Sections~\ref{sec:mortar} and \ref{sec:SteinerStruct}: the so-called mortar graph is constructed in Section~\ref{sec:mortar} and the main structural theorem for Steiner trees is stated in Section~\ref{sec:SteinerStruct}.
The faster PTAS for Steiner Tree (Theorem~\ref{thm:SteinerFast}) is proved in Section~\ref{sec:SteinerFast} using this structural understanding.

Section~\ref{sec:lower} proves our APX-hardness results (Theorem~\ref{thm:apxhard}), as well as the connection to planar problems (Theorem~\ref{planar_to_udg_reduction}).

Most of our results are proved for so-called $\alpha$-standard intersection graphs of polygons. Appendix~\ref{sec:graf_konvert} discusses how intersection graphs of similarly sized connected fat objects can be converted to such a representation.

\section{Overview}\label{sec:overview}
\subsection{EPTAS for Subset TSP}
\label{sec:eptas-subset-tsp}

We first review the main steps of the proof of Theorem~\ref{th:planarsubsetsp} by Klein~\cite{Klein06}, and then briefly explain the new ideas and concepts in our algorithm for the geometric intersection graph setting. In this and the following sections, we first go through the main algorithmic components of the previous algorithms and then introduce the new components that we developed. 

\subsubsection{Overview of Klein's techniques~\cite{Klein06} in planar graphs}
\paragraph*{Skeleton and subset spanner construction}
The first step in the algorithm of Klein~\cite{Klein06} is to build a \emph{subset spanner:} as stated in the first part of Theorem~\ref{th:planarsubsetsp}, this is a subgraph $G'$, containing all the terminals, such that $|V(G')|\le \poly(1/\eps)\tsp(G,T)$ and $\dist_{G'}(x,y)\le (1+\eps)\dist_G(x,y)$ for any $x,y\in T$. Let us start with any tree containing every vertex of $T$ and having size $O(\tsp(G,T))$; for example, a constant-factor approximate Steiner tree is suitable. This is likely not a correct subset spanner yet: a shortest $x-y$ path $P$ may use ``shortcuts'' outside  the tree. To obtain a spanner, we need to extend the tree (with a factor of $\poly(1/\eps)$ increase in total size) in a way that the shortest path $P$ can be rerouted (at the cost of at most a factor of $1+\eps$ increase in length). It is convenient for the description of the algorithm to change the setting: using the standard step of ``cutting open the tree'', we can assume that the boundary cycle $C$ of the infinite face contains all of $T$, and we want to extend $C$ into a subset spanner, increasing its size by at most a factor of $\poly(1/\eps)$.

As an intermediate step, we extend $C$ into a so-called skeleton (a subgraph of $G$), satisfying a weaker property. Let $F$ be a face of the skeleton; with slight abuse of notation, let $G[F]$ be the subgraph of $G$ containing every edge inside $F$ or on the boundary of $F$. The boundary of each (finite) face $F$ of the skeleton is divided into two paths, North and South, such that
\begin{itemize}
\item For any two vertices $x,y$ of North, the subpath $\textup{North}[x,y]$ is a shortest $x-y$ path in $G[F]$.
\item For any two vertices $x,y$ of South, the subpath $\textup{South}[x,y]$ is a $(1+\eps)$-approximate shortest $x-y$ path in $G[F]$.
\end{itemize}
We extend the cycle $C$ with the following iterative procedure. Suppose that $C$ has two vertices $x,y$ such that there is a ``shortcut'': an $x-y$ path $P$ inside $C$ such that $|P|$ is more than a factor of $(1+\eps)$ shorter than the $x-y$ subpath $P'$ of $C$. Let us choose $P$ such that $|P'|$ is minimum possible. If we extend $C$ by adding $P$ to it, then we are creating two faces (see Figure~\ref{fig:skeleton}):
\begin{itemize}
\item A face bounded by $P$ and $P'$, which satisfies the requirements when we define $P$ to be North (as it is a shortest path) and $P'$ to be South (where $(1+\eps)$-approximate minimality follows from the minimal choice of $P'$).
\item A face bounded by the cycle $C'$, which is obtained from $C$ by replacing $P$ with $P'$.
\end{itemize}
Let us continue this process with the cycle $C'$, stopping when no shortcut can be found in the cycle. A charging argument shows that the total length of all the added paths is $O(1/\eps)$ times the length of the initial cycle $C$.
\begin{figure}
\centering
\includegraphics{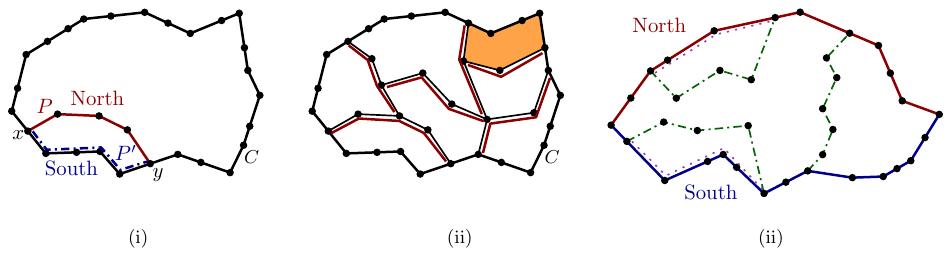}
  \caption{(i) Adding the shortcut $P$ to the skeleton creates two faces inside $C$. (ii) The final skeleton after adding multiple shortcuts (red). The final orange face has no shortcuts. (iii) A face of the skeleton, with a North-North, a South-South, and a North-South path going through it.}
  \label{fig:skeleton}
\end{figure}

Does the resulting graph have the subset spanner property? Consider a shortest $x-y$ path $P$ in $G$ and consider a subpath $P'$ that intersects the skeleton only in the endpoints $x'$ and $y'$, i.e., $P'$ goes inside some face $F$ (see Figure~\ref{fig:skeleton}(iii)). If $x'$ and $y'$ are both on the North part of $F$, then the shortest path property shows that $P'$ can be rerouted along North without any increase of length. Similarly, if $x'$ and $y'$ are both on South, then $P'$ can be rerouted along South with a $(1+\eps)$ factor increase of length. To handle the situation where $P'$ connects North and South, Klein~\cite{Klein06} constructed a spanner connecting two paths, in the following sense:

\begin{restatable}[Bipartite spanner, Klein~\cite{Klein06}]{lemma}{BipartiteSpan}\label{lem:bipartitespan}
Let $G$ be a graph, and let $Q$ be a shortest path in $G$ from $a$ to $b$, and
let $Q'$ be an arbitrary path from $a$ to $b$. Then for any $\eps>0$ there is
a subgraph $H$ with $|H|=O(1/\eps^3)|Q'|$ such that for any $q\in V(Q)$ and
$q'\in V(Q')$ we have $\dist_{H\cup Q \cup Q'}(q,q')\leq (1+\eps)\dist_G
(q,q')$.
\end{restatable}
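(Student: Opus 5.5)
The plan follows Klein's two-level portal construction. Parametrize $Q'$ by arc length and cut it into consecutive subpaths (``columns'') $Q'_1,Q'_2,\dots$. Each column is chosen greedily: it ends as soon as its length exceeds $\eps$ times the distance from its starting vertex to $Q$. More precisely, let $q'_{j}$ be the start of $Q'_j$ and $d_j=\dist_G(q'_j,Q)$. The column $Q'_j$ ends at the first vertex after traversing length $\eps d_j$. For each $j$ we add to $H$ a shortest path $R_j$ from $q'_j$ to $Q$, of length $d_j$, landing at $r_j\in Q$.

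The first step is to bound $\sum_j d_j = O(|Q'|/\eps)$. The greedy rule gives $|Q'_j|\ge \eps d_j$ for every column except possibly the last, and $\sum_j |Q'_j|=|Q'|$. The last column contributes at most $d_{\text{last}}\le |Q'|$, since $Q'$ starts at $a\in Q$ and so every $d_j\le|Q'|$. Hence $\sum_j d_j \le |Q'|/\eps + |Q'|$. These ``long'' paths already cost $O(|Q'|/\eps)$.

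The second step makes the paths dense enough to serve arbitrary pairs. For a fixed column $Q'_j$ and a target $q\in Q$, the route $q'\to q'_j\to R_j\to r_j\to q$ (along $Q'$, then $R_j$, then $Q$) may be far too long when $q$ is close to $q'$ relative to $d_j$. To handle this, refine each column by adding ``portals''. Subdivide $Q'_j$ into $O(1/\eps)$ pieces of length $\eps^2 d_j$. From each piece endpoint $p$, add shortest paths to $O(1/\eps)$ points of $Q$ chosen at geometric spacing around $r_j$. The spacing should be such that, for every $q\in Q$ at distance $\Theta(2^i d_j)$ from $p$, some added path gets within $\eps\dist(p,q)$. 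Each added path has length $O(d_j/\eps)$ in the relevant range, so a single column costs $O(1/\eps)\cdot O(1/\eps)\cdot O(d_j)$ up to further factors. Summing with the first step gives $O(1/\eps^3)|Q'|$. Pairs $q,q'$ with $\dist(q,q')\ge d_j/\eps$ are handled by the long path $R_j$ with additive error $O(d_j)\le \eps\dist(q,q')$.

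The error analysis for a pair $(q,q')$ with $q'\in Q'_j$ proceeds as follows. Let $P$ be a shortest $q'$–$q$ path and $D=|P|\ge d(q')$. Note that $d(q')\ge d_j-\eps d_j$ because the column has length at most $\eps d_j$. Walk from $q'$ to the nearest portal $p$ at cost $\le\eps^2 d_j\le 2\eps D$. Then use the portal path to a point $s$ of $Q$, and travel along $Q$ to $q$. Because $Q$ is a shortest path, $|Q[s,q]|\le \dist(s,p)+\dist(p,q)$ by the triangle inequality, but this alone gives a factor of $2$. The key is to choose $s$ close to where $P$ meets or approaches $Q$. Picking $s$ within $\eps D$ of $q$ along $Q$ gives total length $\le D+O(\eps D)$.

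The main obstacle is exactly this last point: ensuring that for every $q$, some portal path from the column lands within $\eps\dist(q',q)$ of $q$ along $Q$, without paying $1/\eps$ paths per scale. I would first try to exploit the fact that, from a fixed portal $p$, the distances $\dist(p,s)$ for $s\in Q$ are $1$-Lipschitz along $Q$. So only the range of $q$ with $\dist(p,q)\le d_j/\eps$ needs explicit targets (farther $q$ are served by $R_j$). Within that range, targets spaced $\eps d_j$ apart along a $Q$-interval of length $O(d_j/\eps)$ around $r_j$ suffice, which is $O(1/\eps^2)$ targets. If the count comes out one factor of $1/\eps$ too high, I would reduce it by using portals only at spacing $\eps d_j$ on the column, relying on the column length bound.
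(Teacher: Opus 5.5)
Your column decomposition and the bound $\sum_j d_j\le |Q'|/\eps+|Q'|$ are fine. Measuring against the distance of the column's \emph{start} works as well as Klein's rule, since $d(q')\ge(1-\eps)d_j$ inside the column. The gap is the step you yourself call the main obstacle. You never construct, from a column, paths to $Q$ of total size $O(\eps^{-2}d_j)$ that $(1+\eps)$-approximate the distances to \emph{every} vertex of $Q$, and your accounting does not support it.

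\textbf{Why the current accounting fails.} $O(1/\eps)$ portals times $O(1/\eps)$ targets times paths of length $O(d_j/\eps)$ gives $O(d_j/\eps^3)$ per column, i.e.\ $O(\eps^{-4})|Q'|$ overall. Your concrete fallback, with targets spaced $\eps d_j$ apart in a window of length $O(d_j/\eps)$ around $r_j$, has the same defect. The path to a target at $Q$-distance $x$ from $r_j$ has length at least $x-d_j$, and in a general graph these shortest paths can be essentially disjoint. Their union can therefore have size $\Theta(\sum_{k\le 1/\eps^2}k\eps d_j)=\Theta(d_j/\eps^3)$. Reducing the number of portals on the column attacks the wrong factor. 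One portal per column, its start $q'_j$, already suffices, because walking from $q'$ to $q'_j$ costs at most $\eps d_j\le\frac{\eps}{1-\eps}\dist(q',q)$. The excess comes from the uniform spacing of targets on $Q$.

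\textbf{The missing ingredient.} What you need is exactly the single-source spanner, Lemma~\ref{thm:singlesource}. For a vertex $v$ it gives a subgraph of size $O(\eps^{-2})\dist(v,Q)$ that, together with $Q$, $(1+\eps)$-approximates $\dist(v,x)$ for all $x\in V(Q)$. Klein's argument, which the paper cites, applies it once from each column start and sums $O(\eps^{-2})d_j$ to get $O(\eps^{-3})|Q'|$.

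\textbf{Repairing your geometric idea.} It can replace that lemma if you work at the single portal $p=q'_j$. For $i=0,\dots,\lceil\log_2(1/\eps)\rceil$, place targets at spacing $\max(1,\lfloor\eps 2^i d_j\rfloor)$ among the vertices of $Q$ within $Q$-distance $(2^{i+1}+1)d_j$ of $r_j$. This is $O(1/\eps)$ targets per scale, so $O(\eps^{-1}\log\eps^{-1})$ in total, not $O(1/\eps)$. Each target is reached by a path of length $O(2^id_j)$, so scale $i$ costs $O(2^id_j/\eps)$ and all scales together cost $O(d_j/\eps^2)$.

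Correctness then goes as follows. Take $q$ with $2^id_j\le\dist(p,q)<2^{i+1}d_j$. It lies in the scale-$i$ window, because $\dist(r_j,q)\le d_j+\dist(p,q)$. It therefore has a target $s$ with $\dist_Q(s,q)\le\eps\dist(p,q)$, giving a route of length at most $(1+2\eps)\dist(p,q)$. Vertices with $\dist(p,q)\ge d_j/\eps$ are served by $R_j$. The saving comes from the path lengths shrinking geometrically over the scales, not from having few targets.
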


Using Lemma~\ref{lem:bipartitespan} in every face of the skeleton with $Q=\textup{North}$ and $Q'=\textup{South}$ allows us to extend the skeleton such that $P'$ can be rerouted with a factor of $1+\eps$ increase whenever it connects North and South. Overall, we create the required spanner having size $O(1/\eps^4|V(C)|)$.

\paragraph*{Contraction decomposition and bounded-treewidth algorithms}
Using the subset spanner construction, we can assume that $|V(G)|,|E(G)|\le \poly(1/\eps)\tsp(G,T)$. Next we would like to decrease the treewidth of $G$ by contracting an appropriate small set of edges. To find this set of edges, we use the following contraction decomposition result (a similar result was implicitely proved by Klein\cite{DBLP:journals/siamcomp/Klein08}).

\begin{restatable}[Contraction Decomposition for Planar Graphs, Demaine~\etal~\cite{DemaineHM10}]{theorem}{ContractdecompPlanar} \label{thm:contractdecomp_planar}
For any integer $k\geq 2$, the edge set of any planar graph $G$ can be
partitioned into $k$ sets $E_1,\dots,E_k$ such that $\tw(G/E_i)=O(k)$ for all
$i\in [k]$. The partition can be found in $O(n^{3/2}\log n)$ time.
\end{restatable}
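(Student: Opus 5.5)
We prove the statement by induction on $k$-outerplanarity via BFS layering, following the Klein-style contraction scheme. Fix a vertex $r$ and let $L_j$ be the set of vertices at BFS distance $j$ from $r$ in $G$; we may assume $G$ is connected, handling each component separately. Every edge either lies inside a layer or joins consecutive layers $L_j,L_{j+1}$. For $i\in[k]$, let $E_i$ consist of all edges joining $L_j$ and $L_{j+1}$ with $j\equiv i \pmod k$. The sets $E_1,\dots,E_k$ are disjoint. The remaining intra-layer edges can be assigned to an arbitrary $E_i$; placing edges of $L_j$ into $E_i$ with $j\equiv i$ is harmless, since contracting more edges never increases treewidth (contraction is a minor operation). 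This yields a partition of $E(G)$.

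Next we bound $\tw(G/E_i)$. Contracting all edges between $L_j$ and $L_{j+1}$ for $j\equiv i \pmod k$ merges each such pair of consecutive layers into connected pieces. The key step is to show that $G/E_i$ has a spanning tree of radius $O(k)$ in which every vertex is within $O(k)$ of a designated root.
\begin{itemize}
\item Contract the whole ball $L_0\cup\dots\cup L_{i}$ first; this is legitimate because it only makes the graph smaller, and treewidth is minor-monotone.
\item Then every vertex in the band between two contracted cuts $j\equiv i$ and $j+k$ reaches the contracted cut $j$ by a BFS path of length at most $k$.
\item Any vertex of cut $j$ is joined to cut $j-k$ through contracted edges and at most $k$ layers.
\end{itemize}
Iterating this, however, gives radius growing with the number of bands, not $O(k)$. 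So instead I will argue that $G/E_i$ is a minor of a graph that is a disjoint union of the bands glued along contracted vertices. Each band $L_{j+1}\cup\dots\cup L_{j+k}$, with everything at depth at most $j+1$ contracted to a single vertex, is a planar graph of radius at most $k$. By Baker/Bodlaender, it therefore has treewidth $O(k)$ (planar graphs of radius $d$ have treewidth at most $3d$).

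The main obstacle is combining the bands. After contraction, the vertices of $L_{j+1}$ in the edges of $E_i$ are identified with $L_j$, so bands interact only through these contracted vertices. I would handle this with the standard trick: in $G/E_i$, contract additionally all of the ball up to each cut, one band at a time. Alternatively, and more cleanly, I would cite that $G/E_i$ admits a layering in which each band together with a contracted apex has bounded radius, and glue the tree decompositions of the bands along the separators formed by the contracted cut vertices. The latter need care, since the cut $L_j$ may be large; I expect this to be exactly where the planar argument of Demaine et al.\ is needed.

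For the running time, BFS and the partition are linear. Computing the decompositions within $O(n^{3/2}\log n)$ time follows from known algorithms for bounded-radius planar graphs, which I would cite rather than reprove.
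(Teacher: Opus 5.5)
The paper does not prove this statement; it imports it from Demaine, Hajiaghayi and Mohar (Klein implicitly proved a similar result earlier). Your proposal has a genuine gap, and it is more serious than the gluing step you flag. The partition by \emph{primal} BFS levels does not have the claimed property, so no gluing argument can complete the proof.

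\textbf{Counterexample.} Take concentric cycles $C_1,\dots,C_m$ of length $s$, with $C_1$ outermost. Join the $t$-th vertex of $C_\ell$ to the $t$-th vertex of $C_{\ell+1}$ by a radial path of length $k$. Add a root $r$ in the outer face, adjacent to every vertex of $C_1$.
\begin{itemize}
\item BFS from $r$ gives $L_{1+(\ell-1)k}=V(C_\ell)$.
\item Every other layer $L_j$ with $j\ge 2$ is an independent set of interior radial vertices.
\item Hence for every $i\not\equiv 1 \pmod k$, your class $E_i$ contains no cycle edge. It consists of exactly one edge of each radial path, plus the edges at $r$ when $i\equiv 0$.
\item Contracting $E_i$ only shortens the radial paths, and possibly merges $r$ into $C_1$.
\item So $G/E_i$ still contains a subdivided cylindrical grid with $m-1$ cycles and $s$ spokes. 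Thus $\mathrm{tw}(G/E_i)\ge\min(m-1,s)$, which is unbounded for fixed $k$.
\end{itemize}

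\textbf{Why the approach fails.} Your band bound is correct, but only once the whole interior $L_0\cup\dots\cup L_{j+1}$ is collapsed to an apex. Contracting inter-layer edges never does this: it contracts ``radial'' edges, while large treewidth lives in ``tangential'' structures such as nested cycles, which survive. Primal BFS layering is the right tool for Baker-style \emph{deletion} decompositions, not contraction decompositions. Your remark that the assignment of intra-layer edges is harmless does not help either: monotonicity only matters once a bound already holds, and here none does.

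\textbf{The missing idea.} Do the layering in the dual (Klein), or in the radial/vertex--face graph (Demaine--Hajiaghayi--Mohar).
\begin{itemize}
\item Run BFS in the dual $G^*$ of a connected plane graph $G$.
\item Put a primal edge into $E_i$ when its dual edge joins dual levels $j$ and $j+1$ with $j\equiv i \pmod k$. Intra-level dual edges may be assigned arbitrarily.
\item Contracting $E_i$ in $G$ corresponds to deleting $E_i^*$ from $G^*$.
\item Each component of $G^*-E_i^*$ lies within $k$ consecutive dual levels. It is therefore a subgraph of a planar graph of radius $O(k)$, obtained by collapsing the lower levels to an apex along the BFS tree, so its treewidth is $O(k)$.
\end{itemize}
This is precisely the radius argument you wanted, now applied legitimately to a deletion. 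One transfers back using that a connected plane graph and its dual have treewidth within an additive constant ($\mathrm{tw}(G^*)\le\mathrm{tw}(G)+1$), with some care for the loops and parallel edges that contraction creates.

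In the example above, the dual levels are the annuli of faces between consecutive cycles. Each class then contains entire cycles $C_\ell$, which is exactly what destroys the grid.
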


The smallest set $E_i$ has size at most $|E(G)|/k=\poly(1/\eps)\tsp(G,T)/k$. Thus by setting $k$ to be sufficiently large (but still polynomial in $1/\eps$), we can ensure that $|E_i|\le \eps\cdot\tsp(G,T)$. It can be shown that contracting a set $E$ of edges changes the optimum of Subset TSP by at most $2|E|$, hence it is sufficient to approximate $\tsp(G/E_i,T)$, where we know that $\tw(G/E_i)=O(k)=\poly(1/\eps)$. Such an instance can be solved in time $2^{\poly(1/\eps)\cdot n^{O(1)}}$ using standard dynamic-programming techniques.

\subsubsection{Solving Subset TSP in geometric intersection graphs}

\paragraph*{Cell partition} Intersection graphs can contain large cliques, making their structure significantly more complicated compared to planar graphs. However, it is well-known that in case of similarly sized fat objects, the intersection graph $G$ can be partitioned into cliques such that each clique is adjacent to a bounded number of other cliques. For the description of this partition, it is convenient to define similarly sized fat objects differently. We say that an intersection graph of simple polygons is $\alpha$-standard if (i) each component of the intersection of two polygons is a simple polygon, (ii) no three polygon edges meet at one point, (iii) each polygon contains a closed disk of radius 1 and has diameter at most $\alpha$. We denote by $\cI_\alpha$ the set of all $\alpha$-standard intersection graphs. It can be shown that an intersection graph of $\delta$-similarly sized $\beta$-fat polygons is in $\cI_\alpha$ for some $\alpha$ depending on $\beta$ and $\delta$ (see Theorem~\ref{thm:atalakitas}). From now on, we state the results for this class $\cI_\alpha$.

Formally, we define the \emph{cell partition} $\cP$ as follows. Consider the grid of points in the plane with integer coordinates. By item (iii) in the definition of $\alpha$-standard, every object has one such point in its interior. Let us assign each object to the lexicographically first grid point that is in its interior; let $\cP$ be the partition of the objects according to which point they are assigned to. Clearly, every class of $\cP$ induces a clique in $G$. We denote by $G_{\cP}$ the graph obtained by contracting each class of $\cP$ into a single vertex. Often, it is usefully to think of the structure of $G$ in terms of the graph $G_{\cP}$, as this graph better reveals its close-to-planar structure. Observe that two objects can intersect only if they are assigned to lattice points at distance at most $O(\alpha)$. Thus a vertex of $G_{\cP}$ can be adjacent only to vertices that represent points at distance $O(\alpha)$ in the plane; in particular $G_{\cP}$ has maximum degree $O(\alpha^2)$.

\paragraph*{(NEW) Sparsification} A clique in the partition $\cP$ can be large, but in Section~\ref{sec:sparse} we show that the instance can be preprocessed to make the size of these cliques bounded. First, if a clique contains many terminals, then (as they all have to be visited anyways), we may assume that they are mostly visited after each other. This observation allows us to decrease the number of terminals and assume that $|C\cap T|=O(1/\eps)$ for every $C\in \cP$. Let us select a distinguished vertex of each clique. A solution of Subset TSP consists of paths between terminals. If such a path is longer than $O(1/\eps)$, then we can modify it in such a way that after every $O(1/\eps)$ steps it visits a distinguished vertex, and this modification increases the cost only by a factor of $1+O(\eps)$. This means that we can assume that this $(1+\eps)$-approximate solution consists of paths of length $O(1/\eps)$ between terminals and distinguished vertices. Let us fix such a path for every pair. Consider a clique $C\in \cP$. There are $O(1/\eps^2)$ other cliques that are at distance $O(1/\eps)$, thus there are $O(1/\eps^3)$ terminals and distinguished vertices at distance $O(1/\eps)$. This means that there are at most $O(1/\eps^6)$ pairs of these vertices such that the fixed shortest path of length $O(1/\eps)$ between them go through $C$. As a shortest path can use only at most two vertices of a clique, this means that only $O(1/\eps^6)$ vertices of the clique are used by these paths and the rest of the vertices can be removed from the graph without affecting the $(1+\eps)$-approximate solution. Therefore, we can assume from now on that every clique has size $\poly(1/\eps)$.   

\paragraph*{(NEW) Wireframes and regions}
Our goal is again to find a subset spanner, as required by the first part of Theorem~\ref{thm:TSPalg}. In the planar case, the faces of the skeleton partitioned the instance into (internally) disjoint regions, and we could handle each region independently. The first problem we encounter in the case of intersection graphs is that the notion of faces, regions, being inside, being outside, etc.~are much less clear. To tackle this difficulty, we introduce a formalism that turns out to be essential for the robust and formal description of the algorithms (especially for Steiner Tree).

Given an intersection graph $G$, a \emph{wireframe} is a plane graph $\pl H$ where vertices are points in the plane and edges are polygonal curves. Each vertex $v$ of $\pl H$ has a parent object $p(v)$ of $G$ such that $v$ is located in the object $p(v)$. Moreover, for each edge $uv$ of $\pl H$, the corresponding polygonal curve can be split into a part inside $p(u)$ and a part inside $p(v)$ (see Definition~\ref{def:wireframe} for the precise definition). The graph $\pl H$ can have much fewer edges and vertices than $G$: it can be thought of as a planar subgraph of the intersection graph, with a certain form of planar representation. Note, however, that it is not necessarily a subgraph, as multiple points can have the same parent object.

\paragraph*{(NEW) Skeleton and subset spanner for intersection graphs.}
In Section~\ref{sec:tspspanner}, similarly to the planar algorithm, we first construct a 2-connected skeleton. In light of the sparsification result mentioned above, it is sufficient to bound by $\poly(1/\eps)\cdot \tsp(G,T)$ the number of \textit{cliques} of the cell partition $\cP$ that appear in the skeleton, as this also bounds the number of objects and hence proves Theorem~\ref{thm:TSPalg}(1).

Formally, now this skeleton is a wireframe of the original intersection graph $G$. As before, we want to divide the boundary of each face into North and South such that any subpath of them is (approximately) shorter than any path connecting the endpoints through objects ``inside'' the face. We use the following definition to remove any potential ambiguity of what it means that a path of objects is inside the face. 
Let $\re R$ be a region of the plane corresponding to a face of the skeleton. We define the intersection graph $G|_{\re R}$, where for every object $v\in V(G)$,  we introduce into $G|_{\re R}$ every component of $v\cap \re R$ as a separate object (so the objects in $G|_{\re R}$ are also connected polygons). Note that, due to these smaller objects created by the intersection with the boundary, the polygons in $G|_{\re R}$ are no longer similarly sized fat objects. 
When constructing the skeleton, we want to ensure that the boundary of each face is divided into two parts, North and South, such that between any two vertices $x,y$ on North, the subpath of North is not longer than the distance between $x$ and $y$ in $G|_{\re R}$ (and similarly for South with a factor of $1+\eps$).

We can build the skeleton with a similar iterative procedure as in the planar case. There is one technical challenge that appears when adding the shortcut and moving from a larger region $\re R$ to a smaller region $\re R'$: because we are cutting the objects at the boundary, the total number/complexity of the polygons in $G|_{\re R'}$ may be larger than in $G|_{\re R}$. Thus repeated applications of this step may lead to an exponential number of objects. We overcome this issue by an appropriate (nontrivial) choice of the polygonal curve that we are cutting along.

After constructing the skeleton, for every face bounding some region $\re R$, we use Theorem~\ref{lem:bipartitespan} on $G|_{\re R}$ to add connections between North and South (note that Theorem~\ref{lem:bipartitespan} works in arbitrary graphs). Does the resulting set of objects have the spanning property? Consider a path $P$ between two terminals and let $P'$ be a subpath that enters and leaves region $\re R$ by crossing North of $\re R$. Here is the point where the issue about crossing paths demonstrated in Figure~\ref{fig:unitdisk_vs_planar} comes into play: it is possible that the subpath $P'$ is not using any object of the path North. This means that if we want to use a subpath of North for rerouting a path, then the length is increased by one when we ``jump'' to North, and then once more when we ``jump'' back to the original path. 
Therefore, even though North has the required shortest path property, rerouting $P'$ on North may increase length by 2. If $P'$ is of length $\Omega(1/\eps)$, this increase by 2 is introduces only $1+O(\eps)$ multiplicative error, and hence tolerable. However, if $|P'|$ is $o(1/\eps)$, then the increase by 2 is significant.

Let us notice that if $|P'|=O(1/\eps)$, then $P'$ is fully contained in the distance-$O(1/\eps)$ neighborhood of the skeleton. Thus if we include in the subset spanner the distance-$O(1/\eps)$ neighborhood of the skeleton, then every such short path $P'$ is fully included in the subset spanner. Consequently, if the subpath $P'$ leaves the subset spanner, then it has to be of length $\Omega(1/\eps)$, hence it can be rerouted along North with a factor $1+O(\eps)$ increase. We can argue similarly for subpaths entering and leaving by crossing South, or for subpaths connecting North and South. Notice that there are $O(1/\eps^2)$ cliques in the cell partition that are at distance at most $O(1/\eps)$ from an object.
Thus adding the distance-$O(1/\eps)$ neighborhood of the skeleton increases the number of cells only by a factor of $\poly(1/\eps)$ and hence we can ensure that the obtained subset spanner is the union of $\poly(1/\eps)\cdot \tsp(G,T)$ cliques of the cell partition, as we wanted. Moreover, our earlier sparsification ensured that each clique contains $\poly(1/\eps)$ objects, hence in fact the subset spanner consists of $\poly(1/\eps)\cdot \tsp(G,T)$ objects, proving the first part of Theorem~\ref{thm:TSPalg}.

\paragraph*{(NEW) Contraction decomposition via Lipschitz embedding}

Having constructed the subset spanner, we can assume that $|V(G)|\le \poly(1/\eps)\tsp(G,T)$.
To reduce the problem to an instance with bounded treewidth, we would like to use an analog of Theorem~\ref{thm:contractdecomp_planar} for intersection graphs. Let us observe that the intersection graph itself can be a clique, thus it is impossible to reduce treewidth to a constant by contracting a small part of the graph.  However, by the sparsification procedure described above, we can assume that there is a cell partition where every clique has size $\poly(1/\eps)$. Therefore, in Section~\ref{sec:contract}, we prove the analog of Theorem~\ref{thm:contractdecomp_planar} under a bound on the clique size. As a minor notational difference, instead of edge contractions, it will be convenient to consider contracting vertex sets. Given a set $X$ of vertices, we let $G\vcon X=G/E(G[X])$ to be the graph obtained by contracting each component of $G[X]$ into a single vertex. The following statement is proved by Theorem~\ref{thm:contractdecomp}(ii):
\begin{theorem}\label{thm:introcontract} 
 Let $G\in \cI_\alpha$ be an intersection graph of similarly sized fat objects with cell partition $\cP$ where each clique $C\in \cP$ has size at most $|C|\leq \ell$.
 Given an integer $k$, we can compute in polynomial time $k$ sets $X_1,\dots, X_k$ of vertices such that $\sum_{i=1}^k|X_i|=O(|\cP|\ell)$ and $\tw(G\vcon X_i)=O(k \ell)$.
\end{theorem}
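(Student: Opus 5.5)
The plan is to go through the cell graph $G_{\cP}$ and a planar graph into which it embeds in a Lipschitz way, apply Theorem~\ref{thm:contractdecomp_planar} to that planar graph, and pull the resulting edge partition back to vertex sets of $G$.

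\textbf{Planar host graph.} Each class $C\in\cP$ is assigned to a lattice point $g_C$. Two classes can be adjacent in $G_{\cP}$ only if their lattice points are at distance $O(\alpha)$. Let $\Gamma$ be the grid graph on the lattice points used by $\cP$, together with the lattice points on shortest grid paths between any two such points at distance $O(\alpha)$; to keep $|V(\Gamma)|=O(|\cP|)$, add only the $O(\alpha^2)$ neighbourhood of each used point. Then $\Gamma$ is planar, and every edge of $G_{\cP}$ maps to a path of length $O(\alpha)$ in $\Gamma$. This is the Lipschitz embedding: the map $C\mapsto g_C$ stretches distances by $O(\alpha)$, and each grid vertex lies on $O(\alpha^{2})$ image paths. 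The Lipschitz embedding results of Section~\ref{sec:lipschitz} presumably provide exactly such a map, and I would invoke them directly.

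\textbf{Decomposing and lifting.} Apply Theorem~\ref{thm:contractdecomp_planar} to $\Gamma$ with parameter $k$, obtaining $E_1,\dots,E_k$ with $\tw(\Gamma/E_i)=O(k)$. Call a grid vertex \emph{touched} by $E_i$ if it is an endpoint of an edge of $E_i$. For each $i$, let $X_i$ be the union of all cliques $C\in\cP$ such that some grid vertex within grid distance $O(\alpha)$ of $g_C$ is touched by $E_i$. Each grid vertex has degree at most $4$, so it is touched by at most $4$ of the sets $E_i$. Each cell $C$ therefore lies in $O(\alpha^2)$ of the sets $X_i$, which gives $\sum_i|X_i|=O(\alpha^2|\cP|\ell)=O(|\cP|\ell)$ since $\alpha$ is a constant.

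\textbf{Treewidth bound (main obstacle).} Take a tree decomposition of $\Gamma/E_i$ of width $O(k)$. Replace each grid vertex of a bag by all objects in the $O(\alpha^2)$ cells whose lattice points are within distance $O(\alpha)$ of it, where contracted vertices of $\Gamma/E_i$ correspond to contracted components of $G[X_i]$. The resulting bags have size $O(k\ell)$ because $\alpha$ is constant. Adjacency of two cells in $G$ is witnessed by a short grid path, so every edge of $G\vcon X_i$ lies in some bag after this thickening, and the vertex-connectivity condition is inherited through the connected neighbourhoods.

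The delicate point is that a component of $G[X_i]$ must correspond to a connected set of contracted vertices of $\Gamma/E_i$. Two cells can be adjacent in $G$ and both lie in $X_i$ because of \emph{different} components of $E_i$ that are far apart in $\Gamma$. Contracting the union in $G$ then merges $\Gamma$-vertices that were not merged, and this could destroy the treewidth bound. To repair this, I would enlarge each $E_i$ before lifting: for every pair of touched vertices within distance $O(\alpha)$, add a connecting grid path. This is where Lipschitz-ness enters, since it bounds the added edges. The enlarged sets are no longer a partition, but each edge still lies in $O(\alpha^{O(1)})=O(1)$ of them, which preserves the sum bound. Contracting a superset of edges can only decrease treewidth in a minor-monotone way.

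With this enlargement, every component of $G[X_i]$ maps into one component of the enlarged $E_i$. The map from $G\vcon X_i$ to $(\Gamma/E_i')$, with bags thickened by $O(\alpha)$-neighbourhoods, is then well defined, and the width is $O(k\ell)$. The whole construction uses polynomial-time steps only.
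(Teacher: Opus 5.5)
Your treewidth lift has a genuine gap, and it lies in the direction you did not address.

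The lattice grid $\Gamma$ only gives the upper bound $\dist_\Gamma(g_C,g_{C'})=O(\alpha)\cdot\dist_{G_\cP}(C,C')$. It gives no lower bound: lattice points at grid distance $1$ can carry cells that are far apart in $G_\cP$, or that lie in different components of $G$. Your enlargement of $E_i$ ensures that each component of $G[X_i]$ lands inside a single component of $E_i'$. A bounded bag size needs the converse. When you replace a contracted vertex $z$ of $\Gamma/E_i'$ in a bag by its counterparts in $G\vcon X_i$, you must insert every component of $G[X_i]$ lying near the $\Gamma$-component $z$, and there can be unboundedly many of them.

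In the paper this is exactly where the lower Lipschitz inequality enters, in the first claim of the proof of \Cref{lem:twcompare}. Every edge of $H$ is realized by a path of length at most $c$ in $G_\cP$ that stays inside the $c^2$-neighbourhood. Hence everything merged in $H\vcon X$ is also merged in $G\vcon X^{c^2}$, and the map $\phi$ is well defined. A planar host satisfying both inequalities cannot be a grid. This is why the paper builds $H$ from edge curves covered by pairs of intersecting objects (\Cref{thm:getplanarclique}), and why its overview warns that grid-based approximations of $G_\cP$ fail.

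The failure is real, not just missing bookkeeping. Build $G$ as follows:
\begin{itemize}
\item a $45^\circ$-rotated subdivided $N\times N$ grid $R$ made of unit disks, with line spacing $L\gg\alpha$;
\item filler disks in the faces of $R$, disjoint from $R$ but at distance $O(1)$ from it, so that your $\Gamma$ is a full square region of the lattice.
\end{itemize}
Take $k\gg\alpha$, and let $E_i$ consist of the concentric square cycles at $L_\infty$-radius $r\equiv i\pmod k$ around the centre, together with their outgoing radial edges. This is a legitimate output of \Cref{thm:contractdecomp_planar}: it is what dual BFS from the outer face produces. Moreover $\Gamma/E_i$ is a chain of width-$k$ annuli glued at single vertices, so $\tw(\Gamma/E_i)=O(k)$.

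Your $X_i$ is then a union of square annuli of width $O(\alpha)$. Each annulus meets $R$ only in short transversal segments, or in small stars around single nodes, and the filler lies in other components. Contracting these pieces leaves a subdivision of the $N\times N$ grid as a subgraph of $G\vcon X_i$. Hence $\tw(G\vcon X_i)\ge N$, which is unbounded for fixed $k$ and $\ell$.
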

We remark that in case of unit disk graphs, there are now known contraction decomposition theorems~\cite{UDGcontractdecompSODA19,TrueContractDecomp} that can handle cliques, but due to our more general objects these results are insufficient for our purposes. (In fact, they are insufficient for our Steiner tree algorithm even if the input graph is a unit disk graph, as we need to modify our starting objects via region restrictions and other changes.)

With Theorem~\ref{thm:introcontract} at hand and using the bounds on $|\cP|$ and on the maximum clique size, we can solve the instance in time $2^{\poly(1/\eps)}\cdot n^{O(1)}$ using a standard dynamic programming algorithm solving the problem on bounded-treewidth graphs. 

It is natural to approach the proof of Theorem~\ref{thm:introcontract} via the graph $G_{\cP}$: we would like to find sets of vertices $X'_i$ in $G_{\cP}$ such that $G_{\cP}\vcon X'_i$ has treewidth $O(k)$. For each $X'_i$, there is a set $X_i$ of at most $|X'_i|\ell$ vertices in $G$, corresponding to the cliques in $X'_i$. Furthermore, if we replace each vertex of $G_{\cP}\vcon X'_i$ by a clique of size $\ell$, then we get a supergraph of $G\vcon X_i$. Thus the treewidth bound of $O(k)$ on $G_{\cP}\vcon X'_i$ would imply the required bound of $O(k\ell)$ on $G\vcon X_i$.

How can we find these sets $X'_i$ in $G_{\cP}$? As $G_{\cP}$ is somewhat close to being planar, we can try to approximate it with a genuinely planar graph $H$ and then invoke Theorem~\ref{thm:contractdecomp_planar} on $H$. However, the requirements of this approximation is fairly delicate and naive ideas (such as representing $G_{\cP}$ by an induced subgraph of a sufficiently fine grid) fail. The reason is that $H$ has to strike a balance between two conflicting requirements:
\begin{itemize}
\item $H$ should not have too many edges: otherwise, contracting a small set in $H$ does not correspond to the contraction of a small set in $G_{\cP}$.
  \item $H$ should not have too few edges: otherwise, small treewidth after contracting a set in $H$ does not mean that contracting the corresponding set in $G_{\cP}$ results in small treewidth.
  \end{itemize}
It turns out that the notion of Lipschitz embedding satisfies these requirements.   
  We say that a mapping $f:V(G)\rightarrow V(H)$ is
\emph{$c$-Lipschitz} for some constant $c\geq 1$ if for any pair of vertices $x,y\in V(G)$
we have that
\begin{equation*}
\frac{1}{c}\cdot \dist_G(x,y)\leq \dist_H(f(x),f(y))\leq c\cdot \dist_G(x,y).
\end{equation*}
For unit disk intersection graphs, a construction of a $c$-Lipschitz embedding into a planar graph is known for some constant $c$: in fact, they admit planar $c$-spanners \cite{CCV10planarhopspan,Biniaz2020}. In Section~\ref{sec:lipschitz}, we construct $c$-Lipschitz embeddings for the intersection graphs of connected similarly sized fat objects. The fact that we are handling non-convex objects makes the construction of the embedding significantly more challenging and requires a different strategy. (See \Cref{thm:getplanarclique} for a more detailed theorem statement.)
\begin{theorem}\label{thm:introgetplanarclique}
For any $\alpha>0$, there exists a positive constant $c=O(\alpha^{32})$ such that
for any $G\in \cI_\alpha$ given by its representation the following hold. Let
$\cP$ be the cell partition of $G$. Then in polynomial time one can construct
a plane graph $H$ with the same vertex set as $G_{\cP}$ such that the identity mapping  is $c$-Lipschitz from $G_\cP$ to~$H$.
\end{theorem}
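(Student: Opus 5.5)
Since $H$ must be a plane graph on the vertex set of $G_\cP$ (one vertex per occupied lattice point) such that the identity is $c$-Lipschitz from $G_\cP$ to $H$, we must check two inequalities. First, every edge $uv$ of $H$ joins two cells at $G_\cP$-distance $O(1)$; this gives $\dist_{G_\cP}\le c\,\dist_H$. Second, every edge $uv$ of $G_\cP$ is realized by an $H$-path of length $O(1)$; this gives $\dist_H\le c\,\dist_{G_\cP}$. Both reduce to local statements about single edges, so the plan is a local construction.

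\textbf{Building a planar ``wire'' structure.} For each cell $C\in\cP$ with lattice point $z_C$, every object $p\in C$ contains $z_C$ in its interior. Since $p$ is a connected polygon, for each neighbouring cell $D$ with some $q\in D$, $p\cap q\neq\emptyset$, we choose a point $w\in p\cap q$ and a polygonal arc from $z_C$ to $w$ inside $p$, and an arc from $w$ to $z_D$ inside $q$. The union over all adjacent pairs is a drawing of $G_\cP$ with curves of length confined to the disk of radius $O(\alpha)$ around each lattice point. It is, however, not planar.

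\textbf{Planarization.} We add a vertex at every crossing point and obtain a plane graph $\pl H_0$. Each crossing point lies in the union of two objects belonging to cells whose lattice points are within distance $O(\alpha)$. We then contract $\pl H_0$ back onto the lattice vertices. Because arcs are confined to $O(\alpha)$-neighbourhoods, each crossing point $x$ can be assigned to a cell whose object contains $x$. The cells of $\pl H_0$-neighbours are then adjacent or at distance $O(1)$ in $G_\cP$, because the arcs through $x$ belong to objects of cells that mutually intersect there.

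To obtain a plane graph on exactly $V(G_\cP)$, I would instead use a Voronoi-type assignment. Assign each crossing or arc vertex to a lattice point of a cell containing it, chosen so that every assigned class is connected in $\pl H_0$; for example, the class of the first arc, propagated by a BFS from the lattice points. Contracting these connected classes preserves planarity, since contracting connected subgraphs of a plane graph yields a planar minor. We then delete loops and parallel edges.

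\textbf{Verifying both Lipschitz bounds.} After the contraction, an edge between classes of $C$ and $D$ arises from a point lying in objects from $C$ and from $D$, or from neighbouring points along a single object. In either case $\dist_{G_\cP}(C,D)=O(1)$. Conversely, each original $G_\cP$ edge $CD$ had an arc from $z_C$ to $z_D$. That arc is subdivided into pieces passing through classes of cells whose lattice points lie within distance $O(\alpha)$ of the arc. The number of such classes it can pass through is bounded by the number of lattice points in an $O(\alpha)$-disk, which is $O(\alpha^2)$, once we ensure the arc does not revisit classes unboundedly. This bounds the $H$-distance by $\poly(\alpha)$.

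\textbf{Main obstacle.} The main difficulty is the connectivity of the classes in the contraction step for non-convex objects. An object may wind, so a crossing point near $z_D$ could be reachable only through objects of far-away cells. I would handle this by routing every arc inside a single cell's objects of diameter at most $\alpha$, which confines everything to an $O(\alpha)$-ball. I would also use the BFS assignment restricted to each arc's own object, so that a point on an arc inside $p\in C$ is assigned either to $C$ or to a cell whose object meets $p$ at that point. The repeated nesting of these $O(\alpha)$ and $O(\alpha^2)$ losses is presumably what produces the large $O(\alpha^{32})$ constant.
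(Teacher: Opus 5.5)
Your architecture (planarize every crossing into $\pl H_0$, then contract one connected class $K_C\ni z_C$ per cell) is viable. Both of your Lipschitz checks would go through \emph{provided} every vertex of $K_C$ lies in $U_X:=\bigcup X$ for some cell $X$ within $G_\cP$-distance $\poly(\alpha)$ of $C$. The gap is that a partition which is simultaneously connected and local is the entire content of the theorem, and the proposal only asserts that one exists.

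Your first mechanism, BFS from all lattice points (graph-Voronoi cells in $\pl H_0$), gives connectivity but not locality. In $\cI_\alpha$ objects may have thin winding tentacles, so two arcs of neighbouring cells can cross $M$ times with $M$ unbounded in terms of $\alpha$. A crossing point can then be at enormous $\pl H_0$-distance from the lattice points of the cells containing it, and may be reached first from a geometrically distant lattice point along a chain of lightly crossed arcs. When that happens, $H$ acquires edges between cells far apart in $G_\cP$, which breaks the lower bound. The arc from $z_C$ to $z_D$ may also run through classes of distant cells, which breaks your $O(\alpha^2)$ count, since that count presupposes every class met on the arc belongs to a nearby cell. (Repeated visits, by contrast, are harmless: a walk through $N$ distinct vertices contains a path of length less than $N$.)

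Your fallback, assigning each point to a cell whose object contains it, has the opposite defect. Giving a crossing $x$ of the stars of $C$ and $E$ to $E$ disconnects the part of $C$'s star beyond $x$ from $z_C$. You can hand that part to $E$, but it crosses further stars, which are then cut and handed on in turn. With no control on the processing order these hand-offs cascade along chains of cells, so owners drift arbitrarily far.

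The missing ingredient is something that caps this cascade. The paper caps it at the level of crossings. It splits the edge curves into $O(\alpha^4)$ classes of pairwise far-apart curves and inserts the classes one at a time. Whenever a new curve meets an already placed curve more than once, it detours along that curve's current path between the outermost meeting points. Each pair of curves then leaves at most two vertices, and every $G_\cP$-edge becomes a path of $O(\alpha^{16})$ edges in a plane graph with extra vertices (\Cref{thm:getplanarextraverts}). Only then is Voronoi contraction safe (\Cref{lem:contractingextraverts}), which squares the constant to $O(\alpha^{32})$.

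Your framework can also be completed without any rerouting, by bounding the depth of the hand-offs instead. Let $S_C$ be the star of half-arcs you draw from $z_C$.
\begin{enumerate}
\item Colour the cells by $z_C \bmod t$ (coordinatewise) for an integer $t>2\alpha$. This uses $O(\alpha^2)$ colours, and same-coloured cells have disjoint $U_C$, hence vertex-disjoint stars.
\item Process the colours in order. On each star, give the component of $S_C$ minus the already-assigned vertices that contains $z_C$ to $C$. Give every other component to the owner of an adjacent already-assigned vertex.
\end{enumerate}
Classes stay connected by induction. An already-assigned vertex of $S_C$ lies on the star of a $G_\cP$-neighbour of an earlier colour, so the owner's $G_\cP$-distance from $C$ grows by at most one per colour round and stays $O(\alpha^2)$. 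This gives the lower Lipschitz bound with $O(\alpha^2)$. It gives the upper bound with $O(\alpha^6)$, by counting the cells whose lattice points lie within Euclidean distance $O(\alpha^3)$ of $z_C$. As written, however, the proposal names this obstacle without resolving it, so the construction of $H$ is not established.
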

Very briefly, we want to represent edges of $G_{\cP}$ by polygonal curves that are covered by two adjacent objects. Then we introduce a new vertex at each point where two such curves intersect each other. If the objects are convex, then the curves can be chosen to be simple and it is easy to bound the number of these intersections. However, if the objects are not convex, then already two curves can have lots of intersections. In case two curves intersect multiple times, we can try to reroute one curve on the other to reduce the number of intersections, at the cost of making some curves longer. However, doing this iteratively can make curves arbitrarily longer. In order to avoid this, we split the curves into a constant number of nonconflicting sets, perform many reroutings in parallel, and then repeat this in a constant number of rounds. Theorem~\ref{thm:introgetplanarclique} requires that $H$ and $G_{\cP}$ have the same vertex set, but our construction introduced many additional vertices at the intersections. To get rid of these intersection vertices, we can contract the Voronoi regions of the original vertices, retaining the Lipschitz propeperty and ensuring that the planar graph $H$ has the same vertex set as $G_{\cP}$.

The constructed planar graph $H$ will be used as a proxy for $G_{\cP}$ when trying to reduce treewidth using contractions. We invoke Theorem~\ref{thm:contractdecomp_planar} to find a small set $X$ in $H$ whose contraction reduces treewidth. To complete the proof of Theorem~\ref{thm:introcontract}, we need to argue that an analogous contraction reduces treewidth in $G_{\cP}$ as well. This requires a delicate argument that modifies a tree decomposition of a contraction of $H$ to obtain a tree decomposition of a contraction of $G_{\cP}$. In particular, it is not sufficient to contract $X$ in $G_{\cP}$, but we need to contract a small neighborhood of $X$.

\begin{restatable}{lemma}{raisecontract}
\label{lem:twcompare}
Let $G$ and $H$ be graphs with the same vertex set $V$,
so that the identity map of $V$ is $c$-Lipschitz from $G$ to $H$. Let $X\subset V$ be
arbitrary, and let $X^r$ be the set of vertices whose endpoints have
$H$-distance at most $r$ from some vertex in $X$, i.e., we set $X^r=N_H
(X,r)$. Assume moreover that $H$ has degree at most $\Delta$. Then $\tw(G\vcon X^{c^2})= O(\Delta^c) \cdot \tw(H\vcon X)$.
\end{restatable}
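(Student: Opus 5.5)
The plan is to take an optimal tree decomposition $(\mathcal T,\{B_t\})$ of $H\vcon X$ and convert it into a tree decomposition of $G\vcon X^{c^2}$ on the same tree. Each bag grows by a factor $O(\Delta^{c})$.

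\textbf{Setup.} Vertices of $H\vcon X$ are of two kinds. Original vertices are those in $V\setminus X$. Contracted vertices correspond to components $K$ of $H[X]$. Vertices of $G\vcon X^{c^2}$ are likewise the vertices of $V\setminus X^{c^2}$ together with the components $L$ of $G[X^{c^2}]$.

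\textbf{The map $\phi$.} Assign to every vertex $w$ of $G\vcon X^{c^2}$ a set $\phi(w)$ of vertices of $H\vcon X$, as follows.
\begin{itemize}
\item For $v\in V\setminus X^{c^2}$, let $\phi(v)$ be the images in $H\vcon X$ of all vertices in the $H$-ball $N_H(v,c)$.
\item For a component $L$ of $G[X^{c^2}]$, let $\phi(L)$ be the images of $N_H(L,c)$.
\end{itemize}
The new bag at node $t$ is $B'_t=\{w : \phi(w)\cap B_t\neq\emptyset\}$.

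\textbf{Step 1: each $\phi(w)$ is connected in $H\vcon X$.} For a ball around $v$ this is immediate. For $L$, consecutive vertices of $L$ are $G$-adjacent, so they are at $H$-distance at most $c$, and the $H$-paths between them stay inside $N_H(L,c)$. Connectedness gives the standard consequence that the nodes $t$ with $w\in B'_t$ form a subtree.

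\textbf{Step 2: edge coverage.} Let $w w'$ be an edge of $G\vcon X^{c^2}$. It comes from some $G$-edge $xy$ with $x\in w$ and $y\in w'$. By the Lipschitz property, $\dist_H(x,y)\le c$, so $y\in N_H(x,c)$. Hence $\phi(w)\cap\phi(w')\neq\emptyset$, and any bag containing a common element contains both $w$ and $w'$.

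\textbf{Step 3: bag size.} This is the main obstacle. Fix $z\in B_t$ and count the $w$ with $z\in\phi(w)$.
\begin{itemize}
\item If $z$ is an original vertex $u\notin X$, the original vertices $v$ involved lie in $N_H(u,c)$. There are $O(\Delta^c)$ of them.
\item If $z$ is a contracted component $K$ of $H[X]$, the ball around $K$ could be huge. This is exactly why the inflation to $X^{c^2}$ is needed. Any $v\notin X^{c^2}$ has $H$-distance greater than $c^2\ge c$ from $X$, so its $c$-ball never meets $K$. Thus no uninflated vertex maps onto a contracted vertex.
\end{itemize}
It remains to count the components $L$ of $G[X^{c^2}]$ whose $\phi(L)$ touches a given $z$. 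I claim every component $K$ of $H[X]$ lies inside a single component $L$. Indeed, an $H$-edge $ab$ with $a,b\in X$ has $\dist_G(a,b)\le c$ by the Lipschitz property. A shortest $G$-path from $a$ to $b$ has vertices $q$ with $\dist_G(a,q)\le c$, hence $\dist_H(a,q)\le c^2$. So the whole path lies in $X^{c^2}$, and $a$ and $b$ are in the same component of $G[X^{c^2}]$. Hence the number of $L$ hit by $z$ is at most the number of original vertices within $H$-distance $c$ of the bag element, plus one. This is $O(\Delta^c)$.

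\textbf{Conclusion.} Combining the three steps gives $|B'_t|\le O(\Delta^{c})\,|B_t|$, which is the claimed bound $\tw(G\vcon X^{c^2})= O(\Delta^c) \cdot \tw(H\vcon X)$. The part I would double-check is the case $z=K$ in Step 3, where the containment of $K$ in a single $L$ ensures only $O(1)$ components are charged per contracted vertex.
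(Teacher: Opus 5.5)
Your construction (same tree, new bag $B'_t=\{w:\phi(w)\cap B_t\neq\emptyset\}$) and Steps~1 and~2 are fine. Connectivity of each $\phi(w)$ gives the subtree property and nonempty bags, and a $G$-edge $xy$ puts the image of $y$ in $\phi(w)\cap\phi(w')$. The gap is in Step~3, in the case of a contracted bag element $z=K$.

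The vertices of $G\vcon X^{c^2}$ charged to $K$ are the components $L$ of $G[X^{c^2}]$ with $N_H(L,c)\cap K\neq\emptyset$, that is, all components meeting $N_H(K,c)$. Knowing that $K$ lies inside a single component $L_0$ only shows that $L_0$ is one of them. It does not exclude other components passing within $H$-distance $c$ of $K$. Your fallback count, ``the number of vertices within $H$-distance $c$ of the bag element, plus one,'' is not $O(\Delta^c)$ here. The reason is that $K$ is a contracted vertex: its size, and hence $|N_H(K,c)|$, is not bounded in terms of $\Delta$ and $c$. So $|B'_t|\le O(\Delta^c)|B_t|$ does not follow from what you wrote. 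Likewise, your closing remark that containment ``ensures only $O(1)$ components are charged'' does not follow.

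What you need is the stronger statement $N_H(K,c)\subseteq L_0$. It follows from your own path-replacement argument pushed one step further out.
\begin{itemize}
\item Let $\dist_H(y,K)\le c$, and take an $H$-path $k=p_0,p_1,\dots,p_\ell=y$ with $k\in K$ and $\ell\le c$. Then $\dist_H(p_i,X)\le c$ for every $i$.
\item Replace each edge $p_ip_{i+1}$ by a shortest $G$-path, which has length at most $c$. Each of its vertices is within $G$-distance $\lfloor c/2\rfloor$ of $p_i$ or $p_{i+1}$, hence within $H$-distance $c\lfloor c/2\rfloor$ of it.
\item So every such vertex is within $H$-distance $c+c\lfloor c/2\rfloor\le c^2$ of $X$, using $1+\lfloor c/2\rfloor\le c$ for $c\ge 1$.
\item The resulting walk lies in $G[X^{c^2}]$ and joins $k$ to $y$, so $y\in L_0$.
\end{itemize}
Hence exactly one vertex of $G\vcon X^{c^2}$ is charged to each contracted element of $B_t$, and your Step~3 goes through.

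With this fact, your bags coincide with the paper's. The paper inserts the images of $N_H(u,c)$ for an original $u\in B_t$, and only the single component containing $K$ for a contracted $K\in B_t$. Its bag bound is therefore immediate, but the same proximity fact is then needed when it verifies the subtree property for contracted vertices. Your dual formulation gets the subtree property for free and needs the fact in the counting instead.
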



\subsection{EPTAS for Steiner Tree}

\subsubsection{Overview of the techniques of Borradaile et al.~\cite{BorradaileKM09} in planar graphs}
Let us review the EPTAS of Borradaile, Klein, and Mathieu~\cite{BorradaileKM09} for Steiner Tree on planar graphs and then explain what new algorithmic components are needed in case of intersection graphs (for example, techniques to obtain noncrossing structures of various types, which can be taken for granted in the case of planar graphs). The algorithm for planar graphs starts with the same idea as in the case of Subset TSP: constructing a \emph{Steiner-tree spanner,}\footnote{The term
\textit{banyan} is a different name for the same object~\cite{DBLP:conf/stoc/RaoS98,DBLP:conf/stoc/BartalG21}.} that is, a subgraph that admits a $(1+\eps)$-optimal Steiner tree of the terminals in $T$. However, the size of the Steiner-tree spanner obtained by the algorithm can be bounded only by $2^{\poly(1/\eps)}\cdot \smt(G,T)$, hence the contraction decomposition can reduce the treewidth only to $2^{\poly(1/\eps)}$, resulting in $2^{2^{\poly(1/\eps)}}\cdot n^{O(1)}$ running time after performing contraction decomposition and solving the resulting bounded-treewidth instance.

\paragraph*{Columns and the mortar graphs}
The algorithm for Steiner Tree starts similarly to the Subset TSP algorithm: with the construction of the skeleton of size $\poly(1/\eps)\smt(G,T)$. But then each face $F$ is further partitioned, the following way. Consider a face of the skeleton, with its boundary partitioned into South and North. Let $s_0$ be an endpoint of South and then for $i=1,2,\dots$, vertex $s_i$ is defined to be the first vertex $s_i$ on South such that the distance from $s_i$ to North inside the face $F$ is at most $|\textup{South}[s_{i-1},s_i]|/\eps$. Let column $C_i$ be a shortest path from $s_i$ to North. In a planar graph, we can assume that the columns do not cross. It is easy to observe that the total length of the columns in a face is at most $|\textup{South}|/\eps$. Thus the total length of all columns in all faces of the skeleton is at most $O(1/\eps)$ times the total size of the skeleton, hence it is also $\poly(1/\eps)\cdot \smt(G,T)$.

We fix a constant $\kappa=\poly(1/\eps)$, and in each face of the skeleton we extend the skeleton by every $\kappa$th column. These columns partition each face into multiple \emph{bricks,} and the columns form the East and West sides of these bricks (see Figure~\ref{fig:brick_in_skeleton}). The skeleton extended by these columns is called the \emph{mortar graph.}
\begin{figure}
  \centering
  \includegraphics{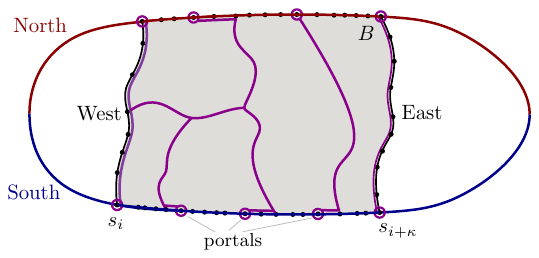}
  \caption{The brick $B$ inside a skeleton face, bounded by columns starting at $s_i$ and $s_{i+\kappa}$. The solution inside $B$ (purple forest) includes $\Ea(B)$ and $\We(B)$, and it is connected to the nearest portals (purple circled nodes) along $\No(B)$ and $\So(B)$.}\label{fig:brick_in_skeleton}
\label{fig:brickforest}
\end{figure}

More precisely, we select an $i\in [\kappa]$ and, starting with the $i$th column, we include every $\kappa$th column in the mortar graph . By an averaging argument, there is an $i$ such that the total length of the added columns is at most a $1/\kappa$ fraction of the total length of all columns. If $\kappa$ is sufficiently large, then the total length is $O(\eps \cdot \smt(G,T))$. Therefore, we can afford to ``buy'' these columns and can be assumed to be part of the solution. This can be expressed by contracting them, thus they will not play much role in the later arguments. The purpose of extending the skeleton this way is that now we can assume that the sequence $s_1$, $s_2$, $\dots$ has length at most $\kappa$ in each brick. This structural property is heavily exploited in the next step.

\paragraph*{Simplifying a forest}
We can think of the solution as having two parts: a subset of the mortar graph and a forest inside each face of the mortar graph. If a tree in a brick is attached only to the North or South side, then it can be replaced by a subpath of that side, increasing cost only by at most a $1+\eps$ factor. But, in general, a tree can connect several vertices of the North side with several vertices of the South side, and then it is not clear if the shortest path properties of North and South can be exploited for any simplification. Nevertheless, Borradaile, Klein, and Mathieu~\cite[Thm.~10.7]{BorradaileKM09} showed by a careful simplification argument that parts of the tree can be replaced by subpaths of the boundary in such a way that the size of the solution increases only by at most a factor of $(1+\eps)$ and the forest in each brick is joined to the boundary of the brick only at $\poly(1/\eps)$ vertices.

\paragraph*{Portals}
We want to ensure that the inside of a brick has limited and well controlled interaction with the boundary, that is, the solution enters the inside of the brick via a limited number of portals. Let us consider a brick $B$ an let us designate $\theta=\poly(1/\eps)$ portals on the boundary of $B$ at roughly equal distance from each other. That is, if $\ell(B)$ is the length of the boundary of $B$, then the portals are at distance roughly $\ell(B)/\theta$ from each other. Whenever the solution has a tree inside a brick that is attached to a vertex $v$ of the boundary, then let us extend the solution by connecting $v$ to the closest portal. For each such vertex $v$, this modification increases the cost by at most $\ell(B)/\theta$. As we have assumed that the forest inside $B$ is attached to the boundary at $\poly(1/\eps)$ vertices, the total increase incurred by brick $B$ is at most $\ell(B)\cdot\poly(1/\eps)/\theta$. As the total size of the mortar graph created so far was $\poly(1/\eps)\cdot\smt(G,T)$, the total increase of length is $\poly(1/\eps)\cdot\smt(G,T)/\theta$ when summed over every brick $B$. If $\theta=\poly(1/\eps)$ is sufficiently large, then this increase is $\eps\cdot\smt(G,T)$. Thus after some slight modification of the graph $G$ (as demonstrated in Figure~\ref{fig:brick}), we can assume that the inside of each brick is connected only to the $\theta$ portals on the boundary.

\begin{figure}
  \centering
  \includegraphics{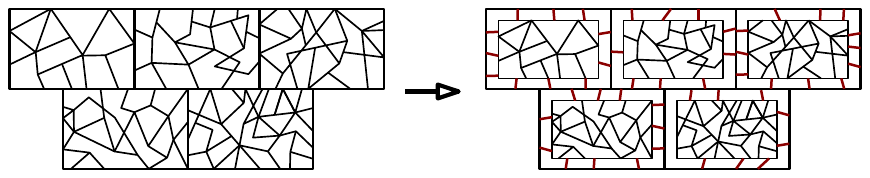}
  \caption{The graph modification that restricts interactions between bricks to the portals.}\label{fig:brick}
\end{figure}

\paragraph*{Constructing the Steiner-tree spanner}
Inside each brick $B$, the only role of the solution is to connect certain subsets of the $\theta$ portals on the boundary of $B$. We need to ensure that the this can be done as efficiently in the Steiner-tree spanner as in the original graph. For this purpose, for every subset of the $\theta$ portals of $B$, we compute an optimal Steiner tree in $B$ and add this tree to the Steiner-tree spanner. Note that an optimal Steiner tree with at most $\theta$ terminals can be computed in time $2^{O(\theta)}\cdot n^{O(1)}=2^{\poly(1/\eps)}\cdot n^{O(1)}$ in any graph. Furthermore, in our case we can exploit that each brick is planar and the terminals are on the boundary, in which case the problem is polynomial-time solvable \cite{DBLP:journals/mor/EricksonMV87}. Clearly, the size of each such tree is at most the length $\ell(B)$ of the boundary of $B$, as the boundary clearly connects all these terminals. Thus the total length we add for each brick $B$ is at most $2^{\theta}\cdot\ell(B)=2^{\poly(1/\eps)}\cdot\ell(B)$. Summing for every brick $B$, this increases the total length of the mortar graph by a factor of $2^{\poly(1/\eps)}$, resulting in a Steiner-tree spanner of length $2^{\poly(1/\eps)}\smt(G,T)$. 

\subsubsection{Constructing a Steiner tree spanner in geometric intersection graphs}

\paragraph*{(NEW) Noncrossing columns via Lipschitz embedding}
Section~\ref{sec:mortar} is devoted to constructing an analog of the mortar graph in the intersection graph setting.
In the case of intersection graphs, if we define the columns as shortest paths from each $s_i$ to North, then these paths may cross and therefore they are not suitable for defining the East/West sides of the brick. One could try to uncross these paths, but in intersection graphs jumping from one path to the other can increase the length, and it is hard to control the total increase of length after iteratively uncrossing paths. To overcome this issue in a robust way, we use a Lipschitz embedding of the union of the columns into a planar graph (more precisely, into a wireframe of the union of the columns). Using this embedding, we can find shortest paths that correspond to noncrossing curves in the wireframe, hence they can be used as the East and West sides of the brick. The Lipschitz property of the embedding ensures that the columns defined this way have the required length and distance properties.

There is a technical complication due to the fact that we have Lipschitz-embedding results only for similarly sized fat objects. However, the objects in $G|_{\re R}$ can be non-fat and have smaller diameter, as $G|_{\re R}$ contains objects intersected with the boundary of $\re R$. We solve this issue by considering the union of the first few objects in each column as a single object, ensuring that at least one of these objects is disjoint from the boundary of $\re R$ (and similarly for the last few objects). Note that the union of a constant number of similarly sized fat objects changes diameter and fatness only by a constant factor, thus we can still consider this set of objects as similarly sized fat. A further technical complication that appears here is that the created objects can have holes. However, we can modify them by ``puncturing the holes'' in a way that the intersection graph does not change.

Another technical difference compared to the case of planar graphs is that we have to be more careful about the columns going on or close to the boundary of the region. It is even possible that multiple columns overlap with the northern or southern boundary. In the clean world of planar graphs, such shared subpaths are inconsequential. On the other hand, for intersection graphs, handling this issue introduces additional technicalities. In particular, we need to allow that on some part of the boundary South and East overlap, complicating the definitions considerably.

\paragraph*{(NEW) Sparsification}
In Section~\ref{sec:sparse}, we show that, similarly to Subset TSP, large cliques can be reduced in size with a factor of at most $(1+\eps)$ change in the optimum value. However, for Steiner Tree, we can guarantee only an exponential $2^{\poly(1/\eps)}$ upper bound on the reduced clique size. The crucial difference is that the role of a vertex in a clique is not just to be part of a path of length $O(1/\eps)$ connecting two vertices at distance at most $O(1/\eps)$, but to be part of a subtree connecting a subset of up to $O(1/\eps)$ vertices at distance at most $O(1/\eps)$. There are exponentially many such subsets and for each subset potentially a different vertex of the clique is needed, resulting in the exponential upper bound.

\paragraph*{(NEW) Uncrossing trees and object frames}
In Section~\ref{sec:SteinerStruct}, we would like to use the forest simplification result \cite[Thm.~10.7]{BorradaileKM09} from the planar algorithm as a black box. This seems reasonable: even though intersection graphs can be more complicated than planar graphs, the solution inside a brick is still just a forest. The forest simplification result considers only forests, so one could argue that it is irrelevant whether this forest is taken from a planar graph or from an intersection graph. However, it is more precise to say that the simplification is done on a planar graph that consists of a boundary cycle and a forest inside it; then the simplification argument replaces some parts of the forest with subpaths of the boundary. Therefore, in order to invoke this result, we need to define such a planar graph based on the solution in the intersection graph. The problem is that even if an intersection graph is a tree, the objects can cross each other and hence it might not be possible to draw the tree without crossing edges in the plane if we want to respect respect the topology of the original objects on the boundary (see Figure~\ref{fig:crossingtree}(i)). Thus the union of the forest and the boundary can be highly nonplanar.
\begin{figure}
\centering
  \includegraphics{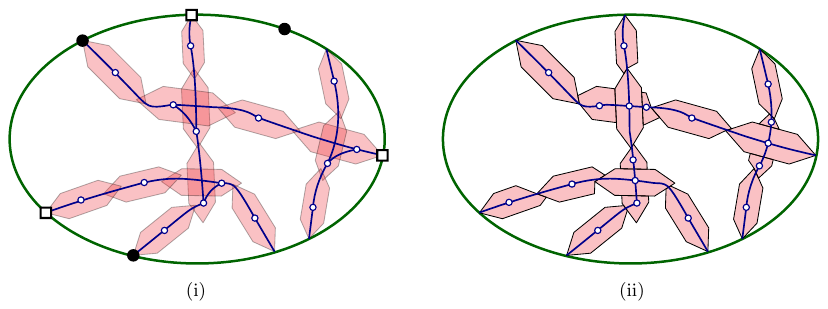}
  \caption{(i) The intersection graph of the 15 red objects and the 3+3 objects on the boundary is a tree (shown by blue). However, the union of this tree and the green boundary is nonplanar, as it contains the subdivision of a $K_{3,3}$. (ii) Replacing the 15 red objects with an object frame of 18 subobjects. Union of the intersection graph of these objects and the boundary is now planar.}
  \label{fig:crossingtree}
\end{figure}  

In order to be able to use \cite[Thm.~10.7]{BorradaileKM09}, we create a ``really planar'' representation of the forest inside the brick. An \emph{object frame} of an intersection graph $G$ of polygons is an intersection graph $H$, where each object $v$ of $H$ is a subpolygon of a parent object $p(v)$ of $G$ and the polygons of $H$ are internally disjoint (plus some further nondegeneracy conditions). Note that $p(u)=p(v)$ is possible for $u,v\in V(H)$, that is, the object frame can use multiple subpolygons of a polygon. We show in Section~\ref{sec:connectivity} (Theorem~\ref{thm:tracespanningtree}) that, given an intersection graph $G$ that covers a set of $t$ points in the plane, we can construct an object frame $H$ of $G$ such that (i) $H$ is a forest, (ii) $|V(H)|\le |V(G)|+O(t)$, (iii) $H$ provides the same connectivity of the points as $G$, that is, if two of the $t$ points are covered by a component of $G$, then it is also connected by a component of $H$. With this result at hand, we replace the forest of the solution inside a brick by an object frame (see Figure~\ref{fig:crossingtree}(ii)). Now the forest defined by this object frame together with the boundary form a planar graph, hence \cite[Thm.~10.7]{BorradaileKM09} can be invoked.

\paragraph*{(NEW) Solution generators}
In the case of planar graphs, we considered the solution to be composed of two disjoint parts: a subgraph of the mortar graph and forests inside the bricks. However, unlike in planar graphs, it is not straightforward to modify the instance such that the interior of a brick is connected only to the portals of the boundary. Moreover, the situation is even more complicated: as it was the case for Subset TSP, we need to include in the Steiner-tree spanner the distance-$\poly(1/\eps)$ neighborhood of the mortar graph, making it even less clear how the interior is supposed to be connected only to the portals.

For intersection graphs, we need to adopt a slightly different viewpoint. Instead of saying that the solution is partitioned into a subgraph of the mortar graph and forests, we say that the solution provides two things: a subgraph $S_0$ in the neighborhood of the mortar graph and connections between some subsets of the portals. Of course, the connections between the portals are realized by trees and these trees have to go through the neighborhood of the mortar graph, but we do not care what exactly these trees do there. In particular, we do not care if these trees use objects that are already in $S_0$: we can achieve $(1+\eps)$-approximate solution size even if we are double counting such objects. In Section~\ref{sec:SteinerStruct}, we prove our main structural result for Steiner Tree, which shows that there is an $(1+\eps)$-approximate solution that arises this way. Proving this requires modifying a solution in a way that components that connect portals are not needed to connect other parts of the boundary.

For the robust handling of the solutions inside bricks, we introduce a more abstract formalism, which will be particularly useful for the next result, the faster PTAS (Theorem~\ref{thm:SteinerFast}). A \emph{portal-respecting generator} consists of a subset $S_0$ of the distance-$\poly(1/\eps)$ neighborhood of the mortar graph and for each brick $B$ a collection $\cF$ of subsets of portals. The generator is feasible if $S_0$ connects all the terminals in $T$ if we extend $S_0$ with ``virtual edges'' connecting, for every $P\in \cF$, the vertices of $P$ to each other. Intuitively, a generator represents solutions that consist of $S_0$ and for each set $P$ of portals in $\cF$, an arbitrary tree connecting $P$. If the generator is feasible, then any set obtained this way is indeed a solution. We define the cost of the generator to be $|S_0|$, plus, for every $P\in \cF$,  the cost of the minimum Steiner tree connecting $P$. Clearly, any solution represented by a feasible generator has size at most the cost of the generator. Our main structural result can be formulated as saying that there is a feasible generator whose cost is at most $(1+\eps)\smt(G,T)$.

\subsection{Faster PTAS for Steiner Tree}
\label{sec:faster-ptas-steiner}
Borradaile, Klein, and Mathieu~\cite{BorradaileKM09} presented a PTAS for planar graphs that, instead of using the Steiner-tree spanner, uses dynamic programming on the mortar graph (Theorem~\ref{th:planarsteinertspfast}). Our proof of Theorem~\ref{thm:SteinerFast} uses a very different approach. Therefore, instead of giving an overview of the technique of Borradaile, Klein, and Mathieu~\cite{BorradaileKM09}, we first sketch a different, more streamlined algorithm for planar graphs. The ideas in this sketch form the basis for an analogous algorithm for intersection graphs presented in Section~\ref{sec:SteinerFast}. However, the difficulties arising in the intersection graph setting lead to major technical challenges and makes it necessary to substantially redesign this approach.

\paragraph*{(NEW) PTAS via weighted tree-representative vertices in planar graphs}
Using contraction decomposition, we can find a set $X$ of $O(\eps\cdot \smt(G,T))$ vertices whose contraction reduces the treewidth of the mortar graph $M$ to $\poly(1/\eps)$. We can afford to include these vertices into the solution, which is effectively the same as contracting them in the input. Thus we may assume that the mortar graph has treewidth $\poly(1/\eps)$. This of course does not mean that the input graph has bounded treewidth: the parts of the graph inside the bricks can still be complicated. However, this part of the graph is used only for trees connecting terminals; the idea is to represent each possible tree with a single vertex, making what appears inside each brick bounded.

It is convenient to switch at this point to a vertex-weighted problem, where each vertex $v$ has a nonnegative weight $w(v)$ and the task is to minimize the total weight of the solution. (Note that we are not aiming to solve the vertex-weighted version in general, but the specific instance appearing in the rest of the algorithm can be conveniently described and solved as a weighted instance.) We extend the mortar graph $M$ the following way. Consider the $\theta=\poly(1/\eps)$ portals of a brick $B$ and for any subset $P$ of them, introduce a vertex $v_P$ that is adjacent to every portal in $P$; let the weight of $v_P$ be the size of the minimum Steiner tree containing $P$. We repeat this step for every brick $B$ and every subset of the portals of $P$, this way we are introducing at most $2^\theta=2^{\poly(1/\eps)}$ new vertices for each brick $B$ (note that the resulting graph $M^*$ can be highly nonplanar). It is clear that any solution of weight $W$ of this new weighted instance $(M^*,T)$ can be turned into a solution of size $W$ of the original unweighted instance $(G,T)$. Also, it should be clear that if the original instance $(G,T)$ has a solution of size $W$ that consists of a subgraph of the mortar graph $M$ and trees connecting subsets of portals in the bricks, then the new instance $(M^*,T)$ has a solution of weight $W$. Thus it is sufficient to solve this new, weighted instance $(M^*,T)$.

How can we bound the treewidth of $M^*$? It is known that introducing a new vertex into each face of a planar graph and connecting it arbitrarily with vertices of the face changes treewidth only by a constant factor. We are introducing at most $2^{\poly(1/\eps)}$ vertices per face, which then increases treewidth by at most a factor of $2^{\poly(1/\eps)}$. To see this treewidth bound, let us first introduce a vertex $v_F$ in each face $F$ such that $v_F$ is adjacent to every vertex of $F$, and consider a tree decomposition of this graph. Whenever $v_F$ appears in a bag, let us replace it with a block of $2^{\poly(1/\eps)}$ copies. It is clear that the resulting tree decomposition has treewidth at most $O(2^{\poly(1/\eps)})$ times the original treewidth.
The standard dynamic programming algorithm for Steiner Tree can be easily adapted to the vertex-weighted version. However, as this algorithm is exponential in treewidth, this would still give only a  $2^{2^{\poly(1/\eps)}}\cdot n^{O(1)}$ time algorithm, which is weaker than our goal.

To improve the running time, let us observe that the dynamic programming can be made faster on such decompositions. In each bag of the decomposition, we have at most $\poly(1/\eps)$ original vertices of $M$, plus at most $\poly(1/\eps)$ blocks of tree-representative vertices, with each block having size at most $2^{\poly(1/\eps)}$. The important observation is that if a solution for the original instance $(G,T)$ uses at most $\theta=\poly(1/\eps)$ trees in each brick, then there is a corresponding solution of the new instance $(M^*,T)$ that uses at most $\poly(1/\eps)$ tree-representative vertices from each block. Therefore, there are $(\poly(1/\eps)\cdot 2^{\poly(1/\eps)})^{\poly(1/\eps)}=2^{\poly(1/\eps)}$ different subsets of a bag that can potentially appear in such a solution and we can restrict our search to solutions with this property. The running time of (weighted) Steiner tree mainly depends on the number of subproblems defined in the dynamic programming, which in turn mainly depends on the number of possible subsets of vertices that need to be considered at each bag. Given our bound on the number of relevant subsets, it follows that standard dynamic programming can solve the problem in $2^{\poly(1/\eps)}\cdot n^{O(1)}$ time.

\begin{figure}
\centering  
  \includegraphics{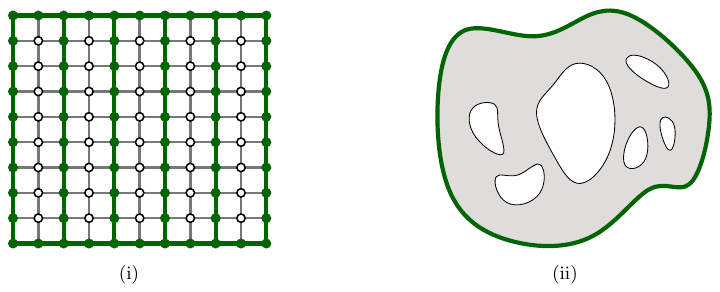}
  \caption{(i) Even if a subgraph (green) has constant treewidth, its neighborhood can induce a a graph of arbitrary large treewidth. (ii) Extending a single brick of the mortar graph with its distance-$\poly(1/\eps)$ neighborhood (gray shaded region) can create multiple ``islands''.}
  \label{fig:mortartreewidth}
\end{figure}  

\paragraph*{(NEW) Contracting through bricks}
If we try to follow the same strategy to speed up our PTAS for intersection graphs, then the main source of the difficulty is that the solution is not composed from a subgraph of the mortar graph plus trees in the bricks, but from a subgraph of the distance-$\poly(1/\eps)$ neighborhood of the mortar graph plus trees in the bricks. Let us observe that even if the mortar graph has bounded treewidth, it very well may be that the neighborhood of the mortar graph has large treewidth (see Figure~\ref{fig:mortartreewidth}(i) for an example). Inside a brick, if we consider a distance-$\poly(1/\eps)$ neighborhood of the boundary, then this may create connections between different parts of the boundary and can split the inside of the brick into multiple ``islands''. By their construction, the bricks have lots of structural properties and we may try to argue that including the neighborhood cannot create so many extra connections to increase treewidth significantly. While it may be possible to prove this, we found potential arguments very tedious and therefore chose a different strategy.

In Section~\ref{sec:SteinerFast}, instead of using contraction decomposition to find a set whose contraction decreases the treewidth of the mortar graph, we use contraction decomposition on the distance-$\poly(1/\eps)$ neighborhood of the mortar graph to reduce the treewidth of this neighborhood to $\poly(1/\eps)$. Note that now the set $X$ we contract can go through a brick (see Figure~\ref{fig:mortarkontrakt}). Therefore, it is difficult to determine where we should put a tree-representative vertex corresponding to a brick. Instead, we put tree-representative vertices in each island. As each island is in a single brick, these vertices should represent a part of one of the $2^{\poly(1/\eps)}$ possible trees that can appear in this brick. (Here it is a technical complication that the restriction of a tree to an island can consist of many components.)
\begin{figure}
\centering
  \includegraphics[width=0.6\linewidth]{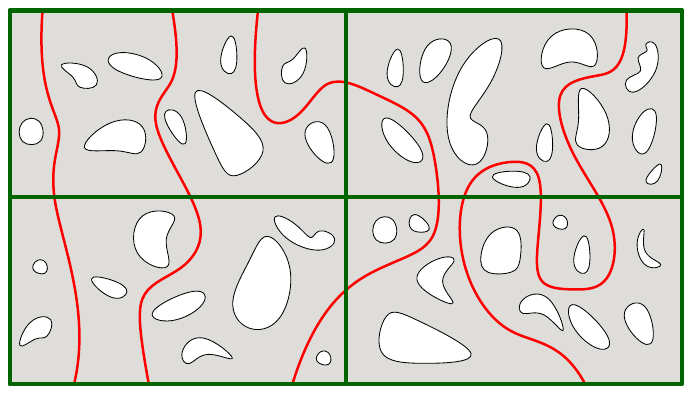}
  \caption{The mortar graph (green), its distance-$\poly(1/\eps)$ neighborhood (shaded gray), and the set $X$ (red) whose contraction reduces the treewidth of this neighborhood.}
  \label{fig:mortarkontrakt}
\end{figure}  

If the graph were planar, then the scheme described in the previous paragraph could be made to work, with techniques similar to what we described in the planar case. In the case of intersection graphs, the set $X$ for contraction needs to be found in a planar constant-Lipschitz representation of the graph (or more precisely, of the cell partition). Moreover, we need to insert the tree-representative vertices into the faces of this planar representation. In the end, we argue that certain properties (such as treewidth) of this planar representation has implications on the original graph (similarly to Lemma~\ref{lem:twcompare} above, which shows how the effect of contraction in the planar representation can be lifted back to the original graph). In this situation, this lifting is extremely delicate and nontrivial, especially if the set $X$ is close to some of the islands. Nevertheless, we overcome these difficulties and manage to construct a $(1+\eps)$-approximately equivalent weighted instance that has a tree decomposition where the optimal solution has only $2^{\poly(1/\eps)}$ different possible intersections with each bag. Then the  $2^{\poly(1/\eps)}\cdot n^{O(1)}$ running time follows by dynamic-programming techniques.

\section{Preliminaries}\label{sec:prelim}

\paragraph*{Graphs, walks, paths, distances}
We use standard graph notation~\cite{Diestel} unless stated otherwise. The graph $G$ has vertex set $V(G)$ and edge set $E(G)$.
A \emph{walk} $P$ is an alternating sequence of vertices and edges, starting and ending with a vertex. A walk is \emph{closed} if its start- and endpoint are the same vertex, it is a \emph{path} if each of its vertices appears exactly once in $P$, and it is a \emph{cycle} if it is a closed walk where all vertices are non-repeating except for the start and end vertex being identical. The length of $P$ is denoted by $|P|$, and it is the number of edges in the sequence of $P$. In particular, we allow for length-$0$ paths that consist of a single vertex. For a cycle, path, or walk $P$ the shortest path (or walk) from $u$ to $v$ along $P$ is denoted by $P[u,v]$. We will also use $P(u,v)$ to refer to the subpath/subwalk $P[u,v]$ with the first and last vertex removed.

The shortest-path distance of $u,v\in V(G)$ is denoted by $\dist_G(u,v)$ and is simply the number of edges on a shortest path with endpoints $u$ and $v$. For a vertex set $X\subset V(G)$ let $N(X,k)$ be the set of vertices within
distance $k$ from $X$, that is, $N(X,k)=\{v\in V(G) \mid \exists x\in
X: \dist_G(x,v)\leq k\}$. We also use the notation $N(v,k)$ when talking about the neighborhood of one vertex, and $N(X)$ (resp., $N(v)$) is a shorthand for $N(X,1)$ (resp., $N(v,1)$), i.e., the \emph{closed neighborhood} of $X$ (resp., the closed neighborhood of $v$).

\paragraph*{Partitions, contractions, and consolidations} 
Let $G$ be a graph and let $\cP$ be a partition of $V(G)$. For a vertex $v\in V(G)$ or vertex set $X\subseteq V(G)$, let $v_\cP$ or $X_\cP$ denote the partition class of $v$ or the set of partition
classes in $\cP$ intersected by $X$, respectively.
Given a vertex set $X$ and a partition $\cQ$ of some vertex set $Y\subseteq V(G)$, we denote by $X\parcon \cQ$ the set where vertices of $X$ that fall in the same class of $\cQ$ are identified. The consolidation has a corresponding \emph{consolidation map} $\psi:X \rightarrow X\parcon \cQ$ where $\psi(u)=\psi(v)$ if and only if $u=v$ or $u$ and $v$ are in the same class of $\cQ$.

For a graph $G$ and a partition $\cQ$ of some subset $Y$ of $V(G)$, let $G_\cQ$ denote the \emph{consolidation} that one gets by
identifying the vertex set $C$ for each $C\in \cQ$, and deleting
loops and parallel edges. More precisely, $V(G\parcon \cQ) := V(G)\parcon \cQ$, and
\[E(G\parcon Q)=
\{\psi(C)\psi(C') \mid C,C'\in \cQ^1 \wedge \exists uv\in E(G): (u\in C \wedge v\in C')\},\]
where $\cQ^1$ is the partition of $V(H)$ that extends $\cQ$ into a partition of $V(G)$ by adding each vertex of $V(G)\setminus Y$ as a singleton, and $\psi$ is the consolidation map for $V(G)\parcon \cQ$. We will usually work with a partition $\cP$ of $V(G)$ itself where each class of $\cP$ induces a clique in $G$. We will then abbreviate $G\parcon \cP$ as $G_\cP$, and usually think of $\cP$ as the vertex set of $G_\cP$.

If $F$ is an edge set in $G$, then we denote by $G/F$ the \emph{contraction} of the edges of $G$, where we remove loops and parallel edges. If $\cQ_F$ denotes the partition of $G[F]$ into its connected components, then $G/F=G\parcon \cQ_F$. For a graph $G$ and a vertex set $X\subset V(G)$, let $G\vcon X$ denote the graph
where the edges induced by $X$ are contracted, i.e., $G/E(G[X])$. For a vertex set $A\subset G$, let $V(G\vcon X)|_A$ denote the vertices of $G\vcon X$ whose pre-image contains some vertex of $A$.  


\paragraph*{Treewidth and $\cP$-flattened treewidth.}

A tree decomposition of a graph $G$ is a tree graph $\cT$ whose nodes are subsets of $V(G)$ called \emph{bags}, with the following properties:
\begin{enumerate}[label=(\roman*)]
\item $\bigcup_{B\in V(\cT)} B = V(G)$, i.e., each vertex of $G$ is represented in some bag of $\cT$
\item For each edge $uv\in E(G)$ there is some $B\in V(\cT)$ with $u,v\in B$, that is, each edge of $G$ is induced by some bag
\item For each $v\in V(G)$ the bags $\{B\in V(\cT) \mid v\in B\}$ form a connected subtree of $\cT$.
\end{enumerate}

The width of the tree decomposition $\cT$ is $\max_{B\in V(\cT)} |B| -1$, and the \emph{treewidth} of $G$ is the minimum width among its tree decompositions, and it is denoted by $\tw(G)$.

Let $\omega:V(G)\rightarrow \Reals$ be a weight function. The weighted width of a tree decomposition $\cT$ of $G$ is $\max_{B\in V(\cT)} \omega(B)$. The weighted treewidth of $(G,\omega)$ is the minimum weighted width among its tree decompositions.

Suppose that $\cP$ is a partition of $V(G)$. The \emph{$\cP$-flattened treewidth}~\cite{BergBKMZ20} of $G$ is the weighted treewidth of $G_\cP$ under the weight function $\omega:\cP\rightarrow \Reals_{\geq 0}$ defined as $\omega(C):=\log(|C|+1)$ for each $C\in \cP$. We denote $\cP$-flattened treewidth of $G$ by $\tw_\cP(G)$.

Note that in an $n$-vertex graph $G$ and any partition $\cP$ of $V(G)$ we have $\tw_\cP(G)=O(\log n)\cdot \tw(G\parcon \cP)$.

\begin{observation}\label{obs:connected_treedecomp}
Let $\cT$ be a candidate tree decomposition of $G$ where all vertices
and edges of $G$ are represented in some bag of $\cT$. Then $\cT$ is a valid tree decomposition of $G$ if and only if for every
connected subgraph $H\subset G$ the set of bags containing some vertex of $H$
form a connected subtree of $T$.
\end{observation}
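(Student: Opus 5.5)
The plan is to prove the two directions separately, with the reverse direction (from the connectivity condition to a valid tree decomposition) done by specializing to edges.

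\emph{Forward direction.} Assume $\cT$ is a valid tree decomposition and let $H\subseteq G$ be connected. For each $v\in V(H)$, let $\cT_v$ denote the set of bags containing $v$. By property (iii), each $\cT_v$ induces a connected subtree. For every edge $uv\in E(H)$, property (ii) provides a bag containing both $u$ and $v$, so $\cT_u$ and $\cT_v$ intersect. Now consider the graph on $V(H)$ in which $u$ and $v$ are joined whenever $\cT_u\cap\cT_v\neq\emptyset$. This graph contains $H$, so it is connected. The union of connected subtrees whose ``intersection graph'' is connected is itself connected. To see this, take two bags in the union, and follow a path in $H$ between vertices whose subtrees contain those bags; consecutive subtrees share a bag, so the union contains a walk in $\cT$ between the two bags. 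Hence the set of bags meeting $V(H)$, namely $\bigcup_{v\in V(H)}\cT_v$, forms a connected subtree.

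\emph{Reverse direction.} Assume the connectivity condition holds for every connected subgraph $H$. Properties (i) and (ii) are given by hypothesis. For property (iii), fix $v\in V(G)$ and take $H$ to be the single-vertex subgraph $(\{v\},\emptyset)$, which is connected. The bags containing some vertex of $H$ are exactly the bags containing $v$, so they form a connected subtree. Thus $\cT$ is a valid tree decomposition.

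The only nonroutine point is the union-of-subtrees argument in the forward direction, and that is a short path-chasing step. I do not expect any real obstacle.
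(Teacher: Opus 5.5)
Your proof is correct and follows essentially the paper's argument. The reverse direction, specializing to single-vertex subgraphs, is identical. For the forward direction the paper inducts on $|V(H)|$, attaching the subtree $\cT_v$ to the subtree for $H-v$ through a bag that contains $v$ and one of its neighbors $u$; you instead glue the subtrees $\cT_v$ directly along a path in $H$. Your path-chasing version also avoids a small gap in the paper's induction: it needs $v$ to be chosen so that $H-v$ stays connected (for example, a leaf of a spanning tree of $H$).
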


\begin{proof}
Clearly if any connected subgraph $H$ has this porperty, then in particular one-vertex edgeless subgraphs also have this property, which directly yields property (iii) in the definition. On the other hand, if $\cT$ is a valid tree decomposition, then let us use induction on $V(H)$ to prove that the vertices of $V(H)$ form a connected subtree. The claim holds for $|V(H)|=1$ by condition (iii) of tree decompositions. Suppose that $v\in V(H)$, and let $\cT_v$ be the subtree of $\cT$ where $v$ appears in the bag. By induction, there is also a subtree $\cT_{H-v}$ that contains the bags where some vertex of $H-v$ appears. Let $u\in N_H(v)$. By property (ii) of tree decompositions, there is some bag $B_{uv}$ containing both $u$ and $v$. Since $B_{uv}$ is a vertex of both $\cT_v$ and $\cT_{H-v}$, we get that the union of these trees is a connected subtree of $H$, concluding the proof.
\end{proof}

\paragraph*{Subset TSP and Steiner tree.}

We formally define our optimization problems as follows.

\begin{quote}
\stsp\\
\textbf{Input:} A graph $G$ and a vertex set $T\subseteq V(G)$ called \emph{terminals}.\\
\textbf{Output:} A shortest closed walk $W$ of $G$ where $T\subset V(W)$.
\end{quote}

\begin{quote}
\stein \\
\textbf{Input:} A graph $G$ and a vertex set $T\subseteq V(G)$ called \emph{terminals}.\\
\textbf{Output:} A minimum size vertex set $S \subseteq V(G)$ where $T\subseteq S$ and $G[S]$ is connected.
\end{quote}

Notice that our definition of \stein is non-standard, as one normally has an edge-weighted graph and one is looking for a subgraph of edge weight $k$ that contains all the terminals. In our unweighted setting these two optimization problems are equivalent from the perspective of approximation: indeed, a solution of size $|S|=k$ yields has a spanning tree of size $k-1$, and a Steiner tree with $k-1$ edges has $k$ vertices. Thus (up to changing $\eps$ by a constant factor) an approximate solution for one problem is yields an approximate solution for the other with a linear time overhead.

\paragraph*{Objects, regions, and curves}

For a closed planar set (object or region) $\re R\subset \Reals^2$ we denote by $\bd \re R$ its boundary. We denote by $\inter \re R$ and $\exter \re R$ the interior and exterior of $\re R$, respectively. For an arbitrary planar set $X$ we denote by $\Clo(X)$ its closure. A continuous function $\gamma:[0,1]\rightarrow \Reals^2$ is a \emph{curve} between its endpoints $\gamma(0)$ and $\gamma(1)$. We will often abuse notation and refer to the picture of $\gamma$ (i.e., $\gamma([0,1])$) as $\gamma$. A curve $\gamma$ is \emph{piecewise linear} if it is the concatenation of closed segments at endpoints shared between consecutive segments. A curve is a \emph{closed curve} when $\gamma(0)=\gamma(1)$. The curve is \emph{self-intersecting} if $\gamma(x)=\gamma(y)$ for some $x\neq y$ where $\{x,y\}\neq \{0,1\}$. A non-self-intersecting closed curve is a \emph{Jordan curve}.

\paragraph*{Polygons and intersection graphs}
 
A \emph{simple polygon} is a closed bounded region of $\Reals^2$ whose boundary is a piecewise linear closed Jordan curve. In particular, a segment is not considered to be a simple polygon. A non-self-intersecting polygon is a connected bounded region of $\Reals^2$ whose boundary consists of pairwise disjoint piecewise linear Jordan curves. If $P$ is a non-self-intersecting polygon, then the closures of the bounded components of $\Reals^2\setminus P$ are called the \emph{holes} of $P$; since the holes are bounded by piecewise linear Jordan curves, they are simple polygons. 
We observe that a component of the union of some non-self-intersecting polygons where no three polygon vertices are collinear is a non-self-intersecting polygon.

The \emph{complexity} of a non-self-intersecting polygon is its number of vertices, and the complexity of an intersection graph that is represented by non-self-intersecting polygons is the total number of vertices in the representing polygons. For a polygon $v$ and graph $G$ let $\compl(v)$ and $\compl(G)$ denote their complexity, respectively. For a set of polygons the \emph{arrangement} given by the polygons is the partition of the plane into a set of faces, edges and vertices given by the drawing of the polygon boundaries in the plane. The arrangement of an intersection graph $G$ of polygons is simply the arrangement of its polygons, and it is denoted by $\cA(G)$. The complexity of an arrangement is the complexity of its elements, i.e., the total number of vertices, edges and faces, and is denoted by $\compl(\cA(G))$. Notice that a set of $\compl(G)$ segments can create at most $\binom{\compl(G)}{2}$ intersections, so Euler's formula implies that $\compl(\cA(G))=O(\compl(G)^2)$. We use $V(\cA(G))$ to refer to the set of vertices in the arrangement, i.e., this set includes both the polygon vertices and the intersection points of their edges.

An \emph{intersection graph} $G$ is graph whose vertices are objects (sets) in some fixed metric space ---typically the plane---, and its edges are the unordered pairs of intersecting objects. Objects are typically denoted by lowercase letters, and subgraphs in $G$ are thought of as a collection of vertices and edges. The intersection graphs in this article are given via their representation, that is, by some straightforward description of the objects $V(G)$: in particular, unit disks are defined via the coordinates of their disk centers, and non-self-intersecting polygons are defined by listing the coordinates of the polygon vertices on each boundary component in clockwise order. 

We will also work with graphs that are represented as plane graphs, i.e., vertices are points (or sometimes identical to points), and its edges are associated with piecewise linear curves between the corresponding endpoints. For a graph $G$ with such a representation, we typically use the notation~$\pl G$. If the curves representing the edges are piecewise linear, then $\compl(\pl G)$ denotes the total number of internal vertices on its edge curves plus the number of vertices in $G$; similarly, if $\gamma$ is a piecewise linear curve, then $\compl(\gamma)$ is the number of vertices and edges in $\gamma$.

\paragraph*{Fair and $\alpha$-standard intersection graphs.}

\begin{definition}[Fair and $\alpha$-standard intersection graphs]
We say that an intersection graph $G$ of simple polygons in the plane is \emph{fair} if for any two polygons $v,w\in V(G)$ each component of the intersection $v\cap w$ is a simple polygon and no three polygon edges (from the polygons in $V(G)$) meet at one point.
If additionally each polygon $v\in V(G)$ contains a closed disk of radius $1$ and has diameter $\diam(v)\leq \alpha$ for some fixed $\alpha>1$, then $G$ is called \emph{$\alpha$-standard}.
\end{definition}

We denote by $\cI_\alpha$ the class of intersection graphs with an $\alpha$-standard embedding. Notice that a polygon of an $\alpha$-standard graph contains a closed axis-aligned unit square in its interior, since this already holds for the disk of radius $1$ contained in the polygon. Since an axis-aligned unit square contains some grid point from $\Ints^2$, we can associate each object with the lexicographically smallest\footnote{The \emph{lexicographic order} of $\Ints^2$ is a total ordering where we order first by $x$-coordinates and then by $y$-coordinates, that is, $(a,b)<(c,d)$ if and only if $a<c$ or $a=c$ and $b<d$.} point of $\Ints^2$  contained in the interior of the polygon.

\begin{definition}[Cell partition]
The \emph{cell partition} or \emph{clique partition} of a graph $G\in \cI_\alpha$ (given by its representation) is the partition $\cP$ of $V(G)$ where $u,v\in V(G)$ are in the same class of $\cP$ if and only if the lexicographically smallest grid point contained in the interior of $u$ and $v$ are the same grid point. 
\end{definition}

In particular, all objects assigned to the same class are stabbed by a single point, thus for any $C\in \cP$ we have that $G[C]$ is a clique. Occasionally, we will refer to the grid point associated with a class $C$ of $\cP$ as $\dot C$, and the set of points associated with the classes of $\cP$ as $\dot \cP$.

In Appendix~\ref{sec:graf_konvert} we show the following theorem about converting intersection graphs into $\alpha$-standard graphs.

\begin{restatable}{theorem}{Atalakitas}
\label{thm:atalakitas}
Let $G$ be an intersection graph of (a) $\delta$-similarly sized disks or (b) $\delta$-similarly sized $\beta$-fat polygons given by their representation. Then in polynomial time, we can construct an $\alpha$-standard intersection graph $G'$ that is isomorphic to $G$ whose representation complexity is polynomial in $n$, where $\alpha=36/(\beta\delta)$ and $\beta=1$ for disks.
\end{restatable}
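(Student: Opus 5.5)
The plan is to normalize the objects in three stages: first fix the scale, then fix fatness, then fix the degeneracies. Intersection-graph isomorphism has to be preserved throughout, and the complexity must stay polynomial.

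\textbf{Scaling.} Let $D=\max_{x}\diam(x)$. For each polygon $p$ take the inner ball $B_1\subseteq p$ of radius $r_1$ from the fatness definition. Then $r_1\ge \beta r_2\ge \beta\diam(p)/2\ge \beta\delta D/2$. Scaling the plane by the factor $2/(\beta\delta D)$, every object contains a disk of radius at least $1$ and has diameter at most $2/(\beta\delta)$. Scaling preserves intersections exactly, so this already gives the radius and diameter conditions, with a margin of a constant factor to spare.

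\textbf{Disks.} For disks (case (a)), I first replace each disk by a polygon with the same intersection graph. Choose a sufficiently fine regular $N$-gon inscribed or circumscribed, perturbing radii slightly inward or outward. Intersecting pairs of disks have positive-measure or tangential overlap. I handle tangency by first enlarging all radii by a tiny $\eta$; this preserves all non-intersections, since the minimum gap between disjoint disks is positive and computable. After enlargement every intersecting pair overlaps with depth at least some $\eta'>0$. Inscribed $N$-gons with $N=\poly(n,\text{bit size})$ then lose at most $\eta'/4$ per disk, so all intersections and non-intersections are preserved. This reduces case (a) to case (b) with $\beta=1$, up to constants.

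\textbf{Fairness.} Next I enforce the conditions that each component of $v\cap w$ is a simple polygon and that no three edges meet at a point.

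Step 1: Apply a generic, tiny perturbation of the vertices. Non-intersecting pairs stay separated by a positive gap. For intersecting pairs, I first dilate each polygon slightly (replace $p$ by a polygonal approximation of $p\oplus \eta B$), so intersections become robust with interior overlap. This kills degenerate contacts where two polygons touch at a point or along a segment, which would make components non-simple polygons. After a random perturbation of size $\ll\eta$, general position holds: no three edges concurrent, and no vertex lies on another polygon's edge.

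Step 2: Now each component of $v\cap w$ is a closed region bounded by pieces of the two boundaries. It could still have holes. If $v,w$ are simple polygons, a component of $v\cap w$ is simply connected, because the complement of a simple polygon's interior is connected to infinity. More precisely, a hole of a component of $v\cap w$ would be a bounded component of the complement lying inside $v\cup w$ minus the intersection, which must escape to infinity through $\exter v$ or $\exter w$. This needs care, but it is a standard topological fact: the intersection of two simply connected planar domains has simply connected components. General position then makes its boundary a Jordan polygon.

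The input polygons are assumed simple. If they are not, I would fill holes, which is allowed only if no object lies inside a hole. Since similarly sized fatness bounds hole sizes only loosely, I expect to assume simple polygons as in the statement.

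\textbf{Bounds and main obstacle.} The dilation by $\eta$ changes diameters by $2\eta$ and keeps the inner disk, so $\alpha=36/(\beta\delta)$ holds with room. All coordinates can be chosen rational with polynomially many bits, because $\eta$ can be bounded below by a polynomial-bit quantity: separation distances between disjoint polygons are algebraic of bounded degree. The main obstacle I expect is this quantitative part. Choosing $\eta$ and the perturbation deterministically so that exactly the intersecting pairs remain intersecting, non-intersecting pairs stay apart, and full general position holds, all with polynomial complexity, needs a careful bound on the minimum separation. I would take a deterministic symbolic perturbation with rationals of polynomial bit length, chosen larger than that bound.
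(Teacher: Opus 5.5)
There is a genuine gap, and it is in two places.

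Your treatment of simple polygons is sound: you scale, dilate slightly so every contact becomes an interior overlap, perturb to general position, and then use that each component of the intersection of two compact simply connected planar sets is simply connected. It is a reasonable alternative to the paper's route, which pushes the endpoints of bad edges outward along angle bisectors and then invokes \Cref{lem:fairness2}. One small addition: take $\eta$ also below half the minimum distance between non-adjacent edges of each single polygon, or the dilation can itself close a narrow notch into a hole.

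\textbf{Polygons with holes.} In (b) the polygons need not be simple; the paper's proof of (b) goes through \Cref{lem:polygontoholefree}, which allows holes. You set this case aside, and, as you observe, filling holes can create new intersections. The missing idea is to \emph{puncture} rather than fill.
\begin{enumerate}
\item Pick a vertical slab $s$ that meets the hole $h$ and lies strictly between two consecutive $x$-coordinates of arrangement vertices.
\item Then $p\cap s$ is a vertical stack of trapezoids. Of the two trapezoids of $p$ adjacent to the topmost and bottommost occurrence of $h$ in $s$, at least one misses a fixed inscribed disk of radius $\frac{\beta}{4}\diam(p)$.
\item Delete that trapezoid. This merges $h$ with the outer face or another hole and keeps $p$ connected.
\item It keeps the diameter and that disk, so fatness drops only to $\beta/4$; this is where $36=4\cdot 9$ comes from.
\item It creates no intersection, since the new polygon is a subset of $p$.
\item It destroys none: every arrangement face inside $p\cap q$ has all its vertices outside the closed slab, so it cannot be swallowed by the removed trapezoid.
\end{enumerate}
Iterating this removes all holes.

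\textbf{Disk step complexity.} Your disk step does not give polynomial complexity. An inscribed regular $N$-gon deviates from its circle by $r(1-\cos(\pi/N))=\Theta(r/N^2)$. The overlap depth $\eta'$ you can guarantee is bounded by the smallest gap between disjoint disks. That gap is not bounded by any function of $n$, and for $b$-bit input it can be $2^{-\Theta(b)}$. So you would need $N=\Omega(\sqrt{r/\eta'})$, which is not polynomial. With polynomial $N$, a tangent pair whose contact direction falls in the middle of an edge becomes disjoint, and circumscribed $N$-gons create spurious edges instead.

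The fix is to approximate the circle only in the $O(n)$ directions that matter, as in \Cref{lem:disktopolygon}. Take the convex hull of an inscribed axis-parallel square, which gives $1/\sqrt{2}$-fatness, together with, for each neighbour $d'$, the midpoint of the arc $\partial d\cap d'$. That midpoint lies on the line through the two centres, so the segments from each centre to its arc midpoint overlap on that line and every intersection survives. The polygon lies inside $d$, so no intersection is created.
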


\subsection{Object frames and wireframes}

A \emph{frame} can be thought of as a planar graph that relates to some underlying intersection graph but can use vertices of said graph multiple times. More precisely, we define two types of frames as follows.

\begin{definition}[Object Frame]\label{def:objectframe}
An \emph{object frame} of a fair intersection graph $G$ is an intersection graph $H$ that is planar together with a function $p:V(H)\rightarrow V(G)$ that satisfies the following properties.
\begin{enumerate}
\item Each vertex $v\in V(H)$ is a simple polygon, and has a parent polygon $p(v)\in V(G)$ such that $v \subseteq p(v)$.
\item For any $v,w\in V(H)$ they are either disjoint or $v\cap w$ is a segment of positive length.
\item The common intersection of any three distinct polygons $v_1,v_2,v_3\in V(H)$ is empty.
\end{enumerate}
\end{definition}

\begin{definition}[Wireframe]\label{def:wireframe}
A \emph{wireframe} of a fair intersection graph $G$ is a plane graph $\pl H$ and a function $p:V(\pl H)\rightarrow V(G)$ that satisfies the following properties.
\begin{enumerate}
\item Each vertex $\dot v\in V(\pl H)$ is a point located in its parent polygon $p(\dot v)\in V(G)$ that is not a vertex of any polygon $w\in V(G)$.\footnote{We explicitly allow for $\dot v$ to be on the boundary $\bd v$ of its parent as long as it is not a vertex of any polygon.}
\item For any $\dot u \dot v\in E(\pl H)$ the edge is represented by a polygonal curve $\gamma_{\dot u \dot v}:[0,1]\rightarrow \Reals^2$, and either $p(\dot u)=p(\dot v)$ and $\gamma_{\dot u \dot v}\subset \inter p(\dot u)$, or $p(\dot u)\neq p(\dot v)$ and for some $t\in (0,1)$ we have $\gamma_{\dot u \dot v}(0,t) \subset \inter p(\dot u)$ and $\gamma_{\dot u \dot v}(t,1) \subset \inter p(\dot v)$.
\item For each edge curve $\gamma$ of $E(H)$ and each $v\in V(G)$ the intersection $\bd v\cap \gamma$ is a finite point set (consequently, it does not contain any segments) and it contains no vertex of the arrangement~$\cA(G)$.
\end{enumerate}
\end{definition}

We note here that when constructing a wireframe $\pl H$ for some subgraph or object frame $H$ of $G$, it is convenient to disregard condition 3 of wireframes during the construction, as one can perturb the vertices of a wireframe (while staying in the same face of the arrangement $\bigcup_{v\in V(G)} \bd v$) so that the edges of the piecewise-linear curves in $E(\pl H)$ do not align with any of the edges of $\bd v$ (when $v\in V(G)\setminus V(H)$ or pass through any of the vertices of $\cA(G)$. This insight can be summarized as follows.

\begin{observation}[Perturbation principle]\label{obs:perturb}
Let $H$ be a subgraph or object frame of $G$, and let $\pl H$ be a wireframe of~$H$. Then in $O(\compl(G)+\compl(\pl H))$ time we can perturb the vertices of the edge curves of $\pl H$ to get a wireframe $\pl H'$ of \textit{both} $H$ and $G$ that is isomorphic to $\pl H$ and has the same complexity. Moreover, if the vertices of $\pl H$ are disjoint from the vertices of the polygons of $V(G)$, then $V(\pl H)=V(\pl H')$, i.e., it is sufficient to perturb some of the internal vertices of the edge curves of $\pl H$.
\end{observation}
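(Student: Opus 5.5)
The plan is to perturb only the internal bend points of the edge curves of $\pl H$, each within a tiny disk, keeping the vertices $V(\pl H)$ (and the combinatorial structure of the plane graph) fixed. Let $\cB$ be the arrangement of all polygon boundaries $\bigcup_{v\in V(G)}\bd v$. Since $\pl H$ is already a wireframe of $H$, conditions 1 and 2 hold for $H$. The frame conditions 1 and 2 for $G$ only involve the parents of vertices, and $p(\dot v)$ is an object of $G$ (via the object-frame parent map, or directly if $H$ is a subgraph). Hence those conditions already hold with respect to $G$, and the only task is condition 3 for all objects of $G$, together with keeping conditions 1--3 for $H$.

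\textbf{Step 1 (choosing the radius).} Fix a radius $r>0$ smaller than half the minimum distance between any two distinct features among the following: the vertices of $\cA(G)$, the bend points of the curves of $\pl H$, and the segments of $\cB$ and of the curves not incident to them. Also take $r$ small enough that every piece $\gamma_{\dot u\dot v}(0,t)$ keeps a positive distance margin from $\bd p(\dot u)$, away from its endpoints. This is possible because the curves are compact polygonal paths whose open parts lie in open interiors. Near the endpoints, the vertex $\dot v$ itself may lie on $\bd p(\dot v)$. There I would use the fact that the first segment leaves $\dot v$ into $\inter p(\dot v)$ within a cone. Rotating that segment slightly about the fixed point $\dot v$ keeps it inside the interior, because $\dot v$ is not a polygon vertex, so locally $p(\dot v)$ is a half-plane or the whole plane.

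\textbf{Step 2 (genericity).} Move each internal bend point within its $r$-disk so that no curve segment is collinear with any segment of $\cB$, and no curve segment passes through a vertex of $\cA(G)$. The bad positions for each bend point form a finite union of lines in its disk, because each constraint says that a segment with one fixed endpoint passes through one of finitely many points, or is parallel and collinear with one of finitely many lines. So a generic choice exists. In practice, I would process the bend points one at a time and select a point avoiding the finitely many relevant lines, found by scanning the constant number of nearby features. Charging each feature to the segment it is near gives $O(\compl(G)+\compl(\pl H))$ time, which is the linear bound claimed.

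\textbf{Step 3 (preservation and the main obstacle).} The perturbed curves still satisfy condition 2 for $H$ by the margin chosen in Step 1. They remain pairwise non-crossing, and each crossing of $\bd v$ is transversal, so $\bd v\cap\gamma$ is finite. Together these give the isomorphism, and the complexity is unchanged.

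The main obstacle is the bookkeeping near wireframe vertices lying on polygon boundaries and near the switch point $t$ where a curve passes from $p(\dot u)$ to $p(\dot v)$. There the margin is not uniform. I would handle this by requiring the perturbed curve to re-cross $p(\dot u)\cap p(\dot v)$ within the same face of $\cB$, which is ensured because small perturbations cannot leave that face.
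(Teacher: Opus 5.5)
\textbf{There is a genuine gap: your argument proves only the ``moreover'' clause, not the general claim.} Your core idea is a small generic perturbation that keeps every piece of curve inside the same face of the arrangement of the boundaries $\bd v$, $v\in V(G)$. That is the idea the paper sketches.

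The problem is your opening reduction. You assert that conditions 1 and 2 with respect to $G$ ``only involve the parents'' and hence already hold. Condition 2 does transfer, because the interiors of parents are nested. Condition 1, however, also requires that $\dot v$ is not a vertex of \emph{any} polygon of $V(G)$, whereas a wireframe of $H$ only avoids vertices of polygons of $H$. Two ways this fails:
\begin{itemize}
\item If $H$ is a subgraph, $\dot v$ may coincide with a vertex of some $w\in V(G)\setminus V(H)$.
\item If $H$ is an object frame, $\dot v$ may lie on an edge of its subpolygon exactly at a reflex vertex of the original parent.
\end{itemize}
Because you keep $V(\pl H)$ fixed, no motion of internal bend points can repair this. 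This is exactly why the statement allows moving vertices and singles out the case $V(\pl H)=V(\pl H')$ separately.

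A related point: a fixed vertex sitting at a crossing of two polygon boundaries, i.e.\ at a vertex of $\cA(G)$, makes every incident segment violate condition 3. Your Step 2 tacitly assumes this does not happen, and the ``moreover'' clause needs the same tacit assumption.

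\textbf{The missing step is a local move of such vertices before Step 2.}
\begin{itemize}
\item Since $\dot v$ is not a vertex of a polygon of $H$, its parent in $H$ is, near $\dot v$, a half-plane or the whole plane.
\item Move $\dot v$ to a nearby point $\dot v'$ in the interior of that parent, avoiding every vertex of $\cA(G)$ and the finitely many bad lines.
\item Replace the first segment $\dot v b$ of each incident curve by $\dot v' b$.
\end{itemize}
The old segments meet the rest of the drawing of $\pl H$ only at their endpoints, and they leave $\dot v$ in distinct directions. The new segments are uniformly close to the old ones. So for $|\dot v\dot v'|$ small enough this creates no crossings and preserves the rotation at the vertex, hence the isomorphism. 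It also keeps the initial pieces inside the interior of the parent (in $H$, hence in $G$) and leaves the number of bends unchanged. After this, your bend-point perturbation goes through.

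A minor point in Step 1: the ``minimum distance between features'' is $0$ precisely in the degenerate configurations you are trying to remove, namely a bend point on some $\bd w$ or at a vertex of $\cA(G)$. So $r$ must be defined using only pairs of features at positive distance.
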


We can convert between wireframe and object frame representations using the following lemmas.

\begin{lemma}\label{lem:obj2wire}
Let $G$ be an $n$-vertex fair intersection graph hosting an object frame $H$, and for each $v\in V(H)$ let $\dot v\in v$ be a given designated point that is not a polygon vertex for any polygon of $V(H)\cup V(G)$,  and not on the shared boundary of any pair of objects $v,w\in V(H)$. Then in polynomial time we can build a wireframe $\pl H$ of $G$ on the vertex set $\{\dot v \mid v\in V(H)\}$ isomorphic to $H$ where $\compl(\pl H)\leq n^2\cdot \compl(H)$.
\end{lemma}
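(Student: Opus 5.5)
The plan is to build the wireframe edge by edge directly from the geometry of the object frame. By Definition~\ref{def:objectframe}, the polygons of $H$ are simple, they pairwise meet in a single segment of positive length or not at all, and no three share a point. So $H$ is the intersection graph of internally disjoint polygons whose contacts are segments. For each edge $vw\in E(H)$, fix a point $m_{vw}$ in the relative interior of the segment $v\cap w$. Choose it so that it is not a vertex of any polygon in $V(H)\cup V(G)$; this is possible because the segment has positive length.

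The curve $\gamma_{\dot v\dot w}$ will be the concatenation of a polygonal path from $\dot v$ to $m_{vw}$ inside $v$ and a path from $m_{vw}$ to $\dot w$ inside $w$. The key steps are as follows.

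\begin{enumerate}
\item Inside each polygon $v$, route polygonal paths from $\dot v$ to the points $m_{vw}$ for all neighbours $w$, so that these paths are pairwise disjoint except at $\dot v$.
\begin{itemize}
\item Since $v$ is a simple polygon, it is homeomorphic to a disk and the $m_{vw}$ lie on its boundary in some cyclic order.
\item A star from an interior point to distinct boundary points can always be drawn non-crossing inside a disk. Concretely, triangulate $v$ (or a slightly shrunk copy) and route along the dual tree plus short connectors.
\item Because $\dot v$ is not on any shared boundary, if it lies on $\bd v$ it still avoids all the $m_{vw}$, and a small adjustment handles that case.
\end{itemize}
\item Combine the stars. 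Different polygons have disjoint interiors, and the $m_{vw}$ are distinct points, because three polygons never share a point and each shared segment gets its own point. Hence curves of different edges meet only at common endpoints, so $\pl H$ is a plane graph isomorphic to $H$.
\item Check the wireframe conditions with respect to $G$.
\begin{itemize}
\item Each vertex $\dot v$ lies in $v\subseteq p(v)$.
\item The part of the curve inside $v$ lies in $v\subseteq p(v)$. The curve must lie in the interior of $p(v)$, except at the switching point. Since $v\subseteq p(v)$ and the curve runs through the interior of $v$ except near $m_{vw}$, the only care needed is at $m_{vw}$. There I take the switching parameter $t$ at $m_{vw}$, and I note that $m_{vw}$ lies in $p(v)\cap p(w)$. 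If $p(v)=p(w)$, the whole curve lies in $\inter p(v)$ provided $m_{vw}$ is interior to $p(v)$; I would choose $m_{vw}$ off $\bd p(v)$, which is possible unless the whole segment lies on $\bd p(v)$, a degenerate case I would rule out using the fairness assumptions.
\item Condition~3 of Definition~\ref{def:wireframe} is obtained by the perturbation principle, Observation~\ref{obs:perturb}.
\end{itemize}
\end{enumerate}

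The main obstacle is the complexity bound $\compl(\pl H)\le n^2\compl(H)$, together with keeping the curves piecewise linear while staying clear of the arrangement of $G$. Inside $v$, the routing via a triangulation of $v$ costs $O(\compl(v))$ bends per path. However, to be in the interior of the relevant $p(\cdot)$ and to cross each $\bd u$, $u\in V(G)$, only finitely often, the curves may need to bend around features of $\cA(G)$ restricted to $v$. I expect to bound this by first routing in the triangulation of $v$ and then performing the perturbation, which keeps the complexity unchanged. The factor $n^2$ is a generous allowance: it covers, for instance, refining each segment where it passes arrangement vertices, at most $O(n^2)$ per segment in the worst case. Summing over all edges of $H$, whose number is linear in $|V(H)|$ by planarity, and over the polygon complexities gives $\compl(\pl H)\le n^2\compl(H)$. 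All steps (triangulation, routing, perturbation) are polynomial-time.
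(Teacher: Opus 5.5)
Your proposal follows essentially the same route as the paper. You triangulate each polygon $v$, route the half-edge curves from $\dot v$ across the diagonals that separate $\dot v$ from $v\cap w$ to an interior point $\dot x_{vw}$ of the segment $v\cap w$, and then apply Observation~\ref{obs:perturb}; the paper makes the star non-crossing by placing the crossing points on each diagonal in the order in which the neighbours of $v$ appear along $\bd v$. Two small remarks: the degenerate case $p(v)=p(w)$ with $v\cap w\subseteq \bd p(v)$ cannot occur, since $v$ and $w$ lie on opposite sides of their shared segment, so its relative interior lies in $\inter(v\cup w)\subseteq \inter p(v)$; and no refinement at arrangement vertices is needed for the complexity bound, because each half-edge already has at most $\compl(v)-1$ bends and the perturbation adds none.
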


\begin{proof}
For each edge $vw\in E(H)$ we wish to fix a simple polygonal curve $\gamma(\dot v \dot w)$ with at most $\compl(v)+\compl(w)$ vertices from $\dot v$ to $\dot w$ that is covered by $v\cup w$ and intersects the segment $(\bd v\cap \bd w)$ in exactly one internal point. The object frame properties ensure that the objects neighboring $v$ will intersect $v$ in exactly one segment somewhere along $\bd v$.

\begin{figure}[t]
\centering
\includegraphics{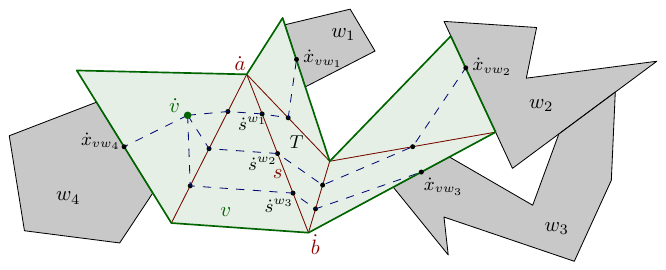}
\caption{The green polygon $v$ is triangulated using the red edges. We place points on the segment $s=\dot a \dot b$ according to the appearance of the neighbors of $v$ along $\bd v$. The construction creates the blue dashed half-edge-curves: these will form the wireframe edges incident to $\dot v$ together with their other halves in the neighboring polygons $w_i$.}\label{fig:obj2wire}
\end{figure}

Fix some triangulation of $v$, and consider each segment $s=\dot a \dot b$ of this triangulation that is not on the boundary of $v$, see \Cref{fig:obj2wire}. We will also assume that $\dot v \not \in s$ and handle this case near the end of the proof. Note that $s$ splits $v$ into two polygons; let $v_s$ denote the polygon among these that does not contain $\dot v$. Let $w_1,\dots,w_{k_s}$ be the sequence of neighbors of $v$ in order of their appearance on $\bd v_s\setminus s$ when traversing $\bd v_s\setminus s$ from $\dot a$ to $\dot b$.
We subdivide $s$ into $k_s+1$ equal length segments, and label the endpoints from $\dot a$ to $\dot b$ with $\dot a, \dot s^{w_1},\dots,\dot s^{w_{k_s}},\dot b$. Furthermore, for each $w\in N_H(v)$ we fix some point~$\dot x_{vw}$ that is an internal point of the segment~$v\cap w$. We can now define the first half of the edge curve of~$\dot v \dot w$ from $\dot v$ to $\dot x_{vw}$ as the piecewise linear curve given by the vertex sequence $\dot v,\dot s^w_1,\dot s^w_2,\dots, \dot x_{vw}$, where $s_1,s_2,\dots$ are the unique sequence of triangulation edges separating $v$ from the triangle containing~$v\cap w$.  (Note that when $v$ is a triangle, then this curve consists of a single segment from $\dot v$ directly to~$\dot x_{vw}$)
Notice that there are at most $\compl(v)-3$ internal edges in the triangulation, thus the half-edge described above has at most $\compl(v)-1$ vertices; together with the other half of the edge in $\dot w$, this gives at most $\compl(v)+\compl(w)-3$ vertices (as $\dot x_{vw}$ appears on both half-edges).

It remains to show that this results in a plane graph, i.e., that there are no intersections between edges. Suppose the contrary, there is an intersection point. Notice that no segment of an edge curve will align with a triangulation edge, and the edge curves use different vertices. We can also notice that each half-edge-curve consists of segments in distinct triangles, thus no self-intersection occurs on these curves. Thus an intersection can only happen between two edges connecting distinct boundary points of some triangle $T=\Delta(\dot a \dot b \dot c)$ of the triangulation of $v$. Let $s=\dot a \dot b$ be the edge of $T$ that separates $\dot v$ from the the interior of $T$. Let $w_1,w_2$ be the neighbors of $v$ whose half-edge-curves intersect in $T$, and assume without loss of generality that $w_1$ appears before $w_2$ on $\bd v_s\setminus s$ when traversing $\bd v_s\setminus s$ from $\dot a$ to $\dot b$.
If both $v\cap w_1$ and $v\cap w_2$ are separated from $\dot v$ by $s':=\dot a\dot c$, then the segments $\dot s^{w_1}(\dot s')^{w_1}$ and $\dot s^{w_2}(\dot s')^{w_2}$ cannot intersect as $w_1\cap v$ and $w_2\cap v$ will appear in the same order on $\bd v_{s'}\setminus s'$ as it did on $\bd v_s\setminus s$, so these points appear in the same order on the segments $\dot a \dot b$ and $\dot a \dot c$. The analogous argument works if both $v\cap w_1$ and $v\cap w_2$ are separated from $\dot v$ by $s'':=\dot c\dot b$. Finally, due to the order of appearance of $v\cap w_1$ and $v\cap w_2$ on $\bd v$ the only remaining option is that $v\cap w_1$ is separated from $\dot v$ by $s'=\dot a \dot c$ and $v\cap w_2$ is separated from $\dot v$ by $s''=\dot c \dot b$. Again this means that $\dot s^{w_1}(\dot s')^{w_1}$ and $\dot s^{w_2}(\dot s'')^{w_2}$ are disjoint.

Recall that the above proof assumed that $\dot v$ does not fall on any internal edge of the triangulation of $v$. If this happens, then we can simply remove the edge of the triangulation containing $\dot v$ and proceed with the proof unchanged: if there is some $w\in N_H(v)$ that intersects $\bd v$ on one of the triangles incident to $\dot v$, then $\dot v$ will be directly connected to $\dot x_{vw}$ inside that triangle, and such half-edge-curves cannot introduce intersections with other edges starting from $\dot v$.

The resulting plane graph $\pl H$ is a wireframe of $H$. We apply the perturbations of \Cref{obs:perturb} to ensure that the result also satisfies condition 3 of wireframes with respect to $G$. This results in a wireframe of complexity at most $\compl(\pl H)< E(H)\cdot \compl(H)< n^2\cdot \compl(H)$.
\end{proof}

\begin{lemma}
Let $G$ be an $n$-vertex fair intersection graph hosting a wireframe $\pl H$. Then in polynomial time we can construct an object frame $H$ isomorphic to $\pl H$ of complexity $\compl(H)\leq n^2\cdot \compl(\pl H)$. 
\end{lemma}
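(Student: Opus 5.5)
The plan is to thicken the wireframe into thin polygons, each confined to its parent object. For each vertex $\dot v\in V(\pl H)$ we build a polygon $v$ that lives inside $p(\dot v)$ and consists of two parts. The first part is a tiny disk-like polygon around $\dot v$. The second part is, for each incident edge $\dot u\dot v$, a thin neighborhood of the half-curve $\gamma_{\dot u\dot v}[0,t]$ on the side of $\dot v$. The two halves of $\gamma_{\dot u\dot v}$ meet at the split point $\gamma(t)$. This point lies in $p(\dot u)\cap p(\dot v)$, and we may assume it is an interior point of both objects after slightly moving $t$. Then a short segment through $\gamma(t)$, transverse to the curve, serves as the common boundary of the two thickened halves. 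When $p(\dot u)=p(\dot v)$ the whole curve lies in the interior, so we just pick an interior split point.

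The construction proceeds in four steps, all in polynomial time.
\begin{enumerate}
\item Compute $\epsilon>0$ smaller than half the minimum distance between any two non-adjacent features. The features are edge curves of $\pl H$ that share no endpoint, curve segments and the object boundaries $\bd p(\cdot)$ away from the points where a curve crosses the boundary, and distinct vertices. This quantity is a minimum over finitely many segment pairs, so it is positive and computable.
\item For each half-curve, take the polygon obtained by offsetting every segment by $\epsilon' \ll \epsilon$. Use a rectangle per segment, glued at the bends, and cap it with a square at the split point. The half-curve is contained in the open interior of its parent (condition 2 of \Cref{def:wireframe}), and its distance to $\bd p$ is bounded below by compactness. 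Choosing $\epsilon'$ below that bound keeps the offset inside the parent.
\item Around each vertex $\dot v$, place a small regular polygon of radius $\epsilon'$ and attach the incident strips to it. This requires care when $\dot v$ lies on $\bd p(\dot v)$, which condition 1 allows. In that case we first move the vertex slightly into the interior along one of its incident curves, which is possible because the curves near $\dot v$ lie in the interior. The result is a simple polygon as long as the strips leave $\dot v$ in distinct directions and stay disjoint afterwards; planarity of $\pl H$ together with $\epsilon'\ll\epsilon$ guarantees this.
\item Verify the object-frame conditions. For condition 1, each $v\subseteq p(\dot v)$ and $v$ is simple. For condition 2, two polygons meet exactly in the transverse cut segment when they are adjacent in $\pl H$, and are disjoint otherwise by the choice of $\epsilon$. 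For condition 3, no three polygons meet, because cut segments of different edges are far apart. This gives isomorphism to $\pl H$. If $\pl H$ has parallel edges, the corresponding polygons would touch in two segments. We would rule this out by treating $\pl H$ as simple, or by using only one of the parallel curves.
\end{enumerate}

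For the complexity bound, each half-curve with $k$ vertices produces $O(k)$ polygon vertices, and each vertex disk adds $O(\deg)$. Summing gives $O(\compl(\pl H))$, which is well below $n^2\compl(\pl H)$. Once rounding the coordinates of $\epsilon'$ is accounted for, this slack absorbs any constants.

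The main obstacle is the third step: ensuring that the thickened star around a vertex near the boundary of its parent is a simple polygon that stays inside the parent. Equally delicate is ensuring that the offsets of consecutive segments at sharp bends do not self-overlap. My first approach is to take the offset to be the Minkowski sum of the polyline with a small square, clipped to a sufficiently small $\epsilon'$ determined by the minimum angle and segment length. This yields a simple polygon, because a polyline that is non-self-intersecting at positive distance from itself has a simply connected $\epsilon'$-neighborhood once $\epsilon'$ is small enough.
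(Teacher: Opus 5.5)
Your proposal is correct and follows essentially the same route as the paper: cut each edge curve at a split point lying in both parents, thicken the star of half-edge curves at each wireframe vertex into a thin simple polygon inside its parent, and let adjacent polygons share a short side through the split point. Your extra care goes beyond the paper's proof, which simply assumes $\dot v\in\inter p(\dot v)$: you place the split point in the interior of both parents and handle vertices lying on $\bd p(\dot v)$. This makes the argument more complete, and your $O(\compl(\pl H))$ complexity bound is stronger than what the statement requires.
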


\begin{proof}
Consider cutting the curve of each edge $\gamma(\dot v \dot w)$ of $\pl H$ at some point $\dot x_{\dot v \dot w}$ such that the portion of the edge curve from $\dot v$ to $\dot x_{\dot v \dot w}$ is covered by $p(\dot v)$ and the rest is covered by $p(\dot w)$; let us call these parts half-edge curves. The star at $\dot v$ is defined as the union of the half-edge curves incident to $\dot v$. Since $\deg(\dot v)\leq n-1$ and $\dot v \in \inter p(\dot v)$, we can construct a polygon $u_{\dot v}$ of complexity $n\cdot \compl(\pl H)$ for $\dot v$ in some small $\eps$-neighborhood of the star of $\dot v$ such that $u_{\dot v} \subset p(\dot v)$ and for each of the points $\{\dot x_{\dot v \dot w}\mid \dot w \in N_{\pl H}(\dot v)\}$ there is a side $s_{\dot v \dot w}\subset p(\dot v)\cap p(\dot w)$ of $u_{\dot v}$ which contains $\dot x_{vw}$ as an inner point. It is straightforward to check that if $\eps$ is small enough, then these polygons are either disjoint or intersect in segments $s_{\dot v \dot w}$, thus they form an object frame of $G$ that is isomorphic to $\pl H$. The resulting object frame $H$ has complexity at most $\compl(H)\leq n^2\cdot \compl(\pl H)$.
\end{proof}

Finally, we will often restrict our intersection graphs to various regions as follows.

\begin{definition}[$\re R$-restriction]
Let $G$ be a fair interection graph, and let $\re R$ be a simple polygon such that for any $v\in V(G)$ each component of the intersection $v\cap R$ is a simple polygon. Then the \emph{$\re R$-restriction} of $G$ is the intersection graph induced by the simple polygons $\bigcup_{v\in V(G)} \ccomp(v\cap \re R)$, where $\ccomp(X)$ is the set of connected components of $X$. We denote the $\re R$-restriction of $G$ by $G|_{\re R}$. For a vertex $v \in G|_{\re R}$ we denote by $\hat v$ the vertex of $G$ such that $v \in \ccomp(\hat v \cap R)$.
\end{definition}

For a wireframe $\pl H$ the \emph{drawing} of $\pl H$, denoted by $\gamma(\pl H)$, is the union of all curves in the fixed plane representation of $\pl H$.

If $\re R$ is a region that is bounded by a wireframe cycle $\wf R$ of $G$, then there is a natural restriction of the parent relation $p(.)$ to $\re R$: for $\dot v \in \wf R$ we define $p|_{\re R}(\dot v)$ to be the unique object $v \in V(G|_{\re R})$ such that $\dot v \in v$ and $\hat v = p(\dot v)$. We say that a wireframe $\wf W$ of $G$ is a \emph{boundaried wireframe} of $R=G|_{\re R}$ if its representation is in the closed region $\re R$ and it has no vertices on $\bd \re R$ except the vertices of $\wf R$.

\section{Constructing a connectivity-preserving object frame}
\label{sec:connectivity}

We start by defining an intersection graph that is a mix between a fair intersection graph and an object frame.
An intersection graph is \emph{fair-frame} if its objects are simple polygons, and for any pair $u,v$ of objects, either (\textit{i}) each component of $u\cap v$ is a simple polygon or (\textit{ii}) $u\cap v$ is a segment of positive length. 

In order to prove our next theorem, we need the following technical lemma, which can be considered as a conversion step from a fair graph to an object frame.

\begin{lemma}\label{lem:polygonsubtract}
Let $G$ be a fair-frame graph with a fixed vertex $v$, and consider the connected components of $G-v$. Let $U_1,\dots,U_k$ denote the unions of the objects in these connected components, and assume that $k\geq 2$. Then $v$ has a collection $v_1,\dots,v_t$ of subpolygons for some $t\leq k-1$ such that $V(G-v)\cup\{v_1,\dots,v_t\}$ induces a connected fair-frame graph where all $v_i$ are pairwise disjoint, and for each $v_i$ and $u\in V(G)-v$ the intersection $v_i\cap u$ is either empty or a segment of positive length. Finally, the new objects have total complexity $\sum_{i=1}^t \compl(v_i) \leq 3\compl(v\setminus (\bigcup_{j=1}^k U_k))$.
\end{lemma}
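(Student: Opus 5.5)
The plan is to work inside the free part of $v$, namely $F:=v\setminus \bigcup_{j=1}^k \inter U_j$, and to carve out of it thin polygons that act as bridges between the components. First we analyse the combinatorics of how $v$ meets the $U_j$. Define an auxiliary bipartite "incidence" structure. Let $\cC$ be the set of connected components of $\Clo(F)$ (more precisely, of $v$ minus the interiors of the $U_j$). Join a component $D\in\cC$ to index $j$ if $D$ touches $U_j$ along a boundary arc of positive length. By fairness, each component of $v\cap u$ is a simple polygon, or, in the fair-frame case, a segment. Hence the pieces of $\bd U_j$ inside $v$ are polygonal arcs.

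Next we show this bipartite graph is connected on the $j$-side. The graph $G$ is connected through $v$ (otherwise the components $U_j$ would not all be joined via $v$; we will assume $G$ is connected, as the statement implicitly requires). Moreover, $v$ is a connected polygon. So any path inside $v$ from a point of $U_1$ to a point of $U_j$ alternates between stretches in some $U_i$ and stretches in components of $F$. Since the $U_i$ are pairwise disjoint, consecutive stretches in distinct $U_i$ must be separated by a stretch in $F$. The non-degeneracy assumptions (no three edges through a point) let us perturb the path so that it crosses boundaries transversally at arc interiors.

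We then take a spanning tree of this bipartite graph restricted to a minimal set of $F$-components that connects all $k$ indices. Pruning leaves gives a tree in which every $F$-component vertex has degree at least 2. Let $D_1,\dots,D_t$ be these components. Because in a tree the number of non-index vertices of degree $\ge 2$ is at most the number of index vertices minus $1$, we get $t\le k-1$. For each $D_i$ we set $v_i$ to be a simple polygon contained in $D_i$. We obtain it as $D_i$ itself if $D_i$ is simple; otherwise we cut $D_i$ along diagonals to open its holes into a simply connected polygon, keeping at least one positive-length boundary segment against each of its tree neighbours $U_j$. Then $v_i\cap u$ for $u\in V(G)-v$ lies on $\bd v_i\cap \bd u$, since $v_i$ avoids $\inter u$. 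We trim $v_i$ slightly near its boundary so that this intersection is a single segment of positive length or empty. The complexity bound $\sum \compl(v_i)\le 3\compl(F)$ comes from charging: cutting holes open adds at most two vertices per hole vertex, and trimming the contacts replaces each contact arc by one segment, so every vertex of $F$ is charged at most three times.

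The main obstacle is the last step: ensuring that each $v_i\cap u$ is a single segment (not several segments or a point contact) while $v_i$ still touches every neighbour $U_j$ it must connect to. A single $u$ may meet $D_i$ along several arcs. The planned fix is to shrink $v_i$ inward except along one chosen sub-segment of one contact arc per required neighbour object. Because the shrinking is done by a small offset, the $v_i$ remain simple, remain pairwise disjoint, and avoid all other contacts. This makes all other intersections empty while preserving connectivity of $V(G-v)\cup\{v_1,\dots,v_t\}$.
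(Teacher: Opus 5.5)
Your overall architecture is the paper's. You build a bipartite graph between free pieces of $v$ and the indices $j$, take a spanning tree with free-piece leaves pruned, and get $t\le k-1$ from the degree count. You then carve a boundary-following subpolygon in each kept piece touching one chosen sub-segment per tree neighbour, and bound the complexity by $3\times$ via charging. Your explicit slitting of holes is more careful than the paper, which glosses over holes. But there is a concrete gap in what you take as the pieces, and the carving step fails because of it.

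You use the connected components of the closed set $F=v\setminus\bigcup_j \inter U_j$. Such a component can consist of several two-dimensional lobes glued only along points or segments of $\bd U_j\cap\bd v$. Then no simple polygon inside it can touch all its tree neighbours.

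Here is an example, which is fair-frame though not fair. Take $v=[0,1]^2$, $u=[0.4,0.6]\times[0,1]$ (so $u\cap v=u$), $w_2=[-0.2,0.1]\times[0.4,0.6]$ and $w_3=[0.9,1.2]\times[0.4,0.6]$, so $k=3$.

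Your $F$ contains the segments $[0.4,0.6]\times\{0\}$ and $[0.4,0.6]\times\{1\}$, because they are not in $\inter u$. Hence $F$ is a single component adjacent to $U_1,U_2,U_3$, your tree is a star, and $t=1$.

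However, $\inter F$ is the disjoint union of the open left and right lobes. Any simple polygon $P\subseteq F$ has connected interior, which lies in one lobe, so $P$ lies in the closure of that lobe. That closure misses $w_3$ or $w_2$, so the output graph is disconnected. Neither slitting "holes" (here $\inter u$ is a bounded complementary component of $F$) nor trimming can fix this. The same failure occurs when $u\subset v$ touches $\bd v$ only at two of its vertices.

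The fix is the paper's choice of units. Use the connected components of $\inter v\setminus\bigcup_j U_j$, removing the closed sets, i.e.\ the $\{v\}$-labelled regions. Join such a region to $j$ when its closure shares a positive-length segment with $\bd U_j$; this also covers the degenerate contacts where $U_j$ meets $v$ only along $\bd v$.

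Your connectivity argument survives with this change: you do not need ``no three edges through a point'', only that a path in $\inter v$ can be moved off the finitely many arrangement vertices. Your counting is also unchanged. Each open region does contain a simple polygon touching one generic chosen sub-segment per neighbour, which in the example correctly yields $t=2$.
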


\begin{proof}
Consider the drawing of the boundaries of $\bd v$ and $\bd U_i$ for all $i$. These boundaries slice the plane into several connected regions, see \Cref{fig:subpolyslicer}. Notice moreover that in the fair-frame graph it is possible that $\bd U_i\subset \bd v$; in this case we think of the shared segment as a distinct degenerate region. We label each region with the set of objects among $v,U_1,\dots,U_k$ that cover the region. Thus there are at most $2k+2$ distinct labels. These regions then define a planar graph $H$ whose vertices correspond to regions and edges correspond to region pairs whose shared boundary contains a segment of positive length. Since $v$ is connected, we have that the planar graph $H$ restricted to the regions inside $v$ is also connected; let $H_v$ denote this graph. Notice that since each $U_i$ intersects $v$, the graph $H_v$ must contain a vertex of label $\{v,U_i\}$ for each $U_i$. Next, for each $U_i$, we identify the set of vertices in $H_v$ that have the label $\{v,U_i\}$ into a single vertex, and denote by $u_i$ the newly created vertex. We remove any multiple edges created by the operation. (We note that loops are not created by the identification as regions of equal label are not adjacent.) Consider some spanning tree of the resulting graph, and remove all leaves of label $\{v\}$, resulting in the tree~$T$.
  
\begin{figure}[t]
\centering
\includegraphics[width=\textwidth]{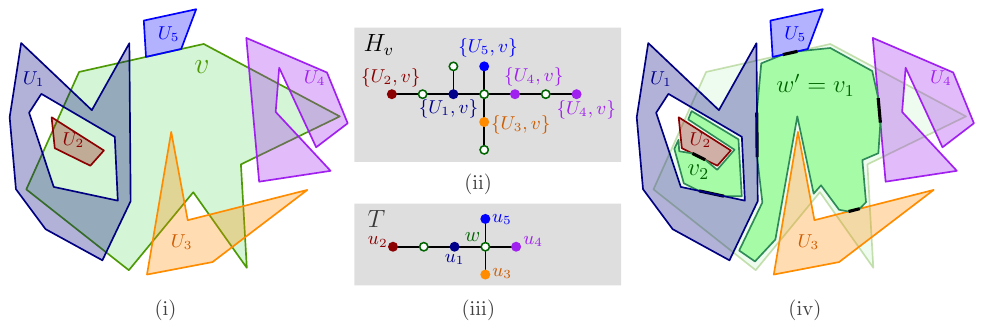}
\caption{(i) The setup of \Cref{lem:polygonsubtract}: the connected components of $G-v$ cover the regions $U_i$. (\textit{ii}) The graph $H_v$ labelled with the regions covering each vertex. Vertices of label ${v}$ are denoted by empty green circle nodes. (\textit{iii}) The tree $T$ obtained by identifying labels $\{U_i,v\}$ for each $i$ and removing leaves of label $\{v\}$. (\textit{iv}) The subpolygons $v_1,v_2$ replacing $v$ and keeping connectivity.}\label{fig:subpolyslicer}
\end{figure}

We claim that $T$ has at most $k-1$ vertices that are labeled with $\{v\}$. Indeed, $T$ has at most $k$ leaves (namely, each vertex $u_i$ may be a leaf), and a vertex of label $\{v\}$ is adjacent only to new vertices, while new vertices are only adjacent to vertices of label $\{v\}$, as the regions $U_i$ are pairwise disjoint. Consequently, $T$ is bipartite where one part consists of the vertices of label $\{v\}$, and each of these vertices have degree at least two, while the other part consists of the $k$ vertices $u_i$ for $i=1,\dots,k$. This means that if there are $t$ vertices with label $\{v\}$, then the edge count is $t+k-1\geq 2t$, which implies $t\leq k-1$, as claimed.

Let $w$ be a vertex of $T$ with label $\{v\}$, and for each adjacent vertex $u_j$ we fix a positive length segment $s_j$ that is on the intersection of the region of $w$ and (one of) the region(s) corresponding to $u_j$. (More precisely, we choose $s_j$ to be a segment inside $v$ that is part of the boundary of a single object in $U_j$.) Notice that the region of $w$, denoted by $\re R(w)$ is a connected component in $(v\setminus (\bigcup_{j=1}^k U_k))$. Observe that $\re R(w)$ contains a subpolygon $w'$ that has the middle third of each of these $s_j$ segments on its boundary, and it is otherwise disjoint from the boundary of the region of $w$. We will select $w'$ so that it follows the boundary of $\re R(w)$ in the interior of $\re R(w)$, while its complexity is at most three times the complexity of the region of $w$. We select at most one vertex for $w'$ for each vertex of $\re R(w)$ and at most two vertices for $w'$ for each edge of $\re R(w)$, as illustrated in \Cref{fig:subpolyslicer}. It is routine to check that the subpolygons $v_1,\dots,v_t$ defined this way for each of the $t$ vertices of $T$ labeled with $\{v\}$ are pairwise disjoint, their intersections with all vertices of $G-v$ satisfy object frame property 2, and the resulting intersection graph is fair-frame. Moreover, we have $\sum_{i=1}^t \compl(v_i) \leq 3\compl(v\setminus (\bigcup_{j=1}^k U_k))$.
\end{proof}

\begin{theorem}\label{thm:tracespanningtree}
Let $G$ be a fair intersection graph, and let $X\subset V(G)$ be a set of terminals. Moreover, each terminal $x\in X$ has a designated point $\dot x\in x$ in the plane such that the designated points of distinct terminals do not coincide. Then there exists an object frame $G'$ of $G$ where for each $x\in X$ there is a corresponding terminal vertex $x'$ in $V(G')$ (where $p(x')=x$), and the following hold:
\begin{enumerate}
\item For each $x\in X$ we have $\dot x \in x'$.
\item Terminals $u,v\in X$ are in the same connected component of $G$ if and only if $u',v'$ are in the same connected component of $G'$.
\item $|V(G')| - |V(G)|\leq 16|X|$.
\item The object frame $G'$ is a forest.
\item $\compl(G')=O(\compl(\cA(G)))$.
\end{enumerate}
Finally, when $|X|\leq 2$, then $|V(G')|\leq |V(G)|$.
\end{theorem}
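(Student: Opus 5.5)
Since each component can be handled independently, I first reduce to the case where $G$ is connected (and contains at least one terminal; components without terminals are discarded). The basic tool is \Cref{lem:polygonsubtract}, applied one object at a time. The plan is to process the objects of $G$ in some order. Each current object $v$ is either replaced by a collection of pairwise disjoint subpolygons or shrunk, so that in the end every pair of objects meets in a segment (object frame property 2). The connectivity between terminals must be preserved throughout.

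Before processing, I protect the terminals. For each $x\in X$, I take a tiny polygon $x'\subseteq x$ around $\dot x$. These polygons are pairwise disjoint; if $\dot x$ lies in other objects, I first carve those objects slightly so that only $x$ covers a small neighbourhood of $\dot x$. The remainder of $x$ is treated as an ordinary object. Attaching $x'$ to the rest is done by a thin corridor, which costs $O(1)$ extra objects per terminal.

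Next, I turn the graph into a frame. I repeatedly pick an object $v$ that still has a ``fat'' (polygonal) intersection with some other object, and apply \Cref{lem:polygonsubtract} to $v$ within its current component. The lemma gives pairwise disjoint subpolygons $v_1,\dots,v_t$ with $t\le k-1$. These keep the component connected and meet the other objects only in segments. If $k=1$, I instead simply shrink $v$ to $v\setminus\bigcup U$ plus a connecting piece, or delete $v$ if it is unnecessary. Since each step only removes area, the fair-frame property is maintained. To bound the count, I charge the extra $t-1\le k-2$ new objects against the drop in the number of components. The process then yields a connected object frame with at most $|V(G)|+O(|X|)$ vertices.

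Finally, I make the frame a forest with few extra objects. I take the intersection graph $H$ of the frame, which is planar, and restrict to a minimal connected subgraph containing the terminals, i.e.\ a Steiner tree. Deleting objects never increases the count, and it keeps property 2 for the terminal pairs that must stay connected. Property 3 (no triple intersections) is achieved by a local perturbation at triple points, shrinking one polygon slightly near the common point. The complexity bound of item 5 follows because every new polygon boundary runs along edges of $\cA(G)$ with a constant-factor overhead, as in the $3\compl$ bound of the lemma.

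The main obstacle is the counting in item 3: showing that the total number of extra subpolygons created over all applications of \Cref{lem:polygonsubtract} is $O(|X|)$ rather than $O(|V(G)|)$. My approach is an amortization. After pruning to a minimal tree, an object that splits into $t\ge2$ pieces must have each piece lie on a distinct path to terminals. Every piece beyond the first therefore corresponds to a branching of the Steiner tree, and a tree with at most $|X|$ leaves has fewer than $|X|$ branch vertices of degree at least $3$. The constant $16$ should absorb the terminal-protection gadgets and the corridors. For $|X|\le2$, the minimal connected structure is a path, so every object is used in at most one piece and no extra objects arise. This gives $|V(G')|\le|V(G)|$.
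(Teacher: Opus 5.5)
\textbf{The gap is in item~3, at the step you yourself single out as the crux.} The claim that every piece beyond the first corresponds to a branching of the final tree is false. So is the idea of charging the $t-1$ extra pieces to a drop in the number of components: the lemma keeps the component connected, so nothing drops.

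Here is a concrete failure with $|X|=3$. Take three disjoint terminals and nested bars $v^{(1)},\dots,v^{(m)}$, each shorter than the previous one.
\begin{itemize}
\item $x_1$ and $x_3$ cover the two ends of $v^{(1)}$.
\item For $i<m$, the bar $v^{(i+1)}$ covers the middle of $v^{(i)}$. Two small squares $c^{(i)}_1,c^{(i)}_3$ each cover one end of $v^{(i+1)}$ together with the adjacent stretch of $v^{(i)}$.
\item $x_2$ covers the middle of $v^{(m)}$.
\end{itemize}
Each bar then has exactly two uncovered gaps, one near each end. Suppose you treat $v^{(1)},v^{(2)},\dots$ in this order, which your ``some order'' allows. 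At each step the three components of $G-v^{(i)}$ contain $x_1$, $x_3$, and (the inner one) $x_2$, so nothing can be pruned, and \Cref{lem:polygonsubtract} must return both gaps as pieces.

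The output is the path $x_1,p^{(1)}_1,c^{(1)}_1,p^{(2)}_1,\dots,p^{(m)}_1,x_2,p^{(m)}_2,\dots,c^{(1)}_3,p^{(1)}_2,x_3$. Every object on it is needed and there is no branching object at all, yet it has $|V(G)|+m$ objects. So many original objects can have two pieces far apart on a single terminal-to-terminal path, and no injective charge to branchings exists.

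\textbf{The missing idea is a consolidation step (the paper's M6).} Suppose two subpolygons of the same original object $v$ are joined in the tree by a path whose first two internal objects $u_1,u_2$ are non-terminals of rank~$2$. Then delete $u_1,u_2$, and reconnect the two resulting sides by one new subpolygon of $v$. This subpolygon comes from \Cref{lem:polygonsubtract} with $k=2$, and is merged with any incident pieces of $v$.
\begin{itemize}
\item Each such step strictly lowers the object count.
\item Once the step is exhausted, every original object has at most one subpolygon at distance at least $3$ from all terminals and branching objects.
\item The tree has only terminal leaves, hence at most $|X|-2$ branchings, so at most $15|X|$ objects lie within distance $2$ of a terminal or a branching.
\end{itemize}
Together these give the bound $|V(G)|+16|X|$.

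\textbf{Two smaller points also need repair.}

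First, pruning to a minimal connected set of frame objects does not yield a forest, because the kept objects still induce all their intersections. For example, three terminals pairwise sharing segments around a small hole form a triangle. You must additionally shrink objects to remove shared segments on cycles, as in the paper's M5, without uncovering a designated point $\dot x$.

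Second, your argument for $|X|\le 2$ does not hold for the procedure as you describe it.
\begin{itemize}
\item Your terminal gadget ($x'$, plus the remainder of $x$, plus a corridor) already adds objects.
\item The lemma returns two pieces whenever $G-v$ has a terminal-free component lying between the others. With $x,y$ covering the ends of a bar $v$ and $w$ covering its middle, treating $v$ first gives the path $x,p_1,w,p_2,y$: five objects from four.
\end{itemize}
The paper avoids both problems. With two terminals it creates no new terminal polygons; intersecting terminals are handled directly by a two-object frame. It also deletes terminal-free components of $G-v$ and rank-$1$ objects (M1, M2) before splitting. The lemma is then applied with $k$ equal to the rank, so $k\le 2$ and $t\le 1$.
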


\begin{proof}
If $|X|=1$, then the object frame consisting of the single terminal
$x\in X$ satisfies all the conditions. For the rest of the proof, assume $|X|\geq 2$.
  
We will define a procedure that creates the desired graph $G'$ using several modifications to $G$; for convenience, we refer to the current graph as $G$, and assume without loss of generality that $G$ is a connected fair-frame graph, as we can do the following procedure on each of its connected components. (In turn, we will show that the resulting object frame $G'$ is a tree.)

It will be convenient to assume in the proof that $X$ consists of pairwise disjoint objects. This can be achieved the following way. If $|X|=2$ and the two objects $x,y\in X$ intersect, then the object frame satisfying the statement of the theorem can be found the following way. If $\dot x\in y$, then we set $x'$ to be a triangle in a small neighborhood of $\dot x$ (so that the triangle is in $x$ and it does not contain $\dot y$), and we set $y'$ to be a subpolygon of $y$ that contains $\dot y$ and shares some boundary segment with this triangle. Otherwise, let $y'=y$ and let  $x'$ be a subpolygon of $x$ that contains $\dot x$, internally disjoint from $y$, and shares a boundary segment with $y$. (This is possible as $x\cap y$ cannot be a  single point in a fair-frame graph.) In both cases, we get the required object frame.


If $|X|\ge 3$ and some terminals intersect, then we modify $X$ the following way. Let us place a new infinitesimally small polygon $x'$ in each polygon $x$, so that the created polygons are pairwise disjoint, and $x'$ covers $\dot x$.
We redefine $X$ to be the set of these new polygons, and consider them the new terminals; note that this increases the vertex count of $G$ by $|X|$ and retains the fairness of the representation, as well as the connectivity between terminals.

We define the rank of a vertex $v\in V(G)\setminus X$ (during any time in this procedure) as the number of connected components of $G-v$ that contain some vertex from $X$; we denote this number by $\rank(v)$. Furthermore, we fix some root vertex $r\in X$.
We then apply the following modifications exhaustively:
\begin{description}
\item[M1] If for some $v\in V(G)\setminus X$ the graph $G-v$ has a component that does not contain any vertex of $X$, then delete all vertices in this component.
\item[M2] If $\rank(v) = 1$ for some $v\in V(G)\setminus X$, then we remove $v$.
\item[M3] If $\rank(v)\geq 2$ for some $v\in V(G)\setminus X$ and some vertex $w\in V(G) - v$ intersects $\inter v$, then let $U_1,\dots,U_k$ be the components of $G-v$, where $U_1$ is the component containing the root $r$. By  \Cref{lem:polygonsubtract}, there exists a collection of at most $k-1$ subpolygons
  $v'_1,\dots,v'_{t}\quad (t\leq k-1)$ of $v$, which are pairwise disjoint, do not intersect the interior of any object of $G-v$, and these polygons
 can be used to connect the components $U_1,\dots,U_k$. We then replace $v$ with $v'_1,\dots,v'_{t}$ in $G$ to retain connectivity and the fair-frame property.
\item[M4] For any point $p\in \Reals^2$ that is on the boundary of least three polygons $v\in V(G)$, we remove a small triangle or concave quadrilateral around $p$ from all but one polygon that contains~it; note that this does not change the realized intersection graph.
\end{description}

After exhausting modifications M1, M2, and M3, the realizing polygons are interior-disjoint. After M4, the resulting intersection graph is an object frame of the original graph. Moreover, each polygon in $V(G)\setminus X$ is a cut vertex, as otherwise it could be removed by M2.

We now continue the procedure with the following modifications.
\begin{description}
\item[M5] If there is a cycle in the graph, then let $u$ and $v$ be two neighboring vertices along the cycle. We remove a small neighborhood of their shared boundary $u\cap v$ from $v$. (Note that if the segment $u\cap v$ contained a terminal point $\dot x$, then $\dot x$ is still covered by $u$.)
\end{description}
Note that after applying M5 exhaustively we get a tree. Finally, we do our final modifications:
\begin{description}
\item[M6] If there are polygons $v'_i,v'_j \in V(G) \setminus X$ that are subpolygons of the same original object $v$ such that the unique shortest path between them $u_0=v'_i,u_1,u_2,\dots, u_t=v'_j$ has the property that $u_1,u_2\in V(G)\setminus X$ and $\rank(u_1)=\rank(u_2)=2$, then we do the following operation. Let $C_{u_0}$ and $C_{u_3}$ be the two components (as sets of polygons) that we get after removing $u_1$ and $u_2$ from the graph. Let $U_0= \bigcup C_{u_0}$ and $U_3= \bigcup C_{u_3}$ denote the unions of these components, and apply \Cref{lem:polygonsubtract} on $v$ and the components $U_0$ and $U_3$. We get that $v$ has a subpolygon $v'$ that connects  $U_0$ and $U_3$. We then remove $u_1$ and $u_2$ and add $v'$. Notice that $v'$ may be incident to $v_i'$ or $v_j'$ (or even both); in such a case we take $v'$ and all incident subpolygons of $v$ and replace them with their union (as a single subpolygon of $v$).
\end{description}

\begin{figure}[t]
\centering
\includegraphics[width=\textwidth]{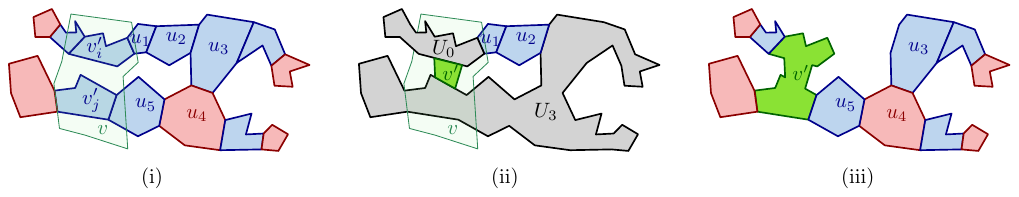}
\caption{Modification M6, where polygons of $X$ are red and other polygons of the tree $G$ are blue. (i) The setup of M6: the path between subpolygons $v'_i$ and $v'_j$ of $v$ passes through $u_1$ and $u_2$ that are of rank $2$ (thus not branching vertices) and not polygons of $X$. (\textit{ii}) The union-objects $U_0$ and $U_3$, with a subpolygon $v'$ of $v$ connecting them. (\textit{iii}) The final polygon $v'$ after incorporating the incident $v'_i$ and $v'_j$. The connectivity is preserved after $u_1$ and $u_2$ are deleted.}\label{fig:subpolyjump}
\end{figure}

Let $G'$ denote the resulting object frame.
Consider the planar subdivision given by the union of the original polygon boundaries. Modifications M1 and M2 do not increase the complexity. Each polygon created by the modifications M3 and M6 is a subpolygon of some connected collection of regions of this subdivision, in particular, each created polygon has at most three times as many vertices as the union of the corresponding set of regions by the complexity bound of \Cref{lem:polygonsubtract}. Modification M4 may double the number of sides of any given polygon, and M5 can be executed without increasing the complexity of the polygons in question. As a result, we have that $\compl(G')\leq \compl(G)+2\cdot 3\compl(\cA(G))=O(\compl(\cA(G))$.

Recall that after applying M5 exhaustively we get a tree. Each leaf of this tree must be a terminal from $X$. Thus, the tree has at most $|X|-2$ vertices of degree at least $3$, and there are at most $2|X|-2$ edge-disjoint paths whose internal vertices have degree $2$ and whose endpoints are leaves or branching in the tree. Consequently, there are at most $8|X|-8+2|X|-2<10|X|$ vertices that are within distance $2$ from a branching point. Furthermore, if $x\in X$ is not a branching point, then there are at most $5$ vertices that are within distance $2$ from $x$ and not within distance two of a branching, so in total there are at most $15|X|$ vertices in $G'$ that are within distance $2$ of a vertex of $X$ or a branching. Thus, after applying M6 exhaustively, each of the $n$ original objects has at most one subpolygon that is at distance at least $3$ from all branching vertices and all vertices from $X$. Consequently, there are at most $n+|X|+15|X|$ polygons. Thus the resulting object frame is a tree with at most $n+16|X|$ vertices, as required.

Finally, notice that when $|X|=2$, then either the terminals intersect and $|G'|=2$, or the the two terminals are disjoint. In this latter case, we did not introduce any new terminals, and by definition the rank of any object is at most $2$. In particular, M2 can replace a vertex $v$ of rank $k=2$ with $k-1=1$ new object, so the number of objects is not increased by the modifications.
\end{proof}

If we apply the above theorem on a shortest path (with the endpoints being the terminals) then we get an isomorphic representation.

\begin{corollary}\label{cor:shortestpathframe}
Let $P$ be a shortest path from $u$ to $v$ in $G$, where $\dot u\in u$ and $\dot v\in v$. Then there exists a wireframe $\pl P$ in $G$ of complexity $\compl(\pl P)=O(\compl(\cA(P)))$ inducing a path from $\dot u$ to $\dot v$ where the parent relation $p(.)$ gives an isomorphism to $P$.
\end{corollary}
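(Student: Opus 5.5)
We apply \Cref{thm:tracespanningtree} to the intersection graph $G[V(P)]$, which is fair because it is an induced subgraph of a fair graph. The terminal set is $X=\{u,v\}$, with designated points $\dot u$ and $\dot v$. If $u=v$, the single-vertex wireframe $\{\dot u\}$ suffices, so we assume $u\neq v$. Since $|X|=2$, the theorem gives an object frame $P'$ with three properties. First, $P'$ is a forest, and the component containing $u'$ and $v'$ (we keep only that component) is a tree. Second, $|V(P')|\leq |V(P)|$. Third, $\compl(P')=O(\compl(\cA(P)))$.

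The key step is to show that this tree is a path whose parent map is a bijection onto $V(P)$. Take the $u'$--$v'$ path $Q$ in $P'$. Each object of $P'$ lies inside its parent, and two objects of $P'$ that intersect have parents that intersect. Hence the sequence of parents along $Q$ is a walk in $G[V(P)]$ from $u$ to $v$, of length $|Q|$. Because $P$ is a shortest path, $|Q|\geq \dist_G(u,v)=|P|$, so $Q$ has at least $|V(P)|$ vertices. Combined with $|V(P')|\leq|V(P)|$, this forces $P'=Q$ with exactly $|V(P)|$ vertices. A walk of length $|P|$ between the endpoints of a shortest path cannot repeat a vertex, since otherwise a shortcut would give a shorter walk. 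So the parent map $p$ is injective along $Q$, and therefore bijective onto $V(P)$. It is also a graph isomorphism: edges of $Q$ map to edges of $P$, and $P$ has no chords because it is shortest, so the edge counts match.

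Finally, we convert the object frame to a wireframe with \Cref{lem:obj2wire}, using designated points $\dot u\in u'$ and $\dot v\in v'$ (property 1 of the theorem) and an arbitrary generic point in every other object. The resulting wireframe $\pl P$ is isomorphic to $P'$, and hence to $P$ via $p$. The complexity bound is the main delicate point: \Cref{lem:obj2wire} as stated only gives $n^2\compl(P')$. To get $O(\compl(\cA(P)))$, we use that every object of the path frame has degree at most $2$. In the construction of \Cref{lem:obj2wire}, each object then contributes two half-edges, each of complexity at most $\compl(v)$. The total complexity is therefore $O(\compl(P'))=O(\compl(\cA(P)))$, and the perturbation of \Cref{obs:perturb} preserves this bound.

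A minor issue is the requirement that the designated points are not polygon vertices and not on shared boundaries. If $\dot u$ or $\dot v$ violates this, we shrink $u'$ or $v'$ slightly, or perturb as in \Cref{obs:perturb}. Both operations preserve the isomorphism.
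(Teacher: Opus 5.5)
Your proposal is correct and follows the paper's own route: apply \Cref{thm:tracespanningtree} with $X=\{u,v\}$, use $|V(G')|\le|V(P)|$ together with the shortest-path property to force a path frame isomorphic to $P$ under the parent map, then convert via \Cref{lem:obj2wire} and \Cref{obs:perturb}. Your proposal also makes two steps explicit that the paper only asserts: the counting argument for the bijection, and the observation that degree at most $2$ lowers the complexity bound of \Cref{lem:obj2wire} to $O(\compl(P'))$; this matters because that lemma's stated bound $n^2\compl(H)$ would not give $O(\compl(\cA(P)))$.
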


\begin{proof}
We consider the intersection graph given by the objects along $P$ and set $X=\{u,v\}$ with designated points $\dot u$ and $\dot v$. Then \Cref{thm:tracespanningtree} gives an object frame of complexity $O(\cA(P))$ where $|V(G')|\leq |V(G_P)|$ where the end objects are connected with an object frame. Since all edges of the object frame must correspond to edges of $P$, the only way for this connection to occur is if there is a bijection between $E(P)$ and $E(G')$, which in turn implies that the graphs are isomorphic under the parent relation. This isomorphism is preserved when we switch to a wireframe representation using \Cref{lem:obj2wire}. Finally, we apply \Cref{obs:perturb} to make this wireframe of $P$ into a wireframe of $G$ without increasing its complexity.
\end{proof}

We will use the above corollary several times to define wireframe paths, but we will usually need a finer control over the complexity of the resulting paths. This is provided by the following lemma for a more specific setting. Essentially, we want to say that when we move from a region $\re R$ to a smaller region $\re R_2$ by cutting $\re R$ into two parts $\re R_1,\re R_2$ with a shortest path $P$, then the increase of the complexity of the boundary of $\re R_2$ compared to $\re R$ can be bounded by the complexity inside $\re R_1$.

\begin{lemma}\label{lem:pathcutcomplexity}
Let $\wf R$ be a wireframe of $G\in \cI_\alpha$ bounding the region $\re R$, let $\dot a,\dot b\in V(\wf R)$ and $P$ be a shortest path of $G|_{\re R}$ from $p|_{\re R}(\dot a)$ to $p|_{\re R}(\dot b)$. Let $V(\bd \re R)$ denote the set of vertices in $\cA(G|_{\re R})$ that appear on $\bd \re R$, and let $V(\inter \re R)$ be the set of vertices of $\cA(G|_{\re R})$ in $\inter \re R$. 

If $\dot a$ and $\dot b$ splits $\bd \re R$ into wireframe paths $\pl Q_1$ and $\pl Q_2$, then there exists a wireframe path~$\pl P$ from~$\dot a$ to~$\dot b$ isomorphic to~$P$ under the parent function $p|_{\re R}(.)$ such that 
\begin{itemize}
\item  $\pl P$ intersects $\bd \re R$ only at $\dot a$ and $\dot b$, and
\item the regions $\re R_1,\re R_2 \subset \re R$ bounded by $\pl Q_1 \cup \pl P$ and $\pl Q_2 \cup \pl P$, respectively, satisfy $|V(\bd R_2)|\leq |V(\bd R)|+8|V(\inter \re R_1)|$.
\end{itemize} 
\end{lemma}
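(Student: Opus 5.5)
We first build some wireframe path for $P$, and then deal with the complexity bound by choosing which curve we cut along. By \Cref{cor:shortestpathframe}, applied to $P$ in $G|_{\re R}$ with designated points $\dot a$ and $\dot b$, we obtain a wireframe path $\pl P'$ from $\dot a$ to $\dot b$. The parent function $p|_{\re R}$ makes $\pl P'$ isomorphic to $P$.

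Next we ensure that $\pl P'$ meets $\bd\re R$ only at $\dot a$ and $\dot b$. Consider any interior point of an edge curve that lies on $\bd\re R$. Since the objects of $G|_{\re R}$ are components of $v\cap\re R$, the curve stays inside $\re R$. It can therefore be pushed slightly into $\inter\re R$ while staying in the interiors of the same parent objects. This step needs care only near $\dot a$ and $\dot b$, where the parent object touches the boundary. We then apply \Cref{obs:perturb} to restore condition 3 of wireframes.

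The key freedom is that $\pl P'$ can be redrawn anywhere inside the union of the objects of $P$, as long as consecutive pieces remain in the right parents. The boundary of $\re R_2$ consists of $\pl Q_2$ together with the curve $\pl P$. The arrangement vertices on $\bd\re R_2$ are of two kinds:
\begin{itemize}
\item those on $\pl Q_2$, which are at most $|V(\bd\re R)|$;
\item the crossings of the new curve $\pl P$ with object boundaries of $G|_{\re R}$.
\end{itemize}
So the goal is to draw $\pl P$ so that each crossing with an object boundary can be charged, at most $8$ to $1$, to arrangement vertices strictly inside $\re R_1$.

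My plan is to let $\pl P$ run \emph{along} the arrangement, hugging the side facing $\re R_1$. Concretely, inside each object $p_i$ of $P$, route the curve in a thin tube just inside the boundary of the union region. Within each arrangement face the curve runs parallel to the boundary edges it follows, so each crossing of an object boundary happens near an arrangement vertex. That vertex lies on the $\re R_1$ side, or on $\bd\re R$; vertices on $\bd\re R$ are already counted in $|V(\bd\re R)|$.

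The charging then works as follows. Each arrangement vertex is incident to at most $4$ edge pieces, since no three polygon edges meet at a point. The curve passes near each vertex at most a bounded number of times, once going toward $\dot b$ and possibly once returning within the tube. Together this gives a factor like $8$.

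I expect the main obstacle to be making this hugging curve precise. One approach is to take, in the arrangement restricted to $\bigcup_i p_i$, the boundary walk of the face-union on the $\re R_1$ side. Then we must verify three things:
\begin{itemize}
\item the curve is simple;
\item consecutive segments can be split between $p_i$ and $p_{i+1}$ as the wireframe definition requires;
\item vertices met along it are distinct up to a constant multiplicity.
\end{itemize}
For the splitting, the fact that $P$ is a shortest path helps: $p_i$ does not meet $p_j$ for $|i-j|\ge2$, so the regions used by different steps interact only locally. To bound the multiplicity, I would try a minimality argument: choose $\pl P$, among all valid drawings, minimizing the number of boundary crossings, and show that any uncharged crossing could be removed.
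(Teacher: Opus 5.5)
Your heuristic matches the paper's. You route $\pl P$ along arrangement edges on the $\re R_1$ side and charge its complexity to edges of $\cA(G|_{\re R})$, each followed at most once per side, with degree at most $4$ giving $2\cdot 4=8$. The gap is that the step you call ``the main obstacle'' is where the proof actually lies, and your concrete candidate for it fails.

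\textbf{Why hugging $\bd\bigcup_i p_i$ fails.} A curve hugging the boundary of $\bigcup_i p_i$ on the $\re R_1$ side need not be a wireframe path isomorphic to $P$. The shortest-path property only guarantees $p_i\cap p_j=\emptyset$ for $|i-j|\ge 2$. It does not stop that boundary from running through $p_{i+1}\setminus p_i$, then $p_i\setminus p_{i+1}$, then $p_{i+1}\setminus p_i$ again. For example, let $p_i$ have a thin spike that pokes through an arm of $p_{i+1}$ and sticks out on the $\re R_1$ side; fairness allows $p_i\cap p_{i+1}$ to have several components. The hugging curve then re-enters $\inter p_i\setminus p_{i+1}$ after visiting $\inter p_{i+1}\setminus p_i$. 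So no single switch point $t$ as in condition~2 of \Cref{def:wireframe} exists.

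\textbf{Other unresolved points.}
\begin{itemize}
\item Hugging the union of only the traversed faces does not rescue this. That union can be pinched at arrangement vertices, where no thin tube can follow its boundary.
\item Simplicity and bounded multiplicity are left to an unspecified minimality argument.
\item The target inequality has coefficient $1$ on $|V(\bd\re R)|$, and the $\pl Q_2$ part of it is already spent on $\bd\re R_2$. So each vertex of $\pl Q_1$ may be charged at most once, which ``a factor like $8$'' does not give.
\end{itemize}

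\textbf{The missing idea: hug face by face along a fixed face sequence.}
\begin{enumerate}
\item Take the wireframe path you already get from \Cref{cor:shortestpathframe}. List the faces $F_1,\dots,F_k$ of $\cA(P)$ it traverses, and shortcut inside faces so that each face occurs once.
\item Fix crossing points $\dot p_i\in\bd F_i\cap\bd F_{i+1}$, with $\dot p_0=\dot a$ and $\dot p_k=\dot b$.
\item Any curve whose piece from $\dot p_{i-1}$ to $\dot p_i$ lies in $\inter F_i$ keeps the same parent sequence. Hence the wireframe and isomorphism properties are automatic, and the curve meets $\bd\re R$ only at $\dot a,\dot b$.
\item A Jordan-curve parity argument shows that $\dot p_{i-1}$ and $\dot p_i$ lie on the same boundary component of $F_i$.
\item The boundary cycle of $\bigcup_j F_j$ through $\dot a,\dot b$ splits into arcs $\gamma$ and $\gamma'$; name them so that $\gamma$ ends up in $\re R_1$. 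Inside $F_i$, follow the arc $\gamma_i$ of that component which misses $\gamma'$.
\end{enumerate}

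This construction gives the required properties directly:
\begin{itemize}
\item Simplicity is immediate: distinct faces, one arc each.
\item Every followed arrangement edge lies on $\pl Q_1$ or is incident to a vertex of $\inter\re R_1$.
\item Each edge is followed at most once from each incident face, and an edge of $\bd\re R$ only from its unique side inside $\re R$.
\end{itemize}
This yields $\compl(\pl P)\le\compl(\pl Q_1)+8|V(\inter\re R_1)|$, and adding $\compl(\pl Q_2)$ gives the claim.
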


\begin{proof}
We follow the proof of \Cref{cor:shortestpathframe} more carefully to establish a better complexity bound on $\pl P$ and the boundaries of the regions $\re R_i$.

Consider the intersection graph given by the objects along $P$ and set $X=\{p|_{\re R}(\dot a),p|_{\re R}(\dot b)\}$ with designated points $\dot a$ and $\dot b$. Then \Cref{thm:tracespanningtree} gives an object frame of complexity $O(\cA(P))$ and $|V(G')|\leq |V(G_P)|$, where the end objects are connected with an object frame, converted to a wireframe using \Cref{lem:obj2wire}, where the parent relation gives an isomorphism to $P$. We apply \Cref{obs:perturb} to make this wireframe of $P$ into a wireframe of $G$ without increasing its complexity. Let $F_1,\dots,F_k$ be the sequence of faces of the arrangement $\cA(P)$ traversed by $\pl P$ such that $\dot a \in \bd F_1$ and $\dot b\in \bd F_k$.

Let $\dot p_i$ be a point of $\bd F_i\cap \bd F_{i+1}$ that is not a vertex of $\cA(G|_{\re R})$ for each $i=1,\dots,k-1$, and set $\dot p_0:=\dot a$ and $\dot p_k:=\dot b$. Notice that we have freedom in choosing a path $\pl P$ connecting the points $\dot p_0,\dots,\dot p_k$ in this sequence, as long as the relative interior of $\gamma(\pl P)[\dot p_{i-1}, \dot p_i]\subset \inter F_i$, we preserve the isomorphism and the wireframe properties. We may also assume without loss of generality that each face occurs at most once in the sequence $F_i$, because we can directly traverse from the first entry into the face to the last exit of the face within $F_i$. See Figure~\ref{fig:pathslicer} for an illustration.

\begin{figure}[t]
\centering
\includegraphics{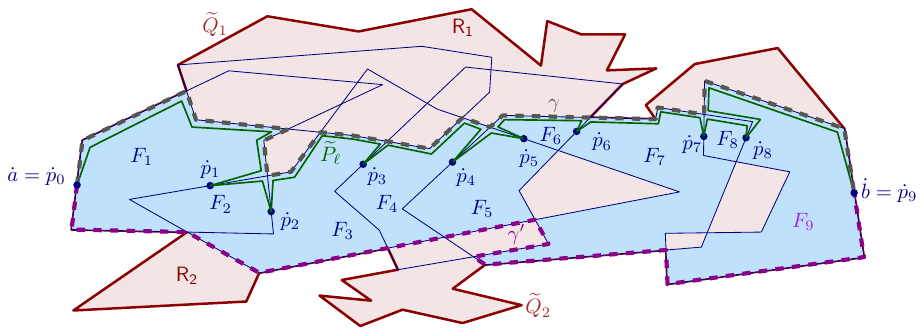}
\caption{Finding a path $\pl P$ (green) through the faces $F_1,\dots,F_k$ (blue background) of an arrangement. The path $\pl P$ slices the shaded (red and blue) region $\re R$ into $\re R_1$ and $\re R_2$. The curve $\pl P$ is defined to follow the curve $\gamma$ (gray, dashed). The curve $\gamma'$ is denoted by a magenta, dashed curve.}\label{fig:pathslicer}
\end{figure}

Let $\bd F$ denote the cycle of $\bd \bigcup_{i=1}^k F_i$ that contains $\dot p_0$ and $\dot p_k$: such a cycle exists as $\dot p_0=\dot a$ and $\dot p_k=\dot b$ are on the closed curve $\bd R$ that is disjoint from the connected set $\inter \bigcup_{i=1}^k F_i$. Notice that $\dot p_0$ and $\dot p_k$ splits the closed curve $\bd F$ into two subpaths: $\gamma$ and $\gamma'$.

We claim now that $\dot p_{i-1}$ and $\dot p_i$ are on the same component of $\bd F_i$. To see why, assume the contrary. Since $\pl P$ passes through $F_i$ only once, it follows that some boundary component $\mu$ of $\bd F_i$ which is disjoint from $\bd \re R$ is crossed by $\pl P$ exactly once, and thus $\mu$ separates $\dot a$ and $\dot b$. This would imply however that the closed curves $\mu$ and $\gamma(\pl P)\cup \gamma(\pl Q_1)$ intersect each other exactly once, which is a contradiction.

The component of $\bd F_i$ containing $\dot p_{i-1}$ and $\dot p_i$ is split into two parts by $\dot p_{i-1}$ and $\dot p_i$. Let $\gamma_i$ and $\gamma'_i$ denote these parts. Observe that $\gamma_i$ cannot intersect both $\gamma$ and $\gamma'$ (and the same holds for $\gamma'_i$) as in the union of the faces $\bigcup_{i=1}^k F_i$ the curve $\gamma(\pl P)$ separates $\gamma$ and $\gamma'$, and it also separates $\gamma_i$ and $\gamma'_i$. We can therefore define $\gamma_i$ and $\gamma'_i$ in such a way that $\gamma_i\cap \gamma'=\emptyset$ and $\gamma'_i\cap \gamma=\emptyset$.

Assume without loss of generality that $\gamma$ is inside $\re R_1$. We define $\pl P$ to follow the curve $\gamma_i$ inside each $\inter F_i$  from $\dot p_{i-1}$ to $\dot p_i$. As a result, the complexity of $\pl P$ is $\sum_{i=1}^k \compl(\gamma_i)$. Notice that $\pl P$ intersects $\bd \re R$ only at $\dot a$ and $\dot b$ by definition.

Consider now the plane graph $\pl A_{\re R}$ given by the vertices and edges of $\cA(G|_{\re R})$, that is $V(\pl A_{\re R})=V(\bd \re R)\cup V(\inter \re R)$. Since $G$ is $\alpha$-standard, we have that no three polygon edges of $G$ meet at one point, thus for any $v\in V(\inter \re R)$ we have $\deg_{\pl A_{\re R}}(v)\leq 4$.

Charge each edge of $\pl P$ to the corresponding edge of $\pl A_{\re R}$. Notice that each edge of $\pl A_{\re R}$ can be followed by at most two edges of $\pl P$ (one edge in each adjacent face of $\pl A_{\re R}$), except edges that are on $\bd \re R$ as such edges can be followed only once. Indeed, for an edge of $\pl A_{\re R}$ that is in $\bd \re R$ only one of the incident faces of $\pl A_{\re R}$ will lie in $\re R$, while $\pl P$ lies in $\re R$ and consequently can follow along only on the side of the edge covered by $\re R$.
More precisely, each edge of $\pl A_{\re R}$ that is incident to a vertex inside $\inter \re R_1$ is charged at most twice, while each edge of $\pl A_{\re R}$ that is on $\gamma(\pl Q_1)$ is charged at most once; no other edges of $\pl A_{\re R}$ are charged. Denoting the former edge set by $E_{\inter}$ and the latter by $E_{\bd}$, we can write:
\begin{align*}
\compl(\pl P) &\leq  |E_{\bd}| + 2|E_{\inter}|\\
&\leq \compl(\pl Q_1) + 2\sum_{v\in V(\pl A_{\re R})\cap \inter \re R_1} \deg_{\pl A_{\re R}}(v)\\
&\leq \compl(\pl Q_1) + 8|V(\inter \re R_1)|
\end{align*}
by our earlier degree bound. Adding $\compl(\pl Q_2)$ to both sides concludes the proof.
\end{proof}

\section{Lipschitz embeddings into plane graphs}
\label{sec:lipschitz}

Let $G,H$ be graphs. We say that a mapping $f:V(G)\rightarrow V(H)$ is
\emph{$c$-Lipschitz} for some constant $c\geq 1$ if for any pair of vertices $x,y\in V(G)$
we have that
\begin{equation}\label{eq:Lipsch}
\frac{1}{c}\cdot \dist_G(x,y)\leq \dist_H(f(x),f(y))\leq c\cdot \dist_G(x,y).
\end{equation}
Clearly any $c$-Lipschitz mapping is injective. 

\begin{observation}\label{obs:GtoGP}
The natural mapping $f$ from $G\in \cI_\alpha$ to $G_\cP$ satisfies
\begin{equation*}
\frac{1}{3}\dist_G(x,y)\leq \dist_{G_\cP}(f(x),f(y))\leq \dist_G(x,y)
\end{equation*}
for all $x,y$ with $f(x)\neq f(y)$.
\end{observation}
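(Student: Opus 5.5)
The two inequalities need separate arguments. The upper bound comes from projecting paths. The lower bound comes from lifting paths in $G_\cP$ back to $G$, using that each class of $\cP$ is a clique.

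\emph{Upper bound.} Let $x=v_0,v_1,\dots,v_m=y$ be a shortest $x$--$y$ path in $G$, so $m=\dist_G(x,y)$. Its image $f(v_0),\dots,f(v_m)$ is a walk in $G_\cP$. Any two consecutive images are either equal or adjacent, because an edge $v_iv_{i+1}$ with endpoints in different classes gives an edge of the consolidation $G_\cP$. Deleting repeated consecutive entries leaves a walk from $f(x)$ to $f(y)$ of length at most $m$. Hence $\dist_{G_\cP}(f(x),f(y))\leq \dist_G(x,y)$.

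\emph{Lower bound.} Let $C_0=f(x),C_1,\dots,C_k=f(y)$ be a shortest path in $G_\cP$, with $k\geq 1$ since $f(x)\neq f(y)$. For each $i=0,\dots,k-1$, adjacency of $C_i$ and $C_{i+1}$ in $G_\cP$ gives objects $a_i\in C_i$ and $b_{i+1}\in C_{i+1}$ with $a_ib_{i+1}\in E(G)$. Also set $b_0:=x$ and $a_k:=y$. Every class induces a clique, so for each $i$ the objects $b_i$ and $a_i$ coincide or are adjacent. We build a walk in $G$ by concatenating
\[
b_0\to a_0\to b_1\to a_1\to\dots\to b_k\to a_k .
\]
This uses $k$ edges of the form $a_ib_{i+1}$ and at most $k+1$ edges inside classes, so
\[
\dist_G(x,y)\leq 2k+1\leq 3k=3\,\dist_{G_\cP}(f(x),f(y)),
\]
using $k\geq 1$.

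The step that needs care is the final estimate, which is exactly where the hypothesis $f(x)\neq f(y)$ is used. If $f(x)=f(y)$, then $k=0$ but $\dist_G(x,y)=1$ when $x\neq y$, so the bound $2k+1\leq 3k$ fails. With $k\geq 1$ the estimate goes through. Nothing else is delicate: the argument uses only that classes are cliques and the definition of edges in the consolidation.
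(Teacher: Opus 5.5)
Your proof is correct and follows essentially the same approach as the paper: the upper bound holds because consolidating the classes (a contraction) cannot increase distances, and the lower bound comes from lifting a shortest $G_\cP$-path back to $G$ using that each class of $\cP$ is a clique. The only difference is bookkeeping. The paper bounds each edge of the $G_\cP$-path by a $G$-path of length at most $3$ between representatives, giving $3k$ directly, whereas your chained walk gives the slightly sharper $2k+1\le 3k$.
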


\begin{proof}
The upper bound of $\dist_{G_\cP}(f(x),f(y))\leq \dist_G(x,y)$ follows from the fact that $G_\cP$ is obtained from $G$ by edge contractions, which cannot increase distances.

For any edge $f(u)f(v)\in E(G_\cP)$ there is a path of length at most $3$
connecting $u$ and $v$ in $G$, and therefore $\dist_{G_\cP}(f(u),f
(v))\leq 3 \dist_G(u,v)$ holds. The same inequality follows for any $u,v$ where $f(u)\neq f(v)$ by applying the above inequality on all edges of the shortest $f(u)-f(v)$ path in $G_\cP$.
\end{proof}

The goal of this section is to show that $G_\cP$ can be Lipschitz-mapped to a
planar graph on the same vertex set as $G_\cP$, and then to use this to show
that $G$ itself can be Lipschitz-mapped to a planar graph with the same
vertex set as $G$. For a clique $C\in \cP$, let $\dot C$ denote the stabbing
point of $C$, and let $\dot\cP$ be the set of centers of
non-empty cells. Recall that we can also think of $G_\cP$ as an intersection graph, where the object corresponding to a clique $C\in \cP$ is the union of the objects in $C$. The core theorem of this section is the following.

\begin{theorem}\label{thm:getplanarclique}
For any $\alpha>0$, there exists a positive constant $c=O(\alpha^{32})$ such that
for any $G\in \cI_\alpha$ given by its representation the following holds. Let
$\cP$ be the cell partition of $G$. Then in polynomial time one can construct
a plane graph $H=(\dot\cP,E')$ such that natural mapping from $\cP$ to
$\dot\cP$ is $c$-Lipschitz from $G_\cP$ to~$H$. Moreover, $H$ has maximum degree $O(\alpha^{34})$, its edges have geometric diameter at most $O(\alpha^{17})$ and each edge of $H$ can be covered by at most $O(\alpha^{34})$ objects of $G_\cP$.
\end{theorem}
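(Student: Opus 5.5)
The plan follows the overview: represent each edge of $G_\cP$ by a polygonal curve covered by two adjacent objects, planarize by inserting crossing vertices, and then contract Voronoi cells to return to the vertex set $\dot\cP$.

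\textbf{Step 1: curves for the edges of $G_\cP$.} For each edge $CC'\in E(G_\cP)$ fix objects $u\in C$, $u'\in C'$ with $u\cap u'\neq\emptyset$. Take a polygonal curve $\gamma_{CC'}$ from $\dot C$ to $\dot C'$ inside $u\cup u'$, passing through a point of $u\cap u'$. Since all objects of $C$ contain $\dot C$ and have diameter at most $\alpha$, the curve $\gamma_{CC'}$ lies in the $O(\alpha)$-ball around $\dot C$.

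By a packing argument on the unit disks, only $O(\alpha^2)$ cells are within distance $O(\alpha)$ of a given cell. Hence $G_\cP$ has degree $O(\alpha^2)$, and each curve meets the region of only $O(\alpha^4)$ other curves. Endpoint-to-endpoint distance of curves is a clean proxy for $G_\cP$-distance: two curves that meet give cells at $G_\cP$-distance $O(1)$, since the objects covering them intersect.

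\textbf{Step 2: the main obstacle, bounding the number of crossings.} If we planarize directly by adding a vertex at every crossing, the resulting plane graph $H'$ keeps the upper Lipschitz bound only if each curve is crossed $O_\alpha(1)$ times. Non-convex objects may force two curves to cross arbitrarily often. The fix is rerouting: when $\gamma$ and $\gamma'$ cross many times, replace $\gamma$ between its first and last crossing with $\gamma'$ by the corresponding piece of $\gamma'$. Afterwards $\gamma$ is covered by the $O(1)$ objects of the two edges, so the number of objects needed to cover a curve grows only additively.

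Rerouting sequentially can chain and blow up curve covers. Instead, I would properly colour the conflict graph on curves, where two curves conflict if they come within distance $O(\alpha)$. This graph has degree $\poly(\alpha)$, so $\poly(\alpha)$ colours suffice. In each round, for a fixed pair of colour classes, perform all reroutings in parallel. Within one class the reroutings are on far-apart curves, so they do not interfere. After $\poly(\alpha)$ rounds, every pair of curves crosses $O(1)$ times, and each curve is covered by $O(\alpha^{c'})$ objects and has diameter $O(\alpha^{c''})$. These quantities are where the exponents $17$ and $34$ come from.

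I expect this step to be the hardest. The difficulty is showing that rerouting in later rounds does not create new multiple crossings with pairs already handled. I would prove a monotonicity invariant: a rerouted curve only follows pieces of existing curves, so the total number of crossings between any fixed pair never increases.

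\textbf{Step 3: bi-Lipschitz bound for $H'$ and Voronoi contraction.} In $H'$, each original edge becomes a path of $\poly(\alpha)$ segments, giving $\dist_{H'}\le \poly(\alpha)\,\dist_{G_\cP}$. Conversely, each edge of $H'$ joins points covered by objects of $O(1)$ intersecting cells, so $\dist_{G_\cP}\le O(1)\dist_{H'}$ between endpoints.

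Next, assign each crossing vertex of $H'$ to a nearest vertex of $\dot\cP$ in $H'$, breaking ties consistently so that each Voronoi cell is connected. Contract every cell. Contraction preserves planarity and yields a plane graph $H$ on $\dot\cP$.

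Each Voronoi cell has $H'$-radius $\poly(\alpha)$, because every crossing lies on a curve with an endpoint in $\dot\cP$. So contraction shrinks distances by at most a $\poly(\alpha)$ factor, while the upper bound is preserved. This gives the $c$-Lipschitz property with $c=\poly(\alpha)$. An edge of $H$ joins cells whose representatives lie within $\poly(\alpha)$ curves of each other, which gives the stated bounds on geometric diameter and on covering by $O(\alpha^{34})$ objects. The degree bound follows from packing: every neighbour of a vertex in $H$ lies at geometric distance $O(\alpha^{17})$. All steps are polynomial-time, since the number of crossings is polynomial in $n$.
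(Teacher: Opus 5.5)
Your Steps 1 and 3 match the paper: edge curves covered by two intersecting objects, then Voronoi contraction, which squares the Lipschitz constant. Step 2, which you correctly identify as the crux, has a genuine gap, because the invariant you propose is false.

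\textbf{Why the invariant fails.} If $\gamma$ is redrawn along a stretch of $\gamma'$, it inherits every intersection of that stretch with third curves $\gamma''$. A pair $(\gamma,\gamma'')$ that met once, or that was already reduced in an earlier round, can afterwards meet arbitrarily often.
\begin{itemize}
\item More importantly, the upper bound $\dist_{H'}\le \poly(\alpha)\dist_{G_\cP}$ needs a bound on the number of subdivision vertices on each rerouted curve. That number is not controlled by any per-pair crossing count.
\item Your colouring certifies separation only for the original curves. Rerouted curves are longer and pass through other curves' territory. In later rounds a curve can meet several curves of one colour class, so ``parallel reroutings do not interfere'' no longer follows.
\item Nothing in the proposal shows the process stabilises after $\poly(\alpha)$ rounds, or yields the stated exponents.
\end{itemize}

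\textbf{How the paper avoids this.} It never redraws a curve. It partitions $\Gamma$ into $O(\alpha^4)$ classes, with any two curves of a class at distance at least $3$ in $\cI(\Gamma)$. The classes are inserted \emph{one at a time} into a growing plane graph $H$.

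When $\gamma\in\Gamma_i$ is inserted:
\begin{enumerate}
\item Each intersection with an existing edge is labelled by the original curve that edge is a piece of.
\item Among labels occurring twice, it picks a pair $\gamma(a),\gamma(b)$ of equal label $\gamma'$ whose interval $[a,b]$ is maximal.
\item It \emph{deletes} $\gamma|_{(a,b)}$ from the drawing.
\item The abstract path $\pi(\gamma)$ detours along the already existing path $\pi(\gamma')$.
\item It recurses on $\gamma|_{[0,a]}$ and $\gamma|_{[b,1]}$.
\end{enumerate}

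This gives the properties your invariant was meant to supply:
\begin{itemize}
\item Detours use edges already present in the plane graph, so no new crossings are ever created.
\item Retained pieces are subsets of the original $\gamma$, which meets at most one curve of each earlier class. So there are at most $i$ detours.
\item Each later phase adds at most two subdivision points to the retained pieces of a given label.
\end{itemize}
There are $O(\alpha^{12})$ possible labels in the $O(\alpha^5)$-diameter region around $\pi(\gamma)$, each occurring $O(\alpha^4)$ times. This gives $|\pi(\gamma)|=O(\alpha^{16})$, hence $c=O(\alpha^{32})$ after the Voronoi contraction.

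You also need an explicit non-crossing realisation of the contracted edges, as in Lemma~\ref{lem:contractingextraverts}, which uncrosses the connecting paths inside each Voronoi cell. Planarity of the abstract contraction alone does not give the stated diameter and covering bounds.
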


Before we begin the proof, note that the theorem implies the analogous
statement for $G$ itself.

\begin{corollary}\label{cor:getplanargraph}
For any $\alpha>0$ there exists a positive constant $c=O(\alpha^{32})$ such that
for any $G\in \cI_\alpha$ given by its representation, one can in polynomial
time construct a plane graph $G'=(V(G),E')$ such that the identity mapping of
$V(G)$ is $c$-Lipschitz from $G$ to $G'$.
\end{corollary}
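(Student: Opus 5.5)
The plan is to build $G'$ from the plane graph $H=(\dot\cP,E')$ of \Cref{thm:getplanarclique} by replacing each cell point $\dot C$ with a small planar gadget that carries the vertices of $C$. For each class $C\in\cP$ I fix a tiny disk $D_C$ around $\dot C$ that meets no edge of $H$ except those incident to $\dot C$. Inside $D_C$ I place the $|C|$ vertices of $C$ as distinct points on a small circle and join them by a path (or a star centred at one designated vertex $c_C\in C$). This drawing is planar. Each edge $\dot C\dot C'$ of $H$ is rerouted to end at $c_C$ and $c_{C'}$, which keeps planarity: inside $D_C$ the incident edges of $\dot C$ enter radially and can be connected to $c_C$ without crossings, for example by taking $c_C$ as the centre of a star whose leaves are the other vertices of $C$. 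All of this is polynomial time.

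I would then check the Lipschitz bounds. First, $\dist_{G'}(x,y)\le c_1\dist_G(x,y)$. It suffices to do this for an edge $xy\in E(G)$. If $x,y$ lie in the same class, their $G'$-distance is at most $2$ via the star. Otherwise $x_\cP y_\cP\in E(G_\cP)$, so $\dist_H(\dot x_\cP,\dot y_\cP)\le c$, and lifting that $H$-path through the star centres costs at most $c+2$.

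Second, $\dist_G(x,y)\le c_2\dist_{G'}(x,y)$. Take a shortest $G'$-path. Its star edges stay within single cliques. Each maximal stretch inside one class costs at most one $G$-edge, and each $H$-edge between classes $C,C'$ satisfies $\dist_{G_\cP}(C,C')\le c$, hence by \Cref{obs:GtoGP} it corresponds to $G$-distance at most $3c$ between some members, plus one more step to reach the exact endpoint in the clique. This gives $\dist_G(x,y)\le (3c+2)\dist_{G'}(x,y)+1$, and since $x\ne y$ we have $\dist_{G'}\ge1$, so a constant $O(c)=O(\alpha^{32})$ works.

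The step needing care is the local planar rerouting near $\dot C$. I need $D_C$ small enough that no nonincident edge curve of $H$ enters it, which is possible because $H$ is a plane graph with finitely many polygonal edges and $\dot C$ is a vertex of it. I also need the star edges to be drawn in the sectors between consecutive incident $H$-edges. Placing the leaves of the star on a concentric circle, each inside some sector, and drawing radial segments from the centre avoids all crossings. No geometric relation between $G'$ and the actual objects is required, since the corollary asks only for a plane graph on the vertex set $V(G)$.
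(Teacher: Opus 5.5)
Your proposal is correct and matches the paper's proof. The paper also takes the plane graph $H$ of \Cref{thm:getplanarclique}, identifies each $\dot C$ with one vertex of $C$, and attaches the remaining $|C|-1$ vertices of $C$ as pendant leaves, obtaining a $3c$-Lipschitz map; your distance bookkeeping fills in the check the paper calls ``routine''. The only thing to discard is the ``path'' gadget you mention at the outset: classes of $\cP$ can be arbitrarily large, so a path would make same-class $G'$-distances unbounded, whereas the star you actually use keeps them at most $2$.
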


\begin{proof}
Let $H$ be the plane graph on $\dot\cP$ created by
\Cref{thm:getplanarclique}, and to each vertex $\dot C$ of $H$ we attach
$|C|-1$ leaves, and assign the original vertex and the leaves arbitrarily to
distinct vertices of $C$. It is routine to check that with the resulting planar graph $G'$ the above map is $3c$-Lipschitz, where $c$ is the Lipschitz-constant guaranteed by
\Cref{thm:getplanarclique}.
\end{proof}

The following lemma will be used much later to help us get rid of unwanted
vertices using a contraction, while maintaining a Lipschitz map.

\begin{lemma}\label{lem:contractingextraverts}
Let $G$ be a graph  with vertex set $V$, and let $H$ be a graph with vertex
set $W$ such that $V\subseteq W$. Suppose moreover that the identity map
$f:V\rightarrow V$ is $c$-Lipschitz from $G$ to $H$. Then in polynomial time
(in terms of $|W|$) one can construct a set of edges $S$ in $H$ such that the
vertices of the contraction $H/S$ correspond to the vertices of $V$ and this
map is $c^2$-Lipschitz from $G$ to $H/S$.%

Moreover, if $H$ is a plane graph whose edges are piece-wise linear curves of geometric diameter at most $\delta$, then for any $\mu>0$ the graph $H/S$ can be realized as a plane graph whose edges have geometric diameter at most $\delta \cdot (c+2)$, and each edge of $H/S$ is contained in the geometric $\mu$-neighborhood of some path of $H$.
\end{lemma}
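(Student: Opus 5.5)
The plan is a Voronoi-type partition of $W$ around the vertices of $V$. For each $w\in W$, let $\nu(w)\in V$ be a vertex of $V$ closest to $w$ in $H$, breaking ties consistently (for instance, by a fixed order on $V$, then by BFS parent). Multi-source BFS from $V$ in $H$ computes this in polynomial time. For $v\in V$ let $\Vor(v)=\nu^{-1}(v)$. Each cell is connected in $H$: if $w\in\Vor(v)$, then a shortest path from $w$ to $v$ stays in $\Vor(v)$, because the BFS tree with consistent tie-breaking makes every vertex on that path also choose $v$. Let $S$ be the union of the edges of the BFS trees inside the cells. Contracting $S$ merges each cell into one vertex, so the vertices of $H/S$ correspond bijectively to $V$.

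Next I would verify the Lipschitz bounds. Contraction does not increase distances, so $\dist_{H/S}(u,v)\le\dist_H(u,v)\le c\,\dist_G(u,v)$. The lower bound is the main step. Take a shortest $u$--$v$ path in $H/S$ of length $k$. It lifts to a sequence of cells $\Vor(v_0),\dots,\Vor(v_k)$ with $v_0=u$, $v_k=v$, and consecutive cells joined by an $H$-edge $ab$, where $a\in\Vor(v_{i})$ and $b\in\Vor(v_{i+1})$. It then suffices to show $\dist_G(v_i,v_{i+1})\le c^2$ for every $i$, since the triangle inequality gives $\dist_G(u,v)\le c^2k$. Write $d_a=\dist_H(a,v_i)$ and $d_b=\dist_H(b,v_{i+1})$. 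Then $\dist_H(v_i,v_{i+1})\le d_a+d_b+1$, and hence $\dist_G(v_i,v_{i+1})\le c(d_a+d_b+1)$. The difficulty is that $d_a,d_b$ are a priori unbounded, because $W\setminus V$ can contain long stretches far from $V$.

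Here I would use the Lipschitz property in $G$. Along a shortest $v_i$--$v_{i+1}$ path in $G$, consecutive vertices $x,y$ satisfy $\dist_H(x,y)\le c$. So every point of $H$ lying on a chain of such short $H$-paths between consecutive vertices is within $H$-distance $c/2$ of $V$. This does not obviously cover $a$, however. The cleaner route is to define $S$ more carefully: contract only vertices within distance $c$ of $V$, and argue that vertices of $W$ farther than $c$ from all of $V$ are irrelevant. Such vertices can be deleted, since any $H$-path between vertices of $V$ can be replaced by the concatenation of the short $H$-paths along the $G$-path, all of which lie within distance $c/2$ of $V$.

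Concretely, I would first restrict $H$ to the union $H'$ of the chosen shortest $H$-paths between $G$-adjacent pairs, each of length at most $c$. The identity map stays $c$-Lipschitz into $H'$: the upper bound holds by concatenation, and the lower bound holds because $H'\subseteq H$ and distances only grow. In $H'$ every vertex has $d_a\le c/2$, so the per-edge bound gives $\dist_G(v_i,v_{i+1})\le c(c+1)$. With slightly finer counting, using $d_a+d_b+1\le c$ whenever the edge $ab$ lies on one of the chosen paths of length at most $c$, this improves to $c^2$. For edges $ab$ of $H'$ lying on a chosen path between $x$ and $y$, both $\Vor$-centers are within $c/2$ of the path, which gives $\dist_G\le c\cdot c$. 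The contraction set $S$ then consists of the edges of $H$ not in $H'$ deleted, together with the Voronoi tree edges of $H'$.

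For the geometric part, realize each contracted cell as a small tree-shaped neighborhood and route each edge $v_iv_{i+1}$ of $H/S$ along the $H$-path $v_i\to a\to b\to v_{i+1}$, thickened by $\mu$, to keep the drawing planar, as in the standard fact that contractions of plane graphs are plane. That path has at most $d_a+d_b+1\le c+2$ edges of diameter at most $\delta$, which gives diameter at most $\delta(c+2)$. The main obstacle I expect is making the vertex removal compatible with the contraction picture, since $S$ must be edges of $H$. I would handle this by contracting the deleted far-away components into an adjacent cell whenever that preserves the distance bounds, or alternatively by simply reading the lemma with $H$ replaced by $H'$.
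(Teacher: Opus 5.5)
Your proposal is correct and follows the paper's proof: prune $H$ so that every surviving vertex is close to $V$, contract the Voronoi cells, lift a shortest $H/S$-path cell by cell for the lower bound, and draw each new edge along a center--crossing edge--center path of $H$, separating the overlaps inside a cell by a $\mu$-perturbation. Your choice to prune to the union $H'$ of the chosen length-$\le c$ paths is slightly sharper than the paper's deletion of vertices at $H$-distance more than $c/2$ from $V$: it gives $d_a+d_b+1\le c$ and hence exactly $c^2$, whereas the paper's pruning only guarantees $d_a,d_b\le c/2$ and so $c(c+1)$ in general (immaterial for its $O(\cdot)$ uses). The paper also simply deletes the far vertices, and deletion is unavoidable: take $G$ the path $u_0\cdots u_m$ and $H$ this path plus an internally disjoint $u_0$--$u_m$ path of length $m$; then every $S$ whose contraction has vertex set corresponding to $V$ makes $u_0u_m$ an edge of $H/S$, so your second fallback (reading the lemma with $H$ replaced by $H'$) is the right one, and the first (merging far components into cells) cannot work in general.
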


\begin{proof}
Let us show first that we can assume that each vertex $w\in W$ has distance at
most $c/2$ from some vertex in $V$.
We claim that the identity map remains
$c$-Lipschitz from $G$ to $H$ if we remove every vertex of $H$ that is at distance more than $c/2$ from every vertex in $V$. Indeed, each edge of $G$ can be realized with a
path of length at most $c$ in $H$ connecting two vertices of $V$, and no vertex on such a path was removed. Consequently, any shortest path of length $k$ in $G$
can be covered by the union of at most $k$ such paths of length $c$. On the
other hand, deleting vertices from $H$ can only increase pairwise distances,
so the inequality $\frac{1}{c}\dist_G(u,v)\leq \dist_H(u,v)$ remains true
after the deletion. Therefore, in the following we can assume that each vertex $w\in W$ has distance at
most $c/2$ from some vertex in $V$.

Let $V=\{v_1,\dots,v_n\}$, and define the following vertex sets in $H$. The
\emph{Voronoi cell} $\Vor(v_i)$ of a vertex $v_i\in V$ contains the vertices
 $w\in W$ that are closest to $v$ in $H$-distance among all the vertices of
 $V$ (breaking ties by the order of $i$). More precisely, we define
\begin{align*}
\Vor(v)=\big\{w\in W \mid &\phantom{\text{ and }}\forall j<i: \dist_H(v_i,w)<\dist_H(v_j,w)\\
&\text{ and } \forall j>i: \dist_H(v_i,w)\leq \dist_H(v_j,w)\big\}.
\end{align*}
 
We claim that $H[\Vor(v_i)]$ is connected. Suppose for the sake of contradiction that $H[\Vor(v_i)]$ is disconnected, and let $u\in \Vor(v_i)$ be in a different connected component of $H[\Vor(v_i)]$ than $v_i$. Let $P_u$ be a shortest path from $v_i$ to $u$, and let $zz'$ be an edge of $P_u$ where $z\not\in \Vor(v_i)$ and $z'\in \Vor(v_i)$ (where $z$ is closer to $u$ than $z'$). Now $z\in \Vor(v_j)$; suppose that $j<i$. Then by the definition of Voronoi diagrams we have (a) $\dist_H(z,v_i)\geq \dist_H(z,v_j)$ and (b) $\dist_H(z',v_i) < \dist_H(z',v_j)$. On the other hand, we have $\dist_H(z',v_i)=\dist_H(z,v_i) + 1$ and the edge $zz'$ ensures $\dist_H(z',v_j) \leq \dist_H(z,v_j)+1$; substituting these to (b) gives a contradiction with (a). The case when $j>i$ can be argued the same way, one just needs a strict inequality in (a) and a non-strict inequality in (b). This concludes the proof that $H[\Vor(v_i)]$ is connected.
 
Notice that since each vertex in $W$ has distance at most $c/2$ from some
 vertex in $V$, we have that $\Vor(v)$ has diameter at most $c$. Let $E_i$
 denote the set of edges of the subgraph of $H$ induced by $\Vor(v_i)$, and
 let $S=\bigcup_{i=1}^n E_i$.

Next, we contract all edges of each Voronoi cell in $H$ and remove any
multiple edges to get the graph $H'=H/S$, where we identify the vertex that
results from the contraction of $E_i$ with~$v_i$. 
Since contractions can only decrease the distances between vertices, we have
$\dist_{H'}(v_i,v_j)\leq \dist_{H}(v_i,v_j)
\leq c\cdot\dist_{G}(v_i,v_j)$. On the other hand, if $\rho$ is a shortest path from
 $v_i$ to $v_j$ in $H$ of length $\ell$, then it intersects at least
 $\ell/c$ distinct Voronoi cells (as each cell has diameter at most~$c$).
 Consequently, $\dist_{H'}(v_i,v_j)\geq \frac{1}{c}\dist_{H}(v_i,v_j)
\geq\frac{1}{c^2}\dist_{G}(v_i,v_j)$.

\noindent\textit{Planar realizations.}
Suppose now that $H$ is a plane graph whose edges have geometric diameter at most $\delta$, and $\mu>0$. We assume without loss of generality that $\mu$ is small enough, i.e., we assume that $\mu<\delta/2$, and in the plane realization of $H$ all vertices have pairwise distance at least $3\mu$, and all non-incident vertex-edge pairs have pairwise distance at least $3\mu$. Our first goal is to construct pairwise non-crossing short paths $P_{ij}$ for each edge $v_iv_j$ of $H/S$.

For each edge $v_iv_j$ of $H/S$ with $i<j$ let $P^0_{ij}$ be a path of length at most $c+1$ in $H$ constructed as follows: we fix some edge $w_iw_j$ where $w_i\in \Vor(v_i)$ and $w_j\in \Vor(v_j)$. Then we fix a shortest path from $v_i$ to $w_i$ and from $w_j$ to $w_j$. The concatenation of these two paths and the edge $w_iw_j$ forms a path $P^0_{ij}$ of length at most $c/2+1+c/2=c+1$.

Next, we \emph{uncross} the paths $P^0_{ij}$. Consider two distinct paths $P^0_{ij}$ and $P^0_{k\ell}$. Observe that when the paths do not share endpoints, then their vertices are covered by pairwise disjoint voronoi cells, thus they are vertex-disjoint. Suppose now that $P^0_{ij}$ and $P^0_{k\ell}$ share one endpoint, and assume without loss of generality that $i=k$. (Since the paths are distinct, they cannot share both their endpoints, as they would correspond to the same edge $v_iv_j$ of $H/S$.) It follows that $P^0_{ij}$ and $P^0_{i\ell}$ can only share vertices inside $\Vor(v_i)$. Let $w$ be their shared vertex that is at the largest $H$-distance from $v_i$. We then modify $P^0_{i\ell}$ by replacing $P^0_{i\ell}[v_i,w]$ with $P^0_{ij}[v_i,w]$. We do these modifications exhaustively, until any pair of paths are either vertex disjoint or they share some prefix/suffix; let $P_{ij}$ denote the collection of paths at the end of the procedure. \skb{esetleg abra?}

It is now routine to construct the desired planar realization of $H/S$. For each vertex $v_i$ and each neighbor $v_j$ with $v_iv_j\in E(H/S)$ consider the paths $P_{ij}$ from $v_i$. These paths form a plane tree in $H$ rooted at $v_i$ whose leaves are $N_{H/S}(v_i)$, i.e., the other endpoints of the paths $P_{ij}$. We create a new plane tree consisting of the root-to-leaf paths to remove all overlaps but keep all paths within distance $\mu$ of the original root-to-leaf paths.\skb{abra TODO} Since the overlapping parts are restricted to $\Vor(v_i)$, these modifications can be done in each Voronoi cell separately. The resulting collection of plane paths $P'_{ij}$ are pairwise disjoint other than possible shared endpoints, they are within distance $\mu$ of the original path $P_{ij}$ in $H$ of length at most $c+1$, and their geometric diameter is at most $(c+1)\delta+2\mu\leq (c+2)\delta$. Thus their curves form a plane realization of $H/S$ with the desired properties.
\end{proof}

\subsection{Approximate plane graph with extra vertices}

In order to prove \Cref{thm:getplanarclique}, we need to start with an
intermediate step, which resembles known contraction decomposition theorems for unit disks~\cite{UDGcontractdecompSODA19,TrueContractDecomp}: initially, we also allow additional vertices in the plane graph,
and later we will remove the extra vertices so that only the clique centers
$\dot\cP$ remain.

Let us fix a graph $G\in \cI_\alpha$ and its representation, as well as a
cell-partition $\cP$ based on $\Ints^2$. Each edge $CC'$ of $G_\cP$ can be associated with some
edge $uu'$ of $G$ where $u\in C$ and $u'\in C'$. If $\pl P$ is a single-edge wireframe of $G$ with endpoints $\dot C$ and $\dot C'$, where $p_{\gamma}(\dot C)\in C$ and $p_{\gamma}(\dot C')\in C'$, then we say that the curve $\gamma=\gamma(\pl P)$ is an \emph{edge curve} of $G$ with respect to the partition $\cP$, which corresponds to the edge $CC'$ of $G_\cP$. An edge curve can be constructed by applying \Cref{cor:shortestpathframe} to the single-edge shortest path $uu'$ of $G$ and the fixed endpoints $\dot C\in \inter u$ and $\dot C'\in \inter u'$. We construct for each edge of $G_\cP$ a corresponding edge curve, and denote by $\Gamma$ the resulting set of edge curves. We denote by $\cI(\Gamma)$ the intersection
graph induced by $\Gamma$. Observe that the curves are drawn in an open region, thus ---similarly to \Cref{obs:perturb}--- they can be defined in such way to satisfy the following general position requirements: (i) the
curves of $\Gamma$ avoid $\dot\cP$, that is, no curve passes through any
point $a\in \dot\cP$ other than at its endpoints, (\textit{ii}) distinct edge curves intersect each other in $O(\cA(G))$ points, and (\textit{iii}) no three edge curves meet at any point of $\Reals^2 \setminus \dot\cP$.

Since each object of $G$ has diameter at most $\alpha$, and an edge curve can be covered by the union of two objects that intersect, we have the following observations.
\begin{observation}\label{obs:diameterofgamma}
The edge curves of\, $\Gamma$ satisfy the following properties.
\begin{enumerate}[label=\textit{(\roman*)}]
\item Each point $\dot C$ is incident to at most $O(\alpha^2)$ edge curves.
\item Each edge curve $\gamma\in \Gamma$ has geometric diameter at most $2\alpha$.
\item Each edge curve intersects at most $O(\alpha^4)$ other edge curves.
\end{enumerate}
\end{observation}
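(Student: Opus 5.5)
The three items follow from a packing argument on the grid $\Ints^2$ combined with the fact that each edge curve lies inside the union of two intersecting objects of diameter at most $\alpha$. Item (ii) is immediate and I would prove it first. An edge curve $\gamma\in\Gamma$ for the edge $CC'$ comes from \Cref{cor:shortestpathframe} applied to the single edge $uu'$, so $\gamma\subseteq u\cup u'$ with $u\cap u'\neq\emptyset$. Any two points of $u\cup u'$ are joined through a common point $q\in u\cap u'$, so the triangle inequality bounds their distance by $\diam(u)+\diam(u')\le 2\alpha$.

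For item (i), fix $\dot C$. Every edge curve incident to $\dot C$ corresponds to an edge $CC'$ of $G_\cP$, and there is at most one such curve per edge. If $CC'$ is an edge, then some $u\in C$ and $u'\in C'$ intersect. Since $\dot C\in u$ and $\dot C'\in u'$, this gives $|\dot C-\dot C'|\le 2\alpha$. Distinct classes have distinct grid points, and the number of points of $\Ints^2$ in a disk of radius $2\alpha$ is $O(\alpha^2)$ (using $\alpha>1$). So $C$ has $O(\alpha^2)$ neighbours in $G_\cP$, and hence $O(\alpha^2)$ incident curves.

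For item (iii), take $\gamma$ with endpoints $\dot C,\dot C'$. By (ii), every point of $\gamma$ is within $2\alpha$ of $\dot C$. Suppose another curve $\gamma'$ with endpoints $\dot D,\dot D'$ meets $\gamma$. Again by (ii), both $\dot D$ and $\dot D'$ lie within $2\alpha+2\alpha=4\alpha$ of $\dot C$. The disk of radius $4\alpha$ around $\dot C$ contains $O(\alpha^2)$ grid points, so there are $O(\alpha^2)$ choices for $\dot D$. By (i), each such $\dot D$ is the endpoint of $O(\alpha^2)$ curves. Hence $\gamma$ meets $O(\alpha^4)$ curves.

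The only step needing any care is confirming that the construction puts each edge curve inside $u\cup u'$ for intersecting $u,u'$. This is exactly the wireframe condition in \Cref{def:wireframe}: the curve splits into a part inside $p(\dot C)=u$ and a part inside $p(\dot C')=u'$. The general-position perturbations keep the curve inside these interiors, so they do not affect any of the bounds.
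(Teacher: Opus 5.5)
Your proof is correct and follows the reasoning the paper intends: the paper justifies the observation in one line, noting that each edge curve is covered by two intersecting objects of diameter at most $\alpha$. Your grid-packing counts (radius $2\alpha$ around $\dot C$ for (i), and radius $4\alpha$ combined with (i) for (iii)) are the natural way to fill in the details it leaves implicit.
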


We can now prove the following.

\begin{theorem}\label{thm:getplanarextraverts}
For any $\alpha>0$ there exists a positive constant $c=c_{\alpha}$ such that
for any $G\in \cI_\alpha$ given by its representation the following hold. Let
$\cP$ be the cell partition of $G$. Then in polynomial time one can construct
a plane graph $H$ with $\dot\cP\subseteq V(H)$ such that the natural mapping from $\cP$ to $\dot\cP$
is $c$-Lipschitz from $G_\cP$ to~$H$ for some $c=O(\alpha^{16})$. Moreover, (\textit{i}) $H$ has maximum degree at most $O(\alpha^2)$, (\textit{ii}) for each clique $C\in \cP$ the union $\bigcup_{v\in C} v$ covers at most $O(\alpha^8)$ vertices of $H$, and (\textit{iii}) each edge of $H$ is a section of an edge curve of $G$ and thus each edge of $H$ has geometric diameter at most~$2\alpha$.
\end{theorem}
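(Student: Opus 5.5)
The plan follows the overview's strategy. We reroute the edge curves of $\Gamma$ so that any two of them cross only $O(1)$ times. We then planarize, putting a vertex at every remaining crossing. The bounded number of crossings will give the degree and coverage bounds. The Lipschitz property will follow because every edge of $G_\cP$ is realized by a curve broken into $O(1)$ pieces, and every edge of $H$ lies on one edge curve.

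\textbf{Rerouting.} Consider the conflict graph $\cI(\Gamma)$. By \Cref{obs:diameterofgamma}(iii) it has maximum degree $O(\alpha^4)$, so it can be properly coloured greedily with $O(\alpha^4)$ colours. Inside one colour class the curves are pairwise disjoint, apart from possibly shared endpoints $\dot C$. We process the colour classes one round at a time. In the round for class $k$, every curve $\gamma$ of the class is modified in parallel as follows. For each curve $\gamma'$ of earlier classes that $\gamma$ crosses, take the first and the last crossing point of $\gamma$ with $\gamma'$ along $\gamma$. Replace the part of $\gamma$ between them by the corresponding portion of $\gamma'$, shifted into a thin $\mu$-neighbourhood. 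After this shortcut, $\gamma$ meets $\gamma'$ in at most $O(1)$ points, with no repeated crossings.

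The crucial invariant is that every modified curve is contained in the union of $O(1)$ original edge curves that form a chain of crossings. Each such original curve lies within distance $2\alpha$ of the others in the chain. Consequently:
\begin{enumerate}
\item curves stay within geometric diameter $O(\alpha^{\,\cdot})$;
\item each curve still meets only $O(\mathrm{poly}(\alpha))$ others;
\item since the colour classes are disjoint, shortcuts performed in the same round do not interfere with one another.
\end{enumerate}
Because there are $O(\alpha^4)$ rounds, and a round multiplies the number of original curves involved by at most a constant, the final curves are covered by $O(\mathrm{poly}(\alpha))$ objects.

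\textbf{Main obstacle.} I expect this step to be the hardest, so it needs care. After a shortcut along $\gamma'$, the curve $\gamma$ must not acquire new multiple crossings with other curves. I would handle this by ordering the shortcuts within a curve greedily along the curve. I would also charge every new crossing to a crossing of $\gamma'$, which has already been simplified in an earlier round. If this charging fails, the fallback is to accept $O(\alpha^{O(1)})$ crossings per pair, which is still enough for all later bounds.

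\textbf{Planarization and verification.} Define $H$ as follows. Its vertex set is $\dot\cP$ together with all crossing points of the final curves. Its edges are the curve pieces between consecutive vertices along a curve, and these are sections of edge curves, which gives (iii). General position (no three curves through a point) makes every crossing vertex have degree $4$. Each $\dot C$ keeps degree $O(\alpha^2)$ by \Cref{obs:diameterofgamma}(i), which gives (i). For (ii), the union of a clique's objects lies in a disk of radius $\alpha$. Only $O(\alpha^4)$ curves meet that disk, since there are $O(\alpha^2)$ cells each with $O(\alpha^2)$ curves. With $O(1)$ crossings per pair, this yields $O(\alpha^8)$ vertices.

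For the upper Lipschitz bound, each edge $CC'$ of $G_\cP$ becomes a path in $H$ whose length equals the number of crossings on its curve, which is $O(\alpha^{\,\cdot})$. For the lower bound, each edge of $H$ has geometric length $O(\alpha)$ and joins points lying in objects of cells at $G_\cP$-distance $O(1)$. Hence $\dist_{G_\cP}\le O(1)\cdot\dist_H$, which I would justify via \Cref{obs:GtoGP}. Tracking the exponents through these steps should produce $c=O(\alpha^{16})$.
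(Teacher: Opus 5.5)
Your proposal has a genuine gap. It sits exactly at the step you flag as the main obstacle, and several invariants you rely on do not hold.

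\begin{itemize}
\item \textbf{New crossings are not controlled.} Rerouting $\gamma$ \emph{geometrically} into a $\mu$-neighbourhood of $\gamma'$ makes $\gamma$ cross every curve that crosses that stretch of $\gamma'$. So a fixed $\gamma''$ can gain a new crossing with $\gamma$ for each curve along which $\gamma$ is shortcut. Your charging sketch does not bound this. The fallback of $O(\alpha^{O(1)})$ crossings per pair already breaks the $O(\alpha^8)$ count in (ii).
\item \textbf{A proper colouring is too weak.} Two disjoint curves of one class can both cross the same earlier $\gamma'$ and both be shortcut along it. Their parallel copies can then be forced to cross, so your item 3 (no interference within a round) is false. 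Moreover, one $\gamma$ may cross many curves of a single earlier class.
\item \textbf{Property (iii) and the lower Lipschitz bound fail.} The shifted pieces are not sections of original edge curves, so (iii) fails. An $H$-edge on a rerouted curve may also pass through several original pieces with no vertex in between, so the $O(\alpha)$ edge length behind your lower Lipschitz bound is unjustified.
\item \textbf{The exponent bookkeeping is not polynomial.} Constant-factor growth in the number of involved original curves over $O(\alpha^4)$ rounds gives $2^{O(\alpha^4)}$. Also, you shortcut along the \emph{current} $\gamma'$, whose crossing with $\gamma$ need not lie on the original $\gamma'$. So the natural diameter recursion doubles per round instead of adding.
\end{itemize}

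The paper sidesteps all of this with two choices.

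First, \Cref{cl:partitioncurves} with $d=3$ places curves of a class at distance at least $3$ in $\cI(\Gamma)$. Hence a curve of $\Gamma_i$ meets at most one original curve of each earlier class and makes at most $i$ detours (\Cref{cl:slicedetours}).

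Second, the geometry is never altered. Crossings of $\gamma$ with $H$ are labelled by the \emph{original} curve owning the crossed edge. For a repeated label $\gamma'$, the piece $\gamma|_{(a,b)}$ between the outermost pair is deleted, and the abstract path $\pi(\gamma)$ follows the existing $H$-path $\pi(\gamma')$. Hence $H$ remains a plane subgraph of the arrangement of the original curves:
\begin{itemize}
\item every edge is a section of one curve, giving (iii) and $\dist_H\ge\frac12\dist_{G_\cP}$;
\item each pair of original curves contributes at most two vertices, giving (ii);
\item since the crossing point lies on the original $\gamma'$, the distance from $\gamma$ grows only additively, by $O(\alpha)$ per phase (\Cref{cl:curvedist}).
\end{itemize}
The $O(\alpha^{16})$ bound then follows by counting $O(\alpha^{12})$ labels in an $O(\alpha^5)$-diameter region, each on $O(\alpha^4)$ edges (\Cref{cl:pilength}).
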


\begin{proof}
Consider the curve set $\Gamma$ in the plane, and add each
intersection of pairs of curves as new vertices, resulting in a plane graph, where each vertex is either a point $\dot C\in \dot \cP$ of degree at most $O(\alpha^2)$ (by \Cref{obs:diameterofgamma}(\textit{i})) or it is an intersection thus a vertex of degree $4$. The graph $H$ will be defined as a subgraph of this graph, so it will automatically satisfy conditions (i) and (\textit{iii}). Unfortunately, each curve of $\Gamma$ could intersect any other curve of
$\Gamma$ in $O(\cA(G))$ many points, therefore we have to decrease the number
of intersections involved. We partition $\Gamma$ into classes so that edge curves that are in the same class are pairwise distant in~$\cI(\Gamma)$, and we will add the classes of curves one by one in our construction.

\begin{claim}\label{cl:partitioncurves}
For any positive integer $d$ there is a partition of $\Gamma$ into
$O(d^4\alpha^4)$ sets such that if $\gamma,\gamma'$ are in the same partition
class then $\dist_{\cI(\Gamma)}(\gamma,\gamma')\geq d$ and
$\min_{x\in \gamma, y\in \gamma'}\dist(x,y)\geq d\alpha$. 
\end{claim}

\begin{claimproof}
\Cref{obs:diameterofgamma} implies that if $\dist(\gamma(0),\gamma'
(0))>2\alpha \cdot(d+1)$, then $\dist_{\cI(\Gamma)}(\gamma,\gamma')\geq d$. It also follows that any point of $\gamma$ and any point of $\gamma'$ has distance more than $d\alpha$. Set
$t=\lceil 2(d+1)\alpha\rceil$, and let us now define the partition class
$\Gamma_{i,j,k,\ell}$ for $i,j\in [t]\times[t]$, as follows:
\begin{align*}
\Gamma_{i,j,k,l}= \big\{ \gamma\in \Gamma \mid
&  (\gamma(0))_x \equiv i \mod t,\\
& (\gamma(0))_y \equiv j \mod t \\
& (\gamma(1))_x \equiv k \mod t,\\
& (\gamma(1))_y \equiv \ell \mod t \big\}
\end{align*}
This is a partition of size $t^2=O(d^4\alpha^4)$ that satisfies
the desired property.
\end{claimproof}

Let $\Gamma_1,\Gamma_2,\dots$ be a partition of
$\Gamma$ according to \Cref{cl:partitioncurves} into $O (\alpha^4)$ 
classes so that curves in the same class have distance at least
$d=3$ in $\cI(\Gamma)$. Our strategy is to add these classes one by one, reducing
the number of newly created intersections after each step. On the planar graph $H$ that we build,
each edge will be labeled by the curve $\gamma\in \Gamma$ that the edge came from, and we also
maintain a set $\Pi$ of paths: for each previously introduced curve $\gamma$ with endpoints $\gamma(0)$ to $\gamma(1)$ we ensure that $\Pi$ contains a path $\pi(\gamma)$ connecting these endpoints in the current graph.
\medskip

In each phase $i$ we will add the curves of $\Gamma_i$ one by one to the existing planar graph, updating the graph $H$ with new vertices and edges, as well as updating the path collection $\Pi$. 
When adding a new curve $\gamma\in \Gamma_i$, we add the endpoints $\gamma(0)$ and $\gamma(1)$ as vertices if they are not yet $H$-vertices, and we add $\gamma$ as the realization of the edge between $\gamma(0)$ and $\gamma(1)$. We also add the single-edge path $\pi(\gamma)$ realized by this edge to $\Pi$. Next, we need to address the intersections of $\gamma$ with the existing graph $H$ to maintain planarity. Suppose that $\gamma$ intersects some existing edge of $H$, and label each intersection on $\gamma$ with the curve of $\Gamma_1\cup\dots\cup\Gamma_{i-1}$ that is intersected there. If there is a label that repeats, then we use the following \emph{slicing} procedure recursively on $\gamma$.

Let $\gamma(a)$ and $\gamma(b)$ be a pair of
maximally distant intersections (along $\gamma$) of the same label $\gamma'$, that is, choose $\gamma'$ so that for any label $\hat\gamma\in \Gamma_1\cup \dots \cup \Gamma_{i-1}$ that appears on $\gamma$ at least twice at $\gamma(\hat a)$ and $\gamma(\hat b)$,
it holds that $[a,b]\setminus [\hat a,\hat b] \neq \emptyset$. We add $\gamma(a)$ and $\gamma(b)$ as new vertices, and remove the edge represented by $\gamma|_{(a,b)}$. We update the part of $\pi(\gamma)$ between $\gamma(a)$ and $\gamma(b)$ to detour along $\pi(\gamma')$ and apply the slicing procedure recursively on the curves\footnote{Technically, we have $\gamma^1,\gamma^2:[0,1]\rightarrow \Reals^2$ defined as
$\gamma^1(x)=\gamma(ax)$ and $\gamma^2(x)=\gamma(b+
(1-b)x)$.} $\gamma^1=\gamma|_{[0,a]}$ and $\gamma^2=\gamma|_{[b,1]}$. By the end of the slicing procedure, the remaining subcurves of $\gamma$ will be intersected with distinct labels; for each such intersection we add a new vertex as described above.

Let $H$ denote the final graph.
We note here that at the end of the procedure, for any given pair of edge curves $\gamma,\gamma'$ there are at most two intersections among $\gamma\cap \gamma'$ that appear as vertices in $H$. Thus property (\textit{ii}) follows: all objects of a given clique can be covered by a square of area $O(\alpha^2)$ that intersects $O(\alpha^4)$ edge curves, so there can be at most $O(\alpha^8)$ pairs of edge curve pairs intersecting in this region, each contributing at most $2$ vertices to $H$.

Let $\Pi$ denote the final paths in $H$, and let $\pi(\gamma)$ be the path of $\Pi$ corresponding to $\gamma$. We denote by $H_i$ the graph and by $\Pi_i$ the set of paths after all curves of $\Gamma_1\cup\dots\cup\Gamma_i$ have been introduced.
For a curve $\gamma\in \Gamma_1\cup \dots \cup \Gamma_i$, let $\pi_i
(\gamma)$ denote the corresponding path in $\Pi_i$. We denote the embedded curve of $\pi_i(\gamma)$ by $\tilde\pi_i(\gamma)$.

\begin{claim}\label{cl:slicedetours}
For any $\gamma\in \Gamma_i$ the path $\pi_i(\gamma)$ has at most $i+1$ edges that are represented by subcurves of $\gamma$.
\end{claim}

\begin{claimproof}
The curve $\gamma$ can intersect at most one curve from
each set $\Gamma_1,\dots,\Gamma_i$, as otherwise
there would be two curves in some $\Gamma_j$ whose graph distance in $\cI(\Gamma)$ is at most $2$. Therefore, $\gamma$ will have at most $i$
disjoint maximally distant intersection pairs in the slicing procedure. Consequently, $\pi_i(\gamma)$ makes at most $i$ detours from $\gamma$, thus $\pi_i(\gamma)$ retains at most $i+1$ subcurves of $\gamma$.
\end{claimproof}

\begin{claim}\label{cl:curvedist}
The maximum (Euclidean) distance of a curve point $x\in \gamma$ and a curve point
$y\in \tilde\pi_i(\gamma)$ is $O(i\alpha)$.
\end{claim}

\begin{claimproof}
We use induction on $i$. Clearly the statement holds for $i=1$ since for any
$\gamma\in \Gamma_1$ we have $\gamma=\tilde\pi_1(\gamma)$, and thus the maximum
distance between them is $\diam(\gamma)\leq 2\alpha$ by \Cref{obs:diameterofgamma}(ii).
Suppose now that $\max\{\dist(x,y)\mid x\in \gamma, y\in \tilde\pi_i
(\gamma)\}\leq 4i\alpha$.

Notice that if $\gamma\in \Gamma_1 \cup \dots \cup \Gamma_i$, then the
curves $\tilde\pi_i(\gamma)$ and $\tilde\pi_{i+1}(\gamma)$ are identical, so it is
sufficient to prove that for any $\gamma\in \Gamma_{i+1}$ it holds that
$\max\{\dist_{\Reals^2}(x,y)\mid x\in \gamma, y\in \pl\pi_{i+1}(\gamma)\}\leq 2
(i+1)\alpha$. Each
detour that $\pi_{i+1}(\gamma)$ takes from $\gamma$ will be along a portion
of some curve $\pl\pi_i(\gamma')$ for some
$\gamma'\in \Gamma_1\cup \dots \cup \Gamma_i$. By induction, we have that
$\pl\pi_i(\gamma')$ has maximum distance $2i\alpha$ from $\gamma'$, and
$\gamma'$ intersects $\gamma$, therefore the maximum distance from $\gamma$
is at most $4i\alpha+\diam(\gamma)+\diam(\gamma')\leq 4i\alpha+4\alpha
=4(i+1)\alpha$, which concludes the proof.
\end{claimproof}

We occasionally refer to the curves corresponding to the paths in
$\Pi$ by $\tilde\Pi$. Since there are $O(\alpha^4)$ phases, \Cref
{cl:curvedist} implies that the geometric diameter of each path
$\tilde\pi \in \tilde\Pi$ is $O(\alpha^4\cdot \alpha)=O(\alpha^5)$. Moreover, we have the following.

\begin{claim}\label{cl:pilength}
Each curve $\tilde\pi\in \tilde\Pi$ has diameter $O(\alpha^5)$, and the corresponding path $\pi(\gamma)$ in $H$ has at most $O(\alpha^{16})$ edges.
\end{claim}

\begin{claimproof}
Consider a path $\pi \in \Pi$ in $H$ and its edge labels. Recall that by
definition any pair of distinct curves in $\Gamma_i$
have minimum geometric distance at least $3\alpha$. Since the geometric diameter of $\tilde\pi$
is $O(\alpha^5)$, it may  intersect other curves within a region of size $O(\alpha^5)\times O(\alpha^5)$, which contains at most $O(\alpha^{10})$ endpoints. Each endpoint has $O(\alpha^2)$ edge curves ending there, so altogether at most $O(\alpha^{12})$ distinct labels can occur in
$\pi$.

We now bound the number of occurrences for a fixed label $\gamma\in \Gamma_i$ along $\pi$.
By \Cref{cl:slicedetours} we have that $\gamma$ will create at most $i+1=O(\alpha^4)$ subcurves
(i.e. edges labeled with $\gamma$) after its slicing in phase $i$. The later phases $j>i$ can however add further subdividing vertices on these edges. In each new phase $j>i$, the union of the label-$\gamma$ subcurves of $\pi$ can receive at most two new subdividing vertices, as at most one curve among $\Gamma_j$
can intersect $\gamma$ from each later phase $j>i$, and after its slicing it will intersect $\gamma$ at most twice.

Thus any fixed curve label can
occur on at most $O(\alpha^4)$ edges of $\pi$. Consequently, there are $O
(\alpha^{12})$ distinct edge labels on $\pi$, each occurring $O(\alpha^4)$ times, so the length of $\pi$ in $H$ is $O(\alpha^{16})$. 
\end{claimproof}

It remains to show the $c$-Lipschitz property. Since each edge of $G_\cP$ is
represented by a curve $\gamma\in \Gamma$ that has a corresponding path
$\pi\in\Pi$ in $H$, our bound on the paths in $\Pi$ imply that $\dist_{H}
(\dot{u},\dot{v})\leq O(\alpha^{16})\dist_{G_\cP}(u,v)$. For the other
inequality, observe that each edge of $H$ can be covered by at most two
objects, and therefore any path of length $k$  in $H$ corresponds to a path in $G$ of length at most $2k$, which directly maps
to a path of length at most $2k$ in $G_\cP$. Consequently, 
\begin{equation}\label{eq:Hlower}
\dist_{H}(\dot{u},\dot{v})\geq \frac12\dist_{G_\cP}(u,v),
\end{equation}
which concludes the proof of \Cref{thm:getplanarextraverts}.
\end{proof}

We are now ready to finish the proof of \Cref{thm:getplanarclique}.

\begin{proof}[Proof of \Cref{thm:getplanarclique}]
Let $H_0$ be the planar graph with extra vertices given by \Cref{thm:getplanarextraverts}. We set $\mu$ small enough so that for any edge curve $\gamma$ of $H$ covered by a minimal collection $T$ of objects it holds that the geometric $\mu$-neighborhood of $\gamma$ is also covered by $T$. We apply \Cref{lem:contractingextraverts} to $G_\cP$, $H_0$ and $\mu$. This yields an edge set $S$ and the plane graph $H_0/S$ so that $H_0/S$ and $G_\cP$ are $O(\alpha^{16})$-Lipschitz, each edge of $H_0/S$ is covered by objects from $G_\cP$, and each edge of $H_0/S$ has at most $O(\alpha^{16})$ times the geometric diameter of the maximum edge diameter of $H_0$, which ---as an edge curve section--- had geometric diameter at most $O(\alpha)$. Thus $H:=H_0/S$ is a plane graph that can be created in
polynomial time and has the desired Lipschitz property with the Lipschitz-constant $\big(O(\alpha^{16})\big)^2=O(\alpha^{32})$, its edges have geometric diameter at most $O(\alpha^{17})$ and they are covered by objects of $G_\cP$. Since there are $O(\alpha^{34})$ objects in $G_\cP$ within distance $O(\alpha^{17})$ of any given point in the plane, we have that $O(\alpha^{34})$ objects of $G_\cP$ are sufficient to cover any edge of $H$.

To prove the degree bound on $H$, recall that $V(H)=\dot\cP$ is a subset of the integer grid, so we have that the number of points within distance $O(\alpha^{17})$ from a fixed vertex $\dot C\in V(H)$ is at most $O(\alpha^{34})$, thus the maximum degree of $H$ is at most $O(\alpha^{34})$, which concludes the proof.
\end{proof}

\subsection{Constructing a wireframe from a Lipschitz embedding}

By modifying the graph obtained in \Cref{thm:getplanarextraverts} we get the following theorem.

\begin{theorem}\label{thm:Lipschitz_wireframe}
For any $\alpha>0$ there exists a positive constant $c=c(\alpha)$ such that
for any $G\in \cI_\alpha$ given by its representation in polynomial time one can construct a wireframe $\pl G$ such that $p(V(\pl G))=V(G)$ and for any $\dot u, \dot v \in V(\pl G)$ where $p(\dot u)\neq p(\dot v)$ we have
\[\frac{1}{c} \dist_G\big(p(\dot u), p(\dot v)\big) \leq \dist_{\pl G}(\dot u, \dot v)\leq c\cdot\dist_G\big(p(\dot u), p(\dot v)\big).\]
Moreover, for each $v\in V(G)$ we have $|p^{-1}(v)|\leq c$, and the maximum degree of $\pl G$ is at most $c+\Delta(G)$ where $\Delta(G)$ is the maximum degree of $G$.
\end{theorem}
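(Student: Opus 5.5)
Our plan is to start from the plane graph $H$ of \Cref{thm:getplanarextraverts}, which has $\dot\cP\subseteq V(H)$ and whose edges are sections of edge curves, and turn it into a wireframe of $G$ by assigning parents. Recall that every edge curve $\gamma\in\Gamma$ corresponding to an edge $CC'$ of $G_\cP$ was built via \Cref{cor:shortestpathframe} from an edge $uu'$ of $G$ with $u\in C$, $u'\in C'$. Therefore $\gamma$ splits at a point $t_\gamma$ into a part inside $\inter u$ and a part inside $\inter u'$. We subdivide every edge of $H$ at such split points; extra degree-2 vertices do not affect planarity. For each vertex $\dot x$ of the subdivided graph that lies on (a section of) some curve $\gamma$, we choose as parent $p(\dot x)$ the object of $\{u,u'\}$ covering $\dot x$ on the appropriate side of $t_\gamma$. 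A crossing vertex of two curves gets the parent from one of the two curves, say the one introduced earlier. For a point $\dot C\in\dot\cP$, we choose a fixed object of $C$ containing $\dot C$ in its interior. Edges of $H$ are subcurves of a single edge curve. With the split points added, each edge lies in one object, or crosses from $\inter u$ to $\inter u'$ exactly once. Hence wireframe condition 2 holds, up to the mismatch at crossing vertices and at $\dot C$. We repair such mismatches by inserting, next to the vertex, a tiny extra subdivision vertex whose parent is the other object. This works because the vertex lies in the interior of both objects, or on the boundary by general position, and we can use perturbation (\Cref{obs:perturb}) to also secure condition 3.

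Next we ensure $p(V(\pl G))=V(G)$. For each object $v\in V(G)$ not yet a parent, we pick its clique $C$ and attach a pendant vertex $\dot v$ near $\dot C$, inside $\inter v\cap\inter p(\dot C)$. This is nonempty since both contain the grid point $\dot C$ in their interior. We join $\dot v$ to $\dot C$ by a short segment avoiding the other edges. This adds at most $|C|-1\le\Delta(G)$ to the degree of $\dot C$, which explains the term $c+\Delta(G)$ in the degree bound; all other degrees are $O(\alpha^2)$ by \Cref{thm:getplanarextraverts}(i).

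For the distance bounds, first consider the upper bound. By \Cref{thm:getplanarextraverts}, $\dist_H(\dot C,\dot C')\le O(\alpha^{16})\dist_{G_\cP}(C,C')$. Subdividing multiplies lengths by a constant, since each edge of $H$ gets $O(1)$ extra vertices. Any vertex $\dot x$ with parent $u$ lies within $O(1)$ hops along its own curve, or at most $O(\alpha^{16})$ hops along $\pi(\gamma)$, from $\dot C$ for the clique $C\ni u$. Combined with \Cref{obs:GtoGP}, this gives $\dist_{\pl G}(\dot u,\dot v)\le c\,\dist_G(p(\dot u),p(\dot v))$ when the parents differ, where distinct parents have $G$-distance at least 1 and this absorbs additive constants. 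For the lower bound, each wireframe edge joins objects that are equal or adjacent, so a wireframe path of length $k$ projects to a walk of length at most $k$ in $G$.

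For the bound $|p^{-1}(v)|\le c$: every vertex with parent $v$ lies on an edge curve covered by $v$, or at a split point or crossing inside $v$. By \Cref{thm:getplanarextraverts}(ii) the clique's union covers $O(\alpha^8)$ vertices of $H$, and subdivision and repair add $O(1)$ per vertex. The main obstacle is the careful, consistent parent assignment at crossing vertices and at the $\dot C$ points while keeping condition 2 strictly (open curve parts in interiors). I expect the local repair by tiny inserted vertices together with \Cref{obs:perturb} to handle it.
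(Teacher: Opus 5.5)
Your proposal is correct and follows essentially the same route as the paper. Both take the plane graph $H$ of \Cref{thm:getplanarextraverts}, give $\dot C$ a fixed object of $C$ and each crossing vertex an endpoint object of one of its two curves, subdivide each edge of $H$ a constant number of times with vertices whose parents are the objects of the underlying edge curve, attach pendants at $\dot C$ for unused objects, and bound distances via the $O(\alpha^{16})$-length paths $\pi(\gamma)$ and the Lipschitz property of $H$. The only differences are minor: the paper uses a midpoint or a pair of points near a common interior point of the two objects where you use the split point plus tiny repair vertices, and your direct observation that every vertex with parent $u$ lies on some $\pi(\gamma)$ ending at $\dot C$ with $u\in C$ streamlines the paper's upper-bound argument through property $(\star)$ and adjacent pairs.
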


\begin{proof}
Let $\cP$ be the cell partition of $G$, and apply~\Cref{thm:getplanarextraverts} to obtain a plane graph $H$. We will use the drawing of $H$ to construct $\pl G$. For each $C\in \cP$ fix a vertex $v_C\in C$, and set $p(\dot C)=v_C$. We note that the vertices $v_C$ are distinct for different cliques $C$. Consider now a vertex $\dot a \in V(H)\setminus \dot \cP$, which is an intersection of the edge curves $\gamma$ and $\gamma'$. Since $\dot a \in \gamma$, we have that it must be contained either in the interior of $p_{\gamma}(\gamma(0))$ or in the interior of $p_{\gamma}(\gamma(1))$. Similarly, it is in the interior of at least one among $p_{\gamma'}(\gamma'(0))$ and $p_{\gamma'}(\gamma'(1))$. We set $p(\dot a)$ to be one of the objects $u \in \{p_{\gamma}(\gamma(0)), p_{\gamma}(\gamma(1)), p_{\gamma'}(\gamma'(0)), p_{\gamma'}(\gamma'(1))\}$ where $\dot a \in \inter u$.

Doing this assignment for all vertices of $H$ results in a planar graph that satisfies the first property of wireframes. We will need further modifications to satisfy the second property. Consider an edge $e=\dot x\dot y\in E(H)$, and suppose that the edge is represented as a subcurve of the edge curve $\gamma$ where $p_{\gamma}(\gamma(0))=a$ and $p_{\gamma}(\gamma(1))=b$. Let $\gamma^{e}$ denote the restriction of $\gamma$ to the part between $\dot x$ and $\dot y$, re-parameterized so that $\gamma^e(0)=\dot x$ and $\gamma^e(1)=\dot y$. The definition of $\gamma$ implies that $\gamma^e[0,1]$ is either entirely contained in the interior of some object $w\in \{a,b\}$, or there is some $t\in (0,1)$ such that $\gamma^e[0,t)\subset \inter a$ and $\gamma^e(t,1]\subset \inter b$. If $\gamma^e$ is contained
in $\inter w$ where $w\in \{a,b\}$, then we place a new vertex at its middle whose parent is $w$. In the latter case, let $\dot z\in \gamma^e$ be a point such that $\dot z \in \inter a \cap \inter b$. Then we place two new points $\dot z_a$ and $\dot z_b$ in a small neighborhood of $\dot z$ so that they are both in $\dot z \in \inter a \cap \inter b$, the initial part of $\gamma_a$ ending at $\dot z_1$ is covered by $\inter a$, and the final part of $\gamma^e$ from $\dot z_2$ is covered by $\inter b$. We set their parents as $p(\dot z_1) = a$ and $p(\dot z_2) = b$. Notice that we need to subdivide each edge of $H$ at most twice.

Next, to ensure that $p(V(\pl G))=V(G)$, consider some vertex $u\in V(G)$ that is not a parent of any vertex so far, and let $C$ be the clique of $u$. Then we add a new point $\dot u$ with $p(\dot u)=u$ to our planar graph in a small neighborhood of $\dot C$ (and inside $\inter u$) and connect $\dot u$ to $\dot C$.

At the end of this procedure, the mapping $p(.)$ defined above gives a valid wireframe of $G$ that we denote by $\pl G$. We note here the following about the above construction.
\begin{quote}
\begin{description}
\item[($\star$)] If $\gamma^e$ is an edge of $H$ that is part of the edge curve $\gamma$, then $\gamma^e$ contains at least one vertex of $V(\pl G)$ whose parent is either $p_\gamma(\gamma(0))$ or $p_\gamma(\gamma(1))$.
\end{description}
\end{quote}

We will prove the degree bound and the bound on $p^{-1}(v)$ next. Recall that by~\Cref{thm:getplanarextraverts} the maximum degree of $H$ is $O(1)$, and our subdivisions introduced only degree-$2$ vertices. We furthermore added up to $|C|-1$ neighbors to some of the vertices $\dot C$; it follows that the maximum degree of $\pl G$ is at most $O(1)+\max_{C\in \cP}|C|\leq O(1)+\Delta(G)$. \Cref{thm:getplanarextraverts} created a graph where the objects in any clique of $G$ contain $O(1)$ vertices of $H$ in total, and $H$ has degree $O(1)$. We have subdivided each edge of $H$ at most twice, thus there are $O(1)$ vertices of $\pl G$ intersected by any given clique (and thus by any given object) of $G$. Since an object $v\in V(G)$ can only be assigned as parent to vertices of $\pl G$ that fall  inside $v$, we have that $|p^{-1}(v)|=O(1)$.

It remains to show the distance property. Let $\dot x, \dot y\in V(\pl G)$ with $x=p(\dot x)$ and $y=p(\dot y)$, where $x\neq y$. First, notice that since $\pl G$ is a wireframe in $G$, we have that each edge of $\pl G$ corresponds to an edge (between the parents) in $G$; in particular, a shortest path in $\pl G$ can be mapped to a walk of equal length in $G$ connecting $x$ and $y$. Therefore
\[\dist_G(x,y)\leq  \dist_{\pl G}(\dot x, \dot y).\]

Next, we need to upper bound $\dist_{\pl G}(\dot x, \dot y)$. Notice that it is sufficient to prove that $\dist_{\pl G}(\dot x, \dot y)$ is bounded by a constant whenever $x$ and $y$ are adjacent in $G$; the general bound then follows from applying this bound on all edges of a shortest path.
Let $\gamma_u^e$ be the curve of an edge of $H$ containing (or incident to) $\dot x$, and let $\gamma_u$ be the edge curve of $G$ with respect to $\cP$ that $\gamma^e_u$ is a subcurve of. By ($\star$) we have that $\gamma^e_u$ contains a vertex $\dot u$ such that $p(\dot u)\in \{p_{\gamma_u}(\gamma_u(0)),p_{\gamma_u}(\gamma_u(1))\}$. Let $u:=p(\dot u)$ and let $C_u\in \cP$ such that $u\in C_u$ (and thus $\dot C_u\in \{\gamma_u(0),\gamma_u(1)\}$). We define $\dot v$, $v$ and $C_v$ symmetrically based on $\dot y$.

By the triangle inequality, we have
\begin{equation}\label{eq:plG_triangle}
\begin{split}
\dist_{\pl G}(\dot x, \dot y)
&\leq \dist_{\pl G}(\dot x, \dot u) + \dist_{\pl G}(\dot u, \dot C_u)\\
&\phantom{\leq} + \dist_{\pl G}(\dot C_u, \dot C_v)\\
&\phantom{\leq} + \dist_{\pl G}(\dot C_v, \dot v) + \dist_{\pl G}(\dot v, \dot y).
\end{split}
\end{equation}

Recall from the proof of \Cref{thm:getplanarextraverts} that $\gamma^e_u$ is a part of some curve $\tilde \pi_u\in \tilde \Pi$ with endpoints $\gamma_u(0)$ and $\gamma_u(1)$. The corresponding path $\pi_u$ of $H$ has length at most $c_\Pi=O(\alpha^{16})$ by \Cref{cl:pilength}. Since $\pl G$ was defined by subdividng each edge of $H$ at most twice, all distances are at most tripled; consequently the path in $\pl G$ corresponding to the curve $\tilde \pi_u$ has length at most $3c_\Pi$. In particular, we get that $\dist_{\pl G}(\dot u,\dot C_u)\leq 3c_\Pi$. The analogous argument gives $\dist_{\pl G}(\dot v,\dot C_v)\leq 3c_\Pi$. Since $\gamma^e_u$ contains both $\dot x$ and $\dot u$, and $\gamma^e_u$ was subdivided at most twice in $\pl G$, we also have $\dist_{\pl G}(\dot x,\dot u)\leq 3$. Applying the same arguments to the symmetric terms in~\eqref{eq:plG_triangle}, we obtain
\[\dist_{\pl G}(\dot x, \dot y)\leq 6+6c_\Pi+\dist_{\pl G}(\dot C_u, \dot C_v).\]

Next, we bound $\dist_{\pl G}(\dot C_u, \dot C_v)$. Since $\pl G$ is obtained by subdividing each edge of $H$ at most twice, we have
\[\dist_{\pl G}(\dot C_u, \dot C_v)\leq 3\dist_{H}(\dot C_u, \dot C_v)\leq 3c_H\dist_{G_\cP}(C_u, C_v),\]
where $c_H$ is the Lipschitz-constant of the mapping from $G_\cP$ to $H$ that is guaranteed in \Cref{thm:getplanarextraverts}.

Recall that $u\in C_u$ and $v\in C_v$. We also have that $\dist_{\pl G}(\dot x,\dot u)\leq 3$ which implies that $\dist_{G}(x,u)\leq 3$; analogously, $\dist_{G}(y,v)\leq 3$. Since $x$ and $y$ are adjacent in $G$, we have that
\[\dist_{G_\cP}(C_u,C_v)\leq \dist_G(u,v)\leq \dist_G(u,x)+\dist_G(x,y)+\dist_G(y,v)\leq 7.\]

Putting it all together, we have 
\begin{align*}
\dist_{\pl G}(\dot x, \dot y) &\leq  6+6c_\Pi+\dist_{\pl G}(\dot C_u, \dot C_v)\\
& \leq 6+6c_\Pi + 3c_H\dist_{G_\cP}(C_u, C_v)\\
& \leq 6+6c_\Pi + 21c_H\\
& =O(1),
\end{align*}
which concludes the proof.
\end{proof}

\section{Contraction decompositions for Lipschitz-mapped graphs}
\label{sec:contract}
In order to create our algorithms, we will need to use Baker's
technique~\cite{Baker94} to be able to work on a bounded-treewidth graph.
Our starting point is the following \emph{contraction decomposition} from~\cite{DemaineHM10}.

\ContractdecompPlanar*


Based on the above theorem and our machinery for planarizing intersection graphs via Lipschitz embeddings, we get a contraction decompostion for intersection graphs. However, we need the following key lemma first, which allows us to compare the treewidth of a contraction of a Lipschitz-mapping of $H$ of $G$ to the treewidth of some (larger) contraction of $G$.
Recall that for a graph $G$ and a vertex set $X\subset V(G)$, let $G\vcon X$ denote the graph
where the edges induced by $X$ are contracted, i.e., $G/E(G[X])$.

\raisecontract*


\begin{proof}
  \dm{Ezt a bizonyitast at kene irni $V(G\vcon X)|_A$ helyett consolidation mapre es kulon observation nelkul. De talan nem most.}\skb{Atirtam}
Let $\breve X$ denote the set of contracted vertices in $H\vcon X$, that is, $V
(H\vcon X)$ is the disjoint union of $V\setminus X$ and $\breve X$. Let $\breve X^
{c^2}$ denote the set of vertices contracted in $G\vcon X^{c^2}$, that is, $V(G\vcon X^{c^2})$ is the disjoint union of $V\setminus X^{c^2}$ and $\breve X^{c^2}$.

We claim that if $a,b\in V$ are contracted to the same vertex $\breve x\in \breve X$, then they are also contracted to the same vertex $\breve x^{c^2}\in \breve X^{c^2}$. Indeed, since $a$ and $b$ are both contracted to $\breve x$ in $H$,
there exists a path from $a$ to $b$ in $H[X]$. By the Lipschitz property,
each edge $uv$ of this path can be represented by a path $P_{uv}$ of length
at most $c$ in $G$. The vertices of $P_{uv}$ are in $N_G(\{u,v\},c/2)$, so by
the Lipschitz property, they are in $N_H(\{u,v\},c^2/2)\subset X^{c^2}$ and
thus the concatenation of the paths $P_{uv}$ forms a connected subgraph of $H[X^{c^2}]$ that contains both $u$ and $v$. This concludes the proof of the
claim. For a given $\breve x\in \breve X$, let $\phi(\breve x)$ be the
corresponding contracted vertex in $\breve X^{c^2}$. The previous claim implies
that $\phi:\breve X\rightarrow \breve X^{c^2}$ is well-defined and surjective.

Let $\cT$ be a tree decomposition of $H\vcon X$ of width $w$. Using this
decomposition we now create a tree decomposition $\cT'$ of $G\vcon X^{c^2}$ with the same underlying tree. In what follows,
$B_H$ denotes a bag of the tree decomposition of $H\vcon X$, and the corresponding
bag in the new decomposition for $G\vcon X^{c^2}$ is denoted by~$B_G$. We will furthermore let $\psi_G:V(G)\rightarrow V(G \vcon X^{c^2})$ and $\psi_H:V(H)\rightarrow V(H\vcon X)$ denote the consolidation maps of the contractions.

For each vertex $v\in B_H\cap (V\setminus X)$, place all vertices of $\psi_G(N_H(v,c))$ in~$B_G$. For each vertex
$v\in \breve X\cap B_H$ we place $\phi(v)$ in $B_G$.  Thus the resulting bags $B_G$
have increased their size by a factor of at most 
\begin{equation}\label{eq:sloppy1}
\max_v |N_H(v,c)|\leq \sum_{i=0}^c \Delta^i <\Delta^{c+1},
\end{equation}
so if the result is a
valid tree decomposition of $G\vcon X^{c^2}$, then its treewidth is at most $
(w+1)\cdot \Delta^{c+1}-1=O(\Delta^{c+1})\cdot w$. Thus, it remains to show that the resulting decomposition
$\cT'$ is a valid tree decomposition of $G\vcon X^{c^2}$.  

Observe that all vertices of $G\vcon X^{c^2}$ are included in some bag by the surjectivity of $\phi$. Let $uv$ be an
edge of $G\vcon X^{c^2}$. If both $u$ and $v$ are vertices of $V\setminus X^
{c^2}$, then $\dist_H(u,v)\leq c\cdot\dist_G(u,v)=c$, thus $u\in \psi_G(N_H(v,c))$ and the vertices $u,v$ are present in all bags $B_G$ where the
original bag $B_H$ contained $v$. If $u\in \breve X^{c^2}$, then let $u'v$ be
the edge of $G$ corresponding to $uv$. (Note that $v$ cannot be in $\breve X^
{c^2}$ as the contraction $G\vcon X^{c^2}$ would have contracted the edge $uv$ as
well.) Observe that $u'\in V\setminus X$ as otherwise if $u'\in X$, then
$v\in N_G(u',1)\subseteq N_H(u',c)$ so both $u'$ and $v$ would contract to
the same vertex in $G\vcon X^{c^2}$. Consequently, $v\in \psi_G(N_H(u',c))$ and the edge is represented in any bag $B_G$ where the bag $B_H$
contained $u'$.

Finally, we need to show that the set of bags that contain a given vertex
$v\in V(G\vcon X^{c^2})$ form a connected subtree of $\cT'$.
Suppose first that $v\in V\setminus X^{c^2}$. Then $v$ appears in a bag
$B_G$ if and only if the corresponding bag $B_H$ contains some vertex of $N_H
(v,c)\setminus X$. Note that since $v\not\in X^{c^2}$, we have that $N_H
(v,c)$ is disjoint from $X$, so the condition simplifies to $B_H$ containing
some vertex of $N_H(v,c)$. By \Cref{obs:connected_treedecomp}, the
vertices of $\psi_H(N_H(v,c))$ appear in a connected subtree of
$\cT$ as they induce a connected subgraph of $H\vcon X$, thus the same holds for the set of bags where $v$ appears in.

Suppose now that $v\in \breve X^{c^2}$. Let $Y$ be the set of vertices that are
contracted to $v$ by the contraction $G\vcon X^{c^2}$. Clearly $G[Y]$ is
connected, so by the Lipschitz property, we have that the connected
components of $H[Y]$ can be connected by length $c$ paths, hence, $H[N_H
(Y,c)]$ is connected. By \Cref{obs:connected_treedecomp} we have that the
vertices $\psi_H(N_H(Y,c))$ appear in a connected subtree of $\cT$ as they induce a connected graph. Notice however that $v$ is added to a bag $B_G$ if and
only if some vertex of $\psi_H(N_H(Y,c))$ appears in the corresponding bag
$B_H$. Consequently, the set of vertices containing $v$ also form a connected
subtree in~$\cT'$.
\end{proof}

We are now ready to state our contraction decomposition theorem.

\begin{theorem}[Contraction Decomposition for Intersection Graphs]
\label{thm:contractdecomp}
Let $G$ be an intersection graph of similarly sized fat objects with cell partition $\cP$ where each clique $C\in \cP$ has size at most $|C|\leq \ell$.
Then for any $k$ there is a collection of sets $\cX_1,\dots,\cX_k\subset \cP$ such that $\sum_{i=1}^k |\cX_i|=O(|\cP|)$ and the corresponding vertex sets $X_i:= \bigcup_{C\in \cX_i} C$ satisfy the following bounds about treewidth and $\cP$-flattened treewidth.
\begin{enumerate}[label=\textbf{(\roman*)}]
\item $\tw(G_\cP\parcon\cX_i)=O(k)$
\item $\tw(G\vcon X_i)=O(k\ell)$
\item $\tw_{\cP\vcon X_i}(G\vcon X_i)=O(k\log(\ell+1))$, where $\cP\vcon X_i$ is the partition of $V(G\vcon X_i)$ containing the cliques of $\cP\setminus \cX_i$, and the contracted vertices are added as singletons.
\end{enumerate}
\end{theorem}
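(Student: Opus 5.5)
The plan is to push everything through the plane graph $H$ of \Cref{thm:getplanarclique} and then lift back with \Cref{lem:twcompare}. First I apply \Cref{thm:getplanarclique} to $G$. This yields a plane graph $H$ on vertex set $\dot\cP$ (identified with $\cP$) such that the identity is $c$-Lipschitz from $G_\cP$ to $H$. Here $c=O(\alpha^{32})$ and the maximum degree is $\Delta=O(\alpha^{34})$, both constants.

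Next I apply \Cref{thm:contractdecomp_planar} to $H$ with parameter $k$. This partitions $E(H)$ into $E_1,\dots,E_k$ with $\tw(H/E_i)=O(k)$. Let $Y_i\subseteq\cP$ be the set of endpoints of edges in $E_i$. Since $H[Y_i]$ contains $E_i$, the graph $H\vcon Y_i$ is a contraction of $H/E_i$ (it contracts possibly more edges). Treewidth is minor-monotone, so $\tw(H\vcon Y_i)=O(k)$. Also $\sum_i|Y_i|\le 2|E(H)|\le \Delta|\cP|=O(|\cP|)$.

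Now I set $\cX_i:=Y_i^{c^2}=N_H(Y_i,c^2)$. Each vertex of $Y_i$ has at most $\Delta^{c^2+1}=O(1)$ vertices within $H$-distance $c^2$. Hence $\sum_i|\cX_i|=O(\sum_i|Y_i|)=O(|\cP|)$. Applying \Cref{lem:twcompare} with the pair $(G_\cP,H)$ and $X=Y_i$ gives
\[
\tw(G_\cP\vcon\cX_i)=O(\Delta^{c})\cdot\tw(H\vcon Y_i)=O(k).
\]
This is (i), once I note that $G_\cP\parcon\cX_i$ is interpreted as contracting each component of $G_\cP[\cX_i]$, which is exactly $G_\cP\vcon\cX_i$.

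For (ii) and (iii) I take a tree decomposition $\cT$ of $G_\cP\vcon\cX_i$ of width $O(k)$. Every vertex of $G\vcon X_i$ maps naturally to a vertex of $G_\cP\vcon\cX_i$: vertices in a class $C\notin\cX_i$ map to $C$. A component of $G[X_i]$ lies inside one component of $G_\cP[\cX_i]$, because the classes are cliques and edges between classes give edges of $G_\cP$.

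In each bag I replace every uncontracted class $C$ by its at most $\ell$ vertices. I replace each contracted vertex $\breve x$ of $G_\cP\vcon\cX_i$ by all contracted vertices of $G\vcon X_i$ lying over it. This last step is the main obstacle, since one component of $G_\cP[\cX_i]$ might split into several components of $G[X_i]$.

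I expect to resolve it by observing that every vertex of $X_i$ is in a class of $\cX_i$. Classes are cliques, so two objects in adjacent classes of $G_\cP$ lie in classes joined by an edge of $G$. It follows that the components of $G[X_i]$ correspond exactly to the components of $G_\cP[\cX_i]$, and the union over a class is connected in $G[X_i]$ via that edge. So the map is a bijection on contracted vertices.

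Bag sizes then grow by a factor of at most $\ell$, giving $\tw(G\vcon X_i)=O(k\ell)$. The connectivity and edge conditions follow from \Cref{obs:connected_treedecomp} and the fact that edges of $G\vcon X_i$ project to edges or vertices of $G_\cP\vcon\cX_i$. For (iii), the consolidation $(G\vcon X_i)_{\cP\vcon X_i}$ is precisely $G_\cP\vcon\cX_i$. Its weights are $\log(|C|+1)\le\log(\ell+1)$ for cliques and $\log 2$ for contracted singletons. Using $\cT$ directly therefore gives weighted width $O(k\log(\ell+1))$.
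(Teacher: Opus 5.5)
Your proposal is correct and follows the paper's own proof. It builds $H$ via \Cref{thm:getplanarclique}, applies the planar contraction decomposition, sets $\cX_i=N_H(V(E_i),c^2)$, lifts through \Cref{lem:twcompare}, and blows each bag up by the clique size $\ell$; it also makes explicit two steps the paper glosses over, namely that $H\vcon V(E_i)$ is a minor of $H/E_i$ and that components of $G[X_i]$ correspond exactly to components of $G_\cP[\cX_i]$, so that $(G\vcon X_i)_{\cP\vcon X_i}=G_\cP\vcon\cX_i$.
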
 

\begin{proof}
Apply \Cref{thm:getplanarclique} to construct a plane graph $H$ with
the same vertex set as $G_\cP$ (namely, $V(H)=V(G_\cP)=\cP$) such that
the identity map is $c$-Lipschitz from $G_\cP$ to $H$. We apply
\Cref{thm:contractdecomp_planar} on $H$ with $k$ classes. We get a
partition $E_1,\dots,E_k$ of $E(H)$ where $\tw(H/E_i)=O(k)$. For each
$E_i$ let $\cX_i\subseteq \cP$ denote the vertex set in $G_\cP$ that is
within $H$-distance $c^2$ from $V(E_i)$, that is, set $\cX_i=N_H(V
(E_i),c^2)$. Let $X_i$ be the set of vertices in $V(G)$ that appear in the
cliques $\cX_i$. Since $H$ has constant-bounded degree $\Delta_H$, each
$c^2$-neighborhood in $H$ has constant size, thus \begin{equation}\label{eq:sloppy2}
|\cX_i|=O(|E_i|\cdot \Delta_H^{c^2+1})=O(|E_i|).
\end{equation}
Consequently, 
\[\sum_{i=1}^k |\cX_i|=O(\sum_{i=1}^k |E_i|)=O(E(H))=O(V(H))=O(|\cP|).\]

By \Cref{lem:twcompare} each of
$G_\cP\parcon\cX_i$ has treewidth $O(\Delta_{H}^{c+1})\cdot \tw(H/E_i)=O(k)$. Consequently, $\tw(G\vcon X_i)= O(k)\cdot \max_{C\in \cS} |C| = O(k\ell)$. Since contracted vertices of $G\vcon X_i$ are singletons of the partition $\cP\vcon X_i$, the maximum clique size is obtained on some clique of $\cP\setminus \cX_i$. Thus the $(\cP\vcon X_i)$-flattened treewidth of $G\vcon X_i$ can be bounded as $\tw_\cP(G\vcon X_i)=O(k)\cdot \max_{C\in \cP} \log(|C|+1)=O(k\log(\ell+1))$.
\end{proof}

\begin{remark}
The bounds \eqref{eq:sloppy1} and \eqref{eq:sloppy2} are the only reason why our running time is doubly (or triply, in case of the slower Steiner tree algorithm) exponential as a function of $\alpha$. To get single-exponential dependence on $\alpha$, one can observe that the size of a $c$-neighborhood in $H$ is only quadratic in $c$, as all the graphs $H$ where \Cref{lem:twcompare} is applied as well as the graph $H$ in \Cref{thm:contractdecomp} are not only of degree at most~$c$, but have further porperties: their vertices are grid points and its edges have length bounded by some polynomial of $\alpha$. Thus the size of the $c$-neighborhood of a vertex of $H$ is $\poly(\alpha)c^2$.
\end{remark}

\section{Sparsifying cliques}
\label{sec:sparse}

The goal of this section is to simplify the underlying graph so that each
clique has $f(1/\eps)$ vertices, which will be required for our
algorithms. The simplification is done in several steps, and uses some
problem-specific arguments.

\subsection{Sparsifying cliques for \stsp}

\begin{lemma}\label{lem:subsetTSPsolTerminals}
Let $G$ be an intersection graph with cell partition $\cP$ and let
$T\subset V(G)$ be a set of terminals. Then there is an optimum subset
TSP walk $W$ for $T$ that uses $O(1)$ non-terminal vertices from each
cell $C\in \cP$, i.e., $|(V(W)\setminus T)\cap C|=O(1)$ for each $C\in \cP$.
Moreover, all vertices are used $O(1)$ times by the walk, and all but $O(1)$ vertices of $T\cap C \cap V(W)$ are incident only to edges induced by $C$. 
\end{lemma}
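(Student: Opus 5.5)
Take an optimum walk $W$ that is extremal in a suitable sense: among all optimum subset TSP walks, choose one minimizing a secondary potential, namely the total number of (vertex, visit) incidences, and then the number of edges of $W$ that leave a cell. Every claim will follow from an exchange argument that either shortens $W$ (contradicting optimality) or keeps the length and lowers the potential. The only geometric fact needed is the one from the cell partition: each $C\in\cP$ is a clique, and all neighbours of $C$ lie in $O(\alpha^2)=O(1)$ cells, all within distance $O(\alpha)$ of $\dot C$.

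First, bound the visit multiplicity. Cut $W$ into \emph{segments} between consecutive terminal visits. Each segment is a path (otherwise shortcut it), so its interior consists of non-terminals. Suppose two visits of $W$ to vertices $u,u'$ of the same cell $C$ are separated along $W$ by a subwalk of length at least $2$ that contains no terminal visits. Replacing that subwalk by the edge $uu'$ (or by nothing if $u=u'$) shortens $W$. Hence between two visits to a cell $C$ either a terminal is visited or the visits are consecutive/adjacent.

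Next, bound non-terminals per cell. Consider the non-terminal vertices of $C$ that $W$ uses. Each such usage sits inside a segment, and a segment, being a shortest path between its endpoint terminals restricted to $W$, meets $C$ in at most two consecutive vertices. So it suffices to show that only $O(1)$ segments pass through $C$ using non-terminals of $C$. A segment using a non-terminal $v\in C$ enters from a neighbouring cell and leaves to a neighbouring cell, and there are $O(1)$ such cells. Suppose many ($\gg$ a constant) segments pass through $C$. By pigeonhole, two of them, $S_1$ and $S_2$, have entry and exit vertices in the same pair of neighbouring cells, or in cells at bounded distance. Now perform a standard $2$-opt style reconnection. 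Remove the portions of $S_1$ and $S_2$ near $C$ (each of length at least $2$) and reconnect using cliques: the entry cells are cliques, the exit cells are cliques, and $C$ is a clique. We can route through $C$ using one pass only and splice the tour so it remains a single closed walk, using the standard orientation argument. If the reversal orientation is wrong, use three segments sharing cell types, so that pigeonhole with $O(1)$ types forces a favourable pair. This strictly decreases the length, contradicting optimality. Hence $O(1)$ segments pass through $C$, giving $|(V(W)\setminus T)\cap C|=O(1)$ and $O(1)$ visits per non-terminal.

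For terminals $t\in T\cap C$, the same count shows that only $O(1)$ visits of $W$ to $C$ are entries from or exits to other cells. This is because each maximal run of $W$ inside $C$ is bounded by two inter-cell edges, and the reconnection argument applied to runs instead of segments caps the number of runs at $O(1)$. All remaining terminals of $C$ are visited in the interior of a run, so their incident $W$-edges are induced by $C$. This gives the last claim. It also bounds multiplicities for terminals: within a run a terminal need not repeat, since repeating it would allow a shortcut, and only $O(1)$ runs exist.

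The main obstacle is the reconnection step. Merging two passes through a clique into one must keep $W$ a single closed walk and must not increase length. I would handle this by the classical argument that with $O(1)$ "port types" and more than a constant number of passes, some pair admits a non-reversing splice (as in Arora-style patching), spending at most $2$ extra edges while saving at least $2$ per removed pass. If an exact saving fails, I would choose three passes to gain a strict decrease.
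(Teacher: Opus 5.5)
Your mechanism is the paper's: cells are cliques, a cell has $O(1)$ neighbouring cells, an orientation pigeonhole enables a $2$-opt exchange, and visits whose walk-neighbours all lie in their own cell are bypassed. But the step that carries the proof, the reconnection, is never pinned down, and as sketched it is inconsistent.

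Any splice that still crosses $C$ once replaces four edges by four. Your own accounting (``spend at most $2$, save at least $2$ per removed pass'') therefore only breaks even, yet you claim a strict decrease. Taking three passes only fixes the orientation; it buys no strictness. The alternative ``or in cells at bounded distance'' must also be dropped, since the reconnecting edges exist only inside a single clique. For non-terminal passes the strict version avoids $C$ altogether. Given same-direction passes $a_1c_1b_1$ and $a_2c_2b_2$, use $a_1a_2$, then the walk portion between $b_1$ and $a_2$ traversed backwards, then $b_1b_2$. This drops $c_1,c_2$ and saves $2$.

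The cleaner fix is the paper's exchange on single inter-cell edges. Orient $W$. If three $W$-edges join $C$ and $C'$, two of them, $uu'$ and $vv'$, go from $C$ to $C'$. Replace them by $uv$ and $u'v'$ and traverse the $u'$--$v$ portion backwards. The result is a closed walk of no greater length (strictly shorter if $u=v$ or $u'=v'$) with two fewer inter-cell edges.

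With your lexicographic choice of $W$, this leaves at most two $W$-edges between any two cells, hence $O(1)$ inter-cell edges at $C$. Any visit in $C$ whose two walk-neighbours lie in $C$ can be skipped, because $C$ is a clique, unless it is the only visit of a terminal. So every other visit in $C$ is incident to one of those $O(1)$ edges. This gives all three claims at once, uniformly for terminals and non-terminals, with no segments or runs.

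This is exactly what your terminal part is missing. Your segment reconnection gains only by deleting the non-terminal $C$-vertices of the passes. A run containing terminals cannot be deleted, so ``the reconnection argument applied to runs'' gives no strict decrease. Bounding the runs needs the length-preserving exchange above together with the inter-cell-edge potential, which you introduced but never invoke.

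Also, your justification that a terminal need not repeat within a run fails. In $a,t,t',t,b$ with $t,t'\in C$ and $t'$ adjacent to neither $a$ nor $b$, there is no shortcut. The $O(1)$ multiplicity bound still holds by the skipping argument above.
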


\begin{proof}
We start by simplifying the tour so that no pair of neighboring cells\footnote{The cells $C$ and $C'$ are neighboring if they are neighbors in $G_\cP$, that is, they may not be ``grid-neighbors''.} has more
than two edges of $W$ going between them. Let $C,C'$ be two cells such that
there are at least three edges of $W$ going between them. We pick an
orientation on $W$ to make it a directed closed walk. As a result, there are at
least two arcs of $W$ going in the same direction between $C$ and $C'$;
suppose without loss of generality that $uu'$ and $vv'$ are edges of $W$ that
both go from $C$ to $C'$. We can then replace the edges $uu'$ and $vv'$ in
$W$ with the edges $uv$ and $u'v'$ (and reverse the direction of the $u'$ to
$v$ subpath of $W$) to get a closed walk $W'$ of the same length that has fewer inter-cell steps. Repeated applications of this step result in a tour
$W$ that has at most two edges going between any pair of neighboring cells.

If a vertex $x\in C\setminus T$ has all incident edges ending in $C$, then
$x$ could be bypassed and we would get a shorter walk spanning all the
terminals; therefore, each vertex of $C\setminus T$ must be incident to at
least one edge that goes to a different cell. Since there are only $O
(1)$ cells that a given vertex can be connected to, and at most two edges in
$W$ lead to each of these cells, there are only $O(1)$ non-terminals in $C$
that are on $W$. The last claim also follows.
\end{proof}

Next, we remove some vertices from the terminal set $T$ (but not from the
graph) so that in the new instance each cell has only a few terminals from
the new terminal set.

\begin{lemma}\label{lem:subsetTSPsparsifyT}
Given $\eps>0$ and an instance $(G,T)$ of \stsp, we can construct a vertex set
$X\subset T$ to be removed from the set of terminals in $\poly(n)$ time such
that the following hold for $X$ and the new terminal set $T'=T\setminus X$.
\begin{enumerate}[label=(\roman*)]
	\item Each $C\in \cP$ contains only a few terminals:
              $|C\cap T'|\leq 1/\eps$
	\item If $(G,T)$ has a subset TSP of length $\ell$ then $(G,T')$ has
              a subset TSP of length $\big(1+O(\eps)\big)\ell-|X|$. 
	\item Given a subset TSP of $(G,T')$ of length $\ell'$, a subset TSP
              of $(G,T)$ of length $\ell'+|X|$ can be constructed in $O(n^2)$
              time.
\end{enumerate}
\end{lemma}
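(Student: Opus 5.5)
The plan is to keep, in each cell, a small set of terminals and drop the rest from $T$, while arguing that the dropped ones can be spliced cheaply into any tour of the remaining ones. Concretely, fix $C\in\cP$ and let $k=\lfloor 1/\eps\rfloor$. If $|C\cap T|\le k$ we remove nothing from $C$. Otherwise we keep $k$ of them (arbitrarily chosen, say $T'_C$) and put the rest $X_C=(C\cap T)\setminus T'_C$ into $X$; set $X=\bigcup_C X_C$. This is clearly polynomial and gives (i).

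For (iii): given a closed walk $W'$ of length $\ell'$ visiting $T'$, for each cell $C$ with $X_C\neq\emptyset$ pick a kept terminal $t\in T'_C$ (nonempty since $|C\cap T|>k\ge 1$), which lies on $W'$. At one visit of $t$, insert the detour $t,x_1,x_2,\dots,x_m,t$ where $X_C=\{x_1,\dots,x_m\}$. Since $C$ is a clique this costs $m+1$ edges. To get exactly $+|X|$ I would instead splice the $x_i$ between $t$ and the next vertex of $W'$ only when that step stays in $C$; more robustly, replace the visit $u\to t\to w$ by $u\to t\to x_1\to\dots\to x_m\to w$ if $x_m$ is adjacent to $w$. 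Simplest fix: note that we may choose which kept terminals to keep \emph{after} seeing nothing, so instead I would use the slack: charge the extra $+1$ per cell to the $k\ge 1/\eps$... but that breaks the exact bound. So the cleaner route: insert the path $t,x_1,\dots,x_m$ and return $x_m\to t'$, where $t'\ne t$ is another kept terminal of $C$ visited immediately after... Rather than fight this, I will splice as $t\to x_1\to\dots\to x_m\to t$ and remove one copy of $t$'s revisit by noting $x_m$ is adjacent to the successor of $t$ whenever that successor is in $C$. I expect to need Lemma~\ref{lem:subsetTSPsolTerminals}-type rerouting here and accept a possible $+1$ per cell, absorbing it by slightly redefining $X$ (e.g.\ treating one extra vertex per cell as kept), which is the bookkeeping detail to settle.

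For (ii): take an optimal walk $W$ for $(G,T)$ of length $\ell$, normalized by Lemma~\ref{lem:subsetTSPsolTerminals}, so in each cell all but $O(1)$ visited terminals are incident only to edges inside $C$; such terminals sit in maximal intra-cell runs of the walk. Shortcutting a terminal $x\in X_C$ that lies strictly inside an intra-cell run (both neighbours in $C$, hence adjacent) saves one edge. So deleting all of $X$ saves $|X|$ edges minus the number of "bad" removed terminals, namely those that are among the $O(1)$ boundary terminals of $C$. The key is the choice in the construction: I would choose $T'_C$ to be robust, and argue that per cell at most $O(1)$ removals fail to save an edge, while a cell only contributes to $X$ if it has more than $1/\eps$ terminals, each costing at least one unit of $\ell$ shared among $O(1)$ visits. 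Hence the loss is $O(1)$ per such cell, and the number of such cells is at most $O(\eps)\ell$ since each has $>1/\eps$ terminals visited by $W$ and each vertex is used $O(1)$ times. This gives length $\ell-|X|+O(\eps\ell)$.

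The main obstacle is the interplay between the fixed choice of $X$ (made before knowing $W$) and the boundary terminals of $W$ in each cell; the charging above circumvents it, since it never needs the bad terminals to avoid $X$, only that they number $O(1)$ per heavy cell.
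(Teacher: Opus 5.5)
Your parts (i) and (ii) are fine and follow the paper's argument. You keep $\lfloor 1/\eps\rfloor$ arbitrary terminals per cell and normalize the walk via Lemma~\ref{lem:subsetTSPsolTerminals}. You then shortcut every removed terminal whose two walk-neighbours lie in its cell, losing only $O(1)$ per heavy cell, and there are fewer than $\eps|T|\le\eps\ell$ heavy cells.

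Part (iii), however, is not proved. You leave the extra edge as an unsettled ``bookkeeping detail'', and the remedy you suggest does not work. A detour $t,x_1,\dots,x_m,t$ from a single visit of a kept terminal costs $|X_C|+1$, so you only get $\ell'+|X|+h$, where $h$ is the number of heavy cells. Keeping one more terminal per cell changes $X$, but you still have to return to the splice point, so the $+1$ per cell remains. (The weaker bound $\ell'+|X|+h\le(1+O(\eps))\ell'+|X|$ would suffice for Lemma~\ref{lem:sparseTSP}, but it is not what (iii) states.)

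The missing idea is to splice into an \emph{internal} edge of the tour. You observed that this works when the successor of $t$ lies in $C$, but you did not show such an edge exists. First apply the uncrossing step of Lemma~\ref{lem:subsetTSPsolTerminals} to the given $T'$-tour. This does not increase its length, keeps every visited vertex, and leaves only $O(1)$ tour edges with exactly one endpoint in $C$. In a heavy cell $C$, all $\lfloor 1/\eps\rfloor$ kept terminals are visited, and every visit uses two tour edges. If no tour edge had both ends in $C$, there could be only $O(1)$ visits to vertices of $C$, a contradiction once $\eps$ is below a constant depending on $\alpha$. Hence some tour edge $uv$ has $u,v\in C$. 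Since $C$ is a clique, you can replace $uv$ by the path $u,x_1,\dots,x_m,v$, which costs exactly $|X_C|$. Doing this in every heavy cell gives a tour of $(G,T)$ of length $\ell'+|X|$.
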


\begin{proof}
Let $\cP$ be a cell partition of $G$. In each clique $C$, we select up to
$1/\eps$ arbitrary terminals per clique to be put into $T'$; in cliques with
at most $1/\eps$ terminals, all terminals will be put into $T'$. We let
$X=T\setminus T'$ bet he set of terminals to be deleted. Clearly this
satisfies~(\textit{i}).

To prove (\textit{ii}), let $W$ be a subset TSP walk of $(G,T)$ of length $\ell$ that is
simplified with respect to $\cP$ in the sense of
\Cref{lem:subsetTSPsolTerminals}. Consequently, there are $O(1)$ edges of
$W$ leaving each clique $C\in \cP$; let us call such edges \emph{external}, and let
$W_{ext}$ denote their multiset. Edges of $W$ whose endpoints are in the same
clique of $\cP$ are called \emph{internal}, and their multiset is denoted by
$W_{int}$. Note that if a vertex of $W$ is not incident to any external
edges, then it must be a terminal and it is visited only once in $W$. Indeed,
a non-terminal that is incident to only internal edges can be removed from
$W$, and multiple visitations of a terminal with internal edges can also be
simplified to one visitation. Since we have $O(1)$ external edges incident to
any clique $C\in \cP$, we have that there are at most $|C\cap T|$ internal
edges incident to~$C$.

We say that $C$ is \emph{light} if $|C\cap T|\leq 1/\eps$, and otherwise we
say that $C$ is \emph{heavy}.
We can now construct a subset TSP walk $W'$ for the instance $(G,T')$ as follows:
we follow the walk $W$, and if in some heavy clique $C$ we have that both the arc entering and exiting $v\in X\cap V(W)\cap C$ is internal, then we skip~$v$. Note that the skipping is possible since the predecessor and successor of $v$ are connected in $C$.
\[
|W'| = |W|-\sum_{C \text{ heavy}} (|X\cap C| - O(1))
= |W|-|X| + O(\eps |T|)=\big(1+O(\eps)\big)\ell-|X|. 
\]

To prove (\textit{iii}), let $W'$ be an optimum tour for the instance $(G,T')$ that is
simplified using \Cref{lem:subsetTSPsolTerminals}. We construct a tour
$W$. First, let $W_{ext}=W'_{ext}$. To construct the internal edges of $W$,
consider a clique $C\in \cP$. If $C$ is light, then we use the same internal
edges from $G[C]$ for $W'$ as we did in $W$. If $C$ is heavy, then notice
that $W'$ must have an internal edge in $C$. Indeed, $C$ has $1/\eps$
terminals from $T'$ and there are only $O(1)$ edges of $W'$ leave $C$. Let $
(u,v)\in W'_{int}[C]$ be a an edge of $W'$ . We replace the edge $(u,v)$ with
a $u\to v$ path whose internal vertices are $C\cap X$, which lengthens $W'$
by $|C\cap X|$. We apply this modification to each clique $C$, and the
resulting tour $W$ is a subset TSP tour of all the terminals in $T=T' \cup
X$, and has length
\[|W|=|W'|+\sum_{C\text{ heavy}} (|C\cap X|)=\ell' +|X|.\]
Finally, notice that both \Cref{lem:subsetTSPsolTerminals} and the
modification above is algorithmic and can be done in $O(n^2)$ time.
\end{proof}

\begin{lemma}\label{lem:sparsenTSP}
Let $G$ be an intersection graph with cell partition $\cP$ and terminal set
$T$ such that $|C\cap T|\leq k$ for each $C\in \cP$. Then for any $\eps>0$
there is a vertex set $V'\subseteq V$ such that for any $C\in \cP$ we have
$|V'\cap C| = O(k^2/\eps^4)$, and for any $u,v \in T$ we have $\dist_{G[V']}
(u,v)\leq (1+\eps)\dist_G(v,v')$. Moreover, given $G$ and its cell
partition $\cP$, the set $V'$ can be computed in polynomial time.
\end{lemma}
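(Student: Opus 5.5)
We follow the sparsification sketch from the overview: select one distinguished vertex $d_C$ in each cell $C\in\cP$, set $D=\{d_C\mid C\in\cP\}$, and let $L=\lceil c_0/\eps\rceil$ for a suitable constant $c_0=c_0(\alpha)$. For every pair $(a,b)$ of vertices in $T\cup D$ with $\dist_G(a,b)\le L$ we fix one shortest $a$--$b$ path $P_{ab}$ in $G$. We define $V'$ to be the union of the vertex sets of all these paths, together with $T\cup D$. All of this is clearly polynomial time, since it only needs BFS from each vertex of $T\cup D$.

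\textbf{Size bound.} Fix a cell $C$. A path $P_{ab}$ of length at most $L$ that meets $C$ has both endpoints within $G$-distance $L$ of $C$. By \Cref{obs:GtoGP} these endpoints lie in cells at $G_\cP$-distance at most $L$ from $C$. Each edge of $G_\cP$ joins grid points at Euclidean distance $O(\alpha)$, so these cells lie in a Euclidean disk of radius $O(\alpha L)$ around $\dot C$. Hence there are $O(\alpha^2L^2)=O(1/\eps^2)$ such cells, and each contains at most $k+1$ vertices of $T\cup D$. So the number of relevant endpoints is $O(k/\eps^2)$, and the number of relevant pairs is $O(k^2/\eps^4)$. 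A shortest path uses at most two vertices of the clique $C$, since otherwise it could shortcut. Therefore $|V'\cap C|=O(k^2/\eps^4)$.

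\textbf{Stretch bound.} Let $u,v\in T$ and let $P$ be a shortest $u$--$v$ path in $G$ of length $\ell$. If $\ell\le L$, then $P_{uv}\subseteq V'$ and we are done. Otherwise, mark vertices $x_0=u, x_1,\dots,x_m=v$ along $P$ at consecutive spacing about $\eps\ell'$, where $\ell'=\min(\ell,L/4)$ up to constants. In practice we take spacing $s=\Theta(1/\eps)$ with $s\le L/2$, so $m=O(\eps\ell)$. For each interior $x_i$, let $y_i=d_{C(x_i)}$, which is adjacent to or equal to $x_i$ because $C(x_i)$ is a clique. Consecutive points satisfy $\dist_G(y_i,y_{i+1})\le s+2\le L$, so $P_{y_iy_{i+1}}\subseteq V'$, and likewise for the end segments from $u$ and to $v$. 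Concatenating these paths gives a $u$--$v$ walk in $G[V']$ of length at most $\ell+2m=\ell+O(\ell/s)=(1+O(\eps))\ell$. We choose $s=\Theta(1/\eps)$ with constants arranged so that this is at most $(1+\eps)\ell$ and $s+2\le L$.

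\textbf{Main obstacle.} The delicate part is the size bound: we must make sure that the number of cells within $G$-distance $L$ of a cell is $O(L^2)$ rather than exponential in $L$. This is the step where the geometry is used, namely the bound $O(\alpha)$ on the Euclidean length of an edge of $G_\cP$ together with the fact that cells are distinct grid points. Everything else is bookkeeping with constants depending on $\alpha$.
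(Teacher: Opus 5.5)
Your proposal is correct and follows essentially the same route as the paper: one hub per cell, shortest paths between all pairs in $T\cup\{\text{hubs}\}$ at distance $O(1/\eps)$, a size bound via the $O(1/\eps^2)$ nearby cells and the fact that a shortest path meets a clique in at most two vertices, and a stretch bound obtained by rerouting long paths through hubs every $\Theta(1/\eps)$ steps (the paper uses threshold $4/\eps+2$ and spacing $4/\eps$). Your explicit geometric justification that only $O(L^2)$ cells lie within $G$-distance $L$ of a cell is a detail the paper merely asserts. The muddled remark about spacing ``$\eps\ell'$'' should simply be deleted in favor of the spacing $s=\Theta(1/\eps)$ you actually use.
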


\begin{proof}
From each $C\in \cP$ choose a single vertex $a_C$ to be the \emph{hub} of $C$,
and collect the hubs in the vertex set $A$. Then for any pair of vertices
$u,v\in T\cup A$ of distance at most $4/\eps+2$ we add to $V'$ all vertices
of a shortest $u \to v$ path.

If a vertex $w \in C$ is added to $A$, then it is either a hub, a terminal, or
it is an internal vertex of a shortest path connecting some $u,v\in T\cup A$
with $\dist_G(u,v)\leq 4/\eps+2$. Consequently, $\dist_G(w,u)$ and $\dist_G
(w,v)$ are both less than $4/\eps+2$. The number of vertices in $T\cup A$ at
distance at most $4/\eps+2$ from $w$ is at most $O(k/\eps^2)$, therefore the
number of such shortest paths is $O(k^2/\eps^4)$, and thus $|V'\cap C|=O
(k^2/\eps^4)$.

To show the second property, let $u,v\in T$ be vertices whose distance is
$\ell\geq 4/\eps$. Let $u=u_0,u_1,\dots,u_k=v$ be a sequence of vertices on
a shortest $u\to v$ path such that $k=\lceil \ell\eps/4 \rceil$ and $\dist_G
(u_i,u_{i+1})\leq 4/\eps$. Let $a_i$ be the hub of the cell $C_i\in \cP$
which contains $u_i$. Since $\dist_G(u_i,u_{i+1})\leq 4/\eps$, we have that
$\dist_G(a_i,a_{i+1})\leq 4/\eps+2$, therefore $V'$ contains a shortest path
from $a_i,a_{i+1}$. Concatenating these paths we get a $u\to v$ path whose
internal vertices are in $V'$ and has length at most $\ell+2k=\ell +
2\lceil \ell\eps/4 \rceil < \ell+2\cdot(\ell\eps/2)=(1+\eps)\ell$.

Finally, notice that the above construction can be repeated using a
polynomial algorithm.
\end{proof}

\begin{lemma}\label{lem:sparseTSP}
Given $\eps>0$ and an instance $(G,T)$ of \stsp with a fixed cell partition
$\cP$ of $G$, in polynomial time we can construct an instance $(G',T')$ that
satisfies the following conditions.

\begin{itemize}
\item $V(G')\subseteq V(G)$, $T'\subseteq T$
\item Each clique $C\in \cP$ satisfies
      $|C\cap V(G')|\leq O(1/\eps^6)$ and $|C\cap T'|\leq 1/\eps$.
\item The shortest subset TSP
tour of $(G',T')$ is within a $1+O(\eps)$ factor from the optimum of $
(G,T)$, and given a $(1+O(\eps))$-approximate tour for $(G',T')$ we can in
polynomial time create a $(1+O(\eps))$-approximate tour for $(G,T)$.
\end{itemize}
\end{lemma}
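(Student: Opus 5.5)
The plan is to compose the two preceding lemmas. First apply Lemma~\ref{lem:subsetTSPsparsifyT} to $(G,T)$ with the given $\eps$. This produces $X\subseteq T$ and $T'=T\setminus X$ with $|C\cap T'|\le 1/\eps$ for every $C\in\cP$. Optima are related by $\tsp(G,T')\le (1+O(\eps))\tsp(G,T)-|X|$, and any tour of $(G,T')$ of length $\ell'$ lifts to a tour of $(G,T)$ of length $\ell'+|X|$. Next apply Lemma~\ref{lem:sparsenTSP} to $(G,T')$ with $k=1/\eps$. This yields $V'$ with $|V'\cap C|=O(k^2/\eps^4)=O(1/\eps^6)$ and $\dist_{G[V']}(u,v)\le(1+\eps)\dist_G(u,v)$ for all $u,v\in T'$. (The statement of Lemma~\ref{lem:sparsenTSP} has a typo; this is the intended reading.) Set $G'=G[V']$, keeping $T'$ as terminal set. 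Terminals lie in $V'$ by construction, since paths between terminals are added. The first two bullet points are then immediate, and everything runs in polynomial time.

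For the optimum bound, take an optimal tour $W$ of $(G,T')$. Split it into consecutive subwalks between successive terminal visits; each is at least as long as the $G$-distance between its endpoints. Replace each subwalk by a shortest path in $G'$, which is at most $(1+\eps)$ times longer. Hence
\[
\tsp(G',T')\le(1+\eps)\tsp(G,T')\le(1+O(\eps))\tsp(G,T)-(1+\eps)|X|\le (1+O(\eps))\tsp(G,T)-|X|.
\]

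For the reverse direction, let a tour of $(G',T')$ have length $\ell'\le(1+O(\eps))\tsp(G',T')$. Since $G'$ is a subgraph of $G$, this is also a tour of $(G,T')$. Part (iii) of Lemma~\ref{lem:subsetTSPsparsifyT} turns it into a tour of $(G,T)$ of length
\[
\ell'+|X|\le(1+O(\eps))\bigl((1+O(\eps))\tsp(G,T)-|X|\bigr)+|X|\le(1+O(\eps))\tsp(G,T).
\]
The $-|X|$ and $+|X|$ terms cancel up to an $O(\eps)|X|$ error, and $|X|\le|T|\le\tsp(G,T)$ for nontrivial instances. The same computation shows that the optimum of $(G',T')$, shifted by $|X|$, is within a $1+O(\eps)$ factor of the optimum of $(G,T)$, which is the intended reading of the third bullet.

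The main obstacle is precisely this additive $|X|$ offset. One must check that the multiplicative error from the second step, applied to $\tsp(G,T')$, does not blow up relative to $\tsp(G,T)$. It does not, because $\tsp(G,T')\le(1+O(\eps))\tsp(G,T)$. The error from part (ii) of Lemma~\ref{lem:subsetTSPsparsifyT} is $O(\eps|T|)$, which is also $O(\eps)\tsp(G,T)$ when $|T|\ge 2$; the cases $|T|\le1$ are trivial. A second point to handle is degenerate cells: one should verify that Lemma~\ref{lem:subsetTSPsparsifyT}(iii) still applies when a heavy cell keeps at least one $T'$-terminal in $V'$, which holds since we keep $1/\eps\ge1$ terminals there.
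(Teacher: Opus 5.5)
Your proposal is correct and is the paper's proof: compose Lemma~\ref{lem:subsetTSPsparsifyT} with Lemma~\ref{lem:sparsenTSP} (taking $k=1/\eps$), and view any tour as a sequence of shortest paths between consecutive terminal visits. Your explicit tracking of the additive $|X|$ offset is more careful than the paper's one-line account, which simply asserts a $1+O(\eps)$ factor. This includes your remark that the third bullet must be read with that shift, since $\tsp(G',T')$ can be far smaller than $\tsp(G,T)$.
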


\begin{proof}
Fix a cell partition $\cP$ of $G$. By \Cref{lem:subsetTSPsparsifyT} we
can create a terminal set $T'$ where for each $C\in \cP$ the number of
terminals is at most $|C\cap T'|\leq 1/\eps$, and the length of the optimum
tour increases by at most a $1+O(\eps)$ factor. We can then apply \Cref
{lem:sparsenTSP} on $(G,T')$ with $k=1/\eps$ to get a graph $G'$ where each
clique $C'$ has size at most $|C'|\leq 1/\eps^6$. Note that any subset TSP walk
$W$ can be regarded as a sequence of shortest paths between its consecutive
terminal visits, so by \Cref{lem:sparsenTSP} the shortest subset TSP
tour of $(G',T')$ is within a $1+O(\eps)$ factor from the optimum of $
(G,T)$, and given a $(1+O(\eps))$-approximate tour for $(G',T')$ we can in
polynomial time create a $(1+O(\eps))$-approximate tour for~$(G,T)$.
\end{proof}

\subsection{Sparsifying cliques for \stein}

The next two lemmas are similar to those that we have seen for \stsp.

\begin{lemma}\label{lem:SteinSolTerminals}
Let $G$ be an intersection graph with cell partition $\cP$ and let $T\subseteq V
(G)$ be a set of terminals. Then given any Steiner tree $H_0$ for $(G,T)$ we can construct in polynomial time a Steiner tree $H$ for $(G,T)$ where $V(H)\subset V(H_0)$, and 
$H$ uses $O(1)$ non-terminal vertices from each cell $C\in \cP$, i.e., $|(V
(H)\setminus T)\cap C|=O(1)$ for each $C\in \cP$. Moreover, there are at most $O(1)$ vertices of $T\cap C \cap V(H)$ that are incident to some edge of $H$ that has an endpoint outside $C$.
\end{lemma}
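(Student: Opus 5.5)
We mimic the proof of \Cref{lem:subsetTSPsolTerminals}, but for trees instead of walks. Start from $H_0$ and first replace it by a spanning tree of $G[V(H_0)]$, which keeps $V(H)\subseteq V(H_0)$ and connectivity. Then repeatedly delete non-terminal leaves. All subsequent modifications also only delete vertices or swap edges within $G[V(H_0)]$, so the containment $V(H)\subseteq V(H_0)$ is automatic and every step is polynomial.

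\textbf{Step 1: bound inter-cell edges.} We bound the number of edges of the tree between any two neighboring cells $C,C'$. Suppose the current spanning tree $H$ has at least two edges $uu'$ and $vv'$ with $u,v\in C$ and $u',v'\in C'$. Delete $vv'$; this splits $H$ into two components. Since $G[C]$ and $G[C']$ are cliques, we can reconnect with one of $uv$ or $u'v'$. Removing $vv'$ separates $v$ from $v'$. If $u$ and $v$ end up in different components, add $uv$. Otherwise $u,v$ are on the same side, and then $u'$ is on that side too (via $uu'$), while $v'$ is on the other side, so add $u'v'$. In either case we obtain a spanning tree on the same vertex set with strictly fewer inter-cell edges.

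Iterating, we reach a tree with at most one edge between any pair of cells. Every cell has $O(1)$ neighboring cells in $G_\cP$ (degree $O(\alpha^2)$). Hence each cell is incident to $O(1)$ edges of $H$ leaving it, which already gives the final claim: only $O(1)$ vertices of $C\cap V(H)$, terminal or not, are incident to an edge with an endpoint outside $C$.

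\textbf{Step 2: prune useless non-terminals.} Take a non-terminal $x\in C\cap V(H)$ none of whose $H$-edges leave $C$. All its neighbors lie in $C$, which induces a clique, so we can delete $x$ and reconnect its neighbors by a path inside $C$. This keeps a tree on $V(H)\setminus\{x\}$ and creates no new inter-cell edges. Also delete non-terminal leaves as they arise. After exhausting these steps, every non-terminal in $C$ is incident to an inter-cell edge, so by Step 1 there are $O(1)$ of them per cell. Steps 1 and 2 never increase the number of inter-cell edges, so they can be interleaved until neither applies; termination follows since each step decreases either $|V(H)|$ or the number of inter-cell edges.

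\textbf{Main subtlety.} The crux is verifying the edge swap in Step 1 always reconnects the tree without adding vertices. This is the case analysis above, which relies only on $C$ and $C'$ being cliques. The remaining counting is routine.
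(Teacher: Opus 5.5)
Your proof is correct and follows essentially the same route as the paper. You first swap an inter-cell tree edge for an intra-cell clique edge until each pair of neighboring cells shares at most one tree edge, and then bypass non-terminals whose tree edges all stay inside their cell; your explicit case analysis (adding $uv$ or $u'v'$) simply spells out the $C\leftrightarrow C'$ symmetry the paper uses with ``without loss of generality.''
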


\begin{proof}
We start by simplifying $H_0$ into a tree $H_1$ so that no pair of neighboring cells has more than
one edge of $H_1$ going between them. Let $C,C'$ be two cells such that there
are at least two edges of $H_0$ going between them, let $(u,u')$ and $
(v,v')$ be these edges (where $u,v$ and $u',v'$ are not necessarily
distinct). We remove the edge $(u,u')$ from $H_0$, and as a result the tree falls
apart into two components, one of which contains the edge $(v,v')$. Without
loss of generality, $u$ and $v$ are in different components. Therefore we can
add the edge $(u,v)$ to form a tree of the same length, but which has fewer
inter-clique edges.

By applying the above modification exhaustively, we get a a tree $H_1$ where any
pair of neighboring cells have at most one edge going between them.
If a vertex $x\in C\setminus T$ has all its incident edges in $H_1$ ending in $C$, then
we bypass $x$ by adding a path connecting its neighbors, and remove $x$ from $H_1$.
Once no more such modifications are possible, the resulting tree $H$ has the property that each vertex of $H$ in $C\setminus T$ must be incident to at
least one edge that goes to a different cell. Since there are only $O
(1)$ cells that a given vertex can be connected to, and at most one edge in
$H$ lead to each of these cells, there are only $O(1)$ non-terminals in $C$
that are on $H$. Thus $H$ has the desired properties and it was obtained in polynomial time.
\end{proof}

\begin{lemma}\label{lem:SteinerSolSparsifyT}
Given $\eps>0$ and an instance $(G,T)$ of \stein, we can construct a
vertex set $X\subset T$ to be removed from the set of terminals in
$\poly(n)$ time such that the following hold for $X$ and the new terminal set
$T'=T\setminus X$.
\begin{enumerate}[label=(\roman*)]
  \item Each $C\in \cP$ contains only a few terminals:
              $|C\cap T'|\leq 1/\eps$
  \item If $(G,T)$ has a Steiner tree of length $\ell$ then $(G,T')$ has
              a Steiner tree of length $\big(1+O(\eps)\big)\ell-|X|$. 
  \item Given a Steiner tree of $(G,T')$ of length $\ell'$, a Steiner
              tree of $(G,T)$ of length $\ell'+|X|$ can be constructed in
              $O(n^2)$ time.
\end{enumerate}
\end{lemma}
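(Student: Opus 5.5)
We follow the proof of \Cref{lem:subsetTSPsparsifyT}, with \Cref{lem:SteinSolTerminals} in place of \Cref{lem:subsetTSPsolTerminals}. Note that in our vertex-counting definition of \stein, length means $|S|$.

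\textbf{Construction of $T'$.} In each cell $C\in\cP$ we keep up to $1/\eps$ arbitrary terminals in $T'$, keeping all of them if $|C\cap T|\leq 1/\eps$. We set $X=T\setminus T'$. Property (i) is then immediate, and the construction clearly runs in polynomial time. Call $C$ \emph{heavy} if $|C\cap T|>1/\eps$; only heavy cells contain vertices of $X$.

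\textbf{Proof of (ii).} Take an optimal Steiner tree $S$ of $(G,T)$ with $|S|=\ell$, and simplify it by \Cref{lem:SteinSolTerminals}. Since $V(H)\subseteq V(H_0)$, the size does not increase. In each heavy cell $C$, all but $O(1)$ terminals of $C$ are incident only to edges of the tree inside $C$.

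\begin{itemize}
\item We delete from $S$ every vertex of $X\cap C$ that has no edge leaving $C$.
\item Connectivity is restored because $G[C]$ is a clique. The remaining tree vertices in $C$ are still pairwise adjacent, and each deleted terminal's subtree neighbours lay in $C$. So the components created by the deletion all contain a vertex of $C$ and are rejoined through the clique.
\end{itemize}

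There is one subtle case: all of $S\cap C$ might have been deleted. This cannot happen, since $C$ is heavy and so $C$ contains at least one surviving terminal of $T'$.

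The set $S'$ obtained this way is connected and contains $T'$. Its size is
\[|S'|\le \ell-\sum_{C\text{ heavy}}\bigl(|X\cap C|-O(1)\bigr)=\ell-|X|+O(\#\text{heavy cells}).\]
Each heavy cell holds more than $1/\eps$ terminals, so the number of heavy cells is less than $\eps|T|\le\eps\ell$. This gives $|S'|\le(1+O(\eps))\ell-|X|$.

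\textbf{Proof of (iii).} Given a Steiner tree $S'$ of $(G,T')$, we add all of $X$ to it. Each $x\in X\cap C$ is adjacent to the terminals of $T'\cap C\subseteq S'$, and $T'\cap C$ is nonempty because $C$ is heavy. Hence $S'\cup X$ is connected, contains $T$, and has size $\ell'+|X|$. This takes $O(n^2)$ time.

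\textbf{Main obstacle.} The only delicate step is the connectivity argument in (ii). The point is that every piece of the tree separated by the deletion still touches the cell $C$, which holds because the deleted terminal had all of its tree neighbours inside $C$.
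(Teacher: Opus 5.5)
Your proposal is correct and follows the paper's proof essentially step for step. It uses the same choice of $T'$, the same use of \Cref{lem:SteinSolTerminals} to delete all but $O(1)$ vertices of $X\cap C$ in each heavy cell (reconnecting through the clique $C$), and the same reattachment of each $x\in X$ to a surviving terminal of $T'\cap C$ for (iii); your component-based connectivity argument and the explicit bound $\#\text{heavy cells}<\eps|T|\le\eps\ell$ only spell out details the paper leaves implicit.
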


\begin{proof}
Let $\cP$ be a cell partition of $G$. In each clique $C$, we select up to
$1/\eps$ arbitrary terminals per clique to be put into $T'$; in cliques with
at most $1/\eps$ terminals, all terminals will be put into $T'$. We let
$X=T\setminus T'$ be the set of terminals to be deleted. Clearly this
satisfies~(\textit{i}).

To prove (\textit{ii}), let $H$ be an optimum Steiner tree of $(G,T)$ that is
simplified with respect to $\cP$ in the sense of
\Cref{lem:SteinSolTerminals}. As in
\Cref{lem:subsetTSPsparsifyT}, edges of $H$ leaving a clique $C\in \cP$
are external and denoted by $H_{ext}$, and the rest of the edges are $H_
{int}$. Since there are at most $O(1)$ external edges incident to $C$, and
each non-terminal in $C$ must be incident to at least one such edge, we have
that there are at most $|C\cap T|+O(1)$ internal edges incident to $C$.

We can now construct a Steiner tree $H'$ for the instance $(G,T')$. To
construct the internal edges of $H'$, let $C$ be some clique. We say that $C$
is \emph{light} if $|C\cap T|\leq 1/\eps$, and otherwise we say that $C$ is \emph{heavy}.
To construct a shorter tree, in each heavy clique $C$ we remove from $H$ any $v\in X\cap C$ that is only incident to internal edges. (We can connect the neighbors of $v$ inside the clique with a star after the removal.) Since there are only $O(1)$ external edges incident to $C$, this removes at least $|X\cap C|-O(1)$ vertices. Thus 
\[
|H'| = |H|-\sum_{C \text{ heavy}} (|X\cap C| - O(1))
= |H|-|X| + O(\eps |T|)=\big(1+O(\eps)\big)\ell-|X|. 
\]
%


To prove (\textit{iii}), let $H'$ be an optimum tree for the instance $(G,T')$. We can
then construct the desired tree $H$ by simply connecting each terminal of $X$
to some terminal of $T'$ in the same clique.
\end{proof}

\begin{lemma}\label{lem:sparsenSteinercliques}
Let $G$ be an intersection graph with cell partition $\cP$ and terminal set
$T$ such that $|C\cap T|\leq k$ for each $C\in \cP$. Then for any $\eps>0$
there is a vertex set $V'\subseteq V(G)$ such that for any $C\in \cP$ we have
$|V'\cap C| = 2^{O((1/\eps)\log(k/\eps))}$, and for any $T' \subset T$ we
have $|\smt(G[V'],T')|\leq (1+\eps)|\smt(G,T')|$. The set $V'$ can be computed
in $2^{O((1/\eps)\log(k/\eps))}\poly(n)$ time.
\end{lemma}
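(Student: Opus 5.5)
Following the Steiner analogue of \Cref{lem:sparsenTSP}, we fix a hub $a_C$ in every cell and set $A$ to be the set of hubs. Put $L=\lceil c/\eps\rceil$ for a suitable constant $c$. For every cell $C$, let $Z_C$ be the set of hubs and terminals lying within distance $2L$ of $C$. Since only $O(L^2)$ cells are that close and each contributes at most $k+1$ such vertices, $|Z_C|=O(kL^2)$. For every subset $S\subseteq Z_C$ of size at most $t=O(1/\eps)$ with $\smt(G,S)\le O(L)$, we compute an optimal Steiner tree of $S$ and add its vertices to $V'$. The number of subsets is $|Z_C|^{t}=2^{O((1/\eps)\log(k/\eps))}$. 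With $t$ terminals and cost $O(1/\eps)$, each tree can be found in $2^{O(t)}\poly(n)$ time (Dreyfus--Wagner restricted to the $O(L)$-ball), which gives the stated running time.

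Bounding $|V'\cap C|$ comes next. A vertex $w\in C$ enters $V'$ only through a tree of cost $O(L)$ whose terminals are all within distance $O(L)$ of $w$. Such terminal sets are subsets of size at most $t$ of the $O(kL^2)$ hubs and terminals near $C$, and there are $2^{O((1/\eps)\log(k/\eps))}$ of them. By \Cref{lem:SteinSolTerminals} we may assume each chosen tree uses only $O(1)$ non-terminal vertices per cell, so each set contributes $O(1)$ vertices of $C$. Multiplying gives the bound on $|V'\cap C|$.

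The main step is the approximation guarantee for an arbitrary $T'\subseteq T$. Take an optimal Steiner tree $H$ for $(G,T')$ and decompose it into pieces in the style of Arora and Borradaile--Klein--Mathieu. We need a standard tree-cutting lemma: $H$ can be split at $O(\eps|H|+1)$ cut vertices into edge-disjoint connected subtrees. Each subtree has size $O(1/\eps)$ and touches at most $O(1/\eps)$ terminals and cut vertices, obtained by cutting greedily into chunks of $\Theta(1/\eps)$ vertices after replacing high-degree subtrees by their boundaries. Each cut vertex $u$ is replaced by the hub $a_{C(u)}$ of its cell. Attaching the hub costs one extra vertex and one edge, and since $C(u)$ is a clique, connecting the pieces through the hub costs $O(1)$ per cut vertex. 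Each piece now has a terminal set $S$ consisting of terminals of $T'$ and hubs. This set $S$ lies inside $Z_C$ for the cell $C$ of any of its vertices, has size at most $t$, and has $\smt(G,S)\le |{\rm piece}|+O(1/\eps)$ extra hubs. Since the hubs are adjacent to piece vertices, $\smt(G,S)=O(L)$, so an optimal tree for $S$ lies in $V'$ and costs at most the piece plus its hub attachments. Summing over pieces gives $|H|+O(\#\text{cuts})=(1+O(\eps))|H|$, and the union is connected because pieces share hubs. Rescaling $\eps$ finishes the argument.

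I expect the main obstacle to be the tree-cutting step: keeping simultaneously few cut vertices ($O(\eps|H|)$) and few terminals per piece, since a high-degree Steiner vertex or a cell holding many terminals of one piece could violate the $t$ bound. Cells are handled by allowing up to $k$ terminals per cell, which is where the $\log k$ enters. High-degree vertices are handled by noting that in a simplified tree (\Cref{lem:SteinSolTerminals}) degrees are $O(k)$, after which the usual balanced-separator cutting of trees into $\Theta(1/\eps)$-size components with $O(1)$ boundary vertices each applies.
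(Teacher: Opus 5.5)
Your proposal is correct and follows essentially the same route as the paper: $V'$ is the union of optimal Steiner trees of all small sets of terminals and hubs whose Steiner tree has size $O(1/\eps)$, and the guarantee comes from chopping an optimal tree for $(G,T')$ into edge-disjoint pieces of size $\Theta(1/\eps)$ and adding to each piece's terminal set the hubs of the cells of its chopped vertices. The high-degree concern (and the detour through \Cref{lem:SteinSolTerminals}) is unnecessary, for two reasons: the paper's greedy cut, made at the lowest vertex whose subtree has at least $t$ vertices, works for arbitrary degrees; and the extra hub cost is one per piece--cut-vertex incidence rather than one per cut vertex, which is still fewer than twice the number of pieces because these incidences form a tree.
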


\begin{proof}
Similarly to \Cref{lem:sparsenTSP}, we define a hub vertex $a_C$ in each
clique $C\in \cP$, and collect the hubs in the set $A$. Next, for each subset
$T'$ of $T\cup A$ we want to find a minimum Steiner tree for the terminal
set $T'$, and take the union of the vertices of $\smt(G,T')$ if
$|\smt(G,T')|\leq 2/\eps$. More precisely, we set
\[V'=\bigcup_{\substack{T'\subseteq A \cup T\\|\smt(G,T')|\leq 2/\eps}}
  V(\smt(G,T')).\]

First, we show that $V'$ satisfies the desired properties. Let $H$ be an
optimum Steiner tree for $(G,T)$. We can partiton $H$ into $k$ edge-disjoint subtrees
$H_1,\dots,H_k$, each of size between $t\eqdef\lfloor 1/(2\eps) \rfloor$ and
$2t$. To see this, root $T$ at some arbitrary vertex $r$, and let $v$ be the
lowest vertex (i.e., most distant from $r$) whose subtree has at least
$t$ vertices. We order the children of $v$ in decreasing order of their
subtree sizes, and take the shortest prefix of the children that together
with $v$ induce a subtree of size at least $t$. Note that this subtree
cannot exceed $2t$ edges, since that would mean that the subtree of the first
child and $v$ would induce a subtree of size at least $t$. We say that a
vertex of $H$ is chopped if there are at least two distinct trees $H_i$
incident to it.

Note that the number of trees $k$ is between $|H|/t$ and $|H|/2t$, thus
$k=\Theta(\eps |H|)$. For each tree $H_i$ let $\cP_i$ be the set of cliques
where $H_i$ has a chopped vertex. The terminal set
$T_i$ is defined as the terminals $T\cap V(H_i)$ together with the hubs in the
cliques of $\cP_i$, that is, let
\[T_i\eqdef (T\cap V(H_i)) \cup \bigcup_{C\in\cP_i} a_C.\]
Note that since $|\cP_i|\leq |V(H)|\leq 1/\eps$, we have
$|T_i|\leq 2/\eps$, therefore $V'$ induces a minimum Steiner tree of $T_i$.
Consequently, $|\smt(G[V'],T_i)|= \smt(G,T_i) \leq |H_i|+|\cP_i|$.

We claim that the tree $H'$ defined as the spanning tree of the union of the
trees $\smt(G[V'],T_i)$ is a suitable approximation. To prove this, we can sum
up the inequalities above for all $i$:
\[|H'|\leq \sum_{i=1}^k |\smt(G[V'],T_i)| \leq
\sum_{i=1}^k (|H_i|+|\cP_i|)\leq |H|+\!\!\sum_{v\text{ chopped}}\!\! \deg_H(v)
=|H|+k-1 \leq (1+\eps)|H|,\]
since the total degree of the chopped vertices is one less than the number
$k$~of trees that $H$ was chopped into, and since each tree has at most
$2t\leq 1/\eps$ edges, we have $k\leq \eps|H|$ at most the number of trees that
we have chopped into.

In order to compute $V'$, we enumerate all sets of $S\subset T\cup A$ with
$|S|\leq 2/\eps$ and with diameter at most $2/\eps$ (where the diameter is
measured in the metric space induced by $G$). Since a given clique $C\in \cP$
has $O(1/\eps^2)$ cliques within distance $2/\eps$, we have that there are at
most $O(1/\eps^2)$ cliques from which $S$ can be built if its Steiner tree
contains some vertex of $C$. Each clique has at most $k$ terminals and one
hub, thus there are at most $((k+1)/\eps^2)^{O(1/\eps)}=2^{O(
(1/\eps)\log(k/\eps))}$ sets $S$ whose Steiner tree intersects $C$. For each
such set $S$, we use the algorithm of Dreyfus and Wagner~\cite{dw71} to compute
the minimum spanning tree of $S$ in $3^{|S|}\poly(n)$ time, and we add the
vertices of the computed tree to $V'$. Since there are at most $n$
neighborhoods in which we run the enumeration and the Steiner tree
computation, the running time is $2^{O((1/\eps)\log(k/\eps))}\poly
(n)$, as required.
\end{proof}

The equivalent of \Cref{lem:sparseTSP} for Steiner tree follows analgously, using \Cref{lem:SteinerSolSparsifyT} and \Cref{lem:sparsenSteinercliques}.

\begin{lemma}\label{lem:sparseSteiner}
Given $\eps>0$ and an instance $(G,T)$ of \stein with a fixed cell partition
$\cP$ of $G$, in $2^{O((1/\eps)\log(1/\eps))}\poly(n)$ time we can construct an instance $(G',T')$ that satisfies the following conditions.

\begin{itemize}
\item $V(G')\subseteq V(G)$, $T'\subseteq T$
\item Each clique $C\in \cP$ satisfies
      $|C\cap V(G')|\leq 2^{O((1/\eps)\log(1/\eps))}$ and $|C\cap T'|\leq 1/\eps$.
\item The shortest Steiner tree of $(G',T')$ is within a $1+O(\eps)$ factor from the optimum of $(G,T)$, and given a $(1+O(\eps))$-approximate tour for $(G',T')$ we can in polynomial time create a $(1+O(\eps))$-approximate tour for $(G,T)$.
\end{itemize}
\end{lemma}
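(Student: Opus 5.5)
The proof composes the two preceding lemmas in the same way \Cref{lem:sparseTSP} composes \Cref{lem:subsetTSPsparsifyT} and \Cref{lem:sparsenTSP}.

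\emph{Step 1 (thin out the terminals).} Apply \Cref{lem:SteinerSolSparsifyT} to $(G,T)$. This gives $X\subseteq T$ and $T'=T\setminus X$ with $|C\cap T'|\le 1/\eps$ for every $C\in\cP$. Let $\ell=\smt(G,T)$. By (ii), $\smt(G,T')\le(1+O(\eps))\ell-|X|$. By (iii), a tree for $T'$ of size $\ell'$ extends to a tree for $T$ of size $\ell'+|X|$.

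\emph{Step 2 (thin out the cliques).} Apply \Cref{lem:sparsenSteinercliques} to $(G,T')$ with $k=1/\eps$, and set $G'=G[V']$. Each clique then satisfies
\[
|C\cap V(G')| = 2^{O((1/\eps)\log(k/\eps))} = 2^{O((1/\eps)\log(1/\eps))},
\]
and computing $V'$ takes $2^{O((1/\eps)\log(1/\eps))}\poly(n)$ time. The lemma is stated for every subset of the terminal set, so taking the subset to be $T'$ itself gives $\smt(G',T')\le(1+\eps)\smt(G,T')$. We must make sure $T'\subseteq V'$. If the construction does not already guarantee this (e.g.\ via single-terminal subsets), we add $T'$ to $V'$ explicitly. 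This costs at most $1/\eps$ vertices per clique and keeps the bound.

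\emph{Step 3 (transfer solutions back).} Chaining the two steps,
\[
\smt(G',T') \le (1+\eps)\smt(G,T') \le (1+O(\eps))\ell.
\]
Now take a $(1+O(\eps))$-approximate tree $S'$ for $(G',T')$. It is also a tree in $G$, so
\[
|S'| \le (1+O(\eps))\smt(G',T') \le (1+O(\eps))\bigl((1+O(\eps))\ell-|X|\bigr).
\]
Applying (iii) of \Cref{lem:SteinerSolSparsifyT} turns $S'$ into a tree for $(G,T)$ of size $|S'|+|X|$. The $-|X|$ and $+|X|$ terms cancel, up to an extra $O(\eps)|X|\le O(\eps)\ell$ (since $|X|\le|T|\le\ell$). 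The result has size at most $(1+O(\eps))\ell$, as required. Rescaling $\eps$ by a constant turns $1+O(\eps)$ into $1+\eps$.

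\emph{Main obstacle.} The composition is mostly bookkeeping. The one point needing care is the additive $-|X|$ term in Step 3. We rely on $|X|\le\ell$ so that multiplying $\ell-|X|$ by $(1+O(\eps))$ still leaves only an $O(\eps)\ell$ error once the $|X|$ vertices are added back.
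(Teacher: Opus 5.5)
Your proposal is correct and matches the paper. The paper proves this lemma in one line by composing \Cref{lem:SteinerSolSparsifyT}, which caps the number of terminals per clique at $1/\eps$, with \Cref{lem:sparsenSteinercliques} applied with $k=1/\eps$, exactly as you do.

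Your handling of the $-|X|$ term is fine, and in fact the leftover term is $-O(\eps)|X|\le 0$, so you do not even need $|X|\le\ell$. Also, $T'\subseteq V'$ is automatic, because singleton terminal sets already satisfy $\smt(G,\{t\})=1\le 2/\eps$ and are therefore included in $V'$.
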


\section{A subset spanner for intersection graphs}
\label{sec:tspspanner}

The main goal of this section is to prove the following theorem.
\begin{theorem}[Subset spanner]\label{thm:spanner}
Given $G\in \cI_\alpha$ with cell partition $\cP$, a vertex set $T\subset V(G)$ of terminals, and some $\eps\in (0,1/2)$, there exists a family $\cS\subseteq \cP$ of size $|\cS|=O(\frac{1}
{\eps^4}\smt(G,T))$ such that for any pair of vertices $x,y \in T$ we have
$\dist_{G[\bigcup \cS]}(x,y)\leq (1+\eps)\dist_G(x,y)$, and $\cS$ can be computed in polynomial time.
\end{theorem}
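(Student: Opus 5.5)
We follow the outline of the overview: build a skeleton wireframe whose faces have North/South boundaries with (approximate) shortest-path properties in the restricted graphs, add bipartite spanners in each face, and finally take the union of all cells within distance $O(1/\eps)$ of everything constructed.

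\textbf{Step 1: initial cycle.} We first compute a $2$-approximate Steiner tree $S_0$ of $T$ in $G$, of size $O(\smt(G,T))$. Using \Cref{cor:shortestpathframe} on its edges, we turn $S_0$ into a wireframe tree. We then cut it open into a wireframe cycle $\wf R_0$ that traverses every edge twice, by doubling edge curves within a small neighbourhood. This cycle has length $O(\smt(G,T))$ and bounds a region $\re R_0$; the exterior is handled symmetrically, or by enclosing everything in a large frame. All terminals lie on $\wf R_0$.

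\textbf{Step 2: skeleton.} We run Klein's shortcut procedure in the restricted graphs. As long as the current face region $\re R$ with boundary cycle $C$ has vertices $\dot a,\dot b$ such that $\dist_{G|_{\re R}}$ between their parents is less than $(1+\eps)^{-1}|C[\dot a,\dot b]|$, we pick such a pair minimizing $|C[\dot a,\dot b]|$. We take a shortest path $P$ in $G|_{\re R}$ and draw it as a wireframe path via \Cref{lem:pathcutcomplexity}, which keeps the complexity of the remaining region under control. This splits $\re R$ into a finished face, with North $=P$ and South $=C[\dot a,\dot b]$, and a new region with boundary $C'$ in which $C[\dot a,\dot b]$ is replaced by $P$.

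The standard charging argument applies, since $|C'|\le|C|-\eps|P|$. It shows that the total length of added paths is $O(|C_0|/\eps)$. Minimality of the chosen pair gives the $(1+\eps)$-approximate shortest-path property of South, and the shortest-path property of North holds in the face's restricted graph. When no shortcut remains, the last face is declared all-South.

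\textbf{Step 3: bipartite spanners.} In each face region $\re R$, we apply \Cref{lem:bipartitespan} to $G|_{\re R}$ with $Q=$ North and $Q'=$ South. This adds $O(1/\eps^3)|\textup{South}|$ objects per face, for a total of $O(\eps^{-4})\smt(G,T)$ restricted objects. We replace each restricted object by its parent object and collect the parents' cells into $\cS'$; at this point $|\cS'|=O(\eps^{-4})\smt(G,T)$.

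\textbf{Step 4: thickening and verification.} We let $\cS$ be the set of cells at $G_\cP$-distance at most $c_0/\eps$ from $\cS'$. Fixing first a small constant threshold, $O(1)$ rather than $O(1/\eps)$, would only cost a constant factor in the count (the number of cells within distance $r$ of a cell is $O(r^2)$). To verify the spanner property, take a shortest $x$--$y$ path $\Pi$ in $G$ and consider its maximal subpaths $\Pi'$ leaving $\bigcup\cS$. Each such $\Pi'$ has length $\Omega(1/\eps)$ and lies inside a single face region $\re R$, after trimming to its portion in $G|_{\re R}$ by its first and last crossing of the face boundary.

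Rerouting $\Pi'$ along North, along South, or through the bipartite spanner costs at most $(1+\eps)|\Pi'|+O(1)$. The additive $O(1)$ comes from the jumps between $\Pi$ and the boundary objects, which intersect since crossings of wireframe curves imply intersecting parents. Since $|\Pi'|=\Omega(1/\eps)$, this is $(1+O(\eps))|\Pi'|$, and rescaling $\eps$ gives the claim.

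\textbf{Main obstacle.} We expect the delicate step to be the bound $|\cS|=O(\eps^{-4})\smt(G,T)$ in Step 4. The neighbourhood thickening by $c_0/\eps$ could multiply the count by $1/\eps^2$, which would break the exponent. First we would try to show that the required threshold can actually be a constant, using that only the jump cost (additive $O(1)$) needs absorbing against rerouting slack. If that fails, we would absorb the extra factor by running Steps 2--3 with a coarser parameter. A secondary difficulty is making the trimming of $\Pi'$ to a single face rigorous, using that wireframe curves separate regions and that a subpath between consecutive boundary crossings lives in one restricted graph.
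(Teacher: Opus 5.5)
\textbf{Gap: Step 4 does not give the claimed size bound.} Your overall route is the paper's: a Klein-style skeleton built in the restricted graphs via \Cref{lem:pathcutcomplexity}, bipartite spanners between North and South, and a $\Theta(1/\eps)$-thickening to absorb the additive jump cost. The problem is that in Step 4 you thicken all of $\cS'$, including the $O(\eps^{-4})\smt(G,T)$ bipartite-spanner objects. That yields $O(\eps^{-6})\smt(G,T)$ cells, not $O(\eps^{-4})\smt(G,T)$.

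Neither of your remedies works. A constant threshold fails because a piece $Q[u,v]$ between two boundary touches, rerouted along North, pays up to $+4$. This is the jump onto and off a path that may share no object with it, as in \Cref{fig:unitdisk_vs_planar}. That is only a $(1+O(\eps))$ factor when $|Q[u,v]|=\Omega(1/\eps)$. Running Steps 2--3 with a coarser parameter would give South and the bipartite spanner a stretch worse than $1+\eps$, so the error cannot be absorbed there either.

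\textbf{The missing observation: only the skeleton needs thickening.} The additive error arises only where the rerouted piece meets $\bd\re F$. The rerouting itself runs inside North, South and the bipartite spanner $H$, and these are included as they are. The paper therefore sets $\cS=(S\cup N_G(p(V(\skel)),1/\eps))_\cP$, where $S$ holds the parents of skeleton vertices and of bipartite-spanner objects. The skeleton has length $O(\smt(G,T)/\eps)$, so its $1/\eps$-neighbourhood has $O(\smt(G,T)/\eps^3)$ cells. Unthickened, $S$ contributes $O(\smt(G,T)/\eps^4)$.

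\textbf{How the verification should then be organized.} Work relative to the skeleton rather than to $\bigcup\cS$. Suppose a shortest terminal path $Q$ has a vertex $x$ at distance more than $1/\eps$ from $p(V(\skel))$, lying in a face $\re F$. Let $u$ and $v$ be the last vertex before $x$ and the first vertex after $x$ whose objects meet $\bd\re F$. Then $|Q[u,v]|\ge 2/\eps-2$, and rerouting costs at most $(1+\eps)\dist_G(u,v)+4\le(1+5\eps)\dist_G(u,v)$. Pieces between consecutive boundary touches that contain no such $x$ lie entirely in $\bigcup\cS$.

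This also repairs your claim that every maximal subpath of $\Pi$ outside $\bigcup\cS$ has length $\Omega(1/\eps)$. As stated, that is false: a path can leave the neighbourhood for a single vertex. What is long is the enclosing piece between consecutive boundary touches.

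\textbf{A smaller point in Step 1.} Applying \Cref{cor:shortestpathframe} edge by edge to the approximate Steiner tree need not give a plane wireframe tree. With one point per object, edge curves can be forced to cross. The paper instead uses \Cref{thm:tracespanningtree}, which splits objects into subpolygons at a cost of only $O(|T|)$ extra vertices, followed by \Cref{lem:obj2wire}.
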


The first two steps of our proof mirror the subset spanner proof in weighted
planar graphs given by Klein~\cite{Klein06}. In Klein's construction, the spanner is created in three
phases: first, a single source spanner is created, which only
concerns distances from a single vertex $v$ to the vertices of a shortest
path $Q$ not containing $v$. Next, a bipartite spanner is defined that tracks
distances from a shortest $uv$ path $Q_2$ to a different $uv$ path $Q_1$
disjoint from $Q_2$. Finally, a spanner is made that approximates distances
between any pair of vertices on the outer face of a planar graph. It turns
out that the first two steps work in arbitrary, not necessarily planar graphs as well. More precisely Klein's proof of the first step (Lemma~\ref{thm:singlesource} below) does use planarity, but it can be avoided, while the proof of the second step (Lemma~\ref{lem:bipartitespan}) can be used verbatim for nonplanar graphs.

\begin{lemma}[Single source spanner]\label{thm:singlesource}
Let $G$ be a graph, and let $Q$ be a shortest path in $G$ between its
endpoints. Then for
any $\eps>0$ there is a subgraph $H$ with $|H|=O(1/\eps^2)|\dist_G(v,Q)|$ such that for
any $x\in V(Q)$ we have $\dist_{H\cup Q}(v,x)\leq (1+\eps)\dist_G(v,x)$. Moreover, $H$ can be computed in polynomial time.
\end{lemma}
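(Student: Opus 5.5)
We prove the statement with a greedy selection of ``portals'' along $Q$, in the spirit of Klein's single source spanner, but argued purely metrically so that no planarity is used. Write $d=\dist_G(v,Q)$, let $q^*\in V(Q)$ be a vertex attaining it, and let $P^*$ be a shortest $v$--$q^*$ path. $P^*$ will be part of $H$.

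Two observations drive everything. First, $Q$ is a shortest path, so $\dist_{Q}(a,b)=\dist_G(a,b)\le \dist_G(v,a)+\dist_G(v,b)$ for $a,b\in V(Q)$. Second, $\dist_G(v,x)\ge d$ for every $x\in V(Q)$. In particular, a vertex $x$ with $\dist_Q(q^*,x)\ge 1/\eps\cdot 2d$ already satisfies the following: $P^*$ followed by $Q[q^*,x]$ has length $d+\dist_Q(q^*,x)\le 2d+\dist_G(v,x)$. Also $\dist_G(v,x)\ge \dist_Q(q^*,x)-d$. Hence this route is a $(1+O(\eps))$-approximation whenever $\dist_Q(q^*,x)\ge \Omega(d/\eps)$. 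So only the window $Q_0$ of $Q$ within $Q$-distance $O(d/\eps)$ of $q^*$ needs work, and it has $O(d/\eps)$ vertices.

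Inside $Q_0$ I run the standard greedy procedure, once in each direction from $q^*$. Maintain a current portal $p$, initially $q^*$, with the path $P_p$ (initially $P^*$) in $H$. Walk along $Q_0$ away from $q^*$. At the first vertex $x$ with $|P_p|+\dist_Q(p,x)>(1+\eps)\dist_G(v,x)$, add a shortest $v$--$x$ path $P_x$ to $H$ and make $x$ the new portal. Every vertex not chosen is served within factor $1+\eps$ by construction.

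It remains to bound the total added length $\sum_x\dist_G(v,x)$. When $x$ is chosen after portal $p$, let $\Phi(p)=\dist_G(v,p)+\dist_Q(p,\text{end})$, or use the standard potential argument directly. We have $\dist_G(v,x)+\eps\,\dist_G(v,x) < \dist_G(v,p)+\dist_Q(p,x)$, so each new portal decreases the potential $\dist_G(v,p)-\dist_Q(q^*,p)$ by at least $\eps\,\dist_G(v,x)\ge \eps d$. The potential starts at $d$ and is always at least $-|Q_0|=-O(d/\eps)$. The total decrease is therefore $O(d/\eps)$, which gives $\sum_x \eps\,\dist_G(v,x)=O(d/\eps)$, i.e.\ $\sum_x\dist_G(v,x)=O(d/\eps^2)$. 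This yields $|H|=O(1/\eps^2)\cdot d$. One point needs care. The planar proof bounds the stretch via crossing of $P_x$ with earlier paths; here the telescoping bound uses only the triangle inequality, and it applies to both directions from $q^*$ separately.

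The main obstacle is getting the constant right in the far-vertex case, where the route $P^*\cup Q$ must give stretch $1+\eps$ rather than an additive $2d$. I would fix this by choosing the window radius $\Theta(d/\eps)$ with suitable constants, or by rescaling $\eps$ by a constant. If $d=0$, i.e.\ $v\in Q$, the statement is trivial with $H=\emptyset$. All steps use shortest path computations only, so $H$ is computable in polynomial time.
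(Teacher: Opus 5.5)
Your proof is correct and follows essentially the same route as the paper: the paper also greedily chooses vertices $x_1,x_2,\dots$ on the part of $Q$ within distance $(1/\eps)\dist_G(v,Q)$ of the nearest point, and bounds the total length by telescoping the defining inequalities $\dist_G(v,x_{i-1})+\dist_G(x_{i-1},x_i)>(1+\eps)\dist_G(v,x_i)$, which is exactly your potential $\dist_G(v,p)-\dist_Q(q^*,p)$. Your explicit treatment of the vertices outside the window fills in a step the paper's proof leaves implicit: you route them via $P^*$ and $Q$, and adjust the window radius or $\eps$ by a constant.
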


\begin{proof}
Let $R$ be a shortest path from a vertex $v$ to $Q$.
Let $x_0$ be the endpoint of $R$ on $Q$, and first we consider the portion of
$Q$ that follows $x_0$. Let $Q'$ be the portion of $Q$ whose vertices are at
distance at most $(1/\eps)|R|$ distance from $x_0$. Starting at $x_0$ we
define a sequence of vertices $x_1,\dots$ on $Q'$ the following way. For
$i\geq 1$, let $x_i$ be the earliest vertex on $Q'$ where $\dist_G(v,x_{i-1}) + \dist_G(x_{i-1},x_i)>(1+\eps)\dist_G(v,x_i)$. Analogously, we do the
same procedure moving backwards on $Q'$ from $x_0$, defining the points $x_
{-1},x_{-2},\dots$. Let $H$ be the union of the shortest paths $v \to x_i$.

Let $x_k$ be the vertex of largest index assigned by the procedure. We
estimate the total length of paths from $v$ to $x_1,\dots,x_k$; the paths
whose endpoint has a negative index can be estimated analogously. By summing the inequalities $\dist_G(v,x_{i-1}) + \dist_G(x_{i-1},x_i)>(1+\eps)\dist_G(v,x_i)$  defining the $x_i$'s, we get
\[(1+\eps)\sum_{i=1}^k \dist_G(v,w)
< \sum_{j=0}^{k-1} \left(\dist_G(v,x_j)
+ \dist_G(x_{j},x_{j+1})\right)
=\sum_{j=0}^{k-1} \dist_G(v,x_j)+\dist_G(x_0,x_k)\]
Substracting $\sum_{i=1}^{k-1}\dist_G(v,x_i)$ from both sides and using that  $\dist(x_0,x_k)\leq (1/\eps) |R|$ holds by construction, we get: 
\[\eps\sum_{i=1}^k \dist_G(v,x_i) +\dist_G(v,x_k)
< \dist_G(v,x_0)+(1/\eps)|R|=(1+1/\eps)|R|,\]
which yields $\sum_{i=1}^k \dist_G(v,x_i)=O(|R|/\eps^2)$.
\end{proof}

Klein~\cite{Klein06} shows that applying the single source spanner several
times can yield a bipartite spanner.

\BipartiteSpan*

\subparagraph*{Perimeters in region-restricted graphs}
Let $G$ be an intersection graph, and let $\wf R$ be a wireframe of $G$ that is a cycle, where $\re R$ is either the bounded or unbounded region defined by $\gamma(\wf R)$. Consider now the graph $R:=G|_{\re R}$: for each vertex $\dot v\in V(\wf R)$ there is a natural corresponding polygon $v_{\re R} \in V(G|_{\re R})$, namely, the component of $p(\dot v)\cap \re R$ that contains the point $\dot v$.
Consider the polygon set $\{v_{\re R} \mid \dot v \in V(\wf R)\}$. Notice that because of the definition of wireframes, whenever $\dot v \dot v'$ is an edge of $\wf R$, then $v_{\re R}v'_{\re R}$ is an edge induced by $\{v_{\re R} \mid \dot v \in V(\wf R)\}$. Let $\Perim(R)$ denote the walk on $\{v_{\re R} \mid \dot v \in V(\wf R)\}$ corresponding to~$\wf R$.

\subparagraph*{Object duplication}\label{sec:objectduplication} Let us discuss a technical issue that can be avoided using a simple trick.
Starting here, we will often use a path $\pl P$ in some wireframe $\pl W$ of $G$ to define a path in $G$. This is not straightforward, since $\pl P$ would normally give rise to a \textit{walk} in $G$ through the parent relation: indeed, if $\dot u \dot v$ is an edge of $\pl P$, then $p(\dot u)p(\dot v)$ is an edge in $G$, but different vertices along $\pl P$ may have the same parent in $G$.
To avoid technicalities involving walks using an object multiple times, we pretend that we have multiple copies of an object available. In all cases where $v\in G$ appears $k$ times as a parent along the walk corresponding to $\pl P$, we add $k-1$ new copies of the polygon $v$ to $G$. This is also done in a restricted intersection graph $R:=G|_{\re R}$ whose boundary is a wireframe cycle $\wf R$: we think of the objects of $\Perim(R)$ as objects of $R$, each of which has been copied as many times as they occur as parents of $V(\wf R)$-vertices.
Consequently $\Perim(R)$ forms a cycle in $G|_{\re R}$ rather than a closed walk. Finally, we remark that adding such objects to $G$ does not impact the optimum of \textsc{Steiner Tree} (or even \textsc{Subset TSP}). To see why, notice that (sub)polygon(s) of an original object $v$ that appear in the optimum of the modified instance can be substituted by a single copy of $v$ in any feasible solution, whose total size can therefore not increase, while any feasible solution of the original instance is also feasible in the modified instance.

\subparagraph*{Building a skeleton}
Every subpath of a shortest path is a shortest path between its endpoints, but if a path is, say, twice as long as the distance between its endpoints, then a subpath can be still much longer than the distance between the endpoints of the subpath. We will often encounter paths that are approximately short in a sense that extends to subpaths as well:
\begin{definition}
A path $P$ in a graph $G$ from $u$ to $v$ is called an \emph{internally $c$-approximate shortest path} if for any $a,b\in V(P)$ where $a,b$ are not both endpoints of $P$\footnote{We do not include the full path $P$ for technical reasons; we will need this to ensure that the southern boundary paths of the faces in the skeleton are internally $(1+\eps)$-approximate rather than talking about their subpaths. See \Cref{def:shortcut} and \ref{it:sk_approximatepaths}.}, we have $|P[a,b]|\leq c\cdot\dist_G(a,b)$.
\end{definition}

The following lemma relates the graphs obtained by restricting $G$ to smaller regions. Essentially, it says in various forms that distances can only increase with further restrictions. It will be used much later.
Recall that for a cycle $W$ and $u,v \in V(W)$ we denote by $W[u,v]$ the shorter subpath of~$W$ connecting $u$ and~$v$.

\begin{lemma}\label{lem:restrictionmetric}
Let $\wf B,\wf R$ be wireframe cycles in $G$ where $\re B\subset \re R$, and let $B=G|_{\re B}$ and $R=G|_{\re R}$. 
Then the following hold.
\begin{enumerate}[label=(\roman*)]
\item Let $u,v\in V(B)$ such that $u\in \ccomp(\hat u\cap \re B)$ and $v\in \ccomp(\hat v\cap \re B)$ for some $\hat u, \hat v\in V(R)$. Then $\dist_B(u,v)\geq \dist_R(\hat u, \hat v)$. 
\item If $\pl P\subset \wf B$ is path such that the corresponding path $P_R$ in $R$ is a internally $c$-approximate shortest path for some $c\geq 1$, then the path of $\Perim(B)$ corresponding to $\pl P$ is a internally $c$-approximate shortest path in $B$.
\item If\, $\pl W$ is a boundaried wireframe of $B$ then for any $\dot u,\dot v \in V(\pl W)$ we have
\[\dist_{\pl W}(\dot u,\dot v)\geq \dist_B(p|_{\re B}(\dot u),p|_{\re B}(\dot v)).\]
\end{enumerate}
Moreover, each of the above statements hold when replacing $R$ with $G$.
\end{lemma}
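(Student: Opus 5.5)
The plan is to prove (i) first and derive (ii) and (iii) from it. The "moreover" part then follows by the same arguments with $\re R$ replaced by the whole plane, because then $G|_{\re R}=G$.

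For (i), the basic observation is that every object $w\in V(B)$ is a connected component of $\hat w\cap\re B$ for some $\hat w\in V(G)$. Since $\re B\subset\re R$, the set $w$ is a connected subset of $\hat w\cap \re R$, so it lies in a unique component $w^R\in V(R)$. This gives a map $w\mapsto w^R$ from $V(B)$ to $V(R)$ with $w\subseteq w^R$.

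Next I would check that this map sends edges to edges or identifies their endpoints. If $w_1w_2\in E(B)$, then $w_1\cap w_2\neq\emptyset$, so $w_1^R\cap w_2^R\neq\emptyset$. Either $w_1^R=w_2^R$, or they are adjacent in $R$. Object duplication creates copies of the same polygon; these still intersect, so copies cause no trouble.

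Given this, take a shortest $u$--$v$ path in $B$ and map it vertex by vertex. The image is a walk in $R$ from $u^R=\hat u$ to $v^R=\hat v$ (equality by the hypothesis) whose length is at most that of the original path. Hence $\dist_R(\hat u,\hat v)\le\dist_B(u,v)$.

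For (ii), let $a,b$ be vertices of the path in $\Perim(B)$ corresponding to $\pl P$, not both endpoints. Their images $a^R,b^R$ are exactly the corresponding vertices of $P_R$; this holds because $p|_{\re B}(\dot v)\subseteq p|_{\re R}(\dot v)$ for each wireframe vertex $\dot v$, and I would keep the duplicated copies matched consistently. The subpaths have equal length. Combining the assumption on $P_R$ with (i) gives
\[|P_B[a,b]|=|P_R[a^R,b^R]|\le c\cdot\dist_R(a^R,b^R)\le c\cdot\dist_B(a,b).\]

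For (iii), fix a shortest path in $\pl W$ between $\dot u$ and $\dot v$, and assign to each vertex $\dot x$ on it an object of $B$. By the definition of a boundaried wireframe, every vertex of $\pl W$ lies in $\re B$ and lies in its parent $p(\dot x)$, so I take the component $x_B$ of $p(\dot x)\cap\re B$ containing $\dot x$. For boundary vertices this is precisely $p|_{\re B}(\dot x)$.

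It remains to show that consecutive assigned objects intersect, which makes the image a walk in $B$ of no greater length. For an edge $\dot x\dot y$ with $p(\dot x)\neq p(\dot y)$, the wireframe definition splits the edge curve into an initial part inside $\inter p(\dot x)$ and a final part inside $\inter p(\dot y)$. Since $\pl W$ is drawn in $\re B$, the curve lies in $\re B$. The initial part is connected and contains $\dot x$, so it lies in $x_B$; likewise the final part lies in $y_B$. Their closures meet at the split point $\gamma(t)$, and since objects are closed, $x_B\cap y_B\ni\gamma(t)$. The case $p(\dot x)=p(\dot y)$ is the same argument with a single connected curve, giving $x_B=y_B$.

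The main obstacle I expect is exactly this step: making sure the split point and the curve really lie in the chosen components. This is where I will rely on the curve being inside $\re B$ and on closedness of the objects, and the bookkeeping with duplicated copies needs care. After that, the rest is routine.
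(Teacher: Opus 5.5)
Your proof is correct and follows the same route as the paper: in (i) you map each object of $B$ to the object of $R$ (or of $G$) containing it, so that paths in $B$ become walks in $R$ of no greater length; in (ii) you combine equal subpath lengths with (i); in (iii) you map a shortest wireframe path to a walk in $B$. Your extra detail in (iii) is not in the paper's proof, which just asserts that consecutive parents form an edge of $B$: you show consecutive objects meet at the split point of the edge curve (using that components are closed), and you allow consecutive images to coincide.
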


\begin{proof}
(\textit{i}) For $v\in V(B)$ let $\hat v\in V(R)$ denote the unique corresponding object such that $v\in \ccomp(v\cap \re B$. Note that for any edge $ab$ of $B$ we have that $\hat a \hat b$ is an edge in $R$. Thus any path in $B$ from $u$ to $v$ has a corresponding walk in $R$ of the same length, which concludes the proof.\\
(\textit{ii}) Let $P_B$ denote the path on $\Perim(B)$ corresponding to $\pl P$. By definition, we have that for any proper subpath $Q_B=P_B[u,v]$ the corresponding path $Q_R=P_R[\hat u,\hat v]$ satisfies
\[|Q_B|=|Q_R|\leq c\cdot\dist_R(\hat u,\hat v) \leq c\cdot\dist_B(u,v),\]
where the last inequality used (\textit{i}).\\
(\textit{iii}) Observe that for any wireframe edge $\dot a \dot b\in E(\pl W)$ the objects $p|_{\re B}(\dot a),p|_{\re B}(\dot b)$ form an edge in $B$. Thus a shortest path in $\pl W$ has a corresponding walk of the same length in $B$ connecting the $p|_B$-parents of its endpoints.
\end{proof}

Our final ingredient in the proof of \Cref{thm:spanner} is the creation of a wireframe $\skel$ of $G$ called the \emph{skeleton}. The skeleton will be created by iteratively adding so-called \emph{shortcuts} to a starting graph.

\begin{definition}[Shortcut and its reach]\label{def:shortcut}
A \emph{shortcut} of $G|_{\re R}$ is a shortest path $P$ in $G|_{\re R}$
between some vertices $\breve u, \breve v \in V(\Perim(R))$, such that $|P|<\frac{|\wf  R[\dot u,\dot v]|}{1+\eps}$. The \emph{reach} of a shortcut
from $\breve u$ to $\breve v$ is the length of the shorter subpath of $\wf R$ from $\dot u$ to~$\dot v$, that is, $|\wf R[\dot u,\dot v]|=|\Perim(R)[\breve u, \breve v]|$.
\end{definition}

We need a skeleton with the following properties.

\begin{lemma}[Skeleton properties]\label{lem:skeletonprops}
Given $G,T,$ and $\eps\in (0,1/2)$, a wireframe $\skel$ of $G$ of complexity $\cA(\skel)=O(n^3/\eps)\cdot \compl(\cA(G))$ can be constructed in polynomial time with the following properties.
\begin{enumerate}[label=\textbf{Sk\arabic*}]
\item The graph $\skel$ is $2$-connected. \label{it:sk_2connected}
\item The wireframe $\skel$ has a unique irrelevant face with region $\re F_\textup{irr}\subset \Reals^2$ such that any object of $V(G)$ intersecting $\re F_\textup{irr}$ also intersects $\Reals^2\setminus \re F_\textup{irr}$. (Other faces are called \emph{relevant}.)\label{it:sk_irrelevant}
\item The cycle $\wf R$ of $\skel$ on the boundary of any relevant face region $\re R\subset \Reals^2$ is the concatenation of two paths, $\No(\wf R)$ and $\So(\wf R)$, with corresponding paths $\No(G|_{\re R})$ and $\So(G|_{\re R})$ in~$\Perim(R)$.\label{it:sk_compass}
\item For each $v\in T$ there is at least one point $\dot v\in V(\So(\pl R))$ with $p(\dot v)=v$ for some relevant face $\re R$ of $\skel$.\label{it:sk_terminals}
\item For any relevant face $\re R$ the paths $\No(G|_{\re R})$ is a shortest path and $\So(G|_{\re R})$ is an internally $(1+\eps)$-approximate shortest path in $G|_{\re R}$.\label{it:sk_approximatepaths}
\item The total length of the northern and southern paths of all relevant faces is $O(1/\eps)\smt(G,T)$.\label{it:sk_northsouthbound}
\end{enumerate}
\end{lemma}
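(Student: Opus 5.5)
We follow Klein's iterative shortcut construction, transplanted to wireframes. First compute a $2$-approximate Steiner tree $Y$ of $(G,T)$ (e.g.\ MST on the metric closure), so $|V(Y)|\le 2\smt(G,T)$. Apply \Cref{thm:tracespanningtree} to the objects of $Y$ with terminal set $T$ to get an object frame tree, convert it to a wireframe tree via \Cref{lem:obj2wire}, and take the closed walk $\wf C_0$ around this plane tree. This is the standard ``cutting open the tree'' step. Using object duplication, $\wf C_0$ becomes a cycle of length $O(\smt(G,T))$ whose vertices have parents covering all of $T$. Its outer face is the irrelevant face $\re F_\textup{irr}$; every object meeting it also meets the tree curves, giving \ref{it:sk_irrelevant}. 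The inner region $\re R_0$, which is thin around the tree, is the initial relevant region. We split $\wf C_0$ into $\No$ as a single vertex or edge and $\So$ as the rest. Since $\So$ need not yet be approximately shortest, we treat $\re R_0$ as a region still to be processed rather than a final face.

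Next we run the iterative loop while some current unfinished region $\re R$, with boundary $\wf R$, admits a shortcut in the sense of \Cref{def:shortcut}. Among all shortcuts of $G|_{\re R}$ we choose one with minimum reach, say $P$ from $\breve u$ to $\breve v$. We realize $P$ as a wireframe path $\pl P$ by \Cref{lem:pathcutcomplexity}; this keeps $\pl P$ inside $\re R$ and touching $\bd\re R$ only at $\dot u,\dot v$. It also bounds the growth of $|V(\bd \re R_2)|$ by $8|V(\inter \re R_1)|$, so the complexity increases telescope over the nested sequence and stay polynomial, giving the stated $O(n^3/\eps)\compl(\cA(G))$ bound after perturbation via \Cref{obs:perturb}. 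The cut produces two regions. In $\re R_1$, bounded by $\pl P\cup \wf R[\dot u,\dot v]$, we set $\No=\pl P$ and $\So=\wf R[\dot u,\dot v]$, and this face is finished. In $\re R_2$ we replace $\wf R[\dot u,\dot v]$ by $\pl P$ and continue. \ref{it:sk_2connected} holds because we only ever add ears.

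For \ref{it:sk_approximatepaths} in a finished face $\re R_1$, $\No$ is shortest in $G|_{\re R_1}$. This follows from \Cref{lem:restrictionmetric}(i), since $P$ is shortest in $G|_{\re R}\supseteq$, and distances only grow under restriction. If a proper subpath of $\So$ between $a,b$ were longer than $(1+\eps)\dist_{G|_{\re R_1}}(a,b)\ge(1+\eps)\dist_{G|_{\re R}}(a,b)$, then there would be a shortcut of $G|_{\re R}$ of smaller reach, contradicting minimality. This is why only internal approximation is claimed. When no shortcut remains in the final region, its boundary is $(1+\eps)$-approximately shortest, so we may label it with a trivial $\No$. \ref{it:sk_terminals} holds because every terminal's point on $\wf C_0$ stays on the $\So$ side of some face, since $\So$ pieces are exactly the removed boundary subpaths.

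For \ref{it:sk_northsouthbound} we use Klein's potential argument. Each step replaces a boundary piece of length $r$ by one of length $<r/(1+\eps)$, so the current boundary length $\Phi$ drops by at least $\frac{\eps}{1+\eps}r$. The new $\No\cup\So$ has length less than $2r$, which is $O(1/\eps)$ times the drop. Summing over steps and adding the final face gives total length $O(1/\eps)|\wf C_0|=O(1/\eps)\smt(G,T)$. The main obstacle I expect is the complexity/planarity bookkeeping, namely ensuring that each added $\pl P$ is a valid wireframe path isomorphic to $P$ inside the restricted graph and that repeated restriction does not blow up polygon counts. \Cref{lem:pathcutcomplexity} is designed for exactly this, and I would verify the telescoping sum of $|V(\inter\re R_1)|$ over disjoint finished faces.
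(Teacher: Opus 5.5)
Your outline matches the paper's: a $2$-approximate Steiner tree cut open into a wireframe cycle, repeated cuts along a minimum-reach shortcut realized via \Cref{lem:pathcutcomplexity}, and the potential argument for \ref{it:sk_northsouthbound}. However, two steps fail as written.

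First, you have the two sides of $\wf C_0$ reversed. The unbounded face cannot serve as $\re F_\textup{irr}$: any object lying away from the tree is contained in it and misses its complement, so \ref{it:sk_irrelevant} fails. Your justification that objects meeting it also meet the tree curves is false there. The irrelevant face must be the thin bounded strip around $\gamma(\pl W_0)$. It lies in an arbitrarily small neighbourhood of the tree drawing, so it contains no object, and every object meeting it also meets its complement. The shortcut iteration must then be run in the unbounded region, where the rest of $G$ lives. In Klein's planar picture the cut-open tree bounds the infinite face, but geometrically that face is the strip. Processing the strip, as you propose, only touches slivers of the tree's own objects. It gives no control of distances through the rest of the graph, which is exactly what the spanner argument later needs: every vertex far from $p(V(\skel))$ must lie in a relevant face.

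Second, the last region cannot be labelled with a trivial $\No$. The absence of shortcuts only controls the \emph{shorter} arc $\wf R_k[\dot a,\dot b]$ between two boundary vertices (Definition~\ref{def:shortcut}). A $\So$ covering almost the whole cycle contains subpaths that are the longer arc. For example, let $\No$ be a single edge $xy$, and take $a=y$ and $b$ the neighbour of $x$ on $\So$. Then $|\So[a,b]|=|\wf R_k|-2$ while $\dist(a,b)\le 2$ through $x$, so \ref{it:sk_approximatepaths} fails once $|\wf R_k|\ge 5$; a single-vertex $\No$ is no better.

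The missing step is to cut the final region once more by a shortest path $P_k$ of $G|_{\re R_k}$ between two vertices that split $\Perim(R_k)$ into arcs whose lengths differ by at most one. Take $P_k$ as $\No$ and each arc as $\So$ of the two resulting faces. Every subpath of either arc whose endpoints are not both endpoints of the arc then has length less than $|\wf R_k|/2$, so it is the shorter arc. The no-shortcut condition together with \Cref{lem:restrictionmetric}(i) then gives internal $(1+\eps)$-approximation. Since $|P_k|<|\wf R_k|$, \ref{it:sk_northsouthbound} is unaffected.
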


\begin{proof}
  First we build a wireframe that is a cycle going through every terminal, as follows.
Consider a 2-approximate Steiner tree $W_0$ of the terminal set $T$ in $G$. Such a tree can be computed by finding a minimum spanning tree of $T$ in the graph metric $\dist_G$ in polynomial time. We use \Cref{thm:tracespanningtree} on $W_0$ with terminal set $T$ to get an object frame that is a tree. We have that $\compl(W_0)=O(\compl(\cA(G)))$.
Use \Cref{lem:obj2wire} to obtain the corresponding wireframe $\pl W_0$ that is a plane tree of complexity $\compl(\pl W_0)=O(n^2)\cdot \compl(\cA(G))$.

\begin{figure}
\centering
\includegraphics{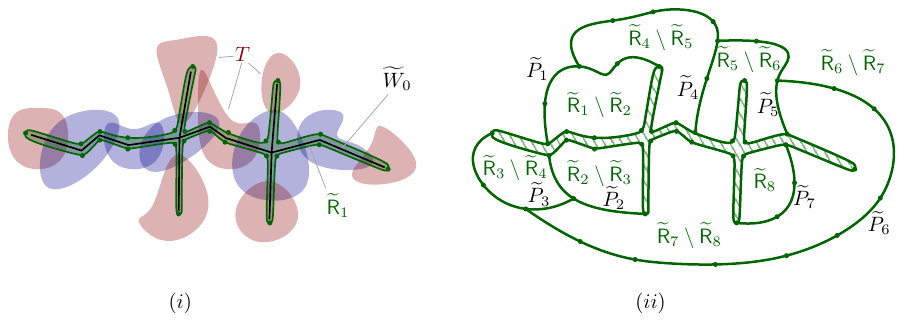}
\caption{(i) The objects of the approximate Steiner tree, terminals are shaded red. In a small neighborhood of the black wireframe $\pl W_0$ we create a wireframe cycle $\wf R_1$. (ii) The unbounded region of $\wf R_1$ is sliced into smaller regions with paths $\pl P_i$. The region bounded by $\wf R_1$ (shaded) is the irrelevant face. }
\label{fig:skeleton_construct}
\end{figure}

We can create a wireframe cycle $\wf{R}_1$ that is realized as a closed curve that goes around the plane drawing of $\gamma(\pl W_0)$ inside an infinitesimally small neighborhood of $\gamma(\pl W_0)$; see \Cref{fig:skeleton_construct}(i).
That is, any object of $V(G)$ intersecting the bounded region of this curve also intersects the unbounded region. When $\dot v\in V(W_0)$ has degree $k$, then the plane cycle $\wf R_1$ has $k$ vertices $\dot v_1,\dots \dot v_k$ corresponding to $\dot v\in V(W_0)$ placed between consecutive edges in a small neighborhood of $\dot v$ in $\pl W_0$. 
Notice that $|V(\wf R_1)| \leq 2|V(W_0)| \leq 4\smt(G,T)$. We define $\re R_1$ to be the unbounded region defined by the closed curve $\gamma(\wf R_1)$, and $F_\textup{irr}$ is the bounded region; it satisfies Property~\ref{it:sk_irrelevant}.
The above operation increases the complexity only by a constant factor compared to that of $\pl W_0$, thus $\compl(\wf R_1)=O(n^2)\cdot \compl(\cA(G))$.

Let $G_1=G|_{\re R_1}$ be our initial intersection graph and $
\wf R_1$ be our initial wireframe constructed above, and we add the vertices of
$p(V(\wf R_1))$ to a set $S$, and furthermore we also add $\wf R_1$ to an auxiliary
\emph{skeleton} $\skel$ that is constructed as a wireframe in $G$ that we will build along the way. Then we will recursively
define shortcuts $P_i$ and wireframe cycles $\wf R_i$ around regions $\re R_i$. See \Cref{fig:skeleton_construct}(ii) for an illustration.

While there is a shortcut $P_i$ between two vertices of $\wf R_i$ in $G|_{\re R_i}$, we find a shortcut  with the smallest reach that satisfies $
(1+\eps)|P_i|< |\wf R_i[u,v]|$,  where $u$ and $v$ are the endpoints of $P_i$. Note that the vertices of $P_i$ form a shortest path in $G|_{\re R_i}$, thus a wireframe path corresponding to $P_i$ can be created using \Cref{lem:pathcutcomplexity} on the object set $V(P_i)$ with terminals $u,v$ and designated points $\dot u \in u$ and $\dot v \in v$ (where $\dot u, \dot v \in V(\wf R_i)$); let $\pl P_i$ be the wireframe corresponding to this object frame.
Notice that since $P_i$ is a shortest path, its vertices induce a path in $G|_{\re R_i}$, and thus there is a one-to-one correspondence between the vertices and edges in $P_i$ and $\pl P_i$.
Adding $\pl P_i$ to $\skel$ generates a new wireframe, where $\re R_i$ is split into two sub-regions by $\gamma(\pl P_i)$, where one region $\re R_{i+1}\subset \re R_i$ does not contain the edges of $\wf R_i[\dot u,\dot v]$. Let $\wf R_{i+1}$ be the wireframe cycle given by the concatenation of $\pl P_i$ and $E(\wf R_i)\setminus E(\wf R_i[u,v])$.
By \Cref{lem:pathcutcomplexity} we have that 
\begin{equation}\label{eq:compl_P_i}
\compl(\wf R_{i+1})\leq  \compl(\wf R_i) + 8|V(\cA(G))\cap(\inter (\re R_{i}\setminus \re R_{i+1}))|.
\end{equation}

The procedure stops when there are no more shortcuts satisfying $(1+\eps)|P_i|<|\wf R_i[u,v]|$ in $G|_{\re R_i}$. Let $G|_{\re R_k}$ be this last face, and let $P_k$ be a shortest path in $G|_{\re R_k}$ connecting two opposite vertices, or if the cycle $\Perim(R_k)$ has odd length, then let $P_k$ be a shortest path connecting two vertices cutting $\Perim(R_k)$ into paths whose lengths differ by $1$. The corresponding path $\pl P_k$ divides $\re R_k$ into two new faces.

Since $(1+\eps)|P_i|<|\wf R_i[u,v]|$ for $i<k$, we get that
\[|\wf R_{i+1}|=|\wf R_i|-|\wf R_i[u,v]|+|P_i|<|\wf R_i|-\eps|P_i|.\]
Let $P_{k-1}$ be the last shortcut added by the procedure. Summing the previous
inequality for all $i\leq k-1$ gives
\[
\eps\sum_{i=1}^{k-1} |P_i| < \sum_{i=1}^{k-1} (|\wf R_i|-|\wf R_{i+1}|)
=|\wf R_1|-|\wf R_{k}|.
\]
Since $|P_k|<|\wf R_{k}|$, we get that $\eps\sum_{i=1}^{k} |P_i|<|\wf R_1|$ and thus 
\begin{equation}\label{eq:skeletonlen}
|\skel|= |\wf R_1|+\sum_{i=1}^k |P_i| = O(|\wf R_1|/\eps),
\end{equation}
which proves \ref{it:sk_northsouthbound}.

To prove the complexity bound on $\cA(\skel)$, notice first that the regions $\re R_{i+1}\setminus \re R_i$ are pairwise interior-disjoint and in particular the sets $V(\cA(G))\cap(\inter (\re R_{i}\setminus \re R_{i+1}))$ are pairwise disjoint subsets of $V(\cA(G))$. Thus iterating \eqref{eq:compl_P_i} implies
\[
\compl(\wf R_j)\leq \compl(\wf R) + \sum_{i=1}^{j-1} 8|V(\cA(G))\cap(\inter (\re R_{i}\setminus \re R_{i+1}))|\leq O(n^2)\cdot \compl(\cA(G)) + 8\cdot \compl(\cA(G))\]
for all $j=1,\dots,k$. By \eqref{eq:skeletonlen} we have that $k=O(n/\eps)$, thus 
\begin{align*}
\compl(\skel)&= \compl(\wf R_1) + \sum_{i=1}^{k-1} \compl(\pl P_i)\\
&< \sum_{i=1}^{k} \compl(\wf R_i)\\
&< k\cdot O(n^2) \cdot \compl(\cA(G))\\
&= O(n^3/\eps)\cdot \compl(\cA(G)),
\end{align*}
which concludes the proof of the complexity bound.

Let $\re F_i=\Clo(\re R_i\setminus \re R_{i+1})$ denote the face of $\skel$ created in step~$i$ of the above iteration, and let $\re F_{k}$ and $\re F_{k+1}$ be the final faces obtained by the cutting of $\re R_{k}$. For $i\leq k-1$ define $\No(\wf F_i)=\pl P_i$, and $\So(\wf F_i)$ to be the path given by the rest of the edges on the face. For the final faces $\re F_{k}$ and $\re F_{k+1}$ we define $\No(\wf F_{k})=\No(\wf F_{k+1})=\pl P_k$, and $\So(\wf F_k)$ and $\So(\wf F_{k+1})$ are the paths formed by the remaining edges on the face $\re F_k$ and $\re F_{k+1}$, respectively.

Observe that \ref{it:sk_compass} 
holds. Let us denote by $\So(F_i)$ and $\No(F_i)$ the paths of $\Perim(G|_{\re F_i})$ corresponding to $\So(\wf F_i)$ and $\No(\wf F_i)$, respectively.

Since $\skel$ was constructed by repeatedly adding shortcuts to an initial cycle, the constructed graph remains $2$-connected throughout the iteration, and in particular, Property~\ref{it:sk_2connected} holds. As the initial cycle contained all terminals Property~\ref{it:sk_terminals} also holds.

Notice that the faces $\re F_1,\dots \re F_{k+1}$ of $\skel$ are created using a path $P_i$ that is a shortest path in $G|_{R_i}$, consequently, it is also shortest in $G|_{F_i}$. Observe that no proper subpath of $\So(\wf F_i)$ admits a shortcut in $R_i$, as in case of $i<k$ that would give a shortcut with a shorter reach than what was chosen, or for $i=k$ it contradicts the non-existence of a shortcut in $\re R_k$. Hence 
\[|\So(\wf F_i)[\dot a, \dot b]|\leq (1+\eps) \dist_{G|_{R_i}} (a,b) \leq (1+\eps) \dist_{G|_{F_i}}(a,b)\]
for any pair $a,b$ on $V(\So(F_i))$ that are not both endpoints of $\So(F_i)$. This concludes the proof of Property~\ref{it:sk_approximatepaths}.
\end{proof}

\begin{proof}[Proof of \Cref{thm:spanner}]
Let $\skel$ be the skeleton for $(G,T)$ constructed according to \Cref{lem:skeletonprops}. 
To construct our spanner $\cS$, for each vertex $v$ of $\skel$ we add $p(\dot v)$ to the set $S$. Next, we use the following procedure on each relevant face $\re R$ of $\skel$. We apply the bipartite spanner construction of \Cref{lem:bipartitespan} on
$V(\No(R))$, and $V(\So(R))$ in $R$, and for each vertex $v$ of this spanner we add $v^\orig$ (i.e., the original object of $V(G)$ that $v$ is a restriction of containing $v$) to $S$.
Once the above procedure is done, the final spanner consists of the cliques of $\cP$ intersecting $S$ or having a vertex within distance $1/\eps$ from the parent of some vertex of $\skel$. Formally, we set $\cS=(V(S)\cup
N_G(p(V(\skel)),1/\eps))_\cP$. It remains to bound $|\cS|$ and to show that it has the desired spanner property.

\paragraph{Bounding $|\cS|$.} 

Since for any vertex $v$ there are $O(r^2)$ cliques within distance $r$
from~$v$, we can bound $(N(p(V(\skel)),1/\eps))_\cP$ by $O(|\skel|/\eps^2)
=O(\smt(G,T)/\eps^3)$ by \eqref{eq:skeletonlen}.

To bound $S_\cP$, we observe that on the cycle corresponding to each face of $\skel$  we applied the bipartite
spanner once (except for the irrelevant face). Consequently, the total length of the cycles on which we
have applied the bipartite spanner is at most $2|\skel|$, and thus
by Property~\ref{it:sk_northsouthbound} we get $|S|=O(|\skel|/\eps^3)=O(|\smt(G,T)|/\eps^4)$.
Therefore $|S_\cP|=O(|\smt(G,T)|/\eps^4)$ and $|\cS|=O(|\smt(G,T)|/\eps^4)$, as required.

\paragraph{The spanning property for $\cS$.} Consider a shortest path $Q$ in $G$ connecting two vertices of $T$. Notice that if $V(Q)\subseteq N_G(p(V(\skel)),1/\eps)$, then clearly $Q$ is contained in $G[\bigcup \cS]$ and the claim follows. Suppose the contrary, that there is some vertex $x\in V(Q)$ outside $N_G(p(V(\skel)),1/\eps)$. Notice that $x$ must be outside $F_\textup{irr}$ as all of its vertices are in the $1$-neighborhood of $p(\skel)$. Let $\re F$
be the relevant face of $\skel$ that contains $x$, and let $F=G|_{\re F}$. We find a wireframe $\pl Q$ corresponding to the shortest path $Q$ using \Cref{cor:shortestpathframe}, and let $\dot x$ be the vertex of $\pl Q$ whose parent is $x$. Fix an orientation of $Q$, and let $u\in V(Q)$ be the last vertex intersecting $\bd \re F$ before $Q$ arrives in $x$. Let $u'\in \Perim(F)$ be some neighbor of $u$ (more precisely, some vertex of $N_B(u\cap \re F)$). Such a vertex exists since $\Perim(F)$ covers $\bd \re F$. Analogously, let $v\in V(Q)$ be the first vertex on $Q$ after $x$ whose object intersects $\bd \re F$, and let $v'\in \Perim(F)\cap N_B(v\cap \re F)$.
Observe that $u'$ is either a neighbor of $u$ or $u'$ is a connected component of $u\cap \re F$. The analogous observation holds for $v'$ and $v$.

It is sufficient to show that we can replace $Q[u,v]$ with a path $Q'$ in
$G[\bigcup \cS]$ such that $|Q'|\leq
(1+O(\eps))|Q[u,v]|$. Since the internal vertices of $Q[u,v]$ are inside the skeleton face, they cannot be terminals, thus $Q[u,v]$ is a shortest path, and in particular, $|Q[u,v]|=\dist_G(u,v)$.
Let $Q'=(u,u')\cup P \cup (v',v)$, where $P$ is the shortest
$u' \to v'$ path in $G[\bigcup \cS]|_{\re F}$. 
We claim that $|Q'|\leq (1+\eps)\dist_G(u,v)+4$.  We distinguish three cases.

\begin{description}
\item[Case 1.] Both $u'$ and $v'$ are in $\No(F)$.\\ Then by \ref{it:sk_approximatepaths} we know that $\No(F)$ is a shortest path in $F$. Thus the triangle inequality implies 
\[|Q'|\leq |\No(F)[u',v']|+ 2=\dist_F(u',v')+2\leq |Q| + 4 = \dist_G(u, v) +4,\]
where the second inequality uses that the internal vertices of $Q$ are inside $F$.
\item[Case 2.] Both $u'$ and $v'$ are outside $\No(F)$, i.e., they are internal vertices of $\So(F)$.\\
Consider the path $\bar{Q}:=(u',u)\cup Q[u,v] \cup (v,v')$. By \ref{it:sk_approximatepaths} we have that $\So(F)$ is a internally $(1+\eps)$-approximate shortest path in $F$, thus $(1+\eps)|\bar{Q}| >|\So(F)[u',v']|$.
Therefore 
\[|Q'|\leq |\So(F)[u',v']| < (1+\eps)|\bar{Q}| = (1+\eps)(\dist_G(u,v)+2) < (1+\eps)\dist_G(u,v)+4.\]

\item[Case 3.] Exactly one of $u'$ and $v'$ lies in $\No(F)$.\\
 By the spanning property of the bipartite spanner, we
have that \[|Q'|\leq (1+\eps)\dist_G(u',v')\leq (1+\eps)(\dist_G
(u,v)+2)\leq (1+\eps)\dist_G(u,v)+4.\]
\end{description}

Recall that $\dist_G(u,v)\geq 2/\eps-2$ since $x$ is on the
shortest path from $u$ to~$v$ and has distance at least $1/\eps$ from both
$u'$ and~$v'$. Since $\eps<0.5$ implies $\dist_G(u,v)>1/\eps$, we get
\[|Q'|\leq (1+\eps)\dist_G(u,v)+4 < (1+5\eps)\dist_G(u,v).\]

This concludes the proof.
\end{proof}

\subsection{Algorithm for Subset TSP}

It can be shown that contracting an edge can decrease the optimum value of Subset TSP only by at most 2: a closed walk in the modified instance can be extended to a closed walk of the original graph by traversing the contracted edge at most twice. Thus consolidating a set $X$ of vertices decreases the optimum by at most $2|X|$. To obtain more efficient algorithms, we show that contracting a bounded number of cliques has only a bounded effect on the value of the optimum solution.

\begin{lemma}\label{lem:TSPUncontractCost}
Let $T\subset V(G)$ be a set of terminals and $X\subset V(G)$ an arbitrary
set, and fix a clique partition $\cP$ of $G$. Then $\tsp(G,T)\leq \tsp
(G\vcon X,T_X)+ |T\cap X|+ O(|\cX|)$, where $T_X=V(G\vcon X)|_T$ is the set of vertices in $G\vcon X$ whose
preimage in $G$ contains a terminal from $T$, and $\cX$ is the set of cliques
in $\cP$ that contain some vertex of $X$. Moreover, given an optimum tour for
$(G\vcon X,T_X)$, a tour of length $\tsp(G,T) + O(|\cX|)$ that is feasible for $
(G,T)$ can be constructed in $\poly(|V(G)|)$ time.
\end{lemma}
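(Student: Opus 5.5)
We take an optimum closed walk $W'$ for $(G\vcon X,T_X)$ and lift it to a closed walk in $G$ of length $|W'|+|T\cap X|+O(|\cX|)$. The construction is explicit, so it also yields the algorithmic part.

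\emph{The lifting.} Each vertex of $G\vcon X$ is either an uncontracted vertex of $V(G)\setminus X$ or a contracted vertex $\breve x$, whose preimage $Y_{\breve x}$ is a connected component of $G[X]$. An edge of $W'$ between two vertices of $G\vcon X$ comes from some edge of $G$ between their preimages. So $W'$ decomposes into the following pieces.
\begin{itemize}
\item Stretches outside $X$; these lift verbatim.
\item Visits to contracted vertices $\breve x$. In such a visit, $W'$ enters $Y_{\breve x}$ at some vertex $a$ and leaves from some vertex $b$, both in $Y_{\breve x}$, but $a\neq b$ is possible.
\end{itemize}
To lift, we must do two things: connect $a$ to $b$ inside $G[Y_{\breve x}]$ at every visit, and make sure every terminal of $T\cap Y_{\breve x}$ is visited.

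\emph{Step 1: one closed walk per component.} For each component $Y=Y_{\breve x}$ that $W'$ visits, we build one closed walk $Z_Y$ inside $G[Y]$. It passes through all entry and exit points of $W'$ at $\breve x$ and through all terminals of $T\cap Y$. We then splice $Z_Y$ into the lifted walk at the first visit of $\breve x$. At later visits, we route from the entry point to the exit point along $Z_Y$ as well. To make this cheap, we first simplify $W'$ so that it visits each contracted vertex only $O(1)$ times in total. This uses the same uncrossing/shortcutting argument as in \Cref{lem:subsetTSPsolTerminals}, applied to $G\vcon X$ with the partition $\cP\vcon X$. That partition has bounded neighbourhoods because the cliques of $\cP$ do. Under this simplification, the number of distinct entry/exit points into $Y$ is at most $O(1)$ times the number of cliques of $\cX$ that $W'$'s edges touch in $Y$.

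\emph{Step 2: the length bound.} We bound the length of $Z_Y$ clique by clique, using the fact that each class $C\in\cP$ is a clique of $G$.
\begin{itemize}
\item Inside one clique $C\cap Y$, all required points (terminals and entry/exit points) can be visited consecutively at cost $1$ each. This contributes $|T\cap C\cap X|+O(1)$ per clique.
\item To move between cliques, take a spanning tree of the quotient graph $(G[Y])_\cP$, which has $|\cX_Y|$ nodes. Walking around this tree costs $O(1)$ per tree edge, plus $O(1)$ to go from an endpoint of an inter-clique edge to the next required point in the same clique.
\end{itemize}
Summing over all components gives the extra cost
\[\sum_Y |Z_Y| + O(\#\text{visits}) \le |T\cap X| + O(|\cX|).\]
The unsimplified version of this argument directly yields the "Moreover" part, with the tour length stated relative to $\tsp(G\vcon X,T_X)\le\tsp(G,T)$. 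Since edge contraction cannot increase distances, that inequality gives the claimed length bound.

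\emph{Main obstacle.} The hardest step is controlling entry/exit points. A priori, $W'$ may visit a contracted vertex many times, and each visit may use a different pair $a,b$. This would force $Z_Y$ to pass through unboundedly many points per clique and break the $O(|\cX|)$ bound. I would handle it with the exchange argument above: whenever two arcs of $W'$ go between the same pair of cells of $\cP\vcon X$ in the same direction, swap them. Contracted vertices are treated as singleton cells here. After simplification, each pair of cells carries at most two arcs, and each cell has $O(1)$ neighbouring cells, so each clique of $\cX$ receives only $O(1)$ entry/exit points. Care is needed because a contracted singleton may neighbour many cliques. However, its incident arcs each enter a distinct clique of $\cX$ adjacent to it, so we charge each arc to the clique of $\cX$ containing its endpoint. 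Each clique of $\cX$ has $O(1)$ neighbouring cells, so it is charged $O(1)$ times.
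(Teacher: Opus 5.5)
\textbf{Main gap: the ``Moreover'' bound.} Your lifting targets only the first inequality, and the ``Moreover'' part does not follow from it. The walk you build has length $\tsp(G\vcon X,T_X)+|T\cap X|+O(|\cX|)$. Inserting the trivial bound $\tsp(G\vcon X,T_X)\le\tsp(G,T)$ gives only $\tsp(G,T)+|T\cap X|+O(|\cX|)$, not $\tsp(G,T)+O(|\cX|)$. The term $|T\cap X|$ is not $O(|\cX|)$ in general: a clique can contain arbitrarily many terminals of $X$. Even after \Cref{lem:sparseTSP} it can be of order $|\cX|/\eps$, which in the proof of Theorem~\ref{thm:TSPalg} would turn the additive error $O(|\cX_i|)=O(\eps)\cdot\mathrm{OPT}$ into one of order $\mathrm{OPT}$.

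The missing idea is the reverse estimate $\tsp(G\vcon X,T_X)\le\tsp(G,T)-|T\cap X|+O(|\cX|)$, i.e., contracting $X$ genuinely saves about one edge per terminal of $X$. To prove it, take an optimal tour $W^*$ of $(G,T)$ simplified as in \Cref{lem:subsetTSPsolTerminals}, so that each clique $C$ is entered and left only $O(1)$ times. Inside each of these $O(1)$ runs through $C$ you may permute the visited vertices freely, since $C$ is a clique, so the terminals of $T\cap X\cap C$ can be made consecutive. Consecutive vertices of $X\cap C$ lie in the same component of $G[X]$. Hence at least $|T\cap X\cap C|-O(1)$ edges of $W^*$ become loops in $W^*\vcon X$, which is feasible for $(G\vcon X,T_X)$. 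Combined with your upper bound, this yields the required $\tsp(G,T)+O(|\cX|)$.

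\textbf{Secondary flaw: the cost of later visits.} Routing ``from the entry point to the exit point along $Z_Y$'' at a later visit costs the length of a subwalk of $Z_Y$, not $O(1)$. The component $Y$ may span many cliques, and, as your own charging shows, $W'$ can visit $\breve x$ $\Theta(|\cX_Y|)$ times. So the claim that each contracted vertex is visited only $O(1)$ times is false in general, and the total routing cost can be quadratic. The fix is not to follow the visiting order of $W'$. Instead, take the multigraph consisting of the lifted edges of $W'$ together with, inside each $Y$:
\begin{itemize}
\item a tree joining one hub per clique meeting $Y$ (each quotient-tree edge realized by at most $3$ edges), plus one attachment edge from the hub to each of the $O(1)$ entry/exit points of that clique;
\item a parity-correcting subset of that tree's edges;
\item one closed detour per clique from its hub through its terminals.
\end{itemize}
This multigraph is connected and Eulerian, its part inside $Y$ has $|T\cap Y|+O(|\cX_Y|)$ edges, and an Euler tour of it is the desired walk. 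The paper sidesteps this by lifting naively and then simplifying the lifted walk in $G$ via \Cref{lem:subsetTSPsolTerminals}.
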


\begin{proof}
Let $W$ be an optimum subset TSP walk for the instance $(G\vcon X,T_X)$, and fix a
cyclic orientation on $W$. We will show how to build a subset TSP walk $W_G$ for $(G,T)$ that is potentially longer than required, and shorten it later. First, we add all arcs of $W$ to $W_G$ where
at least one endpoint is outside $X$. Consider a connected component $Y$
induced by $X$ and let $z\in V(G\vcon X)$ be the vertex that results from
contracting this component; suppose moreover that $z$ appears on $W$. Then
for each consecutive arc pair $uz$ and $zv$ of $W$ we add a shortest
$u\rightarrow v$ path to $W_G$ whose internal vertices are in $Y$.
Additionally, for any terminal $t\in T$ that appears in $Y$, we extend the
tour $W_G$ with a path that goes to $t$ and back from some already visited
point in $Y$. After we do this for all consecutive arc pairs of $W$ whose
shared vertex is $z$, the walk $W_G$ correctly traverses $Y$. We repeat the
above procedure for all connected components of $G[X]$ whose contraction
appears on $W$. Notice that the resulting tour $W_G$ is a feasible subset TSP
tour of $(G,T)$.

We now shorten the walk $W_G$. We can apply the simplifications described in the proof of
\Cref{lem:subsetTSPsolTerminals} to ensure that in each clique of $\cX$
the new walk $W'_G$ contains at most $O(1)$ non-terminal vertices, and each of them at most $O(1)$ times. Consequently,
\begin{equation}\label{eq:W'bound}
|W'_G|\leq \tsp(G\vcon X,T_X)+|T\cap X|+O(|\cX|).
\end{equation}

Let $W^*$ be an optimum tour of $(G,T)$. We claim that \[\tsp(G,T) = |W^*| \geq \tsp(G\vcon X,T_X)+|T\cap X| -O(|\cX|).\]
The techniques of \Cref{lem:subsetTSPsolTerminals} imply that without loss of generality, $W^*$ is a tour where from each clique $C$ of $\cP$ the tour contains $O(1)$ non-terminal vertices, and each with $O(1)$ multiplicity. Moreover, all but $O(1)$ among the terminals $(T \cap X) \cap C$ appear consecutively on $W^*$. Now we contract $X$, and as a result, get a walk $W^*\vcon X$ that is a feasible solution to $(G\vcon X,T_X)$. Since in each clique $C$ intersecting $T\cap X$ we have contracted at least $|(T\cap X)\cap C|-O(1)$ edges, we have
\[\tsp(G\vcon X,T_X) \leq |W^*\vcon X| \leq |W^*| - \sum_{C\in \cP} \big(|T\cap X \cap C| + O(1)\big) \leq |W^*| - |T\cap X| + O(|\cX|),\]
as claimed. Substituting in~\eqref{eq:W'bound} now yields the desired bound:
\[|W'_G|\leq \tsp(G\vcon X,T_X)+|T\cap X|+O(|\cX|) \leq \tsp(G,T) + O(|\cX|).\]
Moreover, the steps described above can be achieved
within a running time that is polynomial in $|V(G)|$.
\end{proof}

We are now ready to present our approximation scheme for \stsp.

\TSPalg*

\begin{proof}
We apply \Cref{thm:atalakitas} to obtain an $\alpha$-standard graph $G\in \cI_\alpha$ whose representation has polynomial complexity. Let $\cP$ be the corresponding cell partition.
By \Cref{lem:sparseTSP} we may assume without loss of generality that
for each $C\in \cP$ we have $|C\cap T|\leq 1/\eps$ and $|C|\leq 1/\eps^6$. 
Let $\opt$ denote the length of the
optimal tour for $(G,T)$. Let us now apply
\Cref{thm:spanner} to get a spanner consisting of the
cliques $\cS\subseteq \cP$, with $|\cS|=O(\frac{1}{\eps^4}|V(A)|)=O(\frac{1}
{\eps^4})\cdot \opt$. This therefore gives the desired spanner~$G'$ of size $\frac{1}{\eps^{10}}\cdot \opt$.

If we denote by $\bigcup \cS$ the set of vertices in
$\cS$, then by the bound on the size of cliques we get that
$|\bigcup\cS|\leq \opt/\eps^{10}$. By the spanner property, it is sufficient
to compute a $(1+ O(\eps))$-approximate tour in the spanner $G^*\eqdef
G[\bigcup \cS]$ for the terminal set~$T$.

We apply
\Cref{thm:contractdecomp} on $G^*$ and the cell partition $\cS$ for $k=1/\eps^5$, which yields a collection $\cX_1,\dots,\cX_k\subset \cS$ and corresponding vertex sets $X_1,\dots,X_k$. Note that $\ell=\max_{C\in \cS} |C|\leq 1/\eps^6$.

Let us now compute an
optimum tour for each of the instances $(G^*\vcon X_i,T_{X_i})$, where $T_
{X_i}=V(G^*\vcon X_i)|_T$. By \Cref{thm:contractdecomp} each $X_i$ satisfies $\tw(G^*\vcon X_i)= O(kl)= O(1/\eps^{11})$. We can therefore apply
the $2^{O(\tw)} \poly(n)$ algorithm~\cite{single-exponential} to solve these instances in $2^
{O(1/\eps^{11})}\poly(n)$ time. To get a faster algorithm, we can use that \Cref{thm:contractdecomp}(\textit{iii}) gives $\tw_{\cS\vcon X_i}(G^*\vcon X_i)=O(kl)=O\big(
(1/\eps^5) \cdot \log(1/\eps)\big)$, so the framework of~\cite{BergBKMZ20} yields a running time of $2^{O(1/\eps^5\log(1/\eps))}\poly(n)$. Note that $\tsp(G^*\vcon X_i,T_{X_i})\leq \opt^*$ because contractions can only decrease the optimum cost of
a subset TSP instance.

By \Cref{lem:TSPUncontractCost} in polynomial time we can construct tours
$\tau_i$ for $(G^*,T)$ of length at most
\[\tsp(G^*,T)+O(|\cX_i|)
\leq \opt^*+O(|\cX_i|).\]
Since \Cref{thm:contractdecomp} guarantees $\sum_{i=1}^k |\cX_i|=O(|\cS|)$, there exists some $i$ where $|\cX_i|=O
(|\cS|/k)= O(\eps)\cdot\opt$. As a result, the shortest tour among the tours
$\tau_i$ has length at most $\opt^*+O(\eps)\cdot\opt=(1+O(\eps))\cdot\opt$.
\end{proof}

\section{The mortar graph}
\label{sec:mortar}

This section aims to construct a so-called mortar graph for a given intersection graph $G$ and terminal set $T$. Mortar graphs were introduced by Borradaile~\etal~\cite{BorradaileKM09} for the planar Steiner tree problem. We also only need them for our Steiner tree algorithms. Our construction is much more involved as we need to avoid unwanted crossings between columns.

\begin{definition}[Mortar graph and bricks] A $2$-connected wireframe $\pl M$ of $G$ is a \emph{mortar graph} if $\pl M$ has a unique \emph{irrelevant} face with region $\re F_\textup{irr}\subset \Reals^2$, such that any object of $V(G)$ intersecting $\re F_\textup{irr}$ also intersects $\Reals^2 \setminus \re F_\textup{irr}$. We call all other faces of $\pl M$ (as closed subsets of $\Reals^2$) \emph{brick regions}, and each brick region $\re B$ defines a corresponding \emph{brick} $B:=G|_\re B$.
\end{definition}

\begin{definition}[Compass]
Given a mortar graph $\pl M$, a \emph{compass} of $\pl M$ is a partition of the boundary cycle $\wf B$ for each brick region $\re B$ into six paths: \[\So(\wf B), \Ea^-(\wf B), \Ea^|(\wf B), \No(\wf B), \We^|(\wf B), \We^-(\wf B),\] so that these subpaths of $\wf B$ are in anticlockwise order if $\re B$ is a bounded brick region and in clockwise order if $\re B$ is the unique unbounded brick region. We further define 
\[\We(\wf B):= \We^-(\wf B)\cup\We^|(\wf B)\quad \text{ and } \quad
\Ea(\wf B):=\Ea^-(\wf B)\cup\Ea^|(\wf B).\]
\end{definition}

For a mortar graph with a compass and a brick $B$, let \[ \dot v^B_{SSE}, \dot v^B_{SE},\dot v^B_{NE},\dot v^B_{NW},\dot v^B_{SW},\dot v^B_{SSW}\] denote the shared vertices among the consecutive paths 
\[\So(\wf B), \Ea^-(\wf B), \Ea^|(\wf B), \No(\wf B), \We^|(\wf B), \We^-(\wf B),\] where $\dot v^B_{SSW}$ is shared between $\We^-(\wf B)$ and $\So(\wf B)$. We set $v^B_{SSE}:=p|_B(\dot v^B_{SSE})$ to denote the object of $B$ corresponding to $\dot v^B_{SSE}$, and we define the objects $v^B_{SE}$, $v^B_{NE}$, $v^B_{NW}$, $v^B_{SW}$, and $v^B_{SSW}$ analogously.

An important property we will need of the columns is that they behave similarly to shortest paths toward the northern path of their bricks. Let $G$ be a graph, and suppose that $P$ is a path and $Q$ is a path or vertex set that contains exactly one endpoint $y$ of $P$. We say that $P$ is \emph{$c$-approaching} $Q$ if for any vertex $x$ of $P$ we have that $|P[x,y]|\leq c\cdot \dist_G(x,Q)$.

Recall that the object duplication introduced on Page~\pageref{sec:objectduplication} means that we can use the correspondence of $\wf B$ with $\Perim(B)$ to define  paths in $B$ corresponding to the compass of $\wf B$: let 
$\So(B)$, $\Ea^-(B)$, $\Ea^|(B)$, $\No(B)$, $\We^|(B)$, $\We^-(B)$, as well as $\Ea(B)$ and $\We(B)$ be these paths.

Given $\eps>0$, we set $\kappa=1/\eps^3$ and $\lambda=1/\eps^{7.5}$. Let $\re B$ be a brick region with corresponding graph $B$. If $B$ has no vertex $v$ where $\dist(v,\Perim(B))>\lambda$ and for each $v\in V(B)$ where $\dist_B(v,\So(B))\leq \lambda$ and $\dist_B(v,\No(B))\leq \lambda$ it holds that $\dist_B(v,\{v^B_{SE},v^B_{SW}\})\leq 2\lambda/\eps^2$, then $\re B$ and $B$ are called \emph{small}, otherwise they are called \emph{large}.
Our goal for the rest of this section is to construct a mortar graph with the following properties.

\begin{theorem}[Mortar graph properties]\label{thm:mortarprops}
Given the intersection graph $G$, a terminal set $T$, and a number $\eps\in (0,0.1)$, we can construct a mortar graph $\pl M$ of complexity $\compl(\pl M)=O(n^4/\eps^2)\cdot \compl(\cA(G))$ with a compass such that the following properties hold.
\begin{enumerate}[label=\textbf{M\arabic*}]
\item For each $v\in T$ the mortar graph $\pl M$ contains at least one vertex $\dot v \in V(\wf B)$ with $p(\dot v) = v$ for some brick region $\re B$ of $\pl M$.\label{it:M_terminals}
\item In each brick $B$ the path $\No(B)$ is a shortest path in $B$, and $\We^-(B) \cup \So(B)\cup \Ea^-(B)$ is a internally $(1+\eps)$-approximate shortest path in $B$. \label{it:M_approximatepaths}
\item There is a constant $c$ depending only on $\alpha$ such that in each brick $B$ the path $\We^|(B)$ and $\Ea^|(B)$ are $c$-approaching $\No(B)$. Moreover, $\We^|(B)$ and $\Ea^|(B)$ are internally $c$-approximate shortest paths in~$B$.\label{it:M_eastwestsuffixes}
\item The total length of the paths $\No(B), \So(B), \We^-(B), \Ea^-(B)$ for all bricks $B$ is $O(1/\eps)\smt(G,T)$.\label{it:M_northsouthbound}
\item The total length of the eastern and western paths of all bricks is $\bigcup_B |\Ea(B)\cup \We(B)|=O(\eps)\smt(G,T)$.\label{it:M_eastwestbound}
\item For each large brick $B$ and for some $k\leq \kappa$ there is a set of vertices $s_0,\dots,s_k$ ordered from west to east on $\So(B)$ such that for any vertex $x$ of $\So(B)[s_i,s_{i+1})$ we have $|\So(B)[s_i,x]|\leq \eps \dist_B(x,\No(B))$.\label{it:M_thickbrick}
\end{enumerate}
Given $G,T$, and $\eps$, such a mortar graph $\pl M$ and its compass can be computed in polynomial time.
\end{theorem}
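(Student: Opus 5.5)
We build $\pl M$ on top of the skeleton of \Cref{lem:skeletonprops}, following Borradaile--Klein--Mathieu but computing the columns through a Lipschitz embedding so that they do not cross. First apply \Cref{lem:skeletonprops} to get $\skel$. Properties \ref{it:sk_2connected}, \ref{it:sk_irrelevant}, \ref{it:sk_terminals} give 2-connectivity, the irrelevant face and \ref{it:M_terminals}. Properties \ref{it:sk_approximatepaths} and \ref{it:sk_northsouthbound} already give the North/South parts of \ref{it:M_approximatepaths} and \ref{it:M_northsouthbound}, provided every later brick inherits its $\No$ and $\We^-\cup\So\cup\Ea^-$ from skeleton paths. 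Restriction to a subregion only increases distances, so \Cref{lem:restrictionmetric}(ii) transfers the (internally approximate) shortest-path property to subpaths in the brick.

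In each relevant face $\re F$ with $F=G|_{\re F}$, walk along $\So(F)$ from west to east and pick $s_0,s_1,\dots$: $s_i$ is the first vertex with $\dist_F(s_i,\No(F))\le |\So(F)[s_{i-1},s_i]|/\eps$. The standard telescoping argument gives $\sum_i\dist_F(s_i,\No(F))=O(|\So(F)|/\eps)$. Every $x\in\So(F)[s_i,s_{i+1})$ then satisfies $|\So(F)[s_i,x]|\le\eps\,\dist_F(x,\No(F))$, which is \ref{it:M_thickbrick}. Choose an offset $j\in[\kappa]$ and keep only every $\kappa$th $s_i$ with index $\equiv j$. By averaging, the kept columns have total length $O(|\So(F)|/(\eps\kappa))$. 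Summing over faces and using \ref{it:sk_northsouthbound} with $\kappa=1/\eps^3$ gives $O(\eps)\smt(G,T)$, which is \ref{it:M_eastwestbound}. The West/East sides of each brick are these columns. Their portion lying on $\So(F)$ becomes $\We^-,\Ea^-$, and the portion going into the interior becomes $\We^|,\Ea^|$.

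The columns must be realized as pairwise noncrossing wireframe paths from $s_i$ to $\No(F)$. A shortest path alone does not achieve this, so approximate shortest paths are constructed instead, in four steps.
\begin{enumerate}
\item Take the union $U$ of shortest $s_i\to\No(F)$ paths in $F$.
\item Turn $U$ into similarly sized fat objects: merge the first and last $O(\alpha)$ objects of each path into one object, so that something stays away from $\bd\re F$, and puncture holes without changing the intersection graph. This places $U$ in some $\cI_{\alpha'}$.
\item Apply \Cref{thm:Lipschitz_wireframe} to obtain a plane wireframe $c$-Lipschitz to $U$. Inside it, take a shortest-path forest towards $\No(F)$, consistently tie-broken, for example leftmost. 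Planarity makes these paths noncrossing; they may share suffixes, which the $\Ea^|/\We^|$ conventions allow.
\item Map the paths back through the parent function, using object duplication. The Lipschitz bounds then give that $\We^|,\Ea^|$ are $c$-approaching $\No(B)$ and internally $c$-approximate shortest paths in $B$, which is \ref{it:M_eastwestsuffixes}. The same bounds change the column lengths only by a factor $c$, so \ref{it:M_eastwestbound} still holds.
\end{enumerate}

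For complexity, cut $\re F$ along each column using \Cref{lem:pathcutcomplexity}, as in the skeleton. Each cut costs at most $8$ times the arrangement vertices it encloses, plus the $O(n^2)$ wireframe overhead. With $O(n/\eps)$ columns on top of the skeleton bound, this gives $O(n^4/\eps^2)\compl(\cA(G))$. All steps are polynomial: shortest paths, the skeleton, and \Cref{thm:Lipschitz_wireframe}.

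The main obstacle is step 3 together with its interaction with the boundary. The restricted objects near $\bd\re F$ are neither fat nor similarly sized, and columns may run along $\So$ or $\No$ and overlap one another. Defining the $\Ea^-/\Ea^|$ split, and ensuring the noncrossing wireframe columns stay inside $\re F$ and meet $\No$, will need care. I would first try to contract boundary-touching prefixes into single merged objects and let the column follow $\So$ until its first object disjoint from $\bd\re F$.
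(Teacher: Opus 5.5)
Your setup matches the paper: the skeleton, the choice of the $s_i$, the $\kappa$-averaging, and routing columns through \Cref{thm:Lipschitz_wireframe}. But the step you defer as ``the main obstacle'' is where essentially all the work lies, and as sketched it does not go through.

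\textbf{The merged family is not $\alpha'$-standard.} A shortest path from $s_i$ to $\No(F)$ in $F$ may touch $\gamma(\So(\wf F))$ anywhere along its length. So after merging only the first and last $O(\alpha)$ objects, $U$ still contains thin restricted slivers. Worse, if a column's boundary-free part has constant length, its merged start and end objects overlap (or contain no unit disk), so there is nothing to route. Your fallback cuts at the \emph{first} boundary-free object, which is the wrong cut. The paper keeps only $\Top(C^1_i)$, the part after the \emph{last} contact with $\bd\re L$. It then merges the end objects of $\Top(C^1_i)$, via paths of length at most $2$ to $\So(L)$ and $\No(L)$, into $x^\So_i$ and $y^\No_i$.

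\textbf{Short columns need a separate mechanism.} They cannot be handled by the merging above. This is why the paper first cuts each skeleton face by a greedy family of disjoint bridges, i.e.\ South--North shortest paths of length at most $7$. It then merges consecutive regions that have no boundary-distant vertex. The resulting non-lens regions are sliced into small bricks by a separate greedy argument (\Cref{lem:slice_into_small_bricks}); this is why \ref{it:M_thickbrick} is only claimed for large bricks. Inside lenses, \ref{it:bridge_willdivert} shows every short column is diverting, so it can be replaced by a stretch of $\So(\wf L)$ plus the lens's West or East bridge. The cost of the bridges is paid for by \Cref{lem:bridgecount}, via an isodiametric argument.

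\textbf{Anchoring the columns.} Even with a valid embedding, noncrossing paths in $\pl D$ are not yet noncrossing columns anchored at the $\dot s_i$. They start at interior points of the merged objects. The path of one column can pass through, or enclose, another column's start object, which then cannot be joined to $\So(\wf F)$ without a crossing. The West-to-East order of the anchors is also lost. The paper resolves this in four steps:
\begin{enumerate}
\item It takes maximal independent sets $\MIS(X^\So)$ and $\MIS(Y^\No)$ of the merged objects.
\item It clips the uncrossed paths to the component $\re L^{XY}$ of $\re L$ minus these objects; along the boundary of $\re L^{XY}$ the independent objects appear in order.
\item It extends the clipped curves inside these pairwise disjoint objects to $\bd\re L$, giving anchored paths $\pl Q^*_j$.
\item It defines the column of $s^1_i$ as $\So(\wf L)[\dot s^1_i,\dot v_j]\cup\pl Q^*_j$, where $\pl Q^*_j$ is the neighbouring anchored path on the side toward which $C^1_i$ leans.
\end{enumerate}
\Cref{lem:leaning,lem:PiMinimized,lem:eastwestpathsbound} then give $|\pl P_i|=O(|C^1_i|)$. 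This bound is what controls the $\We^-$ and $\Ea^-$ stretches in \ref{it:M_eastwestbound}. Your remark that the Lipschitz map changes column lengths only by a factor $c$ does not cover these southern stretches.

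Also, \Cref{lem:pathcutcomplexity} is stated for shortest paths of $G|_{\re R}$. It therefore does not directly bound the complexity of cutting along your wireframe columns.
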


We start proving \Cref{thm:mortarprops} by invoking \Cref{lem:skeletonprops} to construct a skeleton $\skel$. Consider now a relevant face of $\skel$ with region  $\re R\subset \Reals^2$, and let $\wf R$ be the cycle of $\skel$ bounding this face, consisting of north and south boundary paths $\No(\wf R)$ and $\So(\wf R)$, respectively. We will define so-called \emph{bridges} and \emph{columns} on the graph $R:=G|_{\re R}$ to slice the region $\re R$ into smaller brick regions. Let $\No(R)$ denote the walk in $R$ corresponding to $\No(\wf R)$, that is, $\No(R)$ is the subpath of $\Perim(R)$ that corresponds to $\No(\wf R)$. We define the path $\So(R)$ analogously based on $\So(\wf R)$. Recall that the northern paths are always shortest paths in $R$ by \ref{it:sk_approximatepaths}.

\subsection{Defining bridges}
A shortest path of $R$ is a \emph{bridge} if it is a shortest path of length at most $7$ connecting $\gamma(\So(\wf R))$ to $\gamma(\No(\wf R))$, that is, a
shortest path in $R$ with vertices $v_1,\dots,v_k$ where $k\leq 8$, $v_1$ intersects $\gamma(\So(\wf R))$, and $v_k$ intersects $\gamma(\No(\wf R))$. We note that the parents of the endpoints of $\No(\wf R)$ and $\So(\wf R)$ are
considered to be length-$0$ bridges. Consider a bridge $P$. Since it intersects the northern and southern boundary of $\re R$, its first vertex must also
intersect some $u\in V(\No(R))$ and its last vertex must intersect some $u'\in V(\So(R))$ (where $v_1=u$ and $v_k=u'$ are possible), creating a path of length at most $9$ connecting $u$ and $u'$. Let $\dot u, \dot u'$ be the vertices of
$\wf R$ such that $p(\dot u)=u$ and $p(\dot u')=u'$. We invoke \Cref{cor:shortestpathframe} to make a shortest path connecting $u$ and $u'$ as terminals such that the corresponding objects of the object frame contain $\dot u$ and $\dot u'$. The corresponding wireframe is a \emph{bridge path} connecting
$\dot u$ and $\dot u'$, and it splits the region $\re R$ into two parts: a western and eastern subregion. (Note that the wireframe $\wf R$ could now be extended with the edges and new vertices of the bridge path.)

Next, we define a greedy collection of bridges as follows. Consider the vertices of $\So(R)$ in a west-to-east order, denoting its first and last vertex by $v$ and $w$, and let $P_1$ be the trivial bridge made up by $v$. Suppose that $P_i$ is a bridge already defined, then let $P_{i+1}$ be a bridge whose starting point is the first on $\So(R)[v,w]$ whose objects are disjoint from the objects of $P_i$. This defines some greedy collection of pairwise disjoint bridges indexed according to their west-to-east order with the property that if $u\in \So(R)$ is between the starting points of $P_i$ and $P_{i+1}$, then any bridge starting at $u$ has an object intersecting either some object of $P_i$ or some object of $P_{i+1}$. For each of these bridges let us fix corresponding bridge paths $\pl P_i$, which subdivides $\re R$ into a sequence of smaller regions. In such a smaller region $R'$ we  define $\So(\wf R')$ and $\No(\wf R')$ to be the parts of $\wf R'$ that fall on $\So(\wf R)$ and $\No(\wf R)$, respectively.

A vertex $v$ of $R'=G|_{\re R'}$ is called \emph{boundary-distant} if $\dist_{R'}(v,V(\So(R')\cup \No(R'))>\lambda$. (Recall that $\lambda=1/\eps^{7.5}$). Consider now two consecutive small regions $\re R'$ and $\re R''$ divided by some bridge path $\pl P_i$. If neither of $R'$ and $R''$ have a boundary-distant vertex, then we remove the bridge path $\pl P_i$. We claim that the resulting region $\re R'\cup \re R''$ still does not contain a boundary-distant vertex. First, if a vertex $v\in V(R')$ disjoint from $\gamma(\pl P_i)$ has distance $k_v$ to $V(\So(R')\cup \No(R')$ in $R'$, then its distance to the same set of vertices is not larger in $R''':=G|_{\re R'\cup \re R''}$, and notice that $V(\So(R')\cup \No(R')\subset V(\So(R''')\cup \No(R''')$. The analogous property holds for any $v\in V(R'')$ disjoint from $\gamma(\pl P_i)$. Thus if $R'''$ has a boundary-distant vertex $v$, then it must intersect $\gamma(\pl P_i)$. This however implies that $v\in N_{R'''}p(\pl P_i)$ and thus $v$ is within distance $10$ from the starting vertex of $P_i$ in $R'''$. Since $\lambda>10$, we conclude that $R'''$ cannot have a boundary-distant vertex.

We continue eliminating bridges between consecutive regions lacking a boundary-distant vertex. As a result, we have a collection of regions $\re L_1,\dots, \re L_t$ indexed in west-to-east order such that the following properties hold.
\begin{enumerate}[label=\textbf{Bdg\arabic*},leftmargin=1.3cm]
\item If $L_i=G|_{\re L_i}$ has a boundary-distant vertex, then any bridge in $L_i$ intersects some object of the eastern or western bridge of $R_i$.\label{it:bridge_willdivert}
\item For any $i\in [t-1]$ at least one among $L_i$ and $L_{i+1}$ contains a boundary-distant vertex.\label{it:bridge_lensalter}
\end{enumerate}
 
\begin{lemma}\label{lem:bridgecount}
If $\re R$ is decomposed with $t-1$ bridges into sub-regions $\re L_1,\dots,\re L_t$ as described above, then the total length of the bridges separating them is $O(t)=O\left(\frac{|\wf R|}{\sqrt{\lambda}}\right) = O(\eps^2|\So(\wf R)|)$.
\end{lemma}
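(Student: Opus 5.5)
Each bridge has length at most $7$, so the total bridge length is $O(t)$. It therefore suffices to show $t=O(|\wf R|/\sqrt\lambda)$. Since $\lambda=1/\eps^{7.5}$ gives $\sqrt\lambda=\eps^{-3.75}$, and $|\wf R|=|\No(\wf R)|+|\So(\wf R)|\leq (2+\eps)|\So(\wf R)|$ by \ref{it:sk_approximatepaths}, this yields $O(\eps^{3.75}|\So(\wf R)|)\subseteq O(\eps^2|\So(\wf R)|)$. The north bound holds because $\No(R)$ is a shortest path between the common endpoints and $\So(R)$ is a path between them; the one-edge slack at the endpoints is harmless.

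By \ref{it:bridge_lensalter}, at least $\lfloor t/2\rfloor$ of the regions $\re L_i$ contain a boundary-distant vertex, so it is enough to bound the number $t'$ of such regions. Fix such an $L_i$ with boundary-distant vertex $v$. Its distance to $\So(L_i)\cup\No(L_i)$ exceeds $\lambda$, but it has some distance to the perimeter, which also contains the two bridge paths. My plan is to show that $|\So(L_i)|+|\No(L_i)|=\Omega(\sqrt\lambda)$. The regions are interior-disjoint and their south/north parts partition $\So(\wf R)\cup\No(\wf R)$ up to $O(1)$ overlap per bridge. Summing then gives $t'\sqrt\lambda=O(|\wf R|)+O(t)$, and hence $t'=O(|\wf R|/\sqrt\lambda)$.

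To get the per-region lower bound, suppose for contradiction that $|\So(L_i)|+|\No(L_i)|<c\sqrt\lambda$. I would use a packing/area argument in the plane. All objects of $L_i$ reachable within distance $\lambda$ from $v$ avoid the north and south boundaries, so the set $N_{L_i}(v,\lambda)$ is confined to the region bounded by the two bridges (each of Euclidean diameter $O(\alpha)$) and the short south and north arcs (Euclidean length $O(\alpha c\sqrt\lambda)$). The total boundary of $\re L_i$ therefore has Euclidean length $O(\alpha c\sqrt\lambda)$, so by the isoperimetric inequality $\re L_i$ has area $O(\alpha^2c^2\lambda)$.

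On the other hand, take a shortest path in $L_i$ from $v$ to the perimeter; it ends at a bridge. The first $\lambda$ vertices of this path lie in distinct cells at pairwise distance $\ge 3$ apart in the graph when taken every third vertex, so $\Omega(\lambda)$ disjoint unit disks lie inside $\re L_i$. Those disks are only cut by the boundary near the end of the path, which gives area $\Omega(\lambda)$. This is consistent, so area alone fails. Instead I would use a diameter argument. The path from $v$ to the bridge has length $>\lambda$ while staying in $L_i$, and any shortest path between vertices at Euclidean distance $D$ has length $O(D)$ inside a simply-connected region, by the Lipschitz machinery of \Cref{thm:Lipschitz_wireframe} applied locally. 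Hence $v$ is at Euclidean distance $\Omega(\lambda)$ from both bridges. The region $\re L_i$ must contain a curve of Euclidean length $\Omega(\lambda)$ from a bridge to $v$, while being bounded by arcs of total length $o(\lambda)$, and this is the contradiction. Passing from graph distance to Euclidean distance inside a restricted, non-fat region is the main obstacle. I would try to handle it by noting that objects within distance $\lambda$ of $v$ are full original objects (boundary-distant), so the standard $\alpha$-standard packing bounds apply to them. That yields the stronger bound $\Omega(\lambda)$, which is at least $\sqrt\lambda$.
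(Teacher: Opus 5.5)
Your reduction matches the paper's: halve using \ref{it:bridge_lensalter}, show that each region $\re L_i$ containing a boundary-distant vertex has $\Omega(\sqrt\lambda)$ vertices on $\No(\wf L_i)\cup\So(\wf L_i)$, sum, and use $|\No(\wf R)|\le|\So(\wf R)|$. The gap is in that per-region bound, and you created it by discarding an argument that works.

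An area upper bound of $O(\alpha^2c^2\lambda)$ against a lower bound of $c_0\lambda$ is \emph{not} consistent. The constant $c$ is yours to choose, and taking $c$ small compared to $\sqrt{c_0}/\alpha$ gives the contradiction. This is exactly the paper's proof, stated directly:
\begin{itemize}
\item Take a shortest path from $v$ to $V(\No(L_i)\cup\So(L_i))$; it has length $>\lambda$ by definition of boundary-distant.
\item Alternate vertices of this path give $\Omega(\lambda)$ pairwise disjoint objects of area $\Omega(1)$, so $\re L_i$ has area $\Omega(\lambda)$.
\item By the isodiametric inequality, $\re L_i$ then has diameter $\Omega(\sqrt\lambda)$.
\item Each object of $\Perim(L_i)$ has diameter at most $2\alpha$, and the two bridge paths have diameter $O(\alpha)$. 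Hence $\Omega(\sqrt\lambda)$ vertices must lie on $\No(\wf L_i)\cup\So(\wf L_i)$.
\end{itemize}
Work with the \emph{diameter} of $\bd\re L_i$, not its Euclidean length. Wireframe edge curves can wiggle arbitrarily inside their parent objects, so only the diameter is controlled by the number of covering objects. If you instead take the shortest path to the whole perimeter, you must also note that bridge objects are within $O(1)$ of $\No(L_i)\cup\So(L_i)$ along $\Perim(L_i)$. That is what keeps the path length at $\lambda-O(1)$.

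The ``diameter argument'' you substitute does not work. Its key step claims that, inside a simply connected region, two objects at Euclidean distance $D$ are at graph distance $O(D)$. This is false. Only the reverse direction holds: points of two objects at graph distance $k$ are at Euclidean distance at most $(k+1)\alpha$. Also, \Cref{thm:Lipschitz_wireframe} compares $G$ with a planar wireframe, not with the Euclidean metric.

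Here is a concrete counterexample. Take a chain of $\Theta(\lambda)$ objects coiled as a spiral with well-separated turns inside a disk of radius $O(\sqrt\lambda)$, attached to $\So$ only at its outer end. Its inner end is boundary-distant. Yet that end is at Euclidean distance $O(\sqrt\lambda)$ from both bridges, and the boundary needs only $O(\sqrt\lambda)$ objects. So the ``stronger bound $\Omega(\lambda)$'' you arrive at is false in general. $\Omega(\sqrt\lambda)$ is the right order, which is why the lemma is stated with $|\wf R|/\sqrt\lambda$.
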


\begin{proof}
Let us denote by $\No(\wf L_i)$ the portion of $\wf L_i$ that falls on $\No(\wf R)$. We define $\So(\wf L_i)$ analogously. Let $L_i=G|_{\re L_i}$, and the  paths of $\Perim(L_i)$ corresponding to $\No(\wf L_i)$ and $\So(\wf L_i)$ are denoted by $\No(L_i)$ and $\So(L_i)$, respectively.

If there are no boundary-distant vertices in $\re R$, then $t=1$ and the claim holds; suppose now that this is not the case.
Notice that there are at least $\lfloor t/2 \rfloor$ regions among $\re L_i$ that have boundary-distant vertices. We claim that each region $\re L_i$ with a boundary-distant vertex has two points on its boundary of (Euclidean) distance at least $\Omega(\sqrt{\lambda})$. Indeed, if $v$ is boundary-distant in $L_i$, then let $P$ be a shortest path from $v$ to $V(\No(L_i)\cup\So(L_i))$. Then the vertices that are at an odd integer distance from $v$ among the internal vertices of $P$ are pairwise disjoint, thus by fatness they must cover an area of at least $\Omega(\lambda)$. All of these objects are contained in the bounded area defined by the curve $\bd \re L_i$, thus by the isodiametric inequality of the Euclidean plane we have that $\re L_i$ has diameter at least $\Omega(\sqrt{\lambda})$, as claimed.

Recall that objects of $\Perim(L_i)$ are subpolygons of objects of $V(G)$, thus each object has diameter at most $2\alpha=O(1)$. Moreover, the bridges on the east and west of $L_i$ contribute at most $O(1)$ edges to $\wf L_i$, so the diameter of a bridge path is at most $\beta = O(\alpha)=O(1)$.
Since the vertices of $\No(\wf L_i)$ and $\So(\wf L_i)$ correspond to disjoint objects of $\Perim(L_i)$, each of diameter at most $2\alpha$, and by the definition of wireframes these objects cover $\No(\wf L_i)\cup \So(\wf L_i)$, we have that there must be at least $\frac{\diam(\bd L_i)-2\beta}{2\alpha}=\Omega(\sqrt{\lambda})$ vertices on $\No(\wf L_i) \cup \So(wf(L_i))$. Let $I_{bdv}$ denote the set of indices $i$ where $\re L_i$ has a boundary-distant vertex. We have that $|I_{bdv}|\geq \lfloor t/2 \rfloor$. The total bridge length is $O(t)$, while $|\wf R|$ can be bounded as follows:
\[|\wf R|=|\No(\wf R)\cup \So(\wf R)|>\sum_{i\in I_{bdv}}|\No(\wf L_i)\cup \So(\wf L_i)|=\lfloor t/2 \rfloor\cdot \Omega(\sqrt{\lambda}).\]
Thus we have $O(t)=O\left(\frac{|E(\wf R)|}{\sqrt{\lambda}}\right)$.

Recall that $|E(\wf R)|=|\No(\wf R)|+|\So(\wf R)|$ and denoting $R=G|_{\re R}$ we have $|\No(\wf R)|=|\No(R)|$ and $|\So(\wf R)|=|\So(R)|$ by definition. Moreover, $\No(R)$ is a shortest path in~$R$ by~\ref{it:sk_approximatepaths}, while $\So(R)$ is a path between the same endpoints in $R$. Therefore $|\So(\wf R)|\geq |\No(\wf R)|$, and consequently $E(|\So(\wf R)|)\geq |E(\wf R)|/2$. Now substituting $\lambda=1/\eps^{7.5}$ yields that
\[O(t)=O\left(\frac{|\wf R|}{\sqrt{\lambda}}\right)=O\left(\eps^{3.75}|\So(\wf R)|\right).\qedhere\]
\end{proof}

\subsection{Region slicing with columns}

Consider now some region $\re L=\re L_i$ with the wireframe $\wf L$ surrounding $\re L$, which consists of a part of $\No(\wf R)$ denoted by $\No(\wf L)$, a part of $\So(\wf R)$ denoted by $\So(\wf L)$, and the western and eastern bridge paths denoted by $\We(\wf L)$ and $\Ea(\wf L)$, respectively, so that the anticlockwise\footnote{when $\re L$ is the unbounded face, then clockwise} order traverses the paths in the order $\So(\wf L),\Ea(\wf L),\No(\wf L),\We(\wf L)$. We define $L=G|_{\re L}$ and define the subpaths $\So(L),\Ea(L),\No(L),\We(L)$ from $\Perim(L)$ as the paths corresponding to $\So(\wf L)$, $\Ea(\wf L)$, $\No(\wf L)$, $\We(\wf L)$, respectively. Note that $\No(L),\Ea(L),\We(L)$ are in fact shortest paths by \ref{it:sk_approximatepaths} and by the definition of bridges. A region $\re L$ that contains a boundary-distant vertex is called a \emph{lens}, but first we will slice non-lens regions into small bricks as follows.

\subparagraph*{Slicing non-lens regions into small bricks}

Consider a region $\re L$ with graph $L=G|_{\re L}$ that does not contain any boundary-distant vertex. Recall that a brick region $\re B$ with a compass is small if $B=G|_{\re B}$ has no vertex $v$ where $\dist_B(v,\Perim(B))>\lambda$ and for each $v\in V(B)$ where $\dist_B(v,\So(B))\leq \lambda$ and $\dist_B(v,\No(B))\leq \lambda$, it holds that $\dist_B(v,\{v^B_{SE},v^B_{SW}\})\leq 2\lambda/\eps^2$. We prove the following lemma.

\begin{lemma}\label{lem:slice_into_small_bricks}
Let $\re L$ be a region where for each $v\in L$ it holds that $\dist_L(v,\Perim(L))\leq \lambda$. Then there is a collection of vertex-disjoint wireframe paths $\pl P_0=\We(\wf L), \pl P_1,\pl P_2,\dots, \pl P_k=\Ea(\wf L)$ from $\So(\wf L)$ to $\No(\wf L)$, indexed in their West-to-East order such that the brick regions $\re B_i$ enclosed between $\pl P_{i-1}$ and $\pl P_i$ (for $i=1$ to $k-1$) can be assigned a compass with the following properties:
\begin{itemize}
\item $\So(\wf L)$ is the concatenation of $\So(\wf B_1),\dots,\So(\wf B_k)$, and $\No(\wf L)$ is the concatenation of $\No(\wf B_1),\dots,\No(\wf B_k)$.
\item $\We(\wf B_0)=\We(\wf L)$, $\Ea(\wf B_k)=\Ea(\wf L)$, and for each $i\in [k]$ we have $\Ea(\wf B_{i-1}) = \We(\wf B_i)$.
\item  Each $\re B_i$ is small.
\item The paths $\pl P_i$ for $i=1$ to $k-1$ have total length $O(\eps^2|\So(L)|)$.
\end{itemize}
Moreover, given $G$ and $\wf L$, the paths $\pl P_i$ (and the bricks $B_i$) can be computed in polynomial time.
\end{lemma}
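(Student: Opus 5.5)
We slice $\re L$ greedily from west to east with short paths, each joining $\So(\wf L)$ to $\No(\wf L)$. Since no vertex of $L$ is boundary-distant, every $v\in V(L)$ satisfies $\dist_L(v,\So(L)\cup\No(L))\le\lambda$. The basic device is a \emph{cut path}: a shortest path in $L$ of length at most $2\lambda+O(1)$ between some $u\in V(\So(L))$ and some $u'\in V(\No(L))$, realised as a wireframe path via \Cref{lem:pathcutcomplexity}. That lemma keeps the path inside the current region, makes it touch the boundary only at its endpoints, and controls the growth of the boundary complexity, so the resulting regions remain well defined.

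The construction is as follows.
\begin{enumerate}
\item Start from $\pl P_0=\We(\wf L)$ and walk east along $\So(L)$.
\item Given $\pl P_{i-1}$ with southern endpoint $s_{i-1}$, look at the vertices $v$ of the current eastern remainder region that lie within distance $\lambda$ of both $\So$ and $\No$; call these the \emph{thin} vertices. Choose $\pl P_i$ to be a cut path through a thin vertex whose southern endpoint lies at distance about $\lambda/\eps^2$ east of $s_{i-1}$ along $\So(L)$.
\item If no thin vertex exists in the window $\So(L)[s_{i-1}+\lambda/\eps^2,\ s_{i-1}+2\lambda/\eps^2]$, pick the first suitable one beyond it instead, or stop when $\Ea(\wf L)$ is reached.
\item Each cut path separates the remainder region; restricting to the eastern part with \Cref{lem:restrictionmetric}(i) shows that distances do not shrink, so the construction iterates.
\item To make the paths vertex-disjoint, shift the window slightly, or discard a candidate that meets $\pl P_{i-1}$ and take the next one. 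Length at most $2\lambda$ and object diameter at most $\alpha$ make this possible.
\item Compasses are assigned as follows. $\So(\wf B_i)$ and $\No(\wf B_i)$ are the pieces of $\So(\wf L)$ and $\No(\wf L)$ between consecutive cuts. $\We^|(\wf B_i)$ and $\Ea^|(\wf B_i)$ are the cut paths, and $\We^-$, $\Ea^-$ are trivial.
\end{enumerate}

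Smallness of each $\re B_i$ follows in two parts. The first condition, no vertex at distance more than $\lambda$ from $\Perim(B_i)$, holds because every vertex of $B_i$ was within $\lambda$ of $\So(L)\cup\No(L)$ in $L$. The shortest such route either stays in $\re B_i$, or crosses a cut path, and then the crossing vertex is already on $\Perim(B_i)$; this is the standard crossing argument. The second condition concerns thin vertices. Every thin vertex of $B_i$ lies geometrically near the southern stretch of the brick, and that stretch has length at most about $2\lambda/\eps^2$ because cuts are placed at least that often wherever thin vertices exist. So thin vertices are within $2\lambda/\eps^2$ of $v_{SE}$ or $v_{SW}$. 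One subtlety is that a brick wider than the window is allowed only when it contains no thin vertices away from its corners; the greedy rule is what guarantees this.

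The length bound comes from counting cuts. Each cut has length $O(\lambda)$, and consecutive southern endpoints are at least $\lambda/\eps^2$ apart along $\So(L)$. Hence there are $O(\eps^2|\So(L)|/\lambda)$ cuts, with total length $O(\eps^2|\So(L)|)$. If $|\So(L)|<\lambda/\eps^2$, no internal cut is made, and $\re L$ itself must be verified small using the bridge structure; \ref{it:bridge_willdivert} and the bounded bridge lengths should suffice here. Polynomial running time is immediate from the shortest path computations and \Cref{lem:pathcutcomplexity}.

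The step I expect to be the main obstacle is the smallness proof for thin vertices whose shortest paths to $\So$ and $\No$ leave the brick, together with the vertex-disjointness of the cuts. In intersection graphs, cut paths can be crossed without sharing objects, so distances inside $B_i$ may exceed distances in $L$ by additive constants. I would first try absorbing these losses by using threshold $\lambda-O(1)$ when defining the cuts, which leaves slack in both smallness conditions.
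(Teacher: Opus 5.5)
\textbf{There is a genuine gap, at the step you flag yourself: the second smallness condition.}

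Your argument works for bricks created inside a window, once it is phrased with $B_i$-distances. A vertex $v$ that is thin in $B_i$ reaches $\So(B_i)$ in at most $\lambda$ steps and then a corner in at most $|\So(B_i)|/2$ further steps. However, your step 3 also produces bricks whose southern stretch is arbitrarily long. For these you only assert that ``the greedy rule guarantees'' there is no thin vertex far from the corners. The reason you give, that cuts are placed that often wherever thin vertices exist, is not what your rule ensures. A long brick can contain vertices that are thin in $B_i$ whose nearest southern point in the remainder region lies outside the window.

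The quantity to bound is $\dist_{B_i}(v,\{v^{B_i}_{SW},v^{B_i}_{SE}\})$, and ``geometrically near'' says nothing about it. Every witness you have, such as a shortest path from $v$ to its nearest southern vertex, lives in $L$ or in your remainder region. Such a path may leave $\re B_i$ across the eastern cut. The resulting loss is of order $\lambda$, not an additive constant, so lowering the threshold to $\lambda-O(1)$ does not address it.

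\textbf{The missing ingredient is \Cref{cl:LdisttoBdist}.} Follow a shortest $L$-path from $v$ toward a target beyond the cut $\pl P_i$ until the first object $w$ that meets $\gamma(\pl P_i)$. A subpolygon $w'\in V(B_i)$ of $w$ is reached inside $B_i$ in the same number of steps. This $w'$ is adjacent to an object of the side path, and walking along that side, a shortest path of length at most $2\lambda$, reaches $v^{B_i}_{SE}$. So $B_i$-distance to the appropriate corner is at most the $L$-distance to a starting point beyond that cut plus $2\lambda+2$. This loss is absorbed by the slack between $\lambda/\eps^2+\lambda$ and $2\lambda/\eps^2$.

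\textbf{How the paper's greedy uses this.} The paper's greedy is simpler than your sweep. It starts from $\{\We(L),\Ea(L)\}$. It then repeatedly picks any thin vertex of $L$ at $L$-distance at least $\lambda/\eps^2+\lambda$ from all current starting points, and adds a shortest path in $L$ from that vertex's nearest $\So$-vertex to its nearest $\No$-vertex. This gives three things:
\begin{itemize}
\item the invariant that every thin vertex of $L$ is within $\lambda/\eps^2+\lambda$ of some starting point, from which smallness follows directly via the claim;
\item object-disjointness for free, since two paths of length at most $2\lambda$ whose starts are $\lambda/\eps^2$ apart cannot meet, so no window shifting is needed;
\item no need for iterated restriction or \Cref{lem:pathcutcomplexity}: \Cref{cor:shortestpathframe} applied to the pairwise disjoint paths suffices.
\end{itemize}

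\textbf{Repairing your sweep.} With the claim, your sweep can also be fixed. Split into cases according to whether the nearest southern point, in the remainder region, of a thin vertex of a long brick lies before the window or at or beyond $s_i$. As written, that step is not proved.

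\textbf{A smaller point.} The case $|\So(L)|<\lambda/\eps^2$ does not involve \ref{it:bridge_willdivert}, which concerns lenses. There smallness is immediate, since any thin vertex reaches a corner within $\lambda+|\So(L)|/2$ steps.
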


\begin{proof}
We will greedily build a collection of paths $\cL\cP$ connecting some vertices of $\So(L)$ to some vertex of $\No(L)$; they will later be used to slice $\re L$ into small bricks. Initially set $\cL\cP=\{\We(L),\Ea(L)\}$, and for a path $P\in \cL\cP$ let $s_P$ denote its starting object on $\So(L)$ and $t_P$ its ending object on $\No(L)$. We will maintain throughout the procedure that (i) the objects of distinct paths of $\cL\cP$ are pairwise disjoint and their staring points on $\So(L)$ have distance at least $\lambda/\eps^2$, and (\textit{ii}) for each $P\in \cL\cP$ we have $|P|\leq 2\lambda$. Both conditions hold initially: recall that $\We(L)$ and $\Ea(L)$ are built from bridges and thus have length at most $9$.

Suppose that $v\in V(L)$ is within distance $\lambda$ from both $\So(L)$ and $\No(L)$, but its distance is at least $\lambda/\eps^2 + \lambda$ from every vertex $\{s_P\mid P\in \cL\cP\}$. Let $s\in \So(L)$ and $t\in \No(L)$ be endpoints of shortest paths from $v$ to $\So(L)$ and $\No(L)$, respectively. Consequently, $\dist_L(v,s)\leq \lambda$ and $\dist_L(s,t)\leq 2\lambda$. Let $Q$ be a shortest path in $L$ connecting $s_Q=s$ to $t_Q=t$. Notice that $s_Q$ is within distance $\lambda$ from $v$, and $\dist_L(s_Q,t_Q)\leq 2\lambda$ implies that $|Q|\leq 2\lambda$. Let $P\in \cL\cP$ be arbitrary. Then by the triangle inequality
\[\dist_L(s_Q,s_P)\geq \dist_L(s_P,v) - \dist_L(v,s_Q) \geq \lambda/\eps^2 +\lambda - \lambda = \lambda/\eps^2.\]
Notice that since both $P$ and $Q$ are of length at most $2\lambda$ and their starting points have distance at least $\lambda/\eps^2$, they must be disjoint.
Thus the objects of $Q$ are disjoint from the objects in paths of $\cL\cP$, and we can insert $Q$ into $\cL\cP$ and maintain properties (i) and (\textit{ii}).

The above greedy procedure can be applied on a given $v\in V(L)$ at most once and can be executed in polynomial time. Since $L$ has less than $n$ vertices, the procedure terminates in polynomial time. Let $\cL\cP$ denote the obtained path collection. Let $k=|\cL\cP|-1$ and index the paths of $\cL\cP$ from $0$ to $k$ according to their starting points in $\So(L)$ in West-to-East order. We apply~\Cref{cor:shortestpathframe} on each path $P_i$ and the fixed points $\dot s_{P_i}\in \So(\wf L)$ and $\dot t_{P_i}\in \No(\wf L)$ to gain pairwise disjoint wireframe paths $\pl P_i$ that can extend the wireframe $\wf L$.

Given the paths $\pl P_i$ we can define the regions $\re B_i$ for $i=1,\dots,k$ Define
\begin{align*}
\So(\wf B_i)&:=\So(\wf L)\cap \wf B_i,\\
\No(\wf B_i)&:=\No(\wf L)\cap \wf B_i,\\
\We(\wf B_i)&:=\We^|(\wf B_i):=\pl P_{i-1},\\
\Ea(\wf B_i)&:=\Ea^|(\wf B_i):=\pl P_i.
\end{align*}
The path $\We^-(\wf B_i)$ is defined to be a single-vertex edgeless path given by the shared vertex of $\So(\wf B_i)$ and $\We(\wf B_i)$.
We define $\Ea^-(\wf B_i)$ analogously as the shared vertex of $\So(\wf B_i)$ and $\Ea(\wf B_i)$.

The total length of the paths is at most $(k-1)\cdot 2\lambda<2k\lambda$. On the other hand the distance between $s_{P_i}$ and $s_{P_{i+1}}$ is at least $\lambda/\eps^2$ fore each $i=0,\dots,k-1$, thus the triangle inequality gives
\[|\So(L)|\geq \sum_{i=0}^{k-1} \dist_L(s_{P_i},s_{P_{i+1}})\geq k\cdot \lambda/\eps^2.\] 
Consequently, the total length of the paths in question is at most $2k\lambda \leq 2\eps^2|\So(L)|$, as required.
It remains to show that the region $\re B_i$ is small. 

We will start with the following claim.
\begin{claim}\label{cl:LdisttoBdist}
For any $v \in V(B_i)$ and any $j\in \{0,1,\dots,k\}$ we have $\dist_{B_i}(v,v^{B_i}_{SW})\leq \dist_{L}(v_L,s_{P_{j}}) + 2\lambda + 2$ if $j<i$ and $\dist_{B_i}(v,v^{B_i}_{SE})\leq \dist_{L}(v_L,s_{P_{j}}) + 2\lambda + 2$ if $j\geq i$, where $v_L$ is the object of $V(L)$ that creates the subpolygon $v$ when restricted to $\re B_i$.
\end{claim}

\begin{claimproof}
Recall that $v^{B_i}_{SE}$ is a subpolygon of $s_{P_{i}}$ and $v^{B_i}_{SW}$ is a subpolygon of $s_{P_{i-1}}$. We will show that $\dist_{B}(v,v^{B_i}_{SE})\leq \dist_{L}(v,s_{P_j}) +2\lambda + 1$ when $j\geq i$; the other statement can be proven analogously.

Let $Q$ be a shortest path from $v$ to $s_{P_j}$ in $L$, and let $w$ be the first vertex of $Q$ (starting from $v$) whose object intersects the curve $\gamma(\pl P_i)$. (Such an object exists as $j\geq i$ implies that either $j=i$ and $s_{P_j}$ thus intersects the curve, or $j>i$ and the curve separates $v$ from $s_{P_j}$ inside $\re L$.

Then there is some $w'\in V(B_i)$ such that $w'$ is a  subpolygon of $w$ and there is a path $Q_B$ from $v$ to $w'$ of length $|Q[v,w]|$. Moreover, $w'$ is a neighbor of some object $x'\in V(\We(B_i))$. Let $x\in V(L)$ be the polygon that $x'$ is a subpolygon of. Note that $\We(B_i)$ is a shortest path in $B_i$ as $P_i$ is a shortest path in $L$ and by \Cref{lem:restrictionmetric}(\textit{ii}) with $c=1$ implies the same for $\We(B_i)$ in $B_i$.
Consequently, \[\dist_{B_i}(x',v^{B_i}_{SE})=|\We(B_i)[x',v^{B_i}_{SE}]|=|P_i[x,s_{P_i}]|=\dist_L(x,s_{P_i}).\]
Consider the concatenation of $Q[v,w']$, the edge $w'x'$, and the path $\We(B_i)[x',v^{B_i}_{SE}]$.
Thus by the triangle inequality and the bound $|P_i|=|\We(B_i)|\leq 2\lambda$ we have
\begin{align*}
\dist_{B}(v,v^{B_i}_{SE})&\leq |Q[v,w']| + 1 + |\We(B_i)[x',v^{B_i}_{SE}]|\\
&=|Q| + 1 + |\We(B_i)|\\
& = \dist_{L}(v,s_{P_j}) +2\lambda + 1,
\end{align*}
which concludes the proof of the claim.
\end{claimproof}

To show that the region $\re B_i$ is small, notice first that $B_i$ has no vertex whose distance from $\Perim(B_i)$ is more than $\lambda$, as such a vertex would have distance more than $\lambda$ from $\Perim(L)$.

To show the second condition of smallness, let $v\in V(B_i)$ be any vertex such that 
\[\dist_{B_i}(v,\So(B_i))\leq \lambda \quad \text{and} \quad \dist_{B_i}(v,\No(B_i))\leq \lambda.\]
We will show that $\dist_B(v,\{v^B_{SE},v^B_{SW}\})\leq 2\lambda/\eps^2$. Let $v_L\in V(L)$ denote the object such that $v$ is a subpolygon of $v_L$. Since paths in $B_i$ can be realized with (potentially shorter) paths in $L$ and $\So(B_i)$ is a subpath of $\So(L)$, we have that  $\dist_L(v_L,\So(L))\leq \dist_{B_i}(v,\So(B_i))\leq \lambda$, and similarly $\dist_L(v_L,\No(L))\leq \lambda$. Moreover, $v$ was not picked in the the greedy procedure, thus $\dist_L(v,s_{P_j})< \lambda/\eps^2+\lambda $ for some $j\in \{0,1,\dots,k\}$. Thus \Cref{cl:LdisttoBdist} implies that \[\min\big(\dist_{B_i}(v,v^{B_i}_{SW}), \dist_{B_i}(v,v^{B_i}_{SE})\big)< \lambda/\eps^2 +2\lambda +1<2\lambda/\eps^2,\] since $\lambda>10$ and $\eps<1/2$.
\end{proof}

\subparagraph*{Columns and distinguished columns in lens regions}

Consider a lens region $\re L$ and the corresponding graph $L$.
We define a set of vertices $s^0_i$ in $\So(L)$ to serve as the starting points of columns. Let $C_0$ be the path $\We(L)$ and let $s^0_0\in V(\So(L))$ its starting vertex. Then for $i=1,2,\dots,$ let $s^0_i$ be the earliest vertex on $\So(L)$ (in west-to-east order) where
\[\dist_{\So(L)}(s^0_{i-1},s^0_i) > \eps(\dist_{L}(s^0_i,\No(L)).\]
Let $C^0_i$ be a shortest path from $s^0_i$ to $\No(L)$ in $L$. 
When we get to the end of $\So(L)$, notice that we will add the path $\Ea(L)$ as the final column $C^0_i$. This is because $\Ea(L)$ is either of length $0$ or it is a bridge of length at most $10$, thus when $s^0_i$ is the starting point of $\Ea(L)$, then $1>\eps(\dist_{L}(s^0_i,\No(L))$. We call these paths the \emph{columns}.

Now let us consider the column set $\cC^0_j=\{C^0_j,C^0_{j+\kappa},C^0_{j+2\kappa},\dots\}$ for $j\in [\kappa]$ where $\kappa=\kappa(\eps)$ will be defined later. Let $j^*\in [\kappa]$ be the index that minimizes the total length of the columns in $\cC^0_{j^*}$, i.e., minimizes $\sum_t |C^0_{j^*+(t-1)\kappa}|$. Notice that the columns in $\cC^0_{j^*}$ have total length $O(|\wf L|/(\eps \kappa))$. We call the columns of $\cC^0_{j^*}$ \emph{distinguished}.

\begin{lemma}\label{lem:supercollen}
The total number of edges in the union of all distinguished columns in the lens region $\re L$ is $O(\eps^2 |\So(\wf L)|)$.
\end{lemma}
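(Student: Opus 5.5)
The plan is the standard Klein/Borradaile charging argument, applied to all columns at once and followed by averaging over the $\kappa$ residue classes. First I bound the total length of all columns $C^0_i$ by $O(|\So(L)|/\eps)$. Then I pick the cheapest class, which costs $O(|\So(L)|/(\eps\kappa))$, and substitute $\kappa=1/\eps^3$.

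The central step is the total column length. Write $d_i:=\dist_L(s^0_i,\No(L))=|C^0_i|$ and $\sigma_i:=\dist_{\So(L)}(s^0_{i-1},s^0_i)=|\So(L)[s^0_{i-1},s^0_i]|$. Let $x$ be the predecessor of $s^0_i$ on $\So(L)$. Since $s^0_i$ is the earliest vertex violating the threshold, $x$ satisfies $\sigma_i-1\le \eps\,\dist_L(x,\No(L))$, and $\dist_L(x,\No(L))\le d_i+1$. Both bounds also hold with $x$ replaced by $s^0_i$ itself. Combining them gives a recurrence of the type used in \Cref{thm:singlesource}:
\[ d_i \le d_{i-1}+\sigma_i - \eps d_i + O(1). \]
Equivalently, $(1+\eps)d_i\le d_{i-1}+\sigma_i+O(1)$. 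To get this, I route from $s^0_i$ back along $\So(L)$ to $s^0_{i-1}$ and then along $C^0_{i-1}$, which gives $d_i\le \sigma_i+d_{i-1}$. To make the $\eps d_i$ gain appear, I would use the defining inequality $\sigma_i>\eps d_i$ together with the fact that columns are shortest paths to $\No(L)$. Summing and telescoping then gives $\eps\sum_i d_i\le d_0+\sum_i\sigma_i+O(\#\text{columns})$.

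There are two remaining issues: the count of columns and the $O(1)$ slack. If $d_i\ge 1/\eps^2$, the slack is absorbed. If $d_i$ is small, I charge $d_i<\sigma_i/\eps$ directly. The number of columns is at most $|\So(L)|$, and $\sum_i\sigma_i\le|\So(L)|$. Also $d_0=|\We(L)|=O(1)$, and a lens has $|\So(\wf L)|\ge\Omega(\sqrt\lambda)$ by the argument in \Cref{lem:bridgecount}, so $d_0$ is absorbed as well. Together these give $\sum_i|C^0_i|=O(|\So(L)|/\eps^2)$ in the worst case. With a more careful telescoping like the single-source argument, where $\sigma_i>\eps d_i$ is used as the gain, I expect $O(|\So(L)|/\eps)$.

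Averaging over $j\in[\kappa]$ then yields total distinguished length $O(|\So(L)|/(\eps\kappa))=O(\eps^2|\So(\wf L)|)$ with $\kappa=1/\eps^3$. If only the weaker $O(1/\eps^2)$ column bound is obtained, the same averaging still gives $O(\eps|\So|)$, and I would then recheck whether the paper intends $\kappa$ larger.

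The main obstacle is the first step: getting the recurrence with the $\eps d_i$ gain in the intersection-graph setting, where "walk along South then up column $i-1$" incurs $+O(1)$ jumps between paths. That is exactly where the crossing phenomenon bites. I would handle it by separating columns with $d_i=O(1/\eps)$, which I charge to $\sigma_i$ directly, from longer ones, where the additive constants are $O(\eps)$-relative.
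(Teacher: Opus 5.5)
Your outer structure matches the paper's: bound the total length of all columns, then average over the $\kappa$ residue classes with $\kappa=1/\eps^3$. The central step, however, does not go through as you set it up, which is why you end up hedging between $O(|\So(L)|/\eps)$ and $O(|\So(L)|/\eps^2)$; the latter is too weak for the lemma.

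The recurrence $(1+\eps)d_i\le d_{i-1}+\sigma_i+O(1)$ does not follow from the facts you cite. The predecessor bound $\sigma_i-1\le\eps\,\dist_L(x,\No(L))\le\eps(d_i+1)$ is an \emph{upper} bound $\sigma_i\le\eps d_i+O(1)$. Combined with $d_i\le d_{i-1}+\sigma_i$, it only gives $(1-\eps)d_i\le d_{i-1}+O(1)$, which limits how fast columns grow rather than producing an $\eps d_i$ gain. The recurrence also cannot hold in general. Whenever the distance to $\No(L)$ grows at unit rate along $\So(L)[s^0_{i-1},s^0_i]$, you have $d_i=d_{i-1}+\sigma_i$, and the recurrence would force $\eps d_i=O(1)$. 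This has nothing to do with crossings; it fails in planar graphs too. A valid version is $(1+\eps)d_i<d_i+\sigma_i\le d_{i-1}+2\sigma_i$, but it already invokes the defining inequality, and that inequality alone suffices.

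The missing observation is that no recurrence is needed. The defining condition of $s^0_i$ is $\sigma_i>\eps d_i$, that is, $|C^0_i|=d_i<\sigma_i/\eps$ for \emph{every} column, not only the short ones you charge this way. The stretches $\So(L)[s^0_{i-1},s^0_i]$ are consecutive along $\So(L)$, so $\sum_i\sigma_i\le|\So(L)|$. Hence $\sum_i|C^0_i|<|\So(L)|/\eps+|\We(L)|+|\Ea(L)|=O(|\So(\wf L)|/\eps)$, since the first and last columns are bridges of constant length. Each column is compared only with its own stretch of $\So(L)$, so no rerouting between columns occurs and the $+O(1)$ jump issue never arises. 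Averaging over $j\in[\kappa]$ then gives $O(|\So(\wf L)|/(\eps\kappa))=O(\eps^2|\So(\wf L)|)$, which is the paper's proof. There is no need to revisit the choice of $\kappa$.
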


\begin{proof}
First, we claim that the total length of all the columns defined in $L$ is at most $O(|\So(\wf L)|/\eps)$.

By definition we have that
\[|C^0_i|=\dist_{L}(s^0_i,\No(\wf L))<\frac{1}{\eps}\dist_{\So(\wf L)}(s^0_{i-1},s^0_i).\] Summing this for all $i$ and using the fact that the last column $\Ea(\wf L)$ is either empty or a bridge (and thus constant length), we get that
\[\sum_i |C^0_i| < \frac{1}{\eps}(|\So(\wf L)|) +  |\Ea(\wf L)|< \frac{1}{\eps}|\wf L|+O(1)=O(|\So(\wf L)|/\eps).\]

Thus the total length of the distinguished columns in $\cC_{j^*}$ is at most \[O\left(\frac{|\So(\wf L)|}{\eps\kappa}\right)=O(\eps^2|\So(\wf L)|).\qedhere\]
\end{proof}

\subparagraph*{Uncrossing the distinguished columns}

Unfortunately, the distinguished columns may still cross each other, so we will eventually need to define different \emph{uncrossed} columns. Let $C^1_i=C^0_{j^*+(i-1)\kappa}$, and we denote by $s^1_i$ the starting vertex of $C^1_i$. Let $\cC^1=\cC^0_{j^*}$, and let $S^1$ denote the corresponding set of starting points. Consider now the last subpath of $C^1_i$ obtained by removing all objects that intersect $\bd \re L$.
If this path has length at most $5$, then we say that $C^1_i$ is \emph{short}; otherwise it is \emph{long}. Moreover, if $C^1_i$ intersects $V(\We(L)) \cup V(\Ea(L))$, then it is called \emph{diverting}. In particular, any short column is a bridge, so \ref{it:bridge_willdivert} implies that all short columns are diverting. If $C^1_i$ is non-diverting (and thus long), then let $\Top(C^1_i)$ denote its final maximal subpath of length at least $5$ that is disjoint from $\bd L$.

Suppose that $C^1_i$ is non-diverting. Let $x_i$ and $y_i$ denote the first and last vertex on $\Top(C^1_i)$, respectively. Observe that $x_i$ has a neighbor in $C_i^1$ that intersects $\So(\re R)$, and $y_i$ has a neighbor that intersects $\No(\re R)$. (Note that $x_i$ and $y_i$ is undefined when $C^1_i$ is diverting.)

Now consider the $x_i$ vertices. For each $x_i$, there is a path of length at most two connecting $x_i$ to some vertex in $V(\So(L))$.

We introduce the object $x^\So_i$ that consists of the union of the objects on a shortest path connecting $x_i$ to $V(\So(L))$.
Crucially, the objects $x^\So_i$ are still fat with a slightly worse constant, and they are still similarly sized as the original objects with a worse constant (fatness and diameter both change by a factor of $3$).
Let $X^\So$ denote the objects created here. Similarly, we introduce the objects $Y^\No$ extending the $y_i$ vertices to objects $y^\No_i$ with shortest paths to $\No({\re L})$, having the same fatness and diameter ratio guarantees as $X^\So$.
We note that since $C^1_i$ is not short, we have that $x^\So_i$ cannot intersect any $y^\No_j$. Indeed, if such an intersection would occur, then $x_i$ would be at distance at most $5$ from $\No(L)$, contradicting the fact that the suffix of $C^1_i$ from $x^1_i$ is a shortest path to $\No(L)$ of length at least $|\Top(C^1_i)|+1 \geq 6$.

Let $\MIS(X^\So)$ be a fixed maximal independent set in $X^\So$ and let $\MIS(Y^\No)$ be a fixed maximal independent set in $Y^\No$. We denote by $I$ the set of indices $i$ such that $s^\So_i \in \MIS(X^\So)$. Let $\Cols_L$ denote the set of objects $v\in V(L)$ that appear on the top of columns from index set $I$, or more precisely, we set
\[\Cols_L:= \bigcup_{i\in I} \big(V(C^1_i[x_i,y_i]) \setminus \{x_i,y_i\}\big).\]
Recall that objects of $\Top(C^1_i)=C^1_i[x_i,y_i]$ are disjoint from $\bd \re L$, thus we have $\Cols_L\subset V(G)$. Let $\Cols = \Cols_L \cup \MIS(X^\So) \cup \MIS(Y^\No)$. Notice that $\Cols$ consists of similarly sized fat objects.
We apply \Cref{lem:polygontoholefree} to obtain a fair polygon collection $\Cols'$ that induces the same intersection graph, consists of similarly sized connected fat objects, is $\alpha'$-standard after scaling, and for any object $p\in \Cols$ we have that the corresponding object $p'\in \Cols'$ is a subpolygon of $p$. It follows that $\Cols'$ is an object frame of the intersection graph induced by $\Cols$.

We apply \Cref{thm:Lipschitz_wireframe} on $\Cols'$, which results in a wireframe $\pl D'$; notice that due to the subpolygon property and \Cref{obs:perturb} it can be perturbed into a wireframe of $\Cols$. For a vertex $v\in \Cols$ let $\dot v\in V(\pl D)$ be its picture under the Lipschitz embedding of the intersection graph of $\Cols$ to $\pl D$. Let $D^\No$ denote the vertices of $V(\pl D)$ that correspond to $\MIS(Y^\No)$, that is, $D^\No=\{\dot u \mid u\in \MIS(Y^\No)\}$. Consider a vertex $v\in V(C^1_i(x_i,y_i])\cup \{x^\So_i\}$, and let $v_L=v$ if $v\in V(C^1_i(x_i,y_i])$, and let $v_L$ be the parent of $\dot v$ in $\wf L$ when $v=x^\So_i$, i.e., the object of $\So(L)$ that is one of the objects making up $x^\So_i$. By the Lipschitz property, we have that
\begin{equation}\label{eq:selected_topcols}
\begin{aligned}
\dist_{\pl D}(\dot v, D^\No) &= O\Big(\dist_\Cols\big(v,\MIS(Y^\No)\big)\Big)\\
&= 
O\big(\dist_\Cols(v,y^\No_i)+1\big)\\
&= O\big(\dist_{L}(v_L,y_i)\big)\\
&= O\big(\dist_{L}(v_L,\No(L))\big),
\end{aligned}
\end{equation}
Here the second equality follows from the fact that when $y^\No_i\not\in \MIS(Y^\No)$ then it intersects some $y^\No_j\in \MIS(Y^\No)$. To show the third equality we need that any non-empty path on $\Cols$ can be followed with a path in $L$ that is at most constant times longer: indeed, any object of $\Cols$ is the union of at most $3$ objects of $V(L)$, so for any non-empty shortest path in $\Cols$ we can build a corresponding path in $L$ which is at most $7$ times longer.

Consider now shortest paths $Q^{\pl D}_i$ in ${\pl D}$ connecting each $\dot x^\So_i$ to $D^\No$ for each $i\in I$. Since ${\pl D}$ is planar, when the curve of $Q^{\pl D}_i$ intersects the curve of some $Q^{\pl D}_j$, then they must share a vertex. We can choose the first shared vertex $v$ of these paths, and replace the part of $Q^{\pl D}_j$ starting at $v$ with the part of $Q^{\pl D}_i$ starting at $v$. Since both $Q^{\pl D}_i$ and $Q^{\pl D}_j$ were shortest paths to $D^\No$, we have that the new $Q^{\pl D}_j$ has the same length as the original one. Using such modifications we can find paths $Q^{\pl D}_i$ that are pairwise \emph{non-crossing}, that is, when $Q^{\pl D}_i$ and $Q^{\pl D}_j$ share a vertex, then all edges after the first shared vertex are also shared. Let $\Gamma$ denote the planar curves given by the edges of $Q^{\pl D}_i$ for each $i\in I$. Notice that for each curve $\gamma\in \Gamma$ we have $\gamma\subset \inter(\re L)$.

Let ${\re L}^{XY}$ denote the unique connected component of the region 
\[{\re L}^{XY}:={\re L}\setminus \bigcup \{v \mid v\in \MIS(Y^\No) \cup \MIS(X^\So)\}\]
that is adjacent to each object in $\MIS(X^\So)$ and $\MIS(Y^\No)$; indeed such a component exists as $\MIS(X^\So)$ and $\MIS(Y^\No)$ all intersect $\bd \re L$ and the objects in $\MIS(X^\So) \cup \MIS(Y^\No)$ are pairwise disjoint: if some $x^\So\in \MIS(X^\So)$ would intersect some $y^\No\in \MIS(Y^\No)$, then they would form a bridge in $\re L$, contradicting the definition of $\re L$. Notice that when walking $\bd \re L^{XY}$ from the Southeast corner of ${\re L}^{XY}$ in a counter-clockwise direction, then the objects $\MIS(X^\So)$ appear along $\bd \re L^{XY}$ in the order of their indices, followed by the objects $\MIS(Y^\No)$ in reverse order of their indices, and each of these objects is encountered exactly once.

The region ${\re L}^{XY}$ cuts each $\gamma\in \Gamma$ into subcurves. Let $\mu_i$ denote the subcurve among these that connects some point of $\bd x_i^\So$ to some $\bd y^\No$ for some $y^\No\in \MIS(Y^\No)$ if such a subcurve of $\gamma$ exists. If there are multiple curves from a fixed $\bd x^\So_i$, then we dispose of all but one of them. Let $I_\mu\subseteq I$ denote the indices $i$ where $\mu_i$ is defined. We remark that when $j\in I\setminus I_\mu$ then the path $Q^{\pl D}_j$ intersects some $x^\So_i$ for some $i\in I_\mu$. \skb{esetleg abra?}

Notice that the curves $\mu_i$ for $i\in I_\mu$ are pairwise non-crossing (although some suffix of $\mu_i$ and $\mu_j$ may be equal), and when $\mu_i$ ends at $\bd y^\No_{i'}$ and $\mu_j$ ends at $\bd y^\No_{j'}$ for some $i<j$, then $i'\leq j'$ due to the order in which these boundaries appear along $\bd \re L^{XY}$. Let us now extend both ends of each $\mu_i$ inside $\bigcup \MIS(Y^\No)$ and inside $ \bigcup \MIS(X^\So)$ to $\bd \re L$.
Let $\dot v$ denote the vertex of the wireframe $\wf L$ when $v\in V(\So(L))\cup V(\No(L))$, that is, $\dot v \in \bd \re L$. When $\mu_i$ connects $x^\So$ to $y^\No$, then there is some $\dot u\in V(\So(\wf L))\cap x^\So$ and $\dot v\in V(\No(\wf L))\cap y^\No$ where we will connect the ends of $\mu_i$ to.
These extensions can be done without introducing any new intersections among $\mu_i$: that is, distinct $\mu_i$ remain disjoint except when extending some $\mu_i$ with a shared suffix inside some $y\in \MIS(Y^\No)$ that get extended with the same suffix inside $y$.
This is possible due to the fact that the polygons of $ \MIS(X^\So) \cup \MIS(Y^\No)$ are connected polygons, each of which covers some subcurve of $\bd \re L$.
Let $\mu^*_i$ denote the resulting curves (for each $i\in I_\mu$). In fact, in each of these extensions we can introduce at most two vertices (for the objects making up $x^\So$ or $y^\No$ to create a valid wireframe path $\pl Q^*_i$.
Since $\mu^*_i$ stays inside $\re L$, it corresponds to a walk of $L$ of length at most
\[|\pl Q^*_i|=|Q^{\pl D}_i|+O(1) =\dist_{\pl D}(\dot x^\So_i, D^\No) +O(1) =  O(\dist_{L}(x_i,y_i))\]
by \eqref{eq:selected_topcols}.

Next, for later convenience we need to ensure that the paths $\pl Q^*_i$ have no shared suffixes other than perhaps their endpoints. If some non-endpoint vertex $\dot v$ is shared between $b$ paths $(\pl Q^*_{i_1},\dots,\pl Q^*_{i_b})$ (whose indices $i_1,\dots,i_b$ are necessarily consecutive in $I^\mu$), then we replace $\dot v$ with $b$ points $\dot v_{i_1},\dots,\dot v_{i_b}$ in a small neighborhood of $\dot v$. An edge $\dot v \dot w$ that has been shared between $\pl Q^*_{i_1},\dots,\pl Q^*_{i_b}$ can now be represented by $b$ curves that are disjoint, but each of the new edges run in an infinitesimally small neighborhood of the original edge curve of $\dot v \dot w$. Finally, when $(\pl Q^*_{i_1},\dots,\pl Q^*_{i_b})$ share their last edge $\dot v \dot w$, then $\dot w$ is not duplicated, but it becomes the center of a star, with edges $\dot v_{i_j} \dot w$ for $j\in [b]$, where the edge curves go in a small neighborhood of the original edge curve of $\dot v \dot w$. One can verify that these modifications can be done while maintaining the wireframe property for each $\pl Q^*_i$.

Let $\pl{D\re L}$ denote the wireframe realized by $\wf L$ and the paths $\pl Q^*_i$ (as modified above to avoid shared suffixes), that is, the vertices are $V(\wf L)$, the new vertices inside the extensions, and the vertices of $\pl D$ that fall on $\pl Q^*_i$. The edges are realized by the drawing $\bd \re L \cup \bigcup_i \mu^*_i$.  We denote by $\pl Q^*_0$ and $\pl Q^*_{|\cC|+1}$ the paths $\We(\wf L)$ and $\Ea(\wf L)$, respectively. Consequently, the paths $\pl Q^*_i$ are defined only for the index set $\bar I_\mu := I_\mu \cup \{0,|\cC^1|+1\}$. For each $j\in \bar I_\mu$ we denote by $\dot v_j$ the southern starting vertex of $\pl Q^*_j$.

We say that a column $C^1_i$ \emph{leans west} if one of the following hold:
\begin{itemize}
\item $i\in I_\mu$ and $\dot v_i$ is to the west of (or equal to) $\dot s^1_i$
\item $i\in I\setminus I_\mu$ and the curve of $Q_i^{\pl D}$ intersects some $x^\So_j$ with $j\in I_\mu$ where $\dot v_j$ is to the west of (or equal to) $\dot s^1_i$,
\item $i\not \in I$, because $C^1_i$ is a non-diverting column where $x^\So_i$ intersects some $x^\So_{i'}$ for $i'\in I$, and the curve of $Q_{i'}^{\pl D}$ intersects\footnote{We note that it is possible that $i'\in I_\mu$ and consequently $j'=i'$, so $Q_{i'}^{\pl D}$ intersects $x^\So_{i'}$ by the fact that its starting point $\dot x^\So_{i'}$ is inside $x^\So_{i'}$.} some $x^\So_{j'}$ with $j'\in I_\mu$ where $\dot v_{j'}$ is to the west of (or equal to) $\dot s^1_i$,
\item $i\not \in I$ because $C^1_i$ is diverting, and it intersects some $v\in \We(L)$.
\end{itemize}
If $C^1_i$ does not lean west then it \emph{leans east}. It follows that if $C_1^i$ leans east, then one of the following hold:
\begin{itemize}
\item $i\in I_\mu$ and $\dot v_i$ is to the east of (and not equal to) $\dot s^1_i$
\item $i\in I\setminus I_\mu$ and the curve of $Q_i^{\pl D}$ intersects some $x^\So_j$ with $j\in I_\mu$ where $\dot v_j$ is to the east of (and not equal to) $\dot s^1_i$,
\item $i\not \in I$, because $C^1_i$ is a non-diverting column where $x^\So_i$ intersects some $x^\So_{i'}$ for $i'\in I$, and the curve of $Q_{i'}^{\pl D}$ intersects some $x^\So_{j'}$ with $j'\in I_\mu$ where $\dot v_{j'}$ is to the east of (and not equal to) $\dot s^1_i$,
\item $i\not \in I$ because $C^1_i$ is diverting, and it intersects some $v\in \Ea(L)$.
\end{itemize}

For a fixed $\dot s^1_i$, let $j,k\in \bar I_\mu$ be the consecutive indices of $\bar I_\mu$ where $\dot s^1_i\in \So(\wf L)[\dot v_j,\dot v_k]$ and $\dot s^1_i \neq \dot v_k$. We define the paths $\pl P_i$ according to increasing indices $i$. We set $\pl P_0=\We(\wf L)$. For general $i$, if $\pl P_{i-1}$ contains $\dot s^1_i$, then let $\pl P_i$ be the suffix of $\pl P_{i-1}$ starting at $\dot s^1_i$. Otherwise, let $\pl P_i$ be the concatenation $\So(\wf L)[\dot s^1_i,\dot v_j] \cup Q^*_j$ if $C^1_i$ leans west and $\So(\wf L)[\dot s^1_i,\dot v_k] \cup Q^*_k$ if it leans east. Finally, let $\pl P_{|\cC^1|+1}=\Ea(\wf L)$. Observe that under this definition two paths $\pl P_i, \pl P_j$ are pairwise non-crossing, but they may have overlaps along some subpath of $\So(\wf L)$. Moreover, $\pl P_i$ is a subpath of $\pl P_{i-1}$ if and only if $\dot s^1_i\in V(\pl P_{i-1})$ if and only if there is no face of $\pl{D\re L}$ between $\pl P_{i-1}$ and $\pl P_i$. We call the resulting paths $\pl P_i$ the \emph{uncrossed columns}.

\subparagraph*{Uncrossed column properties}

For the lens $\re L$ let $\prec$ denote the west-to-east order of vertices along $\So(\wf L)$ and $\So(L)$. 
We prove the following lemmas for later use.

\begin{lemma}\label{lem:leaning}
Suppose that $\dot s^1_i\not \in V(\pl P_{i-1})$, and let $\pl P_i = \So(\wf L)[\dot s^1_i,\dot v_j] \cup \pl Q^*_j$ for some $j\in \bar I_\mu$. Then there is a path $\tau$ in $L$ from $s^1_i \in \So(L)$ to $\No(L)$ such that
$\tau$ is $c$-approaching $\No(L)$ for some constant $c=O(1)$ and it intersects $\pl Q^*_j$.
\end{lemma}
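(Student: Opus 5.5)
The plan is to build $\tau$ by concatenating pieces whose $c$-approaching behaviour we already control. There is one case for each of the four clauses in the definition of leaning, and the cases share the same skeleton.

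In the main case $i\in I_\mu$ with $j=i$, the path $\pl Q^*_i$ starts inside $x^\So_i$. I take $\tau$ to be the original column $C^1_i$ itself. It is a shortest path from $s^1_i$ to $\No(L)$, so it is $1$-approaching. It intersects $\pl Q^*_i$ because the curve $\mu^*_i$ starts at $\dot x^\So_i$, which lies in $x^\So_i$. That object is the union of $x_i$ and a shortest path of length at most $2$ to $\So(L)$, all of whose objects lie in $C^1_i$ or are adjacent to $\So(L)$.

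In the remaining cases the relevant index $j\neq i$ is reached from $s^1_i$ by a chain. For $i\in I\setminus I_\mu$, the chain first follows $C^1_i$ up to $x_i$, then goes through $x^\So_i$ and along the curve of $Q^{\pl D}_i$ until it meets $x^\So_j$. For $i\notin I$ non-diverting, the chain takes an extra hop through $x^\So_{i'}$, which intersects $x^\So_i$. For diverting columns, $C^1_i$ already meets $\We(L)$ or $\Ea(L)$, and those paths equal $\pl Q^*_0$ or $\pl Q^*_{|\cC^1|+1}$, so $\tau=C^1_i$ works with $c=1$.

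For the chain cases, I convert the $\pl D$-path $Q^{\pl D}_i$ back into a walk in $L$ of the same order of length. I use the Lipschitz bound \eqref{eq:selected_topcols} together with the fact that each object of $\Cols$ is a union of at most $3$ objects of $L$. The resulting walk goes from $x_i$ to some object of $x^\So_j$, and $x^\So_j$ meets $\pl Q^*_j$. I then close $\tau$ by appending a shortest path from that endpoint to $\No(L)$, for instance the remainder of $C^1_j$ from $x_j$. Finally I shortcut repeated vertices so that $\tau$ is a genuine path.

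To verify that $\tau$ is $c$-approaching, I bound the distance from every vertex $x$ of $\tau$ to the end of $\tau$ by $O(\dist_L(x,\No(L)))$. The pieces are handled as follows.
\begin{itemize}
\item On the initial segment of $C^1_i$, this follows because $C^1_i$ is a shortest path to $\No(L)$ and the rest of $\tau$ has length $O(\dist_L(x_i,\No(L)))$ by \eqref{eq:selected_topcols}.
\item On the $Q^{\pl D}_i$ part, I use that a shortest path in $\pl D$ to $D^\No$ has the property that every suffix is shortest. Its remaining length is therefore $\dist_{\pl D}(\dot v,D^\No)=O(\dist_L(v_L,\No(L)))$, again by \eqref{eq:selected_topcols}, with the Lipschitz constant absorbed into $c$.
\item The final shortest path is $1$-approaching.
\item The $O(1)$ connector edges are absorbed because every vertex not on the final path has distance at least $1$ from $\No(L)$. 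Long columns sit at distance at least $5$.
\end{itemize}

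The main obstacle is this verification on the middle segment. Equation \eqref{eq:selected_topcols} gives a distance in $\Cols$, measured to $\MIS(Y^\No)$, whereas we need distance in $L$ to $\No(L)$. It must also hold for vertices of $\pl D$ that are not themselves column objects. I expect to handle this by noting that every vertex of $\pl D$ has a parent in $\Cols$ and that the wireframe distance dominates $\dist_\Cols$ by Theorem~\ref{thm:Lipschitz_wireframe}. I would also check that the appended tail from $x^\So_j$ to $\No(L)$ has length comparable to the $\pl D$-distance already bounded, which follows by reapplying \eqref{eq:selected_topcols} at $x^\So_j$.
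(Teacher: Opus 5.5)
Your case split, the concatenation scheme, and your check of the $c$-approaching property via \eqref{eq:selected_topcols} follow the paper's proof and are fine. The gap is the other half of the statement: you never actually show that $\tau$ meets $\pl Q^*_j$.

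In the main case, $\mu^*_i$ does not start at $\dot x^{\So}_i$. It is the piece of the drawing of $Q^{\pl D}_i$ inside $\re L^{XY}$, which begins somewhere on $\bd x^{\So}_i$, extended inside the region $x^{\So}_i$ down to $\dot v_i\in V(\So(\wf L))$. That region is the union of up to three objects of $L$, and only $x_i$ is guaranteed to lie on $C^1_i$. The middle object and the $\So(L)$-object of the connecting path need not lie on $C^1_i$, and being adjacent to $\So(L)$ is irrelevant. If $\mu^*_i$ runs only through those two objects, then $C^1_i$ and $\pl Q^*_i$ are two south-to-north connections that need not touch, so $\tau=C^1_i$ fails.

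The chain cases have the same defect: reaching ``some object of $x^{\So}_j$'' and then following $C^1_j$ from $x_j$ does not force a meeting with $\pl Q^*_j$. In the diverting case you show that $C^1_i$ meets $\We(L)$, i.e.\ the path of $\pl Q^*_0$. However, $j$ is the element of $\bar I_\mu$ immediately west of $\dot s^1_i$, which is in general an index of $I_\mu$, not $0$.

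The paper closes these gaps as follows.
\begin{itemize}
\item Diverting case. It keeps $\tau=C^1_i$ and argues by separation. The curve of $\pl Q^*_j$ joins $\So(\wf L)$ to $\No(\wf L)$ inside $\re L$ and separates $\dot s^1_i$ from $\We(\wf L)$, since $\dot v_j\preceq \dot s^1_i$ (symmetrically for $\Ea$). Hence the connected union of the objects of $C^1_i$ must cross it.
\item Remaining cases. It routes $\tau$ through $v_j=p(\dot v_j)$, the parent of the first vertex of $\pl Q^*_j$. This object is within $L$-distance $O(1)$ of $x_j$ because $\dot v_j\in x^{\So}_j$. In the main case this is a detour $x_i\to v_i\to x_i$. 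In the chain cases $\tau$ continues along the $L$-walk of $\pl Q^*_j$ itself, which is $c$-approaching by \eqref{eq:selected_topcols} applied to $Q^{\pl D}_j$, exactly as you argue for $Q^{\pl D}_i$.
\end{itemize}
Since you shortcut $\tau$ into a path at the end, use the $\pl Q^*_j$-tail in the main case as well. Shortcutting would delete a there-and-back detour, whereas a tail along $\pl Q^*_j$ ends at the parent of its northern endpoint, so the intersection survives.

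Containing both $s^1_i$ and $v_i$, combined with the separation argument, also makes the main case independent of your assumption $j=i$.
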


\begin{proof}
Suppose without loss of generality that $C^1_i$ leans west and thus $\dot v_j \preceq \dot s^1_i$; the east leaning case with $s^1_i\preceq \dot v_j$ can be handled analogously.
For the rest of the proof we will follow closely the notations in the definition of a west-leaning column.

Notice that when $C^1_i$ is diverting then $\tau:=C^1_i$ intersects $\We(L)$, and since $Q^*_j$ separates $\We(L)$ and $\dot s^1_i$ in $L$ we have that $C^1_i$ must intersect $Q^*_j$. If $i\in I_\mu$ then $j=i$ and $C^1_i$ contains $x^i$, thus $\tau:=C^1_i[s^1_i,x_i]$ followed by a constant length diversion to $v_i$ and the suffix of $C^1_i$ from $x_i$ intersects $Q^*_i$, and it is $c$-approaching as $C^1_i$ is a shortest path to $\No(L)$ and the constant length diversion from $x_i$ to $v_i$ and back weakens this to a $c$-approaching path for some constant~$c$.

If $i\in I\setminus I_\mu$, then $Q_i^{\pl D}$ intersects $x_j^\So$, thus it has a vertex within distance $3$ of $v_j$ (as $x_j^\So$ is a union of at most $3$ objects, one of which is $v_j$). Let $\tau$ be the concatenation of $C_1^i[\dot s^1_i,x_i]$, the path corresponding to the wireframe path $Q_i^{\pl D}$ from $x_i$ to its vertex intersecting $x_j^\So$, the constant length path connecting this neighbor of $x_j$ to $v_j$, and the path corresponding to $Q^*_j$. Notice that $C^1_i$ is a shortest path to $\No(L)$, and $Q_i^{\pl D}$ is $c$-approaching $\No(L)$ by \ref{eq:selected_topcols}, the same holds for $Q_j^{\pl D}$, and the concatenation we defined adds constant length paths to these in constantly many places, which still leaves a $c$-approaching path to $\No(L)$ for some constant~$c$.

If $i\not \in I$ and $C^1_i$ is not diverting, then we again define $\tau$ as a concatenation. From $s^1_i$ we use $C^1_i$ to get to $x_i$, then in constant steps we get to $x_{i'}$ for some $i'\in I$, then continue on $Q_{i'}^{\pl D}$ to $x_{j'}$ for $j' \in I_\mu$, then go along $Q^*_j$ to $\No(L)$. Again, this is a concatenation of $c$-approaching paths to $\No(L)$ with constantly many constant-length paths between them, so it is $c$-approaching $\No(L)$ for some $c=O(1)$.
\end{proof}

\begin{lemma}\label{lem:PiMinimized}
Suppose that $\dot s^1_i\not \in V(\pl P_{i-1})$.
Then $\pl P_i$ defined above satisfies
\[
|\pl P_i|=
\begin{cases} \begin{displaystyle}
O\Big(\min_{\substack{k\in \bar I_\mu\\ \dot v_k \preceq \dot s^1_i}} |\So(\wf L)[\dot s^1_i,\dot v_k] \cup Q^*_k|\Big) \end{displaystyle}
&\text{when $C^1_i$ leans west,}\\
\begin{displaystyle}
O\Big(\min_{\substack{k\in \bar I_\mu\\ \dot v_k \succ \dot s^1_i}} |\So(\wf L)[\dot s^1_i,\dot v_k] \cup Q^*_k|\Big)\end{displaystyle}
&\text{when $C^1_i$ leans east.}
\end{cases}
\]
\end{lemma}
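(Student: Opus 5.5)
We treat the west-leaning case; the east-leaning case is symmetric. Here $\pl P_i=\So(\wf L)[\dot s^1_i,\dot v_j]\cup \pl Q^*_j$, where $j\in \bar I_\mu$ is the largest index with $\dot v_j\preceq \dot s^1_i$. Let $k\in\bar I_\mu$ be any index with $\dot v_k\preceq \dot s^1_i$, so $\dot v_k\preceq \dot v_j$. We will show
\[|\pl P_i| \le O\big(|\So(\wf L)[\dot s^1_i,\dot v_k]|+|\pl Q^*_k|\big).\]
The bound splits into two parts. The first part, $|\So(\wf L)[\dot s^1_i,\dot v_j]|\le |\So(\wf L)[\dot s^1_i,\dot v_k]|$, is immediate because $\dot v_k\preceq\dot v_j\preceq \dot s^1_i$. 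So it remains to prove $|\pl Q^*_j| = O\big(|\So(\wf L)[\dot s^1_i,\dot v_k]|+|\pl Q^*_k|\big)$.

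\textbf{Step 1: length of $\pl Q^*_j$.} For $j\in I_\mu$ we know $|\pl Q^*_j|=\dist_{\pl D}(\dot x^\So_j,D^\No)+O(1)$. By \eqref{eq:selected_topcols} this is $O(\dist_L(v_j,\No(L))+1)$, where $v_j$ is the southern object of $x^\So_j$. If $j\in\{0,|\cC^1|+1\}$, the path $\pl Q^*_j$ is a bridge and has length $O(1)$. In both cases it suffices to bound $\dist_L(v_j,\No(L))$.

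\textbf{Step 2: reduce to $v_k$.} Bound $\dist_L(v_j,\No(L))\le \dist_L(v_j,\pl Q^*_k)+|\pl Q^*_k|$, where we view $\pl Q^*_k$ as a walk in $L$ via the parent relation, with a constant additive loss from the extension vertices.
\begin{itemize}
\item If $v_k$ is close to $v_j$ along $\So$, the southern path gives $\dist_L(v_j,v_k)\le|\So(\wf L)[\dot v_k,\dot v_j]|\le |\So(\wf L)[\dot s^1_i,\dot v_k]|$.
\item In general we use planarity: inside $\re L$ the curve of $\pl Q^*_k$ together with $\So(\wf L)[\dot v_k,\dot s^1_i]$ separates $\dot v_j$ from the part of $\No$ east of $\pl Q^*_k$. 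So a shortest path from $v_j$ to $\No(L)$ either crosses $\pl Q^*_k$ or ends west of it.
\end{itemize}
The simplest route is to use the southern path directly. Since $\So$ is an internally $(1+\eps)$-approximate shortest path, we get $\dist_L(v_j,\No(L))\le \dist_L(v_j,v_k)+\dist_L(v_k,\No(L))\le |\So(\wf L)[\dot s^1_i,\dot v_k]|+O(|\pl Q^*_k|)$. This uses that $\dist_L(v_k,\No(L))\le |\pl Q^*_k|+O(1)$, because $\pl Q^*_k$ itself is a walk from (a neighbor of) $v_k$ to $\No(L)$ in $L$. This triangle inequality alone yields the claim, with the constant coming from \eqref{eq:selected_topcols} and the $O(1)$ terms absorbed. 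The one degenerate case is when $|\So(\wf L)[\dot s^1_i,\dot v_k]|+|\pl Q^*_k|=0$. This happens only if $k=j$, or if $k=0$ with $\We(\wf L)$ trivial, where the bridge endpoint gives $O(1)$ anyway.

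\textbf{Main obstacle.} The main difficulty is making the Lipschitz transfer in Step 1 rigorous. The quantity $\dist_{\pl D}(\dot x^\So_j,D^\No)$ must be bounded by the $L$-distance from the southern object, which requires that a shortest path in $L$ from $v_j$ to $\No(L)$ can be mimicked in $\Cols$. But $\Cols$ contains only the column tops and the MIS objects, not all of $L$. To handle this, we instead use the column $C^1_j$: it is a shortest path to $\No(L)$ whose top lies in $\Cols$, so $\dist_{\Cols}(x^\So_j,\MIS(Y^\No))=O(|C^1_j|)$. We then compare $|C^1_j|=\dist_L(s^1_j,\No(L))$ with $\dist_L(v_j,\No(L))$, using that $v_j$ and $s^1_j$ are within $O(1)$ of each other through $x^\So_j$. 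This is exactly the chain in \eqref{eq:selected_topcols}, which we invoke.
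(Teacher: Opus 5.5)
Your argument is correct and is essentially the paper's proof. You bound $|\pl Q^*_j|$ by $O(\dist_L(v_j,\No(L)))$ via \eqref{eq:selected_topcols}, route through $v_k$ using $\So(L)[v_j,v_k]$ and $\pl Q^*_k$, and use that $\So(\wf L)[\dot s^1_i,\dot v_j]$ and $\So(\wf L)[\dot v_j,\dot v_k]$ together make up $\So(\wf L)[\dot s^1_i,\dot v_k]$; no $(1+\eps)$-approximation property of $\So$ is needed for this direction, since any subpath of $\So(L)$ is already a walk in $L$.

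The one slip is in your ``Main obstacle'' aside, but it does not affect the proof because you end up simply invoking \eqref{eq:selected_topcols}. There, $x^\So_j$ is built around $x_j$, the start of $\Top(C^1_j)$, not around $s^1_j$, and $C^1_j[s^1_j,x_j]$ may be long. So $v_j$ and $s^1_j$ need not be within $O(1)$, and $|C^1_j|$ need not be $O(\dist_L(v_j,\No(L)))$. The correct comparison is $|\Top(C^1_j)|\le\dist_L(x_j,\No(L))\le\dist_L(v_j,\No(L))+2$.
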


\begin{proof}
Assume without loss of generality that $C^1_i$ leans west; the east-leaning case can be handled symmetrically.
Fix some $k\in \bar I_\mu$ with $\dot v_k \preceq s^1_i$, and we will show that $|\pl P_i|= O(\So(\wf L)[\dot s^1_i,\dot v_k] \cup Q^*_j|)$. Let $\pl P_i=\So(\wf L)[\dot s^1_i,\dot v_j] \cup \pl Q^*_j$. See \Cref{fig:brick_compass}(i). The claim clearly holds if $k=j$. Otherwise $\So(\wf L)[\dot s^1_i,\dot v_k]$ must pass through $\dot v_j$. Since $Q^*_{j}$ is a shortest path from $\dot x^\So_{j}$ to $D^\No$ in $\pl D$, we get by~\eqref{eq:selected_topcols} that $|Q^*_{j}| = \dist_{\pl D}(\dot x^\So_{j}, D^\No) = O(\dist_L(v_{j},\No(L)))$, where the last step uses that $v_{j}$ and the objects making up $x^\So_j$ are within constant distance of each other in $L$.
Using the Lipschitz property from~\Cref{cor:getplanargraph}, we can write:
\begin{align*}
|\pl P_i| &\leq |\So(\wf L)[\dot s^1_i,\dot v_{j}]| + |Q^*_{j}|\\
&\leq |\So(\wf L)[\dot s^1_i,\dot v_{j}]| + O(\dist_L(v_{j},\No(L)))\\
&\leq |\So(\wf L)[\dot s^1_i,\dot v_{j}]| + O(|\So(L)[v_{j},v_k]|+|Q^*_{k}|)\\
&\leq O(|\So(\wf L)[\dot s^1_i,\dot v_k] \cup Q^*_k|),
\end{align*}
which concludes the proof.
\end{proof}

\begin{lemma}\label{lem:eastwestpathsbound}
Let $\pl P_i$ be the path defined above. Then $|\pl P_i|=O(|C^1_i|)$.
\end{lemma}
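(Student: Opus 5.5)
The bound will come from \Cref{lem:leaning} and \Cref{lem:PiMinimized}, split into two cases according to whether $\dot s^1_i$ already lies on $\pl P_{i-1}$.

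\textbf{Case $\dot s^1_i\in V(\pl P_{i-1})$.} Here $\pl P_i$ is a suffix of $\pl P_{i-1}$. It is tempting to induct on $i$, but $|C^1_{i-1}|$ need not be comparable to $|C^1_i|$, so induction does not work directly. Instead, let $i_0<i$ be the last index with $\dot s^1_{i_0}\notin V(\pl P_{i_0-1})$. Then $\pl P_i$ is the suffix of $\pl P_{i_0}$ starting at $\dot s^1_i$, and it has the form $\So(\wf L)[\dot s^1_i,\dot v_j]\cup\pl Q^*_j$. This is the same form as in the other case, with $\dot s^1_i$ lying between $\dot s^1_{i_0}$ and $\dot v_j$. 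The degenerate boundary cases $\pl P_i=\We(\wf L)$ or $\Ea(\wf L)$ have constant length and are trivial. So it suffices to prove the bound $|\So(\wf L)[\dot s^1_i,\dot v_j]\cup \pl Q^*_j|=O(|C^1_i|)$ whenever $\dot s^1_i$ lies on the south segment between the chosen $\dot v_j$ and the relevant starting point.

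\textbf{Main case.} Apply \Cref{lem:leaning} to $i$, or to $i_0$ and then restrict to the suffix. This gives a path $\tau$ from $s^1_i$ that is $c$-approaching $\No(L)$ and intersects $\pl Q^*_j$. By the approaching property, $|\tau|\le c\cdot\dist_L(s^1_i,\No(L))=c|C^1_i|$, since $C^1_i$ is a shortest path from $s^1_i$ to $\No(L)$. Let $w$ be the first point where $\tau$ meets (the object of) $\pl Q^*_j$.

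\textbf{Bounding the two pieces.} We need to control both $|\pl Q^*_j|$ and $|\So(\wf L)[\dot s^1_i,\dot v_j]|$.
\begin{itemize}
\item For $|\pl Q^*_j|$: by \eqref{eq:selected_topcols}, $|\pl Q^*_j|=O(\dist_L(v_j,\No(L)))$. Every vertex of $\pl Q^*_j$ from $w$ onward lies within $O(\dist_L(w,\No(L)))$ of $\No(L)$ along $\pl Q^*_j$, because $Q^{\pl D}_j$ is a shortest path to $D^\No$ in a Lipschitz image. Hence the suffix of $\pl Q^*_j$ after $w$ has length $O(\dist_L(w,\No(L)))=O(|\tau|)$.
\item For the prefix of $\pl Q^*_j$ and the south segment: the concatenation of $\So(L)[s^1_i,v_j]$ and $\pl Q^*_j[v_j,w]$, together with $\tau[s^1_i,w]$, closes a loop in $L$. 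To bound the south segment, use \ref{it:sk_approximatepaths}: $\So(L)$ is internally $(1+\eps)$-approximate shortest. Therefore $|\So(L)[s^1_i,v_j]|\le(1+\eps)\dist_L(s^1_i,v_j)$, and $\dist_L(s^1_i,v_j)\le|\tau[s^1_i,w]|+|\pl Q^*_j[v_j,w]|$.
\item The remaining issue is the prefix $|\pl Q^*_j[v_j,w]|$. It is bounded by $O(\dist_L(v_j,\No(L)))$. I would bound this by $\dist_L(v_j,s^1_i)+|C^1_i|$ via the triangle inequality, which becomes circular. To break the circularity, invoke \Cref{lem:PiMinimized} instead of bounding $\pl Q^*_j$ directly: compare with the nearest index $k\in\bar I_\mu$ on the same side whose path $\pl Q^*_k$ is the one that $\tau$ actually hits. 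Since $\tau$ crosses $\pl Q^*_j$, which separates $s^1_i$ from farther paths, the first such crossing determines a $k$ with $\dot v_k$ between $\dot s^1_i$ and $\dot v_j$ (or $k=j$). For this $k$, the south segment to $\dot v_k$ is bounded by the loop argument with $\tau$ hitting $\pl Q^*_k$ directly.
\end{itemize}

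\textbf{Expected obstacle.} I expect the main obstacle to be exactly this circular estimate between $\dist_L(v_j,\No(L))$ and the south segment, together with the constant blow-up each time the Lipschitz bounds are composed. The planned fix is to pick $k$ minimal via \Cref{lem:PiMinimized}, so that $\tau$ first hits $\pl Q^*_k$ at a vertex $w$ whose $\pl D$-distance to $v_k$ is $O(|\tau|)$, using the Lipschitz comparison of \Cref{thm:Lipschitz_wireframe}. Summing the pieces then gives $|\pl P_i|=O(|C^1_i|)$.
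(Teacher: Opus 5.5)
Your main case has a genuine gap, and it is the circularity you flag yourself. After you pass to the first meeting point $w$ of $\tau$ with $\pl Q^*_j$, nothing you use bounds $|\So(\wf L)[\dot s^1_i,\dot v_j]|$ or the prefix $\pl Q^*_j[v_j,w]$. In other words, nothing controls where $v_j$ sits relative to $s^1_i$, and your repair does not change this, for three reasons.

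First, by the definition of $\pl P_i$, the index $j$ is already one of the two \emph{consecutive} indices of $\bar I_\mu$ bracketing $\dot s^1_i$. (In your suffix case, $\dot s^1_i$ lies between $\dot s^1_{i_0}$ and $\dot v_j$.) So there is no $k\in\bar I_\mu$ with $\dot v_k$ strictly between $\dot s^1_i$ and $\dot v_j$, and your $k$ is just $j$.

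Second, \Cref{lem:PiMinimized} only works in one direction: it bounds the path through the nearest index by the path through any \emph{farther} index on the same side. It helps only after some index has been bounded independently.

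Third, the claim that $w$ is within $\pl D$-distance $O(|\tau|)$ of $\dot v_k$ is unsupported. \Cref{thm:Lipschitz_wireframe} compares $\pl D$ with the intersection graph of $\Cols$, not with $L$, and $\tau$ only controls $\dist_L(s^1_i,w)$.

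In fact the properties you invoke cannot suffice. They are all consistent with the following picture: $\pl Q^*_j$ starts at distance $A$ west of $s^1_i$ and runs diagonally to the tip of a steep V-shaped dip of $\No(L)$, located at height $h\ll A$ directly above $s^1_i$. Then $\tau$ goes straight up with length $h=|C^1_i|$ and meets $\pl Q^*_j$. Both paths are $O(1)$-approaching and internally $O(1)$-approximate shortest paths, and $|\pl Q^*_j|=O(\dist_L(v_j,\No(L)))$, yet $|\pl P_i|\ge A$.

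The missing ingredient is the specific index supplied by the \emph{definition} of leaning. This is how the paper argues, and its proof of this lemma never invokes \Cref{lem:leaning}. Say $C^1_i$ leans west, and let $j'$ be the index named in that definition:
$j'=i$ if $i\in I_\mu$;
the $j'\in I_\mu$ whose $x^\So_{j'}$ is met by $Q^{\pl D}_i$ at a vertex $\dot x'_{j'}$ if $i\in I\setminus I_\mu$;
the analogous index reached via $Q^{\pl D}_{i'}$ if $i\notin I$ and $C^1_i$ is non-diverting;
and $j'=0$ if $C^1_i$ is diverting, in which case $\pl Q^*_0=\We(\wf L)$ has constant length and $\dist_L(s^1_i,\We(L))\le|C^1_i|+1$.

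For this $j'$ one bounds $|\So(\wf L)[\dot s^1_i,\dot v_{j'}]|+|\pl Q^*_{j'}|$ directly. Take the case $i\in I\setminus I_\mu$. The approximate shortest path property of $\So(L)$ gives $|\So(\wf L)[\dot s^1_i,\dot v_{j'}]|\le(1+\eps)\big(|C^1_i[s^1_i,x_i]|+\dist_L(x_i,x_{j'})\big)+O(1)$, where $\dist_L(x_i,x_{j'})=O(|Q^{\pl D}_i[\dot x^\So_i,\dot x'_{j'}]|)$. Also $|\pl Q^*_{j'}|\le|Q^{\pl D}_i[\dot x'_{j'},\dot y^\No]|+O(1)$, because $Q^{\pl D}_{j'}$ is a \emph{shortest} path to $D^\No$ in $\pl D$ starting at $\pl D$-distance $O(1)$ from $\dot x'_{j'}$. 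By \eqref{eq:selected_topcols} the total is $O(|C^1_i[s^1_i,x_i]|+\dist_L(x_i,\No(L)))=O(|C^1_i|)$. Only then is \Cref{lem:PiMinimized} applied, in its valid direction, to pass from $j'$ to the nearest index actually used by $\pl P_i$.

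Equivalently, you could salvage your route by using the explicit path $\tau$ built in the proof of \Cref{lem:leaning}. That path passes within $O(1)$ of $v_{j'}$ itself, so taking $w$ there removes the prefix term.

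Separately, your suffix case is not handled as written. \Cref{lem:leaning} assumes $\dot s^1_i\notin V(\pl P_{i-1})$, and applying it to $i_0$ yields a path from $s^1_{i_0}$ whose length is tied to $|C^1_{i_0}|$. ``Restricting to the suffix'' produces no path from $s^1_i$, so an additional argument is needed there. The paper's proof does not treat this case separately either.
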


\begin{proof}
Recall that $v_i\in \So(L)$ is some fixed vertex within distance $2$ of $x_i$ for each $i$. We will assume without loss of generality that $C^1_i$ leans west.

We distinguish several cases based on the index $i$.
\begin{description}
\item[Case 1.] $i\in I_\mu$\\
Since $v_i\preceq s^1_i$, Lemma~\ref{lem:PiMinimized} gives that $|\pl P_i| = O(|\So(\wf L)[\dot s^1_i,\dot v_i]|+|Q^*_i|)$. Using~\ref{it:M_approximatepaths} and that $x_i$ is within constant distance of $v_i$ in $L$, we get
\begin{equation}\label{eq:Pifirstpart}
\begin{aligned}
|\So(\wf L)[\dot s^1_i,\dot v_i]| &\leq (1+\eps) \dist_L(s^1_i,v_i)\\
&\leq (1+\eps)(|C^1_i[s^1_i,x_i]| + \dist_L(x_i,v_i))\\
&= O(|C^1_i[s^1_i,x_i]|).
\end{aligned}
\end{equation}
On the other hand, \eqref{eq:selected_topcols} implies that $|Q^*_i| = O(\dist_L(x_i,\No(L))$. Since $C^1_i$ is a shortest path in $L$, this gives
\begin{equation}\label{eq:Piwrap}
|P_i| \leq O(|C^1_i[s^1_i,x_i]|) + O(\dist_L(x_i,\No(L))= O(|C^1_i|)
\end{equation}
as desired.
\item[Case 2.] $i\in I\setminus I_\mu$\\
It follows that the curve of $Q^{\pl D}_i$ intersects some $x^\So_j$ where $j\in I_\mu$, and $v_j\preceq s^1_i$. Thus there is some $\dot x'_j \in V(Q^{\pl D}_i)$ such that $x'_j:=p(\dot x'_j)$ intersects $x^\So_j$, i.e., $x_j$ and $x'_j$ are within constant distance of each other in $L$. By \Cref{lem:PiMinimized} and the triangle inequality we have that
\begin{equation}\label{eq:Picase2}
|P_i| \leq O(|\So(\wf L)[\dot s^1_i,\dot v_j]| +|Q^*_j|)\leq  O(|\So(\wf L)[\dot s^1_i,\dot v_i]|+ |\So(\wf L)[\dot v_i,\dot v_j]| +|Q^*_j|),
\end{equation}
where the first term of the right hand side expression can be bounded as in \eqref{eq:Pifirstpart} by $O(|C^1_i[s^1_i,x_i]|)$. Using the fact that $x_i$ and $v_i$ are neighbors as well as $x_j,v_j$ and $x'_j$ are within constant distance of each other, we get by the Lipschitz property of~\Cref{cor:getplanargraph} that these distances are distorted by a constant factor in $\pl D$ for the corresponding vertices $\dot x^\So_i$ and $\dot v_i$, as well as among $\dot x^\So_j$, $\dot x'_j$ and $\dot v_j$. Thus we can easily switch among such vertices and incur only a constant length cost.

Let $\dot y^\No$ denote the endpoint of $Q^{\pl D}_i$.
The second and third terms of~\eqref{eq:Picase2} can be bounded as follows:
\begin{equation}\label{eq:Picase2latterparts}
\begin{aligned}
|\So(\wf L)[\dot v_i,\dot v_j]| +|Q^*_j|&\leq (1+\eps)\dist_L(x_i,x_j)+O(1) + \dist_{\pl D}(\dot x^\So_j, D^\No)\\
&\leq O(|Q^{\pl D}_i[\dot x^\So_i,\dot x'_j]|) + |Q^{\pl D}_i[\dot x'_j,\dot y^\No]|+O(1)\\
&=O(|Q^{\pl D}_i[\dot x^\So_i,\dot y^\No]|\\
&=O(\dist_L(x_i,\No(L))
\end{aligned}
\end{equation}
where line 1 used \ref{it:M_approximatepaths}, line 2 used the Lipschitz property of~\Cref{cor:getplanargraph}, and line 4 used~\eqref{eq:selected_topcols}. We can thus conclude this case again with \eqref{eq:Piwrap}.

\item[Case 3.] $i\not \in I$ and $C^1_i$ is not diverting.\\
It follows that $x^\So_i$ was not included in $I$ because it intersected some $x^\So_{i'}$ where $i'\in I$. Let $j'\in \bar I_\mu$ be the index such that $P_{i'}$ is using $Q^*_{j'}$, i.e., it is preceding or succeeding $i'$. We have $v_{j'}\preceq s^1_i$.
By \Cref{lem:PiMinimized}, the triangle inequality and the fact that $|\So(\wf L)[\dot v_i,\dot v_{i'}]|=O(1)$ we can write
\[|P_i| = O(|\So(\wf L)[\dot s^1_i,\dot v_i]| + |\So(\wf L)[\dot v_{i'},\dot v_{j'}]| +|Q^*_{j'}|).\]
The first term can again be bounded as in Case 1 formula \eqref{eq:Pifirstpart} by $O(|C^1_i[s^1_i,x_i]|)$, and the last two terms can be bounded by $O(\dist_L(x_{i'},\No(L))$ using~\eqref{eq:Picase2latterparts}, concluding with \eqref{eq:Piwrap} and the fact that $\dist_L(x_i,x_{i'})=O(1)$.

\item[Case 4.] $i\not \in I$ and $C^1_i$ is diverting.\\
Suppose that $C^1_i$ intersects some object of $\We(L)$ (the proof is analogous when it intersects $\Ea(L)$. By \Cref{lem:PiMinimized}, property \ref{it:M_approximatepaths}, and the fact that $Q^*_0 = \We(\wf L)$ has constant length, we have
\begin{align*}
|P_i| &= O(|\So(\wf L)[\dot s^1_i,\dot v_0]|+|Q^*_0|) \\
&= O(|\So(L)[s^1_i, v_0]|) \\
&= O(\dist_L(s^1_i,\We(L))) \\
&=O(|C^1_i|).\qedhere
\end{align*}
\end{description}
\end{proof}

\subsection{Defining bricks and constructing the mortar graph}

The \emph{mortar graph} $\pl M$ is the wireframe that consists of $\skel$, and in each skeleton face its bridge decomposition, and the union of all graphs $\pl{D\re L}$ in the lenses of the decomposition. Notice that each relevant face $\re R$ of $\skel$ is now sliced into smaller regions with south-north paths that are either bridges or the paths $Q^*_i$ in the lenses of $\re R$. We also include the eastern and western corner of $\re R$ as length-0 south-north paths.

\begin{figure}
\centering
\includegraphics{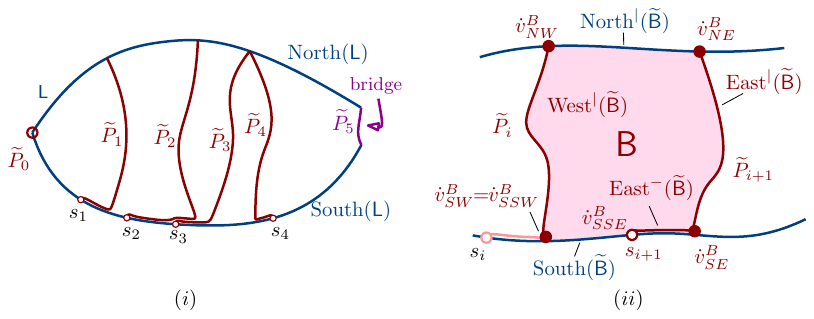}
\caption{(i) A lens $\re L$ sliced into bricks with paths $\pl P_i$. (ii) A brick and its compass. The path $\We^-(\wf B)$ consists of a single vertex $\dot v^B_{SW}=\dot v^B_{SSW}$.}
\label{fig:brick_compass}
\end{figure}

Let $\re B$ be a face in the drawing of $\pl M$  that is located inside a relevant face $\re R$ of $\skel$, and it is bounded by consecutive south-north paths inside $\re R$. (See \Cref{fig:brick_compass}(ii)). Recall that the brick $B$ corresponding to $\re B$ is the intersection graph $G|_{\re B}$.

Next, we define the compass for each brick region $\re B$. Observe that in the above definition, either $\re B$ is inside some lens $\re L\subseteq \re R$, or $\re B$ is not contained in any lens. If $\re B$ is not contained in any lens, then we use the definitions guaranteed by \Cref{lem:slice_into_small_bricks}.

If $\re B$ is inside the lens $\re L$, then it is some region between the paths $\pl P_i,\pl P_{i+1}$ for some $i\in \{0,\dots,|\cC^1|\}$.
We let $\dot v^B_{SE}$ denote the southeast corner of $\wf B$, defined as the easternmost vertex of $\So(\wf L)$ that is on $\wf B$. Similarly, we define $\dot v^B_{SW}$, the southwest corner as the westernmost vertex of $\So(\wf L)$ that is on $\wf B$. Let $\dot v^B_{NE}$ and $\dot v^B_{NW}$ denote the easternmost and westernmost vertices of $\pl B$ that are on $\No(\wf L)$, respectively. Finally, let $\dot v^B_{SSW}:=\dot s^1_i$ and $\dot v^B_{SSE}:=\dot s^1_{i+1}$ denote the first vertex on $\pl P_i$ and $\pl P_{i+1}$.
We can now define the compass of $\wf B$ as follows.
\begin{align*}
\No(\wf B)  &:=\No(\wf L)[\dot v^B_{NE},\dot v^B_{NW}]\\
\We^|(\wf B)&:=\pl P_i[\dot v^B_{SW},\dot v^B_{NW}]\\
\Ea^|(\wf B)&:=\pl P_{i+1}[\dot v^B_{SE},\dot v^B_{NE}]\\
\We^-(\wf B)&:=\So(\wf L)[\dot v^B_{SW},\dot v^B_{SSW}]\\
\Ea^-(\wf B)&:=\So(\wf L)[\dot v^B_{SSE}, \dot v^B_{SE}]\\
\So(\wf B)  &:=\So(\wf L)[\dot v^B_{SSW}, \dot v^B_{SSE}]\\
\We(\wf B)  &:=\We^-(\wf B)\cup \We^|(\wf B)\\
\Ea(\wf B)  &:=\Ea^-(\wf B)\cup \Ea^|(\wf B)
\end{align*}
Note that each of these paths may only consist of a single vertex.

Let $\So(B),\No(B),\Ea(B)$ and $\We(B)$ denote the paths of $\Perim(B)$ corresponding to $\So(\wf B),\No(\wf B),\Ea(\wf B)$ and $\We(\wf B)$, respectively. 

We are now ready to prove \Cref{thm:mortarprops}.

\begin{proof}[Proof of \Cref{thm:mortarprops}]
The graph $\pl M$ is $2$-connected because, according to \ref{it:sk_2connected}, its construction started from a $2$-connected wireframe $\skel$, and $\pl M$ was created by adding paths between vertices of $\skel$. The irrelevant face of $\skel$ is inherited by $\pl M$ and has the desired property by \ref{it:sk_irrelevant}. Thus, $\pl M$ is a mortar graph of $G$. It is routine to check that the compass of $\pl M$ is well-defined 
by starting from \ref{it:sk_compass}, 
and following the construction of the eastern and western brick sides in \Cref{lem:slice_into_small_bricks} and above.

Property \ref{it:M_terminals} holds because by \ref{it:sk_terminals} we have that $\pl M$ is a supergraph of $\skel$ where all but a single face are brick regions.

Next, we prove property \ref{it:M_approximatepaths}. Notice first that if $\re B$ is inside a relevant face $\re R$ of $\skel$, then the distances of vertices in $B=G|_{\re B}$ are greater or equal than the distance of the corresponding vertices in $G|_{\re R}$ by Lemma~\ref{lem:restrictionmetric}(\textit{i}). It is therefore sufficient to prove that these paths are $t$-approxiamte shortest paths in $R=G|_{\re R}$, or possibly just in some lens graph $L=G|_{\re L}$ where $\re L$ is a lens of $\re R$ containing $\re B$.

The path $\We^-(\wf B) \cup \So(B)\cup \Ea^-(\wf B)$ is a subpath of $\So(\wf R)$, and $\No(\wf B)$ is a subpath of $\No(\wf R)$. Switching to objects of $\Perim(B)$ and $\Perim(R)$, the path $\So(B)$ is a path consisting of some subpolygons on a subpath of $\So(R)$, and either $\No(B)$ is a single vertex, or it is a is a path consisting of some subpolygons on $\No(R)$. Consequently, \ref{it:sk_approximatepaths} implies that $\We^-(B) \cup \So(B)\cup \Ea^-(B)$ is a internally $(1+\eps)$-approximate shortest path in $R$, and $\No(B)$ is a shortest path in $R$.

Next, we prove \ref{it:M_eastwestsuffixes}; first we consider only $\We^|(\wf B)$, and later we will extend the proof to $\We(\wf B)$. (The proofs for $\Ea^|(\wf B)$ and $\Ea(\wf B)$ are analogous). Consider a path $\We^|(\wf B)$. Such a path is either a single vertex edgeless path (and thus satisfies \ref{it:M_eastwestsuffixes}), or it came about as the walk inside some lens $\re L$ corresponding to a path $\pl Q^*_i$  in $\pl D'$ where $i\in \bar I_\mu$. When $i=0$ or $i=|\cC^1|+1$, then this is a constant length bridge path and the claim follows. Othewise, since $\pl Q^*_i$ extends some path $Q^{\pl D}_i$ with at most two vertices on both ends, we have that the corresponding path $Q^*_i$ in $L$ is $c$-approaching $Q^{\pl D}_i$ and is a internally $c$-approximate shortest path if and only if the same holds for the path given by the internal vertices of $Q^{\pl D}_i$. Then \eqref{eq:selected_topcols} implies that for any internal vertex $\dot v\in V(\pl Q^{\pl D}_i)$ we have
\begin{equation}\label{eq:Qstar}
\pl Q^{\pl D}_i[\dot v,\dot v^B_{NW}] = \dist_{\pl D}(\dot v, D^\No)+O(1) = O(\dist_L(v,\No(L)),
\end{equation}
where $v=p|_{\re L}(\dot v)$ is an object in $V(L)$. Since the hidden constant above only depends on $\alpha$, the claim of \ref{it:M_eastwestsuffixes} about $c$-approaching follows. Simiarly, to show that it is internally $c$-approximate shortest, we can similarly pick any vertex pair $u,v\in V(\We^|(B))$ and corresponding vertices $\dot u, \dot v\in V(Q^{\pl D}_i)$ and use the Lipschitz property:
\[
\pl Q^*_i[\dot u,\dot v] = \dist_{\pl D}(\dot u, \dot v) = O(\dist_{\Cols}(u,v))= O(\dist_{L}(u,v)),
\]
which concludes the proof of \ref{it:M_eastwestsuffixes}.


Property \ref{it:M_northsouthbound} follows from \ref{it:sk_northsouthbound} and the fact that the edges of $\No(\wf B),\So(\wf B), \We^-(\wf B)$, and $\Ea^-(\wf B)$ correspond to edges of $\skel$, and each edge of $\skel$ is listed in at most two bricks.

We can now prove the complexity bound on $\compl(\pl M)$. Notice that $\pl M$ is obtained from $\skel$ by adding at most $O(|E(\pl M)|)$ shortest paths to various faces $\wf F$ of $\skel$ using \Cref{cor:shortestpathframe}. In each case the addition is based on some path $P$ of objects where $\cA(P)$ can be bounded by $O(\compl(\cA(G))+\compl(\wf F)) = O(\compl(\cA(G))+\compl(\skel)) = O(n^3/\eps)\cdot \compl(\cA(G))$, where the last step used \Cref{lem:skeletonprops}. By \ref{it:M_northsouthbound} add at most $O(|E(\pl M)|)=O(n/\eps)$ such paths, so we get $\compl(\pl M) = O(n^4/\eps^2)\cdot \compl(\cA(G))$, as required.

Next, we prove property \ref{it:M_eastwestbound}. By \Cref{lem:bridgecount}, \Cref{lem:slice_into_small_bricks}, and \Cref{lem:supercollen} we have that the total length of all bridges, slicing paths and columns used for the relevant face region $\re R$ of $\skel$ is $O(\eps^2|\So(\wf R)|)$. Since the paths $\So(\wf R)$ cover each skeleton edge at most twice, this is at most $O(\eps^2|E(\skel)|)$ edges in total. Notice that all edges on these paths appear on the boundary of at most two bricks (once on the east, once on the west), so the same bound holds for the total length of all $\We(B)$ and $\Ea(B)$ together. Finally, \ref{it:sk_northsouthbound} implies that $O(\eps^2|E(\skel)|)=O(\eps \cdot \opt)$, which concludes the proof of~\ref{it:M_eastwestbound}.

It remains to prove \ref{it:M_thickbrick}. The large brick in question is part of some lens $\re L$; suppose that it is between the paths $\pl P_i,\pl P_{i+1}$ for some $i\in \{0,\dots,|\cC^1|\}$. It follows that $\So(\wf B)\subset \So(\wf L)[\dot s^1_i,\dot s^1_{i+1}]$. Consequently, there are at most $\kappa$ vertices $s^0_j,s^0_{j+1},\dots,$ on $\So(B)$ that are starting points of the consecutive columns $C^0_j,\dots,C^0_{j+1}$. By the definition of these columns, we have that $s^0_j,s^0_{j+1},\dots,$ are at most $\kappa$ points ordered west to east on $\So(B)$ such that for any vertex $x$ of $\So(B)[s^0_k,s^0_{k+1})=\So(L)[s^0_k,s^0_{k+1})$ we have $|\So(L)[s^0_k,x]|\leq \eps \dist_L(x,\No(L))$. Notice however that (i) $|\So(L)[s^0_k,x]|=|\So(B)[s^0_k,x]|$, (\textit{ii}) $\No(B)\subset \No(L)$, and (\textit{iii}) the distance of vertices in~$B$ is greater or equal to the distance of corresponding vertices in~$L$ by \Cref{lem:restrictionmetric}(\textit{i}). Thus
\[|\So(B)[s^0_k,x]|\leq \eps \dist_L(x,\No(L)) \leq \eps \dist_B(x,\No(B)),\]
which concludes the proof of \ref{it:M_thickbrick}.

Finally, we note that all of the construction works on polynomial-sized graphs and the construction steps can be followed by a polynomial algorithm.
\end{proof}

\section{The structure theorem and spanner for Steiner tree}\label{sec:SteinerStruct}

Let $G\in \cI_\alpha$ be a graph, and suppose $T\subset V(G)$ and $\eps>0$ are given. Suppose that we have constructed a mortar graph $\cM$ in $G$ using \Cref{thm:mortarprops}. We will set $G^\orig:=G$, and then extend $G$ with all restricted objects in all bricks $B$ of $\cM$. As a result, the optimum solution in $G$ is unchanged (see also the discussion on Page~\pageref{sec:objectduplication} around object duplication), but the size of $G$ might be much bigger than the original graph $G^\orig$, and we occasionally need to refer to only original fat objects, as $G$ now typically contains non-fat subpolygons of objects.

\subsection{Simplifying a forest in a large brick}

Recall that every large brick is a part of some lens.
We call the vertices of $N_B(V(\bd B)),\lambda)$ \emph{outer} vertices, and vertices of $V(B)\setminus N_B(V(\bd B),\lambda)$ will be called \emph{inner} vertices of $B$. Recall that small bricks do not have boundary-distant vertices and thus they do not have inner vertices.

Suppose that $A$ is a vertex set in a thick brick $B$. Based on $A$, we will define a wireframe $\pl Q_A$, whose outer face is the concatenation of the paths $\So(\pl Q_A), \Ea(\pl Q_A), \No(\pl Q_A), \We(\pl Q_A)$. The graph $\pl Q_A$ will consist of the paths $\So(\wf B)$ (equal to $\So(\pl Q_A)$), $\No(\wf B)$ (equal to $\No(\pl Q_A)$), and a forest $\pl F_A$. We define the wireframe $\pl Q_A$ such that it satisfies the following conditions.
\begin{description}
\item[(P1)] The path $\No(\pl Q_A)$ is a shortest path in $\pl Q_A$.
\item[(P2)] For some $k\leq \kappa$ there is a set of vertices $s_0,\dots,s_k$ ordered from west to east on $\So(\pl Q_A)$ such that for any vertex $x$ of $\So(\pl Q_A)[s_i,s_{i+1})$ we have \[|\So(\pl Q_A)[s_i,x]|\leq \eps \dist_{\pl Q_A}(x,\No(\pl Q_A)).\]
\item[(P3)] For any pair of vertices $a,b$ of $\So(\pl Q_A)$ where $a,b$ are not both endpoints of $\So(\pl Q_A)$, the path along $\So(\pl Q_A)$ is an approximate shortest path, that is, $|\So(\pl Q_A)[a,b]| \leq (1+\eps) \dist_{\pl Q_A}(a,b)$.
\end{description}
We can therefore apply the following theorem in a black-box fashion.
Given a subgraph $\pl F$ of $\pl Q_A$ a \emph{joining vertex} of $\pl F$ is a vertex that has an incident edge both in $\pl F$ and $\pl Q_A\setminus \pl F$.

\begin{theorem}[Borradaile~\etal, Thm.~10.7 \cite{BorradaileKM09}]\label{thm:blackboxplanar}
Let $\pl Q_A$ be a plane graph satisfying conditions\footnote{In the setting of Borradaile~\etal~\cite{BorradaileKM09} there is one more condition, namely that every terminal of $\pl Q_A$ is either on $\So(\pl Q_A)$ or $\No(\pl Q_A)$. Observe that this condition is immaterial to their Theorem~10.7, as terminals do not appear in its statement.} (P1)-(P3). Let $\pl F_A = E(\pl Q_A) \setminus E(\So(\pl Q_A)\cup\No(\pl Q_A))$. Then there is a forest $\pl F$ of $\pl Q_A$ with the following properties.
\begin{description}
\item[(F1)] If two vertices of $\No(\pl Q_A)\cup \So(\pl Q_A)$ are connected in~$\pl F_A$ then they are connected in~$\pl F$.
\item[(F2)] The number of \emph{joining vertices} of $\pl F$ on $\No(\pl Q_A)\cup \So(\pl Q_A)$ is $o(1/\eps^{5.5})$.
\item[(F3)] The total length of $\pl F$ is at most $(1+O(\eps))|\pl F_A|$.
\end{description}
\end{theorem}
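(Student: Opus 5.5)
This statement is quoted verbatim from Borradaile, Klein and Mathieu~\cite[Thm.~10.7]{BorradaileKM09}, and the plan is to use it as a black box. The work is to check that our plane graph $\pl Q_A$ fits their hypotheses. Their theorem is stated for a planar graph $H$ whose boundary splits into $\No(H)$, $\So(H)$ and two sides, with conditions (P1)--(P3) imposed. Their proof also mentions terminals lying on $\No\cup\So$, but terminals do not occur in the statement of Thm.~10.7. Since the conclusion (F1)--(F3) concerns only the forest $\pl F_A$ and the boundary paths, I would first remove all terminal labels from $\pl Q_A$ and declare every vertex of $\No(\pl Q_A)\cup\So(\pl Q_A)$ admissible. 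This satisfies their terminal condition vacuously and changes nothing in the conclusion.

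Next I would match the parameters. Their $\kappa$ (number of column starts in a brick) corresponds to our $\kappa=1/\eps^3$ from (P2). Their bound on joining vertices is $o(1/\eps^{2.5}\cdot\kappa)$, which for $\kappa=1/\eps^3$ gives the $o(1/\eps^{5.5})$ in (F2). I would check this substitution against their statement, and if their dependence on $\kappa$ differs, adjust the exponent accordingly. Conditions (P1) and (P3) are exactly their shortest-path hypothesis on the north side and their internally $(1+\eps)$-approximate hypothesis on the south side. Edge lengths are all unit here, which is a special case of their weighted setting.

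The main obstacle is that the black box needs $\pl Q_A$ to be a genuine plane graph in which $\pl F_A$ lies inside the cycle bounded by $\So$, $\No$ and the east and west sides, with no other crossings. This is exactly why $\pl Q_A$ is built from an object frame via \Cref{thm:tracespanningtree} and \Cref{lem:obj2wire}. Given that planarity, the theorem applies verbatim. I would also confirm that their simplification only replaces subtrees by subpaths of $\No$ and $\So$ plus edges of $\pl F_A$, so the output $\pl F$ is a subgraph of $\pl Q_A$ as claimed.
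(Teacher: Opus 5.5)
Your proposal is correct and is essentially the paper's own treatment: the paper also imports \cite[Thm.~10.7]{BorradaileKM09} as a black box and notes in a footnote that the terminal condition is immaterial, and your derivation of $o(1/\eps^{5.5})$ from $\kappa=1/\eps^3$ is consistent with the paper's choice of $\kappa$. The one real verification step is checking (P1)--(P3) for the concrete wireframe $\pl Q_A$, which the paper does separately in \Cref{lem:BFconditions} using \Cref{lem:restrictionmetric}(\textit{iii}) together with \ref{it:M_approximatepaths} and \ref{it:M_thickbrick}.
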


\begin{remark}\label{rem:intervalcount}
The forest $\pl F$ in \Cref{thm:blackboxplanar} joins the northern and southern boundary of $\wf B$ at only $o(1/\eps^{5.5})$ vertices. Assume without loss of generality that we remove any component from $\pl F$ whose intersection with $\No(\pl Q_A)\cup \So(\pl Q_A)$ is disjoint from the $\pl F_A\cap \No(\pl Q_A)\cup \So(\pl Q_A)$. This preserves the properties required of $\pl F$. Consider a maximal subpath $\pl I$ of $\No(\pl Q_A)\cup \So(\pl Q_A)$ that is also a subpath of $\pl F$. Either there is a joining vertex of $\pl F$ on $\pl I$, or the component of $\pl F$ containing $I$ must be $I$ itself. Moreover, if two such disjoint maximal intervals $I_1,I_2$ of $\pl F\cap E(\So(\pl Q_A)\cup \No(\pl Q_A))$ are in different components of $\pl F$, then no vertex pair $u\in I_1,v\in I_2$ can be connected in $\pl F_A$: indeed, such a connection would also have to be present in $\pl F$. Thus the number of maximal intervals of $\pl F$ in $\pl Q_A$ is at most $o(1/\eps^{5.5})$ plus the number of connected components of $\pl F_A$.
\end{remark}

In what follows, we are generally concentrating on a single brick $B$. Let $\bd_0 = V(\So(B)\cup\No(B))$ and we use the shorthand $\bd_1$ for the set of objects in $V(B)$ that intersect the curve $\gamma(\No(\wf B))$ or the curve $\gamma(\So(\wf B))$. Let $\bd_2 = N_B(\bd_1\cup \bd_0)$. Consequently, $\bd_0\subseteq \bd_1 \subseteq \bd_2$.

\subsection{\texorpdfstring{The wireframe $\pl Q_A$.}{The wireframe Q sub A.}}\label{sec:customplanar}

Let $A$ be a vertex set in $V(B)\setminus \bd_1$ where each component of $A$ has size at least $\lambda$, and $A$ contains the east and west of $B$, or more precisely, we require that 
\[\big(V(\Ea(B))\cup V(\We(B))\big) \setminus \bd_1 \subset A\subset V(B)\setminus \bd_1.\]
Let $a \in V(A)\cap \bd_2$ be a vertex neighboring to some vertex of $\bd_1$. Then there is a path $P_a$ of length at most $2$ connecting $a$ to $\bd_0$ in $B$; we fix one such path for each such vertex $a$. Let $A^\bd$ denote the set of vertices on these paths $P_a$.
In each component $K$ of $B[A\cup A^\bd]$, we greedily select a maximal number of vertices from $V(K)\cap \So(B)$ whose pairwise distance is at least $1/\eps$ in $B$, and do the same in $V(K)\cap \No(B)$. Moreover, for each selected vertex $y$, there is some path $P_a$ ending there. For each selected $y$ we add the vertices of one such path $P_a$ to a set $Z_K$. Note that $Z_K\cap \So(B)$ consists of vertices that have pairwise distance at least $1/\eps$ in $B$, and all vertices of $K\cap \So(B)$ are within distance $1/\eps$ from some vertex in $Z_K\cap \So(B)$. The analogous property holds for $Z_K\cap \No(B)$. Let $Z$ be the union of the sets $Z_K$ for each component $K$ of $B[A\cup A^\bd]$.

We apply \Cref{thm:tracespanningtree} to each connected component of $B[A\cup Z]$ where the point terminals are the vertices of $\wf B$ corresponding to the objects of $Z_K\cap \bd_0$. Let $\pl F_A$ be the resulting plane forest with its leaves on $\So(\re B)\cup \No(\re B)$. We add to $\pl F_A$ the edges of $\wf B$. This results in a wireframe that we denote by $\pl Q_A$.

In order to simplify a forest in a brick via \Cref{thm:blackboxplanar}, we will need the following.

\begin{lemma}\label{lem:BFconditions}
Let $A$ be a vertex set in $B\setminus \bd_1$. Then the wireframe $\pl Q_A$ defined above satisfies the conditions (P1)-(P3).
\end{lemma}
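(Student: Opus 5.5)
The central observation is that every path in $\pl Q_A$ maps, through the parent function $p|_{\re B}$, to a walk of the same length in $B$. This holds because $\pl Q_A$ is a boundaried wireframe of $B$: the forest $\pl F_A$ comes from an object frame of subpolygons of objects of $B[A\cup Z]$, converted by \Cref{lem:obj2wire}, and the remaining edges are edges of $\wf B$. Hence \Cref{lem:restrictionmetric}(\textit{iii}) gives
\[\dist_{\pl Q_A}(\dot u,\dot v)\ \geq\ \dist_B\big(p|_{\re B}(\dot u),p|_{\re B}(\dot v)\big)\]
for all $\dot u,\dot v\in V(\pl Q_A)$. In addition, $\So(\pl Q_A)=\So(\wf B)$ and $\No(\pl Q_A)=\No(\wf B)$ map isomorphically, via object duplication, to the paths $\So(B)$ and $\No(B)$ of $\Perim(B)$, with equal lengths. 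All three conditions should follow by transferring the corresponding properties of $B$ from \Cref{thm:mortarprops} through this inequality.

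For (P1), take $\dot a,\dot b$ on $\No(\pl Q_A)$. Then
\[|\No(\pl Q_A)[\dot a,\dot b]| = |\No(B)[a,b]| = \dist_B(a,b)\leq \dist_{\pl Q_A}(\dot a,\dot b),\]
where the middle equality is \ref{it:M_approximatepaths} and the inequality is the one above. For (P3), the same chain works with $\So(B)$. By \ref{it:M_approximatepaths}, $\We^-(B)\cup\So(B)\cup\Ea^-(B)$ is internally $(1+\eps)$-approximate shortest, so its subpath $\So(B)$ satisfies $|\So(B)[a,b]|\le(1+\eps)\dist_B(a,b)$ whenever $a,b$ are not both endpoints of the longer path. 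This is weaker than needed when $a,b$ are exactly the endpoints of $\So(B)$ and $\We^-,\Ea^-$ are trivial. That case is already excluded by the phrasing of (P3) and of the internal-approximation definition, so it is consistent. For (P2), take the vertices $s_0,\dots,s_k$ with $k\le\kappa$ from \ref{it:M_thickbrick}, and let the $s_i$ be the corresponding wireframe vertices on $\So(\pl Q_A)$. Then
\[|\So(\pl Q_A)[s_i,x]| = |\So(B)[s_i,x]| \leq \eps\,\dist_B(x,\No(B))\leq \eps\,\dist_{\pl Q_A}(x,\No(\pl Q_A)),\]
where the last step uses \Cref{lem:restrictionmetric}(\textit{iii}) together with the fact that $p|_{\re B}$ maps $\No(\pl Q_A)$ onto $\No(B)$.

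The main obstacle is justifying the boundaried-wireframe claim, in particular that the parents of forest vertices in $\pl Q_A$ really are objects of $B$ and that wireframe edges correspond to adjacencies in $B$. Three points need checking. First, \Cref{thm:tracespanningtree} is applied to $B[A\cup Z]$, whose objects are restricted polygons lying in $\re B$, so every object-frame polygon is a subpolygon of an object of $B$. Second, the point terminals are vertices of $\wf B$ on $\bd\re B$, and apart from them the forest avoids $\bd\re B$; this is possible because $A\cap\bd_1=\emptyset$, and where $Z$ objects touch the curve one can shrink slightly inward using \Cref{obs:perturb}. Third, each forest edge joins two polygons that intersect, so the parents are adjacent in $B$. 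With these checked, \Cref{lem:restrictionmetric}(\textit{iii}) applies and the three inequalities above close the proof. The fact that $B$ is a large brick (part of a lens) is needed only to invoke \ref{it:M_thickbrick} for (P2).
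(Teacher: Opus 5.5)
Your proposal is correct and follows the paper's argument: the paper also observes that $\pl Q_A$ is a boundaried wireframe of $B$ and invokes \Cref{lem:restrictionmetric}(\textit{iii}) to get $\dist_{\pl Q_A}\ge\dist_B$ on parents. It then transfers (P1) and (P3) from \ref{it:M_approximatepaths} and (P2) from \ref{it:M_thickbrick}. The only difference is that you spell out why $\pl Q_A$ is a boundaried wireframe and handle the endpoint caveat in (P3), both of which the paper takes for granted.
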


\begin{proof}
Condition (P1) can be easily proven using the construction of $\pl Q_A$. Recall that $\pl Q_A$ is a (boundaried) wireframe of $B$, so by \Cref{lem:restrictionmetric}(\textit{iii}) the distance between two vertices of $\pl Q_A$ are at least as big as the distance of their parents in $B$. On the other hand, $\No(\pl Q_A)=\No(\wf B)$ is a path that corresponds to a shortest path in $B$ by \ref{it:M_approximatepaths}, thus $\No(\wf B)$ is a shortest path in any wireframe inside $\re B$ that contains it. 

We can analogously argue that Condition (P2) follows from \ref{it:M_thickbrick} and Condition (P3) follows from~\ref{it:M_approximatepaths}.
\end{proof}

We will also need the following simple lemma.

\begin{lemma}\label{lem:planegraphsize}
The forest $\pl F_A = E(\pl Q_A)\setminus E(\So(\pl Q_A)\cup\No(\pl Q_A))$ has $(1+O(\eps))|A|$ vertices. 
\end{lemma}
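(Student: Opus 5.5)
Write $\pl F_A$ as the union of the object frames obtained by applying \Cref{thm:tracespanningtree} to each component $K$ of $B[A\cup Z]$, so it suffices to bound, per component, the number of frame vertices by $|V(K)|+16|Z_K\cap\bd_0|$ (item 3 of \Cref{thm:tracespanningtree}) and then sum. This gives
\[|V(\pl F_A)|\le |A|+|Z|+16|Z\cap\bd_0|\le |A|+O(|Z\cap \bd_0|),\]
since each selected endpoint $y\in Z\cap\bd_0$ comes with a path $P_a$ of length at most $2$, contributing at most $3$ vertices to $Z$. (Edges of $\wf B$ are excluded from $\pl F_A$, and vertices of $\wf B$ touched by the frame are counted among the terminals anyway, so they only add $O(|Z\cap\bd_0|)$ more.) The whole task therefore reduces to showing $|Z\cap\bd_0| = O(\eps)|A|$.

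For this I would argue component by component. Fix a component $K$ of $B[A\cup A^\bd]$; it contains some component of $B[A]$, which has at least $\lambda$ vertices by assumption. The selected vertices of $Z_K\cap\So(B)$ are pairwise at $B$-distance at least $1/\eps$, and likewise on $\No(B)$. Each selected $y$ is at distance at most $2$ from some $a\in A\cap V(K)$. Since the $y$'s are $1/\eps$-separated, the corresponding $a$'s are $(1/\eps-4)$-separated, so the balls $N_B(a,1/(2\eps)-2)$ are disjoint. Since $K$ is connected, each such ball around $a$ contains at least $\min(|V(K)|,\,1/(2\eps)-2)$ vertices of $K$: follow a path in $K$ out of the ball. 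All but $O(|Z_K|)$ of these are $A$-vertices, because $A^\bd$-vertices lie on paths of length $\le 2$ hanging off $A$ and each $a$ has constantly many such paths fixed (one per $a$). Since $|V(K)|\ge\lambda\gg 1/\eps$, we obtain $|Z_K\cap\bd_0|\cdot\Omega(1/\eps)\le O(|V(K)\cap A|)$. Summing over $K$ gives $|Z\cap\bd_0|=O(\eps|A|)$.

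The main obstacle is the disjoint-ball counting. The difficulty is that $A^\bd$-vertices are not in $A$, and a connecting path inside $K$ may leave through $A^\bd$. I would handle it by noting that each $A^\bd$ vertex is within distance $2$ of $A$, so a ball of radius $r$ in $K$ contains $\Omega(r)$ vertices of $A$ along any path of length $r$, since at most two consecutive path vertices can avoid $A$. The separation on $\So(B)$ and $\No(B)$ is measured separately, so the count is at most doubled. A second subtlety is that the frame vertices are subpolygons with possible duplication. Item 3 of \Cref{thm:tracespanningtree} already absorbs this, since it bounds $|V(G')|-|V(G)|$ directly.
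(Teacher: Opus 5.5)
Your proposal is correct in outline, and its first step matches the paper's. By item 3 of \Cref{thm:tracespanningtree}, everything reduces to showing that the number of point terminals, essentially $|Z\cap\bd_0|$, is $O(\eps|A|)$. The counting step is where you differ.
\begin{itemize}
\item The paper takes an Euler tour of a doubled spanning tree of each component $K$ of $B[A\cup A^\bd]$. The tour has length $2|K|-2$ and must travel $\Omega(1/\eps)$ between consecutive marked terminal objects, so $t_K\le 2+O(\eps|K|)$. The additive $2$ per component is then absorbed via $\cc(A)\le\eps|A|$.
\item You use a packing argument instead: disjoint $B$-balls of radius $\Theta(1/\eps)$ around the anchors $a$, each containing $\Omega(1/\eps)$ vertices of $K$ because $|V(K)|\ge\lambda$.
\end{itemize}
Both routes rest on the same hypothesis (components of $A$ have size at least $\lambda$), used at different points. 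Your separate treatment of $\So(B)$ and $\No(B)$ is more transparent than the paper's count of ``consecutive marked pairs''. Interleaved $\So$- and $\No$-marks need not be far apart, so the paper's count implicitly requires marking one side at a time. A small fix on your side: take radius $1/(2\eps)-3$ rather than $1/(2\eps)-2$, so the balls are genuinely disjoint.

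One step of yours is wrong as stated, though you do not need it. It is false that all but $O(|Z_K|)$ vertices of a ball lie in $A$. It is also false that a path in $K$ has at most two consecutive vertices outside $A$. The reason is that $A^\bd$ contains a path $P_a$ for \emph{every} $a$ adjacent to $\bd_1$, not only for the selected ones. Endpoints of different $P_a$'s can be consecutive vertices of $\So(B)$, so $K$ may run along the boundary through arbitrarily many $A^\bd$-vertices.

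The fix is to count all vertices of $K$ inside the balls. This gives $|Z_K\cap\bd_0|=O(\eps|V(K)|)$. Then use $|V(K)|\le 3|V(K)\cap A|$. This holds because every vertex of $A^\bd\setminus A$ lies on the fixed path $P_a$ of some $a\in A\cap V(K)$, and each such path contributes at most two vertices. This is exactly the paper's bound $|A^\bd|=O(|A|)$, and with it your argument is complete.
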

\begin{proof}
By \Cref{thm:tracespanningtree} the forest $|\pl F_A|$ has $|A| + O(t)$ vertices, where $t$ is the number of point terminals in the construction of $|\pl Q_A|$.
Thus it is sufficient to show that $t=O(\eps |A|)$.

Recall from the definition of $\pl Q_A$ that the point terminals of a connected component $K$ of $B[A\cup A^\bd]$ are placed on the boundaries of polygons in $Y$ that have pairwise distance at least $1/\eps$ in $B$. Consequently, if $u,v\in \So(B)\cap K$ then $\dist_{\So(B)}(u,v)\geq 1/\eps$. If $u,v$ have corresponding point terminals $\dot u, \dot v \in V(\pl F_A)$, then $d_{\pl F_A}(u,v) \geq 1/\eps - 2 = \Omega(1/\eps)$, as otherwise there would be a valid shortcut in $B$ between $u$ and $v$ formed by the parent objects along the shortest $\pl F_A$ path from $\dot u$ to $\dot v$, contradicting the definition of $\skel$. (In case of $\No(B)$, any path along $\No(B)$ and thus along $\No(\pl Q_A)$ is a shortest path.)

Suppose that $K$ has $t_K$ point terminals attached, with $t^\So_K$ on $\So(B)$ and $t^\No_K$ on $\No(B)$. We will show that $t_K\leq 2+O(\eps|K|)$. Without loss of generality assume that  $t^\No_K\ge 2$. Consider an Euler tour of the doubled spanning tree of $K$, i.e., an Euler tour of the graph where each edge of some spanning tree of $K$ is doubled.

Mark one occurrence of each object that contains a point terminal in this Euler tour. The Euler tour has length at least $\Omega(t_K/\eps)$, since it has distance at least $\Omega(1/\eps)$ between at least $t^\So_K -1 + t^\No_K -1 = t_K-2$ pairs of consecutive marked objects. As the length of the Euler tour is $2|K|-2= \Omega(t_K/\eps)$, we have $t_K = O(\eps|K|)$. Summing over all components, we get that
\[t = \sum_K t_K \leq \sum_K \big( 2 + O(\eps|K|)\big) = O(\eps|A\cup A^\bd|) + 2\cc(A\cup A^\bd)= O(\eps|A|) + O(\cc(A)),\]
as $|A^\bd|=O(|A|)$ and removing $A^\bd$ from $B[A\cup A^\bd]$ only increases the number of connected components. Notice that $\cc(A)\leq \eps|A|$ since each component of $A$ has size at least $\lambda=1/\eps^{7.5}$. This concludes the proof.
\end{proof}

\subsection{Stating and proving the structure theorem}

We will now state and prove our structure theorem for Steiner trees. Roughly, our theorem states that there exists a $(1+\eps)$-approximate solution where in each brick either no inner points are used, or when they are used, then they are from an optimum Steiner forest of some portal family of the brick. We need some definitions first.

\begin{definition}[Portals]
Given $G, \eps>0$ and a mortar graph $\pl M$, we designate $2/\eps^{7.5} = 2\lambda$ vertices of $\Perim(B)$ as \emph{portals} in each brick $B$ of $\pl M$. The portals $p_1,p_2,\dots$ on $\Perim(B)$ are placed at (almost) equal distances along the cycle $\Perim(B)$, that is, the distances between consecutive portals along $\Perim(B)$ differ by at most $1$.
\end{definition}

It follows that in any brick $B$ and any vertex $v$ of $\Perim(B)$ there is a portal $p_i$ such that $\dist_{\Perim(B)}(v,p_i)\leq \frac{1}{\lambda}|\Perim(B)|$. We will usually denote by $P_B$ the set of portals assigned in $B$.

\begin{definition}[Portal-respecting generator]
Given $G^\orig \in \cI_\alpha$, a mortar graph $\pl M$ and portals $P_B$ in each brick $B$ of $\pl M$, a portal-respecting generator is a pair $(S_0,\cF)$ where $S_0 \subset N_G(p(V(\pl M)),\lambda)$, and $\cF$ is a sequence of portal families $\cF_B$, one for each brick $B$ of $\pl M$, such that each $\cF_B$ consists of pairwise disjoint subsets of $P_B$ of size at least $2$. The pair  $(S_0,\cF)$ generates all sets $S$ that arise as the union of $S_0$ and for each brick $B$ and each portal set $U\in \cF_B$ an arbitrary optimum Steiner tree of $(B,U)$. The \emph{cost} of the generator $(S_0,\cF)$ is 
\[|S_0|+\sum_{B} \sum_{U\in \cF_B} \smt(B,U).\]
\end{definition}

\begin{definition}[Feasible portal-respecting generator]
Given $G^{\orig}\in \cI_\alpha$, a mortar graph $\pl M$, and a terminal set $T\subset V(G)^\orig$, the \emph{feasibility graph} of the portal-respecting generator $(S_0,\cF)$ is a graph $G(S_0,\cF)$ that consists of $G[S_0\cup \bigcup_{B \text{ brick}}\bigcup_{U\in \cF_B} U]$ together with the complete graphs on each $U$ (where $U\in \cF_B$ and $B$ is a brick). The portal-respecting generator $(S_0,\cF)$  is \emph{feasible} if $G(S_0,\cF)$ is connected and its vertex set (more precisely,\footnote{Recall that for a set $U$ of objects in $V(G)$ possibly containing subpolygons, the set $U^\orig$ consists of the corresponding original objects of $V(G^{\orig})$.} $V(G(S_0,\cF))^{\orig}$) contains $T$.
\end{definition}

We note that in the above definition the size of every generated set is at most the cost of the generator. We are now ready to state our main structure theorem for Steiner Trees.

\begin{theorem}[Structure theorem for Steiner Trees]\label{thm:steinerstruct}
Let $G$ be an intersection graph of similarly sized connected fat objects, and suppose a terminal set $T\subset V(G)$ and $\eps>0$ are given. Then in polynomial time we can construct a mortar graph $\pl M$ of $G$ for the terminal set $T$ and corresponding portal set, such that there exists a feasible portal-respecting generator $(S_0,\cF)$ of cost at most $(1+O(\eps))\smt(G,T)$.
\end{theorem}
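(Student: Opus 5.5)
We first convert the instance to an $\alpha$-standard graph via \Cref{thm:atalakitas} and apply \Cref{lem:sparseSteiner}, so every cell has few objects. Then we build the mortar graph $\pl M$ of \Cref{thm:mortarprops} and place $2\lambda$ portals per brick. Fix an optimum Steiner tree $H$ with $|H|=\opt$. Contract (buy) all objects of $\Ea(B)\cup\We(B)$ over all bricks. By \ref{it:M_eastwestbound} this costs $O(\eps)\opt$. We also buy $p(V(\pl M))$ restricted to what is needed; rather than buying all of it, we let $S_0$ collect the part of $H$ lying in $N_G(p(V(\pl M)),\lambda)$ together with boundary subpaths we add later.

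\textbf{Decomposition of $H$ brick by brick.} For a brick $B$, consider the inner vertices of $B$ used by $H$. Small bricks have no inner vertices, so their entire contribution goes into $S_0$. For a large brick, let $A$ be the union of the components of $H\cap V(B)\setminus\bd_1$ that contain inner vertices, together with the east/west parts. Each such component reaches from an inner vertex to the outer layer, so it has size at least $\lambda$, and $A$ meets the hypotheses of \Cref{sec:customplanar}. Build $\pl Q_A$; it satisfies (P1)--(P3) by \Cref{lem:BFconditions}, and $|\pl F_A|\le(1+O(\eps))|A|$ by \Cref{lem:planegraphsize}. Apply \Cref{thm:blackboxplanar} to obtain a forest $\pl F$ with $o(1/\eps^{5.5})$ joining vertices on the north and south boundary. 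By \Cref{rem:intervalcount}, its number of maximal boundary intervals is $o(1/\eps^{5.5})+\cc(A)\le o(1/\eps^{5.5})+\eps|A|$.

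\textbf{Replacing $A$.} Replace $A$ in $H$ by the parents of $\pl F$. Connectivity is preserved: (F1) keeps all boundary-to-boundary connections of $\pl F_A$, the point terminals $Z$ are within $1/\eps$ of every attachment of $A$ along $\So(B)$ and $\No(B)$, and we pay $O(1/\eps)$ per extra attachment, charged to $\eps|A|$. Then snap each joining vertex of $\pl F$ to its nearest portal along $\Perim(B)$, at cost $|\Perim(B)|/\lambda$ each. Summed over bricks this gives $o(1/\eps^{5.5})\cdot O(\opt/\eps)/\lambda=O(\eps)\opt$ by \ref{it:M_northsouthbound}, \ref{it:M_eastwestbound} and $\lambda=1/\eps^{7.5}$. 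Each component of $\pl F$ then connects a set $U$ of portals; put $U$ into $\cF_B$, merging sets that share a portal so they stay disjoint. Replace that component by $\smt(B,U)$, which is no larger. Everything else, namely the parts of $H$ within distance $\lambda$ of the mortar graph, the boundary intervals, and the snapping paths, goes into $S_0$. This keeps $S_0\subset N_G(p(V(\pl M)),\lambda)$, because any vertex of $H$ outside that neighborhood is inner and therefore lies in $A$.

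\textbf{Main obstacle: feasibility.} We must check that the feasibility graph $G(S_0,\cF)$ is connected and contains $T$. Terminals lie on $\pl M$ by \ref{it:M_terminals}, so they are in $S_0$. The real difficulty is that components of $A$ may also have attached to the boundary at outer/$\bd_2$ vertices that are neither portals nor in $Z$, and those connections were used by the rest of $H$. I would handle this by keeping all of $H\cap N(\bd_1,\lambda)$ in $S_0$ and checking that each removed component of $A$ touched that kept part only through paths $P_a$ ending within $1/\eps$ of a kept terminal point of $Z$. Any portal-family tree together with the kept boundary intervals then reproduces every connection. Together with the double counting allowed by the generator cost, this gives total cost $(1+O(\eps))\opt$.
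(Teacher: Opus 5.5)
Your architecture is the paper's: the mortar graph, $\pl Q_A$ built from the optimum's interior components that contain inner vertices, the simplification of \Cref{thm:blackboxplanar}, snapping joining vertices to portals, and everything else into $S_0$. However, the step you flag as the main obstacle is exactly where the argument is missing, and the fix you sketch fails as stated. If ``keep all of $H\cap N(\bd_1,\lambda)$ in $S_0$'' includes the outer portions of the components of $A$, those vertices are paid twice: once in $S_0$ and once inside $\pl F$, whose size is already $(1+O(\eps))|A|$. A component of $A$ may lie almost entirely in the $\lambda$-layer with only a thin spike to an inner vertex, so this costs an extra $\Theta(|A|)$. The paper keeps only $\Sopt^{\pl M}$ and the $\Aout$ components in $S_0$, and discards the $\Ain$ components completely. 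Likewise, ``pay $O(1/\eps)$ per extra attachment, charged to $\eps|A|$'' fails. $A$ can touch boundary objects of $H$ at $\Theta(|A|)$ places, and even your bound $\cc(A)\le\eps|A|$ multiplied by $1/\eps$ gives $\Theta(|A|)$. Separately, drop the call to \Cref{lem:sparseSteiner}: it replaces $T$ by some $T'\subseteq T$, so your generator would only be feasible for $T'$.

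The missing idea is the paper's Addition~3 together with its connectivity argument.
\begin{itemize}
\item Each net point $\dot z\in Z_K$ is a point terminal of $\pl F_A$ connected to the other net points of its component. By (F1) it survives on $\pl F$, so it lies on some maximal boundary interval of $\pl F$.
\item One extends every maximal subpath of $\So(B)$ or $\No(B)$ used by $\pl F$ by $6/\eps$ edges in both directions.
\item Take an attachment $a\in\bd_1$ of a component of $\Ain$; it lies in $\Sopt^{\pl M}\subseteq S_0$. Let $\bar a\in\So(B)$ be a neighbor of $a$. Maximality of the net gives $z_a$ with $\dist_B(z_a,a)\le 1+1/\eps$. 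Property \ref{it:M_approximatepaths} converts this into $\dist_{\So(B)}(z_a,\bar a)\le(1+\eps)(2+1/\eps)\le 6/\eps$, so $\bar a$ lies inside the extended interval through $z_a$. On $\No(B)$ the shortest-path property does the same.
\item Two attachments $a,b$ of the same component are then linked via $z_a,z_b$. These are connected in $\pl F$ by (F1), either along a pure boundary interval kept in $S_0$, or through joining vertices, snapping paths and a portal clique.
\end{itemize}
The cost is charged per maximal interval, not per attachment. By (F2) and \Cref{rem:intervalcount} there are at most $\cc(\Ain)+2+o(1/\eps^{5.5})$ such intervals. You need $\cc(\Ain)=O(|\Ain|/\lambda)$, which follows from components having size at least $\lambda-O(1)$ and is much stronger than $\eps|\Ain|$. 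You also need $|\Ain|\ge\lambda-O(1)$ whenever $\Ain\neq\emptyset$, which follows from the same size bound. With both, the $O(1/\eps)\cdot\cc(\bar F)$ extension cost is $O(\eps|\Ain|)$.
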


\begin{proof}
We use \Cref{thm:mortarprops} to compute a mortar graph $\pl M$ that satisfies the properties \ref{it:M_terminals}-\ref{it:M_thickbrick}.
Let $\Sopt$ denote the vertex set of an optimum Steiner tree for $(G,T)$, and let $\Sopt^{\pl M}$ denote the vertices in $\Sopt$ whose object intersects the drawing of $\pl M$. Let us now fix a brick $B$. We denote by $\Sopt^B$ the set of vertices in $\Sopt$ whose object lies in the interior of $\re B$, i.e., those that intersect $\re B$ but are disjoint from the boundary curve $\bd \re B$. Let $\Ain$ denote the vertices of the connected components of $\Sopt^B$ that contain at least one inner vertex, and we denote by $\Aout$ the other connected components.

Recall that $\bd_1$ denotes the set of objects in $V(B)$ that intersect the curve $\gamma(\No(\wf B))$ or the curve $\gamma(\So(\wf B))$.
If $\Ain\neq \emptyset$, then for each connected component $K$ of $\Ain$ that has a neighboring vertex $v$ in $\Sopt$ which intersects the eastern boundary $\gamma(\Ea(\wf B))$ but it is disjoint from $\gamma(\No(\wf B))\cup \gamma(\So(\wf B))$ there exists some $v'\in V(B)\setminus \bd_1$ such that $v'$ connects $K$ to $\Ea(B)$. Let $\Ain'\subset V(B)\setminus \bd_1$ denote the set that we get by adding to $\Ain$ such vertices, i.e., connecting components of $\Ain$ to $\Ea(B)$ and $\We(B)$ in this manner whenever possible. Observe that $|\Ain'|\leq |\Ain|+2\cc(\Ain)$.

Define the planar graph $\pl Q_A$ as above as defined in \Cref{sec:customplanar} based on the set $A:=\Ain'\cup V(\Ea(B)) \cup V(\We(B)) \setminus \bd_1$. By \Cref{lem:BFconditions} the graph $\pl Q_A$ satisfies the properties (P1), (P2) and (P3), therefore by \Cref{thm:blackboxplanar} applied to $\pl Q_A$ and $\pl F_A = \pl Q_A \setminus (\So(\pl Q_A)\cup\No(\pl Q_A))$, there is a forest $\pl F$ satisfying the properties (F1), (F2), and (F3).
For each connected component $\pl K$ of $\pl F$ and each joining vertex $\dot v\in V(\pl K) \cap N(V(\No(\pl Q_A))\cup V(\So(\pl Q_A)))$, we add the portal of $\Perim(B)$ closest to $p(\dot v)$ along $\Perim(B)$ to a set $P_{\pl K}$. Finally, we add the set $P_{\pl K}$ to the family $\cF_B$. Note that if $\Ain=\emptyset$, then $\pl F$ is the empty graph and $\cF_B=\emptyset$. Let $\cF$ be the sequence of the created portal families.

Next, we add the following sets to $S_0$.
\begin{enumerate}
\item \label{it:boundaryneighbor} We add all vertices of $\Sopt^{\pl M}$ to $S_0$.
\item \label{it:outercomp} For each brick $B$, if $X$ is the vertex set of a connected component of $\Sopt^B$ where $X\subseteq A_{out}$, then we add $X$ to $S_0$.
\item \label{it:interextend} For each brick $B$, if $X$ is the vertex set of a maximal subpath of $\So(B)$ or $\No(B)$ such that the corresponding path $\pl X$ is a subpath of $\pl F$, then we extend $X$ with $6/\eps$ edges in both directions along $\So(B)$ or $\No(B)$, and add all of these vertices (i.e., both $X$ and the extension) to~$S_0$.
\item \label{it:eastwest} For each brick $B$, we add $V(\We(B))$ and $V(\Ea(B))$  to $S_0$.
\item \label{it:toportal} For each brick $B$ and for each joining vertex $\dot v \in V(\pl F)\cap V(\So(\wf B))\cup V(\No(\wf B))$ we add to $S_0$ the subpath of $\Perim(B)$ connecting $p(\dot v)$ to the nearest portal.
\end{enumerate}

We claim that $(S_0,\cF)$ is a feasible portal-respecting generator of the desired cost.

\subsubsection*{Proving that $(S_0,\cF)$ is feasible.}

To prove the feasibility, we must ensure that the feasibility graph $G(S_0,\cF)$ is connected and contains a subpolygon of each object in $T$.

Since terminals have corresponding vertices in $\pl M$, a path connecting two (subpolygons of) terminals in $G(S_0,\cF)$ can be decomposed into subpaths whose internal vertices are in the interiors of individual bricks, each of which starts and ends with an object that intersects the curve of the brick boundary $\bd \re B$. Thus it would be sufficient to prove that the endpoint pairs of such subpaths remain connected in in $G(S_0,\cF)$. In fact, in order to show connectivity, within $B$, we simply consider all the subpolygons of each object of $G(S_0,\cF)$ that fall in the region $\re B$ as part of $G(S_0,\cF)$: this does not influence the connectedness of $G(S_0,\cF)$. Let $B(S_0,\cF)$ denote the resulting graph; we note that $B(S_0,\cF)$ contains a clique on each $U\in \cF_B$ . (Note that we will not use these objects when we will bound the cost of $(S_0,\cF)$.)

Consider a maximal subpath $W$ of $\Sopt$ whose internal vertices do not intersect the boundary curve. Let $W=(a=a_0,a_1,a_2,\dots,a_{t},a_{t+1}=b)$ be the corresponding path in $\re B$ where we have restricted the first and last vertex to $\re B$, i.e., $a,b\in V(B)$ intersect $\bd B$ while $\{a_1,\dots, a_t\}\subseteq \Sopt^B$. Our goal now is to prove that $a$ and $b$ are connected by $B(S_0,\cF)$. We note here that the subpolygons $a,b$ are included in $B[S_0]$ as the original $S_0$ contains the original polygons containing $a$ and $b$ by Addition~\ref{it:boundaryneighbor}. If $\{a_1,\dots, a_t\}$ is contained in a connected component of $\Sopt^B$ that has no inner vertex, then by Addition~\ref{it:outercomp} we have $W\subseteq S_0$, thus $a$ and $b$ are connected in $S$. Otherwise the component of $B[A]$ containing $\{a_1,\dots, a_t\}$ must contain some inner vertex. Let $K_W$ be this a component of $\Sopt^B$.

If $a$ intersects $\gamma(\Ea(\wf B))$ but it is disjoint from $\gamma(\No(\wf B))\cup \gamma(\So(\wf B))$, then by Addition~\ref{it:eastwest} it is connected to the brick corners on $\Ea(B)$ in $S$, and by definition, it is also connected to these corners in its component of $A$. Thus by extending $W$ to a brick corner, we will henceforth assume that $a\in \bd_1$. The case where $a$ intersects $\gamma(\We(\wf B))$ is handled analogously. Similarly, we will assume $b \in \bd_1$. It remains to show that $a$ and $b$ are connected in $B(S_0,\cF)$.

Suppose now that $a$ intersects the curve $\gamma(\So(\pl B))$. Recall from the definition of $\pl Q_A$ that $A$ is extended by connections to $\bd_0$ with the set $A^\bd \subseteq \bd_1$. Let $K$ be the connected component of this extended set which contains $K_U$. By the definition of $\pl Q_A$, there is a vertex $z_a\in Z_K\cap \So(B)$ within distance at most $1+1/\eps$ from $a$ in $B$, as otherwise the connection from $a$ to a neighbor in $\So(B)$ could have been added to $Z_K$ to obtain a larger independent set. Analogously, if $a$ intersects the curve $\gamma(\No(\pl B))$, then there is a vertex $z_a\in Z_K\cap \No(B) \cap N_B(a,1+1/\eps)$. We define a vertex $z_b$ in the same manner within distance $1+1/\eps$ from $b$.

Consequently, $z_a$ and $z_b$ are connected in $K$. Recall that they have corresponding vertices $\dot z_a,\dot z_b \in V(\pl F_A)$ which are connected in $\pl F_A$. 
By property (F1) in \Cref{thm:blackboxplanar}, we have that $\dot z_a$ and $\dot z_b$ are also connected also in $\pl F$, i.e., they are in the same component $\pl K$ of $\pl F$. If $\pl K$ is a subpath of $\So(\wf B)$ or $\No(\wf B)$, then $z_a$ and $z_b$ are connected in $S_0$ (and thus in $S$) by Addition~\ref{it:interextend}.
Otherwise, there is a joining vertex on the path of $\wf B\cap \pl F$ containing $\dot z_a$, thus $z_a$ is connected to some portal in $P_{\pl K}$ by additions~\ref{it:interextend} and~\ref{it:toportal}; similarly, $z_b$ is connected to some portal in~$P_{\pl K}$.
These portals are contained in the set $P_{\pl K}\in \cF_B$ and thus they are directly connected by the clique on $P_{\pl K}$ inside $B(S_0,\cF)$. We conclude that $z_a$ and $z_b$ are connected in $B(S_0,\cF)$.

It remains to show that $z_a$ is connected to $a$ in $B(S_0,\cF)$. An analogous argument shows that $z_b$ is connected to $b$. Suppose again that $a$ intersects the curve $\gamma(\So(\pl B))$, as the case when it intersects $\gamma(\No(\pl B))$ can be proven with the same arguments. Since $a\in S$, it is sufficient to show that it has a neighbor $\bar a\in \So(B)$ such that $z_a, \bar a$ are connected in $S$. Notice that $\dot z_a$ is on some maximal subpath of $\pl F\cap \So(\wf B)$. Then by Addition~\ref{it:interextend} we have that any vertex of $\Perim(B)$ that is within distance $4/\eps$ from $z_a$ along $\So(B)$ is connected to $z_a$ in $S$. In particular, we have that $z_a$ is connected to $\bar a$, since by property~\ref{it:M_approximatepaths} we have $\dist_{\So(B)}(z_a, \bar a)\leq (1+\eps)\dist_B(z_a,\bar a) \leq 2\cdot (2+1/\eps)\leq 6/\eps$. This concludes the proof of feasibility.

\subsubsection*{Bounding the cost of $(S_0, \cF)$.}

We prove that the cost of $(S_0, \cF)$ is $(1+O(\eps))\opt$.
In the brick $B$ we have
\begin{equation}\label{eq:Sopt}
|V(\Sopt^B)| = |\Ain| + |\Aout|.
\end{equation}
Let $\mathrm{opt}(\cF_B)$ denote the total number of vertices in the optimal trees for the portal family $\cF_B$, and let $\bar{F} = \pl F\cap (\So(\wf B)\cup \No(\wf B))$. We denote by $\mathrm{join}(\pl F)$ the number of joining vertices of~$\pl F$ to $\So(\wf B)\cup \No(\wf B)$.
We can bound the cost of $(S_0,\cF_B)$ as follows.
\begin{align*}
\mathrm{cost}(S_0,\cF) \leq \; |\Sopt^{\pl M}| + \sum_B \big(&\mathrm{opt}(\cF_B) + |S_0\cap B|\big)\\
< |\Sopt^{\pl M}| + \sum_B \Big( &|V(\pl F\setminus \bar{F})| + \join(\pl F)\cdot \eps^{7.5}|\Perim(B)|\\
& + |A_\myout|\\
& + 2\cdot \frac{6}{\eps}\cdot \cc(\bar F) + |V(\bar F)|\\
& + |V(\We(B))| + |V(\Ea(B))|\\
& + \join(\pl F)\cdot \eps^{7.5}|\Perim(B)| \Big)
\end{align*}
Let $\Sigma_B$ denote the summand for $B$ above.
Notice that $|V(\pl F\setminus \bar{F})| + |V(\bar F)| = V(\pl F)+\join(\pl F)$, and $\join(\pl F)\leq \min(\cc(\bar F),o(1/\eps^{5.5}))$, thus we have 
\begin{align*}
\Sigma_B & < |\Aout|\\
& \quad + |V(\pl F)| + \frac{13}{\eps}\cdot \cc(\bar F)\\
& \quad + |V(\We(B))| + |V(\Ea(B))|\\
& \quad + O(\eps^2 |\Perim(B)|)
\end{align*}

\begin{claim}
$|V(\pl F)| + \frac{13}{\eps}\cdot \cc(\bar F) \leq (1+O(\eps))(|\Ain|+ |V(\We(B)|+|V(\Ea(B))|)$.
\end{claim}

\begin{claimproof}
The claim is trivial if $\Ain=\emptyset$, as that implies that $\pl F$ is the empty graph and $\bar{F}$ is also empty.
Suppose now that $\Ain\neq \emptyset$. To show the claim, consider first a component $K_A$ of $\pl F_A$. Then any pair of vertices on $\So(\wf B)\cup\No(\wf B)$ that are connected by $\pl F_A$ are also connected by some component $\pl K$ of $\pl F$. We assign to $K_A$ the component $\pl K$; notice that this is an injective assignment. Since a component of $\bar{F}$ can contain at most one joining vertex, we have that
\[\cc(\bar F) \leq \cc(\pl F) + \join(\pl F) \leq \cc(\pl F_A) + o(1/\eps^{5.5}) \leq \cc(\Ain') + 2 + o(1/\eps^{5.5}) \leq \cc(\Ain) + 2 + o(1/\eps^{5.5}),\]
since $\cc(\Ain')\leq \cc(\Ain)$ and $\pl F_A$ contains $\Ea(B)$ and $\We(B)$ which may form connected components disjoint from those of $\Ain'$. Recall that each connected component of $\Ain$ has size at least $\lambda - O(1)$, since each of them must connect an inner vertex to $\bd_1$. Thus each connected component of $\Ain$ has size $\Omega(\lambda)=\Omega(1/\eps^{7.5})$, and we get 
\begin{equation}\label{eq:ccFbound}
\frac{13}{\eps}\cdot \cc(\bar F) \leq \frac{13}{\eps}\cdot(\cc(\Ain) + 2 + o(1/\eps^{5.5}))\leq O(\eps^{6.5}|\Ain|) + o(\eps|\Ain|)\leq O(\eps|\Ain|).
\end{equation}
Next, we bound $|V(\pl F)|= |E(\pl F)| + \cc(\pl F)$. By \Cref{thm:blackboxplanar} and \Cref{lem:planegraphsize} we have
\[|E(\pl F)| \leq (1+O(\eps))|E(\pl F_A)| < (1+O(\eps))\cdot (|\Ain'|+ |V(\Ea(B)|+|V(\We(B))|)+O(\cc(\Ain).\]
Recall that $|\Ain'|\leq |\Ain|+2\cc(\Ain)= (1+O(\eps^{7.5})) |\Ain|)$, thus $(1+O(\eps))\cdot |\Ain'| \leq (1+O(\eps))\cdot |\Ain|$.
Since \eqref{eq:ccFbound} implies $\cc(\pl F) \leq \cc(\bar F) = O(\eps|\Ain|)$, we can bound $|V(\pl F)|$ as
\[|V(\pl F)| = |E(\pl F)| + \cc(\pl F) = (1+O(\eps))(|\Ain| + |V(\Ea(B)|+|V(\We(B))|).\]
This concludes the proof of the claim.
\end{claimproof}

Substituting this bound, we get
\begin{align*}
\Sigma_B & < |\Aout| + (1+O(\eps))|\Ain|\\
& \quad + (2+O(\eps))\cdot(|V(\We(B))| + |V(\Ea(B))|)\\
& \quad + O(\eps^2 |\Perim(B)|)
\end{align*}
By Equation~\eqref{eq:Sopt} and simplifying further, we have
\begin{equation}\label{eq:Sonebrick}
\Sigma_B < (1+O(\eps))|\Sopt^B|+ O\big(|V(\We(B))|+|V(\Ea(B))| + \eps^2 |\Perim(B)|\big).
\end{equation}
Since $\smt(G,T)=|\Sopt|= |\Sopt^{\pl M}|+ \sum_B |\Sopt^B|$, we get the following for the cost:
\begin{align*}
\mathrm{cost}(S_0,\cF) &\leq \; |\Sopt^{\pl M}| + \sum_B \Sigma_B\\
&\leq (1+O(\eps))\smt(G,T) + \sum_B O\big( |V(\We(B) \cup \Ea(B))| + \eps^2|\Perim(B)| \big).
\end{align*}
Property \ref{it:M_eastwestbound} bounds the length of the eastern and western paths, that is, $\sum_B |V(\We(B) \cup \Ea(B))| = O(\eps)\smt(G,T)$. Together with Property \ref{it:M_northsouthbound} they bound the perimeter of $B$ by
\[\sum_B \eps^2|\Perim(B)| \leq \eps^2 \sum_B |V(\We(B) \cup \No(B) \cup \Ea(B) \cup \So(B))| = O(\eps)\cdot \smt(G,T).\] Thus $\mathrm{cost}(S_0,\cF)\leq (1+O(\eps))\smt(G,T)$, which concludes the proof.
\end{proof}

Next, we simplify our structure theorem further by slicing the Steiner trees generated by our portal-respecting generator into shortest paths whose vertices come from a restricted vertex set. First, we define the variant of a generator based on shortest paths.

\begin{definition}[Path generator]
Given $G^\orig \in \cI_\alpha$ and a mortar graph $\pl M$ a path generator is a pair $(S_0,\cW)$ where $S_0 \subset N_G(p(V(\pl M)),\lambda)$, and $\cW$ is a sequence of triples $(B,u,v)$ where $B$ is a brick of $\pl M$ and $u,v\in B$. The pair $(S_0,\cW)$ generates all sets $S$ that arise as the union of $S_0$ and for each triplet $(B,u,v)$ an arbitrary shortest path from $u$ to $v$ in $B$. Let $V_\cW$ denote the set of vertices appearing in $\cW$.
The \emph{cost} of the generator $(S_0,\cQ)$ is 
\[|S_0|+|V_\cW|+\sum_{(B,u,v)\in \cQ} (\dist_B(u,v)-1).\]
\end{definition}

We observe that the size of a set $S$ generated from $(S_0,\cW)$ is at most the cost of the generator.

\begin{definition}[Feasible path generator]
Let $(S_0,\cW)$ be a path generator for $G^{\orig}\in \cI_\alpha$, a mortar graph $\pl M$, and let $T\subset V(G^\orig)$ be a given terminal set.  The \emph{feasibility graph} of the path generator $(S_0,\cW)$ is a graph $G(S_0,\cW)$ that consists of $G[S_0\cup V_\cW]$ together with the edges $uv$ for each $(B,u,v)\in \cQ$. The path generator $(S_0,\cW)$ is \emph{feasible} if $G(S_0,\cW)$ is connected and its vertex set (more precisely,\footnote{Recall that for a set $U$ of objects in $V(G)$ possibly containing subpolygons, the set $U^\orig$ consists of the corresponding original objects of $V(G^{\orig})$.} $V(G(S_0,\cW))^{\orig}$) contains $T$.
\end{definition}

With this definition we can state a modified structure theorem for path generators.

\begin{theorem}[Structure theorem with path generators]\label{thm:structure_pathgen}
Let $G$ be an intersection graph of similarly sized connected fat objects, and suppose a terminal set $T\subset V(G)$ and $\eps>0$ are given. Then in $2^{O(1/\eps^{7.5})}\poly(n)$ time we can construct a mortar graph $\pl M$ of $G$ for the terminal set $T$ and a vertex set $W\subset V(G)$ with the following properties:
\begin{enumerate}[label=(\roman*)]
\item $|W \cap B|=2^{O(1/\eps^{7.5})}$ for each brick $B$ of $\pl M$
\item For each $w\in W\cap B$ we have $\dist_B(w,\Perim(B))\leq |\Perim(B)|$.
\item there exists a feasible path generator $(S_0,\cW)$ with
\begin{enumerate}[label=(\alph*),nosep]
\item cost at most $(1+O(\eps))\smt(G,T)$
\item $V_\cW\subset W$
\item $|V_\cW \cap B|=O(1/\eps^{7.5})$ for each brick $B$ of $\pl M$.
\item For each brick $B$ there are at most $O(1/\eps^{7.5})$ triplets in $\cW$ involving $B$.
\end{enumerate}
\end{enumerate}
\end{theorem}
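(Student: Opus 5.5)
We will derive the statement from \Cref{thm:steinerstruct}, applied with the same mortar graph $\pl M$ and portal sets $P_B$ ($|P_B|=2\lambda=O(1/\eps^{7.5})$). The idea is that each Steiner tree $\smt(B,U)$ for $U\subseteq P_B$ decomposes into shortest paths between its branching vertices and leaves, and these vertices can be fixed in advance.

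\textbf{Constructing $W$.} For every brick $B$ and every subset $U\subseteq P_B$ with $|U|\ge 2$, we compute an optimum Steiner tree $Y_{B,U}$ of $(B,U)$ with the Dreyfus--Wagner algorithm~\cite{dw71} in $3^{|U|}\poly(n)$ time. There are $2^{2\lambda}$ subsets per brick, so this takes $2^{O(1/\eps^{7.5})}\poly(n)$ time in total. Contracting the degree-$2$ non-portal vertices of $Y_{B,U}$ leaves a tree whose vertex set $K_{B,U}$ consists of the portals and branching vertices, so $|K_{B,U}|\le 2|U|$. We set
\[
W\cap B=\bigcup_{U\subseteq P_B} K_{B,U},
\]
which gives $|W\cap B|\le 2^{2\lambda}\cdot 4\lambda=2^{O(1/\eps^{7.5})}$ and proves (i).

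For (ii), the boundary $\Perim(B)$ connects $U$, so $|Y_{B,U}|\le |\Perim(B)|$. Every vertex of $Y_{B,U}$ lies on a path inside $Y_{B,U}$ to a portal, hence it is within distance $|\Perim(B)|$ of $\Perim(B)$.

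\textbf{Converting the portal-respecting generator.} Take the feasible portal-respecting generator $(S_0,\cF)$ of \Cref{thm:steinerstruct}. For every $U\in\cF_B$, each edge $xy$ of the contracted tree on $K_{B,U}$ corresponds to a subpath of $Y_{B,U}$ between $x$ and $y$. This subpath is a shortest $x$--$y$ path in $B$, because otherwise replacing it would shorten the optimum tree. We add the triple $(B,x,y)$ to $\cW$.

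Feasibility is preserved. The edges $xy$ connect $K_{B,U}\supseteq U$, which replaces the clique on $U$ in the feasibility graph, and $S_0$ is unchanged.

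The cost also matches. For one $U$ we pay
\[
|K_{B,U}|+\sum_{xy}(\dist_B(x,y)-1)=|V(Y_{B,U})|=\smt(B,U),
\]
so the total cost is at most the cost of $(S_0,\cF)$, which is $(1+O(\eps))\smt(G,T)$. This gives (iii)(a) and (b).

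\textbf{Counting per brick.} The sets in $\cF_B$ are pairwise disjoint subsets of $P_B$, so $\sum_{U\in\cF_B}|U|\le 2\lambda$. Hence $\sum_{U\in\cF_B}|K_{B,U}|\le 4\lambda=O(1/\eps^{7.5})$. The number of triples involving $B$ is at most $\sum_U(|K_{B,U}|-1)$, which is also $O(1/\eps^{7.5})$. This gives (iii)(c) and (d).

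\textbf{Main obstacle.} The bookkeeping of vertices shared between different trees, or with $S_0$, only reduces the cost, since the cost counts $V_\cW$ once. The subtle point is therefore the requirement $S_0\subset N_G(p(V(\pl M)),\lambda)$, which is inherited unchanged from \Cref{thm:steinerstruct}. A second point to check is that the trees must be computed in $B$ itself, not in $G$, so that the paths stay inside the brick.
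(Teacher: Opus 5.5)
Your proposal is correct and follows the paper's proof essentially step for step. The shared outline is: compute canonical Dreyfus--Wagner trees for every portal subset in every brick, take $W$ as the union of their leaf and branching vertices, and convert the portal-respecting generator of \Cref{thm:steinerstruct} into triples along the maximal degree-$2$ subpaths, which are shortest paths in $B$ by optimality. Two of your details are in fact more careful than the paper. First, you keep every portal of $U$ in $K_{B,U}$, rather than only the degree-$\neq 2$ vertices $V_{\neq 2}(Q)$ as the paper does, which guarantees that the virtual edges connect all of $U$ in the feasibility graph. Second, you spell out that the pairwise disjointness of the sets in $\cF_B$ is what gives (iii)(c) and (iii)(d), a step the paper leaves implicit.
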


\begin{proof}
Let $\pl M$ be a mortar graph of $G$ for the terminal set $T$ that is computed in polynomial time using \Cref{thm:mortarprops}. For each brick $B$, let $P_B$ denote its portal set, which we compute in polynomial time based on $\pl M$. For any subset $Q\subset P_B$ let $\mathrm{Can}(Q)$ be a \emph{canonical tree} which is an optimal Steiner tree for the instance $(B,Q)$ that we compute using the algorithm of Dreyfus and Wagner~\cite{dw71}. Note here that $\Perim(B)$ is a feasible solution to the instance $(B,Q)$ and thus $|\mathrm{Can}(Q)|\leq |\Perim(B)|$. The algorithm of Dreyfus and Wagner~\cite{dw71} finds a solution in $3^{|Q|}\poly(|V(B)|) = 2^{O(\lambda)}\poly(|V(B)|)$ time. Computing all the canonical trees for each of the $2^{O(\lambda)}$ portal sets in each brick thus takes $2^{O(\lambda)}\poly(n)$ time.

Next, for each canonical tree $\mathrm{Can}(Q)$ fix a spanning tree $SP(Q)$ of $B[\mathrm{Can}(Q)]$. Let $V_{\neq 2}(Q)$ denote the set of vertices in $SP(Q)$ whose degree is not $2$ (i.e., this is the set of leaves and branching vertices of the spanning tree). Let $W$ be the union of the sets $V_{\neq 2}(Q)$ for all sets $Q\subset P_B$ with $|Q|\geq 2$ over all bricks $B$ of $\pl M$. Since each tree $SP(Q)$ has at most $|Q|\leq |P_B| \leq 2\lambda$ leaves, we have that $|V_{\neq 2}(Q)|\leq 2\lambda + 2\lambda-1 <4\lambda$. Since there are at most $2^{2\lambda}$ subsets $Q$, there are altogether $4\lambda \cdot 2^{2\lambda} = 2^{O(\lambda)}$ vertices of $B$ that have been added to $W$, i.e., $|W\cap B|= 2^{O(\lambda)}$ and $W$ satisfies (\textit{i}). Notice moreover that $\mathrm{Can}(Q)$ has a subpath connecting each $w\in W\cap B$ to $\Perim(B)$, so $\dist_B(w,\Perim(B))\leq |\mathrm{Can}(Q)|\leq |\Perim(B)|$, thus (\textit{ii}) holds.

\Cref{thm:steinerstruct} implies that there exists a feasible portal-respecting generator $(S_0,\cF)$ of cost at most $(1+O(\eps))\smt(G,T)$. Let $S$ be a set generated by $(S_0,\cF)$. Recall that $|S|$ is at most the cost of $(S_0,\cF)$ and its connectivity directly follows from the fact that $G(S_0,\cF)$ is feasible, as the canonical trees added to $S$ ensure the connections among the vertices of each set $U\in \cF_B\in \cF$. Moreover, since $S^\orig\supseteq V(G(S_0,\cF))^\orig$, we have that $T\subset S^\orig$. Thus $S^\orig$ is a feasible solution to the Steiner tree instance $(G,T)$ of cost at most $|S^\orig|\leq |S| = (1+O(\eps))\smt(G,T)$.

Consider now a canonical tree $\mathrm{Can}(Q)$ for some $Q\in \cF_B$ in a brick $B$. We can decompose $SP(Q)$ into maximal paths $\pi^Q_1,\pi^Q_2,\dots $ whose internal vertices have degree $2$. Consequently, the endpoints of each $\pi^Q_i$ are from $V_{\neq 2}(Q)$. The number of paths created is $|V_{\neq 2}(Q)|-1$ and recall that $|V_{\neq 2}(Q)|<4\lambda$, thus there are less than $4\lambda$ paths. Notice moreover that each path $\pi^Q_i$ is a shortest path in $B$: indeed, otherwise it could be exchanged for a strictly shorter path in $SP(Q)$ and would create a Steiner tree for the instance $(B,Q)$ with fewer vertices, contradicting the minimality of $\mathrm{Can}(Q)$. For each $\pi^Q_i$, we add the triple $(B,u_i,v_i)$ to the set $\cW$ where $u_i,v_i\in V_{\neq 2}(Q)$ are the end vertices of $\pi^Q_i$. Observe that the construction implies (\textit{ii})(b),  (\textit{ii})(c) and (\textit{ii})(d).

To prove (\textit{ii})(a), notice that the path of $SP(Q)$ connecting $u_i$ to $v_i$ contains $\dist_B(u_i,v_i)$ internal vertices. Consequently, 
\[\smt(B,Q)=|V(\mathrm{Can}(Q)|= |V_{\neq 2}(Q)| + \sum_i (\dist_B(u_i,v_i)-1).\]
Thus the cost of the constructed path generator $(S_0,\cW)$ is
\begin{align*}
|S_0|+|V_\cW|&+\sum_{(B,u,v)\in \cQ} (\dist_B(u,v)-1)\\
&\leq |S_0|+ \sum_B \sum_{Q\in \cF_B} \Big(|V_\cW\cap V_{\neq 2}(Q)| + \sum_i (\dist_B(u_i,v_i)-1) \Big)\\
&=|S_0|+ \sum_B\sum_{Q\in \cF_B} |V(\mathrm{Can}(Q))|\\
&= (1+O(\eps))\smt(G,T),
\end{align*}
where the inequality comes from the fact that each vertex of $V_\cW$ appears as a vertex of $V_{\neq 2}(Q)$ for at least one set $Q$ (possibly more than one), and the last step follows because it is the cost of the portal-respecting generator $(S_0,\cF)$ guaranteed by \Cref{thm:steinerstruct}. This concludes the proof.
\end{proof}

%

As a corollary, we can find an approximate solution in the union of the mortar graph, of the set of outer vertices, and some fixed set of shortest paths.

\begin{theorem}[Steiner tree spanner for intersection graphs]~\label{thm:basicbanyan}
Let $G$ be an intersection graph of similarly sized connected fat objects with terminal set $T$ and $\eps>0$ given. Then in time $2^{O(1/\eps^{7.5})}\poly(n)$ we can compute an induced  subgraph $G'$ of $G$ such that $G'$ has a Steiner tree spanning $T$ of size at most $(1+\eps)\smt(G,T)$, and $G'$ consists of at most $2^{O(1/\eps^{7.5})}\smt(G,T)$ cell cliques of $G$.
\end{theorem}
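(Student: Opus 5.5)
We derive the statement directly from \Cref{thm:structure_pathgen}, together with the sparsification of \Cref{lem:sparseSteiner}. First we apply \Cref{thm:atalakitas} and \Cref{lem:sparseSteiner}, so that every cell clique has $2^{O((1/\eps)\log(1/\eps))}$ objects and at most $1/\eps$ terminals. This changes the optimum only by a factor $1+O(\eps)$, and a solution can be lifted back. Next we invoke \Cref{thm:structure_pathgen} to obtain the mortar graph $\pl M$ and the set $W$. We then define $G'$ as the subgraph induced by the union of the cell cliques of $\cP$ meeting the following three sets:
\begin{enumerate}
\item the set $N_G(p(V(\pl M)),\lambda)$, which contains every possible $S_0$;
\item the original objects of $W$;
\item for every brick $B$ and every pair $u,v\in W\cap B$, the original objects on one fixed shortest $u$--$v$ path in $B$.
\end{enumerate}
Feasibility follows because the generator $(S_0,\cW)$ guaranteed by \Cref{thm:structure_pathgen}(iii) has $S_0\subseteq N_G(p(V(\pl M)),\lambda)$ and $V_\cW\subseteq W$. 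Hence the set it generates, using our fixed shortest paths for the triples in $\cW$, lies in $G'$ after passing to original objects. Its size is at most the cost of the generator, which is $(1+O(\eps))\smt(G,T)$, and it connects $T$ because $G(S_0,\cW)$ is connected. Rescaling $\eps$ gives the $(1+\eps)$ bound.

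For the size bound we count cells. The $\lambda$-neighborhood of the mortar graph touches $O(\lambda^2)\cdot|V(\pl M)|$ cells, since any object has $O(r^2)$ cells within distance $r$. By \ref{it:M_northsouthbound} and \ref{it:M_eastwestbound} we have $|V(\pl M)|=O(1/\eps)\smt(G,T)$, so this part contributes $\poly(1/\eps)\cdot\smt(G,T)$ cells. For the paths inside a brick $B$ there are $|W\cap B|^2=2^{O(1/\eps^{7.5})}$ pairs. Each pair's shortest path has length at most $\dist_B(u,\Perim(B))+|\Perim(B)|+\dist_B(v,\Perim(B))\le 3|\Perim(B)|$ by \Cref{thm:structure_pathgen}(ii). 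Thus brick $B$ contributes $2^{O(1/\eps^{7.5})}|\Perim(B)|$ objects, hence at most that many cells. Summing over bricks, $\sum_B|\Perim(B)|\le 2|E(\pl M)|=O(1/\eps)\smt(G,T)$, because each mortar edge borders at most two bricks. The total is therefore $2^{O(1/\eps^{7.5})}\smt(G,T)$ cells.

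The running time is dominated by \Cref{thm:structure_pathgen}, plus polynomially many shortest-path computations for each of the $2^{O(\lambda)}$ pairs per brick. The step needing the most care is the bookkeeping between objects of bricks, which are restricted subpolygons possibly duplicated as on Page~\pageref{sec:objectduplication}, and the original objects and cells of $G$. We must check three things: a path in $B$ maps to a walk of original objects of no greater length; distinct restricted objects mapping to the same original object are only counted once; and the sum of perimeters is indeed bounded by the mortar length. The object-duplication discussion and \Cref{lem:restrictionmetric}(i) are what we would rely on for the first two points.
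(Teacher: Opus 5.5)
Your proposal is correct and follows essentially the paper's argument: take the cells meeting the $\lambda$-neighborhood of the mortar graph plus fixed shortest paths between all pairs of $W\cap B$ in each brick, get feasibility from the path generator of \Cref{thm:structure_pathgen}(iii), and bound the cell count using the $3|\Perim(B)|$ length bound per path together with \ref{it:M_northsouthbound}--\ref{it:M_eastwestbound}. Your preliminary sparsification via \Cref{lem:sparseSteiner} is unnecessary, since only cells (not objects) are counted. If you keep it, take the full cells of the original graph, so that the dropped terminals are included; they share cells with retained terminals, which lie in the mortar neighborhood.
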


\begin{proof}
Let $\pl M$ be a mortar graph of $G$ for the terminal set $T$ that is computed in polynomial time using \Cref{thm:mortarprops}. Use Theorem~\ref{thm:structure_pathgen} to construct a vertex set $W$ where $|W\cap B|=2^{O(1/\eps^{7.5})}$. For all vertex pairs $u,v\in W\cap B$ we consider a shortest path from $u$ to $v$ in $B$. By Theorem~\ref{thm:structure_pathgen}(\textit{ii}), we have that there exist vertices $u',v'\in \Perim(B)$ such that $\dist_B(u,u')\leq |\Perim(B)|$ and $\dist_B(v,v')\leq |\Perim(B)|$. Moreover, we have $\dist_B(u',v')\leq |\Perim(B)|$, so by the triangle inequality, we have $\dist_B(u,v)\leq 3|\Perim(B)|$. Thus every shortest path added inside $B$ has length at most $3|\Perim(B)|$.

Let $G'$ be the subgraph of $G$ consisting of $N_G(V(p(\pl M)),\lambda)\cup \bigcup_B\bigcup_{\emptyset \neq Q\subseteq P_B} \mathrm{Can}(Q)$. Observe that by Theorem~\ref{thm:structure_pathgen}(\textit{iii}) $G'$ contains a connected subgraph inducing all vertices of $T$ of size $(1+O(\eps))\opt$. Moreover, in each brick we have added $2^{O(1/\eps^{7.5})}$ shortest paths, and they have  been computed in $2^{O(1/\eps^{7.5})}\poly(n)$ time per brick. Since $
\pl M$ has $\poly(n)$ bricks, the computation takes $2^{O(1/\eps^{7.5})}\poly(n)$ time.

To bound the number of cell cliques of outer vertices, recall that the number of cells within distance $k$ of a vertex of $G$ is $O(k^2)$, thus $|N_G(p(V(M)),\lambda)_\cP| = O\left(\frac{|V(\pl M)|}{(\eps^{7.5})^2}\right)$. Since $\pl M$ is connected planar, we have $|V(\pl M)|=\Theta(|E(\pl M)|)$, and applying \ref{it:M_northsouthbound} and \ref{it:M_eastwestbound} yields $|N_G(p(V(\pl M)),\lambda)_\cP|=O(E(\pl M)/\eps^{15}) = O(\smt(G,T)/\eps^{16})$. To bound the number of cliques required to cover the shortest paths, observe that the sum of brick perimeters is twice the edge count of $\pl M$, so by~\ref{it:M_northsouthbound} and~\ref{it:M_eastwestbound} we have
\begin{align*}
\sum_B\sum_{u,v \in B\cap W} (\dist_B(u,v)-1) &\leq \sum_B 2^{O(\lambda)}\cdot 3|\Perim(B)|\\
&= 2^{O(\lambda)}\cdot 6|E(\pl M)| \\
&= 2^{O(\lambda)}\smt(G,T) \qedhere
\end{align*}
\end{proof}

\subsection{Algorithm for Steiner Tree via contraction decomposition}

We start by proving the following lemma, which is analogous to \Cref{lem:TSPUncontractCost}.

\begin{lemma}\label{lem:SteinerUncontractCost}
Let $T\subset V(G)$ be a set of terminals and $X\subset V(G)$ an arbitrary
set, and fix a clique partition $\cP$ of $G$. Then $\smt(G,T)\leq \smt
(G\vcon X,T_X)+|T\cap X|+ O(|\cX|)$, where $T_X=V(G\vcon X)|_T$ is the set of vertices in $G\vcon X$ whose
preimage in $G$ contains a terminal from $T$, and $\cX$ is the set of cliques
in $\cP$ that contain some vertex of $X$. Moreover, given an optimum Steiner tree for
$(G\vcon X,T_X)$, a Steiner tree of size $\smt(G,T)+ O(|\cX|)$ that is feasible for $
(G,T)$ can be constructed in $\poly(|V(G)|)$ time.
\end{lemma}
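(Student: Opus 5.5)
I will follow the proof of \Cref{lem:TSPUncontractCost} closely, replacing walk-shortening by the tree simplification of \Cref{lem:SteinSolTerminals}. The plan has two parts. First, starting from an optimum Steiner tree of $(G\vcon X,T_X)$, I will build a feasible Steiner tree $H_G$ of $(G,T)$ of size at most $\smt(G\vcon X,T_X)+|T\cap X|+O(|\cX|)$. Second, I will prove the reverse inequality $\smt(G\vcon X,T_X)\le \smt(G,T)-|T\cap X|+O(|\cX|)$. Combining the two gives both the stated bound and the algorithmic guarantee of size $\smt(G,T)+O(|\cX|)$.

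\textbf{Uncontracting.} Let $S$ be an optimum solution vertex set for $(G\vcon X,T_X)$. I put into $H_G$ every vertex of $S$ that is not a contracted vertex. Now take a contracted vertex $z\in S$, and let $Y$ be its component of $G[X]$. For every neighbour $u$ of $z$ in $S$, I fix an edge $uy$ with $y\in Y$. Since $G[Y]$ is connected, I add a spanning tree of $G[Y]$. This makes $H_G$ connected and covers every terminal in $Y$.

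This step is cheap to do but expensive to count, because $|Y|$ can be large. So I will not bound it directly. Instead, I apply \Cref{lem:SteinSolTerminals} to the resulting tree; its polynomial-time construction returns a subtree of $H_G$. After that, each clique of $\cP$ retains only $O(1)$ non-terminal vertices. Each vertex of $Y$ lies in a clique of $\cX$, so the surviving vertices coming from contracted components number at most $|T\cap X|+O(|\cX|)$. The vertices outside $X$ are exactly the non-contracted vertices of $S$. Each contracted vertex of $S$ accounts for at least one vertex, so $|H_G|\le |S|+|T\cap X|+O(|\cX|)$.

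\textbf{Contracting the optimum.} Let $H^*$ be an optimum Steiner tree of $(G,T)$, simplified by \Cref{lem:SteinSolTerminals}. Then in each clique $C\in\cX$, all but $O(1)$ terminals of $T\cap X\cap C$ are joined to the tree only by edges inside $C$. The image of $H^*$ in $G\vcon X$ is connected and contains $T_X$. I need to show that contraction removes at least $|T\cap X|-O(|\cX|)$ vertices. Within one clique $C\in\cX$, the vertices of $X\cap C$ form a clique, so they all lie in one component of $G[X]$. Hence all terminals of $T\cap X\cap C$ collapse to a single vertex, saving $|T\cap X\cap C|-1$. Summing over $C\in\cX$ gives $\smt(G\vcon X,T_X)\le |H^*|-|T\cap X|+|\cX|$. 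Together with the first part, $|H_G|\le \smt(G,T)+O(|\cX|)$.

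\textbf{Main obstacle.} The delicate point is the bookkeeping in the first part. I must check that applying \Cref{lem:SteinSolTerminals} after uncontraction really leaves only $O(1)$ non-terminals per clique even when a single contracted component spans many cliques. The lemma is clique-local, so this should hold. A further concern is whether terminals of $X$ in cliques whose contracted component contains no vertex of $S$ could be missed. I expect this cannot happen, since every such terminal maps into $T_X\subseteq S$. Everything is polynomial, including Dreyfus–Wagner-free steps such as spanning trees and the simplification.
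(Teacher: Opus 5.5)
Your proposal is correct and follows the paper's proof essentially step for step. You uncontract an optimum solution of $(G\vcon X,T_X)$ by reinserting the contracted components, prune with \Cref{lem:SteinSolTerminals} so that at most $|T\cap X|+O(|\cX|)$ vertices remain inside $X$, and then contract an optimum tree of $(G,T)$, saving $|T\cap X\cap C|-1$ vertices in each clique $C\in\cX$. The differences are cosmetic: you reinsert all of each component $Y$ rather than shortest paths to a chosen representative, and your bound $\smt(G\vcon X,T_X)\le |H^*|-|T\cap X|+|\cX|$ is stated more carefully than the paper's displayed equation.
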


\begin{proof}
Let $S$ be the vertex set of an optimum Steiner tree of $(G\vcon X,T_X)$. We will show how to build a Steiner tree walk $S_G$ for $(G,T)$ that is potentially longer than required, and shorten it later. First, we add all vertices of $S$ to $S_G$ that also appear in $G$ (i.e., the vertices that were not contracted). Consider a connected component $Y$
induced by $X$ and let $z\in V(G\vcon X)$ be the vertex that results from
contracting this component; suppose moreover that $z$ appears on $S$. We select a vertex $y\in Y$ arbitrarily. Then
for each edge $uz$ of $S$ we add a shortest
$u\rightarrow y$ path to $S_G$ whose internal vertices are in~$Y$.
Additionally, for any terminal $t\in T$ that appears in~$Y$, we extend $S_G$ with the vertices of a path that connects $t$ and $y$ inside $G[Y]$. Note that the resulting graph contains all terminals in $Y$. We repeat the
above procedure for all connected components of $G[X]$ whose contraction
appears in $S$. Notice that the resulting vertex set $S_G$ induces a connected graph containing all terminals, thus any spanning tree of $S_G$ is a feasible solution for $(G,T)$.

We now decrease the size of $S_G$. 
 We can apply the simplifications described in
\Cref{lem:SteinSolTerminals} to ensure that in each clique of $\cX$
the new graph $S'_G$ contains at most $O(1)$ non-terminal vertices. Consequently,
\begin{equation}\label{eq:S'bound}
|S'_G|\leq \tsp(G\vcon X,T_X)+|T\cap X|+O(|\cX|).
\end{equation}

Let $S^*$ be the vertex set of an optimum Steiner tree of $(G,T)$. We claim that $\smt(G,T) = |S^*| \geq \smt(G\vcon X,T_X)+|T\cap X| -O(|\cX|)$. \Cref{lem:SteinSolTerminals} implies that without loss of generality, $S^*$ is a set where from each clique $C$ of $\cP$ the set contains $O(1)$ non-terminal vertices. Now we contract $X$, and as a result, get a set $S^*\vcon X$ that is a feasible solution to $(G\vcon X,T_X)$. Since in each clique $C$ intersecting $T\cap X$ we lose $|(T\cap X)\cap C|-1$ vertices, we have
\[\smt(G\vcon X,T_X) \leq |S^*\vcon X| = |S^*| - \sum_{C\in \cP} \big(|T\cap X \cap C| + 1\big) = |S^*| - |T\cap X| + O(|\cX|),\]
as claimed. Substituting in~\eqref{eq:S'bound} now yields the desired bound:
\[|W'_G|\leq \smt(G\vcon X,T_X)+|T\cap X|+O(|\cX|) \leq \smt(G,T) + O(|\cX|).\]
Moreover, the steps described above can be achieved
within a running time that is polynomial in~$|V(G)|$.
\end{proof}

\SteinerSlow*

\begin{proof}
We apply \Cref{thm:atalakitas} to obtain an $\alpha$-standard graph $G\in \cI_\alpha$ whose representation has polynomial complexity. Let $\cP$ be the corresponding cell partition.
By \Cref{lem:sparseSteiner} we may assume without loss of generality that
for each $C\in \cP$ we have $|C\cap T|\leq 1/\eps$ and $|C|\leq 2^{O((1/\eps)\log(1/\eps))}$ (after spending $2^{O((1/\eps)\log(1/\eps))}\poly(n)$ time to compute this).

Let $\opt$ denote the number of vertices in the
optimal Steiner tree of $(G,T)$. We use \Cref{thm:basicbanyan} to compute a subgraph $G^*$ with at most $2^{\gamma\lambda}\opt$ cell cliques for some constant $\gamma$ in time $2^{O(\lambda)}\poly(n)$ that contains a $(1+\eps)$-approximate solution. This is a Steiner tree spanner of the desired size, and thus proves the first part of the theorem.

Let $\cA$ denote the set of at most $2^{\gamma\lambda}\opt$ cell cliques in $G^*$. We have that the optimum $\opt^*$ of $G^*$ satisfies $\opt^*\leq (1+O(\eps))\opt$. It is now sufficient
to compute a $(1+ O(\eps))$-approximate tour in $G^*$ for the terminal set $T$.

We apply
\Cref{thm:contractdecomp} on $G^*$ and the cell partition $\cA$ for $k=\frac{2^{\gamma\lambda}}{\eps}=2^{O(\lambda)}$, which yields a collection $\cX_1,\dots,\cX_k\subset \cS$ and corresponding vertex sets $X_1,\dots,X_k$. Note that $\ell=\max_{C\in \cS} |C|\leq 2^{O((1/\eps)\log(1/\eps))}$.

Let us now compute an
optimum Steiner tree for each of the instances $(G^*\vcon X_i,T_{X_i})$, where $T_
{X_i}=V(G^*\vcon X_i)|_T$. By \Cref{thm:contractdecomp} each $X_i$ satisfies $\tw(G^*\vcon X_i)= O(kl)= 2^{O(\lambda)}$.
We can therefore apply
the $2^{O(\tw)} \poly(n)$ algorithm~\cite{single-exponential} to solve these instances in $2^{2^{O(\lambda)}}\poly(n)$ time. Note that $\smt(G^*\vcon X_i,T_{X_i})\leq \opt^*$ because contractions can only decrease the optimum cost of
a subset TSP instance.

By \Cref{lem:SteinerUncontractCost} in polynomial time we can construct trees
$\tau_i$ for $(G^*,T)$ of length at most 
\[\smt(G^*,T)+O(|\cX_i|)
= \opt^*+O(|\cX_i|).\]
\Cref{thm:contractdecomp} guarantees that
$\sum_{i=1}^k |\cX_i|=O(|\cA|)$, and therefore there exists some $i$ where
$|\cX_i|=O(|\cA|/k)= O(\eps)\cdot\opt$.
As a result, the shortest among the trees
$\tau_i$ has length at most $\opt^*+O(\eps)\cdot\opt=(1+O(\eps))\cdot\opt$.
\end{proof}

\section{A more efficient approximation scheme for Steiner Tree}\label{sec:SteinerFast}

The main goal of this section is to prove a theorem that transforms our input instance into vertex-weighted instance that is easier to handle algorithmically.

Formally, the vertex-weighted variant is defined as follows.

\begin{quote}
\wstein \\
\textbf{Input:} A graph $G$, a vertex weighting $\omega:V(G)\rightarrow \Nats$ and a vertex set $T\subset G$ called \emph{terminals}.\\
\textbf{Output:} The minimum $k\in \Nats$ such that there exist a vertex set $S \subseteq V(G)$ of weight $\omega(S)\leq k$ where $T\subset S$ and $G[S]$ is connected.
\end{quote}

Most of this section is devoted to proving the following theorem.

\begin{theorem}\label{thm:advanced_struct}
Given $G\in \cI_\alpha$, $T\subset V(G)$ and $\eps>0$, in $2^{O(1/\eps^{7.5})}\poly(n)$ time we can construct a graph $G'$, a terminal set $T'\subset V(G')$, a vertex weighting $\omega:V(G')\rightarrow \Nats$, and a partition $\cP'$ of $V(G')$ such that the following hold.
\begin{enumerate}[label=(\roman*)]
\item For each class $C'\in \cP'$ we have $|C'|\leq 2^{O(1/\eps^{7.5})}$.
\item $\tw(G'\parcon \cP')=O(1/\eps^{18})$, and a tree decomposition of $G'\parcon \cP'$ of width $O(1/\eps^{18})$ can be constructed in polynomial time.
\item Given a Steiner tree of $(G',\omega,T')$ of weight $\kappa$, we can construct a Steiner tree of $(G,T)$ of size $\kappa+O(\eps)\smt(G,T)$ in polynomial time.
\item There exists a Steiner tree $S'$ of $(G',\omega,T')$ with the following properties:
\begin{enumerate}[label=(\alph*)]
\item The weight of $S'$ is at most $(1+O(\eps))\smt(G,T)$.
\item For each $C'\in \cP'$ we have $|S'\cap C'|\leq O(1/\eps^{7.5})$.
\end{enumerate}
\end{enumerate}
\end{theorem}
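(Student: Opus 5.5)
We follow the planar sketch from the overview, starting from \Cref{thm:structure_pathgen}. First we sparsify with \Cref{lem:sparseSteiner}, so cells have size $2^{O((1/\eps)\log(1/\eps))}$. Then we compute the mortar graph $\pl M$, the portal sets and the vertex set $W$ of \Cref{thm:structure_pathgen}. Let $N:=N_G(p(V(\pl M)),\lambda)$. By \ref{it:M_northsouthbound} and \ref{it:M_eastwestbound}, $N$ meets $O(\smt(G,T)/\eps^{16})$ cells. We apply \Cref{thm:contractdecomp} to $G[N]$ with its cell partition and $k=1/\eps^{17}$. This yields a class $\cX_i$ with $|\cX_i|=O(\eps\,\smt(G,T))$ and $\tw(G[N]_\cP\parcon\cX_i)=O(1/\eps^{17})$. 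We then contract $X_i$, which plays the role of buying these cliques. Item (iii) for this step follows from \Cref{lem:SteinerUncontractCost}, which costs $O(|\cX_i|)=O(\eps)\smt(G,T)$.

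The weighted graph $G'$ has three kinds of vertices. The first kind is the vertices of $G[N]\vcon X_i$, with weight $1$ (contracted vertices get weight $0$ or $1$). The second kind is the vertices of $W$ in each brick, with weight $1$. The third kind is one \emph{path vertex} for every brick $B$ and every pair $u,v\in W\cap B$, with weight $\dist_B(u,v)-1$, adjacent to $u$ and $v$. This mirrors the cost of a path generator exactly. By \Cref{thm:structure_pathgen}(iii), the feasible path generator maps to a Steiner tree $S'$ of weight $(1+O(\eps))\smt(G,T)$, which gives (iv)(a). Conversely, expanding every path vertex into an actual shortest path in $B$ gives a connected set in $G$ of no larger size, which is (iii). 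One point needs care: the pieces of a brick's canonical trees that leave $N$ are inside the brick and hit $N$ only through $W$ or the boundary. To handle this we also add each vertex of $W\cap N$ directly into $G[N]$ instead of keeping two copies.

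Next we build the partition $\cP'$. For a brick $B$ and an island $I$ (a component of the region of $B$ outside $N$), we group all $W$-vertices and path vertices of $B$ that are attached near $I$ into one class. Its size is $|W\cap B|^2=2^{O(1/\eps^{7.5})}$, which gives (i). Items (iii)(c) and (iii)(d) of \Cref{thm:structure_pathgen} say that the generator uses only $O(1/\eps^{7.5})$ vertices of $W$ and triples per brick. Hence $|S'\cap C'|=O(1/\eps^{7.5})$, which gives (iv)(b). The remaining classes are the cells of $N$ that were not contracted.

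The main obstacle is (ii): bounding $\tw(G'\parcon\cP')$ by $O(1/\eps^{18})$. First we take the planar Lipschitz graph $H$ of \Cref{thm:getplanarclique} for the cells of $N$, together with its contraction $H/E_i$ of treewidth $O(k)$. In each face of $H$ that contains an island, we insert one vertex adjacent to the face boundary. A face vertex raises the treewidth of a planar graph by only a constant factor, and each island class is inserted as a single vertex $v_F$. We then lift the decomposition to $G'\parcon\cP'$ in the style of \Cref{lem:twcompare}: each vertex is replaced by its $c$-neighbourhood in $H$, and each $v_F$ by its island class. The extra $1/\eps$ in the final bound comes from islands touching many faces, or faces crossed by the contracted set. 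The delicate step is an island that is close to $X_i$ or that spans several faces of $H$. Here we argue that its attachments lie within distance $O(1)$ of one connected boundary piece of $N$. So inserting it into all the relevant bags keeps the bag sets connected, by \Cref{obs:connected_treedecomp}, and costs only a constant factor. We expect the work to go into making this geometric adjacency argument rigorous.
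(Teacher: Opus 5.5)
\textbf{The plan does not establish (ii).} The obstruction is your ``path vertices''. You represent the whole shortest $u$--$v$ path of a triple $(B,u,v)$ by one vertex adjacent only to $u$ and $v$. You justify grouping them by islands with the claim that pieces of canonical trees outside $N$ meet $N$ only through $W$ or the boundary. That claim is false. A maximal degree-$2$ path of a canonical tree can leave $N$, cross an island, re-enter $N$, cross another island of the same brick, and so on, without meeting $W$. Moreover, $W$ contains the portals, which lie on $\Perim(B)$. Portals are at distance more than $\lambda-O(1)$ from every island cell, hence at $H$-distance $\Omega(\lambda/c_\Lip)$. So a path vertex for two portals on opposite sides of a large brick is ``attached near'' no island. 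Whatever class you put it in is then adjacent to vertices far from that class's face. Lifting by $c$-neighbourhoods does not cover these edges. Enlarging the class's subtree to reach them puts classes into bags with unbounded multiplicity: a brick can have many islands, each needing a subtree out to the perimeter, and one contracted component of $X_i$ can touch many bricks. Your sentence that the attachments lie within distance $O(1)$ of one connected boundary piece of $N$ is exactly the step that fails.

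\textbf{The missing idea is the paper's slicing.} Fix $P_{uv}$ and transfer it to a walk $P^*_{uv}$ in the rerouted graph $G^*$ (Lemma~\ref{lem:PPcsillag}). Cut it at every entry to and exit from a consolidated inner class $A_y$. Only the $Y$-slices get new vertices $m_Q$, weighted by their internal vertices outside $Z^\cup$. These sit in $\Jum(y)$ between $W\cap A_y$ and $\NAW(y)$, which consists of images of $G^*$-neighbours of $A_y$. Ordinary slices remain genuine vertices of weight $1$ (or $0$ inside $Z$). For these classes to be genuine face vertices, the paper also builds $H$ on \emph{all} cells and sets $Y_\tau=\Inn(\tau)$. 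It merges the components of $H[Y_\tau]$ that lie in one face of $H-Y_\tau$; Lemma~\ref{lem:Ybrick} shows each such class lies in a single brick. It then runs the planar contraction decomposition on $H-Y_\tau$, so Lemma~\ref{lem:radial_tw} applies. Your $H$, built only on the cells of $N$, does not even assign islands to faces combinatorially.

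\textbf{Separately, (iii) is off by a factor $1/\eps$.} Lemma~\ref{lem:SteinerUncontractCost} gives the additive $O(|\cX|)$ only when you start from an optimum tree of the contracted instance. From an arbitrary tree of weight $\kappa$, with contracted vertices of weight $0$ or $1$, you get $\kappa+|T\cap X_i|+O(|\cX_i|)$. After sparsification a cell may contain $1/\eps$ terminals. So with $k=1/\eps^{17}$ and $|\cX_i|=O(\eps\,\smt(G,T))$, the term $|T\cap X_i|$ can be $\Theta(\smt(G,T))$. The paper takes $k_X=1/\eps^{18}$, so that $|X|=O(\eps^2\smt(G,T))$ and $|T\cap Z^\cup|=O(\eps\,\smt(G,T))$. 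This, not islands meeting several faces, is where the exponent $18$ in (ii) comes from. Giving each contracted vertex weight equal to its number of terminals would also repair this step.
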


To prove \Cref{thm:advanced_struct}, we start by applying \Cref{lem:sparseSteiner} to get an instance that is equivalent to $(G,T)$ where all classes in $\cP$ have size at most $2^{O(1/\eps\log(1/\eps))}$, and each class of $\cP$ contains at most $O(1/\eps)$ terminals from $T$.


Our proof proceeds throughout this section. For our input graph $G^\orig\in \cI_\alpha$ recall that $\cP$ denotes its standard partition into cliques and $G_\cP$ denotes the graph obtained by contracting each partition class of $\cP$. We apply \Cref{lem:sparseSteiner} to get an instance that is equivalent to $(G,T)$ where all classes in $\cP$ have size at most $2^{O(1/\eps\log(1/\eps))}$; we will keep using the notation $(G,T)$ for this modified instance. Let $\opt=\smt(G,T)$.

Let $H$ denote the plane graph obtained via \Cref{thm:getplanarclique} of vertex set $\cP$ that\footnote{In this section we will have the plane graph $H$ have the same vertex set $\cP$ as $G_\cP$; we will use $\dot \cP$ only when we specifically need to refer to the corresponding realization.} is $c_\Lip$-Lipshitz with $G_\cP$ under the identity map, where $c_\Lip=O(\alpha^{32})$.


We say that a clique $C\in \cP$ and the corresponding vertex $C\in V(H)$ are \emph{$\tau$-outer} if there is some $v\in C$ such that $\dist_G(v,p(V(\pl M)))\leq \tau$. Non-$\tau$-outer cliques of $\cP$ are called \emph{$\tau$-inner} cliques or vertices, respectively. Let $\Inn(\tau)$ denote the set of $\tau$-inner vertices.

We define $\tau=\lambda+\max(\beta^*,6c^2_\Lip)$ where $\beta^*$ is set in the next lemma (\Cref{lem:Ybrick}), and we define $Y_\tau= \Inn(\tau)$ as the set of $\tau$-inner cliques. 
Let $\cY_\tau$ be the partition of $Y_\tau$ where two connected components of $H[Y_\tau]$ belong to the same class of $\cY_\tau$ if and only if they are located in the same face of the plane graph $H-Y_\tau$. By this definition, the classes of $\cY_\tau$ do not necessarily induce connected graphs in $H$. However, it will be crucial for our arguments that $\cY_\tau$ is defined this way: we want to ensure that, after consolidating the classes of the partition $\cY_\tau$, at most one consolidated vertex appears in a face of the graph induced by the vertices outside $Y_\tau$, hence the existence of these consolidated vertices change treewidth only by constant factor.

\subsection{Relating the classes of $\cY_\tau$ to brick regions}

\begin{lemma}\label{lem:Ybrick}
There exists a number $\beta^*=O(c_\Lip^{2.1})$ such that for any $\tau>\beta^*$ and the above definition of $Y_\tau$ and $\cY_\tau$ the following holds. Each class of $\cY_\tau$ consists of cliques that are located in the interior of a single brick region.
\end{lemma}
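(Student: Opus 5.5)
The plan is to show two things for each $\tau$-inner clique $C\in Y_\tau$: first, that $C$ lies geometrically in the interior of a single brick region. Second, that any two inner cliques lying in the same face of $H-Y_\tau$ lie in the same brick region. Together these give the lemma.

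For the first part, recall that $C$ being $\tau$-inner means every $v\in C$ has $\dist_G(v,p(V(\pl M)))>\tau$. Every object intersecting the drawing $\gamma(\pl M)$ meets some edge curve of $\pl M$. Such an edge curve is covered by the two parent objects of its endpoints, so that object is within distance $1$ of $p(V(\pl M))$. Hence no object of $C$ meets $\gamma(\pl M)$, and, since the union of $C$ is connected (all its objects share the stabbing point $\dot C$), that union lies in the interior of one face of $\pl M$. This face is not the irrelevant face: by the mortar-graph definition, every object meeting $\re F_\textup{irr}$ also meets its complement, so it would cross $\gamma(\pl M)$. Thus $C$ determines a well-defined brick region $\re B(C)$.

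For the second part, it suffices to show the following. If $C,C'\in Y_\tau$ lie in the same face of $H-Y_\tau$, then they can be joined by a curve avoiding $\gamma(\pl M)$. The natural attempt is by contradiction. Suppose $\re B(C)\neq\re B(C')$. Then the closed curve $\bd\re B(C)$, which is part of $\gamma(\pl M)$, separates $\dot C$ from $\dot C'$. I then want to produce a closed walk in $H-Y_\tau$ whose drawing separates $\dot C$ from $\dot C'$ in the plane, which contradicts their lying in a common face.

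To build this walk, take the cycle $\wf B$ bounding $\re B(C)$ and, for each wireframe vertex $\dot u$ on it, the clique $C_u\ni p(\dot u)$. Consecutive such cliques are adjacent in $G_\cP$, so by the Lipschitz property of \Cref{thm:getplanarclique} they are joined by an $H$-path of length at most $c_\Lip$. Concatenating these paths gives a closed walk $\omega$ in $H$. Every vertex of $\omega$ is within $H$-distance $c_\Lip$ of some $C_u$, hence within $G_\cP$-distance $c_\Lip^2$ and $G$-distance $O(c_\Lip^2)$ of $p(V(\pl M))$. Choosing $\beta^*$ large enough compared to this, $\omega$ avoids $Y_\tau$.

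Moreover, every edge of $H$ is covered by $O(\alpha^{34})$ objects within geometric diameter $O(\alpha^{17})$. So the drawing of $\omega$ stays within Euclidean distance $O(c_\Lip\cdot\alpha^{17})$ of $\gamma(\wf B)$. Meanwhile $\dot C$ is far from $\gamma(\wf B)$: its objects are more than $\tau$ hops away, and by fatness and similar sizes, graph distance $\tau$ forces Euclidean distance $\Omega(\tau)$ modulo the usual area/packing argument. The same holds for $\dot C'$ with respect to $\gamma(\wf B)$.

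The main obstacle is the topological step: concluding that $\omega$, which stays in a thin band around $\bd\re B(C)$, has winding number equal to that of $\bd\re B(C)$ around $\dot C$ and $\dot C'$. I would handle this by a homotopy argument. Move the drawing of $\omega$ to $\gamma(\wf B)$ along short segments inside the band. The band has width $O(c_\Lip\alpha^{17})$, which is smaller than the distance $\Omega(\tau)$ from $\dot C$ and $\dot C'$ to the band once $\tau>\beta^*=O(c_\Lip^{2.1})$; the exponent $2.1$ absorbs the $\alpha$-factors, since $c_\Lip=O(\alpha^{32})$. The homotopy therefore avoids both points, so the winding numbers of $\omega$ around $\dot C$ and $\dot C'$ are those of $\gamma(\wf B)$, namely $1$ and $0$. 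Hence $\omega\subseteq H-Y_\tau$ separates them, which is the desired contradiction.
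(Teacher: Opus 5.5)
Your first part (each $\tau$-inner clique lies in the interior of a single brick region) is fine. Your overall strategy also matches the paper's: build a closed walk $\omega$ in $H-Y_\tau$ that shadows $\gamma(\wf B)$, and show it still separates $\dot C$ from $\dot C'$. The gap is the metric step you rely on, namely that $\dot C$ and $\dot C'$ are at Euclidean distance $\Omega(\tau)$ from $\gamma(\wf B)$.

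This claim is false. Bounded object diameters give Euclidean distance at most about $\alpha$ times graph distance, never the reverse. An object of $C$ can lie geometrically right next to a parent object of $\wf B$, within $O(\alpha)$ of $\gamma(\wf B)$, since edge curves may run close to the boundaries of their parent objects. Its only connection to $p(V(\pl M))$ may be a long chain of objects that winds deep into the brick and comes back. Such a clique is $\tau$-inner for arbitrarily large $\tau$, yet it sits inside your band of width $O(c_\Lip\alpha^{17})$. The straight-line homotopy from $\omega$ to $\gamma(\wf B)$ can then pass over $\dot C$, and nothing in your argument controls the winding number. No packing argument repairs this: area arguments only bound how many disjoint objects fit in a region of small diameter. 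They do not push inner cliques away from the boundary.

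The missing idea, which is the paper's route, is to replace ``$\dot C$ is far from the band'' by ``$\dot C$ is not enclosed by any small loop near the boundary.''

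\textbf{Local loops.} For each edge $\dot x\dot y$ of $\wf B$, let $\ell_{\dot x\dot y}$ be the closed curve that goes from $\dot x_\cP$ to $\dot x$ inside $p(\dot x)$, along the edge curve to $\dot y$, to $\dot y_\cP$ inside $p(\dot y)$, and back along your $H$-path $P_{\dot x\dot y}$. Inserting the spurs $\dot x\to\dot x_\cP\to\dot x$ changes no winding number. Since winding numbers are additive, around any point off these curves the winding number of $\omega$ differs from that of $\gamma(\wf B)$ by the sum of the winding numbers of the loops $\ell_{\dot x\dot y}$.

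\textbf{No loop encloses an inner clique.} Each loop is covered by $O(c_\Lip^{2.1})$ cliques, all within graph distance $r=O(c_\Lip^{2.1})$ of $p(V(\pl M))$, so it has diameter $D=O(\alpha c_\Lip^{2.1})$. Suppose $\dot C$ lay in a bounded component of the complement of $\ell_{\dot x\dot y}$. Take a shortest $G$-path from the object of $C$ containing $\dot C$ to $p(V(\pl M))$; this needs $C$ to be connected to the mortar graph, which you may assume by taking $G$ connected.
\begin{itemize}
\item Its first $\tau-O(r)$ objects avoid all covering objects, so their connected union stays in that bounded component.
\item Every second one of them gives pairwise disjoint objects, each containing a unit disk.
\item By the isodiametric inequality this is impossible once $\tau$ is large compared with $D^2$.
\end{itemize}
The same argument applies to $\dot C'$. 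Hence every $\ell_{\dot x\dot y}$ has winding number $0$ around both points, and your contradiction goes through.

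With this fix, the threshold $\beta^*$ comes from the enclosure argument, which compares $\tau$ with $D^2$. It does not come from comparing the band width with a Euclidean distance, so your exponent bookkeeping for $\beta^*$ has to be redone on that basis.
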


\begin{proof}
The lemma is implied by the following: if $u,v\in Y_\tau$ are in the same face of $H-Y_\tau$, then the vertices of $G$ inside the cliques $u$ and $v$ are in the same brick of~$G$.

Suppose for the sake of contradiction that $u,v \in Y_\tau$ are in the same face of $H-Y_\tau$ but in distinct bricks $B_u$ and $B_v$. Since the vertices of $G$ inside $u$ and $v$ are far from the brick boundaries (at distance more than $\tau$), we have that all objects in $u$ are in the interior of the brick region $\re B_u$ and similarly all objects in $v$ fall in the interior of $\re B_v$.

Without loss of generality, assume that $\re B_u$ is a bounded face of $\pl M$ (as otherwise the proof can work on $\re B_v$, since $\pl M$ has exactly one unbounded face). Let $\gamma_B=\gamma(\wf B_u)$ be the Jordan curve given by the cycle $\wf B_u$ of $\pl M$ around the face $\re B_u$. Notice that $\gamma_B$ has a unique bounded region ($\re B_u$) containing $u$ and its unbounded region contains $v$. We will need the following claim.

In what follows, we will think of $G_\cP$ as an intersection graph whose objects are given by the unions of objects in each class of $\cP$. Our strategy will be to consider $\gamma$ and gradually change it into a closed walk of $H-Y_\tau$ ---another closed curve--- that ``follows'' $\gamma$. We will show that in each gradual change of $\gamma$, the current closed curve must still separate $u$ and $v$. We will need the following two claims.

\begin{claim}\label{cl:shortcurve_unbound}
Let $\rho$ be a closed curve that can be covered by the objects of some closed walk $W$ of $G_\cP$ of length $|W|\leq \frac{\tau}{5\alpha}$ and suppose that there is some vertex $v\in V(W)\cap (p(V(\wf B_u))_\cP$. Then both $u$ and $v$ are in the unbounded component of $\rho$.
\end{claim}

\begin{claimproof}
We will show that $u$ is in the unbounded component; the proof for $v$ is analogous. Notice first that $u$ cannot be intersected by any object of $W$, since all vertices of $W$ are within distance $|W|$ from $\Perim(B)$ in $B$ and thus they are $3|W|$-outer; in particular, they cannot be $\tau$-inner as $\tau> 3|W|$.

Suppose for the sake of contradiction that $u$ is in some bounded component $\re R$ of $\Reals^2\setminus \rho$, and consider a shortest path from $u$ to $\Perim(B_u)$. Since $u$ is $\tau$-inner, the shortest path has at least $\tau-3|W|$ vertices that are $3|W|$-inner, thus they are disjoint from the objects of $W$. Consequently, there are at least $(\tau-3|W|)/2$ vertices on this shortest path that are pairwise disjoint, and all of them are in a bounded component of $\Reals^2\setminus \rho$. Since these are original objects of $G$, each of them contains a disk of diameter at least $\sqrt 2$, thus each of these objects has area at least $\pi/2>1$. Consequently, $\re R$ has to cover these pairwise disjoint objects must have area more than $(\tau-3c_W)/2$.

On the other hand, observe that $\diam(\rho)\leq 3\alpha|W|$ (as each object of $V(W)$ has diameter at most $3\alpha$). By the isodiameteric inequality, we have $\Area(\re R)\leq \diam(\re R)^2\cdot \frac{\pi}{4}$, thus
\[
\frac{\tau-3|W|}{2}\leq 3\alpha |W| \quad \Leftrightarrow \quad
|W|\geq \frac{\tau}{3\alpha+3/2}.
\]
On the other hand, we have
\[c_W\leq \frac{\tau}{5\alpha}<\frac{\tau-1}{3\alpha+3/2},\]
as $\alpha>1$. Therefore, we have arrived at a contradiction.
\end{claimproof}

Consider a closed curve $\gamma$ that is potentially self-intersecting. Notice that $\Reals^2\setminus \gamma$ has one or more connected components, with exactly one unbounded component and $0$ or more bounded components.
Let $\gamma:[0,1]\rightarrow \Reals^2$ be a closed curve and let $\sigma$ be a continuous subcurve of $\gamma$, i.e., $\sigma=\gamma|_{[a,b]}$. We denote by $\gamma-\sigma$ the remaining closed curve including the endpoints, i.e., $\gamma-\sigma=\gamma|_{[0,a]}\cup\gamma|_{[b,1]}$. 

\begin{claim}\label{cl:stillseparates}
Let $\rho$ be a closed curve whose unbounded component contains $u$ and $v$, and let $\gamma$  be a closed curve where the unbounded component contains $v$ and some bounded component contains $u$. Assume moreover that there is some continuous subcurve $\sigma$ of $\gamma$ such that $\sigma$ also appears as a continuous subcruve of $\rho$. Suppose that $\rho$ and $\gamma$ are both piecewise linear curves of finite complexity. Then the closed curve $\gamma'$ obtained by concatenating $\rho-\sigma$ and $\gamma-sigma$ has $u$ in one of its bounded components, and $v$ in its unbounded component.
\end{claim}

\begin{claimproof}
Observe that the intersection of the unbounded components of $\gamma$ and $\sigma$ is disjoint from both curves and thus it is covered by the unbounded component of $\gamma'$. Since $v$ is in this intersection, it must be in the unbounded component of $\gamma'$. It remains to show that $u$ is in some bounded component of~$\gamma'$. Suppose the contrary: $u$ is in the unbounded component of~$\gamma'$.

Then $u$ and $v$ are both in the intersection of the unbounded components of both $\rho$ and $\gamma'$, thus they can be connected by a curve inside the intersection of these components. In particular, the curve connecting $u$ and $v$ is disjoint from $\gamma'$ and disjoint from $\rho$. Notice that since $\gamma$ can be covered by $\gamma'\cup \rho$, we have that the curve connecting $u$ and $v$ is disjoint from $\gamma$. This contradicts the fact that $u$ and $v$ are in different components of $\gamma$.
\end{claimproof}

Let $\gamma_0=\gamma_B$ and we extend $\gamma_0$ in several steps using \Cref{cl:stillseparates}.
For each vertex $\dot x$ of $\wf B$ let $\gamma_{\dot x}$ be a closed piecewise linear curve connecting $\dot x$ to $\dot x_\cP$ and back inside $x:=p(\dot x)$. (Recall that $x_\cP$ is the clique of $\cP$ containing $x$ and $\dot x_\cP$ is the grid point assigned to $x_\cP$ that stabs all objects in $x_\cP$.) Clearly we can choose $\gamma_{\dot x}$ to have complexity at most $2\compl(x)$. The curve $\gamma_{\dot x}$ can be covered by a single object $x_\cP$ of $G_\cP$ and thus a trivial walk. We will ensure that $\tau>5\alpha$, thus by \Cref{cl:shortcurve_unbound} we have that $\gamma_{\dot x}$ does has both $u$ and $v$ in its unbounded face. Consequently, if we extend $\gamma_0$ to also include $\gamma_{\dot x}$ (notice that this can be done as $\dot x$ appears on both curves), then by \Cref{cl:stillseparates} the resulting curve $\gamma'_0$ still has the property that $u$ is in one of its bounded components while $v$ is in its unbounded component.

We repeat the above procedure for each $\dot x\in V(\wf B)$, to obtain a curve $\gamma_1$ with $u$ in one of its bounded components and $v$ in its unbounded components.

Now for each edge $\dot x \dot y$ of $\wf B$, consider the subcurve $\sigma_{\dot x\dot y}$ of $\gamma_1$ given by the curve from $\dot x_\cP$ to $\dot x$ (a part of $\gamma_{\dot x}$), the
edge $\dot x \dot y$ of $\wf B$, and the path from $\dot y$ to $\dot y_\cP$ (a part of $\gamma_{\dot y}$).

Consider the shortest path in $H$ connecting $\dot x_\cP$ and $\dot y_\cP$ in $H$. Notice that there is such a path: indeed, $\dot x_\cP,\dot y_\cP \in V(H)$; if $\dot x_\cP=\dot y_\cP$, then a path of length $0$ connects them.
Suppose that $\dot x_\cP \neq \dot y_\cP$. Then $x$ and $y$ are neighboring by the wireframe property of $\wf B$, thus $x_\cP$ and $y_\cP$ are neighboring cliques, i.e., $\dot x_\cP\dot y_\cP$ is an edge of $G_\cP$.
By the Lipschitz property of $H$, there must be a path of at most $c_\Lip$ edges connecting  $\dot x_\cP$ and $\dot y_\cP$.

Let $P_{\dot x \dot y}$ be this path in $H$, and we set  $\gamma_{\dot x\dot y}$ to be the concatenation of $P_{\dot x \dot y}$ and $\sigma_{\dot x\dot y}$, which results in a closed curve. We will now bound the geometric diameter of $\gamma_{\dot x\dot y}$ and show that it can be covered by constantly many cliques of $G_\cP$.

\skb{esetleg abra?}

Note that if $\dot w$ is a vertex of $H$ on $P_{\dot x \dot y}$, then it is within $H$-distance $c_\Lip$ from $\dot x$ and $\dot y$, thus it is within $G_\cP$-distance $c^2_\Lip$ from $x_\cP$ and $y_\cP$ by the Lipschitz property. 
By \Cref{thm:getplanarclique} we have that each edge of $P_{\dot x \dot y}$ can be covered by at most $O(\alpha^{34})=O(c_\Lip^{1.1})$ objects of $G_\cP$. Thus $P_{\dot x \dot y}$ can be covered by cliques that are within $G_\cP$-distance $c^{G_\cP}_{\gamma}=O(c_\Lip^{2.1})$ from $x_\cP$. Since $\sigma$ can be covered by two objects of $G_\cP$, we conclude that $\gamma_{\dot x \dot y}$ can be covered by $c^{\mathrm{cover}}=O(c_\Lip^{2.1})$ objects of $G_\cP$.

We can now set
\[\beta^*:= c^{\mathrm{cover}}\cdot 5\alpha +1 =O(c_\Lip^{2.1}).\]

Observe that the vertices of $H$ (and of $G_\cP$) participating in this cover are $\beta^*$-outer, and in particular, we have that $V(P_{\dot x \dot y})$ does not have any $\tau$-inner vertices, thus it is disjoint from $Y_\tau$.

Consequently, \Cref{cl:shortcurve_unbound} implies that  $\gamma_{\dot x \dot y}$ has both $u$ and $v$ in its unbounded component.
Since the curve  $\sigma_{\dot x \dot y}$ is a shared subcurve of $\gamma_1$ and  $\gamma_{\dot x \dot y}$, we can use \Cref{cl:stillseparates} to show that the curve $\gamma'_1$ obtained as the concatenation of $\gamma_{\dot x \dot y}- \sigma_{\dot x \dot y}$ and  $\gamma_1-\sigma_{\dot x \dot y}$ has $u$ in one of its bounded components and $v$ in its unbounded component.
Notice that the subcurves $\gamma_{\dot x \dot y}$ from a partition of $\gamma_1$, thus we can apply the above procedure repeatedly on each edge $\dot x \dot y$ of $\wf B$.

The resulting curve $\gamma_2$ does not contain any subcurve $\sigma_{\dot x \dot y}$, i.e., it is the concatenation of the paths $P_{\dot x \dot y}$, which are paths of $H-Y_\tau$. Thus, $\gamma_2$ is a walk of $H-Y_\tau$ that has $u$ in a bounded face and $v$ in the unbounded face. In particular, there is some cycle in this walk (and in $H-Y_\tau$) where $u$ is inside the cycle and $v$ is outside. This contradicts the fact that $u,v$ are in the same face of $H-Y_\tau$, concluding the proof.
\end{proof}

\subsection{A contraction decomposition that avoids $Y_\tau$}

We will need a bound on the treewidth of certain plane graphs. The \emph{radial graph} $\Radi(G)$ of a plane graph $G$ is the bipartite graph where vertices are either in $V(G)$ or they are obtained by adding a vertex $v_F$ to each face $F$ of $G$. A vertex $v_F$ is then connected to $v\in V(G)$ in $\Radi(G)$ if $v$ appears on the face $F$. Let $V_F$ denote the set of vertices corresponding to faces. The dual graph $G^\mathrm{dual}$ of $G$ is a plane multigraph on $V_F$ where two vertices are connected by $k$ edges if and only if the corresponding faces have $k$ shared edges in $G$.
We note that for a connected graph $G$ we have that $(G^\mathrm{dual})^\mathrm{dual}$ is
isomorphic to $G$ and $\Radi(G^\mathrm{dual})$ is isomorphic to $\Radi(G)$. A result of Demaine~\etal~\cite{DemaineHK09} gives the following bound.

\begin{lemma}[Corollary of~\cite{DemaineHK09}]\label{lem:radial_tw}
For any plane graph $G$ we have that $\tw(G\cup \Radi(G))=O(\tw(G))$.
\end{lemma}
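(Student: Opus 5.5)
The plan is to reduce the statement to the known fact that the radial graph of a plane graph has treewidth linear in the treewidth of the graph, that is, $\tw(\Radi(G))=O(\tw(G))$ (for instance via the Demaine--Hajiaghayi--Kawarabayashi result that a plane graph and its dual, hence its radial graph, have comparable treewidth). Once this is available, we take the union bound directly. We fix tree decompositions $\cT_1$ of $G$ and $\cT_2$ of $\Radi(G)$. Since $V(G)\subseteq V(\Radi(G))$, a decomposition of $G\cup\Radi(G)$ must additionally cover the edges of $G$ inside a decomposition of $\Radi(G)$.

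The concrete step is to convert $\cT_2$ into a decomposition of $G\cup \Radi(G)$. Consider an edge $uv\in E(G)$. It lies on the boundary of some face $F$, and $u$ and $v$ are both adjacent to $v_F$ in $\Radi(G)$; however, $u$ and $v$ need not share a bag. We fix this as follows. For each face vertex $v_F$, we add into every bag containing $v_F$ the set of vertices of $G$ appearing on $F$. This blows up the width unboundedly when faces are long, so instead we first triangulate. We replace $G$ by a supergraph $G^\triangle$ obtained by adding, inside each face, a new vertex adjacent to all vertices on the face (a stellation). Then $\tw(G^\triangle)\le\tw(G)+1$ fails in general, so this route is suspicious.

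A cleaner route is the following. Edges $uv$ of $G$ appear as paths $u\,v_F\,v$ of length $2$ in $\Radi(G)$. We therefore take a decomposition of $\Radi(G)$ and, in every bag, replace each face vertex $v_F$ by nothing while replacing each vertex $u\in V(G)$ by itself together with its face-neighbours. This is the degree-blowup issue again, now for high-degree vertices.

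The most robust route, and the one we would commit to, is to use the dual directly. The edge set of $G\cup\Radi(G)$ is contained in the edge set of the radial graph of the medial construction. Equivalently, $G\cup\Radi(G)$ is itself a planar graph: draw $G$, put $v_F$ in $F$, and connect it to the occurrences on the face boundary without crossings. Hence $G\cup\Radi(G)$ is a plane graph containing $G$ in which every face has length at most $3$ or $4$. For plane graphs, treewidth is linearly related to the maximum radius of grid minors (the planar grid theorem). It then suffices to show that if $G\cup \Radi(G)$ contains a $k\times k$ grid minor, then $G$ contains an $\Omega(k)\times\Omega(k)$ grid minor. To do this we contract each face-star $v_F$ into an adjacent vertex on its face: a large grid minor in $G\cup \Radi(G)$ passing through face vertices can be rerouted along the face boundary cycles, at constant-factor loss, which is the standard argument from \cite{DemaineHK09}.

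The main obstacle is precisely this rerouting of grid minors through face vertices without losing more than a constant factor. We would cite \cite{DemaineHK09} for it rather than reprove it.
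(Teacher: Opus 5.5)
There is a genuine gap: the step that carries all the content is cited rather than proved, and your sketch of it does not work.

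Your opening plan, bounding $\tw(\Radi(G))$ and then taking a ``union bound'', cannot work. Treewidth is not subadditive under unions of graphs on a common vertex set: the horizontal paths and the vertical paths of a grid are two forests whose union is the grid. The stellation detour is not a separate route either, since stellating every face of $G$ produces exactly $G\cup\Radi(G)$.

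In the route you commit to, planarity of $G\cup\Radi(G)$ and the linear planar grid theorem are fine. But they only turn the lemma into an equivalent statement: a $k\times k$ grid minor of $G\cup\Radi(G)$ yields an $\Omega(k)\times\Omega(k)$ grid minor of $G$. You do not prove that statement, and the mechanism you sketch fails as stated:
\begin{itemize}
\item Contracting $v_F$ into a boundary vertex $u$ makes $u$ adjacent to every vertex on $F$, so the result is not a minor of $G$.
\item Replacing those chords by walks along the boundary of $F$ makes different branch sets, or the paths realizing grid edges, that touch the same face collide. Showing this costs only a constant factor is exactly the nontrivial part.
\item Rerouting ``along face boundary cycles'' assumes each face is bounded by a single cycle. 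That fails once $G$ is not $2$-connected, and badly when $G$ is disconnected: a face vertex then joins different components, so a grid minor of $G\cup\Radi(G)$ can spread over several components while $\tw(G)$ is the maximum over components. Your argument does not address this.
\end{itemize}

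Deferring the rerouting to \cite{DemaineHK09} amounts to citing the lemma itself, so the grid-minor detour adds nothing. The paper cites the treewidth bound from \cite{DemaineHK09} directly, in the form $\tw(H^{\mathrm{dual}}\cup\Radi(H))=O(\tw(H^{\mathrm{dual}}))$ for plane $H$. It applies this with $H=G_i^{\mathrm{dual}}$ for each connected component $G_i$, using $(G_i^{\mathrm{dual}})^{\mathrm{dual}}\cong G_i$ and $\Radi(G_i^{\mathrm{dual}})\cong\Radi(G_i)$, both of which need connectivity. It then combines the component decompositions using the outer-face vertex.

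So once you rely on that citation, the remaining work is the reduction to connected $G$, which your proposal omits. A clean way to do it: repeatedly add an edge, drawn inside a face, between two different components on that face's boundary. Such an edge does not split the face or change which vertices lie on it. Hence $\Radi$ is unchanged and $G\cup\Radi(G)$ only gains edges, while the treewidth rises to at most $\max(\tw(G),1)$.
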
 

\begin{proof}
Demaine~\etal~\cite{DemaineHK09} proved that $\tw(H^\mathrm{dual}\cup \Radi(H))=\tw(H^\mathrm{dual}\cup \Radi(H^\mathrm{dual}))=O(\tw(H^\mathrm{dual}))$ for any plane graph $H$. For each connected component $G_i$ of $G$ let $H=G_i^\mathrm{dual}$, and apply the above result to get $\tw(G_i\cup \Radi(G_i))=O(\tw(G_i))$. Now a tree decomposition of $G\cup \Radi(G)$ of width $O(\max_{i}\tw(G_i))= O(\tw(G))$ can be obtained as follows. Remove the vertex of $\Radi(G_i)$ corresponding to the outer face of $G_i$ from each bag, and add the vertex $v_G$ corresponding to the outer face of $G$ to every bag; let $\cT_i$ be the obtained tree decomposition. Finally, a singleton bag containing only $v_G$ is connected to an arbitrary bag of each tree decomposition $\cT_i$. This is a valid tree decomposition of $G\cup \Radi(G)$ of width $1+O(\max_{i}\tw(G_i))=O(\tw(G))$.
\end{proof}

We have by \Cref{thm:mortarprops} \ref{it:M_northsouthbound} and \ref{it:M_eastwestbound} that there are $O(\opt/\eps)$ vertices of $G_\cP$ containing vertices of $p(V(\pl M))$. Recall that all vertices of $\cP-Y_\tau$ are within distance $O(\lambda)=O(1/\eps^{7.5})$ to some vertex of $p(V(\pl M))$. Since $G_\cP$ connects grid points within constant distance of each other, the $r$-neighborhood of a vertex has size at most $O(r^2)$, so in particular  we have that
\[|V(H-Y_\tau)|=|\cP-Y_\tau|\leq |N_{G_\cP}(p(V(\pl M))\parcon \cP,\lambda)|\leq O(\lambda^2)\cdot \opt/\eps=O(\opt/\eps^{16}).\]

Next, apply \Cref{thm:contractdecomp_planar} on $H-Y_\tau$ to get a partition $E_1,E_2,\dots,E_{k_X}$ consisting of $k_X:=1/\eps^{18}$ edge sets. Notice that since $V(E_i)\leq 2|E_i|$, the planarity of $H-Y_\tau$ implies that
\[\sum_{i=1}^{k_X} |V(E_i)|\leq 2 \sum_{i=1}^{k_X} |E_i| \leq 6|V(H-Y_\tau)|.\]
In particular, there is some index $i$ such that $X:=V(E_i)$ has size at most $6|V(H-Y_\tau)|/k_X$.
Moreover, we have 
\begin{equation}\label{eq:Hcontractdecomp}
\tw((H-Y_\tau)\vcon X)\leq \tw((H-Y_\tau)/E_i) = O(k_X)=O(1/\eps^{18}).
\end{equation}
We can bound the size of $|X|$ as follows:
\begin{equation}\label{eq:Xbound}
|X|\leq |6|V(H-Y_\tau)|/k_X=O(\opt/\eps^{16})/k_X=O(\eps^2 \opt).
\end{equation}

Notice that $X$ is disjoint from $Y_\tau$, and we denote by $\cX$ the partition of $X$ to its connected components induced in the graph $H - Y_\tau$ (or, equivalently in $H$).

We now turn to the consolidation $H\parcon(\cX,\cY_\tau)$.

\begin{lemma}\label{lem:contractionisgood}
The graph $H \parcon (\cX,\cY_\tau)$ is planar and $\tw(H \parcon (\cX,\cY_\tau))=O(\tw((H-Y_\tau)\vcon X))=O(1/\eps^{18})$. Moreover, a tree decomposition of $H \parcon (\cX,\cY_\tau)$ of width $O(1/\eps^{18})$ can be constructed in polynomial time.
\end{lemma}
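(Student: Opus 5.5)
The plan is to relate $H\parcon(\cX,\cY_\tau)$ to the graph $(H-Y_\tau)\vcon X$ with one extra vertex inserted into some of its faces, and then invoke \Cref{lem:radial_tw}. Write $H_0:=H-Y_\tau$ and $H_0':=H_0\vcon X = H_0\parcon \cX$; the latter is planar, being a contraction of a plane graph along connected sets. Consider a class $Y\in\cY_\tau$. By definition, all components of $H[Y_\tau]$ in $Y$ lie in a single face $F_Y$ of the plane graph $H_0$, and distinct classes lie in distinct faces. Every neighbor of $Y$ outside $Y_\tau$ lies on the boundary of $F_Y$. Moreover, $Y$ has no neighbor in another class of $\cY_\tau$: otherwise the two components would be joined by an edge of $H[Y_\tau]$ and would therefore be one component.

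For planarity, note first that contracting the connected sets of $\cX$ in $H$ keeps the graph planar. Since $X\cap Y_\tau=\emptyset$, the components of $H[Y_\tau]$ remain untouched plane subgraphs, each lying inside the image of its face $F_Y$. Inside this face, identifying all components in $Y$ into one vertex preserves planarity. To see this, first contract each component to a single vertex. These vertices then lie in a common face of the contracted graph with no edges among them. Adding an auxiliary vertex of that face, adjacent to all of them, keeps the graph planar, and contracting the resulting star gives the desired identification. I would carry out this step concretely by redrawing, so that the final graph comes with an explicit plane embedding.

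For the treewidth bound, the resulting graph is a subgraph of $H_0'$ in which, for each face of $H_0'$ (the image of some face $F_Y$ of $H_0$), a new vertex $y_Y$ is added, adjacent only to vertices on the boundary of that face. The face structure has to be followed through the contraction: a face of $H_0$ maps to a face of $H_0'$, or to a union of faces when contractions merge boundaries. I expect this bookkeeping to be the main obstacle. The cleanest route I see is to contract the components of $X$ in $H_0$ while the classes of $Y_\tau$ stay embedded in their faces; the vertex $y_Y$ then sits inside a face of the plane graph $H_0'$ and is adjacent only to vertices on its boundary. Therefore $H\parcon(\cX,\cY_\tau)$ is a subgraph of $H_0'\cup \Radi(H_0')$, using at most one face-vertex per face, and \Cref{lem:radial_tw} gives
\[
\tw\bigl(H\parcon(\cX,\cY_\tau)\bigr)=O(\tw(H_0'))=O(1/\eps^{18}),
\]
where the last equality is \eqref{eq:Hcontractdecomp}.

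For the algorithmic part, the decomposition is built constructively. We compute a tree decomposition of $H_0'$ of width $O(k_X)$; this is provided by the constructive contraction decomposition of \Cref{thm:contractdecomp_planar}, or alternatively by a constant-factor approximation for planar treewidth. We then convert it into a decomposition of $H_0'\cup\Radi(H_0')$ following the constructive proof of \cite{DemaineHK09} and \Cref{lem:radial_tw}, and finally take the subgraph induced by the vertices that are actually used. Each step runs in polynomial time.
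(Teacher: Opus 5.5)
This matches the paper's proof: each class of $\cY_\tau$ becomes one vertex inside its face of $(H-Y_\tau)\vcon X$, adjacent only to vertices on that face's boundary, so $H\parcon(\cX,\cY_\tau)$ lies in $H_0'\cup\Radi(H_0')$ and \Cref{lem:radial_tw} together with \eqref{eq:Hcontractdecomp} gives the bound; the paper gets the decomposition more simply, by running the Kammer--Tholey planar treewidth approximation directly on the final planar graph rather than converting a decomposition of $H_0'$. Two small repairs are needed: your star-contraction argument for planarity fails as written, because once the other components' edges are drawn the components of one class need not share a face, so instead replace each class by a single vertex inside $F_Y$ adjacent to its boundary neighbors, as the paper does; and the face bookkeeping you flag, which the paper also glosses over, is settled by contracting only a spanning forest of each component of $X$ and keeping the resulting loops and parallel edges (so faces of $H-Y_\tau$ stay in bijection with faces of the contracted plane multigraph), then subdividing those extra edges before applying \Cref{lem:radial_tw}.
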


\begin{proof}
Consider some face region $\re F$ of the plane graph $H-Y_\tau$ that contains some point from $\dot Y_\tau$, and let $W(\re F,Y_\tau)$ denote the set of vertices along $\bd \re F$ that are connected to $Y_\tau$ in $H$. Then $H\parcon \cY_\tau$ is the planar graph where each such face $\re F$ has a new vertex $w_{\re F}$ whose neighborhood is $W(\re F,Y_\tau)$. Thus  $(H\parcon \cY_\tau)\vcon X$ is also planar. On the other hand, by the definition of $\cX$ and the disjointness of $X$ and $Y_\tau$ we have $(H\parcon \cY_\tau)\vcon X=(H\parcon \cY_\tau)\parcon \cX = H\parcon(\cX,\cY_\tau)$, which concludes the proof of the planarity claim.

Notice above that $H\parcon(\cX,\cY_\tau)=(H\parcon \cY_\tau)\parcon \cX$, where $X$ does not contain any vertices $w_{\re F}$. Thus $H\parcon(\cX,\cY_\tau)$ can be obtained from $(H-Y_\tau)\vcon X=(H-Y_\tau)\parcon \cX$ by adding new vertices $w_{\re F}$ to the corresponding faces of $H\parcon \cX$ and connecting it to some of the vertices along the face cycle, in particular, $H\parcon(\cX,\cY_\tau)$ is a subgraph of the graph $(H-Y_\tau)\vcon X \cup \Radi((H-Y_\tau)\vcon X)$. By \Cref{lem:radial_tw} and \eqref{eq:Hcontractdecomp} we get that 
\[\tw(H \parcon (\cX,\cY_\tau)) \leq \tw\big((H-Y_\tau)\vcon X \cup \Radi((H-Y_\tau)\vcon X)\big)= O(\tw((H-Y_\tau)\vcon X))=O(1/\eps^{18}).\]

Finally, using the algorithm of Kammer and~Tholey~\cite{KammerT12} we can compute a tree decomposition of the planar graph $H \parcon (\cX,\cY_\tau)$ of width $O(\tw(H \parcon (\cX,\cY_\tau)))=O(\tw((H-Y_\tau)\vcon X))$ in polynomial time (i.e., polynomial both in terms of the graph size and treewidth; recall that the treewidth is at most~$n$).
\end{proof}

Define $Y:=N_{H-X}(Y_\tau,c_\Lip)$, thus, by definition, $Y$ is disjoint from $X$.
Let $\xi_0: \cP \rightarrow \cP\parcon (\cX,\cY_\tau)$ denote the consolidation map. Set $\cY$ to be the partition of $Y$ that is defined based on $\cY_\tau$ as follows: we put $u,v\in Y$ in the same partition class of $\cY$ if and only if $\xi_0(u)$ and $\xi_0(v)$ are in the same connected component of $\big(H \parcon (\cX,\cY_\tau)\big)[\xi_0(Y)]$. In other words, $u,v\in Y$ are in the same class of $\cY$ if, after consolidating the classes of $Y_\tau$, there is a $u-v$ path using only the (images) of the vertices of $Y$. Recall that a class of $\cY_\tau$ does not necessarily induce a connected subgraph of $H$, thus this definition is {\em not} the same as being in the same component of $H[Y]$.

\begin{lemma}\label{lem:Yinner}
We have $Y\subset \Inn(\lambda+3c^2_\Lip)$, and for any $u,v\in Y$ if $u,v$ are in the same class of $\cY$, then they are in the interior of the same brick of $\pl M$.
\end{lemma}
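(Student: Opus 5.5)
The plan has two parts: the distance bound, and then the brick statement obtained by reusing \Cref{lem:Ybrick}.

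For the inclusion $Y\subset \Inn(\lambda+3c^2_\Lip)$, take $y\in Y$. By definition there is some $z\in Y_\tau=\Inn(\tau)$ with $\dist_{H}(y,z)\le \dist_{H-X}(y,z)\le c_\Lip$. The Lipschitz property of the identity map between $G_\cP$ and $H$ (\Cref{thm:getplanarclique}) gives $\dist_{G_\cP}(y,z)\le c_\Lip^2$. By \Cref{obs:GtoGP} this lifts to $G$ with at most a factor $3$ loss: every vertex of clique $y$ is within $G$-distance $3c_\Lip^2$ of every vertex of clique $z$, up to an additive constant from moving inside the two cliques, which I will absorb into the constant. Every vertex of $z$ is at $G$-distance more than $\tau=\lambda+\max(\beta^*,6c_\Lip^2)\ge\lambda+6c_\Lip^2$ from $p(V(\pl M))$. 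The triangle inequality then gives every vertex of $y$ distance more than $\lambda+3c_\Lip^2$, so $y$ is $(\lambda+3c_\Lip^2)$-inner.

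For the brick statement, I first observe that every vertex of $Y$ is $\beta^*$-inner, or at least inner with respect to a threshold exceeding $\beta^*$ by a comfortable margin. This lets me rerun the argument of \Cref{lem:Ybrick} with $Y_\tau$ replaced by the appropriate set. Now suppose $u,v\in Y$ are in the same class of $\cY$. Then there is a path from $\xi_0(u)$ to $\xi_0(v)$ in $\big(H\parcon(\cX,\cY_\tau)\big)[\xi_0(Y)]$. Each vertex of this path is either
\begin{itemize}
\item a single clique of $Y\setminus Y_\tau$, or
\item a consolidated class of $\cY_\tau$, which lies inside a single brick interior by \Cref{lem:Ybrick}.
\end{itemize}
So it suffices to show that consecutive vertices lie in the same brick interior. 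An edge between consolidated vertices comes from an $H$-edge $ab$ with $a,b\in Y$. Both $a$ and $b$ are far from $p(V(\pl M))$, the $H$-edge covers a curve using $O(c_\Lip^{1.1})$ objects of $G_\cP$, and all of these objects are still far from the mortar graph. Hence the curve of $ab$ does not cross $\gamma(\pl M)$, and $a,b$ lie in the same brick region. Inside a consolidated class, \Cref{lem:Ybrick} already puts all its cliques in one brick. Chaining these along the path puts $u$ and $v$ in the same brick interior.

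The main obstacle is that $\cY_\tau$-classes are not connected in $H$, so the chaining must pass through consolidated vertices. I handle this by invoking \Cref{lem:Ybrick} for those vertices. The remaining delicate point is the crossing claim for a single edge, which needs the $\tau$ margin to dominate the cover size $O(c_\Lip^{2.1})$ multiplied by $\alpha$. If that margin turns out to be too tight, I would fall back on the curve-surgery argument of \Cref{lem:Ybrick}, using \Cref{cl:shortcurve_unbound} and \Cref{cl:stillseparates} with $\gamma(\wf B_u)$, to show that the edge curve cannot separate.
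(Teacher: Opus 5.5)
Your first part is the paper's argument. The ``additive constant'' is not needed: \Cref{obs:GtoGP} already gives $\dist_G(x,x')\le 3\dist_{G_\cP}(y,z)$ for every $x\in y$, $x'\in z$ with $y\ne z$, so there is nothing to absorb. You are also right to use \Cref{lem:Ybrick} to chain through the consolidated $\cY_\tau$-classes, which the paper leaves implicit.

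The gap is in the single-edge step. \Cref{thm:getplanarclique} only bounds the number ($O(\alpha^{34})$) and the geometric extent of the objects covering the drawing of the $H$-edge $ab$. It says nothing about their $G$-distance from the clique $a$, so ``all of these objects are still far from the mortar graph'' does not follow from what you cite. It can be rescued: closed objects covering a connected curve have a connected intersection graph, and one of them contains $\dot a$. But you would then still have to fit roughly three times the cover count below the available margin, and the margin you plan to use is not available. Vertices of $Y$ are only known to be $(\lambda+3c_\Lip^2)$-inner, not $\tau$-inner, so a requirement of the size $\alpha\cdot O(c_\Lip^{2.1})$ that you mention (essentially $\beta^*$) need not be met. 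For the same reason, your opening claim that $Y$-vertices are inner with respect to a threshold above $\beta^*$ is only guaranteed when $\lambda\ge 3c_\Lip^2$. That claim is what your curve-surgery fallback rests on, so the fallback is not justified either.

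The missing idea is to apply the Lipschitz property to the edge $ab$ itself rather than to its drawing. Since $ab\in E(H)$, you get $\dist_{G_\cP}(a,b)\le c_\Lip$, so there is a $G_\cP$-path from $a$ to $b$ of length at most $c_\Lip$. Every object in its cliques is within $G$-distance $3c_\Lip$ of the objects of $a$, so all these cliques are $(\lambda+3c_\Lip^2-3c_\Lip)$-inner, in particular $\lambda$-inner. Any object meeting $\gamma(\pl M)$ is within $G$-distance $1$ of $p(V(\pl M))$. Hence the union of the objects along this path is a connected set disjoint from $\gamma(\pl M)$, and $a,b$ lie in the interior of the same brick. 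This is exactly what the $3c_\Lip^2$ slack from the first part is for; no geometry of $H$'s edges and no fallback is needed.
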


\begin{proof}
Notice that $Y=N_{H-X}(Y_\tau,c_\Lip)$ is a subset of $N_H(Y_\tau,c_\Lip)$. So, by the Lipschitz property we have:
\[Y\subset N_H(Y_\tau,c_\Lip) \subset N_{G_\cP}(Y_\tau,c^2_\Lip).\]
Recall that $Y_\tau=\Inn(\tau)$, and since each edge of $G_\cP$ can be simulated in $G$ by a path of length at most $3$, we have that each vertex of $Y$ is $(\tau-3c^2_\Lip)$-inner, thus ---as $\tau\geq \lambda+6c^2_\Lip$--- we have that $Y\subset \Inn(\lambda+3c^2_\Lip)$.

To prove the second claim, let $P$ be a path in $(H \parcon \cY_\tau)[\xi_0(Y)]$ connecting $\xi_0(u)$ to $\xi_0(v)$. It is sufficient to prove that for each edge $\xi_0(a)\xi_0(b)$ of this path the vertices $a,b$ fall in the same brick of $B$. Consider the original edge $ab\in E(H[Y])$. By the Lipschitz property, there is a path $P_{ab}$ of length at most $c_\Lip$ connecting $a$ to $b$ in $G_\cP$. Moreover, since  $Y\subset \Inn(\lambda+3c^2_\Lip)$, we have that the vertices of this shortest path $P_{ab}$ are in $\Inn(\lambda+3c^2_\Lip-3c_\Lip)$. Thus the path $P_{ab}$ can never leave its starting brick, as it cannot reach its boundary, i.e., $a$ and $b$ are in the same brick.
\end{proof}

\subsection{The graphs $G^*_\cP$, $G^*$, and their properties}

Notice that $X$ and $Y$ are disjoint, thus the definition of $\cY$ can be stated equivalently as follows: $u,v\in Y$ are in the same partition class of $\cY$ if and only if $\xi_0(u)$ and $\xi_0(v)$ are in the same connected component of $\big(H \parcon \cY\big)[\xi_0(Y)]$.
Consequently, for each $Y_i\in \cY_\tau$ there is a unique $\phi(Y_i) \in \cY$ such that $Y_i\subseteq \phi(Y_i)$. (That is, $\cY$ is defined on the larger set $Y\supseteq Y_\tau$, but within $Y_\tau$ it can be thought of as a coarsening of $\cY_\tau$.) We also observe that each connected component of $H[Y]=(H-X)[Y]$ is contained in a single class of $\cY$.
Let $\xi:\cP \rightarrow \cP\parcon (\cX,\cY)$ denote the new consolidation map.

We define $Z:= N_{G_\cP}(X,c^2_\Lip)$. Now~\eqref{eq:Xbound} and the fact that $G_\cP$ connects cliques corresponding to integer grid points that are within constant distance, we have
\begin{equation}\label{eq:Zbound}
|Z|= O(|X|c^4_\Lip)=O(|X|)=O(\eps^2 \opt).
\end{equation}

\begin{lemma}\label{lem:GcsillagP}
In polynomial time we can construct a  graph $G^*_{\cP}$ and a tree decomposition of $G^*_{\cP}\parcon(\cX,\cY)$ of treewidth $O(1/\eps^{18})$ with the following properties.

\begin{enumerate}[label={(\roman*)}]
\item $V(G^*_{\cP})=\cP$.
\item For each edge $uv$ of $G_{\cP}$ either $uv$ is in $E(G^*_\cP)$ or there exists a path of length at most $c_\Lip$ in $G^*_{\cP}$ from $u$ to $v$ whose vertices are in $Z$.
\item For each edge $u^*v^*$ of $G^*_\cP$ either $u^*v^*$ is in $E(G_\cP)$ or there exists a path of length at most $c_\Lip$ in $G_\cP$ from $u$ to $v$ whose vertices are in $Z$.
\end{enumerate}
\end{lemma}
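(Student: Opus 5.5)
The natural candidate is $G^*_\cP := (G_\cP - Z) \cup (H[Z]) \cup \{\text{edges of } G_\cP \text{ with one endpoint outside } Z\}$. More precisely, I would keep every edge of $G_\cP$ that has at least one endpoint outside $Z$, and replace all edges of $G_\cP$ with both endpoints in $Z$ by the edges of $H$ induced on $Z$. Replacing edges by $H$-edges only near $X$ is what lets the treewidth bound for the contracted planar graph $H\parcon(\cX,\cY_\tau)$ transfer to $G^*_\cP\parcon(\cX,\cY)$.

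The verification of (ii) and (iii) uses the $c_\Lip$-Lipschitz identity map between $G_\cP$ and $H$. For (ii), take an edge $uv$ of $G_\cP$ with $u,v\in Z$. There is an $H$-path of length at most $c_\Lip$ between them, but it must lie in $Z$. For this I would shrink the set on which edges are replaced to $Z' = N_{G_\cP}(X,c^2_\Lip/2)$ and note that an $H$-path of length at most $c_\Lip$ from $u$ stays within $G_\cP$-distance $c^2_\Lip/2$ of $u$, hence inside $Z$. So the replacement happens on $Z'$ and the detour paths use $H$-edges inside $Z$; I would include all $H$-edges inside $Z$ to be safe. For (iii), an $H$-edge inside $Z$ corresponds to a $G_\cP$-path of length at most $c_\Lip$. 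I would again use the margin $c_\Lip^2$ versus the radius of $X$ to keep this path inside $Z$, adjusting the radii (for example $Z=N(X,2c_\Lip^2)$, which keeps $|Z|=O(|X|)$) if necessary.

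For the tree decomposition, I would start from the decomposition $\cT$ of $H\parcon(\cX,\cY_\tau)$ given by Lemma~\ref{lem:contractionisgood} and lift it as in the proof of Lemma~\ref{lem:twcompare}. For each bag, I replace each vertex $w\notin X\cup Y_\tau$ by $\xi(N_H(w,c_\Lip))$ and each consolidated vertex by its image under the coarsening map ($\phi$ for $\cY_\tau$-classes and the identity for $\cX$-classes). Since $H$ has maximum degree $O(\alpha^{34})$, bags grow by a constant factor, giving width $O(1/\eps^{18})$.

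The checks follow Observation~\ref{obs:connected_treedecomp}. Edges of $G^*_\cP$ that are $H$-edges are covered directly. Edges of $G_\cP$ with an endpoint $w$ outside $Z$ have $H$-length at most $c_\Lip$ and avoid $X$, so they lie in the bag of $w$'s neighbourhood. For a class of $\cY$, its preimage is connected in $H\parcon(\cX,\cY_\tau)$ by the definition of $\cY$, so the bags containing it form a subtree. The main obstacle is the edges between a class of $\cY$ and vertices outside $Y$. Here I would use Lemma~\ref{lem:Yinner} together with $Y=N_{H-X}(Y_\tau,c_\Lip)$: any $G_\cP$-neighbour of a vertex of $Y_\tau$ is at $H$-distance at most $c_\Lip$, hence lies in $Y$ or in $X$. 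This shows that classes of $\cY$ only see vertices within $H$-distance $c_\Lip$ in $H-X$, which the neighbourhood inflation covers. Finally, I would check the subtree property for singleton vertices as in Lemma~\ref{lem:twcompare}.
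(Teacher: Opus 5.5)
There is a genuine gap in your tree-decomposition step. Your lifting puts $\xi(N_H(w,c_\Lip))$ into the bag for every unconsolidated $w$, and you say the subtree property can be checked as in Lemma~\ref{lem:twcompare}. That argument worked only because the contracted set there was $X^{c^2}$, so every non-contracted vertex had a $c$-ball disjoint from $X$. Here only $X$ (and $Y_\tau$) is contracted, so for $z\in Z\setminus X$ the ball $N_H(z,c_\Lip)$ can run through an $\cX$-class.

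For example, let $z$ and $w$ both be $H$-adjacent to some $x\in X$, with no short $X$-avoiding path between them. In $H\parcon(\cX,\cY_\tau)$ both are adjacent to $\xi_0(x)$, and $\cT$ may contain the path of bags $\{z,\xi_0(x)\}$, $\{\xi_0(x)\}$, $\{\xi_0(x),w\}$. Your lifted bags then contain $z$ in the first and third bag but not the middle one. The same failure occurs for a $\cY$-class that $w$ reaches only through $X$. You cannot repair this by also inflating $\cX$-class vertices, since a class of $\cX$ (a component of $X$) can be arbitrarily large.

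The paper's fix is to inflate an unconsolidated $v$ only along paths of length at most $c_\Lip$ whose internal vertices are unconsolidated. Then the bags receiving a given vertex come from a connected subgraph of $H\parcon(\cX,\cY_\tau)$, so the subtree property holds. The price is that a $G_\cP$-edge whose short $H$-connections all cross $X$ is no longer covered. Such an edge can have an endpoint outside your replacement region $Z'$: a vertex at $H$-distance close to $c_\Lip$ from $X$ can be at $G_\cP$-distance close to $c^2_\Lip$. So your a priori definition of $G^*_\cP$ does not fit the repaired decomposition.

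This is why the paper defines $G^*_\cP$ from the decomposition. It keeps a $G_\cP$-edge $uv$ only if $\xi(u),\xi(v)$ share a bag. It adds a shortcut $uv$ for $u,v\in Z$ joined by a path of length at most $c_\Lip$ in $G_\cP[Z]$, again only if $\xi(u),\xi(v)$ share a bag. Validity, (i) and (iii) then hold by construction. For (ii), a dropped edge $uv$ has an $H$-path $Q$ of length at most $c_\Lip$ that meets $X$; if $Q$ met $Y_\tau$ but avoided $X$, all of $Q$ would lie in $Y$ and $\xi(u)=\xi(v)$. Expanding $Q$ gives a $G_\cP$-walk of length at most $c^2_\Lip$ through $X$, which therefore lies in $Z=N_{G_\cP}(X,c^2_\Lip)$, and each edge of $Q$ is a covered shortcut. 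This also removes any need to enlarge $Z$. You would otherwise have to enlarge it, changing the statement, to make (iii) hold for $H$-edges near the boundary of $Z$.
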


\begin{proof}
We construct a tree decomposition of $G^*_{\cP}\parcon(\cX,\cY)$ based on the tree decomposition $\cT$ of width $O(1/\eps^{18})$ given by Lemma~\ref{lem:contractionisgood}, and we will use this to define the graph $G^*_\cP$ on the vertex set $\cP$. The two tree decompositions will have the same tree structure, but the contents of the bags will differ.

Let $B$ be a bag of the tree decomposition $\cT$ of $H\parcon(\cX,\cY_\tau)$; we will construct a corresponding bag $B^*$ for the tree decomposition of $G^*_{\cP}\parcon(\cX,\cY)$ as follows.

If $v\in B$ is a consolidated vertex, i.e., $v\in \xi_0(X\cup Y)$, then we add the corresponding vertex of $\cP\parcon(\cX,\cY)$ to $B^*$, or formally, we include the vertex $\xi(\xi_0^{-1}(v))$ to $B^*$. We observe that $\xi(\xi_0^{-1}(w))$ is always a single vertex: indeed, $\xi_0^{-1}(w)$ can only be of size more than one if $w\in \xi_0(X)$ or $w\in \xi_0(Y)$.
In the former case we have $\xi(\xi_0^{-1}(w))=w$ as $\xi_0$ and $\xi$ are the same over $X$. In the latter case, we have that there is a unique $Y_i\in \cY$ containing $Y_j=\xi_0^{-1}(w)$ thus $\xi(\xi_0^{-1}(w))=\xi(Y_i)$ is also a single vertex.

Now let $v\in B$ be some non-consolidated vertex of $H\parcon(\cX,\cY_\tau)$, i.e., let $v\in B\setminus \xi_0(X\cup Y)$. Consider the set of paths $\cQ_v$ in $H\parcon(\cX,\cY)$ starting at $v$ that have length at most $c_\Lip$ whose internal vertices are not consolidated vertices. (That is, the starting vertex $v$ and every internal vertex of these paths is from $\cP\setminus(X\cup Y)$, while the end vertex of the path may or may not be consolidated.)
For each vertex $w\in \xi_0(\cP)$ that appears as an endpoint of some path in $\cQ_v$ we add $\xi(\xi_0^{-1}(w))$ to the bag $B^*$. As seen earlier, $\xi(\xi_0^{-1}(w))$ is always a single vertex. Let $\cT^*$ be the constructed tree decomposition.

Let us now bound the size of the bag $B^*$ based on $|B|$. Notice first that $B^*$ contains at most one vertex for each consolidated vertex of $B$. Additionally, for each non-consolidated vertex $v\in B\setminus \xi_0(X\cup Y)$ we consider all paths of length $c_\Lip$ starting at $v$ whose internal vertices are also non-consolidated. Recall that $H$ was obtained via \Cref{thm:getplanarclique}, which guarantees that the maximum degree of $H$ is bounded by a constant $c_{\Delta(H)}$. Consequently, the number of paths starting at $v$ of length at most $c_\Lip$ in $H$ can be bounded as $c_{\Delta(H)}^{c_\Lip}=O(1)$. Notice that the same bound holds for $|\cQ_v|$ as we allowed only the last vertex of a path of $\cQ_v$ to be consolidated. Thus, for each non-consolidated vertex $v$ of $B$ we have added at most $O(1)$ vertices to $B^*$. We conclude that $|B^*|=O(|B|)$ and the width of $\cT^*$ is at most constant times the width of $\cT$, thus the width of $\cT^*$ is $O(1/\eps^{18})$. Notice that the construction of $G^*$ and $\cT^*$ can be obtained in polynomial time when a tree decomposition of $H\parcon(\cX,\cY_\tau)$ of width $O(1/\eps^{18})$ is given, which we have also obtained in polynomial time using \Cref{lem:contractionisgood}.

Next, we verify that each vertex $w$ of $\xi(\cP)$ appears in a non-empty connected subtree of $\cT^*$.

A vertex $w\in\xi(\cP\setminus Y)$ will appear in a bag $B^*$ if and only if $B$ contained an un-consolidated vertex $v$ of $H\parcon(\cX,\cY_\tau)$ that is reachable on a path of length at most $c_\Lip$ from $w$ in $H\parcon(\cX,\cY_\tau)$ whose internal vertices are not consolidated. (Notice that since $\xi$ and $\xi'$ are the same on $X$, the vertex $w$ is necessarily a vertex of $H$.) Since the set of such vertices $v$ form a connected subgraph of $H\parcon(\cX,\cY_\tau)$, and connected subgraphs appear in a connected subtree of $\cT$, the vertex $w$ will also appear in a connected subtree of~$\cT^*$.

When $w\in \xi(Y)$, then let $W=\xi_0(\xi^{-1}(w))$ be the set of corresponding vertices in $H\parcon(\cX,\cY_\tau)$. Note that $W$ induces a connected subgraph of $H\parcon(\cX,\cY_\tau)$ by the definition of $\cY$. The vertex $w$ will appear in a bag $B^*$ if and only if either $B$ contained an unconsolidated vertex $v$ of $H\parcon(\cX,\cY_\tau)$ that is reachable on a path of length at most $c_\Lip$ from $W$ in $H\parcon(\cX,\cY_\tau)$ whose internal vertices are not consolidated, or if there is some consolidated $v\in \xi_0(Y_\tau)\cap W$ in $B$. The set of such vertices $v$ again form a connected subgraph of $H\parcon(\cX,\cY_\tau)$ as $W$ induces a connected subgraph of $H\parcon(\cX,\cY_\tau)$. Thus $w$ will also appear in a connected subtree of $\cT^*$, which concludes the verification. 


We are now ready to define the edges of $G^*_\cP$ with vertex set $\cP$ as follows:

\begin{itemize}
\item If $uv$ is an edge in $G_\cP$ and $\xi(u),\xi(v)$ appear together in some bag of $\cT^*$, then include the edge $uv$ in $G^*_\cP$.
\item If there are $u,v\in Z$ and there is a path $Q_{uv}$ in $G_\cP[Z]$ of length at most $c_\Lip$ such that $\xi(u),\xi(v)$ appear in some bag of $\cT^*$, then include the edge $uv$ in $G^*_\cP$.
\end{itemize}

Notice that the definition of $G^*_\cP$ implies properties (i) and (\textit{iii}). It is also straightforward to verify that $\cT^*$ is a valid tree decomposition of $G^*_\cP\parcon(\cX,\cY)$ as the edges of $G^*_\cP$ were defined in such a way to ensure that they are represented in some bag $B^*$ of $\cT^*$.

It remains to prove Property (\textit{ii}). Let $uv \in E(G_\cP)$, and $uv\not\in E(G^*_\cP)$, and let $Q$ be a path of length at most $c_\Lip$ in $H$ connecting $u$ and $v$. We need to prove that there exists a path of length at most $c_\Lip$ in~$G^*_{\cP}[Z]$.

We distinguish three cases based on the interaction of the vertices of~$Q$ with~$X$ and~$Y$.
\begin{itemize}
\item The path $Q$ has a vertex $x\in X$.\\
We will show that each edge $ab$ of $Q$ is also an edge of $G^*_\cP$, which will prove that $Q$ has the desired properties. Note that $ab\in E(H)$ implies that either $\xi_0(a)=\xi_0(b)$ or there is an edge $\xi_0(a)\xi_0(b)\in H\parcon(\cX,\cY_\tau)$, thus $\xi_0(a)$ and $\xi_0(b)$ appear in some bag $B$ of $\cT$. Moreover, there is a path $Q_{ab}$ of length at most $c_\Lip$ in $G_\cP$ connecting $a$ and $b$ by the Lipschitz property. The concatenation of these paths for each edge $ab$ of $Q$ is a walk $Q^{G_\cP}$ of $G_\cP$ of length at most $c^2_\Lip$ that still contains some $v\in V(Q)\cap X\subseteq V(Q^{G_\cP})\cap X$. Recall that $Z=N_{G_\cP}(X,c^2_\Lip)$, thus $V(Q^{G_\cP})\subset Z$, and in particular, the vertices of $Q_{ab}$ are in $Z$. The definition of $G_\cP$ now yields that $ab\in E(G^*_\cP)$, as claimed.

\item The path $Q$ has a vertex $y \in Y$, but no vertex from $X$.\\
It follows that $V(Q)\subseteq N_{H-X}(y,c_\Lip)$, and thus it is in a single connected component of $(H-X)[Y]$. This connected component of $(H-X)[Y]$ falls in a single class $Y_i$ of $\cY$ by the definition of $\cY$. Thus $G^*_\cP$ contains the edge $uv$ because $\xi(u)=\xi(v)=\xi(Y_i)$. The edge $uv$ is therefore a single-edge path with the desired property.

\item The path $Q$ has no vertex from $X\cup Y$.\\
If $u=\xi_0(u)$ appears in some bag $B$, then both $\xi(u)$ and $\xi(v)$ appear in $B^*$ as $Q\in \cQ_u$. It follows that the edge $uv$ is in $G^*_\cP$ and serves as a single-edge path with the desired property.\qedhere
\end{itemize}
\end{proof}

For a vertex $u\in V(G)$ recall that $u_\cP$ denotes the corresponding class of $\cP$ containing $u$. Let $G^*_\cP$ be the graph from \Cref{lem:GcsillagP}.

\begin{definition}\label{def:Gcsillag}
We define $G^*$ to be the graph on the vertex set $V(G^\orig)$ where we include an edge $uv$ if and only if:
\begin{enumerate}[label=\textit{(\roman*)}]
\item $u_\cP=v_\cP$ or
\item $u_\cP v_\cP\in E(G^*_\cP)$ and either a) $uv\in E(G)$ or b) $u_\cP v_\cP\not\in E(G_\cP)$
\end{enumerate} 
\end{definition}

\begin{observation}\label{obs:Gcsillagosszehuz}
$G^*_\cP=G^*\parcon \cP$.
\end{observation}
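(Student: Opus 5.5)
The statement to prove is Observation~\ref{obs:Gcsillagosszehuz}, namely $G^*_\cP=G^*\parcon \cP$. The plan is a direct unfolding of \Cref{def:Gcsillag} and of the definition of consolidation. Both graphs have vertex set $\cP$. For $G^*_\cP$ this holds by \Cref{lem:GcsillagP}(i). For $G^*\parcon\cP$ it holds because $V(G^*)=V(G^\orig)$ and $\cP$ partitions this set. It therefore remains to show that $C C'$ is an edge of $G^*\parcon\cP$ if and only if it is an edge of $G^*_\cP$, for every pair of distinct classes $C\neq C'$.

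Loops are discarded in consolidation, so edges of type (i) in \Cref{def:Gcsillag} (those with $u_\cP=v_\cP$) play no role. An edge $CC'$ of $G^*\parcon\cP$ with $C\ne C'$ arises from some edge $uv\in E(G^*)$ with $u\in C$ and $v\in C'$. Such an edge must be of type (ii), which requires $u_\cP v_\cP=CC'\in E(G^*_\cP)$. This gives the inclusion $E(G^*\parcon\cP)\subseteq E(G^*_\cP)$.

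For the converse, let $CC'\in E(G^*_\cP)$ and split into the two cases of clause (ii). If $CC'\notin E(G_\cP)$, pick any $u\in C$ and $v\in C'$; these exist since classes of $\cP$ are nonempty. Then $uv\in E(G^*)$ by (ii)(b). If instead $CC'\in E(G_\cP)$, then by the definition of $G_\cP$ as the consolidation of $G$ there is an edge $uv\in E(G)$ with $u\in C$ and $v\in C'$. Then $uv\in E(G^*)$ by (ii)(a). In both cases $CC'$ is an edge of $G^*\parcon\cP$.

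One point needs care, and it is the only potential obstacle. We must make sure that $G$ and $G^\orig$ are used consistently, so that the vertices of $G^*$ are exactly the objects partitioned by $\cP$ and the edges of $G_\cP$ come from edges of that same $G$. This follows from the setup of the section, where $\cP$ is the cell partition of the sparsified instance $G$. Apart from this bookkeeping, the argument is immediate.
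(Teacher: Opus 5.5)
Your proof is correct and follows essentially the same argument as the paper. Both inclusions come from unfolding \Cref{def:Gcsillag}. For $E(G^*_\cP)\subseteq E(G^*\parcon\cP)$, you split according to whether the two cliques are joined by an edge of $G$ (equivalently, adjacent in $G_\cP$) and invoke clause (ii)(a) or (ii)(b); the reverse inclusion follows because every edge of $G^*$ between distinct cliques requires $u_\cP v_\cP\in E(G^*_\cP)$.
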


\begin{proof}
First, we show that $G^*_\cP$ is a subgraph of $G^*\parcon \cP$.
Suppose that $u_\cP v_\cP$ is an edge in $G^*_\cP$. We need to show that there is an edge between some vertex $u\in u_\cP$ and $v\in v_\cP$ in $G^*$. Notice that if there is some $uv\in E(G)$, then $uv$ is included in $G^*$ by part (\textit{ii}) of the definition. On the other hand, if there is no such edge, then $u_\cP$ and $v_\cP$ are not connected in $G_\cP$, thus  $u_\cP v_\cP\not\in G_\cP$ implies that all edges between any $u\in u_\cP$ and $v\in v_\cP$ have been added to $G^*$.

To show that $G^*\parcon \cP$ is a subgraph of $G^*_\cP$, notice that any edge $uv$ to be included in $G^*$ where $u_\cP\neq v_\cP$ requires that $u_\cP v_\cP\in E(G^*_\cP)$. Thus $G^*\parcon \cP$ can only contain edges that are present in~$G^*_\cP$.
\end{proof}

We can define an analogous consolidation on $V(G^*)$ via a new notation. For a set $A\subset \cP$ define $X^\cup:= \bigcup_{C\in X} C$. For a subpartition $\cZ$ over $\cP$ (i.e., a partition of a subset of $\cP$) we define $\cZ^\cup=\{Z^\cup \mid Z\in \cZ\}$ to be the corresponding subpartition of $V(G)$. We will now consider the consolidation $G^*\parcon (\cX^\cup, \hat \cY^\cup)$, and we denote by $\psi:V(G)\rightarrow V(G)\parcon (\cX^\cup, \hat \cY^\cup)$ the consolidation map.


\begin{lemma}\label{lem:PPcsillag}
Let $u,v\in V(G)$ whose objects intersect the same brick region $\re B$ of $G$, and let $P$ be a path from $u$ to $v$ in $G$. Then in polynomial time we can construct a walk $P^*$ in $G^*$ from $u$ to $v$ with the following properties:
\begin{enumerate}
\item $V(P^*)\subset N_{G^*}(V(P_B),3c_\Lip)$.
\item For any $a,b\in V(P)$ we have $V(P^*[a,b]\setminus Z^\cup)\subset V(P[a,b]) \subset V(P^*[a,b])$.
\end{enumerate}
\end{lemma}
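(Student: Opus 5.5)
The plan is to build $P^*$ edge by edge from $P$, keeping every edge of $P$ that already exists in $G^*$ and replacing each missing edge by a short detour. I read $P_B$ as the path $P$ viewed through the brick $B$: since $u$ and $v$ both meet $\re B$, the vertices of $P$ are the original objects $(\cdot)^\orig$ of the path of $B$-objects corresponding to $P$. So $V(P)\subseteq V(P_B)$ as sets of original objects, and it suffices to stay within $G^*$-distance $3c_\Lip$ of vertices of $P$ that belong to $P_B$. If the intended meaning of $P_B$ is different, the construction below is unchanged and only this identification in the first property needs revisiting.

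Fix an edge $ab$ of $P$. If $ab\in E(G^*)$ we keep it. This covers the case $a_\cP=b_\cP$ (by \Cref{def:Gcsillag}(i)). It also covers the case where $a_\cP b_\cP\in E(G^*_\cP)$, since then $ab\in E(G)$ and \Cref{def:Gcsillag}(ii)a applies. Otherwise $a_\cP b_\cP\in E(G_\cP)\setminus E(G^*_\cP)$. By \Cref{lem:GcsillagP}(ii) there is a path $a_\cP=C_0,C_1,\dots,C_k=b_\cP$ in $G^*_\cP$ with $k\le c_\Lip$ and all $C_i\in Z$.

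I then lift this cell path to a walk in $G^*$ from $a$ to $b$, step by step. Suppose the walk has reached some $x\in C_i$. If $C_iC_{i+1}\notin E(G_\cP)$, then by \Cref{def:Gcsillag}(ii)b every vertex of $C_i$ is adjacent in $G^*$ to every vertex of $C_{i+1}$. So we step directly to any vertex of $C_{i+1}$, choosing $b$ itself when $i+1=k$. If $C_iC_{i+1}\in E(G_\cP)$, pick $x'\in C_i$ and $y\in C_{i+1}$ with $x'y\in E(G)$. This is a $G^*$-edge by \Cref{def:Gcsillag}(ii)a. We walk $x\to x'$ inside the clique $C_i$ using (i), then $x'\to y$. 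If $i+1=k$ and $y\ne b$, we append the clique step $y\to b$. Each cell step thus costs at most $3$ edges, so the detour has length at most $3c_\Lip$. Every new vertex lies in $\bigcup_i C_i\subseteq Z^\cup$. All of this is computable in polynomial time from \Cref{lem:GcsillagP}.

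Concatenating the kept edges and detours gives a $u$–$v$ walk $P^*$. For property~2, every vertex of $P$ is kept and appears in order, so $V(P[a,b])\subseteq V(P^*[a,b])$. Every inserted vertex lies in $Z^\cup$, so $V(P^*[a,b])\setminus Z^\cup\subseteq V(P[a,b])$. For property~1, each inserted vertex lies on a detour of length at most $3c_\Lip$ starting at an endpoint $a\in V(P)\subseteq V(P_B)$. Hence it is within $G^*$-distance $3c_\Lip$ of $V(P_B)$. The brick hypothesis enters exactly here: it is what lets us identify $V(P)$ with objects of $P_B$.

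The main obstacle is the lifting step. Adjacency in $G^*_\cP$ does not give adjacency between arbitrary representatives, so a single cell step may need an intra-clique step before the inter-cell edge. I must make sure these extra vertices still lie in cells of $Z$, including the endpoint cells $a_\cP$ and $b_\cP$. This holds because the path from \Cref{lem:GcsillagP}(ii) lies entirely in $Z$, endpoints included, and it is why the distance bound is $3c_\Lip$ rather than $c_\Lip$.
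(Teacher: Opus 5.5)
Your proposal is correct and takes essentially the paper's approach: keep every edge of $P$ that survives in $G^*$, and replace each missing edge by the cell path of length at most $c_\Lip$ in $G^*_\cP[Z]$ from \Cref{lem:GcsillagP}(ii), lifted to a walk of length at most $3c_\Lip$ in $G^*[Z^\cup]$. The paper justifies the lift by citing \Cref{obs:GtoGP}, which is stated for $G$ and $G_\cP$ rather than $G^*$ and $G^*_\cP$; your step-by-step lift using the clauses of \Cref{def:Gcsillag} supplies that detail directly.
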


\begin{proof}
Consider the edges of $P$ in order from $u$ to $v$. Let $pq$ be the next edge of $P$ in this order. If $pq\in E(G^*)$, then we add $pq$ to $P^*$. Suppose that $pq\not\in E(G^*)$.
By the definition of $G^*$ (\Cref{def:Gcsillag}) this implies that $p$ and $q$ are in different cliques of $\cP$, i.e., $p_\cP\neq q_\cP$, and $p_\cP q_\cP\not\in E(G^*_\cP)$. By \Cref{lem:GcsillagP}(\textit{ii}) we have that $p_\cP$ and $q_\cP$ can be connected by some path of length at most $c_\Lip$ in $G^*_\cP[Z]$. Such a path corresponds to a path of length at most $3c_\Lip$ in $G^*[Z^\cup]$ from $p$ to $q$ by \Cref{obs:GtoGP}. We add this path to $P^*$. Once all edges of $P$ are processed, the resulting walk $P^*$ satisfies the above properties.
\end{proof}


\subsection{Finishing the proof of \Cref{thm:advanced_struct}}

Next, we define a new partition $\hat \cY$ of $Y$ that is a coarsening of $\cY$. The classes $Y_i,Y_j\in \cY$ will be added to the same class of $\hat \cY$ if and only if they are in the same connected component of $G^*_\cP[Y]$, that is, we unify two classes of $\cY$ while there is an edge in $G^*_\cP$ connecting them. 

\begin{lemma}\label{lem:holehasuniquebrick}
If $\psi(a)=\psi(b)$ for some $a,b\in Y^\cup$ then $a,b$ are in the same brick.
\end{lemma}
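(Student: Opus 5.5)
We need to show that if $\psi(a)=\psi(b)$ with $a,b\in Y^\cup$, then $a$ and $b$ lie in the same brick. Since $a,b\in Y^\cup$ and $Y$ is disjoint from $X$, the equality $\psi(a)=\psi(b)$ can only come from the two objects lying in the same class of $\hat\cY^\cup$. So $a_\cP$ and $b_\cP$ lie in the same class of $\hat\cY$. By definition of $\hat\cY$, there is a chain of classes $Y_{i_0}\ni a_\cP,\dots,Y_{i_m}\ni b_\cP$ of $\cY$, where consecutive classes are joined by an edge of $G^*_\cP[Y]$.

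Within a single class of $\cY$, \Cref{lem:Yinner} already shows that all its cliques lie in the interior of one brick. By transitivity it therefore suffices to prove the following: if $u,v\in Y$ and $uv\in E(G^*_\cP)$, then $u$ and $v$ are in the interior of the same brick. For this I will use \Cref{lem:GcsillagP}(\textit{iii}). Either $uv\in E(G_\cP)$, or there is a path of length at most $c_\Lip$ in $G_\cP$ from $u$ to $v$ whose vertices all lie in $Z$. In both cases $u$ and $v$ are joined by a path $Q$ in $G_\cP$ of length at most $c_\Lip$.

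By \Cref{lem:Yinner}, every vertex of $Y$ is $(\lambda+3c^2_\Lip)$-inner. Each edge of $G_\cP$ is simulated by a path of length at most $3$ in $G$ (\Cref{obs:GtoGP}). Hence every clique on $Q$ is $(\lambda+3c^2_\Lip-3c_\Lip)$-inner, in particular at distance greater than $0$ from $p(V(\pl M))$. So no object of $Q$ meets the drawing of $\pl M$, exactly as in the proof of the second claim of \Cref{lem:Yinner}. Since consecutive objects along $Q$ intersect and none of them touches a brick boundary curve, the union of the objects of $Q$ is a connected set disjoint from $\gamma(\pl M)$. It therefore lies in the interior of a single brick region, which gives the claim.

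The main point needing care is the second case of \Cref{lem:GcsillagP}(\textit{iii}). There the intermediate vertices lie in $Z$ rather than $Y$, so their innerness is not directly given. I would handle this by deriving innerness from the endpoint $u$ and the length bound $c_\Lip$, not from membership in $Y$. This works because $\tau\ge\lambda+6c^2_\Lip$ leaves enough slack: $3c^2_\Lip\ge 3c_\Lip$.
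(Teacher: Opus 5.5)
Your proposal is correct and follows essentially the same route as the paper's proof: you handle two cliques in the same class of $\cY$ by \Cref{lem:Yinner}, turn each edge of $G^*_\cP$ joining classes into a $G_\cP$-path of length at most $c_\Lip$ via \Cref{lem:GcsillagP}(\textit{iii}), and use \Cref{obs:GtoGP} to show every clique on that path is $(\lambda+3c^2_\Lip-3c_\Lip)$-inner and hence confined to the brick of its endpoint. One small precision: to conclude that an object avoids the drawing of $\pl M$ you need its distance to $p(V(\pl M))$ to exceed $1$, not merely $0$ (an object touching an edge curve meets a parent object), but your bound of more than $\lambda$ covers this.
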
 

\begin{proof}
\Cref{lem:Yinner} implies that $Y\subset \Inn(\lambda+3c_\Lip^2)$, thus any edge of $G_\cP$ induced by $Y$ is an edge connecting a pair of inner cliques; in particular, any edge of $G_\cP$ incident to $Y$ has both its endpoints in the inner clique of the same brick. Now if the corresponding cliques $a_\cP$ and $b_\cP$ are in the same class of $\cY$, then they are in the same brick by the second part of \Cref{lem:Yinner}. Suppose that this is not the case, but $a_\cP$ and $b_\cP$ are in the same class of $\hat \cY$.

Recall that the definition of $\hat \cY$ was based on a coarsening of $\cY$ where two partition classes were unified if and only if they were connected by some edge in $G^*_\cP$. Consider now some edge $uv$ of $G^*_\cP$. By property (\textit{iii}) of \Cref{lem:GcsillagP} we have that either $uv$ is also an edge of $G_\cP$ ---and by the argument above we automatically have that $u$ and $v$ are $\lambda$-inner cliques of the same brick---, or there exists a path $P_{uv}$ of length at most $c_\Lip$ in $G_\cP$ from $u$ to $v$. By \Cref{obs:GtoGP} every vertex of this path are within distance $3c_\Lip$ from $u$, thus they are $(\lambda+3c^2_\Lip-3c_\Lip)$-inner vertices of the brick containing $u$. Thus $u$ and $v$ must be in the same brick. Since each class of $\hat \cY$ can be connected by such paths, we have that each class of $\hat \cY$ is contained in a single brick, and in particular, the objects in these cliques are original objects in the same brick of~$\pl M$.
\end{proof}

\begin{proof}[Finishing the proof of \Cref{thm:advanced_struct}]
Recall that we have applied \Cref{lem:sparseSteiner}, thus all classes in $\cP$ have size at most $2^{O(1/\eps\log(1/\eps))}$. 

We apply \Cref{thm:structure_pathgen} on $(G,T)$ to obtain a vertex set $W$ where $|W\cap B|=2^{O(1/\eps^{7.5})}$ for each brick $B$ of $\pl M$. Consider the consolidation $G^*_{\cX\hat \cY}:=G^*\parcon (\cX^\cup, \hat \cY^\cup)$ defined above. We 
will modify $G^*_{\cX\hat \cY}$ around the vertices of $\psi(Y^\cup)$ in order to construct $G'$.

\subparagraph*{The construction of $G'$.}

As seen in \Cref{sec:SteinerStruct} we will occasionally need to refer to a subpolygon $u\in B$ and the corresponding original object $u^\orig$ from the original graph $G^\orig$.
 
Consider a vertex $y\in \psi(Y^\cup)$, and let $B$ be the brick containing the class $A_y:=\psi^{-1}(y)\in \hat \cY^\cup$ (there exists such a brick by \Cref{lem:holehasuniquebrick}). We will add new vertices and edges to $G^*_{\cX\hat \cY}$ related to $y$. 
Next, for each vertex pair $u,v\in W\cap B$ we fix a shortest path in $B$ from $u^\orig$ to $v^\orig$; let $P_{uv}$ be the corresponding path in $G$ (using original objects of $V(G)$, from $u^\orig$ to $v^\orig$). We will now work exclusively on $V(G^\orig)$, so for the sake of brevity we will use $u,v$ to refer to $u^\orig,v^\orig$ throughout the definition of $G'$. Our goal is to model the behavior of this path in each set $A_y$ it passes through; in order to do that, we will identify entry and exit points $\bar a^{uv}_i,\bar b^{uv}_i$ for each path that pass through some sets $A_y$. Due to the small difference between $G$ and $G^*$ we need to define several paths. 
For each path $P_{uv}$ we define the corresponding walk $P^*_{uv}$ in $G^*$ as in \Cref{lem:PPcsillag}.

Next, we slice $P^*_{uv}$ as we walk from $u$ to $v$ as follows. Let $a_1$ be the first vertex on $P^*_{uv}\in Y^\cup$ when starting from $u$ such that $a_1\in Y^\cup$. We set $y_1=\psi(a_1)$, and let $b_1$ be the last vertex from $A_{y_1}$ in $P_{uv}$; consequently, $V(P^*_{uv})\setminus V(P^*_{uv}[a_1,b_1])$ is disjoint from $A_{y_1}$. See \Cref{fig:path_slicing}. (We remark that when $u,v\in A_{y_1}$, then $a_1=u $ and $b_1=v$, and no slicing will take place.) 
Assuming some sequence of vertices $a_1,b_1,\dots,a_{i-1},b_{i-1}$ along $P^*_{uv}$ is already defined, let $a_i$ be the first vertex of $P^*_{uv}$ after $b_{i-1}$ that falls into $Y^\cup$. We define $y_i=\psi(a_i)$, and let $b_i$ be the last vertex of $P^*_{uv}$ that falls into $A_{y_i}$. The procedure stops once the suffix of $P^*_{uv}$ after $b_i$ is disjoint from $Y^\cup$.

\begin{figure}
\centering
\includegraphics{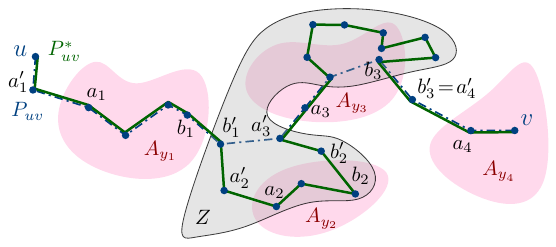}
\caption{Slicing a path $P^*_{uv}$ that can behave differently than $P_{uv}$ inside $Z$ (gray shaded region). We slice $P^*_{uv}$ according to its interaction with $A_{y_i}\subset Y^\cup$ (red shaded regions).}\label{fig:path_slicing}
\end{figure}

For a given $a_i\neq u$ let $(a')^{uv}_i=a'_i$ be the vertex preceding $a_i$ along $P_{uv}$. Similarly, let $(b')^{uv}_i=b'_i$ be the vertex succeeding $b_i$. We note that it may happen that $b'_i=a'_{i+1}$. Observe that by the definition of $G^*_{\cX\hat \cY}=G^*\parcon (\cX^\cup,\hat \cY^\cup)$ and the fact that $a'_i a_i$ and $b_i b'_i$ are edges of $G^*$ we have that $\psi(a'_i)y_i$ and $y_i\psi(b'_i)$ are edges in $G^*_{\cX\hat \cY}$. Moreover, we have that $a'_i,b'_i$ are not in $Y^\cup$. 
To see why, consider first the edge $a'_i a_i\in E(G^*)$. Since $a'_i$ is not in $A_{y_i}$, we have that $a'_i$ is not in the same class of $\hat \cY^\cup$ as $a_i$, and in particular, they are in different classes of $\cP$. Thus there is a corresponding edge $(a'_i)_\cP(a_i)_\cP\in E(G^*_\cP)$ where $v_\cP$ denotes the class of $\cP$ containing $v\in V(G)$.
Recall that by the definition of $\hat \cY$ we have that no two classes of $\hat \cY$ can be adjacent in $G^*_\cP$ (as such classes would have been unified), thus $(a_i)_\cP\in Y$ implies $(a'_i)_\cP\not \in Y$ and thus $a'_i\not\in Y^\cup$.
As a result, each path $P^*_{uv}$ is sliced into several subwalks using the vertices $a'_i,b'_i$. For a subwalk $Q^*$  whose endpoints are consecutive among $u,a'_1,b'_1,\dots,a'_k,b'_k,v$ we say that it is a \emph{$Y$-slice} if its internal vertices are in $V(Q^*)\subset Y^\cup$ and otherwise (when $V(Q^*)\subset B\setminus Y^\cup$) it is called an \emph{ordinary slice}. For a fixed brick $B$ let $\cQ_B^Y$ and $\cQ_B^\ord$ denote these collections of $Y$-slices and ordinary slices, and let $\cQ^Y$ and $\cQ^\ord$ denote the set of all slices in all bricks.

With the above definitions at hand, we are ready to define $G'$ based on $G^*_{\cX\hat \cY}$. As a result of the above definitions, each $\psi(a'_i),\psi(b'_i)$ are vertices of $G^*_{\cX\hat \cY}-\psi(Y^\cup)$, and the edges $\psi(a'_i)y_i$ and $y_i\psi(b'_i)$ exist in $G^*_{\cX\hat \cY}$.
Consider now some $y\in \psi(Y^\cup)$. Let $\AW(y)$ denote a collection of $|A_y\cap W|$ new vertices, and fix an arbitrary bijection $\nu:\AW(y)\rightarrow A_y\cap W$.

For any $y\in  \psi(Y^\cup)$, we define $\NAW(y)\subset V(G^*_{\cX\hat \cY})$ the following way.
Consider every brick $B$ and every pair $u,v\in B\cap W$ and, as above, define the sequences $y_i$, $a'_i$, $b'_i$ based on $P_{uv}$. For every $i$ such that $y=y_i$, let us introduce $\psi(a'_i),\psi(b'_i)\in V(G^*_{\cX\hat \cY})$ into $\NAW(y)$.
We extend $\nu$ to $\NAW(y)$ by setting $\nu=\psi^{-1}$ on this set, i.e., $\nu(\psi(a'_i))=a'_i$. Notice that as a result, $\nu$ may not be a bijection, but it remains a bijection on $W\cap \bigcup_{y\in \psi(Y^\cup)} A_y$.

Consider the vertex set $\AW(y)\cup \NAW(y)$, and for each pair $a,b\in \AW(y)\cup \NAW(y)$ such that there exists some $Y$-slice $Q\in \cQ^Y_B$ between $a$ and $b$, we add an an edge from $a$ to $b$ and  subdivide it with a new vertex denoted by~$m_{Q}$. Let $\Jum(y)$ denote the resulting graph. The role of this vertex will be to represent traversing the $Y$-slice $Q$.

We can now define $G'$ as $G^*_{\cX\hat \cY}-\big(\psi(Y^\cup)\big) \cup \bigcup_{y\in \psi(Y^\cup)} \Jum(y)$	, i.e., the graph that we get by replacing each vertex $y\in \psi(Y^\cup)$ with the corresponding graph $\Jum(y)$. Notice that $\Jum(y)$ is glued to the existing graph along the shared vertices $\NAW(y)=V(\Jum(y))\cap V(G^*_{\cX\hat \cY})$.

Let $\psi':(V(G)\setminus Y^\cup)\cup W \rightarrow V(G')$ be defined as follows: 
\[\psi'(v)=\begin{cases}
\psi(v) & \text{if } v\in V(G)\setminus Y^\cup \\
\nu^{-1}(v) & \text{otherwise, i.e., if }v\in Y^\cup \cap W.
\end{cases}.\]

\subparagraph*{Defining $\cP'$, $\omega$ and the proof of (\textit{i}).}
We define $\omega(t)$ for each $t\in V(G')$ as follows.

\[\omega(t):=
\begin{cases}
0 & \text{if } t\in \psi(Z^\cup \setminus Y^\cup)\\
|(V(Q[a,b])\setminus \{a,b\})\setminus Z^\cup| & \text{if $t=m_Q$ for some $Y$-slice $Q$ from $a$ to $b$}\\
1 & \text{otherwise, (i.e.} v\in (V(G)\setminus (Z^\cup \cup Y^\cup))\cup \bigcup_{y\in Y^\cup} \AW(y)\text{)}
\end{cases}\]

We can observe by \Cref{lem:Yinner} that $T$ is disjoint from $Y^\cup$, thus we can set $T':=T\parcon \cX^\cup$ as the desired terminal set in $G'$.

All vertices in $\Jum(y)$ can be associated with vertices or vertex pairs of $|W\cap B|$. By \Cref{thm:structure_pathgen} we have $|W\cap B|\leq 2^{O(1/\eps^{7.5})}$ so there are at most $O(|W\cap B|^2)= 2^{O(1/\eps^{7.5})}$ vertices in $\Jum(y)$.

To define the partition $\cP'$, for each clique $C\in \cP\setminus Y$ we add $\psi(C)$ to $\cP'$, that is, $\cP'$ and $\cP$ are the same when restricted to $V(G)\setminus (X^\cup \cup Y^\cup)$, and the vertices of $\psi(X^\cup)$ appear as singletons in $\cP'$. Moreover, for each $y\in \psi(Y^\cup)$ we add the set $\cP'(y):=V(\Jum(y))\setminus \NAW(y)$ to $\cP'$. This concludes the definition of $\cP'$. Notice that $\cP'$ is indeed a partition of $V(G')$, i.e., it consists of pairwise disjoint sets and its union covers $V(G')$. Moreover, a partition class of $\cP'\cap \cP$ has size at most $2^{O(1/\eps\log(1/\eps))}$, and any other class in $\cP'$ has size at most $2^{O(1/\eps^{7.5})}$, thus (\textit{i}) holds.

\begin{figure}
\centering
\includegraphics[scale=1]{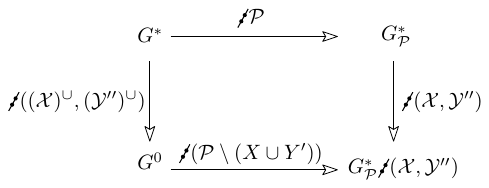}
\caption{Consolidations from $G^*$.}\label{fig:cont_diag}
\end{figure}

\subparagraph*{Bounding the treewidth, the proof of (\textit{ii}).}

In order to bound $\tw(G'\parcon \cP')$, first we show that $G'\parcon \cP'$ is a subgraph of $G^*_\cP\parcon(\cX,\hat \cY)$. Recall that $G'$ was based on a modification of $G^*_{\cX\hat \cY}$ around vertices of $\psi(Y^\cup)$, see \Cref{fig:cont_diag} for an illustration of relevant consolidations. First, we establish a bijection between vertices: if $v\in V(G')$ is the contraction of $A_v\in \cP'$ where also $A_v\in \cP$, then we simply identify $v$ with $\xi(A_v)\in V(G^*_{\cX\hat \cY})$. If $A_v\in \cX^\cup$, then it corresponds to $(A_v\parcon \cP)\parcon \cX\in V(G^*_\cP\parcon(\cX,\hat \cY))$. 
Otherwise, it must be a contraction of some $V(\Jum(y))\setminus \NAW(y)\in \cP'$ and we identify it with $(A_y\parcon \cP)\parcon \hat \cY \in V(G^*_\cP\parcon(\cX,\hat \cY))$. Now, we consider the edges. Recall that no edges of $G^*_{\cX\hat \cY}$ have been modified that are not incident to $Y\cup$, and moreover, $\cP'$ consolidates each set $\cP'(y)$, so the only edges of $G'$ that need to be considered are from $\cP'(y)$ to $\NAW(y)$. Recall however that the edge we add corresponds to some edge $y\psi((a')^{uv}_i)$ or $y\psi((b')^{uv}_i)$ of $G^*_{\cX\hat \cY}$, which ---after consolidating with $\cP\setminus(X\cup Y)$--- corresponds to an edge of $G^*_\cP\parcon(\cX,\hat \cY)$. Thus $G'\parcon \cP'$ is a subgraph of $G^*_\cP\parcon(\cX,\hat \cY)$ and $\tw(G'\parcon \cP')\leq \tw(G^*_\cP\parcon(\cX,\hat \cY))$. 

Now recall that $\hat \cY$ arises from $\cY$ by unifying classes that are connected in $G^*_\cP\parcon (\cX,\cY)$, so $G^*_\cP\parcon (\cX,\hat \cY)$ arises from $G^*_\cP\parcon (\cX,\cY)$ via the contraction of some edges. Thus $\tw(G^*_\cP\parcon(\cX,\hat \cY))\leq \tw(G^*_\cP\parcon(\cX,\cY))$. \Cref{lem:GcsillagP} implies that $\tw(G^*_\cP\parcon(\cX,\cY))=O(1/\eps^{18})$. The same lemma guarantees that a tree decomposition of $G^*_\cP\parcon(\cX,\cY)$ of width $O(1/\eps^{18})$ can be constructed in polynomial time, and in polynomial time we can adapt it into a tree decomposition of $G'\parcon \cP'$ as the latter is obtained from the former via a series of contractions and deletions.

\subparagraph*{Relating Steiner trees in $G'$ to $G$, the proof of (\textit{iii}).}

Let $S'$ be a Steiner tree for the instance $(G',\omega, T'=T\parcon \cX^\cup)$ of weight $\kappa=\omega(S')$. We first remove vertices from $S'$ in order to make it a minimal feasible solution. Notice that each vertex in $\cP'(y)$ that is subdividing some edge can only be included in $S'$ if both of its neighbors are in $S'$: indeed, $\cP'(y)$ contains no terminals and including a non-terminal vertex of degree $2$ without both of its neighbors  would contradict the minimality of $S'$.

Now, for each vertex $v\in S'\cap V(G)$ we add $v$ to the set $S$. Consider a vertex $x\in S'\cap \psi(X^\cup)$ and the corresponding vertex set $A_x \in \cX^\cup$ where $\psi(A_x)=x$. We add the set $A_x$ to $S$. Now we process the remaining vertices, which is $V(\Jum(y)\setminus \NAW(y)) \cap S'$ for each $y\in \psi(Y^\cup)$.
For each vertex $v\in S'\cap (\AW(y))$ we simply add the corresponding $\nu(v)\in V(G)$ to $S$. If $v=m_Q$ for some $Y$-slice $Q$ from $a$ to $b$, then we add all vertices of $Q(a,b)$ into $S$. Notice that $S$ gains at most $\omega(m_Q)$ internal vertices that are not in $Z^\cup$, and all weight 1 vertices of $S'$ grew $S$ by at most one vertex.

Finally, we extend $S$ with all vertices of $Z^\cup$. Notice that $G[S]$ is not necessarily connected, and it may be very large, as $Z^\cup$ is significantly larger than $Z$. However, we observe that $|S\setminus Z^\cup|\leq \omega(S')$. Next, we will show that there is a single connected component of $G[S]$ that contains all of $T$, and later we will modify $S$ so that feasibility is maintained and the size is decreased below the desired upper bound.

For now, we need to prove that all pairs of terminals are connected in $G[S]$. Notice that we can decompose any terminal-to-terminal path in $G'[S']$ into edges that are outside of the subgraphs $\Jum(y)$ as well as minimal subpaths inside $\Jum(y)$ that start and end in $\NAW(y)$. Consequently, the connectivity of terminals in $G[S]$ can be proven via the following statements:
\begin{enumerate}[label= (\arabic*) ]
\item For every edge $st$ of $G'[S']$ outside the subgraphs $\Jum(y)$ there is a path connecting some vertex of  $G[\psi^{-1}(s)]$ to some vertex of $G[\psi^{-1}(t)]$ in $G[S]$.
\item For every $s\in S'$ that is not in any set $\cP'(y)$ (i.e., $s\in(V(G')-Y^\cup)\parcon \cX^\cup$) we have that $G[\psi^{-1}(s)]$ is connected in $G[S]$.
\item If $G'[S']$ connects $s,t \in \NAW(y)$ inside $\Jum(y)$, then $G[S]$ connects some vertex of $\psi^{-1}(s)$ and some vertex of $\psi^{-1}(t)$.
\end{enumerate}

To prove (1), notice that $st$ is an edge of $G^*_{\cX\hat \cY}$, and consider the original edge $s^*t^*\in E(G^*)$ that gave rise to the edge $st$ after the consolidation of $G^*$ into $G^*_{\cX\hat \cY}=G^*\parcon (\cX^\cup,\hat \cY^\cup)$. (Notice here that $s^*\in \psi^{-1}(s)$ and $t^*\in \psi^{-1}(t)$). Let $s_\cP,t_\cP$ be the cliques of $\cP$ containing $s^*$ and $t^*$, respectively.
Observe that each of $s_\cP$ and $t_\cP$ can be in $X$ or $\cP\setminus (X\cup Y)$, and by \Cref{obs:Gcsillagosszehuz} either $s_\cP t_\cP\in E(G^*_\cP)$, or $s_\cP=t_\cP$. If they coincide, then $s^*,t^*$ are in the same clique of $G$ and thus they are connected in $G[S]$. Otherwise, we have that $s_\cP t_\cP\in E(G^*_\cP)$, so by \Cref{lem:GcsillagP} (\textit{iii}) either $s_\cP t_\cP\in E(G_\cP)$ or there is a path of length at most $c_\Lip$ connecting them in $G_\cP[Z]$. In the latter case, we have that $s^*$ and $t^*$ are connected in $G[Z^\cup]$ and since $Z^\cup \subset S$ they are connected in $G[S]$. 

To prove (2), if $s_\cP\in \cP\setminus (X\cup Y)$, then $\psi^{-1}(s)=s$ and it is trivially connected, so assume that $s$ is the consolidation of some set $A_x\in \cX^\cup$. Let $A_\cP$ be the corresponding class in $\cX$, i.e., the one where $A_x=A_\cP^\cup$. Recall that by the definition of $\cX$ we have that $A_\cP$ is connected in $H$. Thus, by the Lipshitz property, each edge of $H$ can be represented as a path of length at most $c_\Lip$ in $G_\cP$. Since both endpoints of this path are in $X$ (as $A_\cP\subset X$), all path vertices must be in $N_{G_\cP}(X,c_\Lip)\subset Z$. Thus $A_\cP$ is connected in $G_\cP[Z]$ and therefore $A_x$ is connected in $G[Z^\cup]$ as well as in $G[S]$.

To prove (3), consider some path in $\Jum(y)$ connecting $s$ to $t$, and without loss of generality assume that the internal vertices of this path are not from $\NAW(y)$. Such a path has the following form due to the structure of $\Jum(y)$:
\[s:=v_0,m_{Q_1},v_1,m_{Q_2},v_2,\dots, v_{k-1},m_{Q_k}, v_k,s'=v_k\]
where $v_1,\dots,v_{k-1}\in \AW(y)$.
For each subdividing vertex $m_{Q_i}$ there is a corresponding vertex pair $\nu(v_i),\nu(v_{i+1})$ in $G$ that we can connect with $Y$-slice $Q$. The concatenation of these slices gives a walk from $\nu(s)$ to $\nu(t)$ in $G^*$. Each edge of this walk that is not an edge of $G$ can be substituted with a path of length at most $3c_\Lip$ in $G^*$. Moreover, recall that by the definition of $\nu$ we have $\nu(s)\in \psi^{-1}(s)$ and analogously $\nu(t)\in \psi^{-1}(t)$. This concludes the proof of connectivity.

We now know that $G[S]$ has a connected component that includes $T$. Remove all other connected components from $S$, and apply the simplification of \Cref{lem:SteinSolTerminals} to obtain a solution $\hat S$ that is a subset of $S$ and contains at most $O(1)$ vertices from each clique of $\cP$. It follows that $|\hat S\cap Z^\cup|\leq |T\cap Z^\cup|+O(|Z|)$. We apply \eqref{eq:Zbound} and recall that our application of \Cref{lem:sparseSteiner} at the beginning of \ref{sec:SteinerFast} ensured that each clique of $\cP$ contains at most $O(1/\eps)$ terminals of $T$. Therefore,
\[|\hat S|\leq \omega(S')+|T\cap Z^\cup|+O(|Z|)=\omega(S')+O(\eps)\cdot \opt + O(\eps^2)\cdot \opt,\]
concluding the proof of (\textit{iii}).

\subparagraph*{Relating Steiner trees in $G$ to $G'$, the proof of (\textit{iv}).}

Consider the path generator $(S_0,\cW)$ with $V_\cW\subset W$ for $(G,T)$ guaranteed by \Cref{thm:structure_pathgen}, with cost at most $(1+O(\eps))\smt(G,T)$.
Let $S$ be the solution that $(S_0,\cW)$ generates where for each $(B,u,v)\in \cW$ we use the shortest path $P_{uv}$ that we fixed earlier in the definition of $G'$. Recall that $|S|$ is at most the cost of $(S_0,\cW)$ and thus $|S|\leq (1+O(\eps))\smt(G,T)$. Let $W_0\subset W$ denote the vertices of $V_\cW$ that are not $\lambda$-outer, i.e., $S_0\cup W_0$ is disjoint from $\Inn(\lambda)^\cup$. (More precisely, $W\subset V(G^\orig)$ consists of the original objects in $G$ rather than any of their subpolygons in $B$.)

Consider now $S_0\cup W_0$ as a vertex set in $G^*$, and extend it with $Z^\cup\setminus Y^\cup$ to get the vertex set $S^*_0$. We will show that edges induced by $G[S_0\cup W_0]$ remain connected in $G^*[S^*_0]$. The subgraph $G^*[S_0\cup W_0]$ is not necessarily connected, but all edges of $G[S_0\cup W_0]$ induced by some clique of $\cP$ are also present in $G^*[S^*_0]$ by the definition of $G^*$ (see \Cref{def:Gcsillag}). We claim that for any edge $ab\in E(G[S_0\cup W_0])$ where $a,b\in S_0 \cup W_0$ it holds that $a,b$ are in the same component of $G^*[S^*_0]$.

Consider the cliques $a_\cP$ and $b_\cP$ If $a_\cP= b_\cP$, then they are directly connected in $G^*$. Otherwise, notice that $a_\cP b_\cP \in E(G_\cP)$. If $a_\cP \not \in Z$ or $b_\cP \not \in Z$, then by \Cref{lem:GcsillagP} (\textit{ii}) we have that $a_\cP b_\cP \in E(G^*_\cP)$, so by the definition of $G^*$ (see \Cref{def:Gcsillag}) we have that $ab\in E(G^*)$. Suppose now that $a,b \in (S\cap Z^\cup) \setminus Y^\cup$ and thus $a_\cP,b_\cP \in Z$. \Cref{lem:GcsillagP} (\textit{ii}) we have that 
$a_\cP,b_\cP$ are connected in $G^*_\cP[Z]$ with a path $R^*$ of length at most $c_\Lip$. This corresponds to a path $R$ of length at most $c_\Lip^2$ in $G_\cP$ where $V(R^*)\subset V(R)$, and thus by \Cref{obs:GtoGP} a path of length at most $3c^2_\Lip$ in $G$. Since both endpoints of this path are in $B\setminus \Inn(\lambda)$, we have that $V(R^*)$ (and thus also $V(R)$) is disjoint from $\Inn(\lambda+3c_\Lip^2)$. In particular, by \Cref{lem:Yinner} we conclude that $V(R)$ is disjoint from $Y$, and thus $a,b$ can be connected within $G^*[(S_0 \cup W_0 \cup Z^\cup)\setminus Y^\cup]$ as claimed.
 
Next, we build the desired solution $S'$ to $(G',\omega,T'=T\parcon \cX^\cup)$. Include all vertices of $\psi'(V_\cW\cup S^*_0)=\psi'(V_\cW\cup (S_0\cup W_0 \cup Z^\cup) \setminus Y^\cup)$ in $S'$. For each triplet $(B,u,v)\in \cW$ let $P_{uv}$ be the shortest path in the brick $B$ of $G$ we have fixed; more precisely, $P_{uv}$ is the path where we replace each subpolygon on this shortest path with their orignal object from $G^\orig$. As before, we will use $u$ and $v$ to refer to $u^\orig,v^\orig$. For each $Y$-slice $Q$ of the walk $P^*_{uv}$ we add $m_Q$ to $S'$. Similarly, for each ordinary slice $Q$ we add $\psi'(V(Q))$ to $S'$. Observe that for a given triplet $(B,u,v)$ we have added some consolidation of the corresponding $G^\orig$-path's endpoints (as we have added $\psi'(V_\cW)$ as well as consolidations of paths for each of the slices of $P^*_{uv}$. Since consolidations preserve connectivity, one can verify that $\psi'(V_\cW\cup S^*_0)$ retains its connected components in $G'[S']$, thus in particular the vertices $T'=T\parcon \cX^\cup$ are in the same connected component of $G'[S']$.

To prove the weight bound, recall that $V(P_{uv})\subset V(P^*_{uv})$, and by \Cref{lem:PPcsillag} property 2, all vertices of $V(P^*_{uv})$ that are outside $V(P_{uv})$ are in $Z^\cup$. Notice in the definition of $\omega$ that vertices of $\psi(Z^\cup\setminus Y^\cup)$ have weight $0$, and vertices of $Z^\cup$ that are also internal vertices of some $Y$-slice $Q$ within $P^*_{uv}$ are also subtracted from $\omega(m_Q)$. Thus, the only internal vertices of $P^*_{uv}$ that can contribute weight are from $V_\cW\cup V(P_{uv})$. In total, the contribution because from the ordinary and $Y$-slices of $P^*_{uv}$ as well as all internal vertices separating these is at most $|V(P_{uv}\setminus \{u,v\})\setminus Z^\cup|=\dist_B(u,v)-1$.

It follows that the weight contribution from the triplets is at most $|V_\cW|+\sum_{(B,u,v)\in \cQ} (\dist_B(u,v)-1)$. The contribution from $\psi'(Z^\cup\setminus Y^\cup)$ is $0$, and the contribution from $|S_0|$ is at most $|S_0|$. Thus we get:
\[\omega(S')\leq |S_0|+|V_\cW|+\sum_{(B,u,v)\in \cQ} (\dist_B(u,v)-1)\leq (1+O(\eps))\smt(G,T),\]
as required for (\textit{iv})(a).

To prove (\textit{iv})(b), let $S''$ be the minimum weight solution for $(G',\omega,T\parcon \cX)$ among the subsets of $S'$. Consdier a class $C'\in \cP'$. We have three cases.
\begin{description}
\item[Case 1.] $C'=\Jum(y)\setminus \NAW(y)$ for some $y\in \psi(Y^\cup)$.\\
We have by \Cref{thm:structure_pathgen}(\textit{iii})(d) that there are at most $O(1/\eps^{7.5})$ triplets for any given brick $B$ in the path generator. Since each path $P_{uv}$ will have at most one corresponding slice of $P^*_{uv}$ going through any fixed class of $\hat \cY$ (by the slicing procedure), we have that $S'$ contains at most $O(1/\eps^{7.5})$ vertices from $C'$.
\item[Case 2.] $C'$ is a singleton containing some consolidated vertex.\\
Then $|C'\cap S'|\leq 1$.
\item[Case 3.] $C'\in \cP$ is an original clique of $G$.\\
Recall that $|T\cap C'|\leq 1/\eps$ due to our usage of \Cref{lem:sparseSteiner}. Moreover, \Cref{thm:structure_pathgen}(\textit{ii})(c) implies $|V_\cW\cap C'|\leq O(1/\eps^{7.5})$. Recall that $C'$ has $O(1)$ cliques of $\cP$ that are neighboring in $G$, and each of those cliques may remain in $G'$, but they might also become part of distinct classes of $\hat \cY$. For a given class $y\in \psi(\hat \cY^\cup)$ we have already shown that there are at most $O(1/\eps^{7.5})$ vertices selected from $\Jum(y)\setminus \NAW(y)$, thus $C'$ has at most $O(1/\eps^{7.5})$ edges induced between itself and such classes. For every other neighboring class of $\cP'$ the neighboring class is a clique, thus it is again sufficient to have at most one edge induced between $C'$ and such neighbors; see \Cref{lem:sparsenSteinercliques} for the argument. By the minimality of $S'$, each vertex in $C'\cap S'$ must either be a terminal or it must induce an edge going into neighboring classes of $\cP'$. Consequently, there are at most $O(1/\eps^{7.5})$ vertices in $C'\cap S'$.
\end{description}

Thus in each case there $O(1/\eps^{7.5})$ vertices in $C'\cap S'$, concluding both the proof of (\textit{iv}) and of the theorem.
\end{proof}

\subsection{A faster EPTAS for Steiner Tree}

The following lemma presents a general way to compute Steiner trees in a graph with a partitioned vertex set. 
Let $G$ be graph and let $\cR$ be a partition of $V(G)$. We say that a vertex set $S$ is \emph{$(k,\cR)$-bounded} if for each $C\in \cR$ we have $|C\cap S|\leq k$.

\begin{lemma}\label{lem:partitioned_steiner_algo}
Let $G$ be an $n$-vertex graph with nonnegative vertex weighting $\omega$ and let $\cR$ be a partition of $V(G)$ where each class $C$ of $\cR$ has at most $r$ vertices. Let $K$ be a given set of terminals, and suppose that we are given a tree decomposition of $G\parcon \cR$ of treewidth $w$. Then we can find the minimum weight $(\kappa,\cR)$-bounded Steiner tree of $(G,K)$ in $(rw)^{O(\kappa w)}\poly(n)$ time.
\end{lemma}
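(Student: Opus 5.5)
We will run a standard connectivity dynamic program for vertex-weighted Steiner tree, but over a tree decomposition of $G$ obtained by expanding the one for $G\parcon\cR$. Concretely, we first turn the given decomposition of $G\parcon\cR$ into a nice tree decomposition with $O(wn)$ nodes. Then we replace every class $C$ in a bag by all its vertices. This gives a tree decomposition $\cT$ of $G$ in which each bag $B$ is a union of at most $w+1$ classes of $\cR$, so $|B|\le r(w+1)$.

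The point is that $\cT$ has width $O(rw)$, but we only ever need to track $(\kappa,\cR)$-bounded subsets of a bag. For a bag $B$, a \emph{state} consists of the following data:
\begin{itemize}
\item a set $S_B\subseteq B$ with $|S_B\cap C|\le\kappa$ for every class $C\subseteq B$;
\item a partition $\pi$ of $S_B$, recording which chosen vertices are already connected in the partial solution below the node.
\end{itemize}
There are at most $\bigl(\sum_{i\le\kappa}\binom{r}{i}\bigr)^{w+1}\le (r+1)^{\kappa(w+1)}$ choices of $S_B$. Since $|S_B|\le\kappa(w+1)$, there are at most $(\kappa w)^{O(\kappa w)}$ partitions. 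We may assume $\kappa\le r$, since otherwise the constraint is vacuous and we can set $\kappa=r$. So the number of states per bag is $(rw)^{O(\kappa w)}$.

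For each node $t$ and state $(S_B,\pi)$, the table entry $D_t[S_B,\pi]$ is the minimum weight of a set $S$ with the following properties:
\begin{itemize}
\item $S$ is contained in the vertices introduced in the subtree of $t$;
\item $S$ is $(\kappa,\cR)$-bounded;
\item $S\cap B=S_B$ and $S\supseteq K\cap V_t$;
\item every component of $G[S]$ meets $B$, except when the whole solution is already completed, which we handle with a flag as in the usual Steiner tree DP;
\item the components of $G[S]$ induce exactly $\pi$ on $S_B$.
\end{itemize}

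The transitions are the usual ones. At an introduce node, a new vertex $v$ is either put into $S_B$ (if that keeps its class within the $\kappa$ bound), merged with its $S_B$-neighbours in $\pi$, and charged $\omega(v)$; or it is left out, which is allowed only if $v\notin K$. At a forget node, a vertex may leave $S_B$ only if its block of $\pi$ is not a singleton, or if it closes off the final component. At a join node, we combine two children whose states agree on $S_B$, take the join of the two partitions, and subtract the doubly counted weight $\omega(S_B)$.

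Boundedness is preserved throughout for the following reasons:
\begin{itemize}
\item A class $C$ appears in a connected subtree of bags and is entirely present in each of them, so its intersection with the solution is fully visible in every bag containing $C$.
\item At a join node, $S\cap C$ is determined by $S_B$ for every class $C$ meeting both sides.
\item A class that is forgotten has already had its count checked.
\end{itemize}
Hence the final entry equals the optimum over $(\kappa,\cR)$-bounded connected sets containing $K$. Since a connected vertex set always contains a spanning tree on the same vertices, this is exactly the minimum-weight $(\kappa,\cR)$-bounded Steiner tree.

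Correctness is the routine argument for connectivity DPs. The only subtlety is nice-ification: introducing a class vertex by vertex could create intermediate bags in which a class is only partially present. We avoid this by introducing and forgetting whole classes at once. An introduce step then branches over all $\le(r+1)^\kappa$ bounded subsets of the new class, and a forget step projects out a whole class. This step, namely keeping the state count at $(rw)^{O(\kappa w)}$ while preserving the tree-decomposition invariants, is the main thing to check. Each transition costs $\poly$ in the number of states, and there are $\poly(n)$ nodes, giving total time $(rw)^{O(\kappa w)}\poly(n)$.
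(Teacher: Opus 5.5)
Your proposal is correct and follows the paper's proof. Both expand each class of the given decomposition of $G\parcon\cR$, run the standard connectivity DP for Steiner tree with vertex weights, and keep only states whose chosen set meets each class in at most $\kappa$ vertices, which gives $(rw)^{O(\kappa w)}$ states per bag. Your class-level nice-ification is a harmless variant: the paper nice-ifies after expanding, which also works because every class is whole in each original bag where its count is checked. Your per-class condition $|S_B\cap C|\le\kappa$ is also the correct form of the paper's slightly garbled ``$|X\cap C|\leq\kappa w$''.
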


\begin{proof}
We modify the standard treewidth-based dynamic programming for Steiner tree to this setting. Observe that, based on the tree decomposition, we can make a tree decomposition $\cT_0$ of $G$ by considering each bag $X_t$ of the given decomposition, and substituting each class $C\in X_t$ of $G\parcon \cR$  with all vertices inside $C$. The resulting tree decomposition is a valid decomposition of $G$ and has width $rw$. Observe that the trees we need to consider intersect each class in at most $\kappa w$ vertices.

The standard dynamic programming algorithm for Steiner tree~\cite{fptbook,CHIMANI201267} proceeds bottom-up on a (slightly tweaked) \emph{nice} tree decomposition. We will rely on the notation of \cite[Section 7.3.3.]{fptbook}. For a bag $X_t$ let $G_t$ denote the subgraph of $G$ induced by the subtree rooted at the bag $X_t$. Then, for each $X\subset X_t$ and partition $\cP$ of $X$, there is a subproblem $c[X_t,X,\cP]$, which is the minimum number of edges in a forest $F$ in $G_t$ such that:
\begin{enumerate}
\item $X_t\cap V(F)=X$.
\item Every terminal from $K\cap V(G_t)$ is in $V(F)$.
\item The forest $F$ has $|\cP|$ connected components, and for each class $X^i$ of $\cP$ there is a corresponding connected component $F^i$ of $F$ such that $X^i=V(F_i)\cap X_t$.
\end{enumerate}

First, we compute a nice tree decomposition based on $\cT_0$ in polynomial time using standard techniques~\cite{fptbook}; notice that the resulting nice tree decomposition $\cT$ has bags that are subsets of existing bags, so in particular, it still holds that for each bag $X_t$ of $\cT$ there are at most $w$ classes of $\cR$ intersecting~$X_t$.

We can use the same algorithm with two small modifications.
First, we will only iterate over sets $X\subset X_t$ where $|X\cap C|\leq \kappa w$, as any $(k,\cR)$-bounded Steiner tree will intersect $X_t$ in at most $\kappa w$ vertices. Second, instead of $c[X_t,X,\cP]$ denoting the number of edges in $F$, it will denote the minimum vertex weight of $V(F)$ satisfying the same properties.

One can verify that the same dynamic programming algorithm works to compute the minimum weight $(\kappa,\cR)$-bounded Steiner tree. The original algorithm has $|X_t|^{O(|X_t|)}$ states for the bag $X_t$, as it needs to consider all $2^{|X_t|}$ subsets $X$ and up to $|X_t|^{O(|X_t|)}$ partitions on each subset.
The dynamic programming needs to consider all pairs of states from two child bags to compute the value for a state, so the running time is $\tw^{O(\tw)}n$ for an $n$-vertex graph with a tree decomposition of width $\tw$. In our case, we only have  $|X_t|^{O(kw)}=(rw)^{O(\kappa w)}$ subsets to consider, each of which has $(\kappa w)^{O(\kappa w)}\leq (rw)^{O(\kappa w)}$ partitions. Thus the number of states is $(rw)^{O(\kappa w)}$ and the running time is  $(rw)^{O(\kappa w)}\poly(n)$.
\end{proof}

\SteinerFast*

\begin{proof}
We apply \Cref{thm:atalakitas} to obtain an $\alpha$-standard graph $G\in \cI_\alpha$ whose representation has polynomial complexity.
Apply \Cref{thm:advanced_struct} to obtain a graph $(G',\omega,T')$ and vertex partition $\cP'$. \Cref{thm:advanced_struct}(\textit{ii}) gives a tree decomposition $\cT'$ of $G'\parcon \cP'$ of width $w=O(1/\eps^{18})$.
By \Cref{thm:advanced_struct}(\textit{i}) we have that each class of $\cP'$ has size at most $r=2^{O(1/\eps^{7.5})}$.
Finally, by  \Cref{thm:advanced_struct}(\textit{iv})(\textit a),(\textit b) there exists a solution $S'$ to the instance $(G',\omega,T')$ of weight $(1+O(\eps))\opt$ that is $(\kappa,\cP')$-bounded where $\kappa=O(1/\eps^{7.5})$.
Thus by \Cref{lem:partitioned_steiner_algo} we can find a solution $S'$ of weight at most $(1+O(\eps))\opt$ in $(rw)^{O(\kappa w)}\poly(n)$ time, where the first term can be bounded as:
\[(rw)^{O(\kappa w)}=\Big(2^{O(1/\eps^{7.5})} \cdot O(1/\eps^{17})\Big)^{O(1/\eps^{7.5})\cdot O(1/\eps^{18})}=2^{O(1/\eps^{33})}.\]
Using \Cref{thm:advanced_struct}(\textit{iii}) we now construct a solution of $(G,T)$ based on $S'$ whose weight can be bounded as
\[|S|=\omega(S')+ O(\eps)\cdot \opt= (1+O(\eps))\opt\]
where we used that there are at most $O(1/\eps)$ terminals in each clique.
The dominant term in the running time is the algorithm of \Cref{lem:partitioned_steiner_algo}, which takes $2^{O(1/\eps^{33})}\poly(n)$ time.
\end{proof}

\section{Lower bounds and connection to planar problems}
\label{sec:lower}

\subsection{APX-hardness results}

The goal of this subsection is to prove the following theorem.

\APXhard*

A \emph{split graph} is a graph whose vertex set can be partitioned into $A\dot\cup B$ such that $A$ induces a clique and $B$ induces an independent set.
Our proof of \Cref{thm:apxhard} will rely on the fact that two of the above graph classes are able to represent subdivisions of bounded degree graphs, while the other two are able to represent arbitrary split graphs. 

\begin{lemma}\label{lem:splitapxhard}
\textsc{Steiner Tree} and \textsc{Subset TSP} are APX-hard in split graphs.
\end{lemma}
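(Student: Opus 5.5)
I would reduce from \textsc{Vertex Cover} (or \textsc{Set Cover}-type problems) on bounded-degree graphs, which is APX-hard, to \textsc{Steiner Tree} in split graphs via the standard bipartite ``element--set'' construction. Concretely, given a graph $H$ with maximum degree $3$ (for which \textsc{Vertex Cover} is APX-hard and the optimum is $\Theta(|V(H)|)$), build a split graph $G$ whose clique part $A$ consists of one vertex $a_v$ per $v\in V(H)$, and whose independent part $B$ consists of one vertex $b_e$ per edge $e=uv\in E(H)$, adjacent exactly to $a_u$ and $a_v$. The terminal set is $T=B$. Since $A$ is a clique, any set $C\subseteq A$ that dominates $B$ yields a connected set $C\cup B$; conversely a Steiner tree must contain, for each $b_e$, a neighbour in $A$ (as $B$ is independent and $|B|\ge 2$), so its $A$-part is a vertex cover. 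Thus $\smt(G,T)=|B|+\tau(H)$ exactly, in the vertex-counting formulation of the paper.

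The key steps are: (1) verify the exact correspondence above, including that a solution's $A$-part can be read off in polynomial time as a vertex cover of no larger size; (2) show the reduction is an L-reduction: $\smt(G,T)=|E(H)|+\tau(H)\le (3/2)|V(H)|+\tau(H)=O(\tau(H))$ because in a degree-$3$ graph without isolated vertices $\tau(H)\ge |E(H)|/3$, and the additive error transfers with factor $1$, since a solution of size $|B|+k$ gives a cover of size at most $k$. APX-hardness of \textsc{Steiner Tree} in split graphs follows.

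For \textsc{Subset TSP}, use the same graph and terminals. A closed walk visiting all of $B$ must enter and leave each $b_e$ through $A$, and consecutive visits to distinct $B$-vertices are separated by at least one $A$-vertex; an optimal walk of the form $b_{e_1},a_{x_1},b_{e_2},a_{x_2},\dots$ uses exactly $2|B|$ edges when each step can be routed through a cover vertex, extra clique edges being needed when consecutive $b$'s share no cover vertex. I would show $\tsp(G,T)=2|B|+\Theta$-relation with $\tau(H)$: given a cover $C$, order $B$ grouped by an assigned cover vertex; the walk costs $2|B|+|C|$ (one clique edge per switch between groups). Conversely, from a walk, the set of $A$-vertices used forms a cover, and the number of clique edges plus repeated structure is at least the number of distinct $A$-vertices used minus $O(1)$, giving $\tsp(G,T)\ge 2|B|+\tau(H)-O(1)$. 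With the same linear bound of $|B|$ by $\tau(H)$ this is an L-reduction.

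The main obstacle is the lower bound in the TSP case: arguing that a walk using $k$ distinct clique vertices as ``hubs'' pays roughly $k$ extra edges. I would handle it by charging: each maximal run of the walk within $A$ between two $B$-visits has length $\ge$ (number of distinct $A$-vertices in it) $-1$, and summing over at most $|B|$ runs, with each distinct used $A$-vertex first appearing in some run, gives total $A$-internal length at least (number of distinct $A$-vertices) minus (number of runs that are length-$0$ transitions), which I would then relate to the cover size, possibly after the simplification that every walk step entering $A$ from $b_e$ lands on a cover endpoint.
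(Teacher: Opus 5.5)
Your \textsc{Steiner Tree} half is correct and is the paper's own reduction: edge-vertices form the independent terminal side, vertex-vertices form the clique, $\smt(G,T)=|E(H)|+\tau(H)$, and $|E(H)|\le 3\tau(H)$ gives the L-reduction.

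\textbf{The \textsc{Subset TSP} half has a genuine gap.} The lower bound $\tsp(G,T)\ge 2|B|+\tau(H)-O(1)$ that you plan to prove is false for this graph. A walk can change its ``hub'' at no extra cost by passing through $b_{uv}$, entering from $a_u$ and leaving via $a_v$. Your charging only counts hub changes inside maximal runs in $A$, so it misses this.

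Concretely, let $H$ be a cubic graph with a Hamiltonian cycle $v_1v_2\cdots v_nv_1$. Take the walk $a_{v_1},b_{v_1v_2},a_{v_2},b_{v_2v_3},\dots,a_{v_n},b_{v_nv_1},a_{v_1}$. For each chord $e$ of the cycle, insert one detour $a_u,b_e,a_u$ at one endpoint $u$ of $e$. This walk visits every terminal and has length exactly $2|B|$. That is optimal, since $B$ is independent and each terminal visit costs two edges. Yet $\tau(H)\ge n/2$ for cubic $H$, so $\tsp(G,T)=2|B|$ regardless of $\tau(H)$. Already $H=C_n$ shows the same thing. The excess of the optimum over $2|B|$ only counts the clique edges needed to stitch together pieces of walks running along edges of $H$; it does not count how many hubs are used. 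So the TSP optimum of this instance carries no vertex-cover information, and no charging argument can rescue the L-reduction.

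What is needed is a construction where terminal distances are controlled directly, so that switching hubs through terminals is not free. The paper reduces \textsc{Subset TSP} from TSP on metrics $M$ with all distances in $\{2,3\}$, which is APX-hard (for instance, shift $(1,2)$-TSP by $1$). It builds a split graph in which the shortest-path metric on the terminals reproduces $M$ exactly:
\begin{itemize}
\item Terminals $v_x$ form the independent side.
\item The clique contains a vertex $w_{yz}$ adjacent to $v_y$ and $v_z$ for each pair at distance $2$.
\item The clique also contains a private dummy $d_x$ adjacent to $v_x$ for each point $x$.
\end{itemize}
Then $\dist_G(v_x,v_y)=2$ when $x,y$ are at distance $2$, via the common neighbour $w_{xy}$. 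Otherwise $\dist_G(v_x,v_y)=3$: the path through $d_x,d_y$ has length $3$, and no common neighbour exists. Hence subset TSP walks of $(G,I)$ correspond to TSP tours of $M$ with equal length.
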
 

\begin{proof}
The traveling salesman problem (TSP) is APX-hard in discrete metric spaces where all distances are $2$ or $3$~\cite{Trevisan00}. One can make an equivalent instance of \textsc{Subset TSP} in a split graph: let $M=(X,\dist)$ be a metric space on $n$ points where distances of distinct points are either $2$ or $3$. Then let $G$ be a split graph with a clique $C$ of size $\binom{n}{2}+n$ indexed by edges of the complete graph on $X$ plus a dummy vertex $d_x$ for each $x\in X$. The independent set $I$ of $G$ is of size $n$ and indexed by $X$. We have an edge from vertex $v_x\in I$ to $w_{yz}\in C$ if and only if $x\in {y,z}$ and $\dist(y,z)=2$, and we connect each $v_x\in I$ to $d_x\in C$. Since the dummy vertices are neighbors and together with $I$ they form a perfect matching, we have that any pair of vertices in $I$ has distance at most $3$. On the other hand, $v_x,v_y\in I$ has a common neighbor in $G$ if and only if $\dist(x,y)=2$, and as a result, $\dist_G(v_x,v_y)=2$ holds. Thus when considering the terminal set $T=I$ in $G$, the metric of $G$ for the terminals is isomorphic to $M$. As a result, a subset TSP tour of $(G,T)$ is in a direct one-to-one correspondence with a TSP tour of $M$ and they have the same length. Since the reduction is polynomial, this shows that \textsc{Subset TSP} is APX-hard in split graphs.

For Steiner Tree our reduction is based on the fact that \textsc{Vertex Cover} is APX-hard on connected graphs of maximum degree $3$~\cite{AlimontiK00}. In an instance of vertex cover, we are given some graph $G$ and the goal is to find a minimum subset of vertices $S$ such that each edge of $G$ is incident to some vertex in $S$.
Notice that in a connected graph of maximum degree $3$ each vertex can cover at most $3$ edges, and there are at least $n-1$ edges to cover, thus the minimum vertex cover has size at least $(n-1)/3$.

Based on $G$, we can build a split graph $S=(I,C)$ where the independent set $I$ has a vertex $u_{xy}$ for each edge $xy$ of $G$, and the clique $C$ has a vertex $v_z$ for each vertex $z$ of $G$. We add an edge from $v_z\in I$ to $w_{xy}\in C$ if and only if $z$ is incident to the edge $xy\in E(G)$, that is, $z=x$ or $z=y$. Let $T=I$ be the set of terminals, and consider the \textsc{Steiner Tree} instance $(S,T)$. It is easy to see that a vertex cover of size $k$ in $G$ corresponds to a Steiner tree of size $|I|+k$ in $X$: namely, we take the terminals $T=I$ and the vertices in $C$ that correspond to vertices of the vertex cover. On the other hand, any Steiner tree must contain all terminals ($I$) and each vertex $w_{xy}$ must be connected to the rest of the tree either by having $v_x$ or $v_y$ in the tree, thus the vertices of $G$ corresponding to tree vertices of $C$ form a vertex cover. This establishes a bijection between vertex covers of $G$ and Steiner trees of $G$ with an additive size difference of $|I|$. Notice moreover that $n-1<|I|<3n/2$ and $(n-1)\leq k \leq n$ implies that this is a valid L-reduction. This concludes the proof that \textsc{Steiner Tree} is APX-hard in split graphs.
\end{proof}

Our next goal is to prove the following lemma.

\begin{lemma}\label{lem:deg3APXhard}
\textsc{Steiner Tree} and \textsc{Subset TSP} are APX-hard in graphs of maximum degree $3$.
\end{lemma}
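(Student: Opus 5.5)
We reduce from the APX-hard problems already used in \Cref{lem:splitapxhard} by replacing the split-graph gadget with a degree-reducing one. For \textsc{Steiner Tree}, take the split graph $S=(I,C)$ built there from a connected \textsc{Vertex Cover} instance of maximum degree $3$. For \textsc{Subset TSP}, take the split graph built from the $(2,3)$-metric TSP instance. In each case the only high-degree vertices are those of the clique $C$, together with the terminals, whose degree is at most a constant in the vertex cover case but can be large in the TSP case.

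The plan is to replace every high-degree vertex and every clique by bounded-degree gadgets whose cost is controlled additively. For \textsc{Steiner Tree} I would avoid the clique entirely. I start from the graph $G$ of maximum degree $3$ and subdivide every edge $xy$ once with a terminal $u_{xy}$. I then replace each vertex $z$ by a path of length $L$ (a long ``hub''), and connect the hubs by long paths of length $L$ along a spanning tree of $G$, or alternatively join all hubs through a binary tree of long paths. The intended correspondence is that choosing $z$ into the vertex cover means paying for the hub of $z$ and attaching it, while terminals $u_{xy}$ attach to a chosen hub at cost $O(1)$.

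The cleaner route is to keep the Steiner tree in the form ``terminals plus a connected set of chosen vertices''. Since $G$ is connected, every vertex cover $K$ can be made connected at the price of a constant factor. The most cost-controlled version, however, is the standard fact that \textsc{Steiner Tree} is APX-hard already in the classical Bern--Plassmann form $\{1,2\}$-metrics. That form reduces to graphs where every edge is subdivided and high-degree vertices are replaced by binary trees of depth $O(\log n)$, with edge lengths scaled by a factor $M\gg\log n$ so that tree-traversal costs are lower-order. I would commit to this scaling trick for both problems. I take the hard metric instance, realized by the split graphs above, and replace each clique vertex and each high-degree vertex by a complete binary tree, making every original edge a path of length $M=\Theta(\log n)$ or larger. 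The optimum then scales by $M$ up to an additive $O(n\log n)$, which is at most a small constant fraction of the optimum when $M$ is a large multiple of $\log n$. The resulting graph has maximum degree $3$.

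The main obstacle is the clique $C$. Adjacency inside $C$ is what makes distances $2$ versus $3$, and a clique cannot survive degree reduction. I would handle this by not preserving the clique but instead preserving the terminal metric approximately. I would connect all clique vertices through a single binary tree with short internal edges (length $1$) while making terminal-to-clique edges length $M$. Terminal distances then become $2M+O(\log n)$ or $3M+O(\log n)$, which keeps the $(2,3)$-gap up to a $1+O(\log n/M)$ factor, and similarly for the Steiner cost. The final step is to verify the L-reduction inequalities, using that OPT is $\Theta(nM)$ in both problems.
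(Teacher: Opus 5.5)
\paragraph{The clique fix breaks the reduction.}
The step that fails is your treatment of the clique. If all clique vertices hang off a single binary tree with unit-length internal edges, then any two clique vertices are at distance $O(\log n)$. A terminal pair at distance $3$ in the split graph (in the notation of \Cref{lem:splitapxhard}, via $v_x,d_x,d_y,v_y$) therefore ends up at distance $2M+O(\log n)$, not $3M+O(\log n)$ as you claim. Every terminal has a clique neighbour ($d_x$), so all terminal distances lie in $[2M,2M+O(\log n)]$. Every tour then costs between $2Mn$ and $2Mn+O(n\log n)$, the new instance is trivially $(1+O(\log n/M))$-approximable, and the $(2,3)$ gap has disappeared.

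The same happens for \textsc{Steiner Tree}. Each terminal $u_{xy}$ still needs its own private length-$M$ attachment to $v_x$ or $v_y$. Attaching every terminal and taking the whole cheap tree is feasible, so the optimum is $|I|M+O(n)$. The vertex-cover size only enters the lower-order term.

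\paragraph{The obstacle is not real.}
A clique is just a graph with polynomially many edges, and it survives degree reduction if you treat it like everything else. Your own earlier step, making every original edge a path of length $M$, already handles it once you apply it to the clique edges too. Replace each high-degree vertex (clique vertices and terminals included) by a path or binary tree, with the incident paths attached at distinct vertices. Contracting gadgets and paths recovers the original graph. Hence a solution of cost $c'$ maps back to one of cost about $c'/M$, and an original solution lifts with additive overhead at most (number of gadget visits)$\times$(gadget diameter). The distance-$3$ pairs then really cost about $3M$, because the clique hop costs $M$.

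This is essentially the paper's proof, which does not even need the split-graph instances. It reduces from arbitrary graphs, takes the $4n^3$-subdivision, and dilates each vertex into a path on at most $n$ vertices. The total overhead is $2nb\le \ell/(2n)\le\eps\ell$, using that one may assume $\eps\ge 1/(2n)$ (\Cref{lem:reductiontosubdivdilation}). Your binary trees would allow $M=\Theta(\log n/\eps)$ instead of a polynomial length, which is harmless but unnecessary.

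Finally, the error you introduce is additive and does not vanish at the optimum. What you get is therefore a PTAS (or gap-preserving) reduction with $M$ depending on $\eps$, as in the paper, not an L-reduction.
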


We note that the claim about \textsc{Subset TSP} follows from the APX-hardness of \textsc{Graphic TSP} on the same graph class~\cite{KarpinskiS15}; we give an independent proof specifically for \textsc{Subset TSP}.

We will give a PTAS reduction from the same problems in arbitrary graphs.  Notice that the optimum of both \textsc{Subset TSP} and \textsc{Steiner Tree} in an $n$-vertex graph is less than $2n$. In particular, a spanning tree of $G$ is a feasible Steiner tree on at most $n$ vertices, and the walk that travels this tree twice is a feasible subset TSP tour of length $2n-2$. Thus for any $\eps\leq\frac{1}{2n-2}$ a solution is $(1+\eps)$-approximate if and only if it is optimal. Thus, for $\eps<\frac{1}{2n-2}$, we set $\eps:=\frac{1}{2n}$ and for the rest of the reduction, we assume without loss of generality that $\eps\geq\frac{1}{2n}$.

An \emph{$\ell$-subdivision} of a graph $G$ is the graph $G_t$ obtained by replacing its edges with paths of length $\ell$. Let $S'_{uv}$ be the length-$\ell$ path of $G_\ell$ that replaced the edge $uv$ of $G$. Notice that there is a natural one-to-one correspondence between minimal Steiner trees of $G$ and Steiner trees of $G_\ell$, where the terminals in $G$ and $G_\ell$ are the same. If $X$ is a Steiner tree for $(G,T)$ of size $k$, then consider any spanning tree $E_X$ of $X$, and let $X'$ be the set of vertices occurring on paths corresponding to $E_X$ in $G'$. Then $X'$ spans a feasible Steiner tree of $(G_\ell,T)$ of size $|X'|=\ell\cdot (k-1) + k$. Similarly, a minimal Steiner tree of $(G',T)$ consists of some collection of paths $S'_{uv}$ for some $uv\in E(G)$; if it contains $k-1$ such paths, then its size is $\ell(k-1)+k$, and there is clearly a corresponding set $X\supset T$ in $V(G)$ that consists of $k$ vertices.

We say that a subset TSP walk $W$ of the instance $(G,T)$ is minimal if it can be decomposed into $|T|$ shortest paths of $G$ that connect $T$ in some cyclic permutation. (Note that one can change any feasible walk $W$ into a minimal feasible walk that is not longer than $W$ in polynomial time.) We claim that there is a simple bijection between minimal subset TSP walks of $(G,T)$ and $(G',T)$: one can consider the length-$\ell$ paths corresponding to the edges of a minimal walk $W$ of $(G,T)$ of length $k$: this leads to a minimal walk $W'$ of $(G',T)$ of length $\ell k$. Similarly, any minimal walk of $(G',T)$ consists of shortest paths between terminal pairs, which in turn are concatenations of length-$\ell$ paths.

We can now prove the following lemma.

\begin{lemma}\label{lem:subdivisionAPXhard}
Let $\cG$ be a graph class where \textsc{Subset TSP} and \textsc{Steiner Tree} are APX-hard. Then for any positive integer $\ell$ they are also APX-hard in the graph class $\cG_\ell$ consisting of the $\ell$-subdivisions of the graphs in $\cG$.
\end{lemma}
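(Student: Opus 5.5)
The plan is to give an L-reduction (in fact a PTAS/approximation-preserving reduction) from the problem on $\cG$ to the problem on $\cG_\ell$, using the bijections between minimal solutions described just before the statement. Given an instance $(G,T)$ with $G\in\cG$, we output $(G_\ell,T)$. For \textsc{Subset TSP}, minimal walks of length $k$ in $(G,T)$ correspond to minimal walks of length $\ell k$ in $(G_\ell,T)$. Hence $\tsp(G_\ell,T)=\ell\cdot\tsp(G,T)$. For \textsc{Steiner Tree}, minimal Steiner trees with $k$ vertices correspond to trees with $\ell(k-1)+k=(\ell+1)k-\ell$ vertices. Hence $\smt(G_\ell,T)=(\ell+1)\smt(G,T)-\ell$.

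Back-translation works as follows. Given a feasible solution $W'$ of $(G_\ell,T)$, we first make it minimal in polynomial time without increasing its cost. For TSP, this means replacing it by concatenated shortest terminal-to-terminal paths. For Steiner Tree, it means pruning to a minimal tree; any non-branching internal path between original vertices then consists of whole subdivision paths $S'_{uv}$, possibly after discarding dangling partial paths, which contain no terminals. We then map the result back through the bijection. For TSP, the cost ratio is preserved exactly, since $|W'|/\ell$ versus $\tsp(G,T)$ has the same ratio as $|W'|$ versus $\ell\tsp(G,T)$. So a $(1+\eps)$-approximation on $\cG_\ell$ yields a $(1+\eps)$-approximation on $\cG$.

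For Steiner Tree, the additive $-\ell$ needs care. A solution of size $(1+\eps)\smt(G_\ell,T)$ maps back to one with $k$ vertices where $(\ell+1)k-\ell\le(1+\eps)((\ell+1)\opt-\ell)$. Hence $k\le(1+\eps)\opt+\ell\eps/(\ell+1)\le(1+2\eps)\opt$, using $\opt\ge1$. Thus a PTAS for $\cG_\ell$ would give a PTAS for $\cG$. Equivalently, the L-reduction constants are $\alpha=\ell+1$ and $\beta=1$.

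The main obstacle is the Steiner Tree back-translation: a minimal Steiner tree in $G_\ell$ might end inside a subdivision path. Minimality rules this out, because every leaf of a minimal tree is a terminal and terminals are original vertices. Therefore the tree is a union of full paths $S'_{uv}$, and its image is a connected vertex set of $G$ containing $T$. With this checked, APX-hardness transfers, since a PTAS on $\cG_\ell$ would contradict APX-hardness on $\cG$ (unless P$=$NP).
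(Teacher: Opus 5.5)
Your proposal is correct and follows essentially the same route as the paper: pass to minimal solutions, use the subdivision bijection, observe that Subset TSP costs scale exactly by $\ell$, and use the monotone affine relation $f(k)=(\ell+1)k-\ell$ for Steiner Tree. One small algebra slip: from $(\ell+1)k-\ell\le(1+\eps)\bigl((\ell+1)\opt-\ell\bigr)$ you actually get $k\le(1+\eps)\opt-\eps\ell/(\ell+1)\le(1+\eps)\opt$, since the correction term is negative. So you recover the paper's sharper $(1+\eps)$ bound and do not need to settle for $(1+2\eps)$.
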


\begin{proof}
Given the above polynomial-time-computable bijection between minimal feasible Steiner trees and minimal feasible subset TSP paths of $G\in \cG$ and $G_\ell\in \cG_\ell$, it remains to show that the approximation ratio is preserved. In case of subset TSP paths the approximation ratios are identical as the corresponding walk lengths differ by a multiplicative factor of $\ell$. For Steiner tree, the function $f(k)=\ell(k-1)+k$ is monotone, thus an optimum tree of $(G,T)$ corresponds to an optimum tree in $G_\ell,T$. If the optimum tree of $(G_\ell,T)$ has $\kappa-1$ paths, then a minimal $(1+\eps)$-approximate tree of $(G_\ell,T)$ has size at most $(1+\eps)(\ell(\kappa-1)+\kappa)$. Suppose for the sake of contradiction that this approximate tree has $k-1$ paths where $k>(1+\eps)\kappa$ paths. Then
\[(1+\eps)(\ell(\kappa-1)+\kappa)\geq f(k) > f((1+\eps)\kappa)=(1+\eps)(\ell\kappa-1)+\kappa),\]
which simplifies to $-(1+\eps)\ell>-(1+\eps)$, which is a contradiction as $\ell\geq 1$. Thus the approximate tree has at most $(1+\eps)\kappa-1$ paths, and the corresponding tree has at most $(1+\eps)\kappa$ vertices, concluding the proof.
\end{proof}

In order to prove \Cref{lem:deg3APXhard} we will need to decrease the vertex degrees of graphs using the following method.

A \emph{subcubic dilation} of a vertex set $S$ in the graph $G$ replaces each vertex $v\in S$ with a path~$H_v$ on at least $\deg(v)$ vertices, and connects the vertices of $N_G(v)$ to distinct vertices of~$H_v$. Consequently, if~$G'$ is a subcubic dilation of~$S$ in~$G$ where $S$ contains all vertices of $G$ of degree at least $4$, then~$G'$ is a graph of maximum degree $3$ where all vertices of degree $3$ are on the paths~$H_v$, and contracting the paths~$H_v$ results in the graph $G$. We say that a subcubic dilation~$G'$ of~$S$ in~$G$ is~$f(n)$-bounded if all paths~$H_v$ have at most $f(n)$ vertices.

\begin{lemma}\label{lem:reductiontosubdivdilation}
Let $G'$ be the graph obtained as a subcubic dilation of $V(G)$ in a subdivision $G_\ell$ of $G$. Given $T\subset V(G)$, for each $v\in T$ let $v'$ be an arbitrary vertex of $H'_v$, and let $T'$ be the terminal set in $G'$ containing the points $v'$ for each $v\in T$. Assume moreover that $G'$ is $b=b(n)$-bounded where $b\leq \frac{\ell}{4n^2}$ and $\ell=n^{O(1)}$. Then the following hold.
\begin{enumerate}[label=(\roman*)]
\item If $(G,T)$ has Steiner tree of size $s$, then $(G',T')$ has a Steiner tree of size at most $(s-1)\ell+\eps\ell$. If $(G,T)$ has  a subset TSP walk of size $s$, then $(G',T')$ has a subset TSP walk of size at most $s\ell+\eps\ell$.
\item
If $X'$ is a Steiner Tree for $(G',T')$ where $|X'| < s\ell$, then in polynomial time we can construct a Steiner tree $X$ for $(G,T)$ of size at most $s$.
If $W'$ is a subset TSP walk for $(G',T')$ of size $|W'|<(s+1)\ell$, then in polynomial time we can construct a subset TSP walk $W$ of~$(G,T)$ of size at most $s$.
\end{enumerate}
\end{lemma}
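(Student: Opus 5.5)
We compare solutions in $G'$ with solutions in $G$ by contracting the gadget paths $H'_v$. Contracting them turns $G'$ into $G_\ell$, which is the setting of the subdivision correspondence already established before \Cref{lem:subdivisionAPXhard}. The key quantity is the total size of all gadgets, which is at most $nb\le \ell/(4n)$. Since $\eps\ge \frac{1}{2n}$, this is at most $\eps\ell/2$. So the gadgets contribute only an additive $o(\ell)$ error, while every edge of $G$ costs about $\ell$.

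For (i), let $X$ be a Steiner tree of $(G,T)$ with $s$ vertices, and fix a spanning tree of $G[X]$ with edge set $E_X$, so $|E_X|=s-1$. In $G'$ we take the $s-1$ subdivision paths corresponding to $E_X$. Each such path contributes $\ell-1$ internal vertices. We also take the whole gadget $H'_v$ for every $v\in X$, so that the path endpoints and the chosen terminal vertex $v'$ all lie in one connected set. The total is at most $(s-1)(\ell-1)+nb\le (s-1)\ell+\eps\ell$.

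For subset TSP, let $W$ be a walk of length $s$ in $G$. We replace each traversed edge by its subdivision path of length $\ell$. At each visit to a vertex $v$, the walk enters $H'_v$ at one attachment vertex and leaves at another; we join them by walking along $H'_v$, and detour to $v'$ where needed. Each visit costs at most $b$, or $2b$ with the detour. The number of visits is $s$, and for a minimal walk $s=O(n)$. Since $b\le \ell/(4n^2)$, the total extra cost is $O(n)\cdot 2b\le \eps\ell$, which gives length $s\ell+\eps\ell$. If the constant in $O(n)$ matters, first replace $W$ by a minimal walk. A minimal walk has $s<2n$, so the extra cost is at most $2n\cdot 2b\le \ell/n\le 2\eps\ell$. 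If necessary, shrink the constant by taking the gadget path only once per distinct vertex, plus one traversal per visit.

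For (ii), take $X'$ and first make it minimal by deleting non-terminal vertices while keeping it connected. Consider any subdivision path $S'_{uv}$ that $X'$ uses only partially. Since its interior vertices have degree $2$ and it contains no terminals, minimality lets us drop the partial part. Hence $X'$ is a union of full subdivision paths plus parts of gadgets. Contracting each $H'_v$ gives a connected subgraph of $G_\ell$ containing $T$. Let $X$ be the set of vertices $v$ of $G$ whose gadget meets $X'$. Then $G[X]$ is connected, contains $T$, and has a spanning tree on the used paths. If $X$ has $k$ vertices, then at least $k-1$ full paths are used, each with $\ell-1$ internal vertices, plus at least $k$ gadget vertices, so $|X'|\ge (k-1)(\ell-1)+k$.

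The inequality $|X'|<s\ell$ should give $k\le s$. Here care is needed, because $(k-1)(\ell-1)+k=(k-1)\ell+1$. This lower bound yields $(k-1)\ell+1<s\ell$, hence $k-1<s$, hence $k\le s$ as required.

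For TSP the argument is analogous. Each subdivision path that is entered must be fully traversed, apart from walks that turn back, and we remove those. So a walk that crosses $m$ paths has length at least $m\ell$, and $m\ell<(s+1)\ell$ gives $m\le s$. Projecting the walk to $G$ gives a closed walk of length $m\le s$ that visits every terminal.

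The main obstacle is the bookkeeping of the gadget overhead in (i), namely the number of visits times $b$. I would handle it by passing to minimal solutions, which have $O(n)$ edges, and then using $b\le \ell/(4n^2)$ together with $\eps\ge 1/(2n)$.
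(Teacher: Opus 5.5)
Your Steiner-tree half of (i) and both halves of (ii) are correct and follow the paper's route: contract the gadgets $H'_v$ and charge $\ell$ per used subdivision path. Your count $(k-1)\ell+1$, after first minimizing $X'$, is tidier than the paper's.

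The gap is in the subset TSP half of (i). You charge the gadget moves per visit, so the overhead grows with the number of visits, and this does not reach the claimed $s\ell+\eps\ell$. Two things go wrong.

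\begin{enumerate}
\item The claim that a minimal walk has fewer than $2n$ edges is unjustified. A closed walk made of shortest terminal-to-terminal paths in some cyclic order (the paper's notion of minimality) can have $\Theta(n^2)$ edges. For example, take all vertices of a path $1,2,\dots,n$ as terminals, visited in the order $1,n,2,n-1,\dots$. What you actually need is to replace $W$ by a walk of length at most $\min(s,2n-2)$, such as a doubled spanning tree.
\item Even with fewer than $2n$ visits, per-visit accounting costs up to $b-1$ per visit, plus up to $b-1$ more per terminal for the detour to $v'$. That is roughly $3nb$ in total. The hypotheses only guarantee $\eps\ell\ge \ell/(2n)\ge 2nb$, with equality possible at the extremes. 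Your own computation stops at $4nb\le \ell/n\le 2\eps\ell$. Your proposed fix (one full pass per distinct vertex plus one traversal per visit, about $2nb+sb$) also stops at $2\eps\ell$.
\end{enumerate}

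The missing idea is to make the gadget overhead independent of the number of visits. The paper builds the closed walk $W'_0$ essentially as you do: one subdivision path for each of the $s$ edges of $W$, plus moves inside $H'_v$ at each visit. Then, whenever some gadget edge $ab$ is traversed twice in the same direction, it writes $W'_0=\dots a\,b\,Y\,a\,b\dots$ and replaces this by $\dots a\,\overline{Y}\,b\dots$, where $\overline{Y}$ is $Y$ reversed. The result is still a closed walk through the same vertex set, it keeps all $s$ subdivision-path traversals, and it is strictly shorter.

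When no such pair remains, every gadget edge is used at most once in each direction. So the gadget part has length at most $2n(b-1)<2nb\le \ell/(2n)\le \eps\ell$. This holds for any starting walk $W$, with no bound on $s$ needed. A bound of $s\ell+2\eps\ell$ would also suffice for the later PTAS reduction, but it is not what the lemma states.
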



\begin{proof}
Let $G$ be an arbitrary graph on $n$ vertices, and subdivide each edge $uv$ of $G$ into a path $S'_{uv}$ of length $6n^2$. Replace each original vertex $v$ of $G$ by a path $H'_v$ with $\deg(v)$ vertices $v_1,\dots,v_{\deg{v}}$, so that each edge incident to $v$ is now incident to distinct edges of the new path. Let $G'$ denote the resulting graph. Notice that $G'$ is constructed in polynomial time and has maximum degree $3$. Let $T$ be the set of terminals in $G$, and let $T'$ be the vertex set in $G'$ given by the vertices $v_1$ for each $v\in T$.

Using the correspondence between vertices of $G$ and the paths $H'_v$ as well as between edges of $G$ and the length-$\ell$ paths $S'_{uv}$ it is easy to construct the desired tree and walk. Let $X$ be a Steiner tree  of size $|X|=k$. For each vertex $v$ in of $G$ take all vertices of $H'_v$ into a set $X'$, and for each edge $uv$ in some fixed spanning tree of $G[X]$ we take all vertices of the path $S'_{uv}$ into $X'$. Then $X'$ induces a connected subgraph of $G'$ containing all terminals of $T'$. It has $(s-1)\ell$ vertices that are internal vertices of some $S'_{uv}$ and it has at most $sb$ vertices from the sets $H'_u$ as each such set has at most $b$ vertices. Since $s<n$, $b\leq \frac{\ell}{4n^2}$ and $\eps\geq 1/(2n)$, we have that $X'$ has at most
\[(s-1)\ell+sb\leq (s-1)\ell+ \frac{\ell n}{4n^2}<(s-1)\ell+\eps\ell\]
such vertices.

Similarly, for a walk $W$ we take the paths $S'_{uv}$ corresponding to each edge of the walk into a walk $W'$. For each vertex $v$ between consecutive edges $uv$ and $vw$ of $W$ we take the walk along $H'_v$ from the entry point of $S'_{uv}$ to the leftmost point of $H'_v$, and then to the exit point of $S'_{vw}$. This defines a closed walk $W'_0$ in $G'$. Notice however that some edges of a path $H'_v$ may be traversed by the walk several times in the same direction. Every such double traversal can be eliminated while retaining the closed walk: suppose that $ab$ is traversed twice from $a$ to $b$ by $W'_0$. Then the directed walk $W'_0$ can be written as $x_1,\dots,x_a, a, b, y_1,\dots, y_b, a,b$. We can then change this to the walk $x_1,\dots,x_a, a, y_b,y_{b-1}, \dots, y_1, b$ that is shorter but visits the same set of vertices. Using such simplifications exhaustively, we get a walk $W'$ that traverses each edge of each $H'_v$ at most once in each direction, so altogether each edge of $H'_v$ is traversed at most twice. Now $W'$ has the desired size: in addition to the $s$ edge paths of length $\ell$, we included at most $2nb$ edges incident to the sets $H'_v$. Thus $b\leq \frac{\ell}{4n^2}$ and $\eps\geq 1/(2n)$ implies:
\[|W'|\leq  s\ell + 2nb \leq s\ell + \frac{2n}{4n^2} \leq s\ell+\eps\ell.\]

(\textit{ii}) Suppose that $X'$ has vertices from $k$ distinct sets $H'_u$; let $X$ contain all such vertices $u$, which is clearly feasible and of size $k$. Notice that $X'$ must contain enough paths $S'_{uv}$ to connect these vertices, i.e., at least $k-1$ such paths. It follows that $X'$ has size at least $|X|\geq (k-1)\ell$. Consequently $|X'| < s\ell$ implies that $k-1<\ell$ and thus the constructed Steiner tree satisfies $|X|=k\leq \ell$, as required.

Similarly, we may assume without loss of generality that $W'$ is a minimal feasible walk in the sense that it can be decomposed into $n$ interior-disjoint paths connecting the terminals according to some cyclic permutation. Recall that as a result, if $W'$ contains some edge of $G'$ incident to some edge path $S'_{uv}$ then it must traverse the entire edge path. For each path $P'$ of the decomposition we can create an analogous path $P$ in $G$ by simply taking the edges $uv$ of $G$ such that the path $S'_{uv}$ representing the edge $uv$ are traversed by $P'$, and we set $W$ to be the concatenation of these paths. If $W'$ traverses $k$ edge paths, then it has length at least $k\ell$, thus $|W'|<(s +1)\ell$ implies $|W|=k \leq \ell$ as required.
\end{proof}

We can now easily prove \Cref{lem:deg3APXhard}.

\begin{proof}[Proof of~\Cref{lem:deg3APXhard}]
Let $G$ be an aribtrary graph, and let $G'$ be a subcubic dilation of the subdivision $G_{4n^3}$, where each vertex $v$ has been dilated into a path $H'_v$ of length at most $n-1$. Consequently, $G'$ is $b=(n-1)$-bounded, and satisfies the conditions of \Cref{lem:reductiontosubdivdilation}. Assuming that the optimum Steiner tree of $(G,T)$ has size $s$, a minimum Steiner tree of $(G',T')$ has size at most $(s-1)\ell+\eps\ell$. Thus a $(1+\eps)$-approximate solution has size at most $(1+\eps)((s-1)\ell+\eps\ell)<(1+\eps)s\ell$ as $\eps<1$ without loss of generality. Now (\textit{ii}) implies that in polynomial time we can create a Steiner tree for $(G,T)$ of size at most $(1+\eps)s$.

Similarly, if $(G,T)$ has an optimum subset TSP walk of size $s$, then the optimum subset TSP walk of $(G',T')$ has length at most $s\ell+\eps\ell$, thus a $(1+\eps)$-approximate solution is of size at most $(1+\eps)(s\ell+\eps\ell)< (1+\eps)(s+1)\ell$. Consequently, (\textit{ii}) constructs a subset TSP walk of size at most $(1+\eps)(s+1)-1<(1+2\eps)s$. This concludes the PTAS reduction.
\end{proof}

We can also wrap up the proof of \Cref{thm:apxhard} by realizing split graphs and (subdivisions of) graphs of degree at most $3$ as intersection graphs.

\begin{proof}[Proof of \Cref{thm:apxhard}.]
We claim that in cases (a) and (d) one can realize some subdivision of any graph of degree at most $3$ that has at least $10$ vertices. If we can do this, then \Cref{lem:deg3APXhard} and \Cref{lem:subdivisionAPXhard} implies that both \textsc{Steiner Tree} and \textsc{Subset TSP} are APX-hard on these graph classes.

Similarly, we claim that in cases (b) and (c) we can realize any split graph, so by \Cref{lem:splitapxhard} we have that both problems are APX-hard on these intersection graphs.

Let us now define these realizations.

\textbf{(a)} We show that a $28n$-subdivision of any graph of maximum degree $3$ can be realized as the intersection graph of the following type of objects. Each object consists of two disks of diameter $1/2$, whose centers are at distance $1$ from each other. Moreover, each disk center's coordinates are integral or half-integral. These are clearly intersection graphs of similarly sized fat but disconnected objects.

Our strategy is to first construct a drawing of $G$ in a grid (with crossings) where all edges are represented by paths of equal lengths. We will then replace these paths with the objects in such a way that crossings in the drawing will be ``jumped'' over by the corresponding objects. Let $G$ be an arbitrary graph of maximum degree $3$. Let $v_1,\dots,v_n$ denote its vertices, and let $e_1,\dots,e_m$ denote its edges where $m\leq 3n/2$. To construct a grid drawing, we first assign each vertex $v_i$ to the grid point $(6i,0)$. When $v_i$ has degree $\deg(v_i)\leq 3$, then $v_i$ will be represented by a horizontal segment of length $2(k-1)$ whose left endpoint is $(6i-2,0)$, and the grid points $(6i-2,0),(6i,0),(6i+2,0)$ will be used as starting points of grid paths representing the edges.

For each $j=1,\dots,m$, and corresponding edge $e_j=v_{j_1}v_{j_2}$ we create a path from one of the grid points $(6j_1-2,0),(6j_1,0),(6j_1+2,0)$ that are all on the segment of $v_{j_1}$. Similarly, the path is assigned to one of the grid points   $(6j_2-2,0),(6j_2,0),(6j_2+2,0)$ from the segment of $v_{j_2}$. Since the maximum degree is $3$, we can assign distinct grid points from the segment of each $v_i$ to distinct edges incident to $v_i$. Let $x_{j_1}<x_{j_2}$ denote the $x$-coordinates of these assigned grid points. Then the grid path $P_j$ corresponding to $e_j$ is defined as the union of the following axis-parallel grid path $P_j$ through the following sequence of grid points:
\[(6j_1,0),(x_{j_1},0),(x_{j_1},y_j),(x_{j_1}+1,y_j),(x_{j_1}+1,j),(x_{j_2},j),(x_{j_2},0),(6j_2,0),\]
where $y_j\in [2n, 14n]$ is set so that the total length of the path is exactly $14n$. See \Cref{fig:disconn_realize}. We claim that such a value $y$ exists. The points $(x_{j_1},0)$ and $(x_{j_2},0)$ are both even-coordinate points, and their horizontal distance is at least $2$ and at most $6n$, and $j\leq m\leq \frac32 n$. Thus the path
\[(6j_1,0),(x_{j_1},0)(x_{j_1},j),(x_{j_2},j),(x_{j_2},0),(6j_2,0)\]
has even length at least $4$ (when $j=1$ and $x_{j_2}-x_{j_1}=2$) and at most $2j+6n+4\leq 9n+4<10n$. The length of $P_j$ is $2(y_j-j)$ longer than this path, therefore the desired $y_j\in [2n, 14n]$ exists such that the length of $P_j$ is exactly $14n$.

\begin{figure}
\centering
\includegraphics[width=\textwidth]{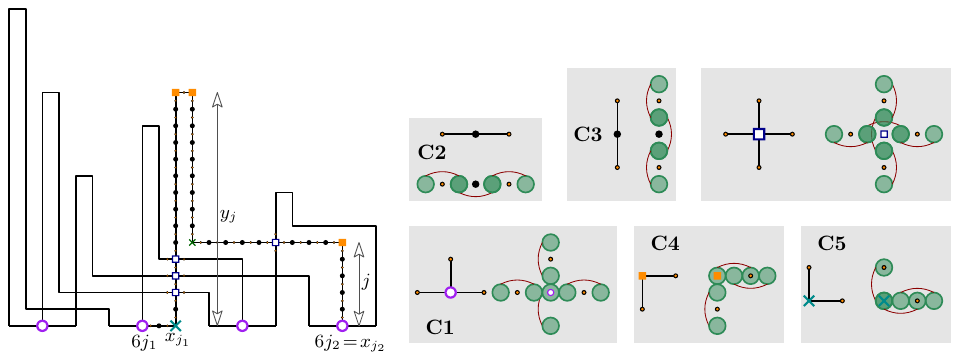}
\caption{Realizing the subdivision of a $3$-regular graph ($K_{3,3}$) using disconnected objects. Each object is a pair of disks, depicted with a connected red line that is not part of the object. The different types of vertices are realized with appropriately placed object pairs.}
\label{fig:disconn_realize}
\end{figure}

Next, we scale the grid drawing by a factor of $2$, so that each edge of the original drawing is realized by two grid edges in the new drawing. Let $D$ denote this new drawing. Consequently, $D$ is a drawing of the $28n$-subdivision of the original graph where crossings can only happen at a point $(a,b)$ when $a,b$ are both even and $b\geq 2$.

To realize the subdivision $G_{28n}$ with the disconnected objects (the disk pairs), notice first that the vertices of the $D$ can be categorized as follows:
\begin{enumerate}[label=\textbf{C\arabic*}]
\item vertices of degree $3$ (with $2$ horizontal edges and one vertical edge on the top) \label{it:deg3}
\item vertices of degree $2$ where both incident edges are horizontal \label{it:deg2h}
\item vertices of degree $2$ where both incident edges are vertical\label{it:deg2v}
\item vertices with a vertical incident edge on the top and a horizontal incident edge\label{it:turnbott}
\item vertices with a vertical incident edge on the bottom and a horizontal incident edge\label{it:turntop}
\end{enumerate}

For the vertex $(a,b)$ of type~\ref{it:deg3} or~\ref{it:turnbott} we realize it with the object whose disk centers are $(a,b)$ and $(a-1,b)$. If $(a,b)$ is of type~\ref{it:turntop}, then we realize it with the disk pair of centers $(a,b)$ and $(a+1,b)$. If $(a,b)$ has type~\ref{it:deg2h}, then it is realized with the disks $(a-\frac12,b),(a+\frac12,b)$, and finally, if it is of type~\ref{it:deg2v}, then it is realized with the disks $(a,b-\frac12),(a,b+\frac12)$. It is routine to check that all adjacencies of $G_{28n}$ are realized, while every crossing of $D$ is ``jumped over'' by our disks, i.e., there is no intersection with the horizontal and vertical paths of $G_{28n}$ at the crossings of $G$.

\textbf{(d)} We can realize the subdivision $G_{14n}$ with balls of diameter $1$. In fact, we will realize $G_{14n}$ as an induced subgraph of the $3$-dimensional grid. Clearly placing unit diameter balls at the vertices of such a grid graph gives the desired ball graph. The idea is similar to that seen in case (a): we use the same assignment of vertices $v_i$ to the point $(6i,0,0)$ on the $x$-axis, and we also use the points $(6i-2,0),(6i,0),(6i+2,0)$ as starting points of grid paths representing the edges of $G_{14n}$. Then the edge $e_j=v_{j_1}v_{j_2}$ will have its ends assigned to $(x_{j_1},0,0)$ and $(x_{j_2},0,0)$. The path corresponding to $e_j$ is then realized by the following axis-parallel grid path $P_j$:
\[(6j_1,0,0),(x_{j_1},0,0),(x_{j_1},2j,0),(x_{j_1},2j,z_j),(x_{j_2},2j,z_j),(x_{j_2},2j,0),(x_{j_2},0,0),(6j_2,0,0),\]
where $z_j\in [2,14n]$ is chosen so that $P_j$ has length exactly $14n$. We can again notice that the path
\[(6j_1,0,0),(x_{j_1},0,0),(x_{j_1},2j,0),(x_{j_2},2j,0),(x_{j_2},0,0),(6j_2,0,0)\]
has length at least $6$ and at most $2\cdot 2j+6n+4\leq 12n+4<14 n$. Thus there exists some $z_j$ such that the extra length of $2z_j$ compared to the above paths will ensure that $P_j$ has length exactly $14n$.

It is routine to check that the described realization with the paths $P_j$ is an induced grid graph.

\textbf{(b) and (c)}: For a given split graph $G=(C\cup I, E)$ where $C$ is a clique of size $n_C$ and $I$ is an independent set of size $n_I$, we will assign each vertex $v_j$ of $I$ to the complex points $p_j:=e^{i\cdot \frac{j}{4n_I\pi}}$ on the open upper half-circle of the unit circle. The vertex $w\in C$ where $W:=N(w)\cap I$ is the set of vertices in $I$ neighboring $w$ is assigned to the convex hull (in the planar sense) of the points $\{1,-1,-i\},\cup \{p_j\mid v_j\in W\}$. Thus each vertex $w\in C$ has been assigned to a convex fat object of diameter $2$, as the objects are inside the unit disk, and they all contain the fat triangle $\{1,-1,-i\}$. It also follows that all objects assigned to the vertices of $C$ are pairwise intersecting. One can verify that the triangle with vertices $\{1,-1,-i\}$ is $0.18$-fat.
In (b), we can finish the realization by assigning to each $v_j$ a segment of length $2$ of radial orientation (that is, whose line passes through $0$) whose closest point to the origin is $p_j$. These segments are clearly connected and have diameter $2$. In (c), we can assign to each vertex $v_j$ a small disk (say, of radius $\frac{1}{10n}$) that touches the unit disk from the outside at $p_j$. Clearly disks are connected and fat. It is routine to check that the resulting intersection graph in both cases is exactly $G$.
\end{proof}

\begin{remark}
We crucially leave open the case of intersection graphs of disks (of arbitrary size) and pseudodisks: neither our algorithm nor our lower bounds extend to these graph classes.
\end{remark}

\subsection{Reducing planar approximation problems to unit disk graphs}

In this section we will prove that there is a PTAS-reduction from \textsc{Planar Subset TSP} to \textsc{Unit Disk Subset TSP} and from \textsc{Planar Steiner Tree} to \textsc{Unit Disk Steiner Tree}.

Although our reduction is from the unweighted variant, we note that one can easily construct a PTAS reduction form these problems in the case when the edges have non-negative weights. In this sense, our algorithms for \textsc{Subset TSP} and \textsc{Steiner Tree} in intersection graphs also implies (E)PTAS algorithms for the unweighted and weighted variants of these problems in planar graphs.

To prove a formal PTAS reduction, it suffices to prove the following~\cite{DBLP:conf/coco/Crescenzi97}. Given $\eps>0$, a planar graph $G$ with terminals $T\subset V(G)$, we need to construct a unit disk graph $G'$ and terminal set $T'$ in $\poly(n)$ time, such that given a $(1+\eps)$-approximate solution $S$ to $(G',T')$, we can generate a $(1+c\eps)$-approximate solution to $(G,T)$ in polynomial time, where $c$ is a constant.

\paragraph{Construction of $(G',T')$}

Given $\eps, G, T$, we start by computing a so-called bar visibility layout of $G$. In such a layout the vertices of $G$ are represented as horizontal segments whose endpoints have integer coordinates, and edges are represented as vertical segments whose endpoints have integer coordinates, and each edge starts and ends on the segments corresponding to its end vertices, and the segments is disjoint from all other vertex-segments of the drawing. 

We will use the following algorithm to compute the desired layout:

\begin{theorem}[Tamassia and Tollis~\cite{tamassia1986unified}, Rosenstiehl and Tarjan~\cite{RosenstiehlT86}]
Given a planar graph $G$, a bar visibility layout of G inside the bounding box $[1,n]\times [1,2n-4]$ can be computed in $O(n)$ time.
\end{theorem}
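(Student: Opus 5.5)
The plan is the classical construction via planar $st$-graphs and their duals. First, in linear time, I augment $G$ to a biconnected planar graph $\hat G$ by adding edges. Any bar visibility layout of $\hat G$ becomes one of $G$ after deleting the vertical segments of the added edges. I then compute a combinatorial planar embedding of $\hat G$ and choose an edge $st$ on the outer face. Next I compute an $st$-numbering $\sigma:V\to\{1,\dots,n\}$ (Even--Tarjan, linear time) and orient every edge from lower to higher number. This gives a planar $st$-graph $D$ with unique source $s$ and sink $t$ on the outer face.

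Second, I build the dual $st$-graph $D^*$. Its vertices are the faces of $D$, with the outer face split into a left face $s^*$ and a right face $t^*$. Each edge $e$ of $D$ yields the dual edge $\mathrm{left}(e)\to\mathrm{right}(e)$. The standard facts about planar $st$-graphs, which I take as the key structural lemma, are these.
\begin{itemize}
\item $D^*$ is again a planar $st$-graph.
\item For every vertex $v$, the edges leaving $v$ (and likewise those entering $v$) form a contiguous left-to-right sequence of faces.
\item For each vertex $v$, the faces $\mathrm{left}(v)$ and $\mathrm{right}(v)$ are well defined, and there is a directed path in $D^*$ from $\mathrm{left}(v)$ to $\mathrm{right}(v)$.
\end{itemize}

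Let $x(f)$ be the length of a longest path from $s^*$ to $f$ in $D^*$, which is computable in linear time by topological order. I then draw the following.
\begin{itemize}
\item Each vertex $v$ becomes the horizontal bar at height $\sigma(v)$ spanning $[x(\mathrm{left}(v)),\,x(\mathrm{right}(v))-1]$.
\item Each edge $e=(u,w)$ becomes the vertical segment at abscissa $x(\mathrm{left}(e))$ from height $\sigma(u)$ to height $\sigma(w)$.
\end{itemize}
The endpoint containment follows from the lemma: the faces left of the edges at $u$ lie between $\mathrm{left}(u)$ and $\mathrm{right}(u)$ along a dual path, so longest-path values are monotone and the edge abscissa falls inside both bars. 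All coordinates are integers. The heights lie in $[1,n]$, and the abscissas lie in $[0,|F|-2]$. Euler's formula for a biconnected planar graph bounds the number of faces by $2n-4$, so after a unit shift (and transposing the roles of the axes if needed to match the stated box) the layout lies in $[1,n]\times[1,2n-4]$. The running time is linear overall.

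The main obstacle is the disjointness claim: the vertical segment of $e=(u,w)$ must not meet the bar of any third vertex $z$ with $\sigma(u)<\sigma(z)<\sigma(w)$. I would argue this by separation. Since $D$ is a planar $st$-graph, $z$ lies either entirely to the left or entirely to the right of $e$, meaning some $s$--$t$ path through $e$ separates them. If $z$ is to the left, there is a dual path from $\mathrm{right}(z)$ to $\mathrm{left}(e)$, so $x(\mathrm{right}(z))\le x(\mathrm{left}(e))$ and the bar of $z$ ends strictly before the segment. The right case is symmetric, with $x(\mathrm{left}(z))\ge x(\mathrm{right}(e))>x(\mathrm{left}(e))$. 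Bars of distinct vertices at the same height cannot occur because $\sigma$ is injective. Establishing the left/right dichotomy rigorously from the embedding is the delicate step; I would prove it by induction on the faces of $D$.
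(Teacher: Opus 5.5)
The paper gives no proof of this statement; it is quoted from Tamassia--Tollis and Rosenstiehl--Tarjan. Your construction is exactly the Tamassia--Tollis algorithm: $st$-numbering for the heights, longest paths from $s^*$ in the dual $st$-graph for the abscissas, the bar of $v$ on $[x(\mathrm{left}(v)),x(\mathrm{right}(v))-1]$, and the edge $e$ at $x(\mathrm{left}(e))$. The containment and separation inequalities are computed correctly, but two points need repair.

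First, drop ``transposing the roles of the axes if needed''. The paper's definition fixes vertices as horizontal bars and edges as vertical segments, so a transposed drawing is not a bar visibility layout. Moreover, your construction bounds the horizontal extent only by $2n-5$, not by $n-1$. What you have proved is the box $[1,2n-4]\times[1,n]$ ($x$ times $y$). This is also the version the paper actually uses afterwards: after the map $(x,y)\mapsto(3nx,36n^3y)$ it takes vertex bars of length up to $6n^2-12n=3n(2n-4)$ and edges of length up to $36n^4$. The displayed box has its two factors swapped, and you should say so rather than claim to match it. There is also a harmless slip: $D^*$ has $|F|+1$ vertices, so $x(t^*)\le |F|$ and the abscissas used lie in $[0,|F|-1]$, not $[0,|F|-2]$ (a triangle, with $|F|=2$, already uses $0$ and $1$). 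The bound $[1,2n-4]$ after the shift still follows from $|F|\le 2n-4$.

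Second, the left/right dichotomy is the real content of the cited result, and the reason you give for it (``since $D$ is a planar $st$-graph'') is not enough on its own. For instance, the source $s$ is neither left nor right of any edge not incident to it. What you need is Tamassia--Tollis' lemma: for a vertex $z$ and an edge $e=(u,w)$ of a planar $st$-graph, one of the following four alternatives holds.
\begin{itemize}
\item $D$ has a directed path from $z$ to $u$.
\item $D$ has a directed path from $w$ to $z$.
\item $D^*$ has a directed path from $\mathrm{right}(z)$ to $\mathrm{left}(e)$.
\item $D^*$ has a directed path from $\mathrm{right}(e)$ to $\mathrm{left}(z)$.
\end{itemize}
Say explicitly that $\sigma(u)<\sigma(z)<\sigma(w)$ excludes the first two alternatives, because $\sigma$ increases along directed edges; your inequalities then finish the argument.

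Note also that lying to the left of \emph{some} $s$--$t$ path through $e$ does not by itself give a dual path from $\mathrm{right}(z)$ to $\mathrm{left}(e)$. One proof uses the leftmost and rightmost $s$--$t$ paths through $e$. A vertex strictly between them is reachable from $w$ or reaches $u$ in $D$. Every face to the left of an edge of the leftmost path reaches $\mathrm{left}(e)$ in $D^*$, so any dual path from $\mathrm{right}(z)$ to $t^*$, which must cross that path, passes through such a face. Either cite the lemma, as the paper implicitly does by citing the theorem, or supply this argument; ``induction on the faces'' as written does not yet establish it.
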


\begin{figure}
\centering
\includegraphics{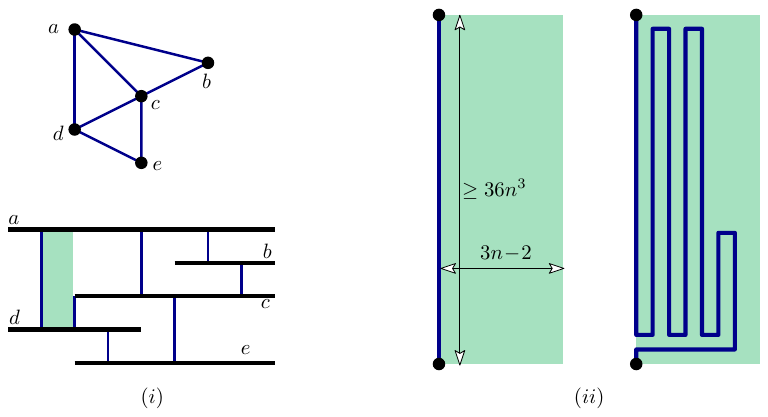}
\caption{(i) Realizing a planar graph as a bar visibility graph. (ii) After scaling, the space between neighboring vertical paths allows us to create a grid path of length exactly $36n^4$.}
\label{fig:kukacutak}
\end{figure}

We apply the transformation $f(x,y)=(3n\cdot x,36n^3\cdot y)$ on the bar layout to get a representation in the bounding box. As a result, the vertex-representing segments have length between $3n$ and $6n^2-12n$, and the edge-representing vertical segments have length between $36n^3$ and $36n^4$, while their horizontal distance is at least $3n$, see \Cref{fig:kukacutak}. Using the horizontal space of width of $3n$ between consecutive vertical segments, we can change each vertical segment into an \textit{induced} grid path of length exactly $36n^4$ connecting the same pair of endpoints: indeed, a grid rectangle of height $36n^3$ and width $3n-2$ does contain such a path connecting two of its corners.
Since the result is an induced grid graph, it can be represented as a unit disk graph where the disks have diameter 1 and the disk centers are the grid vertices of the drawing. Finally, for each vertex $v\in T$ we define $v'\in T'$ to be the unit disk centered leftmost vertex of the segment corresponding to $v$ in the drawing. For $v\in V(G)$ let $H'_v$ denote the set of vertices on the horizontal segment of $G'$ representing $v$. Similarly, for each edge $uv$ of $G$ let $S'_{uv}$ denote the \emph{internal} vertices of the grid path representing the edge $uv$ in $G'$. We can now observe the following.

\begin{observation}\label{obs:reductiontosubdivapplies}
Given $(G,T)$, the instance $(G',T')$ can be constructed in $O(n^6)$ time and the resulting graph has size $O(n^6)$. Moreover, $G'$ is a subcubic dilation of a $36n^4$-subdivision of $G$ that is $(6n^2-12n)$-bounded. In particular, $(G',T')$ satisfies the conditions of \Cref{lem:reductiontosubdivdilation}.
\end{observation}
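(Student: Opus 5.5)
The plan is to verify the three claims of \Cref{obs:reductiontosubdivapplies} directly from the construction: the size and time bound, that $G'$ is a subcubic dilation of a $36n^4$-subdivision of $G$ with paths $H'_v$ of at most $6n^2-12n$ vertices, and that the hypotheses of \Cref{lem:reductiontosubdivdilation} hold.

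\emph{Size and time.} After the scaling $f$, the drawing lies in a box of width $3n\cdot n$ and height $36n^3\cdot(2n-4)$, so it contains $O(n^5)$ grid points. There are $O(n)$ edges (by planarity), each becoming a grid path of exactly $36n^4$ vertices, and there are $n$ horizontal bars of $O(n^2)$ grid points each. Hence $|V(G')|=O(n^5)\subseteq O(n^6)$. Building each winding path inside its $(3n-2)\times 36n^3$ rectangle is a simple explicit zigzag, so the construction runs in time linear in the output, i.e.\ $O(n^6)$, after the $O(n)$-time layout of Tamassia--Tollis / Rosenstiehl--Tarjan.

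\emph{Structure.} First, the unit-diameter disks at grid points realize exactly the induced grid graph. Distinct objects of the scaled drawing are far apart: bars are at vertical distance $\ge 36n^3$, and distinct edge paths lie in disjoint rectangles separated by horizontal gaps. Each edge path is induced by construction and touches only its two endpoint bars at its ends. Contracting each bar $H'_v$ (a path) returns a graph in which each $u$--$v$ connection is a path whose internal vertices form $S'_{uv}$, so $G'$ is a dilation of a subdivision. Here I need the subdivision length to be exactly uniform: I will fix the convention that the grid path from the bar point of $u$ to that of $v$ has exactly $36n^4$ edges, choosing the zigzag parity accordingly. Since the layout has integer coordinates, distinct edges attach to distinct points of a bar, so each bar point has degree at most $3$. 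A bar for $v$ has length at least $3n$ grid units, and the attachments are spaced $3n$ apart, so the bar has at least $\deg(v)$ vertices. The bar length is at most $3n(n-1)$, which I will bound by $6n^2-12n$ for $n\ge 3$; this gives the stated boundedness.

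\emph{Lemma conditions.} With $\ell=36n^4=n^{O(1)}$ and $b=6n^2-12n\le 6n^2\le \ell/(4n^2)=9n^2$, the hypothesis $b\le \ell/(4n^2)$ holds. The terminal choice $v'\in H'_v$ matches the lemma's ``arbitrary vertex of $H'_v$''.

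The main obstacle is making the claim ``each edge path has length exactly $36n^4$'' airtight. This requires checking that a grid rectangle of height between $36n^3$ and $36n^4$ and width $3n-2$ admits an induced corner-to-corner path of the prescribed length with the correct parity. I will handle this by choosing the number of zigzag columns and a final partial detour so that the length equals $36n^4$, adjusting by $\pm 2$ with a small bump. If parity forbids an exact match, I will shift the attachment point on the bar by one unit.
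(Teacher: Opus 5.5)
Your plan is the same direct read-off from the construction that the paper intends (the paper gives no separate proof). However, the step you single out as the main obstacle is misdiagnosed, and the constraint that actually matters is missing.

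A grid path between the two attachment points of an edge is at least as long as their $L_1$ distance. So length exactly $36n^4$ is possible only if every vertical span is at most $36n^4$. In the box you use (width $n$, height $2n-4$ before scaling), a vertical segment can span $2n-5$ units, i.e.\ $36n^3(2n-5)>36n^4$ grid units once $n\ge 6$. Your ``height between $36n^3$ and $36n^4$'' therefore contradicts your own box, and for such edges no suitable path exists.

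What resolves this is the actual shape of the Tamassia--Tollis layout. The $y$-coordinates are st-numbers in $[1,n]$, and the $x$-coordinates are dual st-numbers taking at most $2n-4$ values. The box in the paper has its coordinates swapped, as $6n^2-12n=3n(2n-4)$ and $36n^4=36n^3\cdot n$ show. With the correct orientation, spans are at most $36n^3(n-1)<36n^4$, and bars have length at most $3n(2n-5)\le 6n^2-12n$. So your bound $3n(n-1)$ must be replaced, although the target bound survives.

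Parity, on the other hand, is automatic: the span $36n^3(y_w-y_u)$ and $36n^4$ are both even. The width-$(3n-2)$ rectangle holds about $3n/2$ vertical strokes of length $36n^3-O(1)$. So a snake kept two rows away from the bars (except in the attachment column) can be tuned in steps of $2$ to length exactly $36n^4$. Your fallback of moving an attachment point by one unit would make the $L_1$ distance odd and $36n^4$ unattainable.

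Second, ``integer coordinates imply distinct attachment points'' is not a valid inference. In the Tamassia--Tollis layout, the leftmost incoming and the leftmost outgoing edge of $v$ both receive the $x$-coordinate of the face to the left of $v$. They therefore meet the bar of $v$ in the same point. A bar can even shrink to a single point; this already happens for a triangle, at the vertex other than the source and sink. Hence ``attachments are spaced $3n$ apart'' and ``bars have length at least $3n$'' fail as stated.

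The degree bound $3$ still holds, since the collision happens at the left end of a bar. The proof of \Cref{lem:reductiontosubdivdilation} also never uses distinct attachments. But to match the definition of a subcubic dilation ($|H'_v|\ge\deg(v)$, distinct attachment vertices) you need a small local fix. For example, lengthen one-point bars and attach one of the two colliding paths two grid units further along the bar, which keeps the parity even. The paper glosses over this point as well.
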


Using our construction and the above observation, we can wrap up the section with the following theorem.

\udgreduction*

\begin{proof}
Let $k$ be the number of vertices of the minimum Steiner tree in a planar graph $G$ with terminals $T$, and let $X'$ be the vertex set of a $(1+\eps)$-approximate solution for the instance $(G',T')$ constructed above. Then in polynomial time we construct $X\subset V(G)$. Notice that by \Cref{lem:reductiontosubdivdilation}(\textit{i}) we have that the minimum Steiner tree for $(G',T')$ has at most $36(k-1)n^4+\eps\cdot 36n^4$ vertices, thus a $(1+\eps)$-approximate solution has size at most $|X'|\leq (1+\eps)(36(k-1)n^4+\eps\cdot 36n^4)<(1+\eps)k\cdot 36n^4$. Thus by \Cref{lem:reductiontosubdivdilation}(\textit{ii}) we can construct a Steiner tree of size at most $(1+\eps)k$ for $(G,T)$.

Now let $k$ be the length of the minimum subset TSP tour of $G$ for the terminal set $T$, and let $W'$ be a $(1+\eps)$-approximate closed walk for the instance $(G',T')$ constructed above. Then in polynomial time we construct the closed walk $W$ using \Cref{lem:reductiontosubdivdilation}(\textit{ii}). Notice that by \Cref{lem:reductiontosubdivdilation}(\textit{i}) we have that the minimum subset TSP for $(G',T')$ has at most $36kn^4+\eps\cdot 36n^4$ edges, thus a $(1+\eps)$-approximate solution has size at most $|W'|\leq (1+\eps)(36kn^4+\eps\cdot 36n^4)\leq(1+\eps)(k+1)\cdot 36n^4$. Thus \Cref{lem:reductiontosubdivdilation}(\textit{ii}) gives that $|W|<(1+\eps)(k+1)-1<(1+2\eps)k$, concluding the reduction.
\end{proof}

\addcontentsline{toc}{section}{References}
\bibliographystyle{plainurl}
\bibliography{udgapprox}

\appendix

\section{\texorpdfstring{Converting into $\alpha$-standard intersection graphs}{Converting into alpha-standard intersection graphs}}\label{sec:graf_konvert}

The following lemmas allow us to turn an intersection graph of similarly sized (e.g. unit) disks or arbitrary connected similarly sized fat polygons into an isomorphic intersection graph that is $\alpha$-standard.

\begin{lemma}\label{lem:disktopolygon}
Let $G$ be an intersection graph of $\delta$-similarly-sized disks given via its representation. Then in polynomial time we can build an intersection graph $G'$ of $\delta$-similarly sized connected $(1/\sqrt{2})$-fat polygons with polynomial complexity.
\end{lemma}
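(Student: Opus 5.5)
The plan is to replace each disk by a polygon that is sandwiched between two concentric disks whose radii differ only slightly. A natural choice is a regular polygon: inscribe a regular $k$-gon in a slightly enlarged disk. Then check that intersections are preserved exactly, and that fatness and size ratios are as claimed. The subtlety is that the intersection graph must stay the same, so we cannot simply inscribe a square; a square alone would give fatness $1/\sqrt2$ but could destroy tangential or near-tangential intersections.

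\textbf{Step 1: a safe slack.} Let the disks be $D_i$ with centers $c_i$ and radii $r_i$ (rational input). For every pair, compute $\Delta_{ij}=|c_i-c_j|-(r_i+r_j)$. Let $\eta>0$ be smaller than half of $\min\{|\Delta_{ij}|:\Delta_{ij}\neq 0\}$ and than $\min_i r_i/100$. This $\eta$ has polynomial bit size: the squared distances are rational, and a rational lower bound on the nonzero gaps can be computed from a separation bound for algebraic numbers of degree $2$.

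\textbf{Step 2: the polygons.} Replace $D_i$ by the regular $k$-gon $p_i$ circumscribed about the disk of radius $r_i$ around $c_i$, so that $D_i\subseteq p_i$. Its circumradius is $r_i/\cos(\pi/k)$. Choose $k$ polynomially large (in fact constant relative to $\eta/r_{\max}$ is not needed, only $r_i(1/\cos(\pi/k)-1)<\eta$). Round the vertices to nearby rational points so that containment of a slightly shrunk disk is kept, and $p_i$ stays within the disk of radius $r_i+\eta$.

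The graph is unchanged, for two reasons:
\begin{itemize}
\item If $D_i\cap D_j\neq\emptyset$, then $p_i\cap p_j\supseteq D_i\cap D_j\neq\emptyset$. Handle rounding by keeping exact containment of $D_i$: round outward, or use a circumscribed polygon with rational tangent points, which exist densely on the circle via rational parametrization.
\item If the disks are disjoint, then $\Delta_{ij}>2\eta$, and the enlarged disks of radius $r_i+\eta$ stay disjoint.
\end{itemize}

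\textbf{Step 3: the parameters.} Fatness: $p_i$ contains $D_i$ (radius $r_i$) and is contained in a disk of radius $r_i+\eta\le 1.01r_i$, so it is about $0.99$-fat, which exceeds $1/\sqrt2$. Diameters lie in $[2r_i,2r_i+2\eta]$, so the size ratio is $\delta$ up to a factor $1+O(\eta)$. To get $\delta$ exactly, shrink $\eta$ further or note that the paper only uses constants. The complexity is $nk$, which is polynomial.

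The main obstacle is exactness with rational coordinates: the tangent-disk case must be preserved, which forces $D_i\subseteq p_i$ exactly. This is why I would use a circumscribed polygon with rational tangent directions taken from Pythagorean triples, rather than an inscribed one.
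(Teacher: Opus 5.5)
\textbf{There is a genuine gap in your complexity bound.} To keep disjoint pairs disjoint, you require $p_i\subseteq B(c_i,r_i+\eta)$. For a regular $k$-gon circumscribed about $D_i$ this forces $r_i(1/\cos(\pi/k)-1)<\eta$, that is, $k=\Omega(\sqrt{r_i/\eta})$.

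The trouble is that $\eta$ is at most half the smallest positive gap $|c_i-c_j|-(r_i+r_j)$, and that gap has no lower bound in terms of the number of disks. For example, unit disks centred at $(0,0)$ and $(2+2^{-L},0)$ have gap $2^{-L}$ and an $O(L)$-bit description, which forces $k=\Omega(2^{L/2})$. Your separation-bound remark only shows that $\log(1/\eta)$ is polynomial, whereas $k$ grows like $\eta^{-1/2}$.

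So the assertions ``choose $k$ polynomially large'' and ``the complexity is $nk$, which is polynomial'' are false. Your construction is exponential in the bit size, and it is unbounded in $n$ in the real-RAM setting the paper uses, where input size is the number of objects. No choice of $k$ rescues a construction that must be uniformly $\eta$-close to every circle. A smaller second issue: after rational rounding you get $\delta$-similar size only up to a factor $1+O(\eta)$, not exactly $\delta$ as the statement claims.

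\textbf{The missing idea is to adapt each polygon to the directions of the other disks,} so that it has $O(n)$ vertices regardless of the geometry. The paper does this with an inscribed polygon. It takes $p(d)$ to be the convex hull of the axis-parallel square inscribed in $d$ together with one point $p_{d,d'}\in\bd d$ for every disk $d'$ meeting $d$. This point lies on the ray from $c(d)$ towards $c(d')$ (it is the midpoint of the arc $\bd d\cap d'$). The needed properties then follow directly:
\begin{itemize}
\item Since $p(d)\subseteq d$, no new intersections can arise, so no slack $\eta$ is needed at all.
\item The segments $c(d)p_{d,d'}\subseteq p(d)$ and $p_{d',d}c(d')\subseteq p(d')$ lie on the line of centres and overlap whenever $d\cap d'\neq\emptyset$, tangency included. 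Hence every original intersection survives.
\item The square gives fatness exactly $1/\sqrt2$ and diameter exactly $2r_d$, hence exact $\delta$-similar size.
\end{itemize}

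If you prefer your circumscribed viewpoint, the analogous repair is to intersect a fixed circumscribed octagon with the tangent half-planes $\{x:\langle x-c_i,u_{ij}\rangle\le r_i\}$, where $u_{ij}$ is the unit vector towards $c_j$. For a disjoint pair $D_i,D_j$, the half-planes for $u_{ij}$ and $u_{ji}$ are already disjoint, so again no $\eta$ is needed. The cost is that similar-sizedness then holds only up to a factor $\cos(\pi/8)$.
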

 
\begin{proof}
For each disk $d\in V(G)$ let $S(d)$ be the maximum axis-parallel square inscribed into $d$.
For each $d,d'\in V(G)$ where $d\cap d'\neq \emptyset$ let $p_{d,d'}$ be the midpoint of the arc $\bd d \cap d'$. Let $p(d)$ be the convex hull of the set $\sigma(d)\cup \bigcup_{d'\in N(d)\setminus \{d\}} p_{d,d'}$. Notice that the vertices of $p(d)$ are on $\bd d$. Consider now the intersection graph $G'$ given by the convex polygons $\{p(d) \mid d\in V(G)\}$. We claim that $p(d),p(d')$ intersect in $G'$ if and only if $d,d'$ intersect in $G$. Since $p(d)\subset d$ for each $d\in V(G)$, no new intersections are created. To check the maintenance of edges, let $c(x)$ denotes the center of $x\in D_2$. For each $d'\in N(d)$ the intersection between $d,d'$ is maintained as the segment $c(d)p_{d,d'}$ intersects the segment $p_{d',d}c(d')$ as both segments are on the line $c(d)c(d')$, and the ordering of these four points on the line is $c(d),p_{d',d},p_{d,d'},c(d')$, where $p_{d',d}$ and $p_{d,d'}$ might coincide.
It remains to show the fatness and similar size bound. Notice that $\sigma(d)\subset p(d)$ and $\sigma(d)$ contains an inscribed disk of radius $r_d/\sqrt{2}$ where $r_d$ is the radius of $d$, on the other hand $p(d)\subset d$, which proves the fatness bound. 
Finding the points $p_{d,d'}$ takes $O(n^2)$ time. Using a standard convex hull algorithm~\cite{grahamscan}, the construction of the polygons $p(d)$ can be executed in $O(n^2\log n)$ time, as each $p(d)$ is the convex hull of at most $4+n-1$ vertices.
\end{proof}

We say that a pair of intersecting segments have a \emph{true crossing} if they intersect in a single point, and neither segment contains any endpoint of the other segment.

\begin{lemma}\label{lem:fairness2}
If $p$ and $q$ are simple polygons where any edge $e$ of $p$ and any edge $f$ of $q$ are either disjoint or have a true crossing, then $\{p,q\}$ satisfies the second fairness condition, that is, any component of their intersection is a simple polygon.
\end{lemma}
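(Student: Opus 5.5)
The plan is to study the arrangement formed by the two boundary curves $\bd p$ and $\bd q$, and to show that each component of $p\cap q$ is a closed region bounded by one Jordan curve made of finitely many segments. The hypothesis forbids boundary edges from overlapping, touching, or meeting at polygon vertices. Consequently $\bd p\cap\bd q$ is a finite set of points, each interior to exactly one edge of $p$ and one edge of $q$, and at each such point the two edges cross transversally.

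\textbf{Step 1 (local structure).} I would first establish three local facts.
\begin{itemize}
\item Around a transversal crossing point $x$, a small disk is cut by the two edges into four sectors. Exactly one sector lies in $p\cap q$, so locally $p\cap q$ looks like a convex wedge with apex $x$.
\item Around a point of $\bd p$ lying in $\inter q$ (or symmetrically), the set $p\cap q$ locally coincides with $p$, so the boundary is locally an arc of $\bd p$.
\item Interior points of $p\cap q$ need no discussion.
\end{itemize}
Together these show that $\bd(p\cap q)$ is a compact $1$-manifold. Every point of it has a neighbourhood in which it is a simple arc: at crossing points the arc turns from an edge of $p$ onto an edge of $q$. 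Hence $\bd(p\cap q)$ is a finite disjoint union of polygonal Jordan curves. Moreover, $p\cap q$ is locally a closed half-plane or wedge at each boundary point, so it is a closed $2$-dimensional set without degenerate pieces: no isolated points and no segments.

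\textbf{Step 2 (no holes).} Let $K$ be a component of $p\cap q$. I must exclude $K$ having several boundary curves, that is, a hole. Suppose $\gamma$ is a boundary curve of $K$ other than its outer one. Then the bounded region $D$ enclosed by $\gamma$ contains points outside $p\cap q$, say outside $p$. Since $p$ is a simple polygon, $\Reals^2\setminus p$ is connected and unbounded. So a point of $D\setminus p$ can be joined to infinity by a path avoiding $p$, hence avoiding $K$. This path must leave $D$, and it can only do so by crossing $\gamma\subseteq K$, which is a contradiction. Therefore each component has exactly one boundary Jordan curve and equals its closed interior, i.e.\ it is a simple polygon.

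\textbf{Main obstacle.} The delicate part is Step 1. The difficulty is verifying rigorously that the true-crossing hypothesis makes every boundary point of $p\cap q$ have a neighbourhood in which the boundary is a single arc. Specifically, I must rule out pinch points, where two boundary arcs of $p\cap q$ touch at one point, and configurations where $p\cap q$ degenerates to a segment or a point. I expect this to reduce to a case check at crossing points: there the four-sector picture, with a single sector belonging to both polygons, excludes pinching. Finiteness of the crossings then guarantees that the boundary consists of finitely many polygonal curves.
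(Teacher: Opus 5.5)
Your proposal is correct and follows essentially the same route as the paper. Both arguments first do a local analysis of the arrangement of $\bd p$ and $\bd q$, showing that boundary vertices of a component have degree two and that no degenerate segments or points occur. Both then exclude holes by noting that a hole's boundary lies in $p$ while points just inside it lie outside $p$; your path-to-infinity contradiction and the paper's ``bounded region of $\bd p$ is not covered by $p$'' are the same fact, namely that the complement of a simple polygon is connected.
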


\begin{proof}
Let $x$ be some connected component of $p\cap q$. The lemma follows if $p\subset q$ or $q\subset p$ as then $x=p$ or $x=q$, and $p$ and $q$ are simple polygon; assume that this is not the case, and let $A= \cA(p,q)$ be the arrangement given by $p$ and $q$.

If $s$ is a segment of $A$ in $x$, then at least one of the two faces of $A$ incident to $s$ must also be in $x$, as otherwise $s$ would be a segment of positive length in the intersection of some edge of $p$ and some edge of $q$. Moreover, if $\dot v\in \bd x$ is a vertex of $\bd x$, then either $\dot v\in V(p)\cap \inter q$, $v\in V(q)\cap \inter p$, or $\dot v$ is a true crossing. Observe that in each of these cases $\dot v$ has degree $2$ in $\bd x$ and has some incident face of $A$ that is both in $p$ and $q$. Finally, since $p$ and $q$ are closed, $x$ is closed. We conclude that $x$ is a non-self-intersecting polygon; it remains to show that it has exactly one boundary component.

Suppose that $h$ is a hole of $x$, and let $f$ be some edge of $h$. Then $f$ cannot be on an edge of both $\bd p$ and $\bd q$, so assume without loss of generality that $f\subset \bd p$. Since the other side of $\bd h$ (namely, $x$) is in $p$, it follows that the face $F_f\subset \inter h$ of $A$ on the other side of $f$ is not in $p$. On the other hand, $\bd h\subset x \subset p$, thus $F_f$ is a bounded region outside $p$ that is in the bounded region of the Jordan curve $\bd h \subset p$ and thus in the bounded region of $\bd p$. Thus the bounded region of $\bd p$ is not covered by $p$, which contradicts the fact that $p$ is a simple polygon.
\end{proof}

\begin{lemma}\label{lem:polygontoholefree}
Let $G$ be an intersection graph of $\delta$-similarly-sized $\beta$-fat non-self-intersecting polygons, possibly with holes, given via its representation of polynomial complexity. Then in polynomial time we can build an intersection graph $G'$ of $\delta$-similarly-sized $(\beta/4)$-fat simple polygons. Moreover, each polygon of $G'$ is a subpolygon of the corresponding polygon of $G$, and if $G$ is $\alpha$-standard, then scaling the representation of $G'$ by $4$ results in an $4\alpha$-standard graph.
\end{lemma}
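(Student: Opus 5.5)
The plan is to remove the holes of each polygon $p\in V(G)$ by cutting thin channels from every hole to the outer boundary, while keeping a large inscribed disk intact and preserving all intersections. First fix, for each $p$, a disk $D_p\subseteq p$ of radius $r_1$ witnessing $\beta$-fatness, with $p\subseteq D'_p$ of radius $r_2\le r_1/\beta$. Shrink $D_p$ to the concentric disk $D^{1/2}_p$ of radius $r_1/2$, and replace it by the axis-parallel square $Q_p$ inscribed in $D^{1/2}_p$. $Q_p$ contains a disk of radius $r_1/(2\sqrt2)\ge r_1/4$, and $Q_p$ lies inside $p$ and away from every hole, since holes are disjoint from $D_p$. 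This already gives fatness $\beta/4$ for any subpolygon of $p$ containing $Q_p$. Diameters only shrink, but the minimum stays at least the diameter of $Q_p$. After rescaling by the constant $4$ this keeps $\delta$-similar size up to constants, and gives a unit inscribed disk when $G$ was $\alpha$-standard, so scaling by $4$ yields a $4\alpha$-standard graph.

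Next, record for each pair $p,q$ with $p\cap q\ne\emptyset$ a witness point $x_{pq}\in p\cap q$. Take it in the interior of both polygons, which we can do after a tiny perturbation and using that intersections are two-dimensional for fair inputs; otherwise pick a point of the arrangement and handle it by small local thickening. The new polygon $p'$ must contain all witnesses $x_{pq}$ and $Q_p$, and must be simple. Triangulate $p$, a polygon with holes, in polynomial time. Each hole $h$ is then connected to the outer boundary by a path in the dual graph of the triangulation. Choose a system of cuts: the triangulation's dual is connected, and its cycle space has rank equal to the number of holes. Removing from $p$ a thin neighborhood of one triangulation edge per independent cycle makes it simply connected. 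We choose these edges among diagonals avoiding $Q_p$ and avoiding all witness points, by general position.

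Concretely, I would compute a spanning tree of the dual graph of the triangulation. For each non-tree dual edge, i.e. each diagonal $e$ not in the tree, I would delete an $\eta$-thin open neighborhood of $e$, with $\eta$ smaller than the distance from $e$ to all witness points and to $Q_p$. The number of non-tree edges equals the number of holes, and what remains is the union of tree triangles glued along tree diagonals. That is a topological disk, so $p'$ is a simple polygon. Since diagonals of $Q_p$ could pass through $Q_p$, first add the square $Q_p$ to the triangulation as a constrained region, by triangulating $p$ minus $\inter Q_p$ together with $Q_p$. Its triangles are then glued within the tree without cuts inside it. The complexity stays polynomial.

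The main obstacle is ensuring that intersections are not lost when the cut regions of $p$ and $q$ both cover the same parts of $p\cap q$. Witness points handle this, because each witness lies in both $p'$ and $q'$, provided the witness is not on a cut diagonal. We pick witnesses in the interior of a triangle of both triangulations (using an arrangement face of $p\cap q$ of positive area) and then choose $\eta$ small. Subpolygon-ness holds by construction, and $p'\subseteq p$ prevents new edges.
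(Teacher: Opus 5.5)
Your route is genuinely different from the paper's. The paper punctures one hole at a time: it deletes a single trapezoid of $p$ adjacent to the hole, inside a thin vertical slab whose $x$-range contains no vertex of the arrangement, and on the side of the hole away from the inscribed disk $d_p$. Your triangulation, dual spanning tree and channel construction is topologically sound. As written, however, it does not prove the statement, for three concrete reasons.

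\textbf{(i) Fairness is never checked.} The last clause needs $G'$ to be fair, and you only verify the unit disk and the diameter bound. You never show that the channel boundaries cross other polygons' edges in true crossings, avoid arrangement vertices and create no concurrent triples, which is exactly what \Cref{lem:fairness2} needs. In the paper this is automatic: every new edge is a vertical segment at an $x$-coordinate strictly between consecutive vertex $x$-coordinates. For you it requires a stated general-position choice of the channel offsets.

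\textbf{(ii) Your channels delete polygon vertices.} A channel that follows a diagonal $vw$ to its endpoints must remove $v$ and $w$. Otherwise the two sides of the cut still meet at $v$, and $p'$ has a pinch point, so it is not simple. Hence diameters can drop, and you only get similar size with a degraded constant, while the statement keeps $\delta$. The paper's slab contains no vertex of $p$, so $p'$ keeps all of them and $\diam(p')=\diam(p)$. You can repair this by letting each channel end exit through the interior of a boundary edge next to $v$ rather than through $v$.

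\textbf{(iii) The fallback for degenerate intersections fails.} The lemma is not restricted to fair inputs; \Cref{thm:atalakitas} applies it to arbitrary fat polygons. So $p\cap q$ may be a single boundary point, even a vertex, and your ``local thickening'' fallback contradicts $p'\subseteq p$. Instead the witness must be kept outside every channel, which again needs vertex-preserving channel ends. In the paper such a touching point is an arrangement vertex and therefore never lies in a slab.

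A smaller inaccuracy concerns the Steiner vertices. Once the four corners of $Q_p$ are inserted as interior Steiner vertices, the dual graph has cycle rank $h+4$, not $h$, because the triangles around each corner form a dual cycle. So some non-tree diagonal is incident to every corner of $Q_p$, and a channel may run along a side of $Q_p$. This also contradicts your requirement that $\eta$ be smaller than the distance from each cut diagonal to $Q_p$. The remainder is still a disk: by tree--cotree duality, the non-tree diagonals together with the boundary form a connected graph whose only cycles are the boundary cycles. Bites of width $\eta$ still leave an inscribed disk of radius at least $r_1/(2\sqrt2)-\eta>r_1/4$. But the count and the claim that $Q_p$ stays untouched need correcting.
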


\begin{proof}
Let $W$ be the collection of all vertices in the arrangement of the polygons of $V(G)$, and let $X=\{x_1,\dots,x_N)$ such that $x_1\leq x_2\dots\leq x_N$ be the set of $x$-coordinates in $W$. Let $P=V(G)$ be the initial collection of polygons, and fix in each polygon a large inscribed disk $d_p$ of radius at least $\frac{\beta}{4} \diam(P)$ we will modify $P$ in several steps as follows.

Consider some polygon $p\in P$ and let $h$ be a hole of $p$. Take the vertical slab between the vertical lines $x=x_{p,h}$ and $x=x'_{p,h}$ that intersects $h$ such that $x_{p,h}$ and $x'_{p,h}$ subdivide the distance between two consecutive numbers of $X$ into three equal parts, i.e., $x_{p,h}=\frac{2x_i+x_{i+1}}{3}$ and $x'_{p,h}=\frac{x_i+2x_{i+1}}{3}$ for some $i\in [N-1]$. Note that since $h$ is a simple polygon whose vertices project into some subset of $X$, there exists such a slab. Notice that the slab $s:=\{(x,y) \in \Reals^2 \mid x_{p,h}\leq x \leq x'_{p,h}\}$ does not contain any polygon vertices. Consider the intersection $s$ and $\Reals^2\setminus p$. Each region of $\Reals^2\setminus p$ belongs to either some hole $h^p_i$ of $p$ or the unbounded $\re U_p$ region of $\Reals^2\setminus p$; see \Cref{fig:lyuktalanito}.

Since there are no vertices of $p$ in $s$, each edge of $p$ cuts the slab with a segment, and thus there is a top-to-bottom ordering of the arrangement $\cA(s\cap \bd p)$, where the regions of $p$ and $\Reals^2 \setminus p$ alternate, and the top and bottom unbounded regions are in $\re U_p$. Moreover, each region is a trapezoid and thus convex, so there is a unique region that intersects the disk $d_p$. Since $s$ intersects $h$, consider the region of $p$ before the first occurrence of a region of $h$ as well as the region of $p$ after the last occurrence of a region of $h$ in this top-to-bottom order. At least one of these regions is disjoint from $d_p$; let $\re R_{p,h}$ be this region. Notice that and exactly one of the regions neighboring $\re R_{p,h}$ is in $h$, and the other is either in the unbounded face of $p$ or in a different hole. Let $p'$ be the closure of $p\setminus \re R_{p,h}$. As a result, either $h$ is merged into the unbounded face of $p'$ or some other hole of $p'$ via $\re R_{p,h}$.

If $q$ is a polygon intersecting $p$, then it must have some region $\re A$ of the arrangement $\cA(P)$ that is inside $p\cap q$. Notice that $s$ contains no vertices of $\re A$, thus $\re A \setminus s$ is non-empty, and thus $p'$ intersects $q$. Moreover, $\diam(p)=\diam(p')$ (as the diameter of a polygon is realized between some pair of vertices, and no vertices of $p$ have been removed), and $d_p\subset p'$. Next, we replace the polygon $p$ in $P$ with $p'$, and also update $X$ to include the new points $x_{p,h}, x'_{p,h}$.

\begin{figure}
\centering
\includegraphics{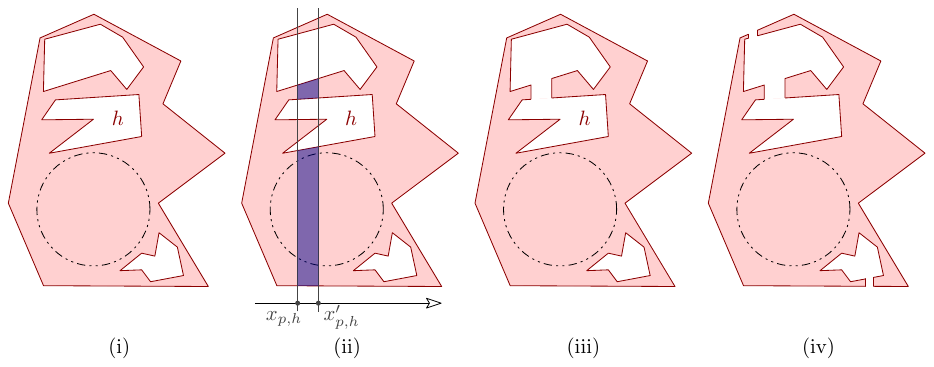}
\caption{Puncturing a polygon with holes to make it simple. (i) A polygon with holes and an inscribed disk. (ii) Finding a narrow slab of the arrangement without any vertices of the arrangement. The region after the last occurrence of the hole $h$ cannot be removed as that would hurt fatness. (iii) Removing a part of $s\cap p$ neighboring $h$ decreases the number of holes. (iv) A simple fat polygon $p'$ with the same diameter and similar fatness.}
\label{fig:lyuktalanito}
\end{figure}

We repeat this step exhaustively until we get a polygon collection $P$ without holes. The step retains the connectedness of $p$: indeed, the top and bottom edge of the trapezoid $\re R_{p,h}$ remains connected in $p$ via $\bd h$, as each hole $h$ is a simple polygon (and thus has a single boundary component) that intersects $\bd \re R_{p,h}$ in a segment. Notice that each step reduces the total number of holes in the polygons of $P$ by at least one. The polygon diameters remain unchanged, and we maintain a ball of radius $\frac{\beta}{4}  \diam(p)$ in each polygon~$p$, thus the final polygon set $P$ is a collection of $\delta$-similarly-sized $(\beta/4)$-fat simple polygons. Moreover, the intersection graph is maintained in each step as the polygon $p'$ replacing $p$ is a subset of $p$ (thus no new intersections can be created) but all intersections with any other $q\in P$ are maintained by $p'$. Thus the resulting intersection graph $G'$ is isomorphic to $G$.

If $G$ was an $\alpha$-standard graph, then observe that no edges or vertices have been introduced to $G'$ that could create intersecting edges that do not make a true crossing, and no new concurrent edge triples are created as new edges avoid existing edges of the arrangement. \Cref{lem:fairness2} implies that fairness is maintained. The newly introduced polygons are subpolygons of earlier polygons, and the maximum inscribed disks decreased by a factor of at most $4$. Thus, after scaling the new representation by $4$, the resulting representation is $4\alpha$-standard.

It remains to bound the running time of this construction. Notice that covering a polygon $p$ with a disk requires radius at least $\diam(p)/2$, thus by the fatness bound each polygon contains a disk $d'_p$ of radius at least $\frac{\beta}{2}\diam(p)$. To find the desired disk $d_p$, we can consider a square grid of side length $\frac{\beta}{4\sqrt 2}\diam(p)$. Among the grid points inside $p$ there must be one whose center is within distance $\sqrt 2 \cdot \frac{\beta}{4\sqrt 2}\diam(p)=\frac{\beta}{4}\diam(p)$ to the center of $d'_p$; the disk of radius $\frac{\beta}{4}\diam(p)$ centered at this point is covered by $d'_p$ and thus covered by $p$, so it is a suitable choice for $d_p$. Clearly the grid has constantly many points inside the bounding box of $p$ and we can check whether a disk is contained in $p$ in linear time. Finally, notice that the initial arrangement $\cA(P)$ is polynomial and has $\poly(n)$ holes; and each step adds at most $O(|\cA(P)|)$ new elements to the arrangement as there are only $2$ new vertical segments introduced. Thus all configurations have polynomial complexity (and the same holds for their arrangements). We conclude that the construction of the representation of $G'$ can be executed in polynomial time.
\end{proof}

\begin{lemma}\label{lem:simplepolygontoorganized}
Let $G$ be an intersection graph of $\delta$-similarly-sized $\beta$-fat simple polygons given via its representation of polynomial complexity. Then in polynomial time we can build the representation of an $\alpha$-standard intersection graph $G'$ that is isomorphic to $G$ where $\alpha=9/(\delta\beta)$.
\end{lemma}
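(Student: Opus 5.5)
The plan is to normalize scale first, then enforce the three $\alpha$-standard conditions one at a time. Each step either shrinks the polygons slightly or perturbs them slightly, and we check that no step changes the intersection graph.

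\textbf{Step 1: scaling.} Let $D=\max_{p}\diam(p)$. By $\delta$-similar size, every polygon has diameter at least $\delta D$. By $\beta$-fatness, each polygon $p$ contains a disk of radius at least $\beta\diam(p)/2\ge \beta\delta D/2$. Here we use that the enclosing ball has radius at least $\diam(p)/2$. Scale the plane by $c=3/(\beta\delta D)$, which is slightly more than what is needed to make the inscribed radii at least $1$. After scaling, every polygon contains a disk of radius at least $3/2$, and every diameter is at most $cD=3/(\beta\delta)\le\alpha$. The slack in the radius is reserved for the shrinking in the later steps.

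\textbf{Step 2: general position via shrinking.} We remove degeneracies: shared edge segments, vertices lying on other polygons' edges, and three edges meeting at a point. To do this, replace each polygon $p$ by an inner offset $p_\eta\subset p$ for a tiny $\eta>0$. We do not take the exact Minkowski erosion, since it may create arcs. Instead we take a polygonal subpolygon obtained by moving each vertex inward along its angle bisector by a generic small amount $\eta_p$, chosen independently for each $p$ from a polynomial-size set of rationals. If $p\cap q\neq\emptyset$, the intersection contains a face of the arrangement $\cA(G)$ of positive area, or a boundary segment or point. To avoid losing contacts that occur only along boundaries, we first enlarge each polygon outward by a tiny $\eta'$ in the same way. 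This is taken smaller than half the minimum distance between disjoint polygons, which is positive and computable since the representation is finite and polynomial. Then every intersecting pair overlaps in a set with nonempty interior, and disjoint pairs remain disjoint. Afterwards we shrink by generic amounts smaller than the overlap depth, which is again a computable positive quantity. Genericity of the offsets, together with a random rational perturbation of vertices, ensures two things: edges of different polygons meet only in true crossings, and no three edges are concurrent.

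\textbf{Step 3: fairness via Lemma~\ref{lem:fairness2}.} With all edge pairs disjoint or truly crossing, Lemma~\ref{lem:fairness2} gives that every component of $p\cap q$ is a simple polygon. No three edges meet, so $G'$ is fair. The polygons stay simple because small perturbations preserve simplicity. The inscribed disks shrink by at most $\eta+\eta'<1/2$, so radius at least $1$ remains, and the diameters change negligibly; absorbing this change into the constant keeps them at most $9/(\delta\beta)=\alpha$. All quantities are rational with polynomial bit size, so the construction runs in polynomial time.

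The main obstacle is Step 2: making sure the outward/inward offsets preserve exactly the intersection pattern, in particular for pairs that merely touch. For that I would rely on explicitly computing, from the arrangement, the minimum positive separation of disjoint pairs and choose $\eta'$ below it.
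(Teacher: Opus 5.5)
Your proof is correct, but it takes a different route from the paper's. The paper scales using computed $1/4$-approximate inscribed disks, which leaves diameters at most $8/(\beta\delta)+1/2$. It then repairs only the \emph{bad} edges, one at a time, by pushing both endpoints of a bad edge outward along the outer angle bisectors to one of $12N^2$ candidate positions. Since the arrangement has fewer than $2N^2$ edges, some candidate makes the three new edges avoid every arrangement vertex and every overlap, so the choice is deterministic. Because every move is outward, $p\subset p_f(i)$: all contacts, including mere touchings, survive for free, and only disjointness needs the size budget $\mu_0/4$.

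You instead scale by the disk that fatness itself guarantees (radius at least $\beta\diam(p)/2$). This needs $\beta\delta$ explicitly but gives the sharper diameter bound of roughly $3/(\beta\delta)$. You then obtain general position globally: after the $\eta'$-expansion every contact is an overlap with nonempty interior, so any sufficiently small perturbation preserves the intersection graph. This is conceptually cleaner than the paper's local repair.

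Two details need tightening. First, once you add a random vertex perturbation, the inward offsets by $\eta_p$ are not needed; the bounded perturbation alone does the job. You should state explicitly that its magnitude stays below both the overlap depth and the remaining separation between disjoint polygons. Second, the appeal to ``genericity'' should be backed by an argument. Each degeneracy (a vertex on another polygon's edge, a collinear overlap, three concurrent edges) is a nonzero bounded-degree polynomial condition in the independent vertex perturbations. Hence a random point of a fine enough rational grid avoids all $\poly(N)$ of them with high probability. Since these conditions are checkable in polynomial time, this gives a Las Vegas algorithm, or a deterministic one if you fix vertices one by one in the spirit of the paper's counting argument.
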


\begin{proof}
Let $P_0=V(G)$ be the initial collection of simple polygons. We find a $1/4$-approximation $d_\myin(p)$ of the maximum enclosed disk of all polygons $p\in P_0$ in polynomial time using the same method as seen in the proof of~\Cref{lem:polygontoholefree}. We modify $P_0$ in several steps. First, we scale $P_0$ to ensure each polygon contains a disk of radius $1$ (thus diameter $2$), i.e., let $P_1$ be the set of polygons after $P_0$ is scaled by $\frac{2}{\min_{p\in P_0} \diam(d_\myin(p))}$. As a result, every polygon of $P_1$ has diameter at most $8/(\beta\delta)$. Consider the arrangement $\cA(P_1)$, and let  $E_1$ and $V_1$ denote its set of edges and vertices. We denote by $N$ the total number of edges in the polygons of $P_1$, and let $\mu_0$ the minimum distance among $V_1$ and the minimum distance between non-incident vertices of $V_1$ and $E_1$. Let $\mu_1:=\min\{\mu_0/(100N^2),1/(100N^2)\}$.

We will now modify the polygons so that any pair of edges from distinct polygons intersect in a point, and no three polygons edges are concurrent. We say that an edge $f$ of the polygon $p$ is \emph{bad} if either (a) there is some edge $f'$ of some polygon $p'$ where $p\neq p'$ and $f\cap f'$ contains some segment of positive length (i.e., they overlap) (b) there is some vertex $v\in V(p')$ on $f$ from some polygon $p'\neq p$, or (c) there is some vertex $v\in V_1$ of degree at least $5$ on $f$.

If $a,b$ are the endpoints of $f$, then consider the outer angle bisector rays of $p$ at $a$ and $b$. For each $i=1,\dots,N$ let $a_i$ and $b_i$ be points on the angle bisector from $a$ and $b$ at distance $i \mu_1$ from $a$ and $b$, respectively, where $i\leq 12N^2$. \skb{TODO:abra} Let $p(i)$ be the polygon where $a$ and $b$ are replaced with $a_i$ and $b_i$. Notice that $p_f(i)$ stays within distance $\mu_0/4$ from $p$, thus if $p,q\in P_1$ are disjoint, then $p_f(i),q_{f'}(j)$ are also disjoint for any $i,j\in [12N^2]$. Moreover, $p\subset p_f(i)$ thus if $p$ intersects $q$ then $p_f(i)$ intersects $q_{f'}(j)$ for all $i\neq j$. Finally, note that $p$ and $p_f(i)$ differ on three edges, and the curves $\bd p_f(i)\setminus \bd p$ are pairwise disjoint for different $i$. Notice also that the polygon $p$ remains simple after the modification.

We now exchange bad edges of $P_1$ one by one so that whenever we replace some polygon $p$ by $p(i)$, we ensure that all edges on the curve $\bd p_f(i)\setminus \bd p$ are non-bad. We claim that choosing such an index $i\in [N^2]$ is always possible. First, we observe that the total number $N$ of edges in the polygon collection is unchanged after such a modification. Consequently, the number of vertices in the arrangement of the $N$ total polygon edges is always at most $\binom N 2$, thus any arrangement of this many edges produces a planar graph with at most $3\binom N 2 -6 < 2N^2$ edges. We conclude that there is always a choice of $i\in [12N^2]$ such that the $3$ new edges given by the curve $\bd p_f(i)\setminus \bd p$ are disjoint from all vertices of the current arrangement and they do not overlap with any edges of the current arrangement. Observe that a modification eliminates at least one bad edge, and the edges neighboring the modified edge cannot be bad either. In particular, if a vertex of $p$ is moved, then the incident edges remain good, so no vertex of $p$ is moved more than once. After at most $N$ steps the resulting polygon collection $P_2$ is free of bad edges.

By definition, each polygon of $P_2$ includes a disk of radius $1$, and they have diameter less than $8/(\beta\delta)+1/2$. Moreover, the intersection graph is maintained throughout the modifications, so the resulting graph $G'$ is isomorphic to $G$.

By \Cref{lem:fairness2} that $G'$ satisfies all properties of being $\alpha$-standard with $\alpha=8/(\beta\delta)+1/2<9/(\beta\delta)$. 
\end{proof}

\Atalakitas*

\begin{proof}
In case of (a), use \Cref{lem:disktopolygon} to obtain an isomorphic intersection graph of simple polygons of fatness $1/\sqrt{2}$ that are $\delta$-similarly sized with representation complexity $O(n^2)$. In case of (b), use \Cref{lem:polygontoholefree} to obtain an intersection graph of $\beta/4$-fat $\delta$-similarly-sized objects. On both of these graphs, we use \Cref{lem:simplepolygontoorganized} to get an $\alpha$-standard graph of with $\alpha=9/(\delta\sqrt{2})$ and $\alpha=36/(\beta\delta)$, respectively. Notice that each step increases the complexity of the representation polynomially, thus the final graph $G'$ also has $\poly(n)$ complexity.
\end{proof}

\end{document}
